\title{City Sampling for Citizens' Assemblies}
\author{Paul Gölz$^1$, Jan Maly$^{2,3}$, Ulrike Schmidt-Kraepelin$^4$,\\Markus Utke$^4$, Philipp C. Verpoort$^5$}
\date{\small$^1$Cornell University, $^2$WU Vienna University of Economics and Business, $^3$TU Wien,\\ $^4$TU Eindhoven, $^5$Sortition Foundation}
\newcommand{\E}[1]{\mathbb{E}\left[#1\right]}
\newcommand{\draft}{false}
\newif\ifcomments 
\newtheorem{example}{Example}
\newtheorem*{claim*}{Claim}
\newtheorem{assumption}{Assumption}
\Crefname{lemma}{Lemma}{Lemmas}
\crefname{enumi}{property}{properties}
\newenvironment{claimproof}[1][Proof of the claim]{%
  \begin{proof}[#1]%
}{%
  \end{proof}
}
\newcommand{\StackRects}[6][]{%
    \pgfmathsetmacro{\y}{#5}%
  \foreach \h [remember=\y as \y (initially #5)] in {#3} {%
    \draw[draw=black,thick,fill=#4,#1] (#6,\y) rectangle (#6 + #2,{\y+\h});
    \pgfmathsetmacro{\y}{\y + \h}%
  }%
}
\tikzset{myfatarrow/.style={draw=black!50,line width=.5mm,-Stealth,shorten >=12pt,shorten <=4pt}}
\newcommand{\decision}{\textsc{FeasibleCities}\xspace}
\newcommand{\optimization}{\textsc{MinFeasibleCities}\xspace}
\newcommand{\greq}{\textsc{GreedyEqual}\xspace}
\newcommand{\buckets}{\textsc{Buckets}\xspace}
\newcommand{\colgen}{\textsc{ColumnGeneration}\xspace}
\definecolor{lightgreen}{RGB}{148, 210, 189}
\definecolor{lightyellow}{RGB}{238, 155, 0}
\begin{document}

\maketitle

\begin{abstract}
In \emph{citizens' assemblies}, a group of constituents is randomly selected to weigh in on policy issues.
We study a two-stage sampling problem faced by practitioners in countries such as Germany, in which constituents' contact information is stored at a municipal level.
As a result, practitioners can only select constituents from a bounded number of cities ex post, while ensuring equal selection probability for constituents ex ante. 

We develop several algorithms for this problem.
Although minimizing the number of contacted cities is NP-hard, we provide a pseudo-polynomial time algorithm and an additive~$1$-approximation, both based on separation oracles for a linear programming formulation. Recognizing that practical objectives go beyond minimizing city count, we further introduce a simple and more interpretable greedy algorithm, which additionally satisfies an ex-post monotonicity property and achieves an additive~$2$-approximation. Finally, we explore a notion of ex-post proportionality, for which we propose two practical algorithms: an optimal algorithm based on column generation and integer linear programming and a simple heuristic creating particularly transparent distributions. We evaluate these algorithms on data from Germany, and plan to deploy them in cooperation with a leading nonprofit organization in this space.

\end{abstract}

\section{Introduction}

\emph{Citizens' assemblies} are an emerging form of democratic participation, in which a random sample of constituents formulate policy recommendations.
The random selection of assembly members, called \emph{sortition}, gives each person an equal chance to participate and ensures that the assembly forms a cross section of the population.
Citizens' assemblies have been increasing in frequency~\citep{OECD20}.
National-level examples include assemblies on same-sex marriage, abortion, and gender equality in Ireland~\citep{Courant21} 
and German assemblies on the country's global role~\citep{BurgerratDeutschlandsRolleinderWelt21}, nutrition~\citep{DeutscherBundestag24}, and disinformation~\citep{BertelsmannStiftung24}.

In practice, the sortition proceeds in two stages: first, a large number of random constituents are invited by mail; second, the members of the assembly are selected among those invited who volunteered to participate.
Most algorithmic work on citizens' assemblies focuses on the second stage~\citep{FGG+20a,FGG+21,FLP+24,FKP21,BF24}.

This work, instead, studies a practical problem arising in the first sampling stage in certain countries.
Sampling constituents with equal probability is straight-forward in countries with a central population register such as the Nordic countries~\citep{SMB+17}.
The sampling process is also simple in countries like the UK and US where no register exists and assembly organizers use postal lists to invite random households, though these lists under-represent ``rural areas, \dots, Hispanic households, non-English-speaking households'' among others~\citep{KKS14}.

The first sampling stage is more complex in countries such as Germany and Italy, where population registers are kept by municipalities.
Since these municipalities must be individually petitioned for sampling access in a burdensome process~\citep{SSG+23}, statistical surveys first sample a set of municipalities and then sample participants only from these municipalities' registers~\citep{WBW+17,inapp2022ess}.

Our project was sparked by discussions with German sortition practitioners, who have been following a similar two-level sampling approach~\citep{StabsstelleBuergerraete23}.
Using numbers from the assembly on nutrition for illustration, they were looking for a sampling process that would (1)~send out 20,000 invitation letters, (2)~not send letters to more than 80 distinct municipalities at once, and (3)~give each German resident an equal chance of being invited.\footnote{In fact, assembly organizers break down the sampling into 42 sampling processes of this form, one for each federal state and category of municipality size. For exposition, we focus on an individual such problem, and consider the national level in \Cref{sec:federal}.} 

\begin{wrapfigure}{r}{0.4\textwidth}
    \centering
    \includegraphics[width=0.4\textwidth]{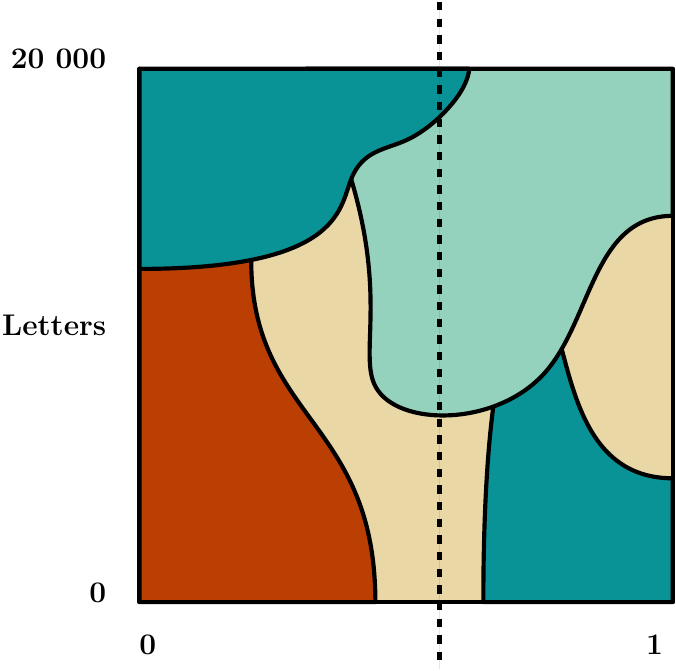}
    \caption{Graphical representation of sampling process.}
    \label{fig:illustratediagram}
\end{wrapfigure}

The output of any sampling process, i.e., any probability distribution determining how many invitations to send to each municipality, can be represented in a graphical form, which we illustrate in \Cref{fig:illustratediagram}.
To sample from this distribution, one draws a number $\rho \in [0, 1)$ uniformly at random, and considers the vertical line at this position (dashed in the figure).
This line intersects the shapes in the diagram, each of which is labeled with a municipality, and the number of letters sent to a municipality is equal to the total height of the municipality's shapes at the vertical line.\footnote{Clearly, the x-axis ordering of the diagram is arbitrary. All that we need is that each color's union of shapes is measurable.}
Without loss of generality, the selection within each municipality is uniform without replacement.
In this representation, the practitioners' requirements are easy to express: (1)~the total height of the figure at each vertical stripe should be 20,000 letters, (2)~no vertical stripe should intersect with more than 80 shapes, and (3)~the total area of a municipality's shapes (i.e., its expected number of received letters) must be proportional to its population.

A final requirement is that (4)~the number of letters received by each municipality (or, the height of the municipality's shapes in any vertical strip) has an upper bound.
Indeed, the municipality's population\,---\,which can be as low as $9$ inhabitants in the case of Germany\,---\,is definitely an upper bound, and many municipalities are moreover reluctant to allow sampling of more than about 10\% of their population due to privacy concerns.
In survey sampling, such upper constraints are not present because it is possible to upweight a resident in the analysis, effectively sampling them more than once.
As a result, the solution used in survey sampling\,---\,sampling municipalities with \emph{probability proportional to size}~\citep{BH83}, so that each vertical stripe consists of 80 equal-height layers\,---\,does not apply to assembly selection.

Whereas practitioners have so far relaxed conditions~(1) and~(2)~\citep{StabsstelleBuergerraete23} due to limitations in available methods, we show that all desiderata can, in fact, be satisfied by moving beyond rectangular shapes to more flexible geometric constructions.

\paragraph{Our Results and Techniques}
We begin by formulating our task as an optimization problem, \optimization, which seeks a probability distribution satisfying the four conditions while minimizing the number of contacted municipalities. Although \optimization is NP-hard, we provide a pseudo-polynomial time algorithm and an additive~$1$-approximation, both based on separation oracles for a linear programming formulation.

Since minimizing municipalities is only one of several practical goals, we introduce additional criteria.
We first propose \emph{ex-post monotonicity}, which states that, among the contacted municipalities, larger ones should receive at least as many letters as smaller ones.
We present \greq, a natural algorithm that achieves ex-post monotonicity and an additive~$2$-approximation under mild assumptions.

Whereas \greq promotes balanced letter allocations, it is natural to strengthen monotonicity to \emph{ex-post proportionality}, which states that a municipality's number of letters received scales with its size. We capture different proportionality goals through \emph{target letter functions} and develop two algorithms to pursue them: an optimal method based on integer linear programming and a simpler heuristic.

Finally, we evaluate all algorithms on data from the German Citizens’ Assembly on Nutrition~\citep{StabsstelleBuergerraete23}. Since the selection is applied independently within 42 subgroups, we show how to lift the notion of target letters from the local to the global level. Our algorithms offer practical solutions that can accommodate a wide range of real-world requirements.

\paragraph{Related Work}
By contributing to the first stage of the assembly selection pipeline in practice, our work is complementary to, but technically independent from, algorithms for selecting the final assembly from those accepting the invitation.
\citet{FGG+21} developed an optimization-based algorithm for this task; subsequent work studied transparent ways of drawing from the algorithm's computed probability distribution~\citep{FKP21}, incentives for misrepresentation~\citep{FLP+24,BF24}, accounting for self-selection bias~\citep{FGG+20a}, and the replacement of assembly members who drop out later~\citep{ABF+25}.

Other works have studied sortition algorithms that directly draw the assembly from the population and resulting theoretical properties.
These works study the variance of representation of features in the assembly~\citep{BGP19}, the social welfare if assembly members participate in a sequence of binary majority votes~\citep{MST21}, axioms and approximation bounds on the proximity of assembly members to the population in a metric space~\citep{EKM+22,EM25,CMP24}, and a proposed hierarchy of interconnected assemblies~\citep{HPS+25}.
\citet{DAL+21} study an online selection problem motivated by citizens' assemblies, in a random-dial methodology resembling the recruitment process in France.

\section{The Theoretical Model}\label{sec:model}

We are given $n$ cities\footnote{For brevity, we use `cities' as a synonym for `municipalities'.} and a fixed number of letters $\ell \in \mathbb{N}$ to allocate. Each city has a population $\pi_i \in \mathbb{R}$ and we assume normalization wlog, i.e., $\sum_{i \in [n]} \pi_i = 1$. We also write $\vec{\pi} = (\pi_1, \dots, \pi_n)$. Every city has a maximum number of letters it can receive, denoted by $\vec{u} = (u_1, \dots, u_n) \in \mathbb{N}^n$. We assume $\pi_1 \leq \dots \leq \pi_n$, $u_1 \leq \dots \leq u_n \leq \ell$. A letter allocation is a vector $a \in \mathbb{R}_{\geq 0}^n$ with the property that $\sum_{i \in [n]} a_i = \ell$ and $0 \leq a_i \leq u_i$ for all $i \in [n]$.\footnote{For $k \in \mathbb{N}$ let $[k]=\{1,\dots, k\}$ and $[k]_0 = \{0,\dots,k\}$.} An allocation is \emph{$t$-bounded} if at most $t$ cities receive a non-zero number of letters; let $A_t$ denote the set of all such allocations.
Given an instance of our problem $(\vec{\pi},\vec{u},t)$, \decision describes the problem of deciding whether there exists a probability distribution $\mathcal{D}$ over $A_t$ such that  
\begin{equation}
\mathbb{E}_{a \sim \mathcal{D}}[a_i] = \pi_i \cdot \ell \text{ for all } i \in [n]. \label{prop:fairshare}    
\end{equation}
We also refer to a probability distribution respecting \cref{prop:fairshare} as \emph{ex-ante fair}. \optimization describes the corresponding optimization problem of finding the minimum $t$ such that the answer to \decision is yes.

Though letter allocations are integral in practice, i.e., $a \in \mathbb{N}^n$, this restriction is wlog for \decision since any distribution over fractional allocations for $t$ can be turned into a distribution over $t$-bounded integral allocations with the same ex-ante properties, through dependent randomized rounding~\citep{GKP+06}.
For convenience, we assume $A_t$ to be integral in \Cref{sec:optimization} and fractional in \Cref{sec:monotone}.
We assume $\pi_i \ell \leq u_i$ for all $i \in [n]$, which is a necessary condition for the existence of an ex-ante fair distribution (for any $t$) due to the upper bounds (see \Cref{lem:trivialLB}). 
Through the paper, we refer to the following running example:
\begin{example} \label{ex:running_example}
    Distribute $\ell = 60$ letters over $n = 8$ cities. The city sizes and upper bounds are $\vec{\pi}~=~\frac{1}{360} \cdot (10, 10, 40,40,40, 50,70,100)$ and $\vec{u} = 180 \cdot \vec{\pi} = (5, 5, 20, 20, 20, 25, 35, 50)$.
\end{example}

While \Cref{sec:optimization} studies city sampling through the lens of the optimization problem defined above, 
\Cref{sec:monotone,sec:proportional} motivate and define additional desirable concepts: \emph{ex-post monotonicity}, \emph{ex-post proportionality}, and \emph{binary outcomes}.

\section{The \optimization Problem} \label{sec:optimization}
In this section, we show that, though \decision{} is NP-hard, it is only barely a hard problem, in the sense that pseudopolynomial time computation, or a slack of a single city suffice to overcome this complexity barrier.
We defer all missing proofs to \cref{app:missingproofs}.

We start by showing a simple lower bound that will be helpful throughout the paper. To this end, we define $w_i = \frac{\pi_i \ell}{u_i}$ for all $i \in [n]$, which yields a lower bound on the \emph{selection probability} of a city (also interpreted as the minimum \emph{width} within our illustrations).

\begin{restatable}{lemma}{lemTrivialLB}
     \label{lem:trivialLB}
    For any instance $(\Vec{\pi},\Vec{u},t)$, and an ex-ante fair probability distribution $\mathcal{D}$ over $A_t$, it holds that 
    \begin{enumerate}[label=(\roman*)]
        \item $\Pr[a_i>0] \geq w_i$ for all $i \in [n]$, and  \label{lem:trivialLB-selProb}
        \item $t \geq \sum_{i \in [n]} w_i$. \label{lem:trivialLB-t}
    \end{enumerate}
\end{restatable}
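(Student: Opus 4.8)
The plan is to prove both parts using only linearity of expectation together with the two defining bounds on allocations: each coordinate is bounded above by $u_i$, and every allocation in the support of $\mathcal{D}$ is $t$-bounded. No structural property of $\mathcal{D}$ beyond ex-ante fairness and $t$-boundedness should be needed.

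For part~\ref{lem:trivialLB-selProb}, I would start from the ex-ante fairness condition $\mathbb{E}_{a \sim \mathcal{D}}[a_i] = \pi_i \ell$. The key observation is that the contribution to this expectation comes only from outcomes where $a_i > 0$, and in every such outcome the feasibility constraint gives $a_i \le u_i$. Writing this out, $\pi_i \ell = \mathbb{E}[a_i] = \mathbb{E}[a_i \cdot \mathbf{1}[a_i > 0]] \le u_i \cdot \mathbb{E}[\mathbf{1}[a_i > 0]] = u_i \cdot \Pr[a_i > 0]$, where the inequality applies the pointwise bound $a_i \le u_i$ on the event $\{a_i > 0\}$. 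Dividing by $u_i$ yields $\Pr[a_i > 0] \ge \pi_i \ell / u_i = w_i$, which is exactly the claimed bound on the selection probability.

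For part~\ref{lem:trivialLB-t}, I would exploit that every allocation $a$ in the support of $\mathcal{D}$ is $t$-bounded, so the number of cities receiving a positive number of letters satisfies $\sum_{i \in [n]} \mathbf{1}[a_i > 0] \le t$ deterministically, for every realization. Taking expectations and applying linearity gives $\sum_{i \in [n]} \Pr[a_i > 0] \le t$. Substituting the lower bound $\Pr[a_i > 0] \ge w_i$ established in part~\ref{lem:trivialLB-selProb} then immediately yields $t \ge \sum_{i \in [n]} w_i$.

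I do not expect any genuine obstacle, as the statement is elementary. The only mildly delicate point is the indicator manipulation in part~\ref{lem:trivialLB-selProb}: one must be careful to invoke the upper bound $a_i \le u_i$ only on the event $\{a_i > 0\}$, so that it is valid to pull $u_i$ out in front of $\Pr[a_i > 0]$, rather than treating $a_i$ as if it always equaled $u_i$. Everything else is a direct consequence of linearity of expectation and the two feasibility constraints.
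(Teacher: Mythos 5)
Your proposal is correct and matches the paper's own proof essentially step for step: part~(i) via $\pi_i \ell = \mathbb{E}[a_i] \le \Pr[a_i>0]\cdot u_i$, and part~(ii) by bounding the expected number of selected cities, $\sum_{i \in [n]} \Pr[a_i>0] \le t$, using $t$-boundedness of every allocation in the support and then plugging in part~(i). Your remark about invoking $a_i \le u_i$ only on the event $\{a_i > 0\}$ is a careful articulation of exactly what the paper's one-line inequality does implicitly.
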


For \Cref{ex:running_example}, \Cref{lem:trivialLB} shows that $t$ must be at least $3$ since the minimum total width of all cities is $\sum_{i \in [n]} w_i  = \frac{8}{3}$.

\smallskip

In the appendix, we show that \decision is NP-hard via a reduction from \textsc{Partition}. 
In a nutshell, this reduction constructs an instance of our problem, in which all allocations in the support of a $t$-bounded, ex-ante fair distribution must give half of the cities 0 letters and half their upper bound $u_i$.
The question whether any such allocation assigns exactly $\ell$ letters is exactly \textsc{Partition}. 

\begin{restatable}{theorem}{thmNPHard}
    \decision is NP-hard. \label{thm:NPHard}
\end{restatable}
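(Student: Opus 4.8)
The plan is to prove hardness by reduction from \textsc{Partition}, using \Cref{lem:trivialLB} as a ``forcing'' device that collapses the rich space of distributions over $A_t$ into a purely combinatorial subset-selection question. Concretely, I would engineer an instance in which the lower bound of \Cref{lem:trivialLB} is met with equality, so that every allocation appearing in the support of a feasible distribution is forced into a rigid, bang-bang shape; deciding whether even a single such allocation exists then encodes \textsc{Partition}.

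For the construction, start from a \textsc{Partition} instance $a_1,\dots,a_m \in \mathbb{N}$ with $\sum_j a_j = 2T$ (odd sums are trivially ``no''), and WLOG $a_j \le T$. I would create $2m$ cities: $m$ \emph{real} cities with upper bounds $u_j = a_j + 1$ and $m$ \emph{dummy} cities with $u_j = 1$. Setting $\ell = T + m$ and $\pi_j = u_j/(2\ell)$ makes the instance valid: one checks $\sum_j u_j = 2\ell$ so $\sum_j \pi_j = 1$, every $\pi_j \ell = u_j/2 \le u_j$, and crucially $w_j = \pi_j\ell/u_j = 1/2$ for all $j$. Finally set $t = m$, which equals $\sum_j w_j$.

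The technical heart is the forcing argument. Since $\sum_j w_j = t$, I would squeeze the expected number of selected cities from both sides: part~(i) of \Cref{lem:trivialLB} gives $\Pr[a_j>0]\ge w_j = 1/2$, while $t$-boundedness gives $\sum_j \Pr[a_j>0] = \mathbb{E}[\#\{j : a_j>0\}] \le t$. As $\sum_j w_j = t$, all inequalities are tight: each $\Pr[a_j>0]$ equals exactly $1/2$, and, because the number of nonzero entries is a random variable bounded by $t$ with mean $t$, it equals $t$ on the entire support. Then from $\mathbb{E}[a_j] = \pi_j\ell = u_j/2$ together with $a_j \le u_j \cdot \mathbf{1}[a_j>0]$ and $\Pr[a_j>0]=1/2$, equality forces $a_j \in \{0,u_j\}$ almost surely. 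Hence the support consists exactly of size-$m$ subsets $T'$ of cities with $\sum_{j\in T'} u_j = \ell$.

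It remains to match these subsets to \textsc{Partition}. For any size-$m$ subset $T'$, splitting $T'$ into its real part $R$ and dummy part $D$ gives $\sum_{j\in T'} u_j = \sum_{j\in R} a_j + |R| + |D| = \sum_{j\in R}a_j + m$, so $\sum_{j\in T'}u_j=\ell=T+m$ holds iff $\sum_{j\in R}a_j = T$; thus $R$ is a \textsc{Partition} solution, and any feasible distribution yields one. Conversely, from a \textsc{Partition} witness $R$ I would pad it with $m-|R|$ dummies to a size-$m$ set $T'$ summing to $\ell$, and place probability $1/2$ on $T'$ and $1/2$ on its complement $\bar{T'}$ (also of size $m$ and summing to $\ell$); this realizes $\Pr[a_j>0]=1/2$ and $\mathbb{E}[a_j]=u_j/2=\pi_j\ell$ for all $j$ simultaneously. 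I expect the main obstacle to be exactly this reconciliation of cardinalities: a \textsc{Partition} witness may have any size, whereas the forcing argument rigidly fixes the support cardinality to $t=m$. The $m$ dummy cities are introduced precisely to absorb this slack, and verifying that the two-point $\tfrac12$-$\tfrac12$ distribution simultaneously meets every city's marginal is the step that makes the reduction go through.
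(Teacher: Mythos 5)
Your proof is correct, and its technical core is the same as the paper's: make every $w_i$ equal to $\tfrac12$ so that $\sum_i w_i = t$, then use tightness in \Cref{lem:trivialLB} to force $\Pr[a_i>0]=\tfrac12$, exactly $t$ nonzero entries in every support allocation, and $a_i\in\{0,u_i\}$ almost surely, reducing feasibility to the existence of a size-$t$ subset of upper bounds summing to $\ell$. The one genuine difference is how you cope with the rigid support cardinality that this forcing imposes. The paper reduces from \textsc{EqualCardinalityPartition} (citing Garey--Johnson for its NP-hardness), so the cardinality constraint $|S|=k$ is already part of the source problem and the instance can simply use $u_i = x_i$, $\pi_i = x_i/\sum_j x_j$, $\ell = \tfrac12\sum_j x_j$, $t=k$. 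You instead reduce from plain \textsc{Partition} and absorb the cardinality rigidity with a padding gadget: $m$ dummy cities with $u_j=1$ alongside real cities with $u_j=a_j+1$, so that any size-$m$ support set summing to $\ell = T+m$ has real part summing to $T$, and conversely any \textsc{Partition} witness pads out to a size-$m$ set whose complement is also a valid size-$m$ set summing to $\ell$ (your $\tfrac12$-$\tfrac12$ two-point distribution then meets every marginal, since $T'$ and $\bar{T'}$ partition all $2m$ cities). Both routes are sound; the paper's buys a shorter, gadget-free instance by shifting the cardinality work into the choice of source problem, while yours buys self-containment from the more standard \textsc{Partition} problem at the cost of the (correctly verified) dummy construction and the $+1$ shift in the upper bounds.
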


Since we reduce from \textsc{Partition}, which admits a pseudo-polynomial time algorithm, it is natural to ask whether our problem does too. 
To show that this is the case, 
we formulate the problem as a linear program with one variable $x_a$ for each integral allocation $a \in A_t$. 
The LP searches for a fair distribution over these, with $x_a$ representing the probability assigned to allocation $a$. The first constraint ensures that the probabilities sum to at most 1; the second enforces fairness. Both hold with equality in any feasible solution, but are written as inequalities for clarity in the dual.

\begin{align}
\textbf{Primal:}&& \text{minimize}~ & 0 \notag \\
 && \text{subject to}~& \sum_{\mathclap{a \in A_t}} x_a\leq 1, \notag \\
                 &&& \sum_{\mathclap{a \in A_t}} x_a a_i \ge \pi_i \ell & \text{for } i & \in [n], \quad && \notag\\
                 && &x_a \geq 0 & \text{for } a & \in A_t. \notag \\[.5em]
\textbf{Dual:}&& \text{maximize} &\sum_{\mathclap{i \in [n]}} \pi_i \ell y_i - y \notag \\
 && \text{subject to}~& \sum_{i \in [n]} a_i y_i \le y & \text{ for } a & \in A_t, & \label{dualConstraint}\\
                 &&& y, y_i  \ge 0 & \text{for } i &\in [n]. \notag 
\end{align}

We aim to decide whether the primal LP is feasible, which is the case iff the dual LP admits no solution with positive objective value (which could be scaled to show that the dual value is unbounded). 
We add a constraint to the dual requiring a strictly positive objective value. 

Though the resulting system has exponentially many constraints, its feasibility can be decided with the ellipsoid method~\citep{GLS93a} provided we can implement a \emph{separation oracle} for the dual: given a vector $\left((y_i)_{i \in [n]}, y\right)$, we must decide whether it is feasible for the modified dual or return a violated constraint. We show that this separation problem can be solved in pseudo-polynomial time using a knapsack-style dynamic program.

\begin{restatable}{theorem}{thmPseudoAlgorithm}
    There exists a pseudo-polynomial time algorithm for \decision. \label{thm:pseudo}
\end{restatable}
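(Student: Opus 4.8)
The plan is to turn the feasibility question for the primal LP into a feasibility question for the augmented dual, decide the latter with the ellipsoid method, and implement the required separation oracle via a knapsack-style dynamic program. Following the setup above, the dual feasible region is a cone containing the origin, so its objective $\sum_{i \in [n]} \pi_i \ell\, y_i - y$ is either $0$ at the optimum or unbounded. Hence the primal is feasible exactly when no dual-feasible point attains a strictly positive objective. I would therefore augment the dual with the single constraint $\sum_{i \in [n]} \pi_i \ell\, y_i - y \ge 1$ and record the equivalence: the primal is feasible if and only if this augmented system is \emph{infeasible}. By the Gr\"otschel--Lov\'asz--Schrijver framework, feasibility of the augmented dual can be decided in time polynomial in the dimension $n+1$, the encoding length of the constraints, and the running time of a separation oracle.

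Next I would describe the separation oracle. Given a query point $\left((y_i)_{i \in [n]}, y\right)$, verifying $y_i, y \ge 0$ and the augmenting constraint is immediate, so the real work is the exponential family $\sum_{i \in [n]} a_i y_i \le y$ for $a \in A_t$. Since all these constraints share the right-hand side $y$, it suffices to compute $M := \max_{a \in A_t} \sum_{i \in [n]} a_i y_i$: if $M \le y$ then every such constraint holds, and otherwise a maximizer $a^\star$ furnishes a violated (hence separating) constraint. Because each coefficient satisfies $a_i \in [0, u_i] \subseteq [0, \ell]$ and is an integer bounded by $\ell$, the hyperplane returned has polynomially bounded encoding length, as the ellipsoid method requires.

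It remains to solve $\max_{a \in A_t} \sum_{i \in [n]} a_i y_i$. Recalling that every $a \in A_t$ satisfies $\sum_{i \in [n]} a_i = \ell$, has $0 \le a_i \le u_i$, and at most $t$ nonzero coordinates, I define $D[i,s,k]$ to be the maximum of $\sum_{j \le i} a_j y_j$ over partial allocations on the first $i$ cities that use exactly $s$ letters and at most $k$ nonzero cities (with $D[0,0,k]=0$ and $D[0,s,k]=-\infty$ for $s>0$). The transition either keeps $a_i=0$, giving $D[i-1,s,k]$, or sets $a_i \in \{1,\dots,\min(u_i,s)\}$, giving $D[i-1, s-a_i, k-1] + a_i y_i$; the oracle's answer is $M = D[n,\ell,t]$. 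There are $O(n\,\ell\,t)$ states, each filled in $O(\ell)$ time, for an overall $O(n\,t\,\ell^2)$ bound, which is pseudo-polynomial since it is polynomial in $\ell$ rather than in $\log \ell$. Composing this oracle with the ellipsoid method yields the claimed pseudo-polynomial algorithm for \decision.

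I expect the main obstacle to be the bookkeeping at the interface with the ellipsoid method rather than the combinatorics: one must confirm that the separating hyperplanes have polynomial encoding length (handled above by the bound $a_i \le \ell$) and that the feasibility test behaves correctly under the real, possibly irrational, iterates the ellipsoid method produces. The dynamic program is reassuring on this last point, since it is purely combinatorial in the integer allocations $a$ and only ever compares the linear forms $\sum_{i} a_i y_i$, so it computes $M$ correctly for arbitrary real weights $y_i$ and never needs exact arithmetic on the $y_i$ beyond what the ellipsoid method already supplies.
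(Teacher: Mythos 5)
Your proposal is correct and takes essentially the same route as the paper: reduce feasibility of the primal LP to (in)feasibility of the dual augmented with a normalization constraint forcing a positive objective, run the ellipsoid method, and implement the separation oracle as a knapsack-style dynamic program over integral allocations in $A_t$. The only difference is a detail of the DP\,---\,the paper first observes that an optimal $a$ assigns $a_i < u_i$ to at most one selected city (the one with smallest weight $y_i$) and guesses that city, yielding running time $\mathcal{O}(n^2 t \ell)$, whereas you enumerate all integer values $a_i \in \{1,\dots,\min(u_i,s)\}$ directly for a bound of $\mathcal{O}(n t \ell^2)$; both are pseudo-polynomial, so your simpler DP suffices.
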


More surprisingly, we can construct a polynomial-time \emph{approximate separation oracle}, in the following, strong sense:
given a vector $\left((y_i)_{i \in [n]}, y\right)$, our oracle either determines that the vector satisfies all dual constraints or identify a violated constraint of type~\eqref{dualConstraint}, but for for some allocation $a \in A_{t+1} \supseteq A_t$ rather than in $A_t$.
As \citet{ScUh13a} show, the ellipsoid method with such an approximate oracle can determine either that the dual above is unbounded (so the primal is infeasible) or that the dual for $t+1$ cities is bounded (hence, the primal for $t+1$ cities is feasible).
By applying this algorithm to increasing values of $t$ until a feasible primal is found, we can find the lowest possible number of contacted cities, up to perhaps one additional city.

\begin{restatable}{theorem}{thmOneapproximation}
   There exists a polynomial-time algorithm that is an additive {$1$-approximation} to \optimization. 
\end{restatable}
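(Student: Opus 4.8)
The plan is to reuse the LP-duality and ellipsoid framework already set up above, and to reduce the whole theorem to the construction of a single gadget: a polynomial-time \emph{approximate} separation oracle for the (modified) dual. Concretely, I aim to build an oracle that, given a candidate dual vector $\big((y_i)_{i\in[n]},y\big)$, either certifies that every constraint of type~\eqref{dualConstraint} indexed by $a \in A_t$ holds, or returns a single allocation $a \in A_{t+1}$ for which~\eqref{dualConstraint} is violated. Once this oracle is in place, the result of \citet{ScUh13a} applies verbatim: running the ellipsoid method with it either proves the dual for $t$ unbounded (so the primal for $t$ is infeasible) or proves the dual for $t+1$ bounded (so the primal for $t+1$ is feasible). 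Scanning $t = 1,2,\dots$ and stopping at the first value $\hat t$ not declared infeasible then yields $t^\star \in \{\hat t, \hat t+1\}$ for the optimum $t^\star$ of \optimization (the earlier values certify $t^\star \ge \hat t$, and the stopping verdict certifies $t^\star \le \hat t + 1$), i.e.\ an additive $1$-approximation, in a polynomial number of oracle calls.

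It remains to construct the oracle, which is the heart of the argument. The exact separation problem asks whether $M_t := \max_{a \in A_t} \sum_{i\in[n]} a_i y_i$ exceeds $y$; this maximization is a knapsack-type problem made hard by the cardinality (support-size) constraint, and \Cref{thm:pseudo} only solves it in pseudo-polynomial time. My key idea is to replace the combinatorial constraint ``at most $t$ cities are used'' by the single linear constraint $\sum_{i\in[n]} a_i/u_i \le t$, giving the linear program
\[
\max \Big\{ \textstyle\sum_{i} a_i y_i \;:\; \sum_i a_i = \ell,\; \sum_i a_i/u_i \le t,\; 0 \le a_i \le u_i \Big\}.
\]
This is a \emph{relaxation} of the $A_t$-problem: any $a \in A_t$ has at most $t$ nonzero coordinates, each with $a_i/u_i \le 1$, so it satisfies $\sum_i a_i/u_i \le t$ and is LP-feasible. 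Hence the optimal LP value $V^{\mathrm{LP}}$ satisfies $V^{\mathrm{LP}} \ge M_t$, and the LP is solvable in polynomial time (if it is infeasible, then $A_t = \emptyset$ and the primal for $t$ is trivially infeasible).

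The crucial structural observation is that a \emph{basic} optimal solution $a^{\mathrm{LP}}$ of this LP lies in $A_{t+1}$. Indeed, apart from the box constraints $0 \le a_i \le u_i$, the program has only the two global constraints $\sum_i a_i = \ell$ and $\sum_i a_i/u_i \le t$, so at a vertex at most two coordinates are strictly fractional (i.e.\ in $(0,u_i)$) while every other coordinate equals $0$ or $u_i$. Writing $F$ for the set of saturated cities and $s \in [0,2)$ for the total fractional mass $\sum_{\text{fractional } i} a_i/u_i$ contributed by the (at most two) fractional cities, the constraint $\sum_i a_i/u_i \le t$ reads $|F| + s \le t$. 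A short case analysis on the number of fractional cities shows the support $|F| + (\text{number of fractional cities})$ never exceeds $t+1$: with zero or one fractional city the support is at most $t$, and with two fractional cities the integrality of $|F|$ together with $|F| \le t - s$ forces the support to be at most $t+1$. The oracle then outputs as follows: if $V^{\mathrm{LP}} = \sum_i a^{\mathrm{LP}}_i y_i \le y$, it declares all $A_t$-constraints satisfied (sound, since $M_t \le V^{\mathrm{LP}} \le y$); otherwise it returns $a^{\mathrm{LP}} \in A_{t+1}$, a genuine violated constraint of type~\eqref{dualConstraint}.

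I expect the main obstacle to be precisely the structural step of the last paragraph: making the relaxation tight enough that a basic solution gains only a \emph{single} extra city rather than two, since this is exactly what separates an additive $1$- from an additive $2$-approximation. The careful accounting of fractional mass against the budget $\sum_i a_i/u_i \le t$ is what makes the ``$+1$'' go through, and identifying the relaxation $\sum_i a_i/u_i \le t$ of the cardinality constraint (valid because each used city contributes at most $1$) is the conceptual crux on which everything else rests.
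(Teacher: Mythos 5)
Your proposal matches the paper's proof essentially step for step: the same ellipsoid-plus-\citet{ScUh13a} framework with increasing $t$, the same LP relaxation (the paper writes it in variables $z_i = a_i/u_i$ with constraints $\sum_i z_i u_i = \ell$, $\sum_i z_i \le t$, $0 \le z_i \le 1$, which is your program reparametrized), and the same vertex argument that a basic optimal solution has at most two fractional coordinates and hence support at most $t+1$. The only difference is cosmetic: since the paper takes $A_{t+1}$ to be integral in this section, it adds a one-line rounding step, pushing the fractional coordinate with the larger $y_i$ up and the other down, so that the returned allocation is integral without decreasing the objective; your fractional version would need this patch (or an appeal to the fractional-to-integral reduction via dependent rounding) under that convention.
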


While the above algorithms are theoretically tractable, the ellipsoid method is a famously impractical algorithm.\footnote{Although lacking theoretical guarantees, combining our (or similar) separation oracles with the simplex method can still lead to practical algorithms (see \colgen in \Cref{sec:proportional}).
}
Moreover, these algorithm may yield highly unintuitive allocations that would be difficult to justify in practice. For example, a large city might receive significantly fewer letters than a smaller one ex post, or only small cities might be selected while all larger ones are excluded. We now shift our focus from mere feasibility to fair distributions that uphold additional desirable properties, all while keeping $t$ low.

\section{A Simple and Monotone Approximation} \label{sec:monotone}
In this section, we aim for \emph{ex-post monotonicity}.
A fractional allocation $a$ is called \emph{monotone} if $a_i \geq a_j$ whenever $i > j$. A probability distribution is ex-post monotone if its support consists of monotone allocations.
We present a simple $2$-approximation for \optimization that yields ex-post monotone distributions under mild assumptions. The algorithm is inspired by $\pi ps$ sampling~\citep{BH83}: given an instance $(\vec{\pi}, \vec{u}, t)$, one samples $t$ cities with probabilities proportional to $\vec{\pi}$ without replacement and assigns $\ell/t$ letters to each. While this would violate cities' upper bounds, our algorithm can be viewed as a minimal adjustment to $\pi ps$ sampling to ensure feasibility.

Our algorithm, \greq, is best understood through its geometric interpretation. The algorithm processes cities in increasing order of size and starts by attempting to place a $\pi_i \ell$-area rectangle of height $\ell/t$. If this violates the city's upper bound, it instead places a rectangle of height $u_i$. It then proceeds to place the next rectangle to the right. Once the first layer is filled, \greq moves to the next layer, now aiming to keep the height of rectangles at the remaining vertical space divided by $t-1$. This ensures that later (and thus larger) cities can receive at least as many letters as those already placed. We remark that, starting from the second layer, cities may receive a set of rectangles summing to $\pi_i \ell$ instead of a single rectangle, which is due to shifts in lower layers. See \Cref{fig:greqExampleStacked} for an illustration.

To formalize \greq, we introduce a second type of illustration, which is a flattened version of the illustration in \Cref{fig:greqExampleStacked}. This illustration is formalized by functions $\lambda_i$ for each $i \in [n]$ that are defined on the interval $[0,t)$. The value $\lambda_i(x)$ corresponds to the height of the rectangle that the algorithm draws for city $i$ in layer $\lfloor x\rfloor$ (0-indexed) and at position $x - \lfloor x\rfloor$ of the stacked picture. (Note that for any position $x \in [0,t)$ this value will be non-zero for exactly one city 
as the algorithm draws for one city at a time.)
We illustrate these functions in \Cref{fig:greqExampleFlat}. 

When the algorithm draws at position $x$ in the flat picture, it needs to know the height of all rectangles that were placed at some value $y\leq x-1$ with $y \equiv x \;(\text{mod} \;1)$. We define
\[\Lambda(x)  = \sum_{y \leq x, \;y \equiv x \;(\mathrm{mod\ } 1)} \sum_{j \leq i} \lambda_j(y).\]

Last, we define $\mu_i(x)$, describing the height of the rectangle to be drawn, given that we place city $i$ at position $x$, 
$$\mu_i(x) = 
\begin{cases} 
\min \left( u_i, \frac{\ell-\Lambda(x-1)}{t - \lfloor x \rfloor}\right) & \text{ for } x \in [0,t) \\ 
u_i  & \text{ for } x \geq t,   
\end{cases}$$
and are now ready to formalize \greq:
\begin{center}
\hrule\vspace{.2em}
\begin{algorithmic}
\Procedure{GreedyEqual}{$\Vec{\pi},\Vec{u},t$}
  \State $x \gets 0, i \leftarrow 1$
  \While{$i \leq n$}
  \State let $y \geq x$ such that $\int_{x}^y \mu_i(z) dz = \pi_i \ell$
  \State $\lambda_i(z) \leftarrow \mu_i(z) $ for $z \in [x,y)$, $x \leftarrow y$, $i\leftarrow i+1$
  \EndWhile
  \If{$x=t$} \Return $(\lambda_i)_{i \in [n]}$ \textbf{else } ``fail'' \EndIf
\EndProcedure
\end{algorithmic}\vspace{.2em}
\hrule
\end{center}

\begin{figure}[t]
\centering
    \begin{subfigure}{.6\linewidth}
    \centering
        \includegraphics[width=0.7\linewidth]{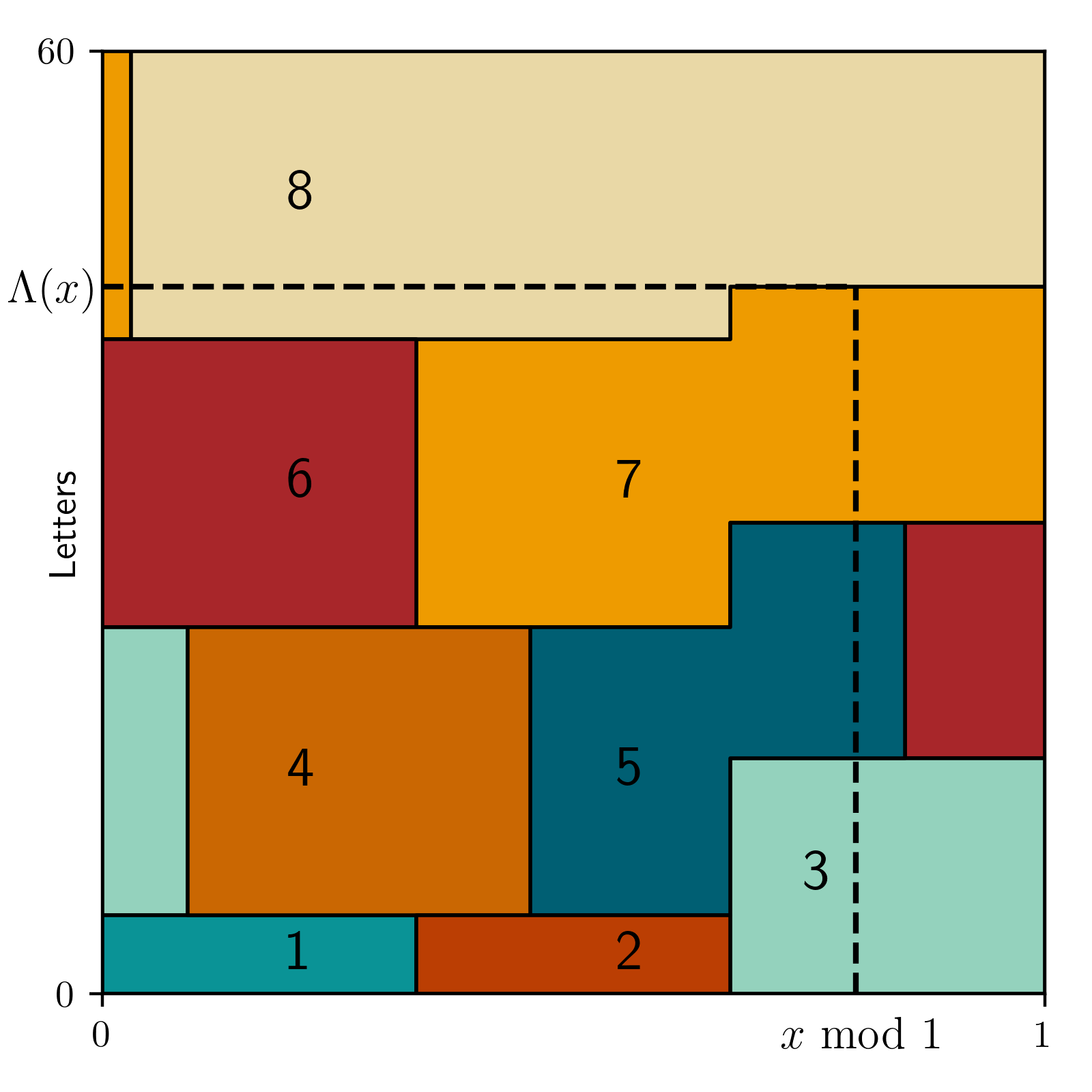}
        \caption{Illustration \greq for \Cref{ex:running_example} for $t=4$.}
        \label{fig:greqExampleStacked}
    \end{subfigure}
    \begin{subfigure}{.6\linewidth}
        \includegraphics[width=1\linewidth]{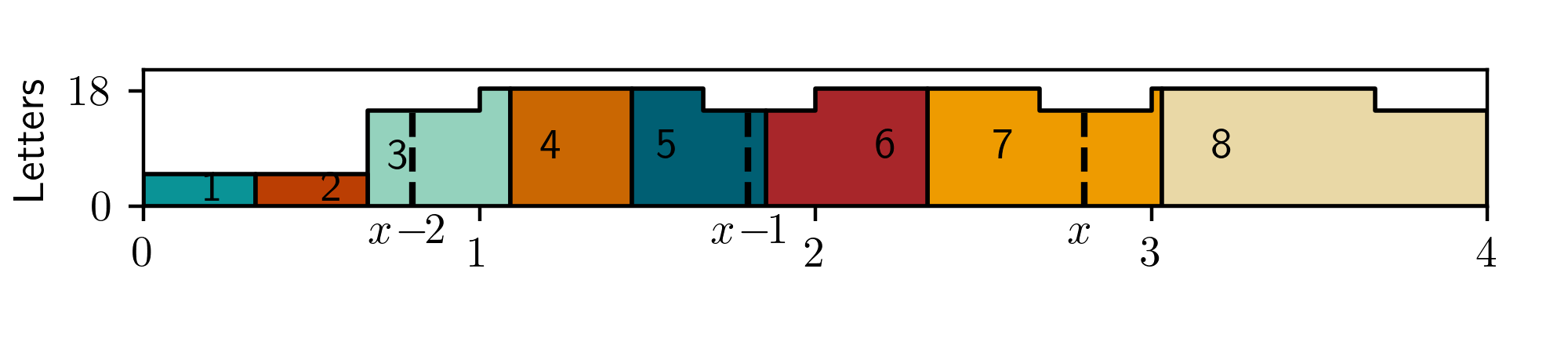}
    \caption{Illustration of the functions $\lambda_i$ for all $i\in [n]$, where each $\lambda_i$ is indicated by a different color.} \label{fig:greqExampleFlat}
    \end{subfigure}
    \caption{\greq applied to \Cref{ex:running_example}.}
    \label{fig:greqExample}
\end{figure}
\footnotetext{For ex-post monotonicity, the relaxation to fractional allocations is not quite wlog, but any fractional monotone allocation can be decomposed into a distribution over integral allocations that are monotone up to one letter.}\greq\ can fail in two ways. First, it may terminate with $x > t$, meaning it requires more than $t$ layers and thus does not yield a $t$-bounded distribution. In \Cref{thm:greq2approx}, we bound the optimum of \optimization in this case. Second, and more subtly, the area assigned to a city may be so wide that it overlaps across layers, leading to an allocation that exceeds the upper bound of the city. This can only happen when a city is \emph{oversized}, i.e., when $\pi_i > 1/t$. While such cities appear in parts of our data, they always have upper bounds well above $\ell$, making this a non-issue in practice. We formalize the following assumption:

\begin{assumption}
    For any oversized city $i$, $u_i \geq \ell$. \label{ass:oversized}
\end{assumption}

In our dataset, \Cref{ass:oversized} is satisfied as long as $t\leq 420$, far above the past choice of $t = 80$. 

\begin{restatable}{theorem}{thmGreqCorrect}\label{thm:GreqCorrectness}
    Under \Cref{ass:oversized}, \greq always returns an ex-ante fair and $t$-bounded probability distribution (if it succeeds).
\end{restatable}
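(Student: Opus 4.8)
The plan is to view the returned object $(\lambda_i)_{i\in[n]}$ as the distribution obtained by drawing a column $\rho\sim U[0,1)$ and giving city $i$ the letters $a_i(\rho)=\sum_{k=0}^{t-1}\lambda_i(\rho+k)$, i.e.\ the total height of city $i$ across the $t$ layers at column $\rho$. Ex-ante fairness is then immediate from the construction: by Fubini and a change of variables, $\mathbb{E}_\rho\left[a_i(\rho)\right]=\int_0^1\sum_{k=0}^{t-1}\lambda_i(\rho+k)\,d\rho=\int_0^t\lambda_i(x)\,dx$, and the while loop picks city $i$'s interval $[x_{i-1},x_i)$ precisely so that $\int_{x_{i-1}}^{x_i}\mu_i=\pi_i\ell$ with $\lambda_i=\mu_i$ there and $0$ elsewhere, so this integral is $\pi_i\ell$. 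Here I use that success, $x=t$, means the intervals tile $[0,t)$, so all of $\lambda_i$ lies in $[0,t)$. The support bound is also free: each column meets exactly the $t$ positions $\rho,\rho+1,\dots,\rho+(t-1)$, and only one city is active at each, so $a(\rho)$ has at most $t$ nonzero coordinates.

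The two remaining obligations are that every $a(\rho)$ sums to $\ell$ and respects the caps $u_i$. For the total, I would first record the invariant, by a one-line induction on the layer $k$, that the ``fair share'' governing $\mu$ never drops below $\ell/t$. Writing $R_k(\rho)$ for the height still missing from $\ell$ after layers $0,\dots,k-1$ (so $\frac{\ell-\Lambda(x-1)}{t-\lfloor x\rfloor}=\frac{R_k(\rho)}{t-k}$ at $x=\rho+k$), we have $R_0/t=\ell/t$, and $h_k:=\mu_{i_k}(\rho+k)\le R_k/(t-k)$ gives $R_{k+1}/(t-k-1)\ge R_k/(t-k)$. In particular $R_k(\rho)\ge 0$, so the column height $H(\rho)=\sum_k h_k(\rho)\le\ell$ for every $\rho$. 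On the other hand, a global area count yields $\int_0^1 H(\rho)\,d\rho=\int_0^t\mu_{i(x)}(x)\,dx=\sum_i\pi_i\ell=\ell$. A nonnegative function ($\ell-H$) with integral zero vanishes almost everywhere, and since $H$ is piecewise constant with finitely many pieces (the active-city pattern changes only at the finitely many interval endpoints projected into $[0,1)$), in fact $H\equiv\ell$ off a finite, probability-zero set of columns.

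The upper bounds are the crux and the only place \Cref{ass:oversized} enters. If city $i$ is active in at most one layer at column $\rho$, then $a_i(\rho)$ equals a single value $\mu_i(\cdot)\le u_i$ and we are done; the only danger is a city active in two layers $k<k'$ of the same column, in which case $[x_{i-1},x_i)$ contains both $\rho+k$ and $\rho+k'$ and so has width exceeding $k'-k\ge 1$. I would then show such a wide city must be oversized: if $u_i<\ell/t$, the fair-share invariant forces $\mu_i\equiv u_i$ on the whole interval, so its width is $\pi_i\ell/u_i=w_i\le 1$ by the standing assumption $\pi_i\ell\le u_i$, a contradiction; hence $u_i\ge\ell/t$, so $\mu_i\ge\ell/t$ throughout and $\pi_i\ell=\int\mu_i>1\cdot\ell/t$, i.e.\ $\pi_i>1/t$. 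Now \Cref{ass:oversized} supplies $u_i\ge\ell$, and since $a_i(\rho)\le H(\rho)=\ell\le u_i$, the cap holds even across the overlapping layers.

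Combining the pieces, every $a(\rho)$ is a nonnegative vector summing to $\ell$, obeying the caps, with at most $t$ nonzero coordinates, hence lies in $A_t$; with the fairness computation this shows the returned distribution is ex-ante fair and $t$-bounded. I expect the main obstacle to be the upper-bound step: the ``width $>1$ implies oversized'' dichotomy hinges on the fair-share invariant, and one must notice that it is the global area argument for $H\equiv\ell$ that ultimately tames the otherwise-unbounded summed heights of an oversized city.
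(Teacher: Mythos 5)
Your proposal is correct and follows essentially the same route as the paper's own proof: the same column-sampling distribution $a_i(\rho)=\sum_k \lambda_i(\rho+k)$, the same area argument for ex-ante fairness and for the column height being exactly $\ell$, and the same wrap-around analysis (a city active in two layers of one column has width $>1$, which via the height lower bound $\min(u_i,\ell/t)$ forces it to be oversized, whence \Cref{ass:oversized} and $a_i(\rho)\le H(\rho)=\ell\le u_i$ close the case). If anything, you spell out two steps the paper states tersely\,---\,the per-column induction showing the fair share $\frac{\ell-\Lambda(x-1)}{t-\lfloor x\rfloor}$ never drops below $\ell/t$, and the deduction that the piecewise-constant column height equals $\ell$ pointwise rather than merely on average.
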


In instances without oversized cities we furthermore guarantee monotonicity. In our data we do not observe any monotonicity violation, even for oversized cities. 

\begin{restatable}{theorem}{thmGreqMonotone} \label{thm:GreqMonotone}
    For instances without oversized cities, \greq is ex-post monotone. 
\end{restatable}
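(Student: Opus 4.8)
The plan is to reason directly in the flat picture. Fix a sample $\rho \in [0,1)$; the resulting allocation assigns city $i$ the value $a_i(\rho) = \sum_{k=0}^{t-1} \lambda_i(k+\rho)$, the total height of $i$'s rectangles along the vertical line at horizontal position $\rho$. I will establish monotonicity in the form that constrains the \emph{contacted} cities: for every $\rho$ and every pair $i > j$ with $a_i(\rho) > 0$, we have $a_i(\rho) \ge a_j(\rho)$ (so that whenever a larger city is contacted it dominates every smaller one; smaller cities with $a_j(\rho)=0$ impose no constraint). Throughout I use that ``no oversized cities'' means $\pi_i \le 1/t$ for all $i$, together with the standing assumptions $\pi_i \ell \le u_i$ and the sortedness of $\vec u$. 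The argument rests on two structural claims about the sweep, and \Cref{thm:GreqCorrectness} together with \Cref{ass:oversized} guarantees that, whenever \greq succeeds, the objects $\lambda_i$, $\Lambda$, and the resulting distribution are well defined and $t$-bounded, so no extra feasibility issues arise.

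First I would show that every city occupies at most one layer at each horizontal position, so that $a_i(\rho)$ is either $0$ or the height of a single rectangle. Writing $R(x) = (\ell - \Lambda(x-1))/(t - \lfloor x \rfloor)$ for the equal-split target, the key sub-fact is that $R(x) \ge \ell/t$ for all $x \in [0,t)$: along a fixed column the successive targets are non-decreasing, because each placed height $\mu_i \le R$ is at most the current target and removing a below-average value from an average cannot decrease it (a one-line computation gives $R_{k+1}\ge R_k$, seeded by $R_0 = \ell/t$). Consequently $\mu_i(x) = \min(u_i, R(x)) \ge \min(u_i, \ell/t)$ pointwise, so the horizontal width of city $i$ is at most $\pi_i t \le 1$ when $u_i \ge \ell/t$, and exactly $w_i \le 1$ when $u_i < \ell/t$ (in which case $\mu_i \equiv u_i$). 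Since the sampled positions $\rho, 1+\rho, 2+\rho, \dots$ are spaced exactly one apart, a width-at-most-$1$ interval meets at most one of them, which proves the claim (the measure-zero set of boundary positions where a width equals $1$ exactly is handled separately).

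Second I would show that along any fixed column the rectangle heights are non-decreasing from bottom to top. Fix $\rho$, let $i(\cdot)$ be the active-city index along the sweep (non-decreasing since cities are processed in increasing order), and let $c_k = \mu_{i(k+\rho)}(k+\rho) = \min(u_{i(k+\rho)}, R_k)$ with $R_k = (\ell - \sum_{k'<k} c_{k'})/(t-k)$ be the height in layer $k$. The targets $R_k$ are non-decreasing in $k$ by the averaging inequality above, and $u_{i(k+\rho)}$ is non-decreasing in $k$ because the active index is non-decreasing up the column and $\vec u$ is sorted; hence $c_k$ is non-decreasing in $k$. Combining the two claims then finishes the proof: each contacted city sits in exactly one layer of the column, a larger city index sits weakly higher (the active index is non-decreasing up the column and each index occurs in at most one layer), and the height monotonicity gives that its rectangle is weakly taller, i.e.\ $a_i(\rho) \ge a_j(\rho)$ for $i > j$ with $a_i(\rho)>0$.

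The main obstacle is the second claim, together with getting the bookkeeping of $\Lambda$ right. Within a single layer the target $R(x)$, and hence a city's rectangle, can change height several times due to rectangle boundaries in the layers below (the ``staircase'' effect noted in the text), so the reasoning must be carried out at the level of a single vertical line rather than whole rectangles. Fixing the column $\rho$ is exactly what collapses each $\mu_i$ to the single value $R_k$ and makes the clean inequality $R_{k+1}\ge R_k$ available; the only remaining care is in the borderline cases $\pi_i = 1/t$ and width-exactly-$1$, and in verifying that $R_0=\ell/t$ correctly seeds the monotonicity of the targets.
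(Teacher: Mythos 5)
Your proof is correct and follows essentially the same route as the paper's: fix a single column of the flat picture, show the heights are non-decreasing from bottom to top (your inequality $R_{k+1}\ge R_k$ is the same one-line averaging computation the paper performs at the crossing index $t'$), combine this with the sortedness of $\vec{u}$ and the non-decreasing active index, and rule out wrap-around via the absence of oversized cities exactly as in the proof of \Cref{thm:GreqCorrectness}. Your observation that $c_k=\min\left(u_{i(k)},R_k\right)$ is non-decreasing as the minimum of two non-decreasing sequences neatly subsumes the paper's case split at the last index attaining its upper bound, but this is a local streamlining of the same argument rather than a genuinely different approach.
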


Before proving the additive $2$-approximation, we provide insight into the structure of \greq's solutions. We say that \greq\ \emph{selects the average} at position $x \in [0, t)$ if $\lambda_i(x) = \frac{\ell - \Lambda(x - 1)}{t - \lfloor x \rfloor}$ (rather than $\lambda_i(x) = u_i$), where $i$ is the unique city with $\lambda_i(x) > 0$.

\begin{restatable}{lemma}{LemGreqStructure}
Independent of whether \greq succeeds, the following holds:
\begin{enumerate}[label=(\roman*)]
    \item If \greq selects the average at $x \in [0,t)$, then it selects the average at all $y \in [x, \lceil x\rceil)$ as well as all $y \in [x,t)$ with $y \equiv x \;(\mathrm{mod\ } 1)$. \label{lem:StructureA}
    \item The function $\Lambda(x)$ is non-decreasing on $[0,t)$.\label{lem:StructureB}
\end{enumerate}
\label{lem:greqStructure}
\end{restatable}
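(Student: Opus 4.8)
The plan is to work entirely in the flat picture and exploit the forward recursion hidden in the definition of $\Lambda$. As a preliminary, I note that since \greq processes cities in increasing index order while the cursor $x$ only ever increases, the index $i(x)$ of the active city (the unique $i$ with $\lambda_i(x)>0$) is non-decreasing in $x$; consequently, for every $y\le x$ the inner sum $\sum_{j\le i(x)}\lambda_j(y)$ collapses to the single nonzero height $h(y):=\lambda_{i(y)}(y)$. Hence $\Lambda$ obeys the causal recursion $\Lambda(x)=\Lambda(x-1)+h(x)$ (with $\Lambda\equiv 0$ on $[-1,0)$), where $h(x)=\min\!\big(u_{i(x)},\,\tfrac{\ell-\Lambda(x-1)}{t-\lfloor x\rfloor}\big)$. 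A one-line induction gives the bound $\Lambda(x)\le\ell$: if $\Lambda(x-1)\le\ell$ then, since $t-\lfloor x\rfloor\ge1$, the second argument of the min is at most $\ell-\Lambda(x-1)$, so $\Lambda(x)\le\ell$. (Here $i(x)$ is defined on all of $[0,t)$ because $\Lambda\le\ell$ prevents any column from overfilling, so placing the total area $\sum_i\pi_i\ell=\ell$ forces \greq to use width at least $t$; on any cityless suffix the claims below are vacuous.)

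For part~\ref{lem:StructureB} I would introduce the map $G(x,s)=s+\min\!\big(u_{i(x)},\tfrac{\ell-s}{t-\lfloor x\rfloor}\big)$, so that $\Lambda(x)=G(x,\Lambda(x-1))$, and verify that $G$ is non-decreasing in each argument. In $s$ this holds because $G(x,s)=\min\!\big(s+u_{i(x)},\,s\tfrac{t-\lfloor x\rfloor-1}{t-\lfloor x\rfloor}+\tfrac{\ell}{t-\lfloor x\rfloor}\big)$ is a minimum of two functions that are non-decreasing in $s$ (here $t-\lfloor x\rfloor\ge1$). In $x$ it holds because, as $x$ grows, $u_{i(x)}$ is non-decreasing while $\tfrac{\ell-s}{t-\lfloor x\rfloor}$ is non-decreasing as well ($\ell-s\ge0$ by $\Lambda\le\ell$, and $t-\lfloor x\rfloor$ is non-increasing), so their minimum is non-decreasing. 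I then induct on layers: assuming $\Lambda$ is non-decreasing on $[-1,k)$, for any $0\le x_1<x_2<k+1$ the shifted arguments satisfy $x_1-1<x_2-1$ with both in $[-1,k)$, so $\Lambda(x_1-1)\le\Lambda(x_2-1)$ and joint monotonicity gives $\Lambda(x_1)=G(x_1,\Lambda(x_1-1))\le G(x_2,\Lambda(x_1-1))\le G(x_2,\Lambda(x_2-1))=\Lambda(x_2)$.

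The step I expect to be the main obstacle is precisely the behavior at the integer layer boundaries, and this is where feeding the recursion through $G$ pays off: a head-on comparison of heights across an integer would pit the rightmost column of one layer against the leftmost column of the next and is genuinely awkward, whereas the chain above treats within-layer steps and boundary crossings uniformly, needing only that $G$ is monotone in its layer index $\lfloor x\rfloor$ together with the bound $\Lambda\le\ell$ (which is exactly why I establish the latter first).

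For part~\ref{lem:StructureA}, suppose \greq selects the average at $x$, i.e.\ $h(x)=\mathrm{avg}(x):=\tfrac{\ell-\Lambda(x-1)}{t-\lfloor x\rfloor}\le u_{i(x)}$. For the column claim I compute directly: substituting $h(x)=\mathrm{avg}(x)$ into the recursion yields $\ell-\Lambda(x)=(\ell-\Lambda(x-1))\tfrac{t-\lfloor x\rfloor-1}{t-\lfloor x\rfloor}$, whence $\mathrm{avg}(x+1)=\tfrac{\ell-\Lambda(x)}{t-\lfloor x\rfloor-1}=\mathrm{avg}(x)$; since $i(x+1)\ge i(x)$ gives $u_{i(x+1)}\ge u_{i(x)}\ge\mathrm{avg}(x)=\mathrm{avg}(x+1)$, the average is again selected at $x+1$, and induction covers all $y\in[x,t)$ with $y\equiv x\ (\mathrm{mod}\ 1)$. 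For the within-layer claim, fix $y\in[x,\lceil x\rceil)$, so $\lfloor y\rfloor=\lfloor x\rfloor$; by part~\ref{lem:StructureB} the function $\Lambda$ is non-decreasing, hence $\Lambda(y-1)\ge\Lambda(x-1)$ and therefore $\mathrm{avg}(y)\le\mathrm{avg}(x)\le u_{i(x)}\le u_{i(y)}$, so the average is selected at $y$ as well. Since every step refers only to the functions $\lambda_i,\mu_i,\Lambda$ as defined on $[0,t)$, the argument is valid whether or not \greq succeeds, as required.
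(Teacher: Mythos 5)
Your proof is correct, but it proves part~\ref{lem:StructureB} by a genuinely different mechanism than the paper. The paper's proof is a two-phase induction: first it shows monotonicity of $\Lambda$ \emph{within} each interval $[t'-1,t')$ by a case analysis on whether the last placement is $u$-capped, and then it handles the crossing of integer boundaries by exactly the ``head-on'' column-versus-column comparison you predicted would be awkward (showing $\lambda_{i(k)}\geq\lambda_{j(k-1)}$ for all $k$, which in turn leans on the within-column height monotonicity established inside the proof of \Cref{thm:GreqMonotone} --- a theorem nominally stated only for instances without oversized cities). Your packaging of the recursion into the operator $G(x,s)=s+\min\bigl(u_{i(x)},\tfrac{\ell-s}{t-\lfloor x\rfloor}\bigr)$ and the verification that $G$ is non-decreasing in both arguments (using $\Lambda\le\ell$, which you rightly establish first) treats within-layer steps and boundary crossings uniformly, eliminating both the case analysis and the dependency on \Cref{thm:GreqMonotone}; this also keeps the lemma self-contained, which matters since the paper's circular-looking cross-reference is only safe because the cited portion of that proof does not actually use the no-oversized-cities hypothesis. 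For part~\ref{lem:StructureA}, your within-layer argument is essentially identical to the paper's, and your mod-$1$ propagation via the invariance $\mathrm{avg}(x+1)=\mathrm{avg}(x)$ is the same computation the paper outsources to the proof of \Cref{thm:GreqMonotone}, just carried out in place.

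One spot is compressed enough to deserve a sentence more. Your parenthetical claims that $\Lambda\le\ell$ alone forces \greq to use width at least $t$, so that $i(x)$ exists on all of $[0,t)$. As stated this does not quite follow: total area $\ell$ with all columns of height at most $\ell$ is consistent with width below $t$ if almost every column were filled to exactly $\ell$ early. You need the additional observation that whenever $\lfloor x\rfloor\le t-2$, the placed height is at most the average, so the deficit satisfies $\ell-\Lambda(x)\ \geq\ \bigl(\ell-\Lambda(x-1)\bigr)\tfrac{t-\lfloor x\rfloor-1}{t-\lfloor x\rfloor}>0$; hence a column can only reach height $\ell$ by a placement in the top layer $[t-1,t)$, and covering (almost) every column's top layer forces width at least $t$. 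This matters for the ``independent of whether \greq succeeds'' clause: if an early-terminating run were possible, $\Lambda$ extended by zero heights could genuinely fail to be non-decreasing across the empty suffix, so your fallback remark that the claims are ``vacuous'' there is not accurate for part~\ref{lem:StructureB} --- it is simply moot once the width bound is in place. The paper silently ignores this degenerate case altogether, so with the one-line fix above your treatment is actually the more careful of the two.
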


\begin{restatable}{theorem}{thmGreqTwoApprox}
    Under \Cref{ass:oversized}, \greq is an additive $2$-approximation for \optimization. \label{thm:greq2approx}
\end{restatable}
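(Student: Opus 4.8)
The plan is to compare \greq's smallest successful layer budget against the lower bound $\mathrm{OPT} \ge \lceil \sum_i w_i\rceil$ supplied by \Cref{lem:trivialLB} (part (ii), together with integrality of $\mathrm{OPT}$). Concretely, I would prove the sufficient condition that \greq succeeds whenever $\sum_i w_i \le t-2$. Since $\mathrm{OPT}$ is an integer with $\mathrm{OPT}\ge\sum_i w_i$, running \greq at $t=\lceil\sum_i w_i\rceil+2\le \mathrm{OPT}+2$ then yields a successful, $t$-bounded, ex-ante fair distribution (invoking \Cref{thm:GreqCorrectness} under \Cref{ass:oversized}), which is precisely an additive $2$-approximation.

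First I would recast success geometrically. Using the averaging rule together with \Cref{lem:greqStructure}, I would prove by induction on the layer index the invariant that after $k$ layers every column of the stacked picture has accumulated height at most $\tfrac{k}{t}\,\ell$. Because the total placed area equals $\ell=\int_0^1 \Lambda_{\mathrm{fin}}(c)\,dc$ while each column's final height is at most $\tfrac{K(c)}{t}\,\ell$ (with $K(c)$ the number of layers reached at column $c$), this forces the total width $W=\int_0^1 K(c)\,dc$ to satisfy $W\ge t$ whenever the process stays within $t$ layers; and overflow means $W>t$ outright, so $W\ge t$ always and \greq succeeds iff $W=t$, i.e.\ iff every column reaches height $\ell$ over its $t$ layers. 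A complete column reaches $\ell$ exactly when its topmost rectangle is drawn in average mode, and by the up-closedness of average-selection (\Cref{lem:greqStructure}, part (i)) a column's top is saturated only if the column is saturated in all $t$ layers. Thus success is equivalent to the absence of any fully saturated column.

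The crux is then to rule out a fully saturated column under the hypothesis $\sum_i w_i\le t-2$, where monotonicity of $\vec u$ is essential. Suppose some column were saturated in every layer, and let $j$ be the city placed at its top layer (sweep position $t-1$). Then $u_j$ dominates the upper bounds of all cities $1,\dots,j$, which together fill the sweep positions $[0,t-1)$; since each such city is drawn at height at most $u_i\le u_j$, I would bound $\sum_{i\le j} w_i=\sum_{i\le j}\pi_i\ell/u_i \ge \tfrac1{u_j}\sum_{i\le j}\pi_i\ell=\tfrac1{u_j}\,\mathrm{Area}(1,\dots,j)$ and then argue $\mathrm{Area}(1,\dots,j)>(t-2)\,u_j$, so that $\sum_i w_i\ge\sum_{i\le j}w_i>t-2$, a contradiction. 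Intuitively, for the top city to be saturated its upper bound must be small enough to stay below the (large) average that an under-filled prefix generates, yet because $\vec u$ is non-decreasing this pins the upper bounds of an entire $(t-1)$-layer prefix of cities, which combined with the fixed total area $\ell$ inflates $\sum_i w_i$ beyond $t-2$.

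I expect this area estimate to be the main obstacle, since bounding the saturated width alone does not suffice: a thin fully saturated prefix has small saturated width yet already causes failure, so the argument genuinely needs the interaction between the monotone upper bounds, the averaging dynamics, and the fixed area $\ell$. The two units of additive slack arise naturally: one unit absorbs the rounding of the fractional quantity $\sum_i w_i$ up to the integer $\mathrm{OPT}$, and one unit accounts for the single topmost layer that the averaging process may leave partially unused.
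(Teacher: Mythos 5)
Your overall skeleton is the same as the paper's: prove that whenever \greq fails at budget $t$ one has $\sum_i w_i > t-2$, combine this with \Cref{lem:trivialLB}~(ii) and integrality of the optimum, and invoke \Cref{thm:GreqCorrectness} for correctness upon success. Your geometric preliminaries are also sound: the invariant that after $k$ layers every column has height at most $\tfrac{k}{t}\ell$ holds by the induction you sketch, and the reduction of failure to the existence of a fully saturated column (via the up-closedness of average-selection, \Cref{lem:greqStructure}~(i)) is essentially how the paper argues that the leftmost column consists entirely of rectangles at their upper bounds. But your crux step is false. You propose to take the top city $j$ of a saturated column, bound $\sum_{i\le j} w_i \ge \mathrm{Area}(1,\dots,j)/u_j$ using $u_i \le u_j$, and then show $\mathrm{Area}(1,\dots,j) > (t-2)\,u_j$. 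That area estimate fails outright: the lower layers can be filled by many cities with tiny upper bounds, so their total \emph{area} is negligible compared to $u_j$ even though their total \emph{width} is large. Concretely, take $t=4$, $\ell=100$, thirty cities with $\pi_i\ell = 0.1$ and $u_i = 1$, one city $j$ with $\pi_j\ell = 9$ and $u_j = 90$, and one city with $\pi\ell = 88$ and $u = 100 = \ell$ (the only oversized city, so \Cref{ass:oversized} holds, and both $\vec\pi$ and $\vec u$ are sorted with $\pi_i\ell \le u_i$). \greq fills layers $0$--$2$ with the small cities at height $1$, draws city $j$ saturated at height $90$ on $[3,3.1)$, and then the last city, drawn at the average height $97$, needs width $88/97 \approx 0.907$ but only $0.9$ remains, so \greq fails; the columns in $[0,0.1)$ are fully saturated with top city $j$, yet $\mathrm{Area}(1,\dots,j) = 12 \ll (t-2)\,u_j = 180$. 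Note that the theorem's conclusion still holds here ($\sum_i w_i = 3.98 > 2$), but the width mass comes precisely from the small-$u$ cities, whose contribution your uniform replacement $u_i \to u_j$ discards; so the gap is not a fixable constant but a wrong inequality.

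This is exactly why the paper's proof takes a different, finer route at this point: it never scales a whole prefix of cities by a single $u_j$, but instead partitions the picture into vertical columns and shows, for \emph{each} column, that $\sum_{k=0}^{t-2} \lambda_{i(k)}/u_{i(k)} > t-2$, keeping every city's own $u_{i(k)}$ in the denominator (so a thin-height city contributes width ratio $1$ per layer, capturing your missing mass). The comparison to the saturated leftmost column enters only through the shift $i(k) \le j(k+1)$, giving $u_{i(k)} \le u_{j(k+1)}$ for the averaged tail of the column, and the heterogeneous upper bounds there are handled with the AM--HM inequality, yielding $t' + (t-t'-2) + \tfrac{1}{t-t'} > t-2$; integrating over columns then gives $\sum_i w_i > t-2$. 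If you want to rescue your write-up, you must replace your single area-versus-$u_j$ estimate by such a per-column, per-layer width accounting; as it stands, the argument cannot be completed.
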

\begin{proof}[Proof (first part)]
    Let $(\vec{\pi},\vec{u},t)$ be an instance of our problem such that \greq fails. 
    We show that $\sum_{i \in [n]} w_i > t-2$, which by \Cref{lem:trivialLB} \ref{lem:trivialLB-t}, implies that the optimal budget for \optimization is at least $t-1$. 

    To gain intuition for the proof, consider the following thought experiment: imagine scaling each city's shapes so that it maintains its total area but reaches its maximum height $u_i$, attaining its minimum width $w_i$. How much width do we lose in total? The original sum of widths exceeds~$t$; we show that even after scaling, the total width remains strictly greater than $t - 2$. While it may seem natural to scale each city individually, our analysis instead partitions the stacked picture into ``columns'' and scales each column separately.

    Let $\mathcal{I}$ be a partition of the interval $[0,1)$ with the property that all functions $\lambda_i$ are constant along each interval $I \in \mathcal{I}$.
    Now, consider the interval in $\mathcal{I}$ that starts at $0$. For all $k \in [t-1]_0$ let $j(k)$ be the unique city with  $\lambda_{j(k)}(k) >0$. From now, we drop the position and write $\lambda_{j(k)}$ instead of $\lambda_{j(k)}(k)$. Since \greq failed, we know that $\Lambda(x)<\ell$ for some $x \in [t-1,t)$. Moreover, by the monotonicity of $\Lambda$ (\Cref{lem:greqStructure} \ref{lem:StructureB}) we know that $\Lambda(t-1) < \ell$. By \Cref{lem:greqStructure} \ref{lem:StructureA} it holds that $\lambda_{j(k)} = u_{j(k)}$ for all $k \in [t-1]_0$.
    
    Now consider an arbitrary interval $[\alpha, \beta) \in \mathcal{I}$. For each $k \in [t{-}1]_0$, let $i(k)$ be the unique city with $\lambda_{i(k)}(k + \alpha) > 0$. We write $\lambda_{i(k)}$ for $\lambda_{i(k)}(k + \alpha)$. See \Cref{fig:greq2approx} for an illustration. The original total width in column $[\alpha, \beta)$ is $(\beta - \alpha)t$. We show that scaling each subarea to its maximum height yields a total width greater than $(\beta - \alpha)(t - 2)$. Since the factor $(\beta - \alpha)$ is irrelevant to our argument, we drop it.

\begin{figure}[t]
    \centering
    
    \begin{tikzpicture}[scale=0.17]
        \draw[very thick] (0,0) rectangle (28,32);
        \draw[very thick] (0,0) rectangle (4,26); 
    
        \StackRects{4}{0.5,2,3,3,7,7,7}{lightgreen}{0}{0}
        \StackRects{4}{2,3,3}{lightgreen}{0}{12}
        \StackRects{4}{6,6,6,6}{lightyellow!70}{8}{12}
    
        \node(j1) at (2,1.5) {\small $j(1)$};
        \node(j2) at (2,4) {\small $j(2)$};
        \node(j3) at (2,7) {\small $j(3)$};
        \node(j4) at (2,12) {\small $j(4)$};
        \node(j5) at (2,19) {\small $j(5)$};
        \node(j6) at (2,26) {\small $j(6)$};
    
        \node(i1) at (14,1) {\small $i(0)$};
        \node(i2) at (14,3.5) {\small $i(1)$};
        \node(i3) at (14,6.5) {\small $i(2)$};
        \node(i4) at (14,11) {\small $i(3)$};
        \node(i5) at (14,17) {\small $i(4)$};
        \node(i6) at (14,23) {\small $i(5)$};
    
        \draw [decorate, very thick, decoration={brace,mirror,amplitude=10pt}] (21,8) -- (21,32);
        \node at (24.5,20) {$\ell_i$}; 
    
        \draw [decorate, very thick, decoration={brace,amplitude=8pt}] (-6,8.5) -- (-6,29.5);
        \node at (-9,19) {$\ell_j$}; 
    
        \draw [decorate, very thick, decoration={brace,mirror,amplitude=5pt}] (16.5,20) -- (16.5,26);
        \node at (20,23) {$\lambda_{i(5)}$}; 
    
        \draw [decorate, very thick, decoration={brace,amplitude=5pt}] (-.5,22.5) -- (-.5,29.5);
        \node at (-4,26) {$u_{j(6)}$}; 
        
        \draw [decorate, thick, decoration={brace,amplitude=5pt}] (-0.5,0) -- (-0.5,8.5);
        \node at (-4,4.25) {\footnotesize$\Lambda(t')$};
    
        \draw [decorate, thick, decoration={brace,mirror,amplitude=5pt}] (16.5,0) -- (16.5,8);
        \node at (22.8,4) {\footnotesize$\Lambda(t'\!-\!1\!+\!\alpha)$};

        \foreach \a/\b in {j1/i1,j2/i2,j3/i3,j4/i4,j5/i5,j6/i6}{
          \draw[myfatarrow] (\a) -- (\b);
        }

        \draw[very thick] (12,0) rectangle (16,32); 
    \end{tikzpicture}
    \caption{Situation in the proof of \Cref{thm:greq2approx}. In teal areas, cities receive their upper bounds $u_i$ and in orange areas they receive less than $u_i$. An arc indicates $u_{j(k)} \geq u_{i(k-1)}$.}\label{fig:greq2approx}
    \end{figure}
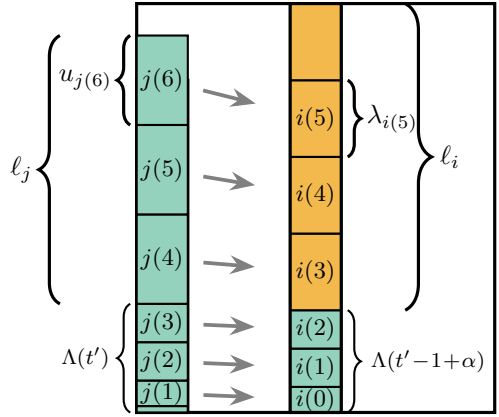

    \begin{claim*}
        It holds that $\sum_{k = 0}^{t-2} \frac{\lambda_{i(k)}}{u_{i(k)}} > t-2$.
    \end{claim*}

    \begin{claimproof}
     Since \greq processes cities with increasing indices, it holds that $i(k) \leq j(k+1)$ for all $k \in [t-2]_0$. Thus: 
     \begin{equation}
         \lambda_{i(k)} \leq u_{i(k)} \leq u_{j(k+1)} = \lambda_{j(k+1)}. \label{eq:relation_two_columns}
     \end{equation}
    Let $t'$ be the first index for which $\lambda_{i(t')} = \frac{\ell - \Lambda(t'-1 + \alpha)}{t-t'}$. If no such index exists, then we know that $\lambda_{i(k)} = u_{i(k)}$ for all $k \in [t-2]_0$ and the claim follows trivially. In the example in \Cref{fig:greq2approx} it holds that $t'=3$. 
    We define $\ell_i =\sum_{k=t'}^{t-1} \lambda_{i(k)}$ and $\ell_j = \sum_{k=t'+1}^{t-1} u_{j(k)}$. Note that $\ell_i > \ell_j$, since $\Lambda(t'-1+\alpha) < \Lambda(t')$. 
    By \Cref{lem:greqStructure} \ref{lem:StructureA}:
    \begin{equation}
      \lambda_{i(k)} = \frac{\ell_i}{t-t'} \text{ for all } k \in \{t',\dots,t-1\} \label{eq:equalLambdas}
    \end{equation}
    We are now ready to prove the claim
     \begin{align*}
         \sum_{k=0}^{t-2}\frac{\lambda_{i(k)}}{u_{i(k)}} & = t' +  \sum_{k=t'}^{t-2}\frac{\lambda_{i(k)}}{u_{i(k)}} \\
         & \stackrel{(\ref{eq:equalLambdas})}{=} t' + \frac{\ell_i}{(t-t')} \sum_{k=t'}^{t-2}\frac{1}{u_{i(k)}} \\ 
         & \stackrel{(\star)}{\geq} t' + \frac{\ell_i}{(t-t')}\frac{(t-t'-1)^2}{\sum_{k=t'}^{t-2}u_{i(k)}} \\ 
         & \stackrel{(\ref{eq:relation_two_columns})}{\geq} t' + \frac{\ell_i}{(t-t')}\frac{(t-t'-1)^2}{\sum_{k=t'+1}^{t-1}u_{j(k)}} \\ 
         & > t' + \frac{\ell_j}{t-t'}\frac{(t-t'-1)^2}{\ell_j} \\
         & = t' + t-t'-2 + \frac{1}{t-t'} > t-2, 
     \end{align*}
    where ($\star$) follows from the fact that the arithmetic mean is at least the harmonic mean (applied to the values $\frac{1}{u_i}$). 
    \end{claimproof}

    It remains to apply the above claim to all columns and conclude $\sum_{i \in [n]} w_i > t-2$. We refer to the appendix. 
\end{proof}

We also show that our upper bound for the approximation guarantee of \greq is tight:

\begin{restatable}{theorem}{thmGreqNotOneApprox} \label{prop:greqNotOneapprox}
    Even under \Cref{ass:oversized}, \greq is not an additive $1$-approximation for \optimization.
\end{restatable}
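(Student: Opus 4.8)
The plan is to exhibit a single instance that witnesses a gap of exactly $2$. The key reduction is this: by \Cref{thm:GreqCorrectness}, a successful run of \greq at budget $t$ certifies that \decision is a yes-instance for $t$, so \greq can never succeed for a budget below the optimum; hence its reported value (the smallest budget at which it succeeds) is always at least $\mathrm{OPT}$. It therefore suffices to produce an instance satisfying \Cref{ass:oversized}, together with a value $m$, such that (i) there is an ex-ante fair $m$-bounded distribution, so $\mathrm{OPT}\le m$, and (ii) \greq fails for budgets $m$ and $m+1$. Then the smallest budget at which \greq succeeds is at least $m+2\ge \mathrm{OPT}+2$, so the greedy-based algorithm overshoots by $2$ and cannot be an additive $1$-approximation. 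Note that by \Cref{thm:greq2approx} the overshoot is at most $2$, so this is the extremal case and the parameters must be chosen in the tight regime of that proof.

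To design the instance I would read off the equality conditions from the proof of \Cref{thm:greq2approx}. There, a failing run has a column---concretely the one starting at $0$---all of whose cells are capped ($\lambda_{j(k)}=u_{j(k)}$) yet whose heights sum to strictly less than $\ell$. I would therefore build a block of small ``tight'' cities whose caps lie below the running average $\tfrac{\ell-\Lambda(x-1)}{t-\lfloor x\rfloor}$, so that \greq always fills them to their cap with width $w_i=\pi_i\ell/u_i<1$, arranged so that the left-aligned stack of their caps underfills the leftmost column; and a block of larger ``loose'' cities carrying the remaining area. Crucially the tight cities must have width strictly below $1$, so that they do not individually tile whole layers but instead create the horizontal imbalance that prevents \greq from recovering through averaging. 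This is exactly what distinguishes the construction from the ``equal caps'' regime, in which each city fills one layer and \greq is essentially optimal.

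With the instance fixed, two things must be checked. First, $\mathrm{OPT}\le m$: I would write down an explicit convex combination of $m$-bounded allocations---pairing each tight city with a loose one so that every allocation respects all caps and touches at most $m$ cities---and verify that its expectation equals $\pi_i\ell$ for each $i$. These are precisely the ``non-equalized'' allocations that \greq is structurally unable to produce. Second, that \greq fails for both budgets $m$ and $m+1$: I would run the algorithm on each budget, using \Cref{lem:greqStructure} \ref{lem:StructureA} (once \greq selects the average in a column it keeps doing so) to reduce the bookkeeping, and show that the terminal position $x$ exceeds the budget in both runs. The main obstacle is exactly this last step: because the averaging height varies horizontally within a layer and a city's rectangle may straddle layer boundaries, the two runs are genuinely separate computations, and the parameters $(\vec{\pi},\vec{u},\ell)$ must be tuned simultaneously so that a column underfills at budget $m$ and again at budget $m+1$ while $\sum_{i} w_i$ stays small enough to keep $\mathrm{OPT}$ at $m$. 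Since the maximum possible overshoot is $2$, there is no slack: the instance must sit at the boundary where the arithmetic-mean/harmonic-mean step in the proof of \Cref{thm:greq2approx} is (asymptotically) tight.
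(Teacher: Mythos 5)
Your high-level reduction is sound and matches the logic underlying the paper's proof: by \Cref{thm:GreqCorrectness}, a successful run of \greq at budget $t$ certifies a $t$-bounded ex-ante fair distribution, so \greq's smallest successful budget is always at least $\mathrm{OPT}$, and it therefore suffices to exhibit one instance satisfying \Cref{ass:oversized} with $\mathrm{OPT}\le m$ on which \greq fails at budgets $m$ and $m+1$. Your design heuristics also point in the right direction: in the paper's actual witness every city has width strictly below $1$ (in fact exactly $w_i = 0.1$), the caps of the cities filling the problematic columns are nearly equal (so the arithmetic--harmonic mean step of \Cref{thm:greq2approx} is nearly tight), and the optimal distribution uses allocations in which the selected cities are all at their caps. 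You also correctly flag that failure at two consecutive budgets requires two separate executions, since success of \greq is not known to be monotone in $t$.

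Nevertheless, the proposal has a genuine gap: it never produces the witness. For an impossibility statement of this kind, the entire mathematical content of the proof \emph{is} a concrete instance together with the verification of (i) a feasible $m$-bounded distribution and (ii) the failing runs of \greq; your text explicitly defers exactly this step (``the parameters $(\vec{\pi},\vec{u},\ell)$ must be tuned simultaneously\dots'') and labels it ``the main obstacle.'' A blueprint for finding a counterexample is not a counterexample, and nothing in the proposal establishes that the simultaneous tuning you describe is actually achievable. By contrast, the paper's proof simply exhibits the instance $\ell = 100$, $\vec{\pi} = \tfrac{1}{1000}\,(2,\underbrace{30,\dots,30}_{9},\underbrace{33,\dots,33}_{10},\underbrace{34,\dots,34}_{5},228)$ with $\vec{u} = 1000\,\vec{\pi}$ (note the largest city is oversized but has $u = 228 \ge \ell$, so \Cref{ass:oversized} holds), for which a feasible distribution with $t = 3$ exists while \greq requires $t = 5$, failing in particular at $t = 4$; this gives a smallest success of $\mathrm{OPT}+2$. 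Until you pin down such concrete parameters and actually trace the failing executions and the $t=3$ distribution, the theorem remains unproven.
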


\section{Ex-post Proportionality} \label{sec:proportional}
\emph{Ex-post monotonicity} ensures that after randomization, larger selected cities receive at least as many letters as smaller ones. However, it does not guarantee that larger cities receive \emph{more} letters. For example, a city with millions of inhabitants might still receive the same number of letters as one with only tens of thousands\,---\,behavior that \greq in fact encourages. We explore how both the selection probability and the number of letters a city receives (if selected) can grow with the population. To achieve this, we introduce the more general concept of \emph{target letters}.

We assign each city $i$ a number $\tau_i$ of target letters, that it should receive if it is selected. This, in term also implies a target selection probability $\frac{\pi_i \ell}{\tau_i}$ for that city. If we let the target letters grow proportionally to the population, then each city gets selected with the same probability. If, on the other hand, the target is equal for all cities, then the target selection  probability grows proportionally to the population, which is close to what \greq achieves. It seems natural to allow for target functions in between those two extremes. 

We define a \textit{target letter function} to be a monotone function $f$ taking as input a population size $\pi_i$ and outputting a target in $\mathbb{R}_{\geq 0}$. However, blindly setting targets without knowing the budget $t$ can lead to infeasibility: For example, small targets will clearly be missed if $t$ is very small. To mitigate this issue, we introduce a scaling factor $\kappa$ and define for each city $i$ the scaled target letters as 
\begin{equation}
    \tau_i^\kappa = \max\left(\pi_i \ell, \min\left(u_i, \kappa f\left(\pi_i\right)\right)\right), \label{eq:targets}
\end{equation}
which makes sure that target letters do not exceed $u_i$ and the target selection probability does not exceed $1$. 
We then determine the value $\kappa$ such that the total target selection probability (or width) satisfies $\sum_{i \in [n]} \frac{\pi_i \ell}{\tau_i^{\kappa}} = t$ and set $\tau_i = \tau_i^{\kappa}$ for all $i \in [n]$.
A total width of at most $t$ is a necessary condition for the targets to be achievable but is far from sufficient due to the more complex structure of the problem. 

We suggest $f(x) = \sqrt{x}$ as a particularly natural target letter function since it allows target letters and target selection probability to scale in equal measure. We introduce two methods that take as input an instance and the target letters and aim to construct a fair distribution meeting the targets. As in \Cref{sec:monotone}, we allow for distributions over fractional allocations for the sake of simplicity, which immediately approximates integral ex-post proportionality up to one letter.

\paragraph{Column generation}
Recall our linear programming approach from \Cref{sec:optimization}, which we used for deciding whether an instance is feasible or not. It is natural to add an objective function to this LP to minimize, in expectation, a measure of deviation from the targets. 
Specifically, we minimize $\sum_{a \in A_t} x_a \varphi(a),$ where $\varphi$ measures the total relative deviation from the targets, i.e., $$\varphi(a) = \sum_{i \in [n], a_i > 0} \frac{|\tau_i - a_i|}{\tau_i}.$$
This objective penalizes the same absolute deviation from the target more heavily for smaller cities than for larger ones.
To optimize the resulting primal LP, we again design a separation oracle for the dual LP (a process also termed \emph{column generation}). This time, the separation problem is more complex, and we formulate a mixed integer linear program to solve it (\Cref{app:proportional}). Though not polynomial-time, state-of-the-art solvers scale to large problems in practice.

While \colgen is optimal with respect to the target letters, the resulting distributions have little visual structure (e.g., see \Cref{fig:column_generation_result}), and the algorithm's reliance on optimization solvers makes them hard to explain to the public. We introduce an alternative approach that is arguably more transparent, while still aiming to meet the target letters.

\paragraph{Bucket Approach}

The idea of \buckets is to partition the cities into $t$ disjoint sets (the \textit{buckets}), such that we can then sample exactly one city from each bucket. Each bucket has a \textit{height}, which determines how many letters the selected city from that bucket receives. Within each bucket, we thus need to sample proportional to size. By ex-ante fairness, the height of a bucket $B \subseteq [n]$ is determined by its elements $h = \sum_{i \in B} \pi_i \ell$. To approximate the target letters $\vec\tau$, we define the buckets such that the target letters of each city are close to the height of the bucket it belongs to.

To achieve this, we fill the buckets iteratively with cities in increasing order of their size. We move on to the next bucket if adding another city would either (i) increase the total target probability of all cities in the bucket above one, or (ii) would increase the height of the bucket above the maximum number of letters of its smallest city. See \Cref{app:proportional}.

The bucket approach has the advantage of producing easily explainable distributions (see, e.g., \Cref{fig:buckets_result}). In particular, it satisfies the \emph{binary outcome} property: each city knows in advance how many letters it will receive if selected. While the method does not guarantee ex-post monotonicity in the worst case, we observe no violations in our data. Moreover, it ensures that selected cities are distributed somewhat evenly across cities of different sizes.
On the downside, the approach lacks worst-case approximation guarantees.

\begin{restatable}{theorem}{thmBucketsNoApprox}
    For any targets and constant $c$, \buckets is not an additive $c$-approximation for \optimization.
\end{restatable}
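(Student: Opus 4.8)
The plan is to construct, for every constant $c$, a single instance on which \buckets outputs $n$ buckets while the optimal number of cities is only about $2n/3$, so that the additive gap grows linearly in $n$ and eventually exceeds $c$; the construction is engineered so that \buckets behaves identically for \emph{every} choice of targets. Fix $n$ divisible by $3$, set $\ell = 2n$, and take $n$ identical cities with $\pi_i = 1/n$ and $u_i = 3$. Then $\pi_i\ell = 2 \le 3 = u_i$, so the instance is valid (and $u_i \le \ell$), and $w_i = \pi_i\ell/u_i = 2/3$, giving $\sum_{i} w_i = 2n/3$.

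First I would lower-bound the output of \buckets. Since all cities are identical and processed in increasing order, every bucket has smallest-city upper bound $3$; a bucket containing a single city already has height $\pi_i\ell = 2$, and adding any second city would push the height to $4 > 3$. Thus condition~(ii) closes each bucket after exactly one city. Crucially, this is independent of the chosen targets: condition~(i) can only cause \emph{additional} closures, never merge buckets, so it cannot reduce the count below the $n$ singletons forced by the height cap. Hence \buckets outputs exactly $n$ buckets for every target vector $\vec\tau$. Next I would upper-bound the optimum by exhibiting a feasible distribution: draw a uniformly random subset $S$ of exactly $m = 2n/3$ cities and give each selected city $\ell/m = 3$ letters. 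Each such allocation is integral, $m$-bounded, and assigns $m\cdot 3 = \ell$ letters, while $\mathbb{E}[a_i] = \Pr[i \in S]\cdot 3 = (m/n)\cdot 3 = 2 = \pi_i\ell$, so the distribution is ex-ante fair. Therefore the optimum of \optimization is at most $m = 2n/3$ (and in fact equals it, since any single allocation has at most $t$ nonzero entries, each at most $3$, summing to $\ell$, forcing $t \ge \ell/3 = 2n/3$).

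Combining the two bounds, the additive gap is $n - 2n/3 = n/3$; choosing $n = 3(c+1)$ makes it $c+1 > c$, which proves the claim for any targets and any constant $c$. The main obstacle—and the only genuinely delicate point—is the \emph{for any targets} quantifier: I would emphasize that the singleton structure is forced purely by the smallest-city height cap of condition~(ii), so that the target-dependent condition~(i) is irrelevant on this instance. The remaining steps are routine: feasibility and ex-ante fairness of the random-subset distribution follow immediately from symmetry, and the validity of the instance is a direct check of $\pi_i \ell \le u_i \le \ell$ together with $\sum_i \pi_i = 1$.
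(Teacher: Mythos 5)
Your proof is correct, and it exploits the same structural weakness that the paper's proof does\,---\,the height cap of condition~(ii) forces singleton buckets no matter what targets are supplied\,---\,but your instance and your optimality certificate are genuinely different. The paper takes $n = 2z$ cities (with $z = c+3$) whose sizes grow geometrically, $\pi_i \propto z^{i-1}$ on the lower half, paired with complementary cities so that each pair $(k, n-k)$ at full upper bound exhausts exactly $\ell$ letters; this pins the optimum to the constant $t = 2$, while consecutive sizes grow so fast that $\pi_i \ell + \pi_{i+1}\ell > u_i$, forcing singletons for cities $1, \dots, z-1$ and hence failure of \buckets for all $t \le z-1 = c+2$. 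You instead take $n$ identical cities with $u_i$ equal to $\tfrac{3}{2}$ times the fair share $\pi_i \ell = 2$, which makes both the singleton structure and its independence of the target vector one-line checks, and you certify that the optimum equals $2n/3$ by symmetry (a uniform random $2n/3$-subset at full upper bound), matching the width lower bound of \Cref{lem:trivialLB}. Your construction is more elementary and transparent, and your handling of the ``for any targets'' quantifier (condition~(i) can only shrink buckets, never merge them, and if it rules out even the singleton the algorithm is stuck and fails anyway) is cleaner than needed; what the paper's construction buys in exchange is a \emph{constant} optimum, so the same instance also rules out any multiplicative guarantee, which your ratio-$3/2$ family does not. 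One boundary quibble: with $n = 3(c+1)$, your range of budgets includes $t = 2n/3 + c = n-1$, where the pseudocode's literal failure check ``$i < n$'' does not trigger even though city $n$ is left out of every bucket\,---\,the output then fails ex-ante fairness, so \buckets still does not succeed in any meaningful sense, but to meet the pseudocode as written you should take $n = 3(c+2)$, giving $2n/3 + c = n-2$ so that $i = t+1 < n$ holds and the failure branch fires for every $t$ in the range.
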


\section{Towards Practice} \label{sec:federal}
We aim to apply one of our algorithms in future implementations of citizens' assemblies, particularly in Germany. To this end, we tested them on the data used for the assembly on nutrition \citep{StabsstelleBuergerraete23}, where $\ell = 20\text{K}$ letters were sent, the outreach budget was $t = 80$, and there are $n = 100\,755$ cities with a total population of $84$\,M. Following suggestions from practitioners, we define the maximum number of letters a city can receive as follows: $50\%$ of the population for cities under $500$ inhabitants, $10\%$ for those over $2\,500$, and $250$ for populations in between.

In this recent assembly, practitioners divided the country into 42 groups, based on the 16 federal states and on three city size classes ($[0,20\text{K})$, $[20\text{K},100\text{K})$, $[100\text{K},\infty)$),\footnote{Some states consist of only a single or two large cities.} and sampled the letter allocation per group. This stratification ensures sufficient numbers of invitations within each group for forming the assembly in the second stage of selection. Since the same grouping will likely be used for future assemblies, we test our algorithms in this setup.

\begin{figure*}
    \captionsetup{justification=centering}
    \centering
    \begin{subfigure}[b]{.3\linewidth}
         \includegraphics[width=\linewidth, trim={0 10pt 0 0}]{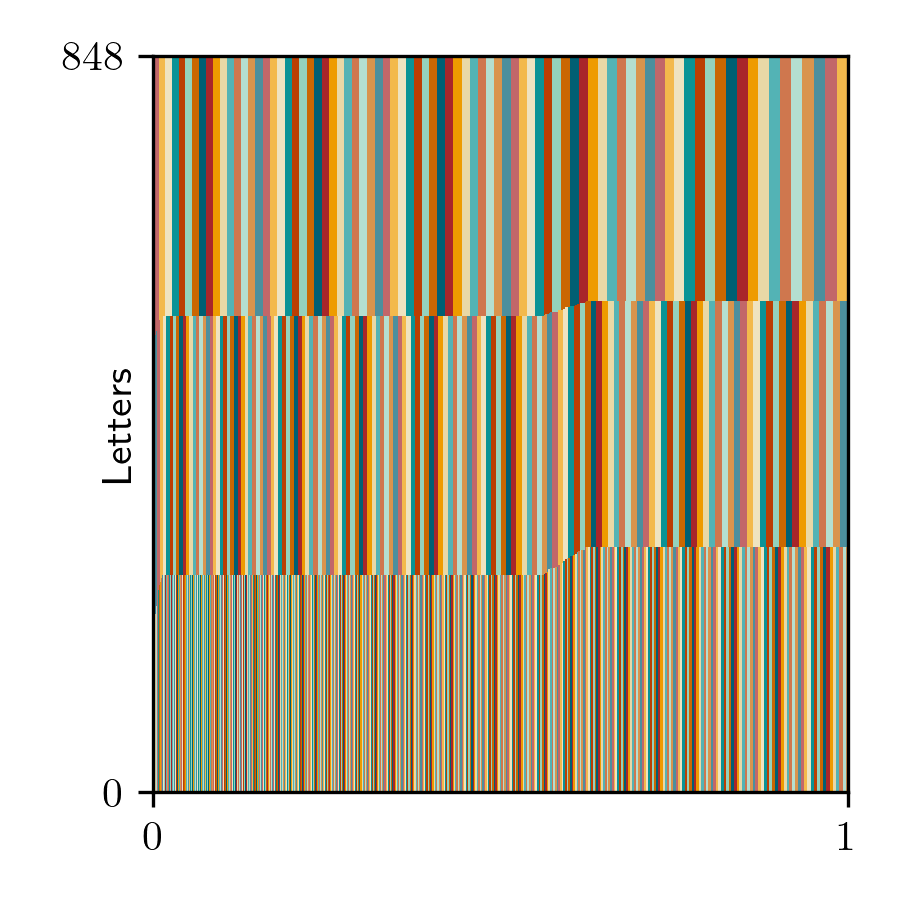}
     \caption{\greq \\($t_G=3$)}
         \label{fig:greq_result}
    \end{subfigure}\hfill
    \begin{subfigure}[b]{.3\linewidth}
        \includegraphics[width=\linewidth,trim={0 10pt 0 0}]{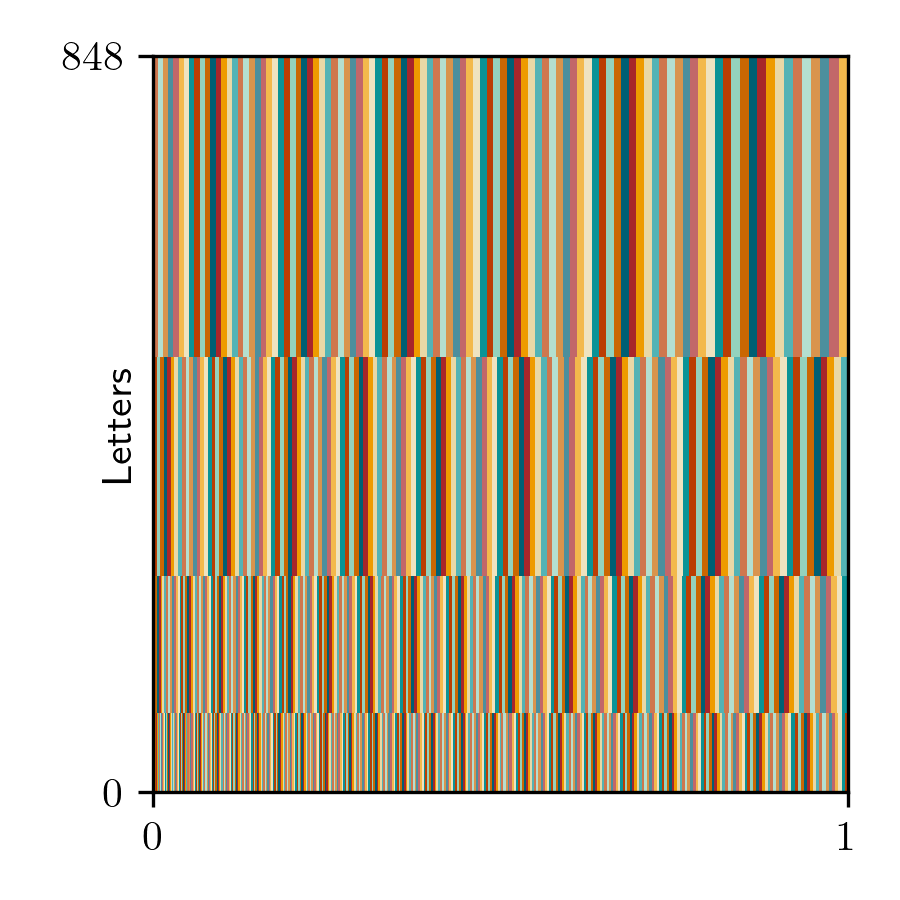}
        \caption{\buckets \\($t_G=4$)}
        \label{fig:buckets_result}
    \end{subfigure}\hfill
    \begin{subfigure}[b]{.3\linewidth}
        \includegraphics[width=\linewidth, trim={0 10pt 0 0}]{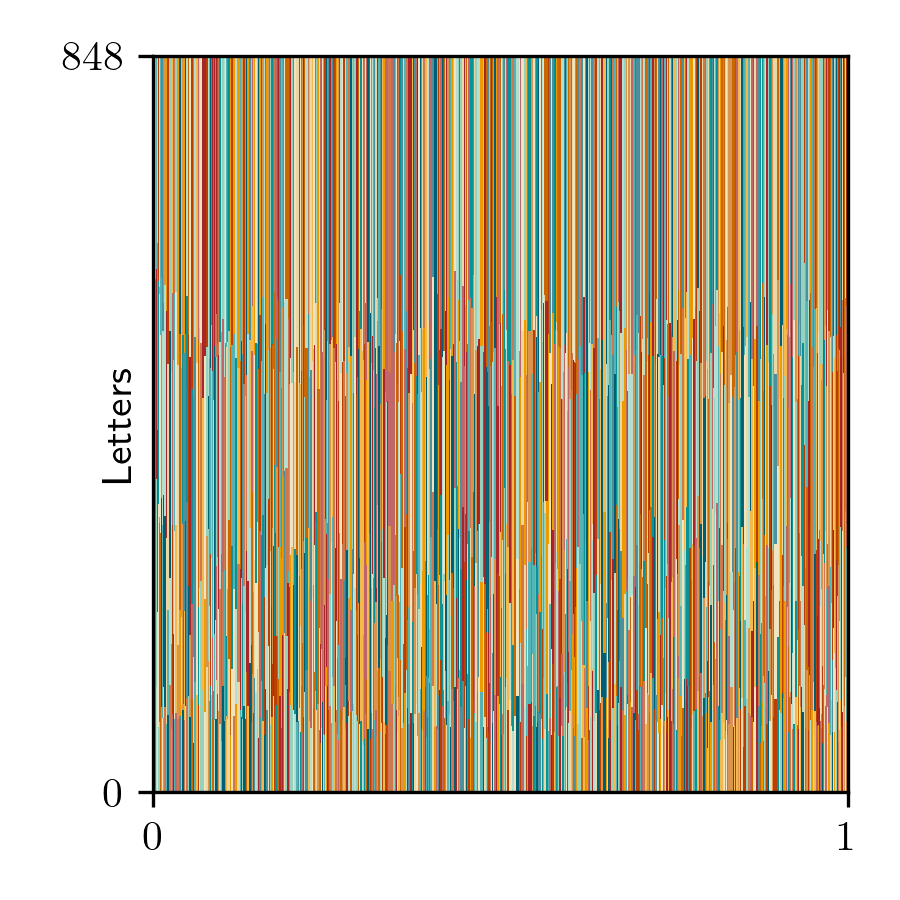}
        \caption{\colgen \hfill\\($t_G=4$)}
        \label{fig:column_generation_result}
    \end{subfigure}
    \begin{subfigure}[b]{0.48\linewidth}
        \includegraphics[width=\linewidth, trim={0 0 1cm 0}]{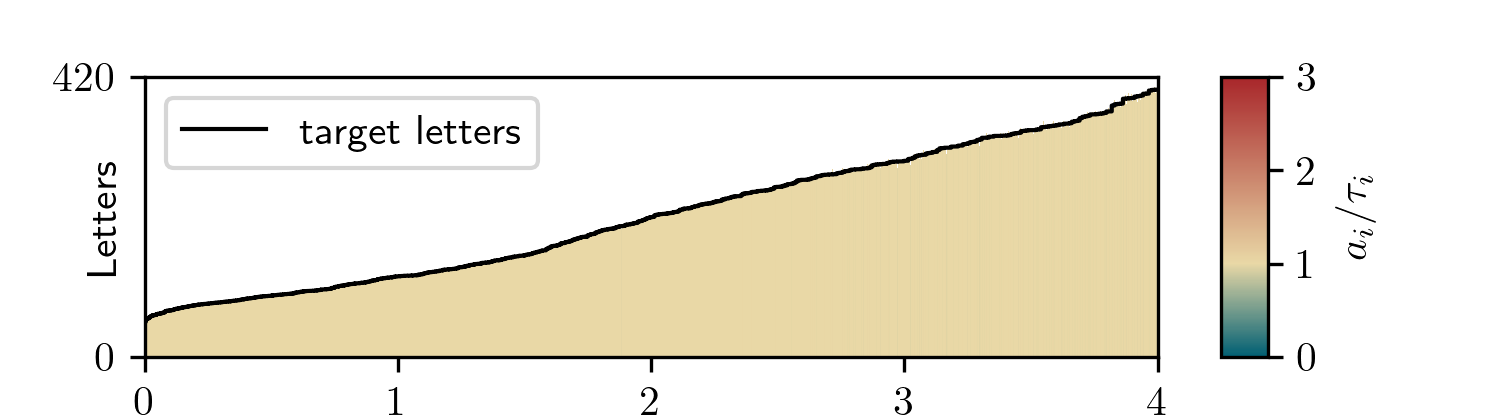}
        \caption{Proportionality of \colgen}
        \label{fig:alg_proportionality_column}    
    \end{subfigure}\hfill
    \begin{subfigure}[b]{0.48\linewidth}
        \includegraphics[width=\linewidth, trim={0 0 1cm 0}]{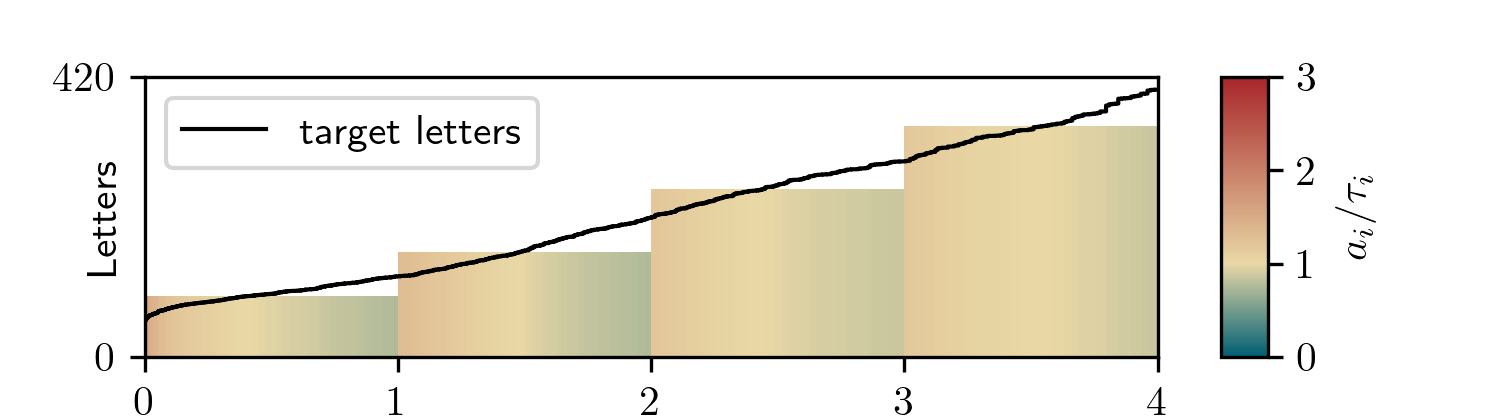}
        \caption{Proportionality of \buckets}
        \label{fig:alg_proportionality_bucket}    
    \end{subfigure}
   \caption{Probability distributions for the group of small cities in the state of Niedersachsen.}
    \label{fig:alg_results}
\end{figure*} 

\vspace*{-5pt}
\paragraph{Apportionment via Global Targets}
While a group's number of letters is just proportional to its population, we must decide how to allocate the outreach budget $t=80$ across groups. 
Let $\mathcal{G}$ be the partition of $[n]$ into groups.
Blindly apportioning the outreach budget $t$ into group budgets $t_G$ for each $G \in \mathcal{G}$
and then applying our algorithms is not ideal for meeting letter targets: Similarly sized cities in different groups may receive vastly different numbers of letters when selected, as this number depends on $t_G$.

We introduce the concept of \emph{global targets}, which help finding an apportionment that keeps letter targets comparable across groups. Given a target letter function (for \colgen and \buckets we use $f(x) = \sqrt{x}$ and for \greq we use a constant function), we compute the global target letters $\tau_i$ by finding a scaling factor $\kappa$ such that the corresponding target widths $\omega_i = \frac{\pi_i \ell}{\tau_i^{\kappa}}$ sum up to $t = 80$ (compare \Cref{sec:proportional}).

However, as argued in \Cref{sec:proportional}, within each group $G$, we need to rescale $\sum_{i \in G} \omega_i$ to a width of $t_G$ to obtain sensible \emph{local targets}. To keep the amount of rescaling required low (and, in turn, local targets close to global targets), we want to assign each group an integer budget $t_G$ close to their fractional target width $\sum_{i \in G} \omega_i$. This is an apportionment problem, for which we use an adjustment of Adam's apportionment method~\citep{BaYo01a}.
For details, see \Cref{app:federal}.

\paragraph{Meeting Local Targets}
After finding the apportionment as described above, we test our algorithms (\greq, \colgen, and \buckets) on these $42$ groups.
All algorithms find distributions for the apportioned $t_G$, and run in a practical amount of time on consumer hardware. This shows that our algorithms scale to practical problems and are plausible contenders for deployment.
We defer results and detailed discussions to \Cref{app:all_results} and display the distributions for one group in \Cref{fig:greq_result,fig:column_generation_result,fig:buckets_result}. 

\Cref{fig:alg_proportionality_column,fig:alg_proportionality_bucket} visualize how well \colgen and \buckets meet their local targets. 
The figure is the result of ordering all rectangles from \Cref{fig:column_generation_result,fig:buckets_result} by the city they represent and lining them all up next to each other in increasing order of city sizes. Each rectangle's color represents how close its height is to the target letters of that city.
We plot the local targets of cities in black.
In this instance, \colgen meets the target letters almost perfectly, which is true for most of the groups (more precisely, $35$ out of $42$ and in particular for all groups with $t_G \geq 3$). \buckets approximates the target letters, with the smaller cities within each bucket receiving slightly too many, and the larger ones slightly too few letters. \buckets struggles when there are very small cities, since the smallest city in a bucket bounds its height and limits the number of letters to the other cities in the bucket. This effect appears in $5$ out of the $42$ groups. Both approaches align more closely with local targets for higher values of $t_G$.

\paragraph{Meeting Global Targets}
We observe that the local targets of groups with $t_G > 1$ never deviate from the global targets by more than a factor of $1.5$. For groups with $t_G = 1$ the local targets are independent of the target function, as every city must receive all letters of this group when selected, which can lead to arbitrarily high deviations from the global targets.
In particular, many of the medium- and large-size groups have a low total share of the population, which leads to a target width significantly below 1. Since each group must have $t_G \ge 1$, these groups are assigned $t_G = 1$, resulting in local targets letters that can be much lower than the global targets. 
We present a visualization of global and local targets in \Cref{fig:global_local_targets_bucket_all,fig:global_local_targets_bucket_small,fig:global_local_targets_colgen_all,fig:global_local_targets_colgen_small} in the appendix.

\smallskip 

Germany holds assemblies nationally and at the state level. \Cref{fig:results_Baden-Württemberg_All} (\Cref{app:all_results}) shows results for Baden-Württemberg, a particularly active state.

\section{Discussion}

We introduced a novel two-stage sampling problem, motivated by the practical demands of selecting citizens' assemblies. Our results offer a solid algorithmic foundation and give rise to two compelling open questions: Does there exist an ex-post monotone additive~$1$-approximation algorithm? And can the representation of city groups\,---\,currently addressed via partitioning\,---\,be integrated more directly into the model? \greq and \buckets already ensure the ex-post representation of cities of different sizes by design, and one might envision a two-dimensional sampling framework, as is often used in survey sampling~\citep{Cox87}.

While these questions offer exciting directions for theory, our focus remains on practical impact.
As Germany’s newly elected government just reaffirmed its commitment to citizens’ assemblies~\citep{CCSa}, our work offers a suite of implemented algorithms, striking distinct, favorable tradeoffs between different practical desiderata.
Based on our discussions with practitioners, we are optimistic that they can soon be used to sample real assemblies.

\section{Acknowledgements}
We would like to thank Federico Fioravanti for valuable conversations in the early stages of the project, Jannik Matuschke for helpful input on configuration LPs and approximate separation oracles, Bettina Speckmann for inspiring discussions on the topic, and Brett Hennig for pointing us to the practical problem during an online talk organized by the European Digital DemocracY (EDDY) network in 2024.

Part of this work was performed while Paul Gölz was at the Simons Institute for the Theory of Computing as a FODSI research fellow, for which he acknowledges the NSF’s support through grant DMS-2023505.
Jan Maly was supported by the Austrian Science Fund (FWF) under the grants 10.55776/PAT7221724 and 10.55776/COE12, by netidee Förderungen (https://www.netidee.at/) and the Vienna Science and Technology Fund (WWTF) (Grant ID: 10.47379/ICT23025). Ulrike Schmidt-Kraepelin was supported by the Dutch Research Council (NWO) under project number VI.Veni.232.254.

\bibliography{arXiv/bib}

\begin{thebibliography}{32}
\providecommand{\natexlab}[1]{#1}
\providecommand{\url}[1]{\texttt{#1}}
\expandafter\ifx\csname urlstyle\endcsname\relax
  \providecommand{\doi}[1]{doi: #1}\else
  \providecommand{\doi}{doi: \begingroup \urlstyle{rm}\Url}\fi

\bibitem[Assos et~al.(2025)Assos, Baharav, Flanigan, and Procaccia]{ABF+25}
A.~Assos, C.~Baharav, B.~Flanigan, and A.~Procaccia.
\newblock Alternates, {{Assemble}}! {{Selecting Optimal Alternates}} for {{Citizens}}' {{Assemblies}}.
\newblock In \emph{Proceedings of the {{ACM Conference}} on {{Economics}} and {{Computation}} ({{EC}})}, pages 719--738, July 2025.
\newblock \doi{10.1145/3736252.3742614}.

\bibitem[Baharav and Flanigan(2024)]{BF24}
C.~Baharav and B.~Flanigan.
\newblock Fair, {{Manipulation-Robust}}, and {{Transparent Sortition}}.
\newblock In \emph{Proceedings of the {{ACM Conference}} on {{Economics}} and {{Computation}} ({{EC}})}, pages 756--775, July 2024.
\newblock \doi{10.1145/3670865.3673606}.

\bibitem[Balinski and Young(2001)]{BaYo01a}
M.~Balinski and H.~P. Young.
\newblock \emph{Fair Representation: {{Meeting}} the Ideal of One Man, One Vote}.
\newblock Brookings Institution Press, 2 edition, 2001.

\bibitem[Benad{\`e} et~al.(2019)Benad{\`e}, G{\"o}lz, and Procaccia]{BGP19}
G.~Benad{\`e}, P.~G{\"o}lz, and A.~D. Procaccia.
\newblock No {{Stratification Without Representation}}.
\newblock In \emph{Proceedings of the {{ACM Conference}} on {{Economics}} and {{Computation}} ({{EC}})}, pages 281--314, 2019.
\newblock \doi{10/gqcq7x}.

\bibitem[{Bertelsmann}(2024)]{BertelsmannStiftung24}
{Bertelsmann}.
\newblock Forum against {{Fakes}}: {{Citizens}}' report on how to deal with disinformation. {{Together}} for a strong democracy.
\newblock Technical report, Bertelsmann Stiftung, G{\"u}tersloh, Oct. 2024.

\bibitem[Brewer and Hanif(1983)]{BH83}
K.~R.~W. Brewer and M.~Hanif.
\newblock \emph{An {{Introduction}} to {{Sampling}} with {{Unequal Probabilities}}}, volume~15.
\newblock Springer, New York, NY, 1983.
\newblock \doi{10.1007/978-1-4684-9407-5_1}.

\bibitem[Caragiannis et~al.(2024)Caragiannis, Micha, and Peters]{CMP24}
I.~Caragiannis, E.~Micha, and J.~Peters.
\newblock Can a few decide for many? the metric distortion of sortition.
\newblock In \emph{Proceedings of the 41st International Conference on Machine Learning (ICML)}, pages 5660--5679, 2024.

\bibitem[CDU et~al.(2025)CDU, CSU, and SPD]{CCSa}
CDU, CSU, and SPD.
\newblock Verantwortung f{\"u}r {{Deutschland}}. {{Koalitionsvertrag}} zwischen {{CDU}}, {{CSU}} und {{SPD}} (21. {{Legislaturperiode}}).
\newblock Technical report, 2025.

\bibitem[Courant(2021)]{Courant21}
D.~Courant.
\newblock Citizens' {{Assemblies}} for {{Referendums}} and {{Constitutional Reforms}}: {{Is There}} an ``{{Irish Model}}'' for {{Deliberative Democracy}}?
\newblock \emph{Frontiers in Political Science}, 2, Jan. 2021.
\newblock ISSN 2673-3145.
\newblock \doi{10.3389/fpos.2020.591983}.

\bibitem[Cox(1987)]{Cox87}
L.~H. Cox.
\newblock A {{Constructive Procedure}} for {{Unbiased Controlled Rounding}}.
\newblock \emph{Journal of the American Statistical Association}, 82\penalty0 (398):\penalty0 520--524, June 1987.
\newblock ISSN 0162-1459, 1537-274X.
\newblock \doi{10.1080/01621459.1987.10478456}.

\bibitem[{Deutscher Bundestag}(2024)]{DeutscherBundestag24}
{Deutscher Bundestag}.
\newblock {Bürgergutachten – Empfehlungen des Bürgerrates ``Ernährung im Wandel: Zwischen Privatangelegenheit und staatlichen Aufgaben'' an den deutschen Bundestag}.
\newblock Technical Report 20/10300, Deutscher Bundestag, Feb. 2024.

\bibitem[Do et~al.(2021)Do, Atif, Lang, and Usunier]{DAL+21}
V.~Do, J.~Atif, J.~Lang, and N.~Usunier.
\newblock Online {{Selection}} of {{Diverse Committees}}.
\newblock In \emph{Proceedings of the {{International Joint Conference}} on {{Artificial Intelligence}} ({{IJCAI}})}, pages 154--160, Aug. 2021.
\newblock \doi{10/gn78qg}.

\bibitem[Ebadian and Micha(2025)]{EM25}
S.~Ebadian and E.~Micha.
\newblock Boosting {{Sortition}} via {{Proportional Representation}}.
\newblock In \emph{Proceedings of the {{International Conference}} on {{Autonomous Agents}} and {{Multi-Agent Systems}} ({{AAMAS}})}, pages 667--675, June 2025.

\bibitem[Ebadian et~al.(2022)Ebadian, Kehne, Micha, Procaccia, and Shah]{EKM+22}
S.~Ebadian, G.~Kehne, E.~Micha, A.~D. Procaccia, and N.~Shah.
\newblock Is sortition both representative and fair?
\newblock In \emph{Proceedings of the 35th Conference on Advances in Neural Information Processing Systems (NeurIPS)}, pages 3431--3443, 2022.

\bibitem[Flanigan et~al.(2020)Flanigan, G{\"o}lz, Gupta, and Procaccia]{FGG+20a}
B.~Flanigan, P.~G{\"o}lz, A.~Gupta, and A.~D. Procaccia.
\newblock Neutralizing {{Self-Selection Bias}} in {{Sampling}} for {{Sortition}}.
\newblock In \emph{Proceedings of the 33th Conference on Advances in {{Neural Information Processing Systems}} ({{NeurIPS}})}, volume~33, 2020.

\bibitem[Flanigan et~al.(2021{\natexlab{a}})Flanigan, G{\"o}lz, Gupta, Hennig, and Procaccia]{FGG+21}
B.~Flanigan, P.~G{\"o}lz, A.~Gupta, B.~Hennig, and A.~D. Procaccia.
\newblock Fair algorithms for selecting citizens' assemblies.
\newblock \emph{Nature}, 596\penalty0 (7873):\penalty0 548--552, 2021{\natexlab{a}}.
\newblock \doi{10/gmz56v}.

\bibitem[Flanigan et~al.(2021{\natexlab{b}})Flanigan, Kehne, and Procaccia]{FKP21}
B.~Flanigan, G.~Kehne, and A.~D. Procaccia.
\newblock Fair sortition made transparent.
\newblock In \emph{Proceedings of the 34th {{Conference}} on on Advances in {{Neural Information Processing Systems}} ({{NeurIPS}})}, volume~34, pages 25720--25731, 2021{\natexlab{b}}.

\bibitem[Flanigan et~al.(2024)Flanigan, Liang, Procaccia, and Wang]{FLP+24}
B.~Flanigan, J.~Liang, A.~D. Procaccia, and S.~Wang.
\newblock Manipulation-{{Robust Selection}} of {{Citizens}}' {{Assemblies}}.
\newblock \emph{Proceedings of the AAAI Conference on Artificial Intelligence}, 38\penalty0 (9):\penalty0 9696--9703, Mar. 2024.
\newblock ISSN 2374-3468, 2159-5399.
\newblock \doi{10.1609/aaai.v38i9.28827}.

\bibitem[Gandhi et~al.(2006)Gandhi, Khuller, Parthasarathy, and Srinivasan]{GKP+06}
R.~Gandhi, S.~Khuller, S.~Parthasarathy, and A.~Srinivasan.
\newblock Dependent rounding and its applications to approximation algorithms.
\newblock \emph{Journal of the ACM (JACM)}, 53\penalty0 (3):\penalty0 324--360, 2006.
\newblock \doi{10/fbbkqc}.

\bibitem[Garey and Johnson(1979)]{GaJo79a}
M.~R. Garey and D.~S. Johnson.
\newblock \emph{Computers and Intractability: A Guide to the Theory of {{NP-completeness}}}.
\newblock W. H. Freeman, 1979.

\bibitem[Gr{\"o}tschel et~al.(1993)Gr{\"o}tschel, Lov{\'a}sz, and Schrijver]{GLS93a}
M.~Gr{\"o}tschel, L.~Lov{\'a}sz, and A.~Schrijver.
\newblock \emph{Geometric Algorithms and Combinatorial Optimization}, volume~2 of \emph{Algorithms and Combinatorics}.
\newblock Springer, 1993.

\bibitem[Halpern et~al.(2025)Halpern, Procaccia, Shapiro, and Talmon]{HPS+25}
D.~Halpern, A.~D. Procaccia, E.~Shapiro, and N.~Talmon.
\newblock Federated assemblies.
\newblock In \emph{Proceedings of the {{AAAI Conference}} on {{Artificial Intelligence}} ({{AAAI}})}, pages 13897--13904, 2025.

\bibitem[{INAPP}(2022)]{inapp2022ess}
{INAPP}.
\newblock Sample design summary: Ess round 11.
\newblock Technical report, 2022.
\newblock URL \url{https://www.inapp.gov.it/wp-content/uploads/Non-organizzati/Round-11-Sample-Design-Summary.pdf}.

\bibitem[Kalton et~al.(2014)Kalton, Kali, and Sigman]{KKS14}
G.~Kalton, J.~Kali, and R.~Sigman.
\newblock Handling {{Frame Problems When Address-Based Sampling Is Used}} for {{In-Person Household Surveys}}.
\newblock \emph{Journal of Survey Statistics and Methodology}, 2\penalty0 (3):\penalty0 283--304, Sept. 2014.
\newblock ISSN 2325-0984, 2325-0992.
\newblock \doi{10.1093/jssam/smu013}.

\bibitem[{Mehr Demokratie}(2021)]{BurgerratDeutschlandsRolleinderWelt21}
{Mehr Demokratie}.
\newblock Germany's role in the world. {{The}} recommendations of the digital citizens' assembly.
\newblock Technical report, Mehr Demokratie e.V., Berlin, Mar. 2021.

\bibitem[Meir et~al.(2021)Meir, Sandomirskiy, and Tennenholtz]{MST21}
R.~Meir, F.~Sandomirskiy, and M.~Tennenholtz.
\newblock Representative {{Committees}} of {{Peers}}.
\newblock \emph{Journal of Artificial Intelligence Research}, 71:\penalty0 401--429, July 2021.
\newblock ISSN 1076-9757.
\newblock \doi{10/gqcq8m}.

\bibitem[{OECD}(2020)]{OECD20}
{OECD}.
\newblock \emph{Innovative {{Citizen Participation}} and {{New Democratic Institutions}}: {{Catching}} the {{Deliberative Wave}}}.
\newblock Organisation for Economic Co-operation and Development, June 2020.
\newblock \doi{10.1787/339306da-en}.

\bibitem[Scherpenzeel et~al.(2017)Scherpenzeel, Maineri, Bristle, Pfl{\"u}ger, Mindorova, Butt, Zins, Emery, and Luijkx]{SMB+17}
A.~Scherpenzeel, A.~Maineri, J.~Bristle, S.-M. Pfl{\"u}ger, I.~Mindorova, S.~Butt, S.~Zins, T.~Emery, and R.~Luijkx.
\newblock Report on the use of sampling frames in {{European}} studies: {{SERISS Deliverable}}.
\newblock Technical report, SERISS - Synergies for Europe's Research Infrastructure in the Social Sciences, 2017.

\bibitem[Schulz and Uhan(2013)]{ScUh13a}
A.~S. Schulz and N.~A. Uhan.
\newblock Approximating the least core value and least core of cooperative games with supermodular costs.
\newblock \emph{Discrete Optimization}, 10\penalty0 (2):\penalty0 163--180, May 2013.
\newblock ISSN 15725286.
\newblock \doi{10.1016/j.disopt.2013.02.002}.

\bibitem[{Stabsstelle B{\"u}rgerr{\"a}te}(2023)]{StabsstelleBuergerraete23}
{Stabsstelle B{\"u}rgerr{\"a}te}.
\newblock {B{\"u}rgerrat Ern{\"a}hrung. So funktioniert die Auslosung -- Zufallsauswahl im Detail erkl{\"a}rt}.
\newblock {B{\"u}rgerrat Ern{\"a}hrung}, Deutscher Bundestag, May 2023.

\bibitem[Stadtm{\"u}ller et~al.(2023)Stadtm{\"u}ller, Silber, Gummer, Sand, Zins, Beuthner, and Christmann]{SSG+23}
S.~Stadtm{\"u}ller, H.~Silber, T.~Gummer, M.~Sand, S.~Zins, C.~Beuthner, and P.~Christmann.
\newblock Evaluating an {{Alternative Frame}} for {{Address-Based Sampling}} in {{Germany}}: {{The Address Database From Deutsche Post Direkt}}.
\newblock \emph{methods, data, analyses}, 17:\penalty0 17 Pages, 2023.
\newblock \doi{10.12758/MDA.2022.06}.

\bibitem[Wasmer et~al.(2017)Wasmer, Blohm, Walter, Jutz, and Scholz]{WBW+17}
M.~Wasmer, M.~Blohm, J.~Walter, R.~Jutz, and E.~Scholz.
\newblock {Konzeption und Durchf{\"u}hrung der ``Allgemeinen Bev{\"o}lkerungsumfrage der Sozialwissenschaften'' (ALLBUS) 2014}.
\newblock \emph{GESIS Papers}, 2017/20:\penalty0 74 S., 2017.
\newblock ISSN 2364-3781.
\newblock \doi{10.21241/SSOAR.53370}.

\end{thebibliography}
\bibliographystyle{abbrvnat}

\appendix
\setcounter{secnumdepth}{1}

\onecolumn

\section{Missing Proofs}
\label{app:missingproofs}

\lemTrivialLB*

\begin{proof}
    Let $\mathcal{D}$ be an ex-ante fair $t$-bounded probability distribution. To prove \Cref{lem:trivialLB-selProb}, rewrite ex-ante fairness as $$\pi_i \ell = \E{a_i}\leq  \Pr[a_i > 0]\cdot u_i,$$ which is equivalent to \begin{equation} \Pr[a_i>0] \geq w_i = \frac{\pi_i \ell}{u_i}. \label{eq:prob-lb} \end{equation}
    To show \Cref{lem:trivialLB-t}, we rewrite the sum of selection probabilities as $$t \geq \sum_{a \in A_t} \Pr[a] \sum_{i \in [n]} \mathds{1}[a_i>0] = \sum_{i \in [n]} \Pr[a_i > 0] \geq \sum_{i \in [n]} w_i,$$ which proves the lemma statement. 
\end{proof}

\thmNPHard*
\begin{proof}
We reduce from the NP-hard problem \textsc{EqualCardinalityPartition} \citep{GaJo79a}, where we are given $2k$ elements with positive integer weights $x_1, \dots, x_{2k}$ and the task is to decide whether there exists a subset $S \subseteq [2k]$ with the property that $$\sum_{i \in S} x_i = \frac{1}{2} \sum_{i \in [2k]}x_i.$$ 

We create an instance of \decision by introducing $n=2k$ cities with populations $\pi_i = \frac{x_i}{\sum_{j \in [n]}x_j}$ and upper bounds $u_i = x_i$ for all $i \in [n]$. Moreover, we set $\ell = \frac{1}{2} \sum_{j \in [n]}x_j$ and $t=k$. We now prove the equivalence of the reduction. 

\medskip

First, assume that the partition instance is a yes-instance. That is, there exists a subset $S \subseteq [2k]$ such that $S$ and $[2k] \setminus S$ have equal weight and are each of cardinality $k$. Now, consider the probability distribution $\mathcal{D}$ that, with probability $\frac{1}{2}$ selects the allocation $a_i = u_i$ for all $i \in S$ (and $a_i = 0$ for all $i \in [n] \setminus S$), and with probability $\frac{1}{2}$ selects the allocation $a_i = u_i$ for all $i \in [n] \setminus S$ (and $a_i = 0$ for all $i \in S$). This allocation is ex-ante fair and $t$-bounded. Hence, our \decision instance is a yes-instance. 

\smallskip

For the other direction, let our \decision instance be a yes-instance. We then derive two properties that have to hold for any ex-ante fair and $t$-bounded probability distribution. First, note that by construction $w_i = \frac{1}{2}$ for all $i \in [n]$. Moreover, as we have argued in \Cref{lem:trivialLB}, it holds that $$t \geq \sum_{i \in [n]} \Pr[a_i > 0 ] \geq \sum_{i \in [n]} w_i =   \frac{n}{2}= t,$$ and therefore $\Pr[a_i > 0 ] = w_i = \frac{1}{2}$ for all cities. In particular, this also implies that for any allocation in the support of $\mathcal{D}$ there exist exactly $t$ cities with $a_i = u_i$ while for all others it holds that $a_i = 0$. Thus, consider any allocation in the support of $\mathcal{D}$ and let $S$ be the set of cities with non-zero letters. Since $S$ is of weight $\sum_{i \in S} u_i = \ell = \frac{1}{2}$ and of cardinality $k$, this proves the existence of an equal-cardinality partition. 
\end{proof}

\thmPseudoAlgorithm* 

\begin{proof}
    Our goal is to decide whether the primal LP is feasible. Since the dual is always feasible (set $y_i = y = 0$ for all $i \in [n]$), this is equivalent to deciding whether the dual LP is bounded. Note that the dual is unbounded if and only if it has any solution with positive objective value (we can always scale the $y_i$'s and $y$). Thus, we can add the constraint $$\sum_{i \in [n]} \pi_i \ell y_i - y = 1$$ and ask whether the dual LP is non-empty. For deciding non-emptiness of an LP it suffices to have access to a separation oracle in order to employ the Ellipsoid method \citep{GLS93a}. 
    
    The dual separation problem is the following: Given a dual solution $((y_i)_{i \in [n]},y)$, decide whether there exist a dual constraint that is violated: $$\sum_{i \in [n]}a_i y_i \leq y \quad a \in A_t.$$ Stated differently, is $\max_{a \in A_t} \sum_{i \in [n]}a_i y_i$ larger than $y$? This problem is reminiscent of a knapsack problem and can be solved with help of a dynamic program. First note that it is without loss of generality that an optimal solution $a$ to this maximization problem gives $a_i < u_i$ to at most one city $i \in [n]$, namely to the one (among the $\leq t$ selected ones) with smallest weight $y_i$. 

    \newcommand{\dyp}{\text{DP}}    

    We start our algorithm by guessing the city with smallest weight that will be included the support of $a$, let's call this city $i^*$. Now, we relabel the cities such that $y_1 \geq \dots \geq y_{i^*} > y_{i^*} \geq \dots y_n$. For deciding which other cities will be included with their upper bound in $a$ (which will be up to $t-1$), we now write an dynamic program, which is essentially the same as for a cardinality-constrained knapsack problem. We give it for the sake of completeness.  
    We write the following dynamic program, where $\dyp(0\leq j <i^*, 0 \leq k \leq t-1, 0 \leq z \leq \ell)$ corresponds to the maximum value that we can create if we select $k$ cities from the set $\{1, \dots, j\}$ and allocate exactly $z$ letter to them. Then, the recursive formulas of the dynamic program can be described as follows: $$\dyp(0,\cdot,\geq 0) = \dyp(\cdot,0,\geq 0) = \dyp(\cdot,\cdot,0) = 0, \dyp(\cdot,\cdot,<0) = - \infty$$ and $$\dyp(j+1,k,z) = \max\left( \dyp(j,k-1,z-u_{j+1}) + y_iu_{j+1}, \dyp(j,k,z)\right).$$
    After filling the table, we choose the entry that maximizes $\dyp(i^*-1,k,z)$ among all $k \in [t-1], z \in [\ell-u_{i^*},\ell)$, let's call this value $\gamma_{i^*}$ and remember the corresponding $z$ as $z_{i^*}$. We then choose $i^*$ as to maximize $\gamma_{i^*} + (\ell - z_{i^*}) \cdot y_{i^*}$. This leads to an algorithm with running time $\mathcal{O}(n^2t\ell)$. 
\end{proof}

\thmOneapproximation*

\begin{proof}
    We start this proof very similarly to the one of \Cref{thm:pseudo}, namely, we add the constraint $$\sum_{i \in [n]} \pi_i \ell y_i - y = 1$$ and ask whether the dual LP is non-empty. This time, we show the existence of an approximate separation oracle: Given a dual solution $((y_i)_{i \in [n]},y)$, decide that the solution is feasible or provide one constraint of the following that is violated: $$\sum_{i \in [n]}a_i y_i \leq y \quad a \in A_{t+1}.$$ Note that these correspond to the dual constraint for the case of $t$, since we replaced $A_t$ by $A_{t+1}$. \citet{ScUh13a} prove in the appendix of their paper (Theorem A.6) that if an approximate separation oracle in combination with the Ellipsoid method yields an approximate algorithm for the deciding the emptiness of a polytope. More precisely, in our case this implies that if there exists a polynomial time algorithm for our approximate separation oracle, then the Ellipsoid method yields an algorithm that either decides that our dual LP for $t$ is non-empty or our dual LP for $t+1$ is empty. In the former case, this implies that the primal LP for $t$ is infeasible, in the latter case the primal LP for $t+1$ is feasible. Thus, this algorithm is an additive $1$-approximation for \optimization.

    We now provide the polynomial-time algorithm for the approximate separation oracle. For a given dual solution $((y_i)_{i \in [n]},y)$ we should either decide that 
    \begin{equation}
        m_t:=\max_{a \in A_{t}}\sum_{i \in [n]}a_i y_i \leq y \label{first_opt}
    \end{equation}
    or prove that 
    \begin{equation}
        m_{t+1} := \max_{a \in A_{t+1}}\sum_{i \in [n]}a_i y_i > y. \label{second_opt}
    \end{equation}
    Note that $m_{t+1} \geq m_t$ holds simply because $A_t \subseteq A_{t+1}$. We can write down the following LP, which serves as a LP relaxation of the former optimization problem: 
    \begin{alignat*}{5}
     & \text{maximize} &\sum_{i \in [n]} z_i y_i u_i & \\
    & \text{subject to} \quad& \sum_{i \in [n]} z_i  u_i & = \ell, &&&& \\
                 && \sum_{i \in [n]} z_i & \leq t \quad & & \quad && \\
                 && 0 \leq z_i &\leq 1 & \text{for } i & \in [n].\\
\end{alignat*} 
The idea is that $z_i$ captures the number of letters allocated to city $i \in [n]$, namely, $a_i = z_i u_i$. Note that in particular, if $a$ is an optimal solution to \Cref{first_opt}, then $z_i = \frac{a_i}{u_i}$ is a feasible solution to the above LP. Hence, $m_{t}$ is smaller or equal than the optimal value of the LP, which we call $m^*$. Now, let $z^*$ be some optimal solution to the LP that also corresponds to a basis. Since $n$ constraints must be tight in such a solution, we know that at least $n-2$ of the constraints $0 \leq z_i \leq 1$ must be tight on one of the two sides. Thus, we have at most two variables $z_i$ with fractional values, thus there are at most $t+1$ variables with non-zero entry. We construct a solution $a^*_i = z^*_i u_i$ for all $i \in [n]$. This solution is almost an element of $A_{t+1}$ with the subtlety that there might be two cities that may receive a non-integral amount of letters (those with fractional $z_i$). Let's call these cities $i(1)$ and $i(2)$ and assume wlog that $y_{i(1)} \geq y_{i(2)}$. Then, we round $a^*_{i(1)}$ up and $a^{*}_{i(2)}$ down. While the corresponding $z$ solution would not be feasible for the LP (it would violate the $t$-bound), $a^*$ remains $(t+1)$-bounded and only increases the objective value of $a^*$ wrt the $y_i$ weights. Hence, we can assume wlog that $a^* \in A_{t+1}$.

If $m^* \leq y$, then we report that the point $((y_i)_{i \in [n]},y)$ is feasible for our dual LP for $t$, which is true since $m_t \leq m^* \leq y$. On the other hand, if $m^* > y$, then we return the violated constraint from the dual constraint for $t+1$ that corresponds to $a^* \in A_{t+1}$, which is true since $$\sum_{i \in [n]} a^*_i y_i = \sum_{i \in [n]} z^*_i u_i y_i = m^* > y,$$ which implies $m_{t+1} \geq m^* > y$. 
\end{proof}

\thmGreqCorrect*

\begin{proof}
    Note that, when \greq succeeds in particular it needs to hold that $\Lambda(x) = \ell$ for all $x\in[t-1,t)$ since both the area of the rectangle and the sum of the areas of the cities sums to $\ell$. 
    
    We define the corresponding probability distribution $\mathcal{D}$ as follows: Sample $\alpha \in [0,1)$ uniformly at random and for any $k \in [t-1]_0$, let $i(k)$ be the unique city with $\lambda_i(k + \alpha)>0$. Then, return $a_{i(k)} = \sum_{k' \in [t-1]_0}\lambda_{i(k)}(k'+\alpha)$ for all $k \in [t-1]_0$ and $a_j = 0$ for all other cities $j$. Note that typically, the sum in the definition of $a_{i(k)}$ only contains one non-zero entry with the exceptions of cities that "wrap" around several layers. 
    
    We now prove that $\mathcal{D}$ is a probability distribution over $A_t$. 
    The fact that $\sum_{i \in [n]} a_i = \ell$ follows since $\Lambda(t-1+\alpha)= \ell$. Moreover, $a$ is $t$-bounded by construction. Lastly, we claim that $a_i \leq u_i$ for all $i \in [n]$. Here, we only have to be concerned about cities that "wrap around", since otherwise the constraint follows directly by the definition of the algorithm. Note that for any $\lambda_i$ at any position, $\min\{u_i, \frac{\ell}{t}\}$ is a global lower bound. This is because the average value that is compare to $u_i$ in the function $\mu_i$ starts out to be $\frac{\ell}{t}$ in the first round and then only grows over time. Thus, assume that $i \in [n]$ wraps around. Then, either $\frac{\pi_i \ell}{u_i} > 1$ or $\frac{\pi_i \ell t}{\ell} > 1$. The first constraint is a direct contradiction to our assumption from \Cref{sec:model} since this would immediately imply that our instance is infeasible for all $t$. The second constraint implies that $i$ is oversized. In this case, \Cref{ass:oversized} implies that $u_i = \ell$ and therefore $a_i \leq \ell$ holds trivially. 
    
    Lastly, $\mathcal{D}$ is ex-ante fair since by construction the area in the picture for each city $i \in [n]$ is $\pi \ell$.
\end{proof}

\thmGreqMonotone*

\begin{proof}
    Consider any interval $[\alpha,\beta)$ such that all $\lambda_i(x)$ are constant for all $x \in [\alpha,\beta)$. For $k \in [t-1]_0$ let $i(k)$ be the unique city with $\lambda_{i(k)}(k + \alpha) > 0$. We show that $$\lambda_{i(0)} \leq \lambda_{i(1)} \leq \dots \leq \lambda_{i(t-1)}.$$ Let $t' \in [t'-1]_0$ be the largest index such that $\lambda_{i(t')} = u_{i(t')}$ (if no such index exists, we set $t'=-1$). For all indices smaller or equal to $t'$, the statement follows from the monotonicity of the letter bounds $u_i$. We now show that also $\lambda_{i(t')} \leq \lambda_{i(t'+1)}$ holds. By definition of \greq it holds that 
    \begin{align*}
        \lambda_{(t'+1)} &= \frac{\ell - \Lambda(t'+\alpha)}{t-(t'+1)} \\ 
        &= \frac{\ell - \Lambda(t' -1 +\alpha) - \lambda_{i(t')}}{t-(t'+1)} \\ 
        & \geq \frac{\ell - \Lambda(t' -1 +\alpha) - \frac{\ell - \Lambda(t' -1 +\alpha)}{t-t'}}{t-(t'+1)} \\ 
        & = \frac{\ell - \Lambda(t' -1 +\alpha)}{t-t'} \geq \lambda_{i(t')},
    \end{align*}
    which holds with strict inequality if and only if $\lambda_{i(t')} = u_{i(t')} < \frac{\ell - \Lambda(t'-1 +\alpha)}{t-t'}$. By analogous argumentation and a straightforward induction we get that $$\lambda_{i(t'+1)} = \dots = \lambda_{i(t-1)}.$$ 

    If no city ``wraps around'', i.e., all $i(k)$ are distinct, then the statement follows directly by the order of the cities. In \Cref{thm:GreqCorrectness} we have argued that in order to wrap around, a city must be oversized, i.e., $\pi_i > \frac{1}{t}$. This concludes the proof. 
\end{proof}

\LemGreqStructure*

\begin{proof}
    \begin{enumerate}
        \item[(ii)] We start by showing that the claim holds for each of the intervals $[t'-1,t')$ for $t' \in [t]$. We do so by induction over $t' \in [t]$. For the induction start, we consider the interval $[-1,0)$ where $\Lambda$ is constant $0$ and therefore the statement holds trivially. 

        Now, consider some $[t'-1,t')$ and assume that $\Lambda$ is non-decreasing for the interval $[t'-2,t'-1)$. Let $\alpha,\beta \in [0,1)$ such that $\alpha < \beta$. Our goal is to show that $\Lambda(t'-1+\alpha) \leq \Lambda(t'-1 + \beta)$. For every $k\in [t'-1]_0$ let $i(k)$ be the unique city with $\lambda_{i(k)}(k+ \alpha)>0$ and let $j(k)$ be the unique city with $\lambda_{j(k)}(k+ \beta)>0$. We write $\lambda_{i(k)}$ and $\lambda_{j(k)}$ instead of $\lambda_{i(k)}(k+ \alpha)$ and $\lambda_{j(k)}(k+ \beta)>0$ from now on. We distinguish two cases: 
        
        \textbf{Case 1:} city $j(t'-1)$ receives their maximum amount of letter, i.e., $\lambda_{j(t'-1)} = u_{j(t'-1)}$. We claim that in this case $i(t'-1)$ also receives their maximum amount of letters. Formally, 
        $$u_{i(t'-1)} \leq u_{j(t'-1)} \leq \frac{\ell-\Lambda(t'-2+\beta)}{t-(t'-1)} \leq \frac{\ell-\Lambda(t'-2+\alpha)}{t-(t'-1)}$$ by induction hypothesis and therefore $\lambda_{i(t'-1)} = u_{i(t'-1)}$. Therefore, it follows directly from the induction hypothesis that 
        \begin{align*}
        \Lambda_{i(t'-1)} &= \Lambda(t'-2+\alpha) + u_{i(t'-1)} \\ &\leq \Lambda(t'-2+\beta) + u_{j(t'-1)} = \Lambda(t'-1+\beta).
        \end{align*}

        \textbf{Case 2:} city $j(t'-1)$ does not receive their maximum amount of letter, i.e., $\lambda_{j(t'-1)} \neq u_{j(t'-1)}$. Then,
        \begin{align*}
          \Lambda(t'-1+\beta) & = \Lambda(t'-2+\beta) + \frac{\ell - \Lambda(t'-2+\beta)}{t-(t'-1)}  \\ 
          & \geq \Lambda(t'-2+\alpha) + \frac{\ell - \Lambda(t'-2+\alpha)}{t-(t'-1)} \\ 
          & \geq \Lambda(t'-1+\alpha), 
        \end{align*}
        where the first inequality follows from the fact that the stated expression is non-decreasing in $\Lambda$ and $\Lambda$ is monotone on $[t'-2,t'-1)$ by induction hypothesis. The second inequality follows from the upper bound on $\lambda_{i(t'-1)}$ in the definition of \greq.

        \medskip

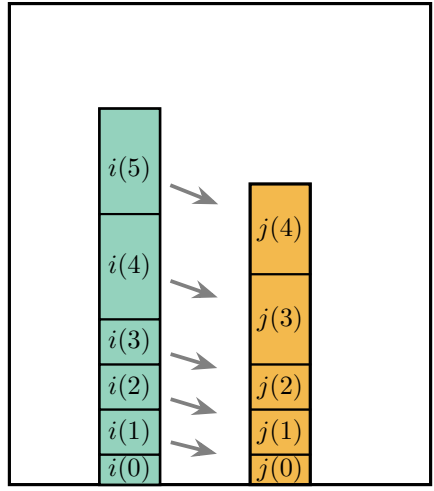
\begin{figure}[t]
    \centering
    
    \begin{tikzpicture}[scale=0.2]
    \draw[very thick] (0,0) rectangle (28,32);

    \draw[very thick] (6,0) rectangle (10,25); 

    \StackRects{4}{2,3,3,3}{lightgreen}{0}{6}
    \StackRects{4}{7,7}{lightgreen}{11}{6}
    \StackRects{4}{2,3,3}{lightyellow!70}{0}{16}
    \StackRects{4}{6,6}{lightyellow!70}{8}{16}

    \node(i0) at (8,1) {$i(0)$};
    \node(i1) at (8,3.5) {$i(1)$};
    \node(i2) at (8,6.5) {$i(2)$};
    \node(i3) at (8,9.5) {$i(3)$};
    \node(i4) at (8,14.5) {$i(4)$};
    \node(i5) at (8,21) {$i(5)$};

    \node(j0) at (18,1) {$j(0)$};
    \node(j1) at (18,3.5) {$j(1)$};
    \node(j2) at (18,6.5) {$j(2)$};
    \node(j3) at (18,11) {$j(3)$};
    \node(j4) at (18,17) {$j(4)$};

    \foreach \a/\b in {i1/j0,i2/j1,i3/j2,i4/j3,i5/j4}{
  \draw[myfatarrow] (\a) -- (\b);
}

    \draw[very thick] (16,0) rectangle (20,20); 
\end{tikzpicture}
    \caption{Situation in the second part of the proof of \Cref{lem:greqStructure}.}\label{fig:greqLambdaMonotone}
    \end{figure}

        We now move on to show that monotonicity also holds across intervals. Let $t' \in [t-1]_0$ and $\alpha,\beta \in [0,1)$ with $\alpha \leq \beta$. We aim to show that $$\Lambda(t'-1+\beta) \leq \Lambda(t'+\alpha).$$ Given this statement, the more general statement for arbitrary $x,y \in [0,t)$ follows immediately by induction.

        For every $k\in [t'-1]_0$ let $i(k)$ be the unique city with $\lambda_{i(k)}(k+ \alpha)>0$ and for every $k\in [t'-1]_0$ let $j(k)$ be the unique city with $\lambda_{j(k)}(k+ \beta)>0$. We refer to \Cref{fig:greqLambdaMonotone} for an illustration of the situation and simply write $\lambda_{i(k)}$ and $\lambda_{j(k)}$ from now on. We prove the statement by showing that $\lambda_{i(k)} \geq \lambda_{j(k-1)}$ for all $k \in [t']$. Note that $i(0) \leq j(0) \leq \dots \leq i(t'-1) \leq j(t'-1) \leq i(t')$. We distinguish two cases:

        \textbf{Case 1:} city $i(k)$ receives their maximum amount of letter, i.e., $\lambda_{i(k)} = u_{j(k)}$. Then, $$\lambda_{j(k-1)} \leq u_{j(k-1)} \leq u_{i(k)} = \lambda_{i(k)}.$$

         \textbf{Case 2:} city $i(k)$ does not receive their maximum amount of letter, i.e., $\lambda_{i(k)} - \frac{\ell - \Lambda(k-1+\alpha)}{t-k}$. Then, 
         \begin{align*}
             \lambda_{i(k)} &= \frac{\ell - \Lambda(k-1+\alpha)}{t-k} \\ 
             & \geq \frac{\ell - \Lambda(k-1+\beta)}{t-k} \\ 
             & \geq \lambda_{j(k)} \\ 
             & \geq \lambda_{j(k-1)}, 
         \end{align*}
        where the first inequality follows from the monotonicity of $\Lambda$ within the intervals, which we showed in the first part of the proof. The second inequality follows by the upper bound on $\lambda_{j(k)}$ enforced in the definition of \greq and the last inequality follows from the ex-post monotonicity which we showed in \Cref{thm:GreqMonotone}. 

        Therefore $$\Lambda(t' + \alpha) > \sum_{k=1}^{t'} \lambda_{i(k)} \geq \sum_{k=0}^{t'-1} \lambda_{j(k)} = \Lambda(t'-1 + \beta),$$ which concludes the proof. 
        \item[(i)]{Let $i \in [n]$ and $x \in [0,t)$ be such that \greq sets $\lambda_i(x) = \frac{\ell - \Lambda(x-1)}{t - \lfloor x \rfloor}$. Let $y \in [x,\lceil x \rceil)$ and let $j$ be the city that is active at point $y$ (which might be $x$ itself). Then, $u_i \leq u_j$ and $\Lambda(x-1) \leq \Lambda(y-1)$ by statement (ii). Therefore $$u_j \geq u_i \geq \frac{\ell - \Lambda(x-1)}{t - \lfloor x \rfloor} \geq \frac{\ell - \Lambda(y-1)}{t - \lfloor y \rfloor},$$ and therefore \greq selects the average at point $y$ as well. 
        
        \smallskip 
        For the second part of the statement we refer to the proof of \Cref{thm:GreqMonotone} which directly implies this statement.}
    \end{enumerate}
\end{proof}

\thmGreqTwoApprox*

    \begin{proof}[(second part)]
        With the help of the above claim, we can now apply the scaling operation over all intervals. We overload notation when defining for each interval $[\alpha,\beta)$ and $k \in [t-1]_0$, we define the city $i(k)$ to be defined the unique city with the property that $\lambda_i(k)(k+\alpha)>0$. 
    \begin{align*}
        \sum_{i \in [n]} w_i &= \sum_{i \in [n]}\frac{\pi_i \ell}{u_i} \\
        & > \sum_{i \in [n]} \frac{\int_{0}^t\lambda_i(x)dx}{u_i} \\ 
        & \geq \sum_{i \in [n]} \frac{\sum_{[\alpha,\beta) \in \mathcal{I}}\sum_{k = 0}^{t-2} \lambda_i (k+\alpha) (\beta - \alpha)}{u_i} \\ 
        & = \sum_{[\alpha,\beta) \in \mathcal{I}} (\beta-\alpha)\sum_{k = 0}^{t-2} \frac{\lambda_{i(k)}(k+\alpha) }{u_{i(k)}} \\ 
        & > \sum_{[\alpha,\beta) \in \mathcal{I}} (\beta-\alpha)(t-2) = t-2, 
    \end{align*}
    where the first inequality follows from the fact that \greq failed and therefore some cities received an area of less than their fair share. The second inequality follows by rewriting the integral and dropping the summand $k=t-1$. The equality follows from swapping the sums and the last inequality follows from the claim which we presented in the main part of the paper. As we have argued before, the above inequality implies that an optimal solution to \optimization requires at least a budget of $t-1$ whenever \greq fails. Hence, \greq is an additive $2$-approximation. 
\end{proof}

\thmGreqNotOneApprox* 

\begin{proof}

We prove this statement by providing an example (\Cref{fig:greedy_equa_no_1_appr}) for which a feasible solution with $t=3$ exists while \greq indeed requires $t=5$. Consider the instance with $\vec{\pi} = (\frac{2}{1000}, 9\times\frac{30}{1000}, 10\times\frac{33}{1000}, 5\times\frac{34}{1000}, \frac{228}{1000})$ with $\vec{u} = 1000 \pi$ and $\ell = 100$. There exists a solution with $t = 3$, visualized in \Cref{fig:greedy_equa_no_1_appr_opt}. However, \greq fails for $t = 4$, as illustrated in \Cref{fig:greedy_equa_no_1_appr_greq}.
\end{proof}

    \begin{figure*}[ht]
    \centering
    \begin{subfigure}{0.4\linewidth}
        \includegraphics[width=\linewidth]{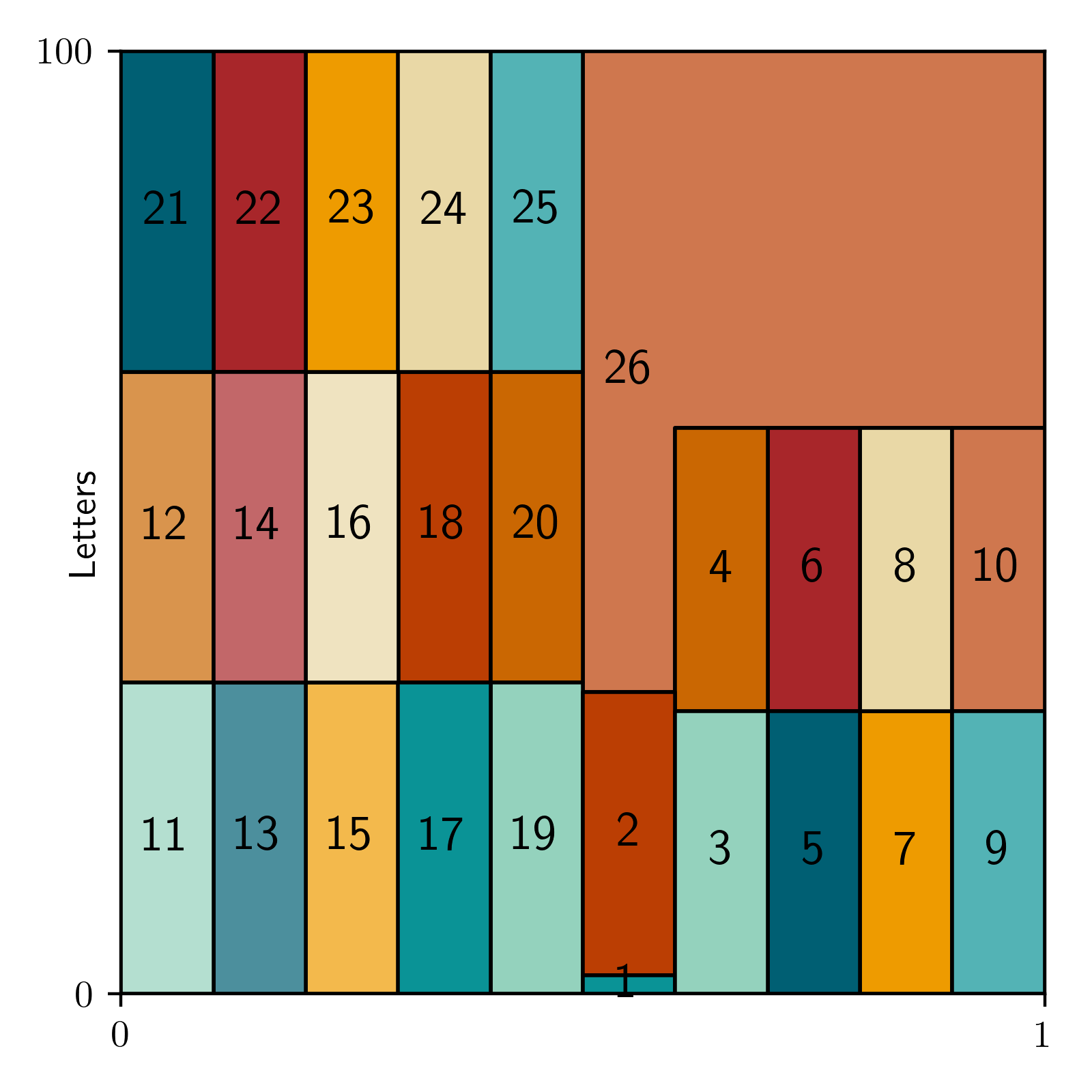}
        \caption{Optimal solution with $t = 3$}
        \label{fig:greedy_equa_no_1_appr_opt}
    \end{subfigure}
    \begin{subfigure}{0.4\linewidth}
        \includegraphics[width=\linewidth]{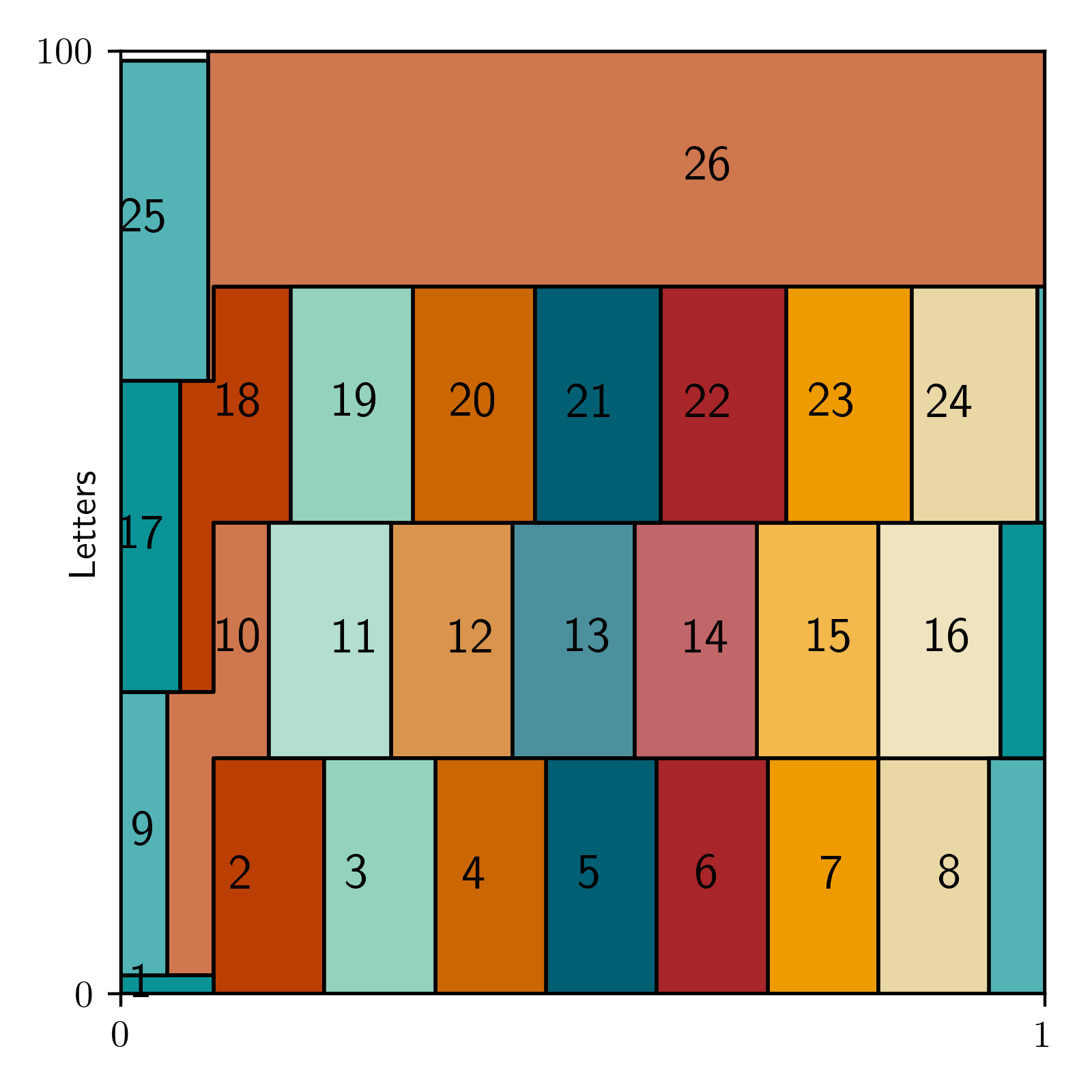}
        \caption{\greq failing for $t = 4$}
        \label{fig:greedy_equa_no_1_appr_greq}
    \end{subfigure}
    \caption{Example instance showing that \greq is not a 1-approximation.}
    \label{fig:greedy_equa_no_1_appr}
\end{figure*}

\section{Missing Proofs and Details of \Cref{sec:proportional}} \label{app:proportional}

\thmBucketsNoApprox*

    \begin{figure}
    \centering
    \begin{subfigure}{0.3\linewidth}
        \includegraphics[width=\linewidth]{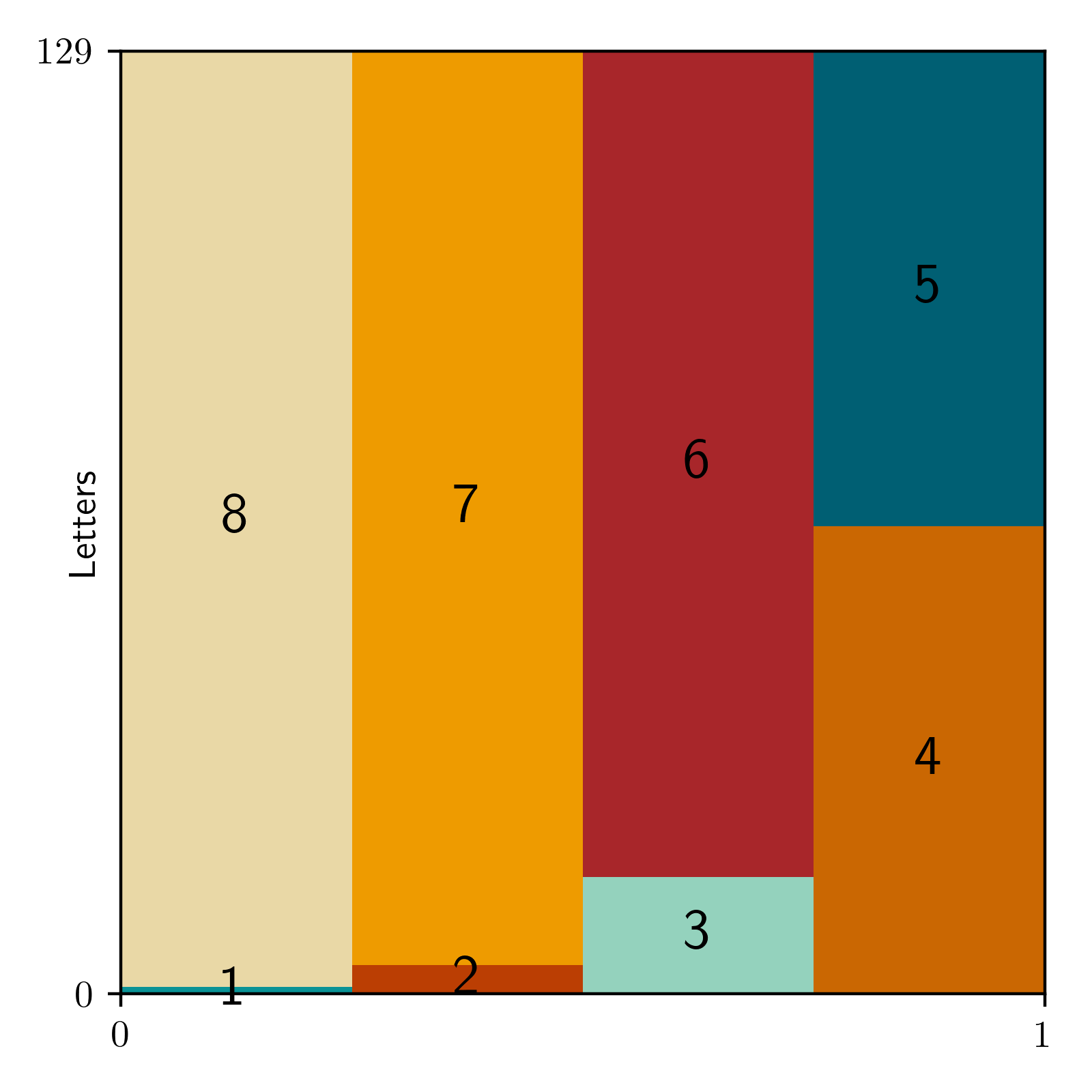}
        \caption{Optimal solution with $t=2$}
        \label{fig:bucket_bad_optimal}
    \end{subfigure}
    \begin{subfigure}{0.3\linewidth}
        \includegraphics[width=\linewidth]{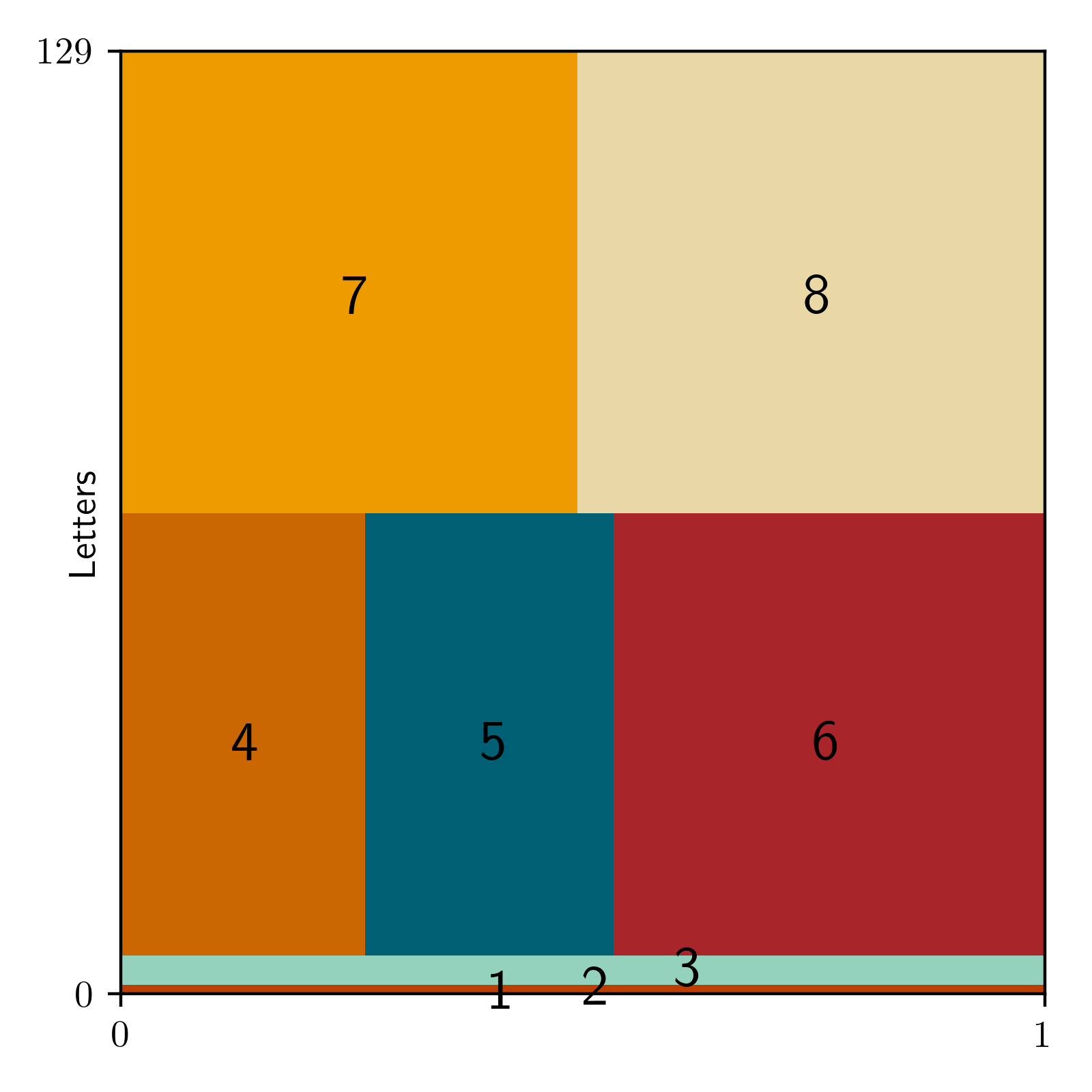}
        \caption{\buckets for $t = 5$}
        \label{fig:bucket_bad_bucket}
    \end{subfigure}
    \caption{Example instance showing that \buckets is not a 2-approximation. $c = 1$, $z = 4$, $\pi = 516$, $\vec{\pi} = (\frac{1}{516}, \frac{4}{516}, \frac{16}{516}, \frac{64}{516}, \frac{65}{516}, \frac{113}{516}, \frac{125}{516}, \frac{128}{516})$, $\vec{u} = 516\vec{\pi} = (1, 4, 16, 64, 65, 113, 125, 128)$, $\ell = 129$}
    \label{fig:bucket_bad}
\end{figure}

\begin{proof}
    Let $c \ge 1$ be a constant. We construct an instance which is feasible for $t = 2$, for which the bucket approach fails for any $t \le 2+c$. Define $z = c+3$ and $\pi = 2z^z+z$. Consider an instance with $n = 2z$ cities and $\ell = \frac{\pi}{z}$ letters, where city $i$ has size $\pi_i = \frac{z^{i-1}}{\pi}$ for $1 \le i \le \frac{n}{2}$ and $\pi_i = \frac{\frac{\pi}{z} - \pi_{n-i-1}}{\pi}$ for $z+1 \le i \le 2z$. Each cities maximum number of letters is given as $u_i = \pi_i \pi$. 
    The optimal solution achieves $t = 2$ by choosing any pair of cities $k$ and $n-k$ for $k \in [w]$ uniformly at random and assigning them their maximum number of letters. More formally, this solution is given as $\lambda_i(x) = u_i$ at positions $x \in [\frac{i-1}{z}, \frac{i}{z})$ for $1 \le i \le z$ and $\lambda_i(x) = u_i$ at positions $x \in [\frac{n+z-i}{z}, \frac{n+z-i+1}{z})$ for $z+1 \le i \le 2z$. \Cref{fig:bucket_bad_optimal} shows the optimal solution for $c = 2$.

    However, \buckets fails for any $t \le c + 2$, independent of the choice of target letters. This is because any city $1 \le i \le z-1$ will end up in a single bucket, i.e. in the algorithm we have $i^\star = i = j$ in the first $\frac{n}{2}-1$ iterations, since $\sum_{k = i}^{i+1} \pi_k \ell = (z^{i-1} + z^i) \frac{\pi}{z} = (z^{i-2} + z^{i-1}) \pi > z^{i-1} \pi = u_i$. \Cref{fig:bucket_bad_optimal} shows the probablity distribution constructed by \buckets solution for $c = 2$ and $t = 5$. Thus, for any $t \le z-1 = c+2$, \buckets will have $i < n$ after the loop and fails.
\end{proof}

\subsection{Column Generation}

We formulate finding the most proportional probability distribution as a linear program with a variable $x_a$ for each $t$-bounded allocation $a \in A_t$, representing its probability.
\begin{alignat*}{5}
 & \text{minimize} & \sum_{a \in A_t} x_a \varphi(a)& \\
 & \text{subject to} \quad& \sum_{a \in A_t} x_a & = 1, &&& (y)\\
                 && \sum_{a \in A_t} x_a a_i & \ge \pi_i \ell \quad & \text{for } i & \in [n], \qquad & (y_i)\\
                 && x_a & \in \mathbb{R} & \text{for } a & \in A_t,\\
                 && x_a & \ge 0, & \text{for } a & \in A_t.\\
\end{alignat*}

To approximate or find an optimal solution for the LP, we first formulate the dual LP with variables $y$ and $y_i$ for each city $i \in [n]$.
\begin{alignat*}{5}
 & \text{maximize} & y + \sum_{i \in [n]} \pi_i \ell y_i& \\
 & \text{subject to} \quad& y + \sum_{i \in [n]} a_i y_i & \le \varphi(a) \quad & \text{for } a & \in A_t, \qquad & (x_a)\\
                 && y & \in \mathbb{R}, &  & \\
                 && y_i & \ge 0, & \text{for } i & \in [n].\\
\end{alignat*}

We start with any ex-ante fair probability distribution over $A_t$ and solve the above LP with only the constraints corresponding to allocations $a$ with positive probability. We then iteratively compute an allocation that we can add to decrease the disproportionality measure.

Given a solution $y$, $y_i$ for a (partial version of) the dual above, we define the separation oracle as a mixed integer program with variables $a_i$ and auxiliary binary variables $z_i$ for each city $i \in [n]$:
\begin{alignat*}{5}
 & \text{maximize} & y + \sum_{i \in [n]} a_i y_i &- \sum_{i \in [n]} \Big\lvert \frac{a_i}{\tau_i} - 1 \Big\rvert& \\
 & \text{subject to} \quad& \sum_{i \in [n]} a_i & = \ell,\\
                 && \sum_{i \in [n]} z_i & \le t,\\
                 && a_i & \le u_i, & \text{for } i & \in [n],\\
                 && a_i & \le z_i \ell \quad & \text{for } i & \in [n],\\
                 && z_i & \in \{0, 1\}, & \text{for } i & \in [n],\\
                 && a_i & \ge 0, & \text{for } i & \in [n].\\
\end{alignat*}
Note that the objective contains an absolute value, which can be linearized using standard ILP techniques. We implemented this separation oracle using Gurobi 12 (under academic license).

\subsection{Bucket Approach}

Below we provide a formal description of \buckets which creates in each while loop a bucket containing cities $i$ to $i^*$. Note that the second part of the pseudocode (starting from the for loop) is only there to draw a picture similar to those of \greq. Each bucket is filled, until its height would be higher than the letter upper bound of its smallest city or the bucket target width would be larger than 1. Note, that the stop condition in the algorithm implicitly rescales the remaining municipalities target width to a total of $t-j+1$

\begin{algorithm}
    \begin{algorithmic}
        \Procedure{Buckets}{$\Vec{\pi}, \Vec{u}, t, \vec{\tau}$}
        \State $i \gets 1$, $j \gets 1$
        \While{$j \le t$ and $i \le n$}
            \State \parbox[t]{\dimexpr\linewidth-\algorithmicindent}{%
               $i^\star \gets \max\Big\{i' \in [n] \,\, \Big| \, \sum\limits_{k = i}^{i'} \pi_k \ell \le u_i \text{ and } (t-j+1) \sum\limits_{k = i}^{i'} \tau_k \le \sum\limits_{k = i}^{n} \tau_k \Big\}$ 
            }
            \State $x \gets j-1$
            \For{$k = i, \dots, i^\star$}
                \State $h \gets \sum_{k = i}^{i^\star} \pi_k \ell$
                \State $\lambda_k(z) \leftarrow h$ for $z \in [x, x+ \frac{\pi_k \ell}{h})$
                \State $x \gets x + \frac{\pi_k \ell}{h}$
            \EndFor
            \State $j \gets j+1$, $i \gets i^\star + 1$
        \EndWhile
        \If{$i < n$}
            \Return ``fail''
        \EndIf
        \EndProcedure
    \end{algorithmic} 
\end{algorithm}

\begin{figure}
    \centering
    \includegraphics[width=0.4\linewidth]{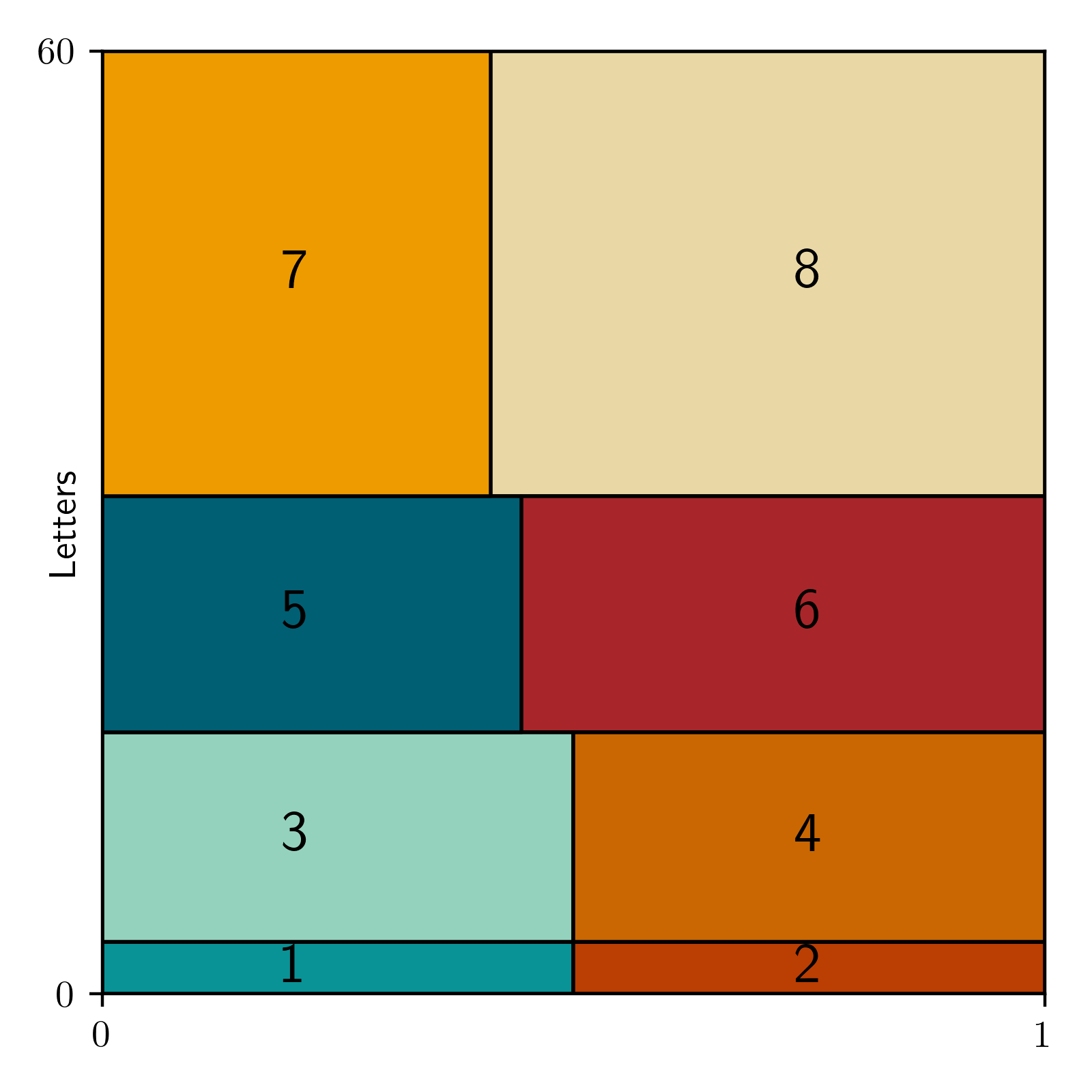}
    \caption{Probability distribution constructed by \buckets for \Cref{ex:running_example}, with $t = 4$ and square root target letters.}
    \label{fig:running_example_bucket}
\end{figure}

\section{Missing Details of \Cref{sec:federal}} \label{app:federal}

In this section we explain in more detail how we distribute the total number of letters $\ell$ and the number of cities to be selected $t=80$ over the $42$ groups of cities. For each group, we want a number of letters $\ell_G$ and a number of cities to contact $t_G$ to apply one of the methods presented in \Cref{sec:monotone,sec:proportional}. Given that the total number of letters is large ($\ell = 20\,000$), we can apply randomized rounding to these fair shares without introducing significant deviations.

The distribution of $t$ over the groups on the other hand is not determined by ex-ante fairness. However, there are upper and lower bounds on the number of cities we select from each group. For each group, $t_G$ must be at least the minimum number $t_G^{\min}$ that the corresponding algorithm requires for the group (note, that this will always be at least 1). On the other hand, $t_G$ can not be larger than the number of cities $n_g$ in a group, which is mostly relevant for the groups with only one or two cities (compare \Cref{tab:apportionment}). Under these constraints, we then try to apportion $t$ in a way, such that similarly sized cities receive similar number of letters if selected, across groups. First, we compute the \textit{global} target letters $\tau_i^{\text{glob}}$ for each city with respect to the total number of letters $\ell = 20\,000$ and $t = 80$. Since we will have to recompute these targets within each group $G$, depending on the value we assign to $t_G$ and we want the recomputed targets to be close to the global ones, we try to keep $t_G$ close to the sum the sum of (global) target widths within a group, limiting the amount of rescaling required. To achieve this, define the parametrized target width $t_G^\gamma$ of a group as the rounded sum of (global) target widths of its cities, bounded by the aforementioned bounds.  
\begin{equation}\label{eq:apportionment}
    t_G^\gamma = \max \left(\min \left(\left\lceil \gamma \sum_{i \in N_G} \frac{\pi_i \ell}{t_G^{\text{glob}}}\right\rceil, n_g  \right), t_G^{\min} \right).
\end{equation}
Then, find a value $\gamma$, such that $\sum_{G \in \mathcal{G}} t_G^\gamma  = 80$ and assign $t_G = t_G^\gamma$ to each group $G$. This process can be described as running Adam's apportionment method \citep{BaYo01a} on the global target widths of the groups, while enforcing their upper and lower bounds.
In principle, we could use any rounding function in \Cref{eq:apportionment}, however rounding up seems like a natural choice, since it ensures that each group receives at least one, slightly reducing the bias that is introduced by the values of $t_G^{\min}$.

For the experimental results and an evaluation of this method, see \Cref{app:all_results}.

\section{Experimental Results}\label{app:all_results}

In this section we present the results of our experiments in more detail. 
All experiments were run on a machine with specifications as indicated in \Cref{tab:specs}. The total running time of \greq and \buckets was below one minute, while \colgen took several hours to compute.

\subsection{Results Within Groups}
We compute the results of \greq, \colgen and \buckets for each of the $42$ groups of cities and present the results in Figures \ref{fig:results_Baden-Württemberg_Large} to \ref{fig:results_Thüringen_Small_greedy_equal}.

Note, that since the choice of $t_G$ depends on the method and target letter function, $t_G$ can differ between methods. 
In addition to the output probability distributions, we also visualize how well each method aligns with the target letters. We refer to the explanations in \Cref{sec:proportional} on how to read these figures. We use the square-root target function $f(x) = \sqrt{x}$ for \colgen and \buckets, and also evaluate \greq with respect to the target function $f(x) = \ell$, even though it does not take any targets as input.

For the groups of medium and large cities \greq mostly achieves its goal of assigning all cities the same number of letters.
For the groups of small cities this is not possible, thus some of the larger cities can receive variable numbers of letters.
Even though the assumption from \Cref{thm:GreqMonotone} on city sizes is not met in all groups, there are no ex-post monotonicity violations in any of the probability distributions constructed by \greq.

\colgen iteratively computes the probability distribution, solving a mixed integer program in each iteration. In practice, the number of iterations needed until an optimal solution is found is in the hundreds for (non-trivial) groups of medium and large cities and in the thousands for the small ones. While perfectly matching the targets is not always possible (see for example \Cref{fig:results_Nordrhein-Westfalen_Large,fig:results_Sachsen-Anhalt_Small}), \colgen comes very close for most of the groups (e.g., \Cref{fig:results_Bayern_Small,fig:results_Nordrhein-Westfalen_Medium}). On the downside, \colgen often violates ex-post monotonicity and the number of letters sent to similarly sized cities can fluctuate, resulting in the spikes e.g. in \Cref{fig:results_Sachsen-Anhalt_Small}. The former could potentially be fixed, by restricting the separation oracle to monotone allocations, while the latter might be reduced by penalizing larger deviations more in the objective function.

\buckets groups cities of equal size and assigns them the same number of letters upon selection, trivially satisfying the binary outcome property. This often works well to approximate the target letters, as can be seen for example in \Cref{fig:results_Bayern_Small,fig:results_Hessen_Small}. The lower the value of $t_G$ the coarser the approximation will get. \buckets works less well for groups which have municipalities with a very low letter upper bound $u_i$, as this forces all municipalities in the same bucket to receive the same number of letters upon selection (see for example \Cref{fig:results_Thüringen_Small,fig:results_Baden-Württemberg_Small}). Again, this issue worsens for low values of $t_G$. Similarly to \greq, even though ex-post monotonicity is not theoretically guaranteed, there are no monotonicity violations on our data. An additional advantage of the probability distributions returned by \buckets is that each bucket can be sampled independently. This reduces correlations and reshuffles the combinations of municipalities selected across multiple samplings.

\begin{table}[]
    \centering
    \begin{tabular}{>{\raggedright\arraybackslash}p{4.5cm}>{\raggedright\arraybackslash}p{9cm}}
        \toprule
        Hardware Model & HP ZBook Power 15.6 inch G9 Mobile Workstation PC \\
        Memory & 16.0 GiB \\
        Processor & 12th Gen Intel\textsuperscript{\textregistered} Core\texttrademark{} i7-12700H × 20 \\
        Graphics & Mesa Intel\textsuperscript{\textregistered} Graphics (ADL GT2) \\
        Disk Capacity & 2.5 TB \\
        OS Name & Ubuntu 22.04.4 LTS \\
        \bottomrule
    \end{tabular}
    \caption{Computer specifications}
    \label{tab:specs}
\end{table}

\subsection{Results Across Groups}
Apart from the sampling methods for each group, we also proposed a way to initially assign the number of cities to be selected from each group, with the aim of keeping the target letters for equally sized cities close across groups. \Cref{fig:global_local_targets_colgen_small,fig:global_local_targets_colgen_all,fig:global_local_targets_bucket_small,fig:global_local_targets_bucket_all} compare the global and local target functions and visualize how well \colgen and \buckets meet these targets.
\Cref{fig:global_local_targets_colgen_small,fig:global_local_targets_bucket_small} restrict their view to the targets for small cities. The global targets are drawn as a thick black line and the local targets of the groups are shown as thinner, colored lines. Additionally, each scatter point represents one city with their size on the x-axis and the expected number of letters they receive if selected on the y-axis. We can see that most local targets are fairly close to each other, with some notable exceptions. \colgen is generally quite close to the targets (as we have seen in the previous section), while \buckets approximates the targets function through a series of horizontal lines. 

\Cref{fig:global_local_targets_colgen_all,fig:global_local_targets_bucket_all} compare the local target functions of all groups (note the log scales). Here, we observe a fairly large number of groups with constant local targets much lower than the global ones. These are exactly the groups that were assigned $t_G = 1$ in the apportionment step. When only allowed to choose one city, there is only one way to sample: select a city with probability proportional to size and assign it the full number of letters $\ell_G$. The local target function reflects this and since these local targets are essentially independent of the target function $f$, the global and local targets can be arbitrarily far apart. This affects mostly the groups of medium and large, which is due to the target function and the resulting apportionment, which implicitly favors groups with small cities. An extreme example of this are the large cities of Saarland (compare \Cref{tab:apportionment}). The cities have a joint global target width of $0.180371$, but they must receive at least $t_G = 1$, resulting in local targets much lower than the global targets.

Note, that this is less an effect of the apportionment method and more of the choice of target function. Choosing a target function with slower growth or lowering the thresholds that define medium and large cities, could potentially increase the target width of these groups and consequently bring the local targets closer to the global ones.

\begin{table}
    \centering
    \caption{Results of the apportionment. $|G|$ is the number of cities in the group, $t_G^{\textsc{GE}}$, $t_G^{\textsc{CG}}$ and $t_G^{\textsc{B}}$ are the values of $t_G$ for \greq, \colgen and \buckets, respectively.}
    \begin{tabular}{lrrrrrrr}
\toprule
Group & Population & $\pi_i$ & $|G|$ & $\ell_G$ & $t_G^{\textsc{GE}}$ & $t_G^{\textsc{CG}}$ & $t_G^{\textsc{B}}$ \\
\midrule
Baden-Württemberg (Large) & 2158197 & 0.0256 & 9 & 511 & 2 & 1 & 1 \\
Baden-Württemberg (Medium) & 3619302 & 0.0429 & 98 & 858 & 3 & 2 & 2 \\
Baden-Württemberg (Small) & 5502758 & 0.0652 & 994 & 1305 & 4 & 6 & 6 \\
Bayern (Large) & 3010827 & 0.0357 & 8 & 714 & 2 & 1 & 1 \\
Bayern (Medium) & 2326541 & 0.0276 & 67 & 551 & 2 & 1 & 1 \\
Bayern (Small) & 8032025 & 0.0952 & 1981 & 1905 & 6 & 10 & 10 \\
Berlin (Large) & 3755251 & 0.0445 & 1 & 891 & 1 & 1 & 1 \\
Brandenburg (Large) & 185750 & 0.0022 & 1 & 44 & 1 & 1 & 1 \\
Brandenburg (Medium) & 940363 & 0.0111 & 27 & 223 & 1 & 1 & 1 \\
Brandenburg (Small) & 1447022 & 0.0172 & 385 & 343 & 2 & 2 & 2 \\
Bremen (Large) & 684864 & 0.0081 & 2 & 162 & 1 & 1 & 1 \\
Hamburg (Large) & 1892122 & 0.0224 & 1 & 449 & 1 & 1 & 1 \\
Hessen (Large) & 1658130 & 0.0197 & 6 & 393 & 2 & 1 & 1 \\
Hessen (Medium) & 1805347 & 0.0214 & 53 & 428 & 2 & 1 & 1 \\
Hessen (Small) & 2927883 & 0.0347 & 362 & 694 & 2 & 3 & 3 \\
Mecklenburg-Vorpommern (Large) & 209920 & 0.0025 & 1 & 50 & 1 & 1 & 1 \\
Mecklenburg-Vorpommern (Medium) & 396680 & 0.0047 & 8 & 94 & 1 & 1 & 1 \\
Mecklenburg-Vorpommern (Small) & 1021778 & 0.0121 & 716 & 242 & 2 & 2 & 2 \\
Niedersachsen (Large) & 1588358 & 0.0188 & 8 & 377 & 2 & 1 & 1 \\
Niedersachsen (Medium) & 2974786 & 0.0353 & 86 & 705 & 2 & 2 & 2 \\
Niedersachsen (Small) & 3577098 & 0.0424 & 847 & 848 & 3 & 4 & 4 \\
Nordrhein-Westfalen (Large) & 8438299 & 0.1000 & 30 & 2001 & 6 & 2 & 2 \\
Nordrhein-Westfalen (Medium) & 7369437 & 0.0874 & 182 & 1747 & 5 & 3 & 3 \\
Nordrhein-Westfalen (Small) & 2331380 & 0.0276 & 184 & 553 & 2 & 2 & 2 \\
Rheinland-Pfalz (Large) & 723508 & 0.0086 & 5 & 172 & 1 & 1 & 1 \\
Rheinland-Pfalz (Medium) & 690561 & 0.0082 & 17 & 163 & 1 & 1 & 1 \\
Rheinland-Pfalz (Small) & 2745081 & 0.0325 & 2279 & 651 & 3 & 6 & 6 \\
Saarland (Large) & 181959 & 0.0022 & 1 & 43 & 1 & 1 & 1 \\
Saarland (Medium) & 275178 & 0.0033 & 8 & 65 & 1 & 1 & 1 \\
Saarland (Small) & 535529 & 0.0063 & 43 & 127 & 1 & 1 & 1 \\
Sachsen (Large) & 1427967 & 0.0169 & 3 & 338 & 1 & 1 & 1 \\
Sachsen (Medium) & 723183 & 0.0086 & 21 & 172 & 1 & 1 & 1 \\
Sachsen (Small) & 1935002 & 0.0229 & 394 & 458 & 2 & 3 & 3 \\
Sachsen-Anhalt (Large) & 481447 & 0.0057 & 2 & 114 & 1 & 1 & 1 \\
Sachsen-Anhalt (Medium) & 708172 & 0.0084 & 22 & 168 & 1 & 1 & 1 \\
Sachsen-Anhalt (Small) & 997024 & 0.0118 & 194 & 237 & 1 & 2 & 1 \\
Schleswig-Holstein (Large) & 465812 & 0.0055 & 2 & 110 & 1 & 1 & 1 \\
Schleswig-Holstein (Medium) & 753307 & 0.0089 & 20 & 179 & 1 & 1 & 1 \\
Schleswig-Holstein (Small) & 1734151 & 0.0206 & 1082 & 411 & 3 & 3 & 4 \\
Thüringen (Large) & 326160 & 0.0039 & 2 & 77 & 1 & 1 & 1 \\
Thüringen (Medium) & 692661 & 0.0082 & 20 & 164 & 1 & 1 & 1 \\
Thüringen (Small) & 1108025 & 0.0131 & 583 & 263 & 2 & 2 & 2 \\
\bottomrule
\end{tabular}

    \label{tab:apportionment}
\end{table}

\subsection{Other Data}

While we evaluated our methods for the large nation-wide citizen assemblies in Germany, there are other similar projects to which they could be applied. As an example, we applied our methods to the whole state of Baden-Württemberg with $t = 11$ and $\ell = 2674$. The results can be seen in \Cref{fig:results_Baden-Württemberg_All}.

\clearpage

\begin{figure}
    \centering
    \begin{subfigure}{0.32\textwidth}
        \includegraphics[draft=\draft, width=\linewidth]{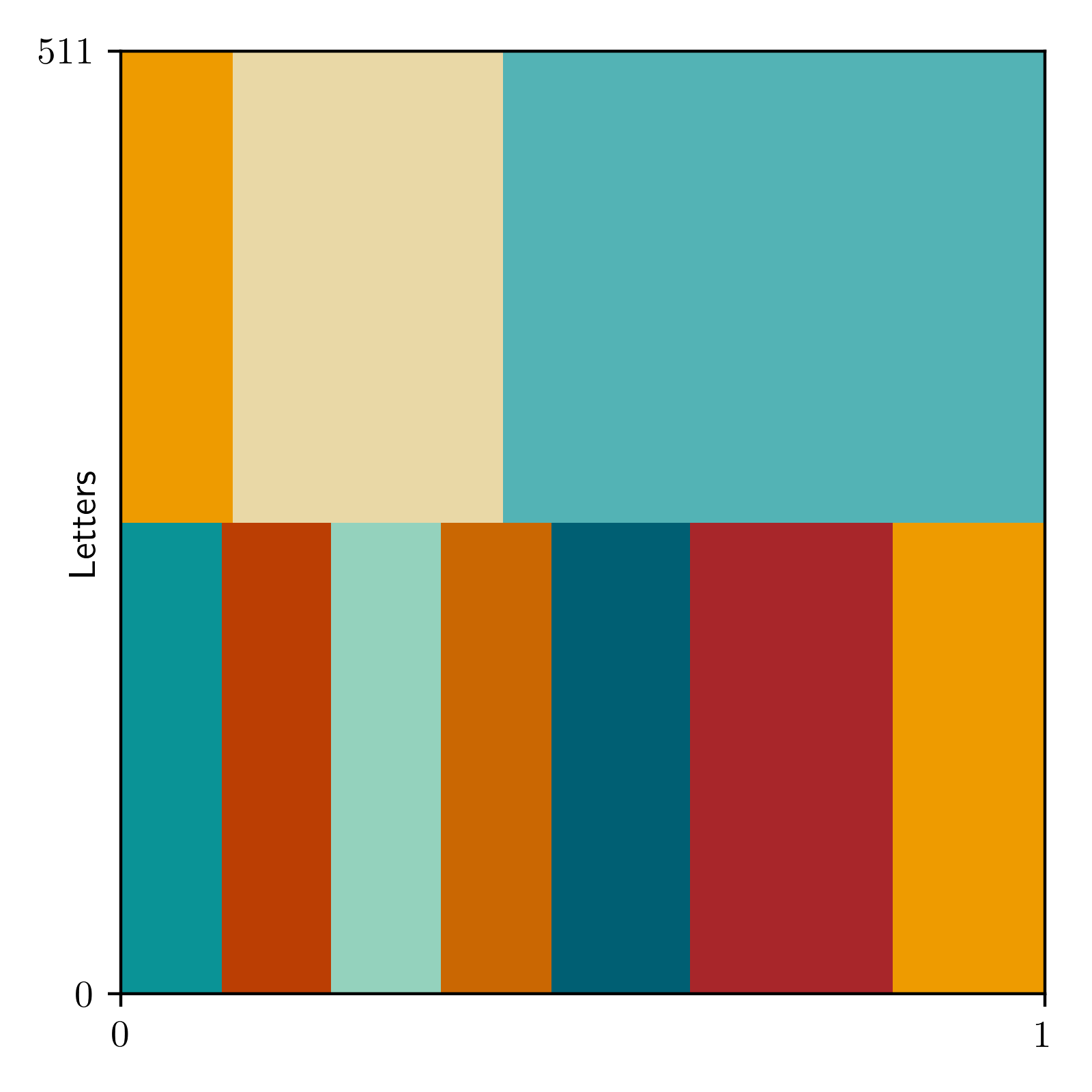}
        \includegraphics[draft=\draft, width=\linewidth]{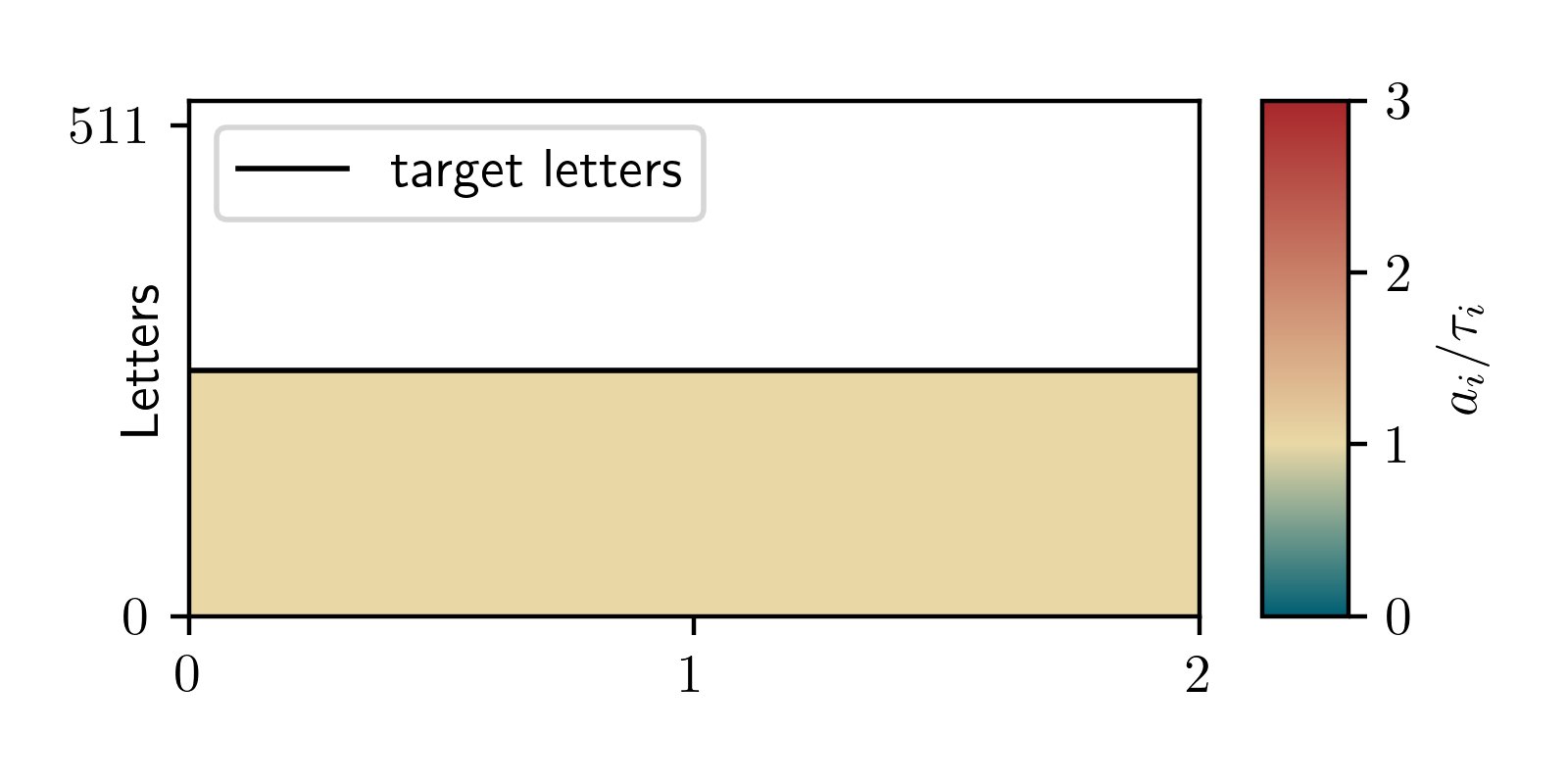}
        \caption{\greq ($t_G = 2$)}
        \label{fig:results_Baden-Württemberg_Large_greedy_equal}
    \end{subfigure}
    \begin{subfigure}{0.32\textwidth}
        \includegraphics[draft=\draft, width=\linewidth]{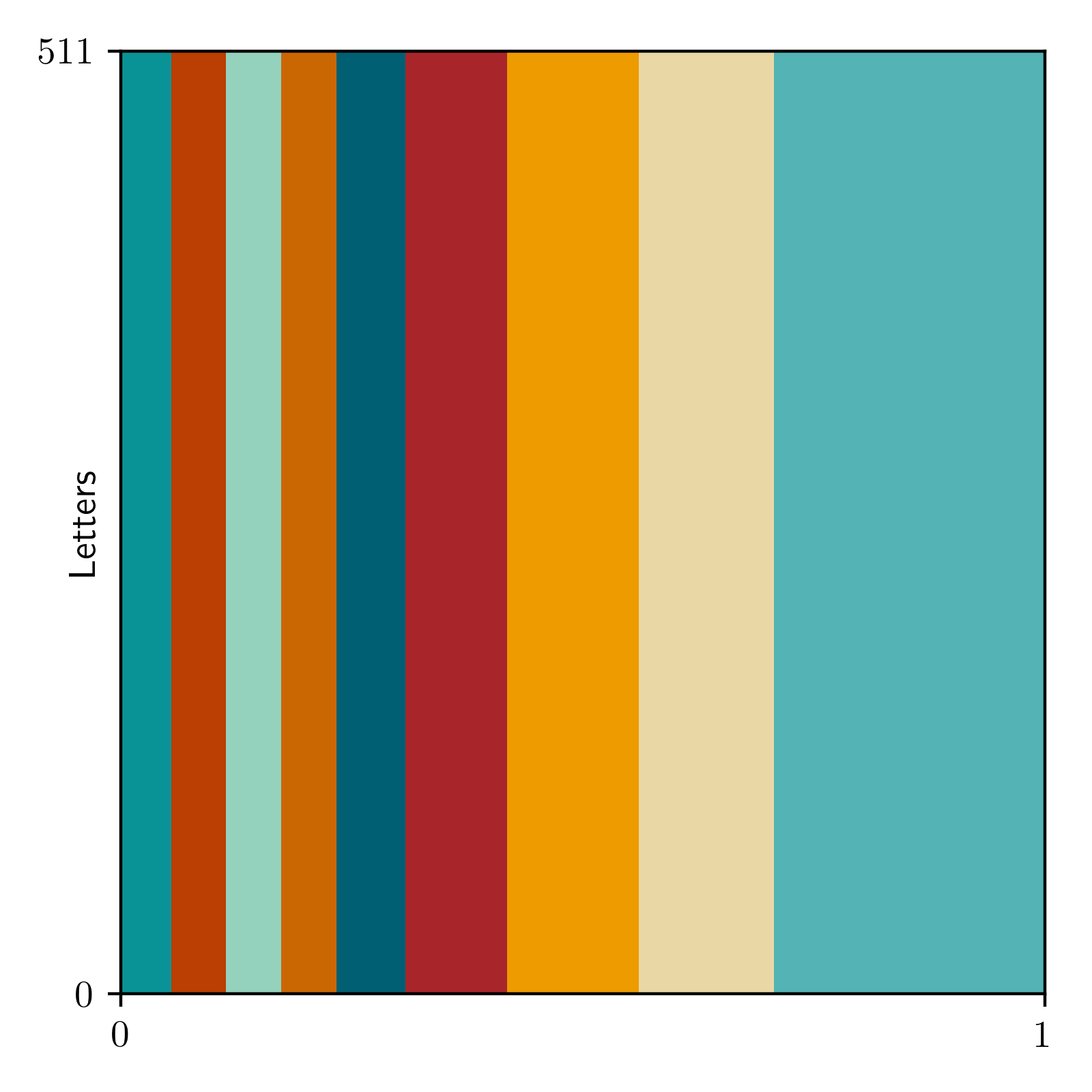}
        \includegraphics[draft=\draft, width=\linewidth]{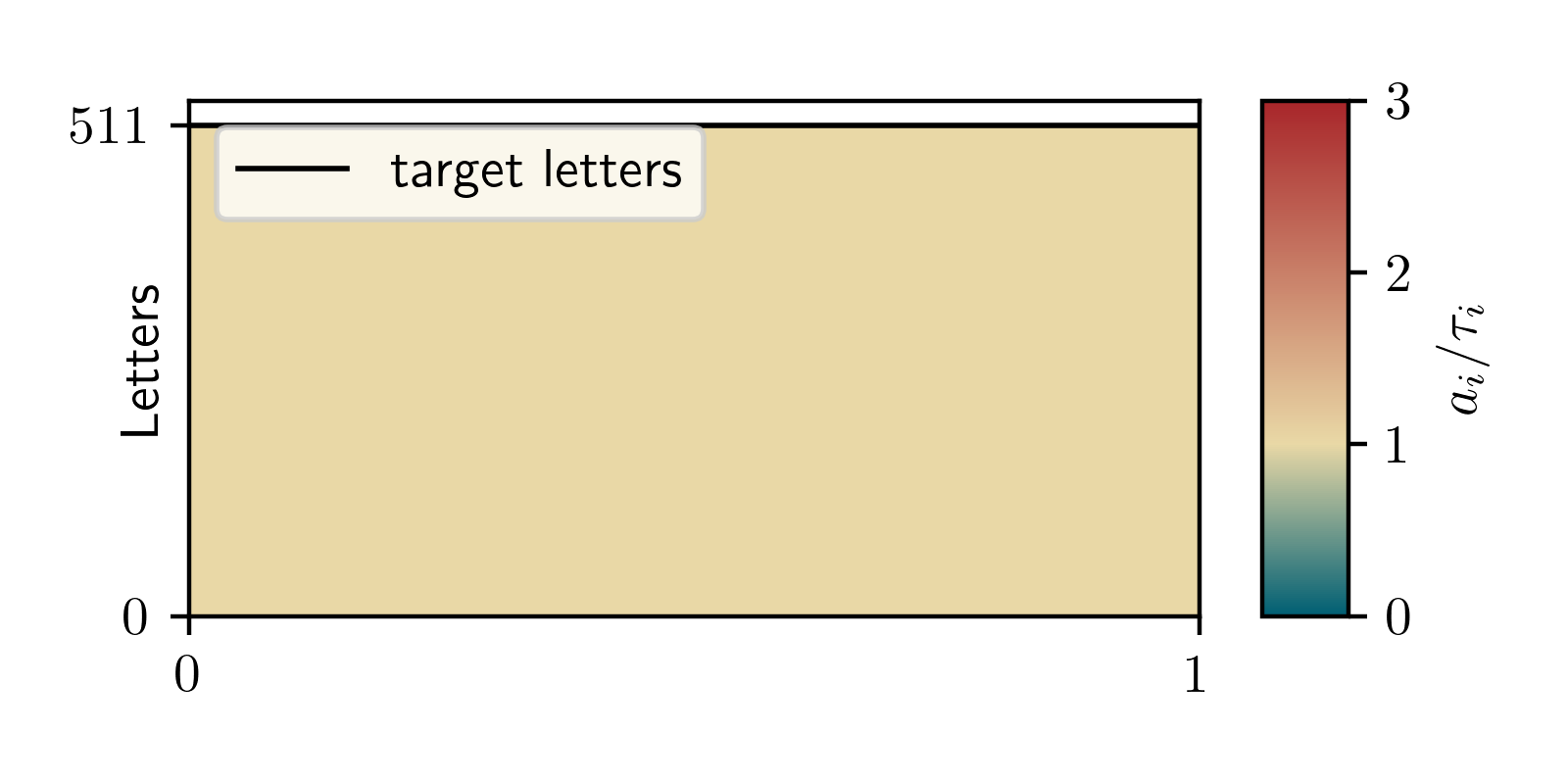}
        \caption{\colgen ($t_G\!=\!1$)}
        \label{fig:results_Baden-Württemberg_Large_column_generation}
    \end{subfigure}
    \begin{subfigure}{0.32\textwidth}
        \includegraphics[draft=\draft, width=\linewidth]{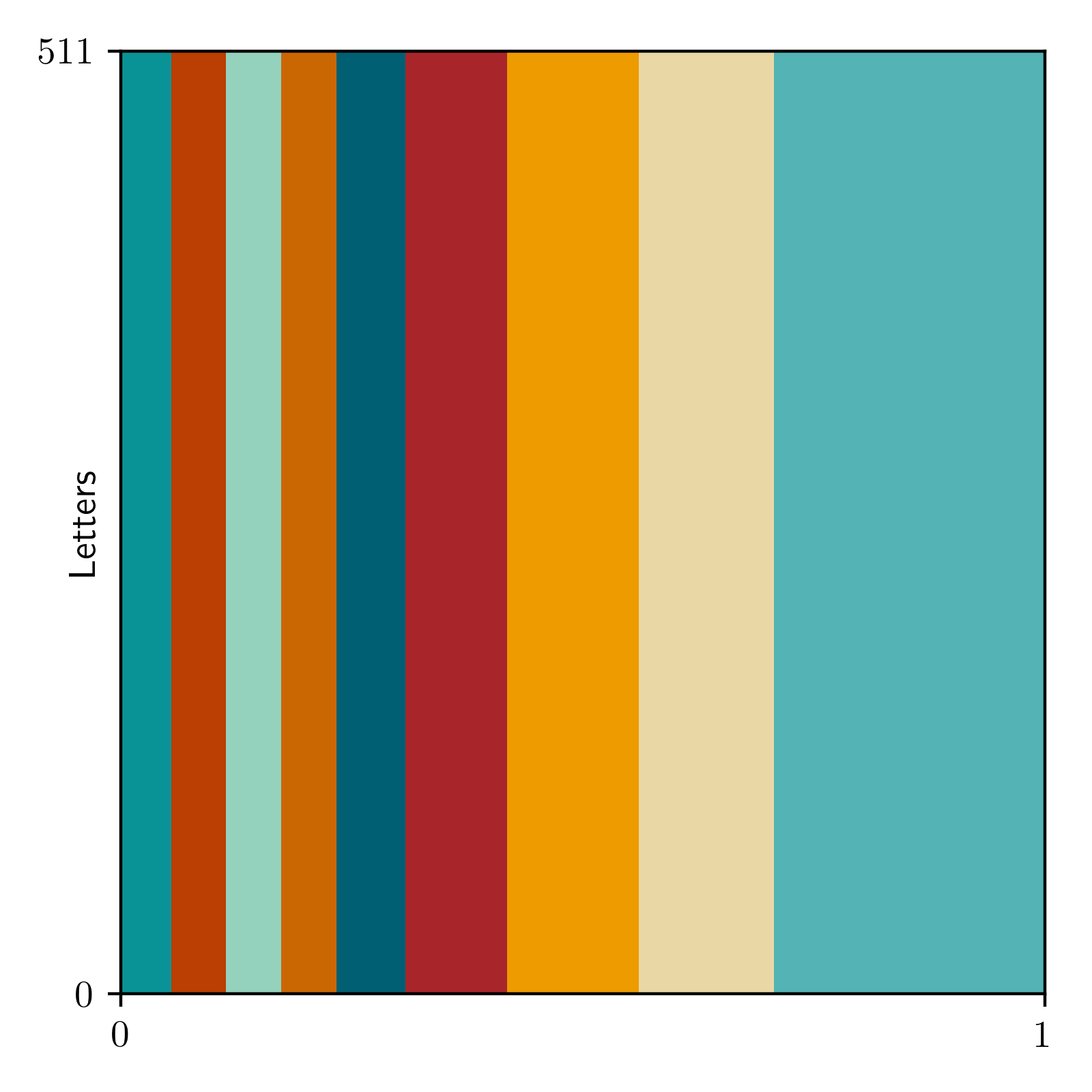}
        \includegraphics[draft=\draft, width=\linewidth]{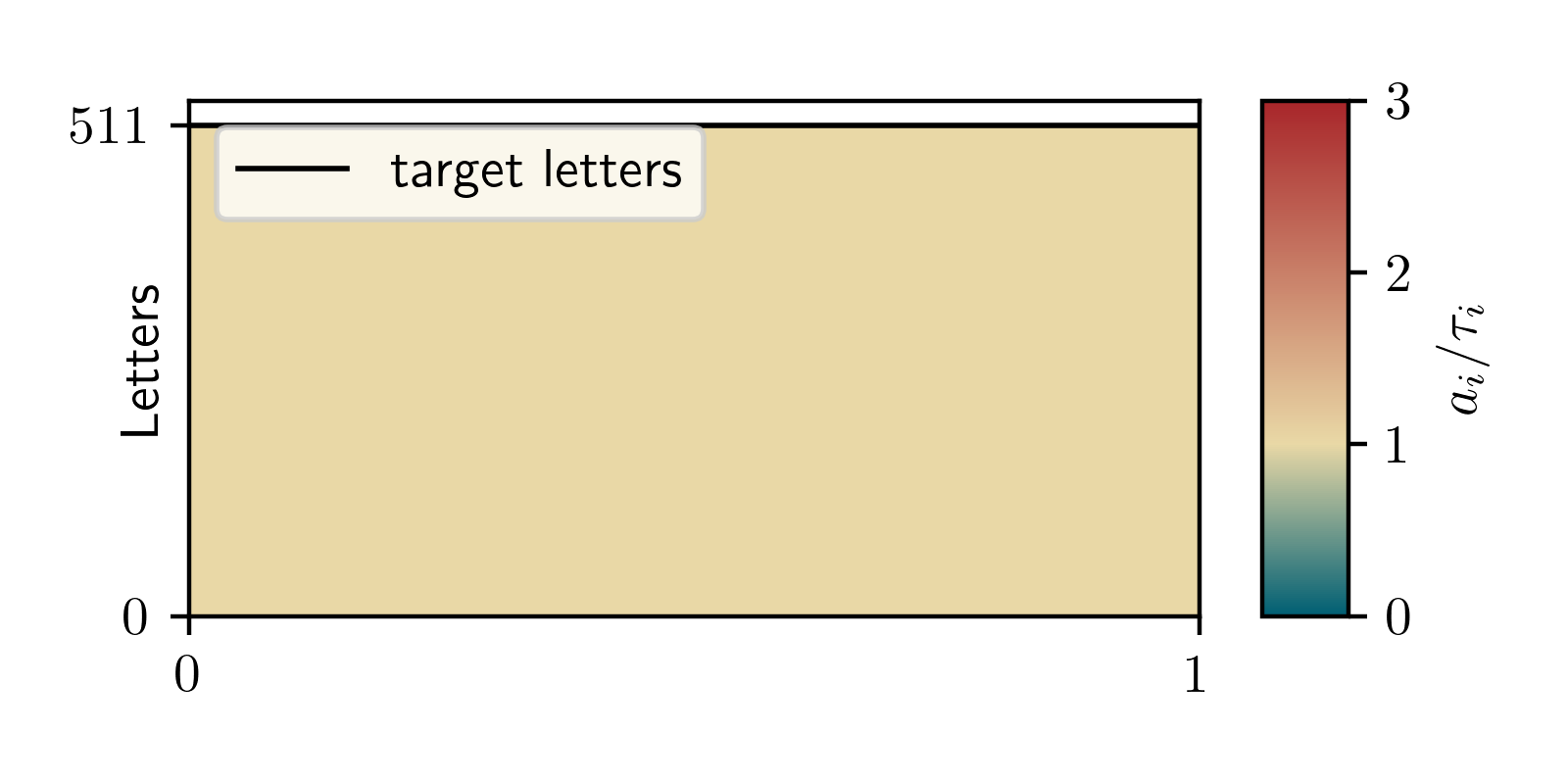}
        \caption{\buckets ($t_G = 1$)}
        \label{fig:results_Baden-Württemberg_Large_greedy_bucket_fill}
    \end{subfigure}
    \caption{Large municipalities of Baden-Württemberg ($\ell_G = 511$)}
    \label{fig:results_Baden-Württemberg_Large}
\end{figure} 

\begin{figure}
    \centering
    \begin{subfigure}{0.32\textwidth}
        \includegraphics[draft=\draft, width=\linewidth]{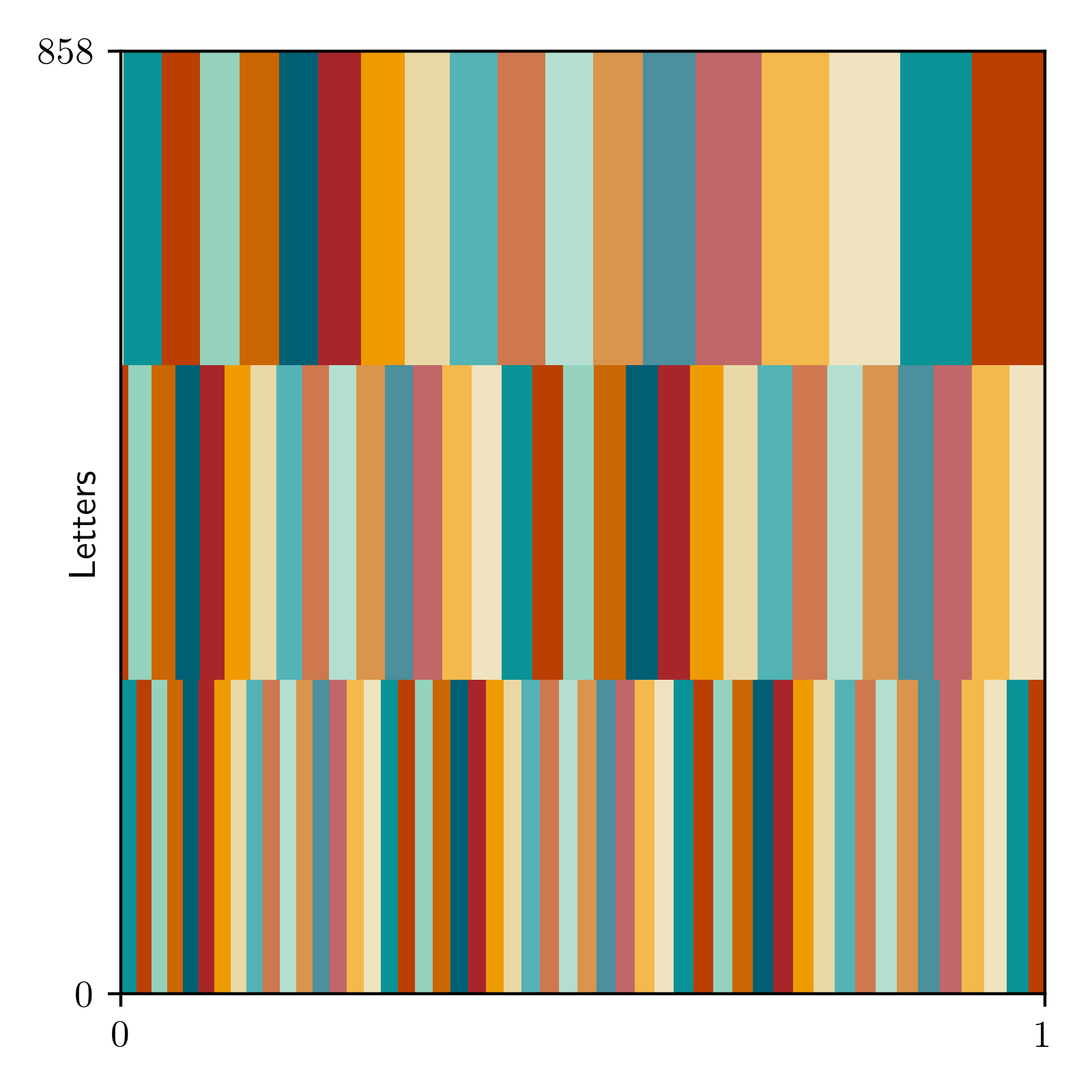}
        \includegraphics[draft=\draft, width=\linewidth]{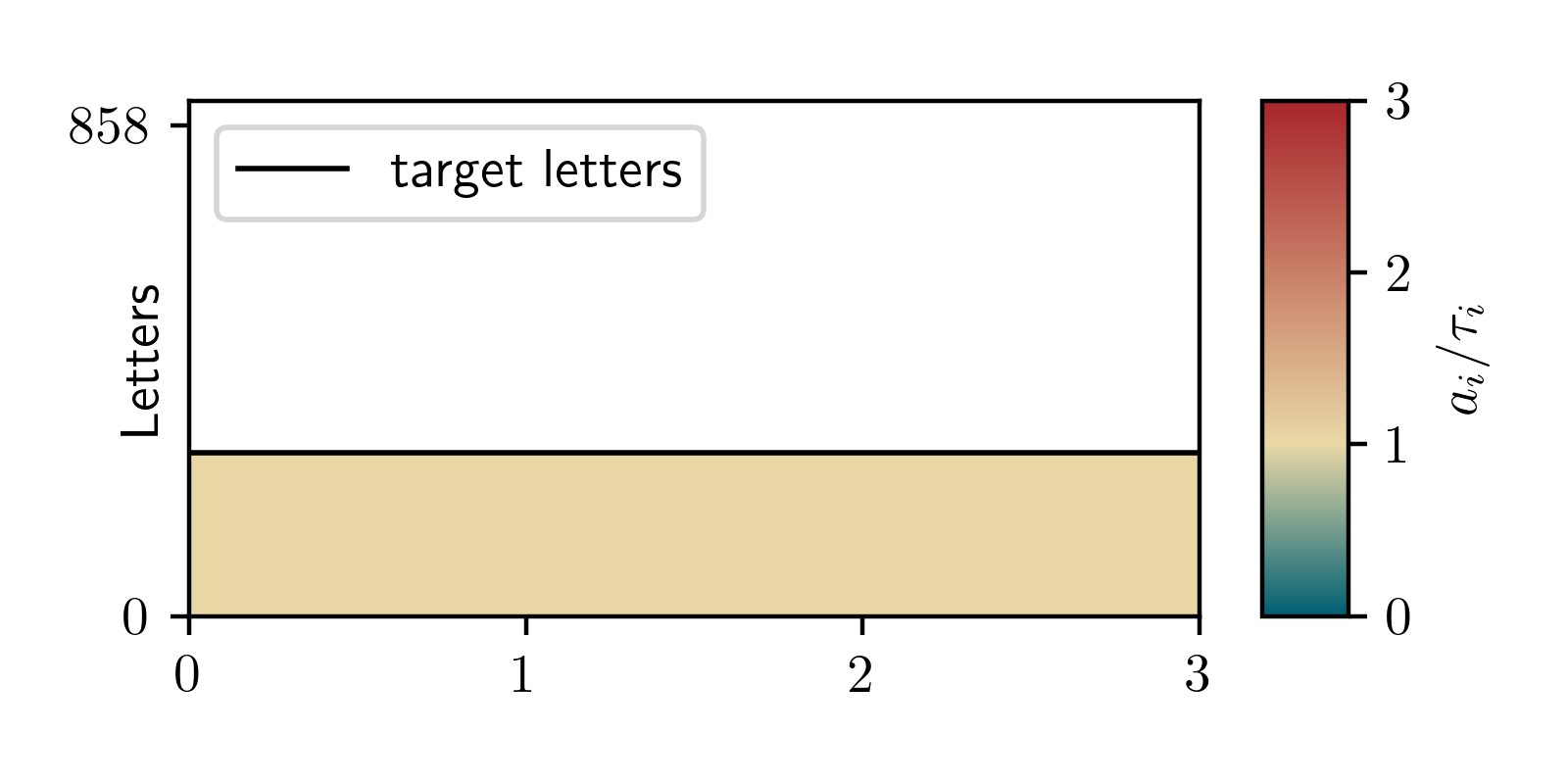}
        \caption{\greq ($t_G = 3$)}
        \label{fig:results_Baden-Württemberg_Medium_greedy_equal}
    \end{subfigure}
    \begin{subfigure}{0.32\textwidth}
        \includegraphics[draft=\draft, width=\linewidth]{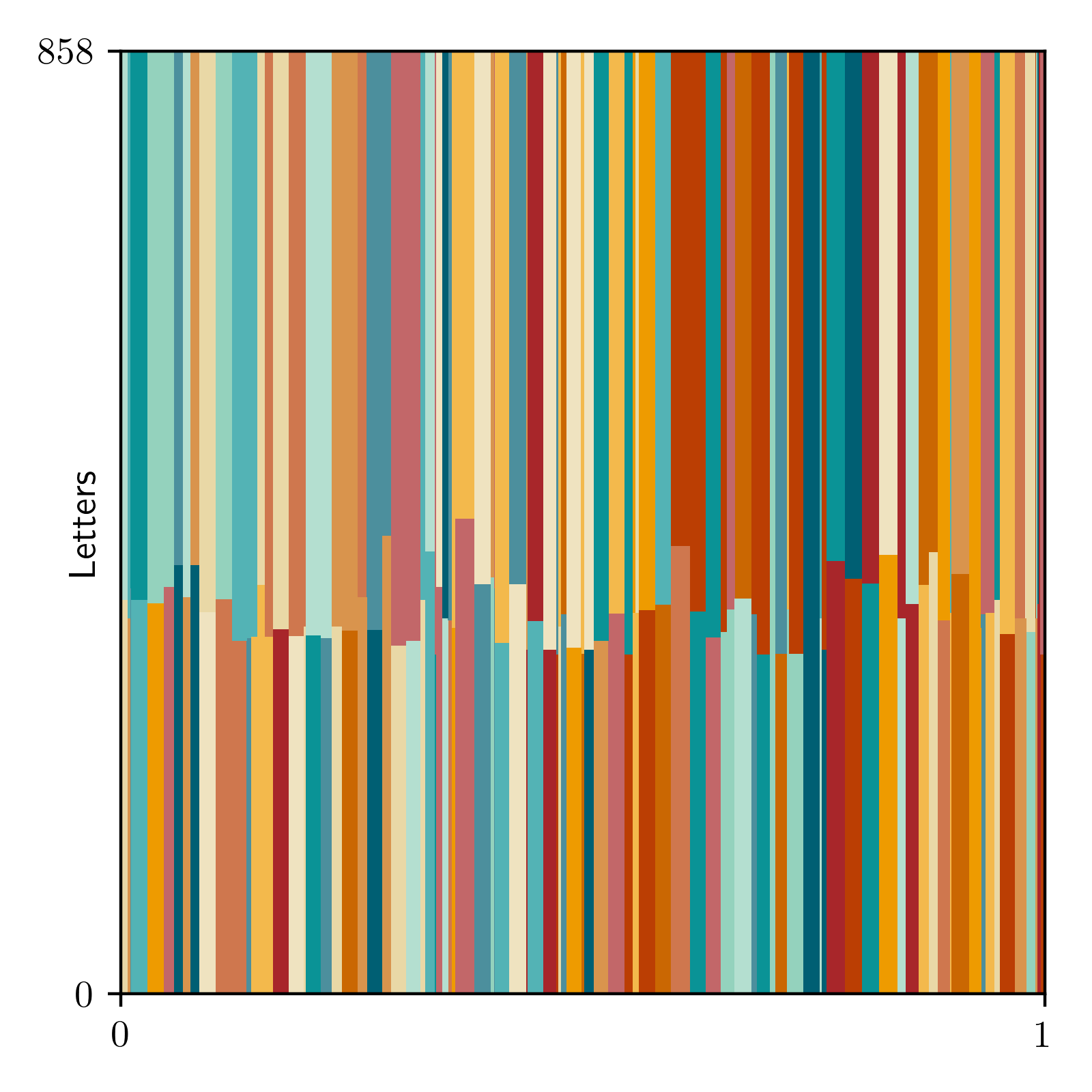}
        \includegraphics[draft=\draft, width=\linewidth]{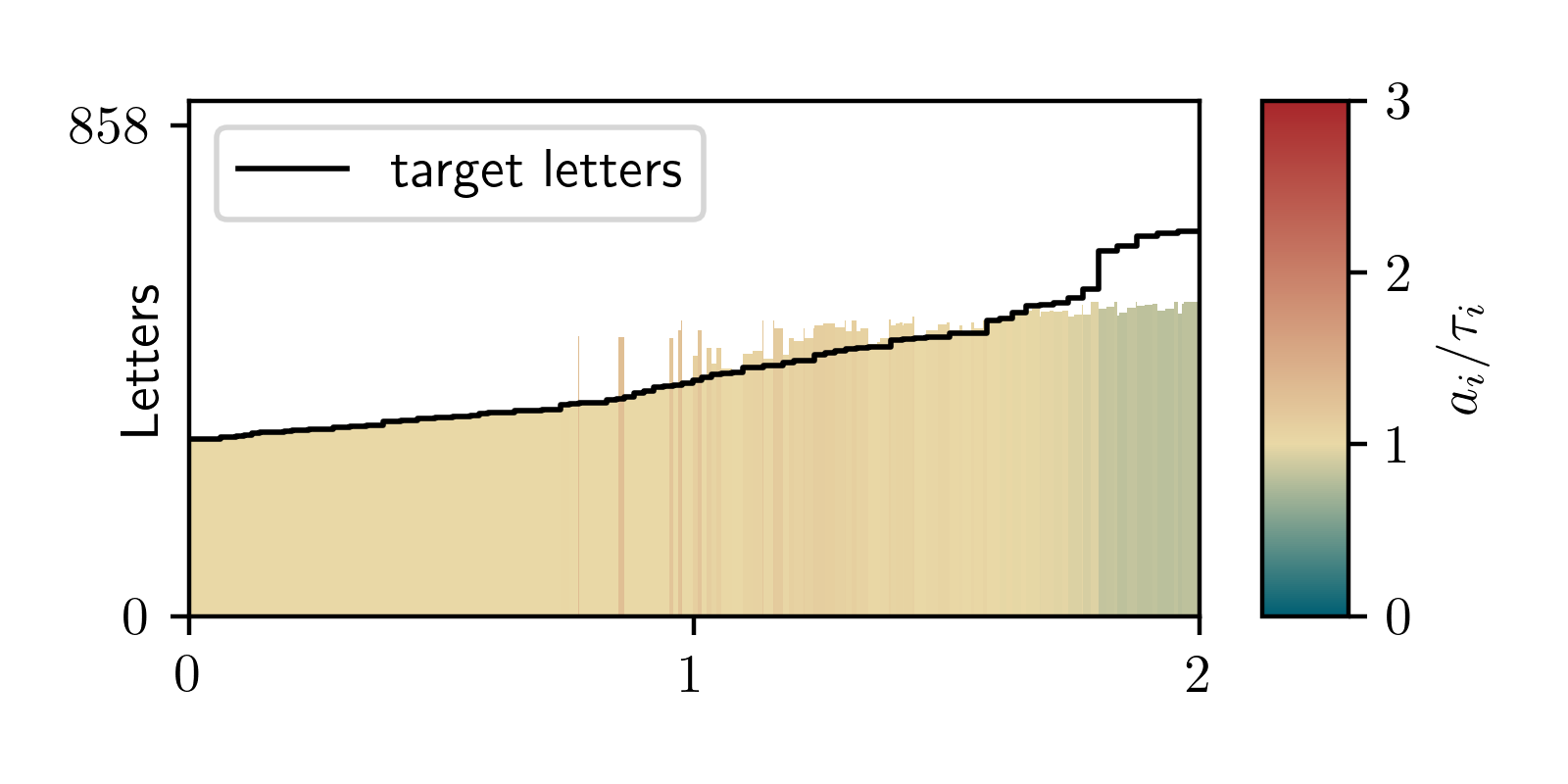}
        \caption{\colgen ($t_G\!=\!2$)}
        \label{fig:results_Baden-Württemberg_Medium_column_generation}
    \end{subfigure}
    \begin{subfigure}{0.32\textwidth}
        \includegraphics[draft=\draft, width=\linewidth]{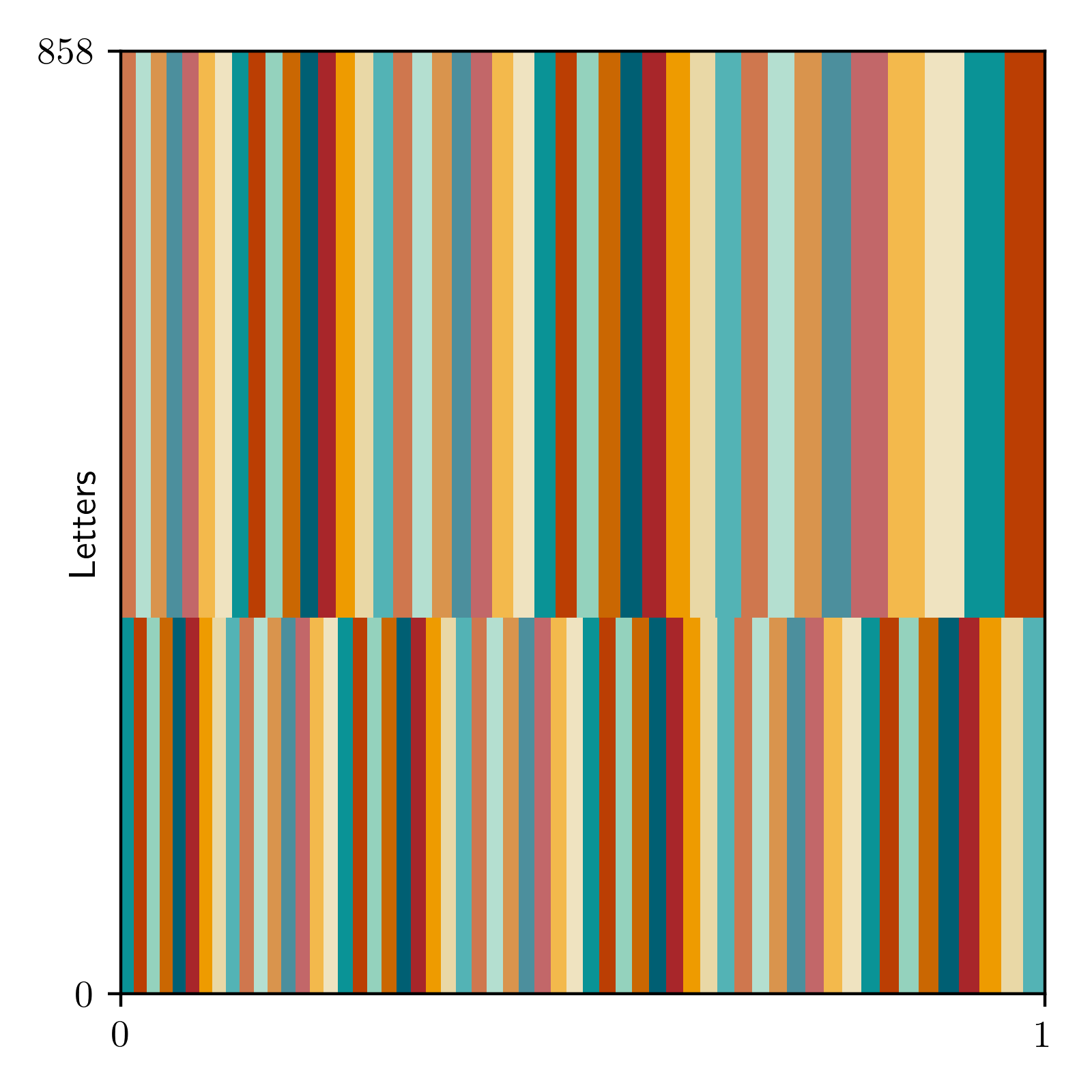}
        \includegraphics[draft=\draft, width=\linewidth]{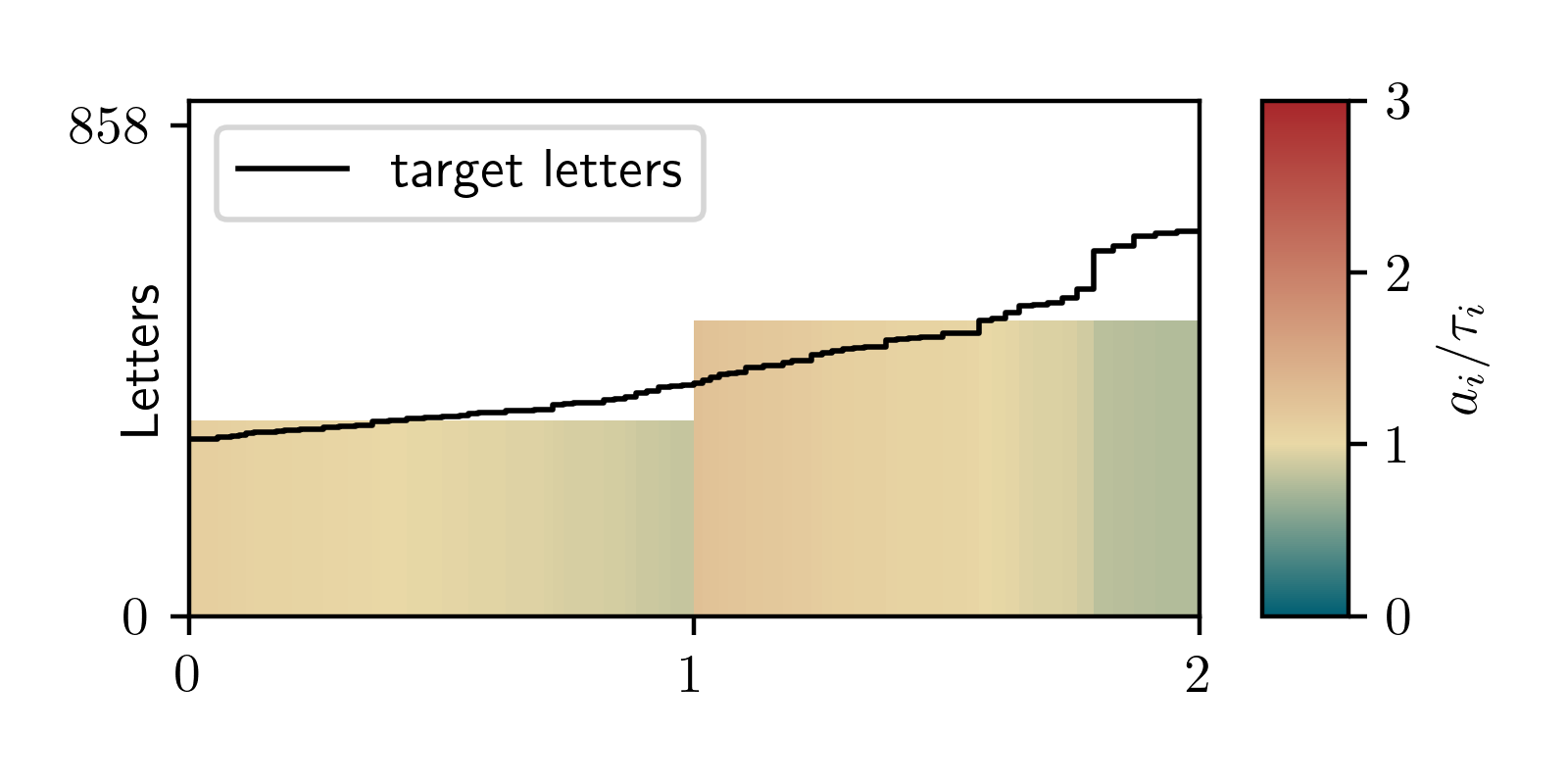}
        \caption{\buckets ($t_G = 2$)}
        \label{fig:results_Baden-Württemberg_Medium_greedy_bucket_fill}
    \end{subfigure}
    \caption{Medium municipalities of Baden-Württemberg ($\ell_G = 858$)}
    \label{fig:results_Baden-Württemberg_Medium}
\end{figure} 

\begin{figure}
    \centering
    \begin{subfigure}{0.32\textwidth}
        \includegraphics[draft=\draft, width=\linewidth]{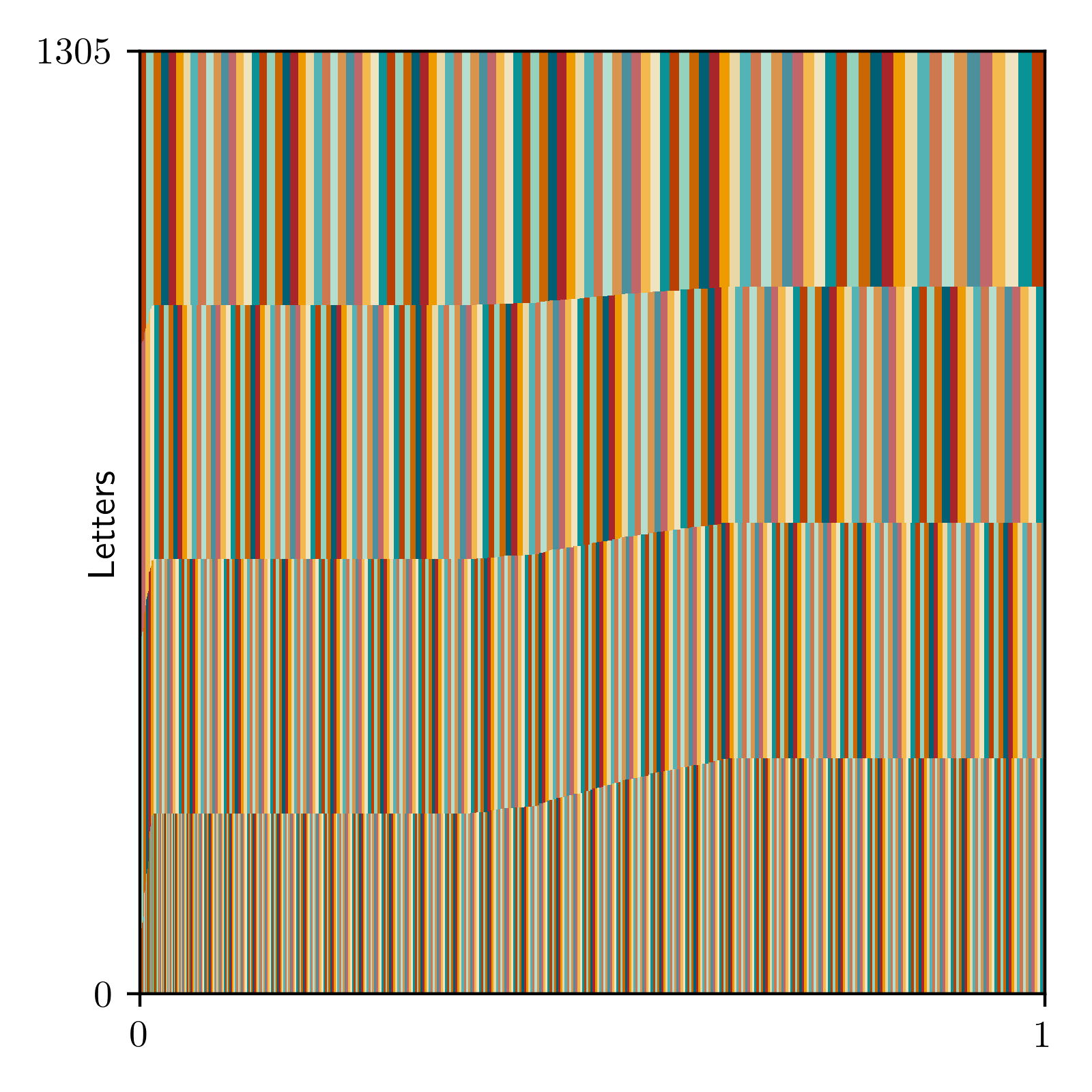}
        \includegraphics[draft=\draft, width=\linewidth]{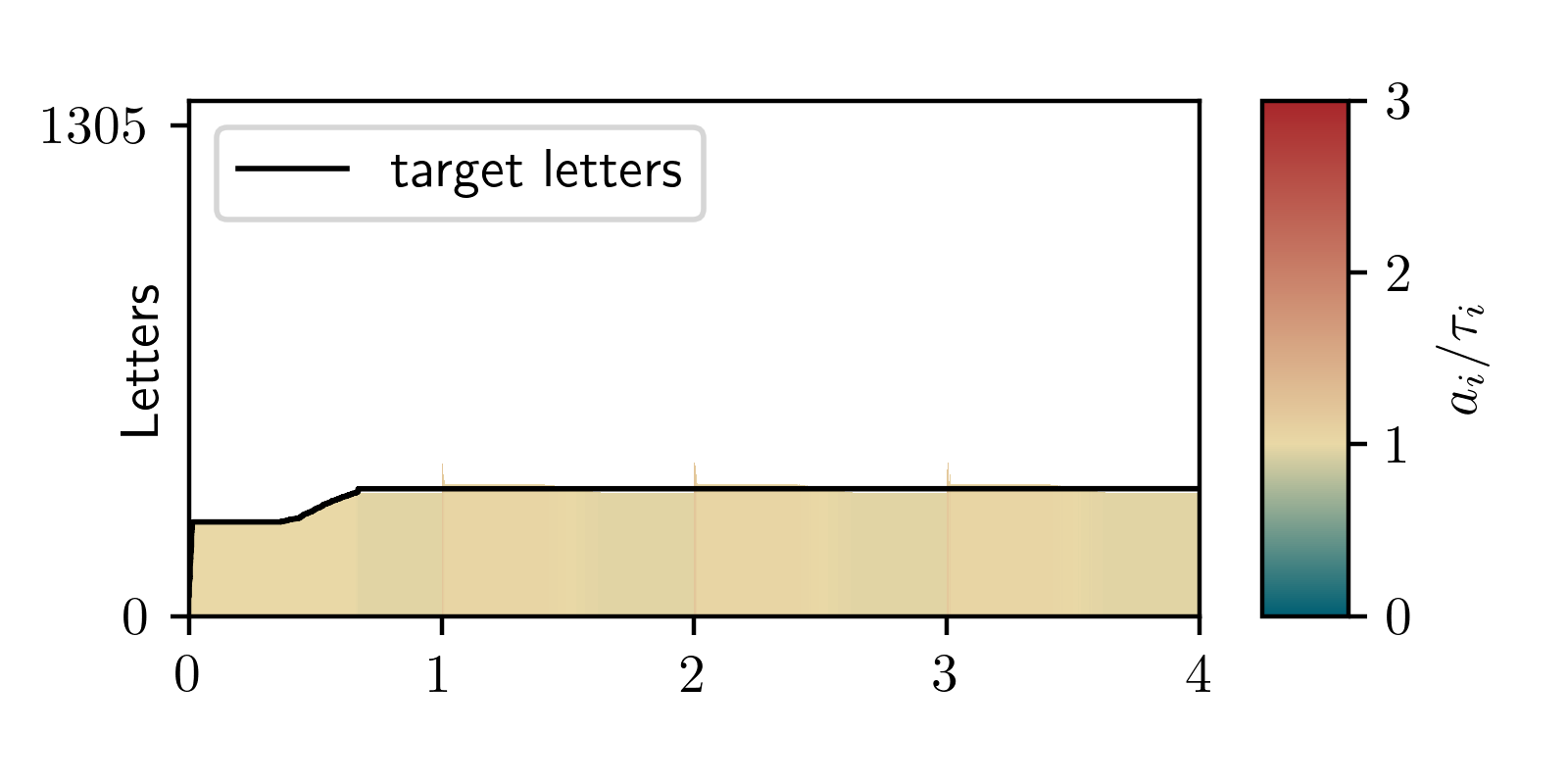}
        \caption{\greq ($t_G = 4$)}
        \label{fig:results_Baden-Württemberg_Small_greedy_equal}
    \end{subfigure}
    \begin{subfigure}{0.32\textwidth}
        \includegraphics[draft=\draft, width=\linewidth]{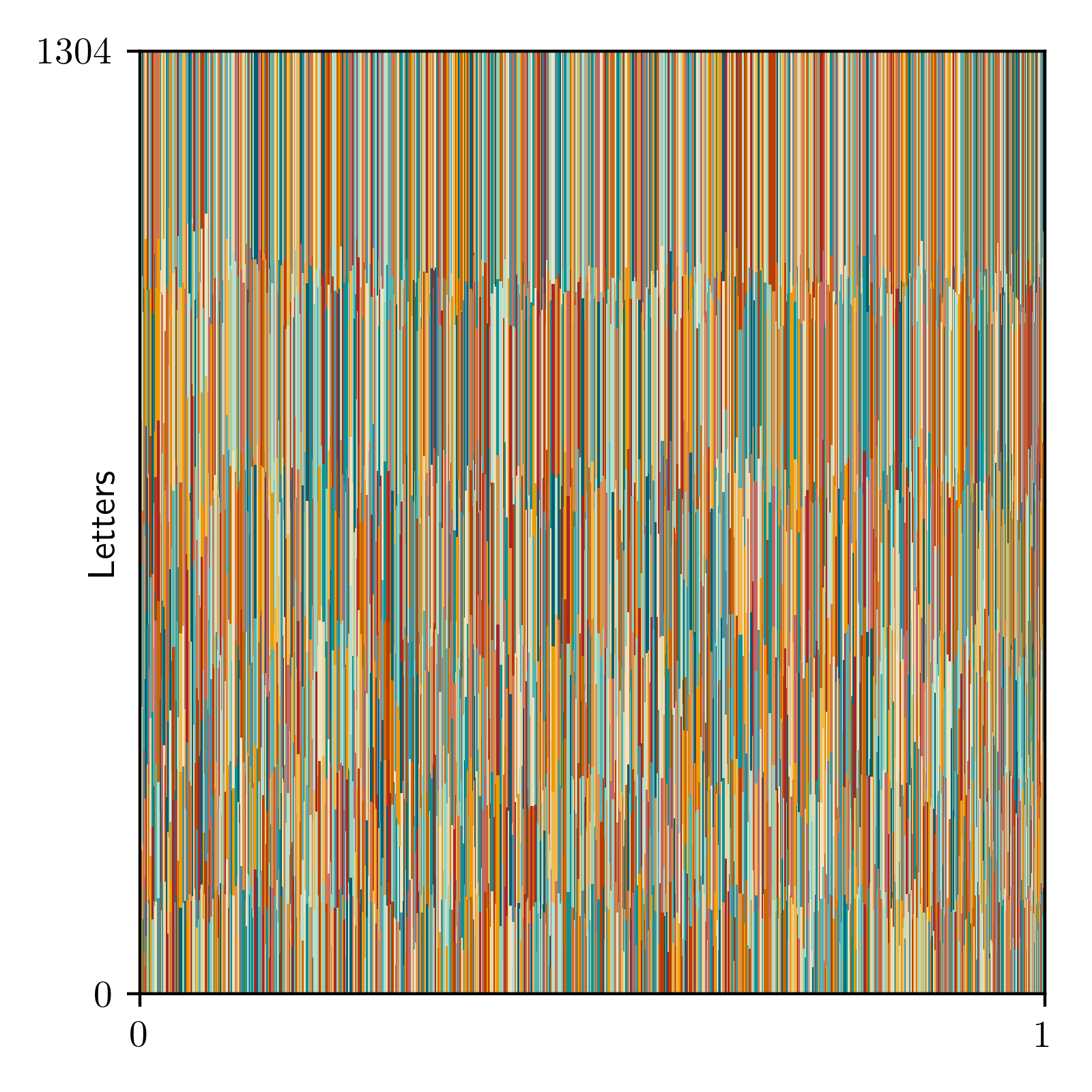}
        \includegraphics[draft=\draft, width=\linewidth]{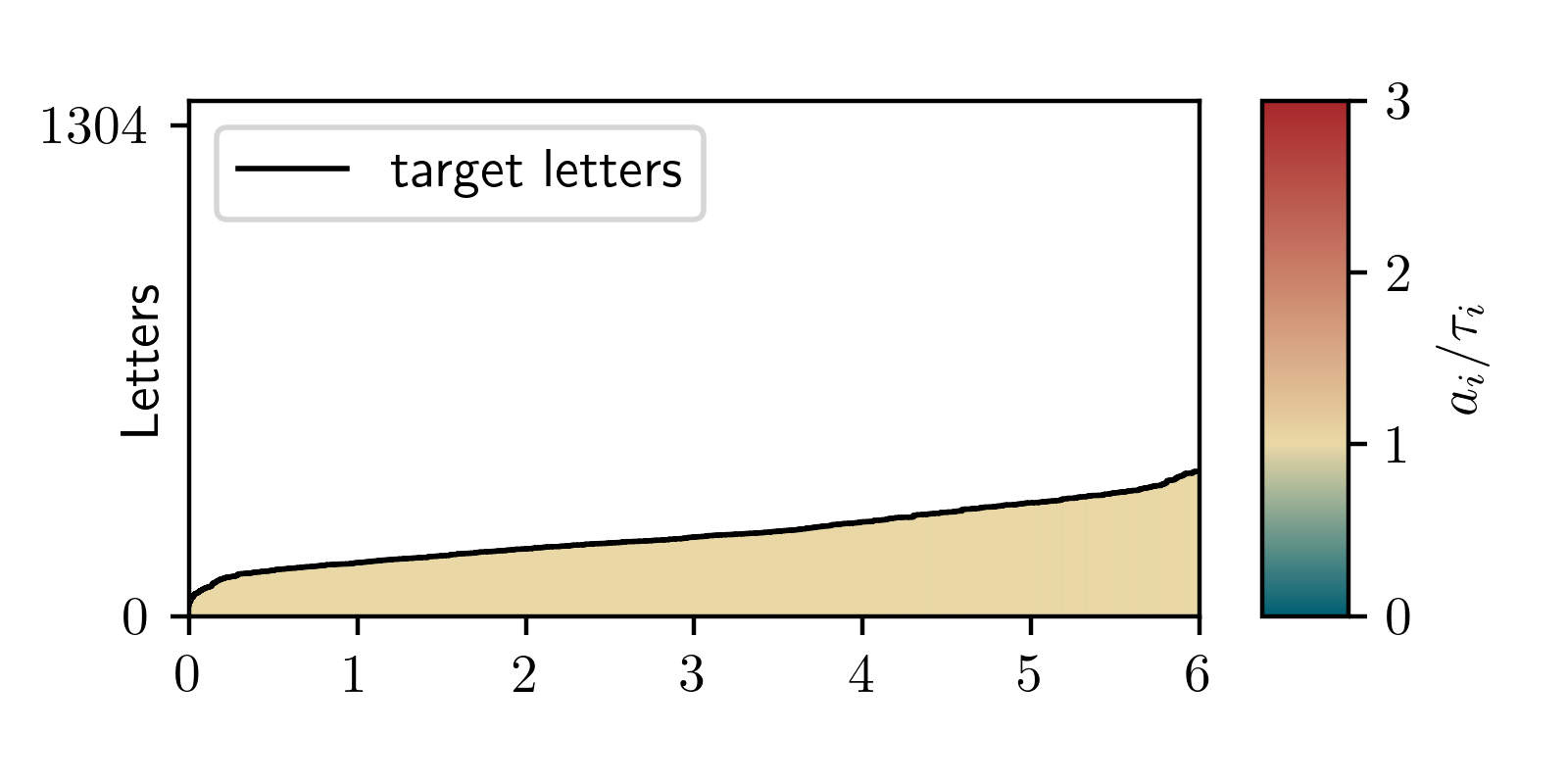}
        \caption{\colgen ($t_G\!=\!6$)}
        \label{fig:results_Baden-Württemberg_Small_column_generation}
    \end{subfigure}
    \begin{subfigure}{0.32\textwidth}
        \includegraphics[draft=\draft, width=\linewidth]{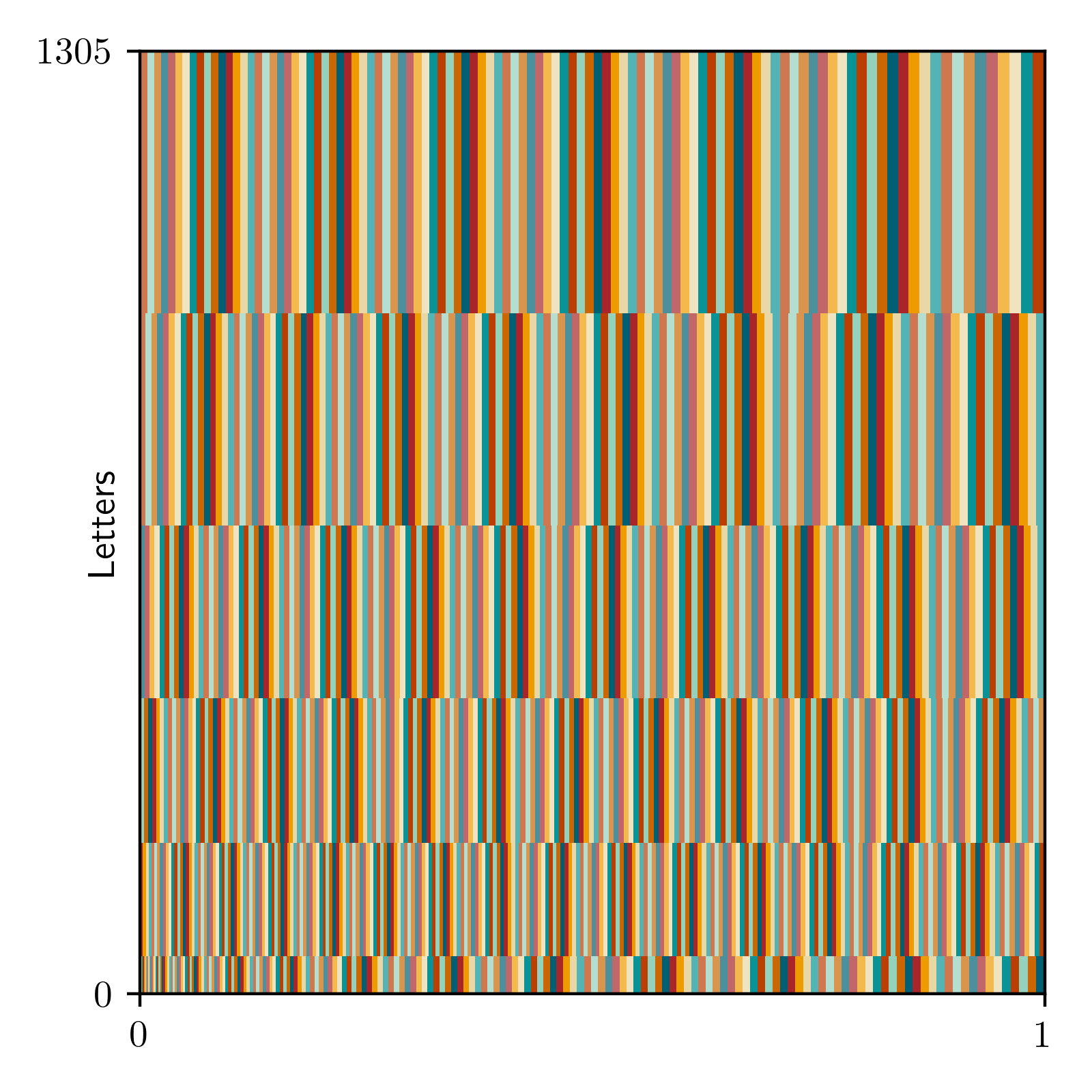}
        \includegraphics[draft=\draft, width=\linewidth]{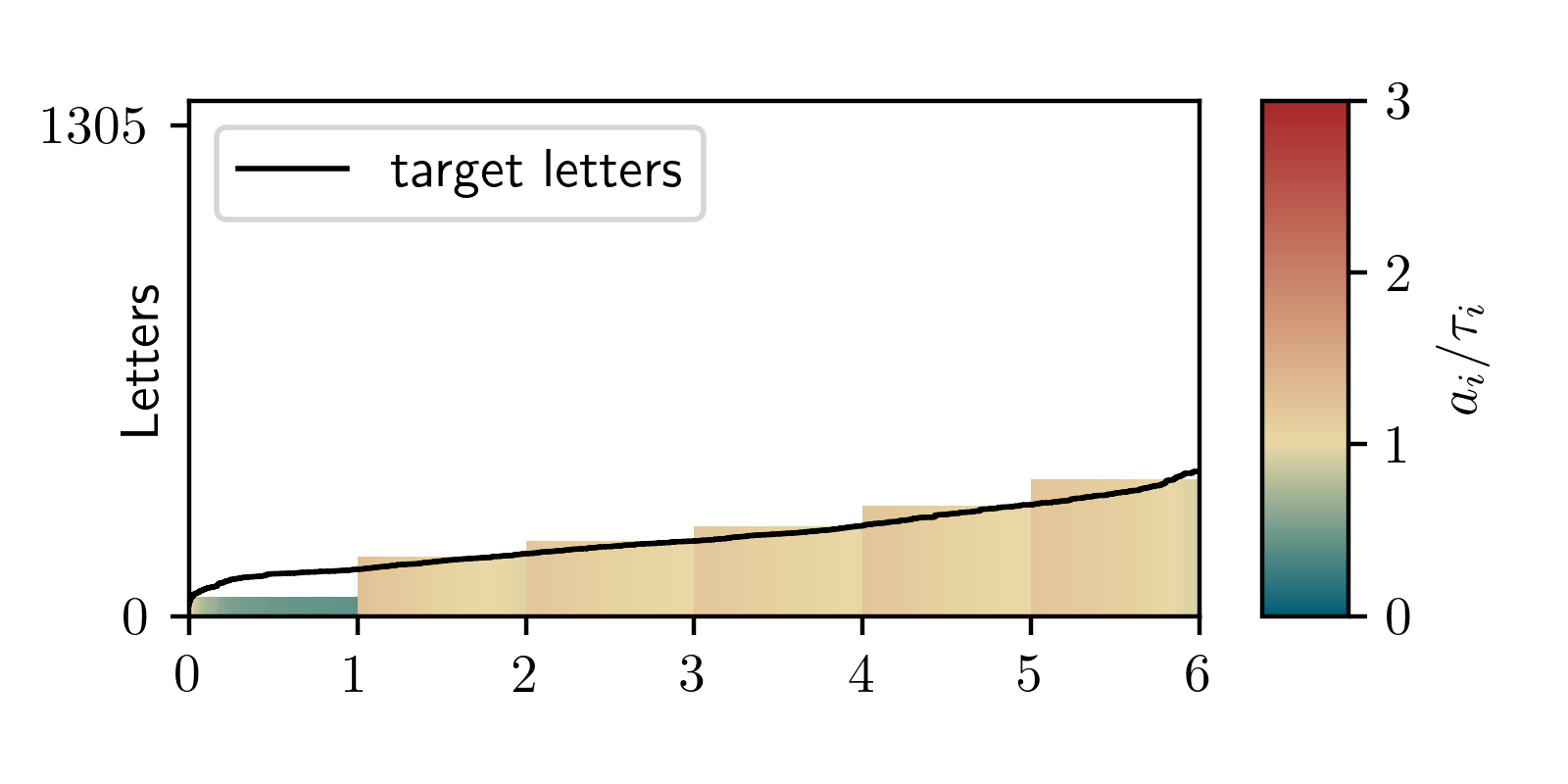}
        \caption{\buckets ($t_G = 6$)}
        \label{fig:results_Baden-Württemberg_Small_greedy_bucket_fill}
    \end{subfigure}
    \caption{Small municipalities of Baden-Württemberg ($\ell_G = 1305$)}
    \label{fig:results_Baden-Württemberg_Small}
\end{figure} 

\begin{figure}
    \centering
    \begin{subfigure}{0.32\textwidth}
        \includegraphics[draft=\draft, width=\linewidth]{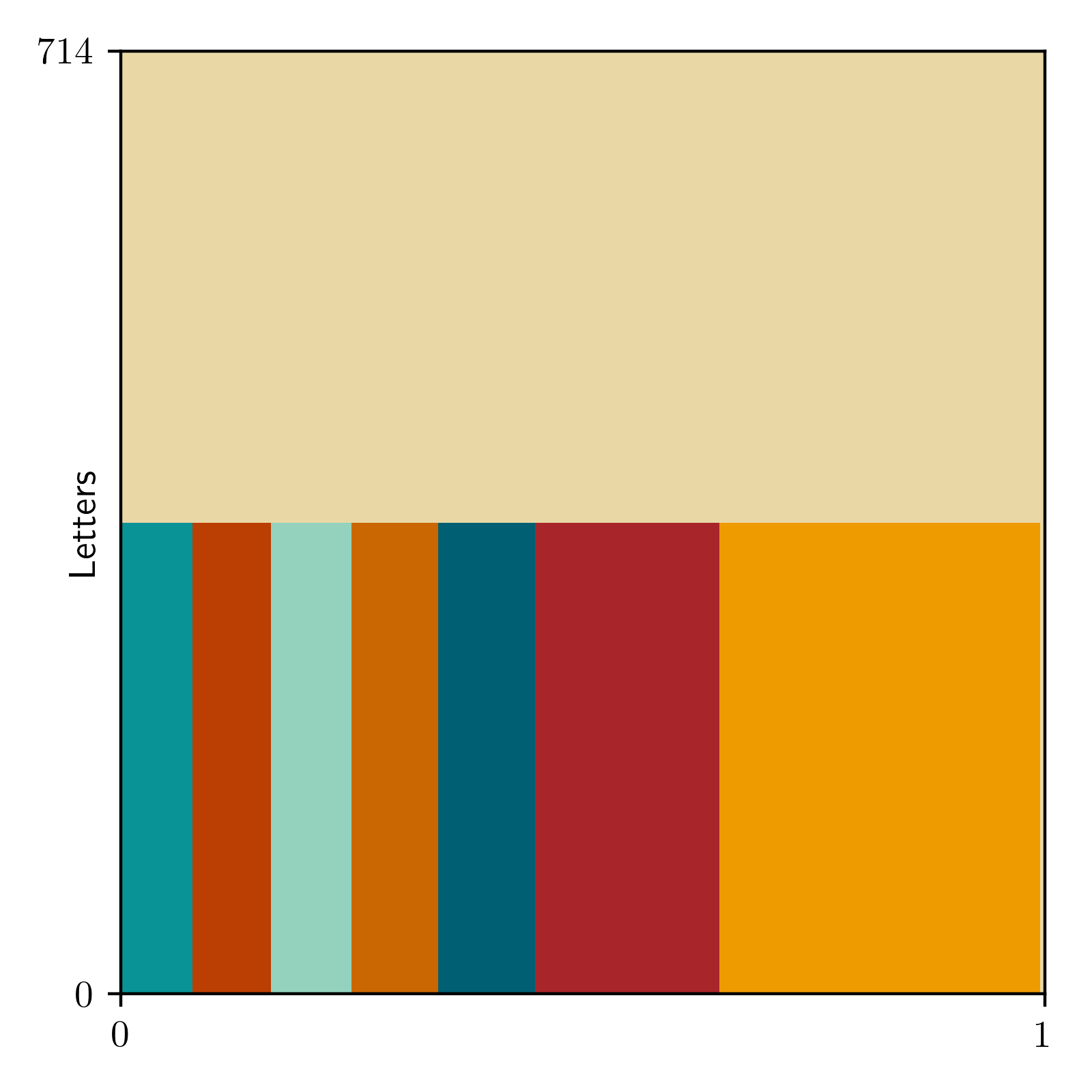}
        \includegraphics[draft=\draft, width=\linewidth]{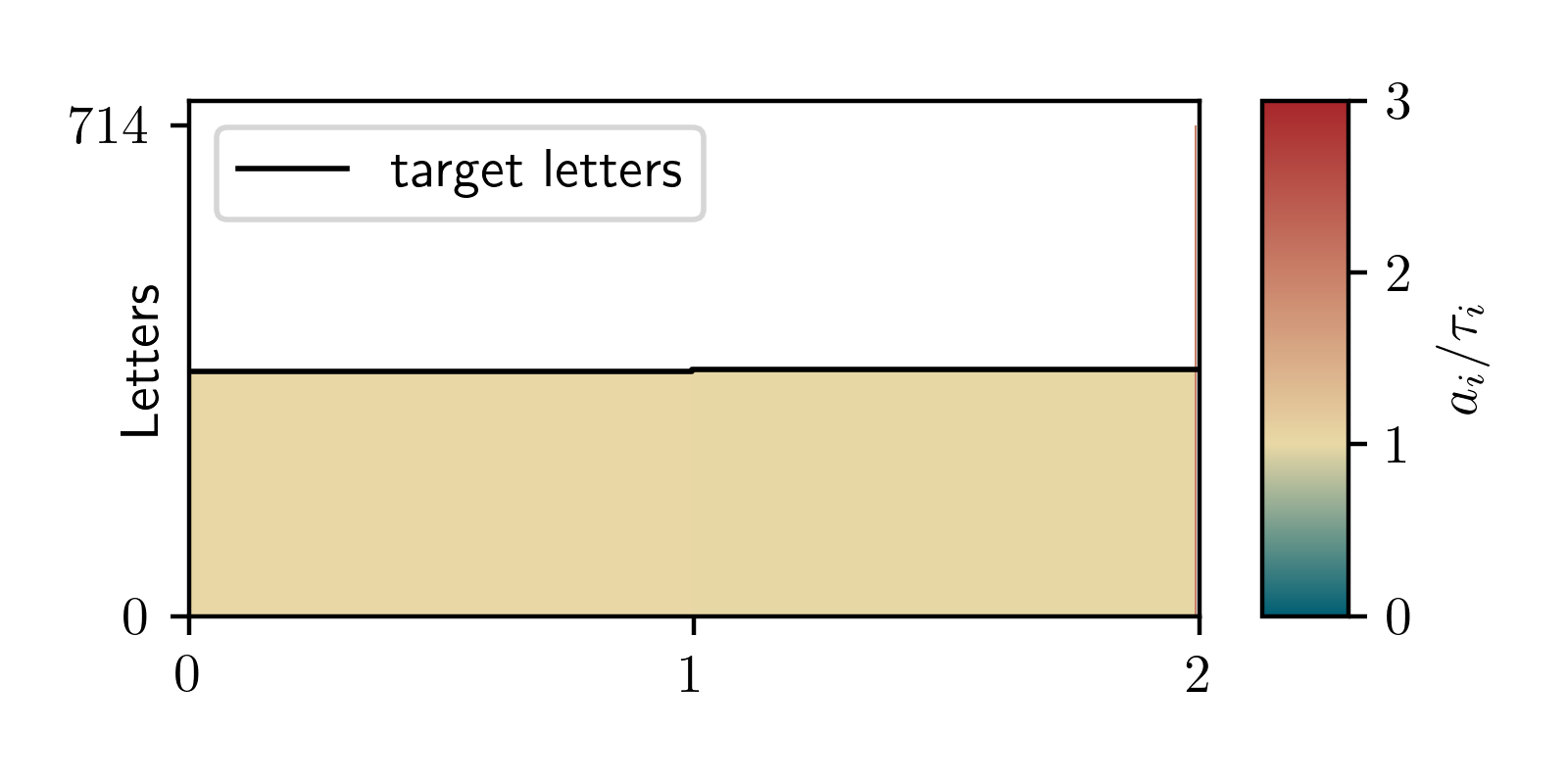}
        \caption{\greq ($t_G = 2$)}
        \label{fig:results_Bayern_Large_greedy_equal}
    \end{subfigure}
    \begin{subfigure}{0.32\textwidth}
        \includegraphics[draft=\draft, width=\linewidth]{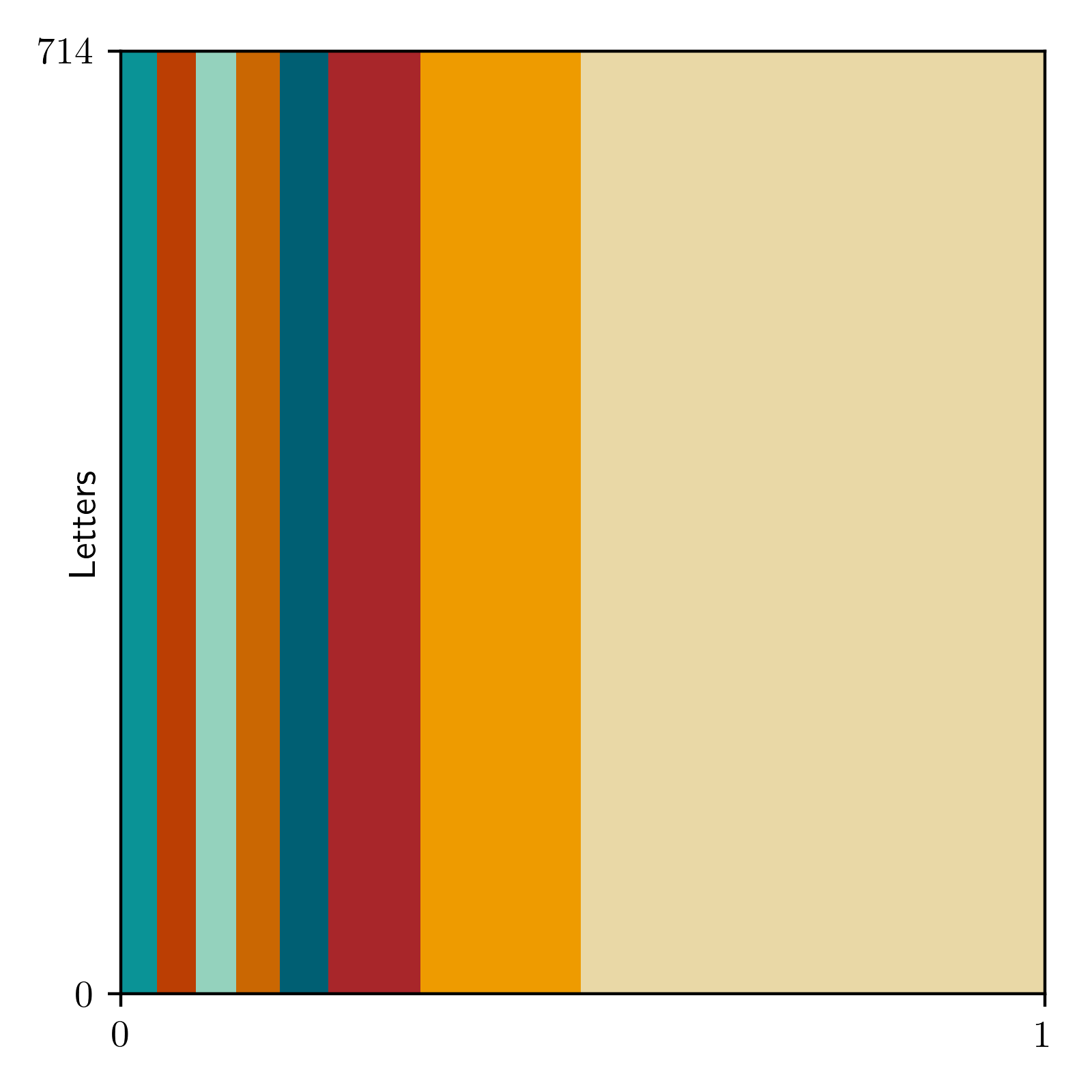}
        \includegraphics[draft=\draft, width=\linewidth]{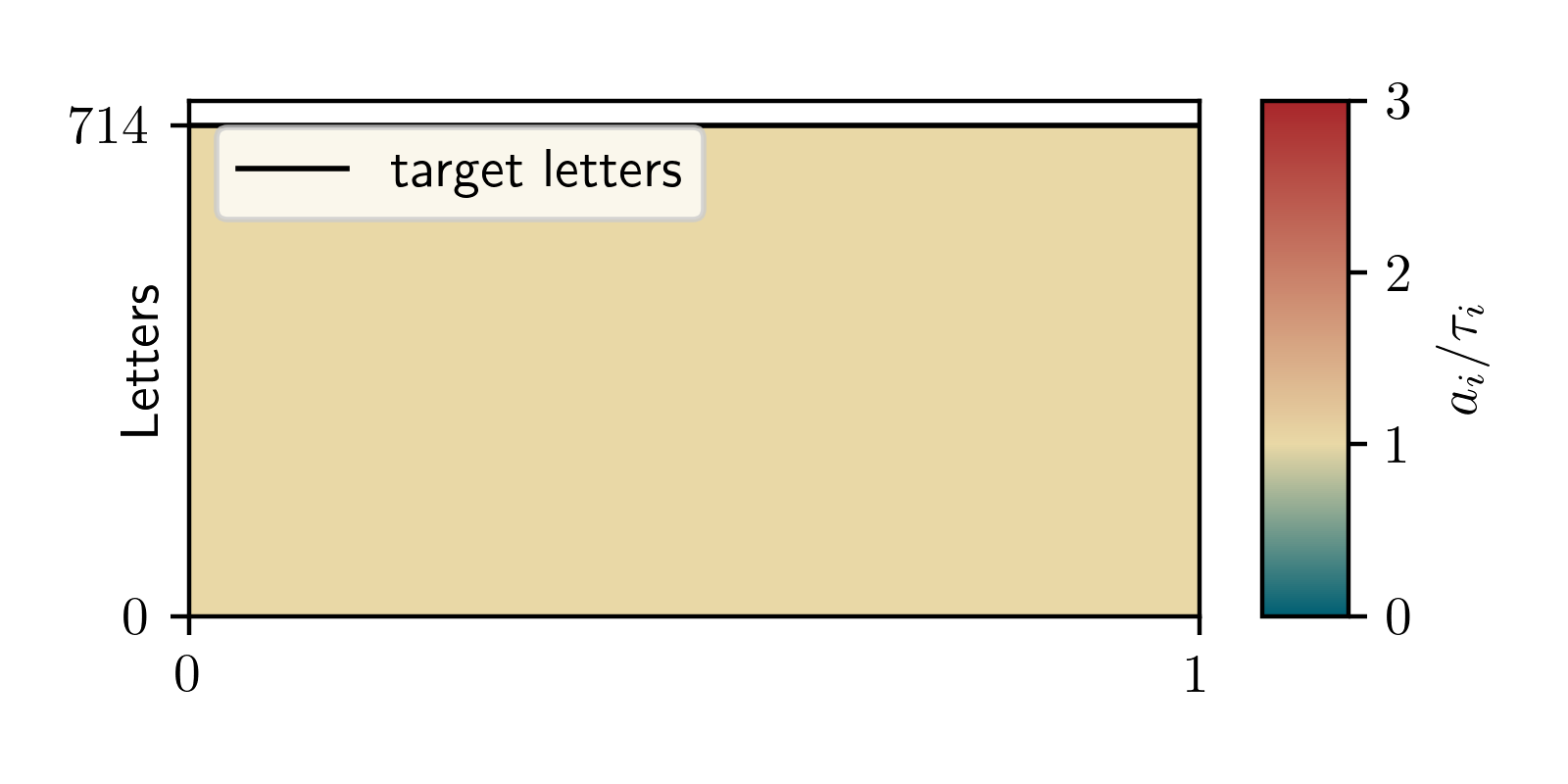}
        \caption{\colgen ($t_G\!=\!1$)}
        \label{fig:results_Bayern_Large_column_generation}
    \end{subfigure}
    \begin{subfigure}{0.32\textwidth}
        \includegraphics[draft=\draft, width=\linewidth]{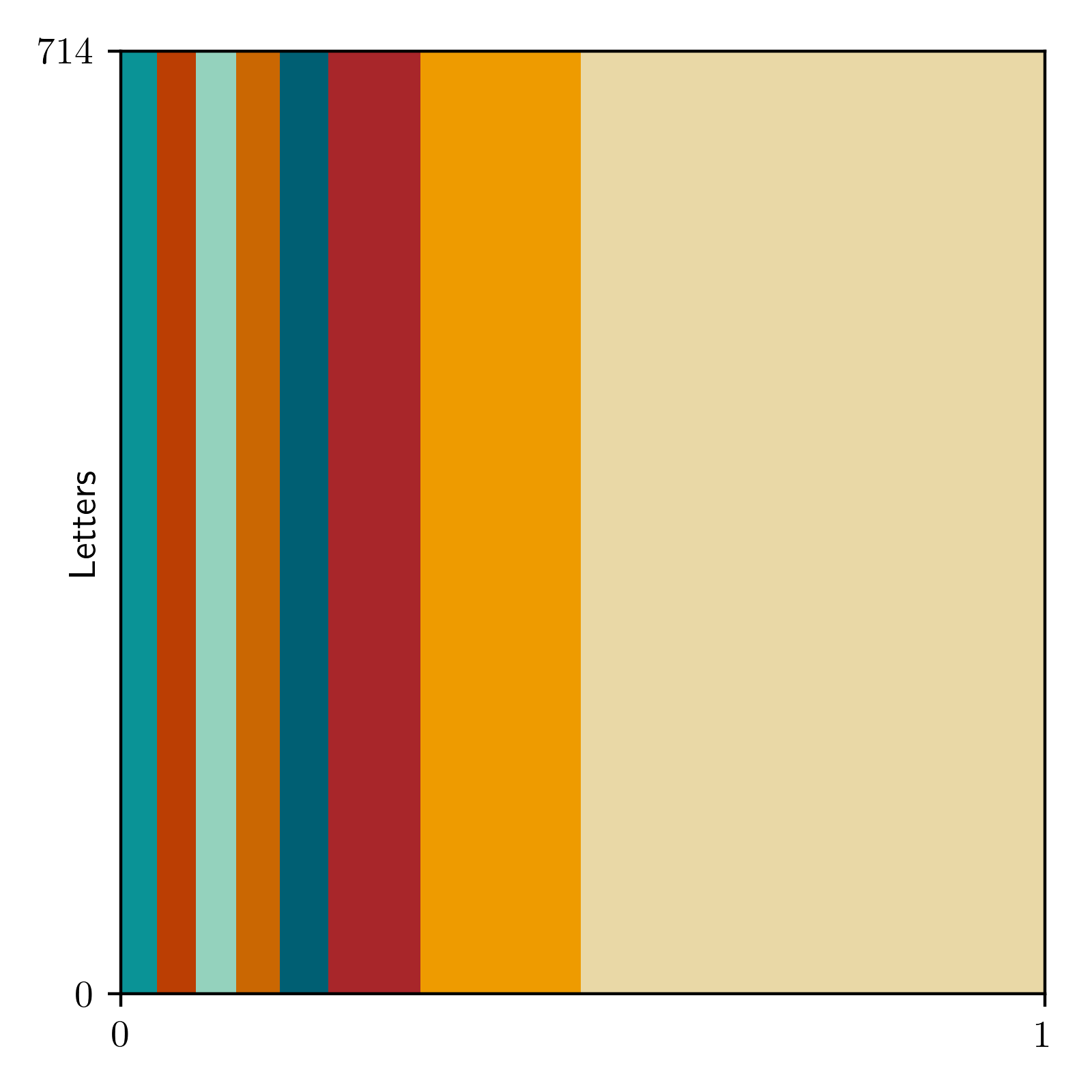}
        \includegraphics[draft=\draft, width=\linewidth]{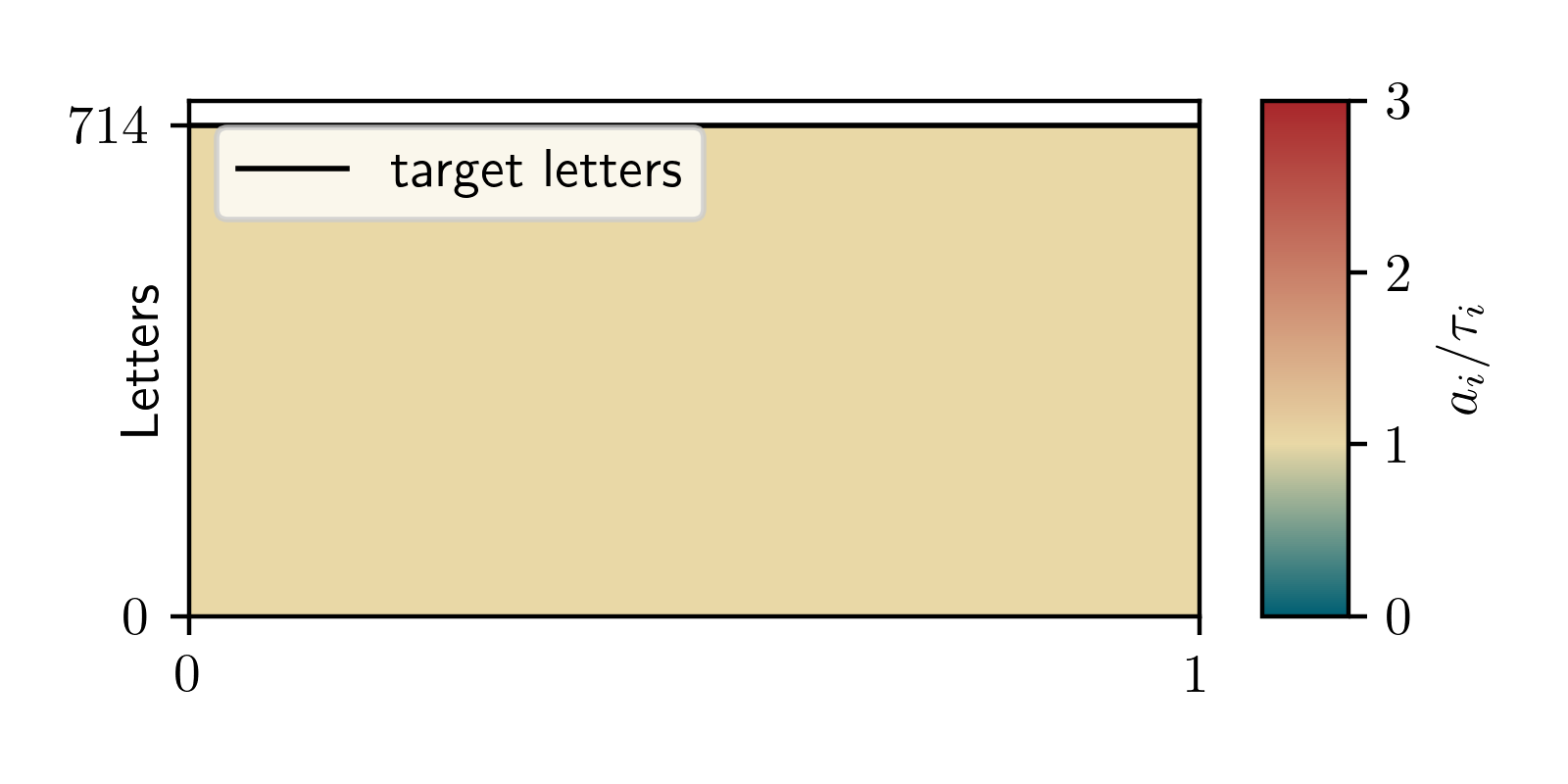}
        \caption{\buckets ($t_G = 1$)}
        \label{fig:results_Bayern_Large_greedy_bucket_fill}
    \end{subfigure}
    \caption{Large municipalities of Bayern ($\ell_G = 714$)}
    \label{fig:results_Bayern_Large}
\end{figure} 

\begin{figure}
    \centering
    \begin{subfigure}{0.32\textwidth}
        \includegraphics[draft=\draft, width=\linewidth]{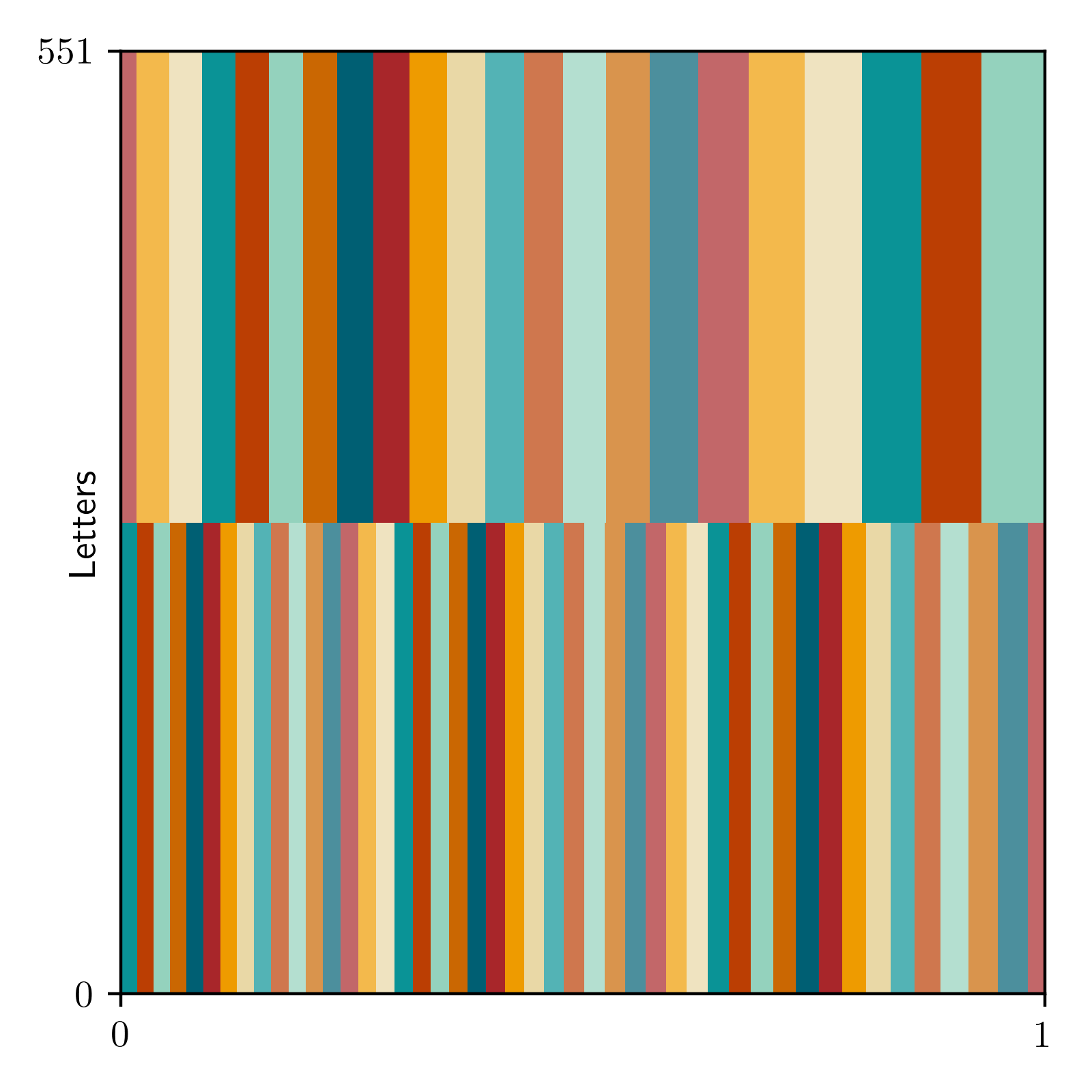}
        \includegraphics[draft=\draft, width=\linewidth]{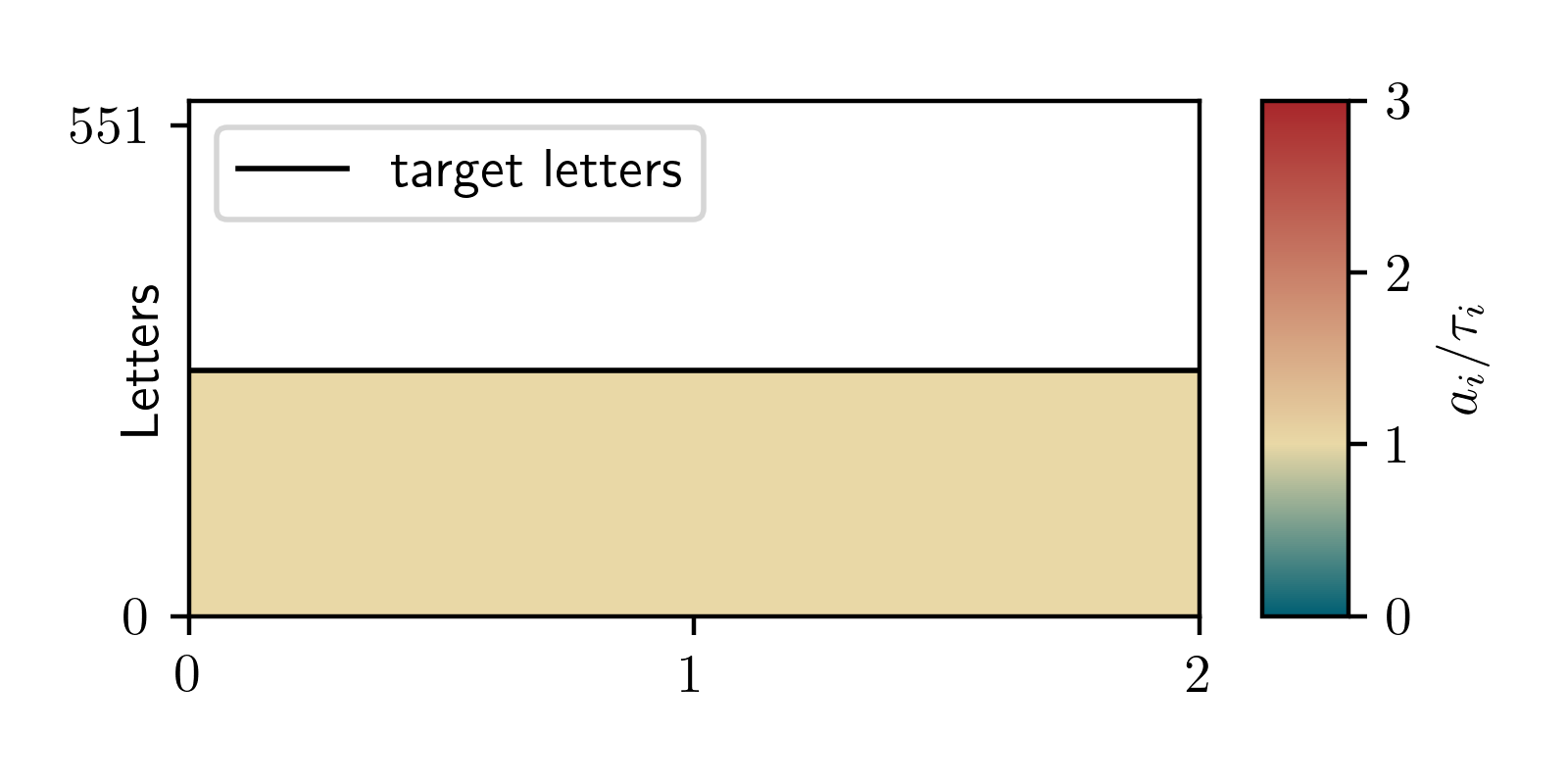}
        \caption{\greq ($t_G = 2$)}
        \label{fig:results_Bayern_Medium_greedy_equal}
    \end{subfigure}
    \begin{subfigure}{0.32\textwidth}
        \includegraphics[draft=\draft, width=\linewidth]{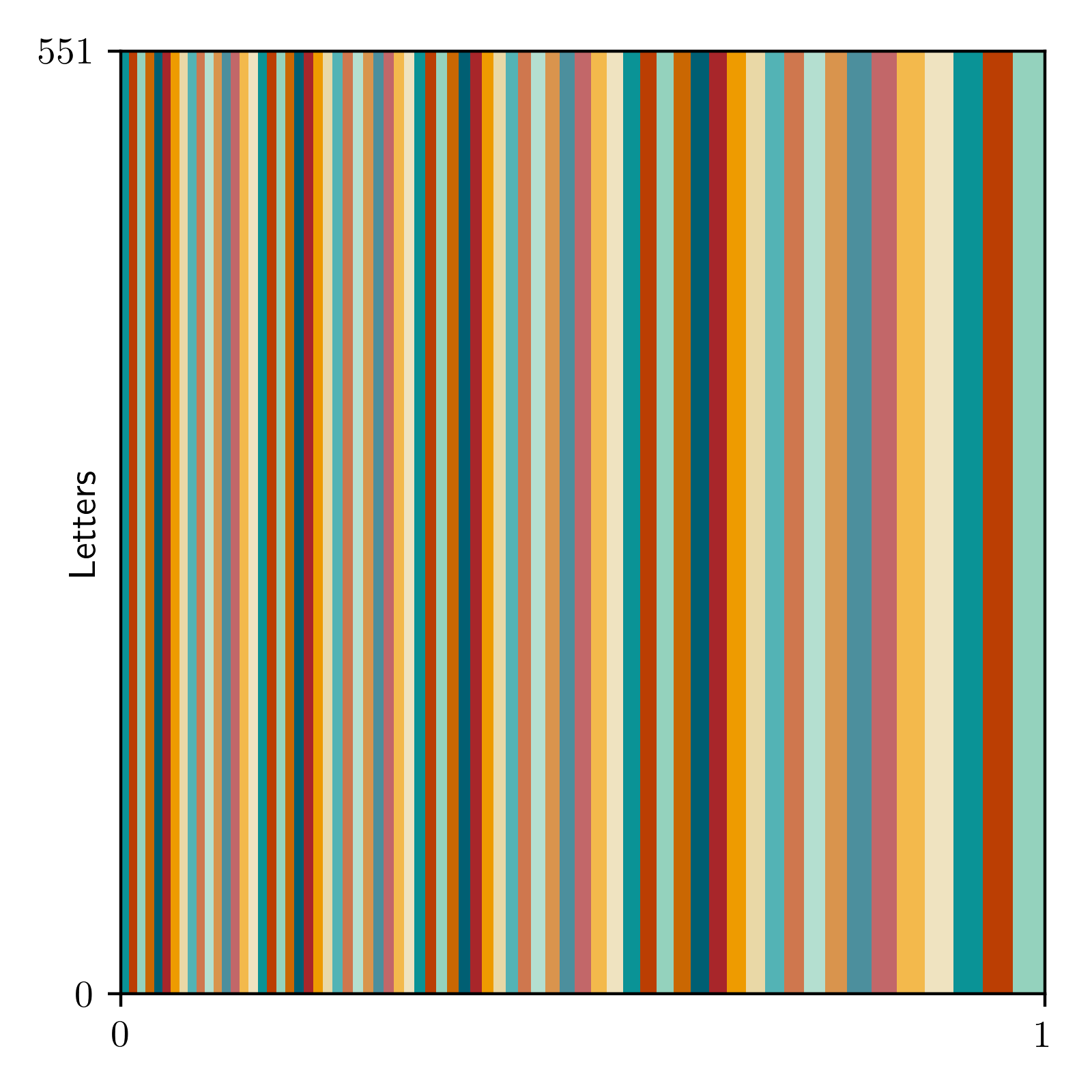}
        \includegraphics[draft=\draft, width=\linewidth]{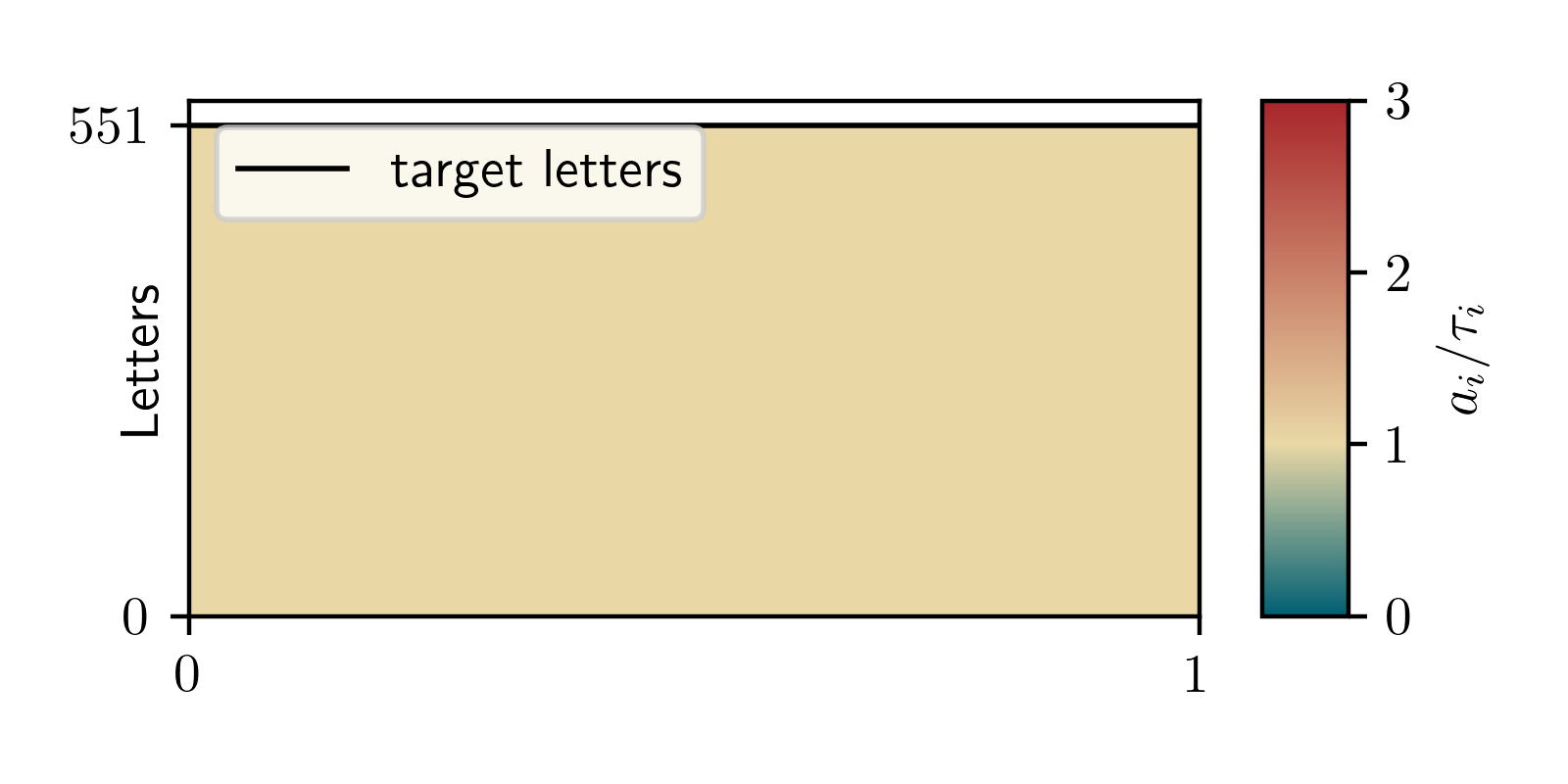}
        \caption{\colgen ($t_G\!=\!1$)}
        \label{fig:results_Bayern_Medium_column_generation}
    \end{subfigure}
    \begin{subfigure}{0.32\textwidth}
        \includegraphics[draft=\draft, width=\linewidth]{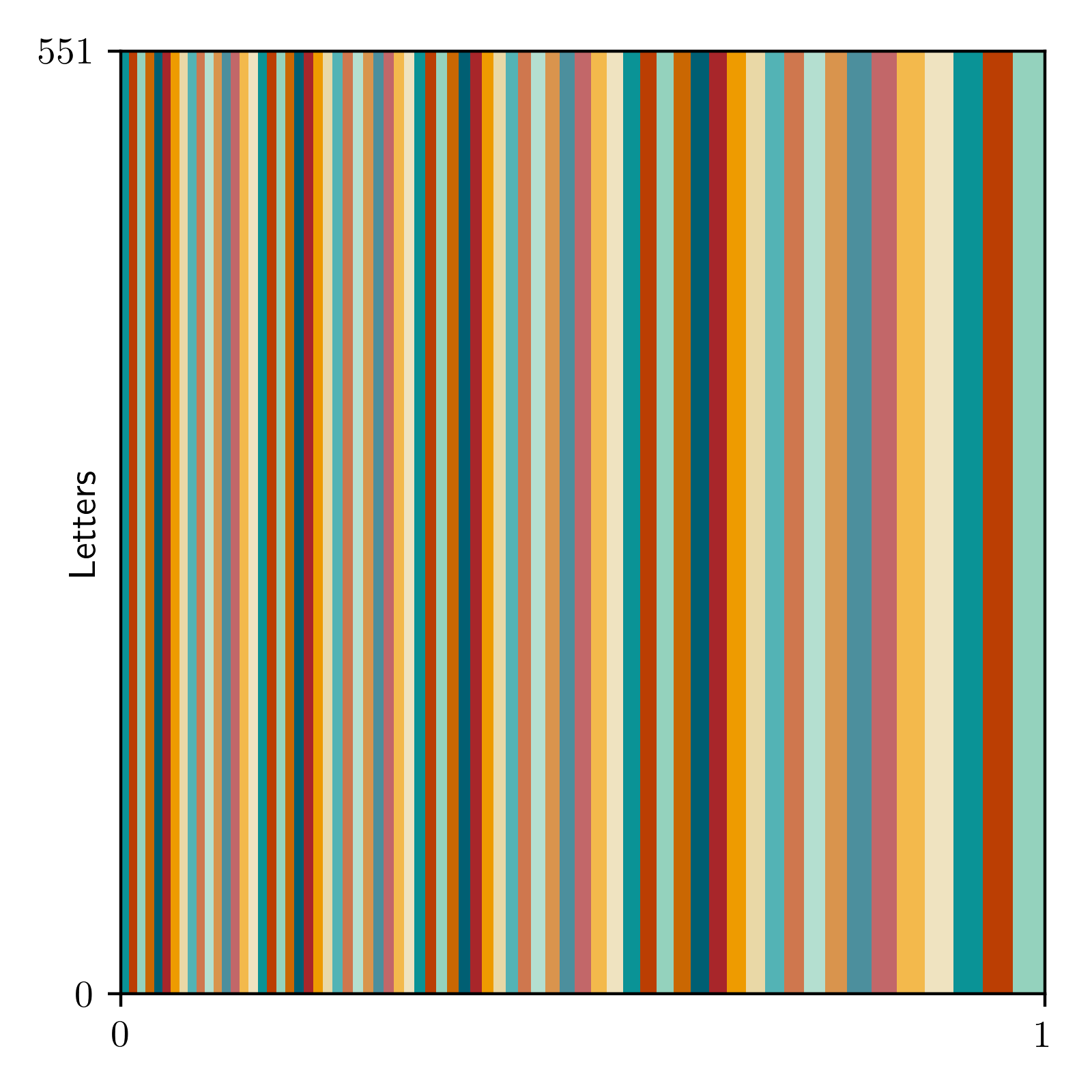}
        \includegraphics[draft=\draft, width=\linewidth]{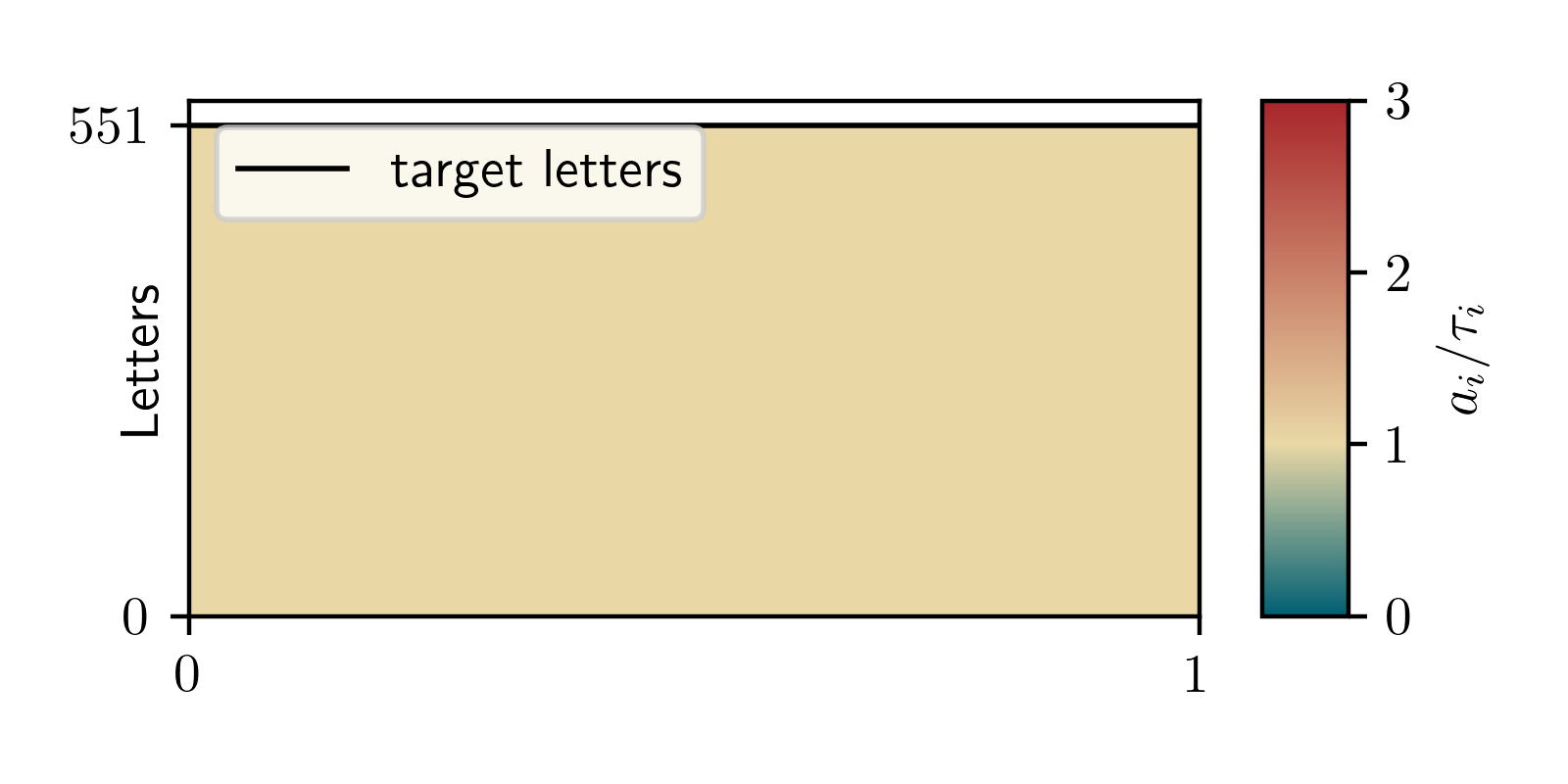}
        \caption{\buckets ($t_G = 1$)}
        \label{fig:results_Bayern_Medium_greedy_bucket_fill}
    \end{subfigure}
    \caption{Medium municipalities of Bayern ($\ell_G = 551$)}
    \label{fig:results_Bayern_Medium}
\end{figure} 

\begin{figure}
    \centering
    \begin{subfigure}{0.32\textwidth}
        \includegraphics[draft=\draft, width=\linewidth]{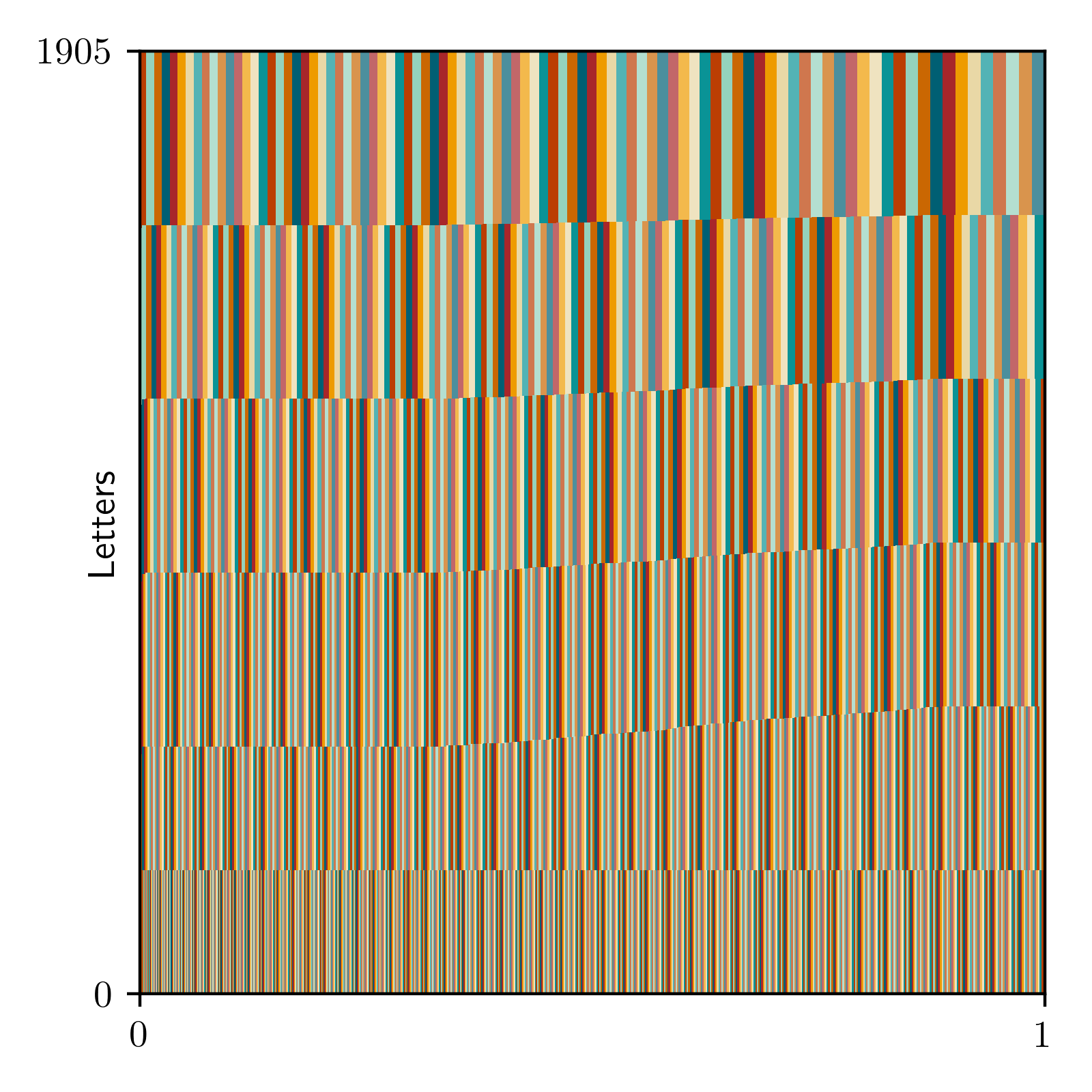}
        \includegraphics[draft=\draft, width=\linewidth]{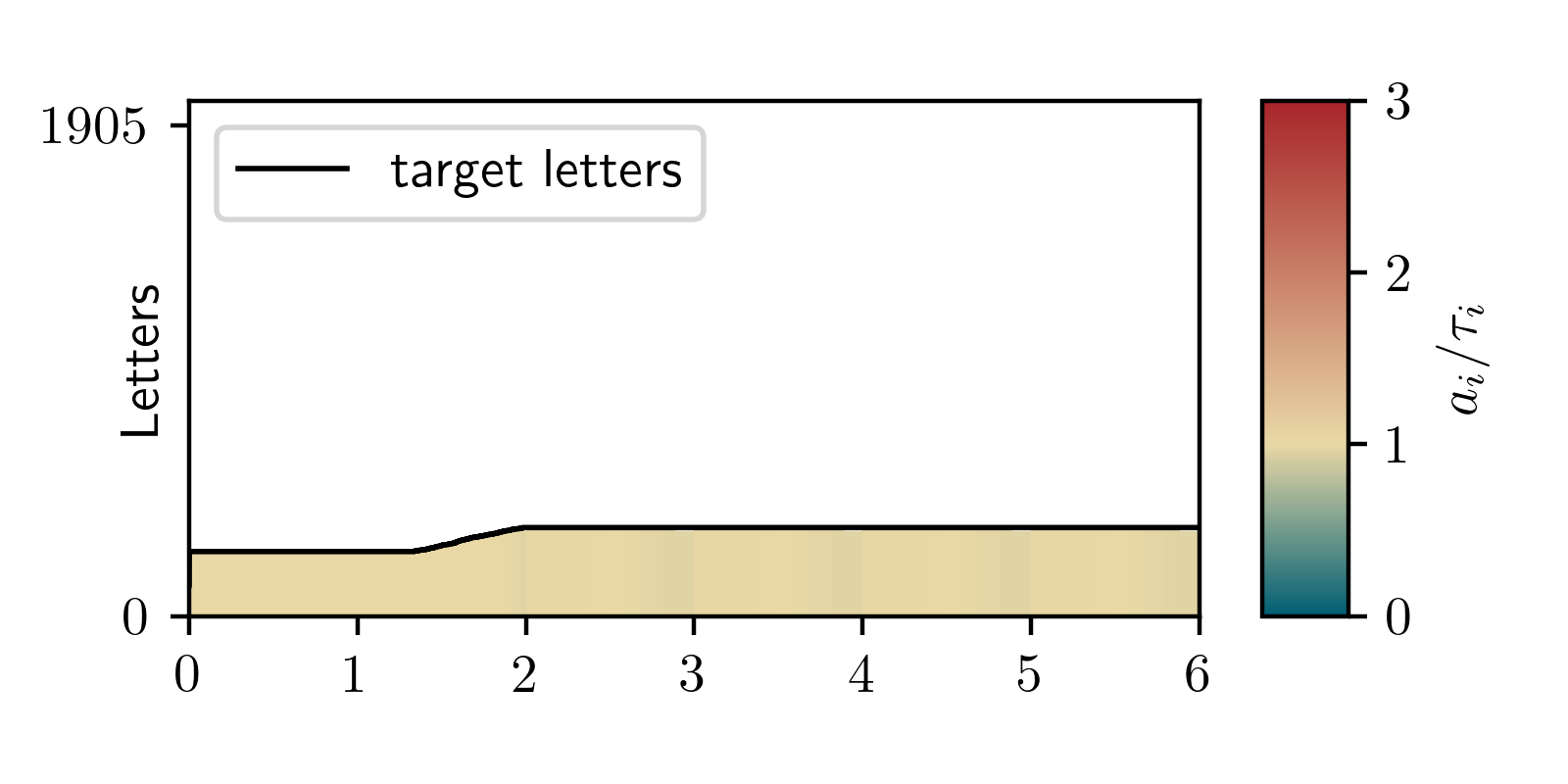}
        \caption{\greq ($t_G = 6$)}
        \label{fig:results_Bayern_Small_greedy_equal}
    \end{subfigure}
    \begin{subfigure}{0.32\textwidth}
        \includegraphics[draft=\draft, width=\linewidth]{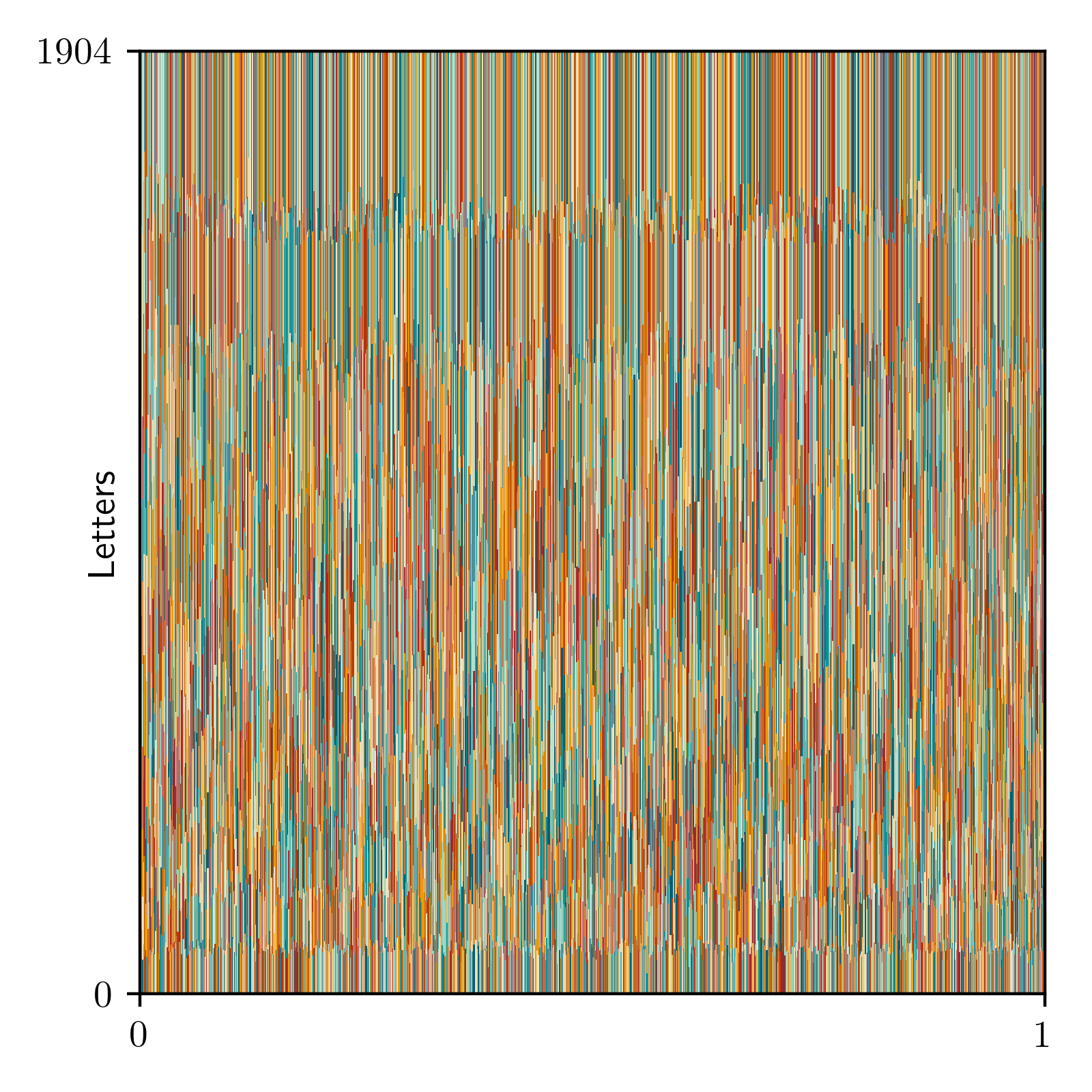}
        \includegraphics[draft=\draft, width=\linewidth]{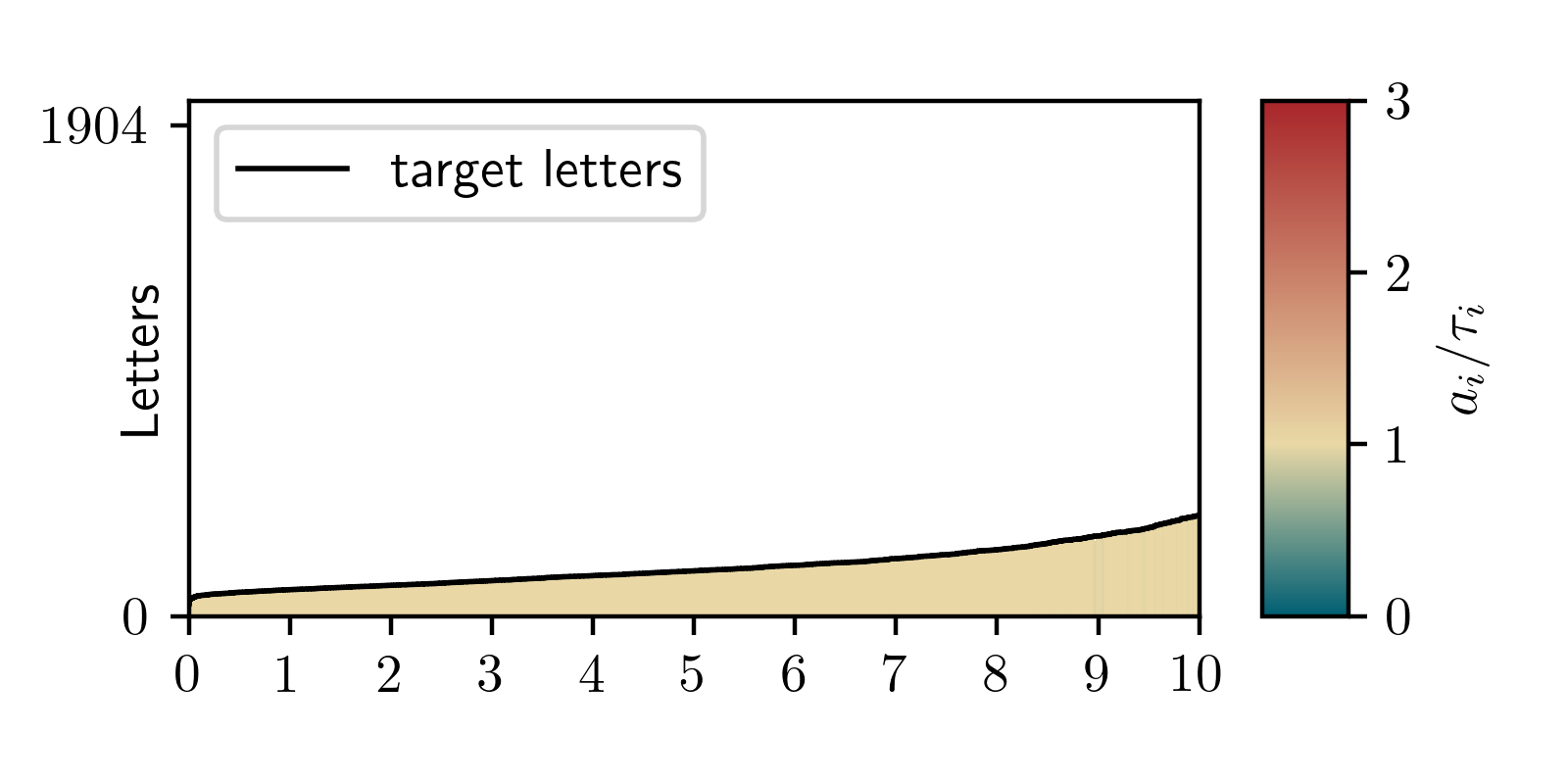}
        \caption{\colgen\!\hspace{-0.5mm}($t_G\!\!=\!\!10$)}
        \label{fig:results_Bayern_Small_column_generation}
    \end{subfigure}
    \begin{subfigure}{0.32\textwidth}
        \includegraphics[draft=\draft, width=\linewidth]{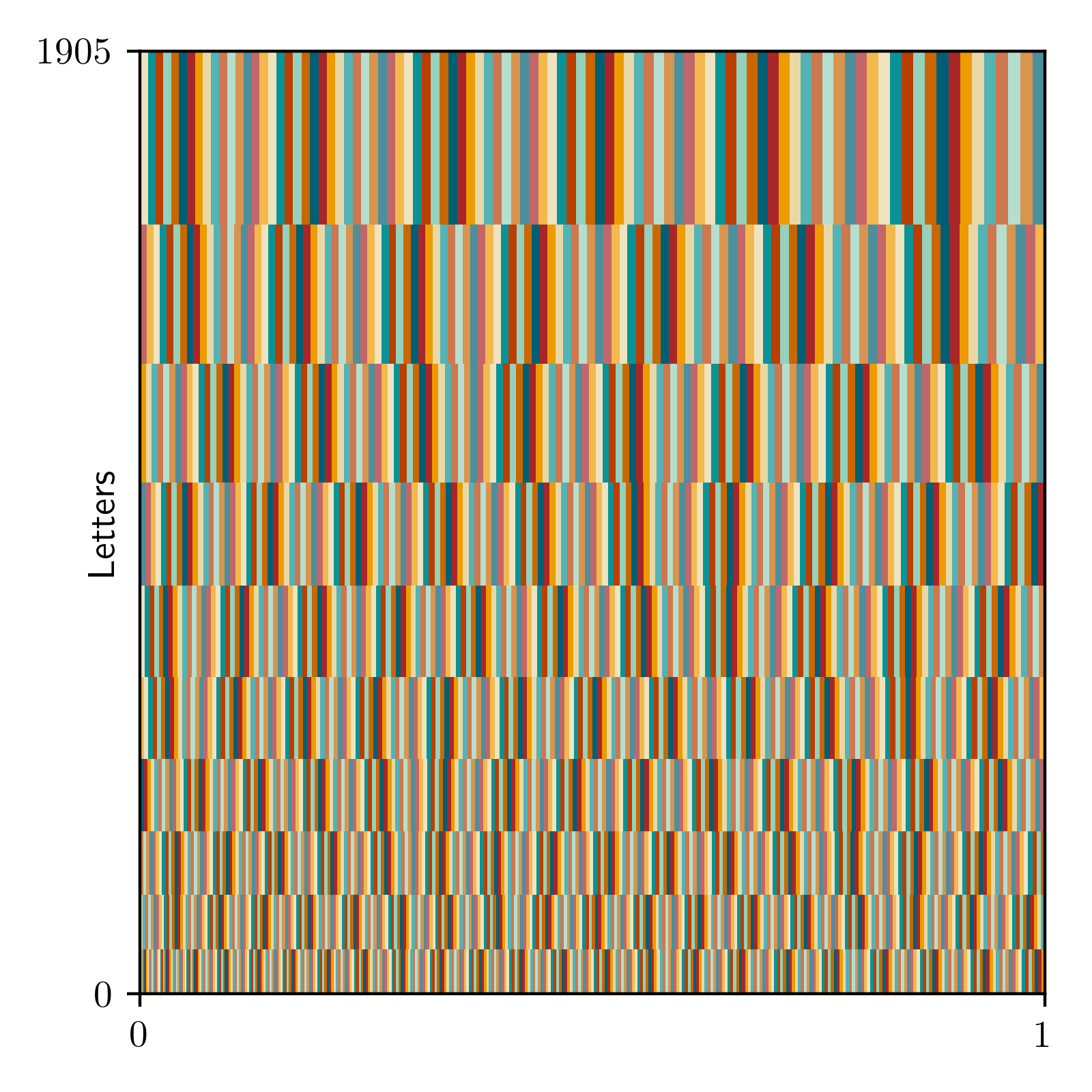}
        \includegraphics[draft=\draft, width=\linewidth]{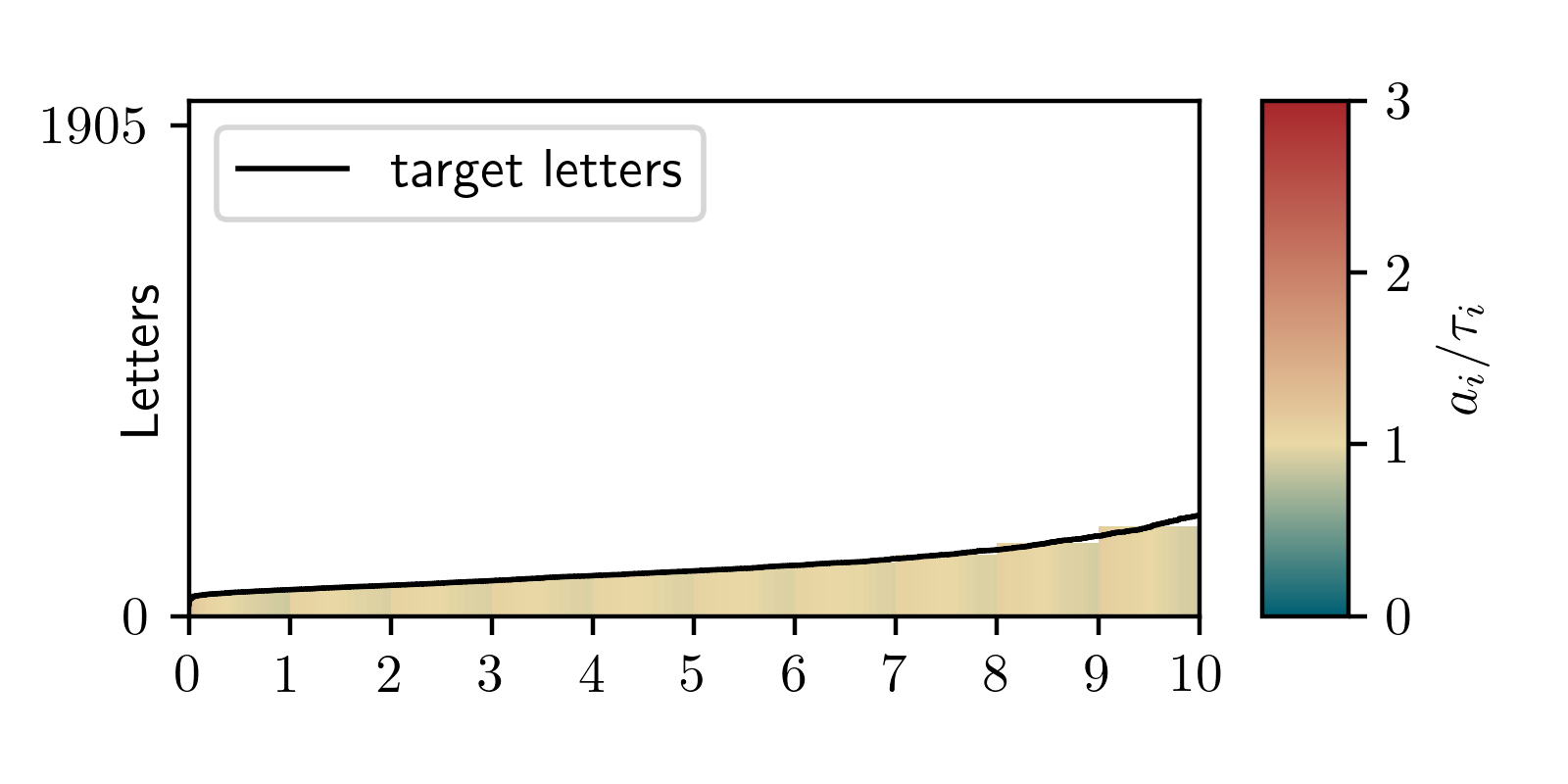}
        \caption{\buckets ($t_G = 10$)}
        \label{fig:results_Bayern_Small_greedy_bucket_fill}
    \end{subfigure}
    \caption{Small municipalities of Bayern ($\ell_G = 1905$)}
    \label{fig:results_Bayern_Small}
\end{figure} 

\begin{figure}
    \centering
    \begin{subfigure}{0.32\textwidth}
        \includegraphics[draft=\draft, width=\linewidth]{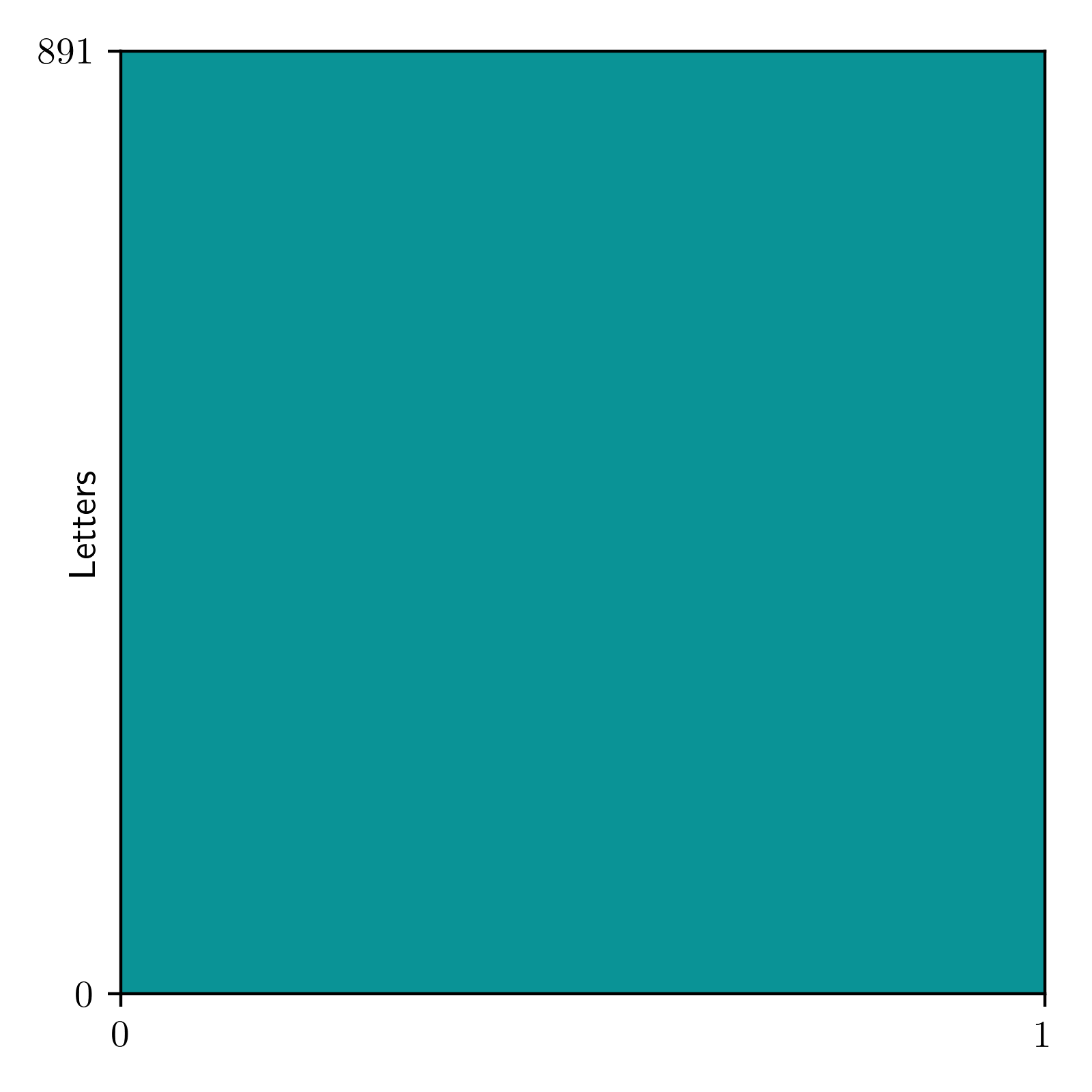}
        \includegraphics[draft=\draft, width=\linewidth]{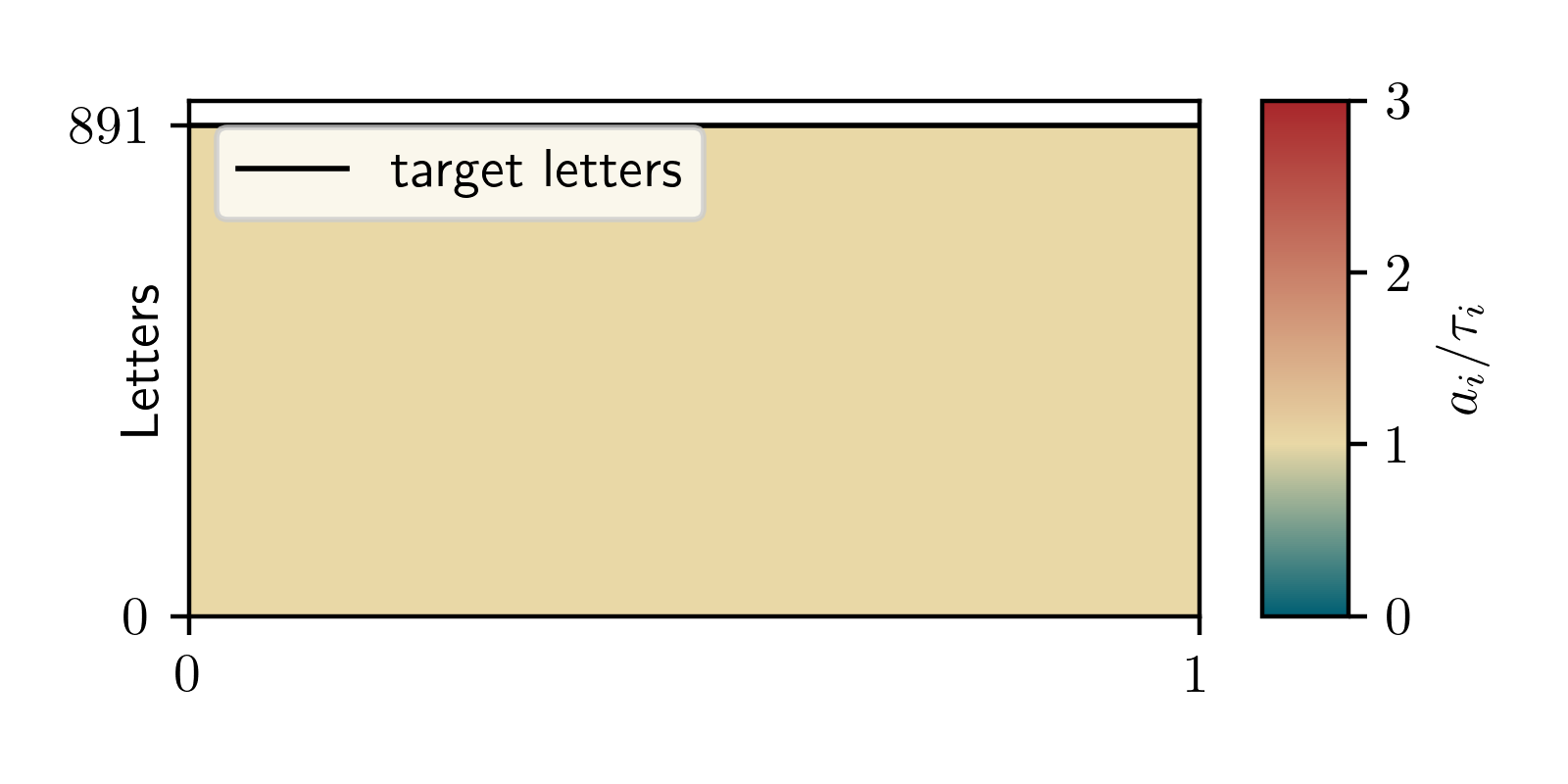}
        \caption{\greq ($t_G = 1$)}
        \label{fig:results_Berlin_Large_greedy_equal}
    \end{subfigure}
    \begin{subfigure}{0.32\textwidth}
        \includegraphics[draft=\draft, width=\linewidth]{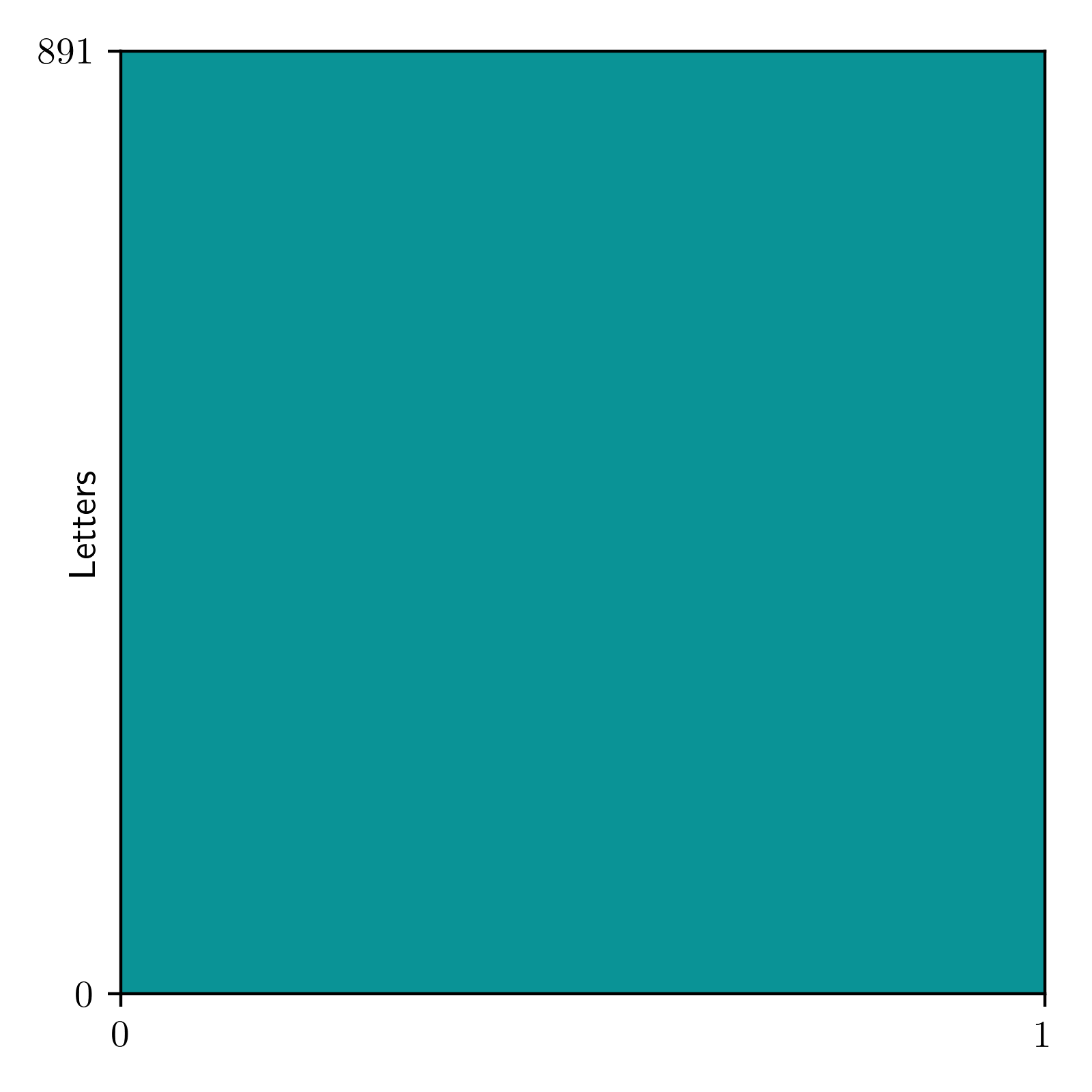}
        \includegraphics[draft=\draft, width=\linewidth]{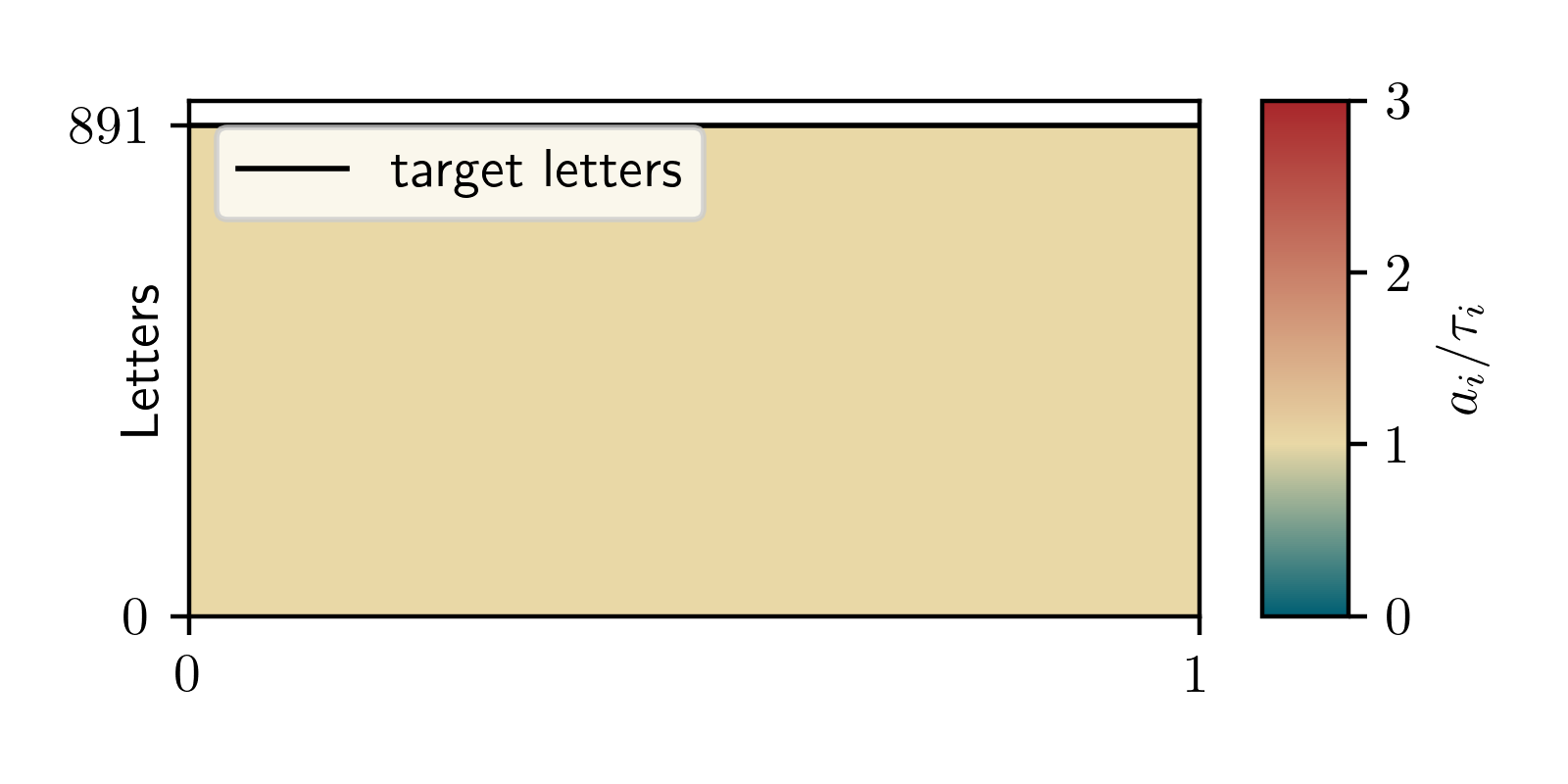}
        \caption{\colgen ($t_G\!=\!1$)}
        \label{fig:results_Berlin_Large_column_generation}
    \end{subfigure}
    \begin{subfigure}{0.32\textwidth}
        \includegraphics[draft=\draft, width=\linewidth]{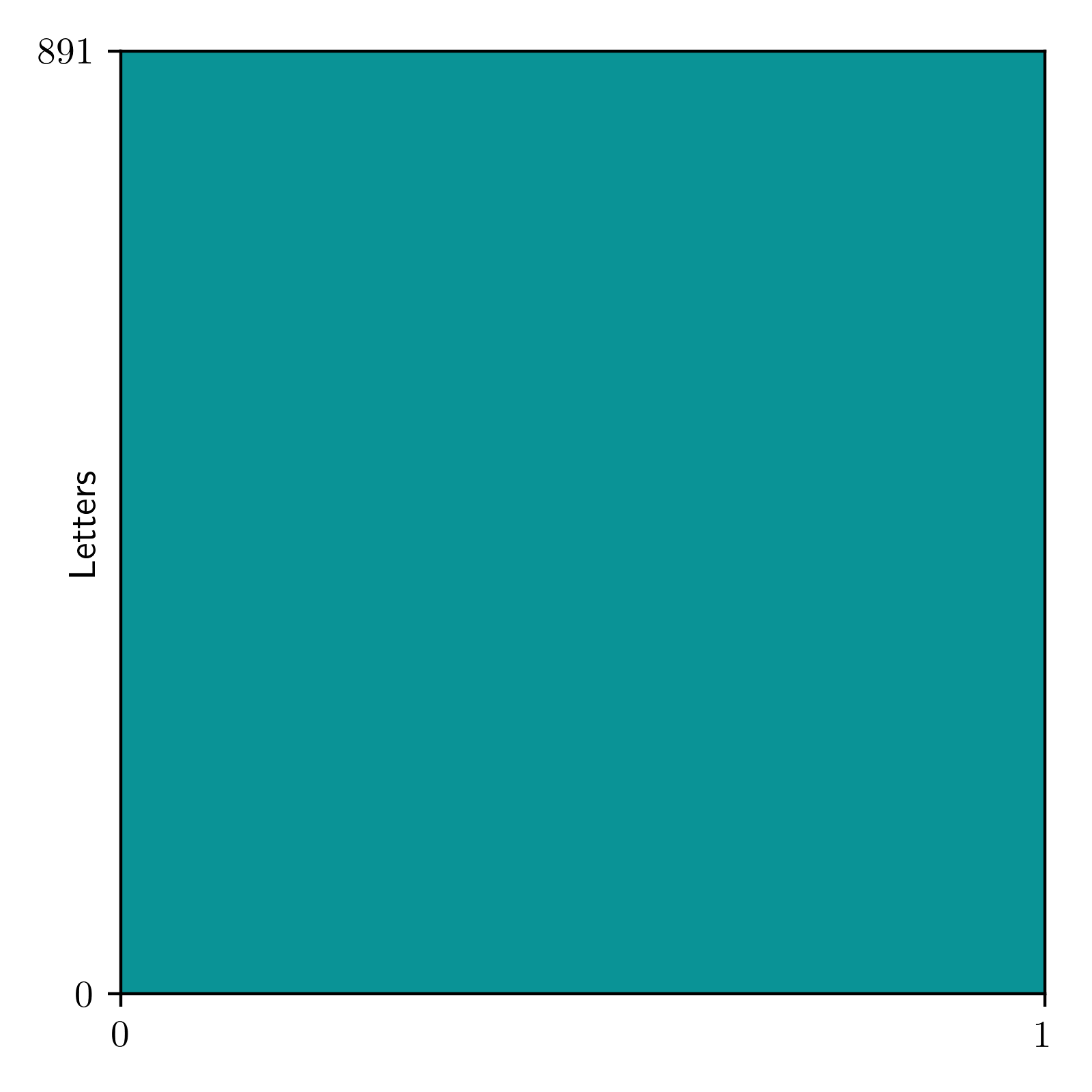}
        \includegraphics[draft=\draft, width=\linewidth]{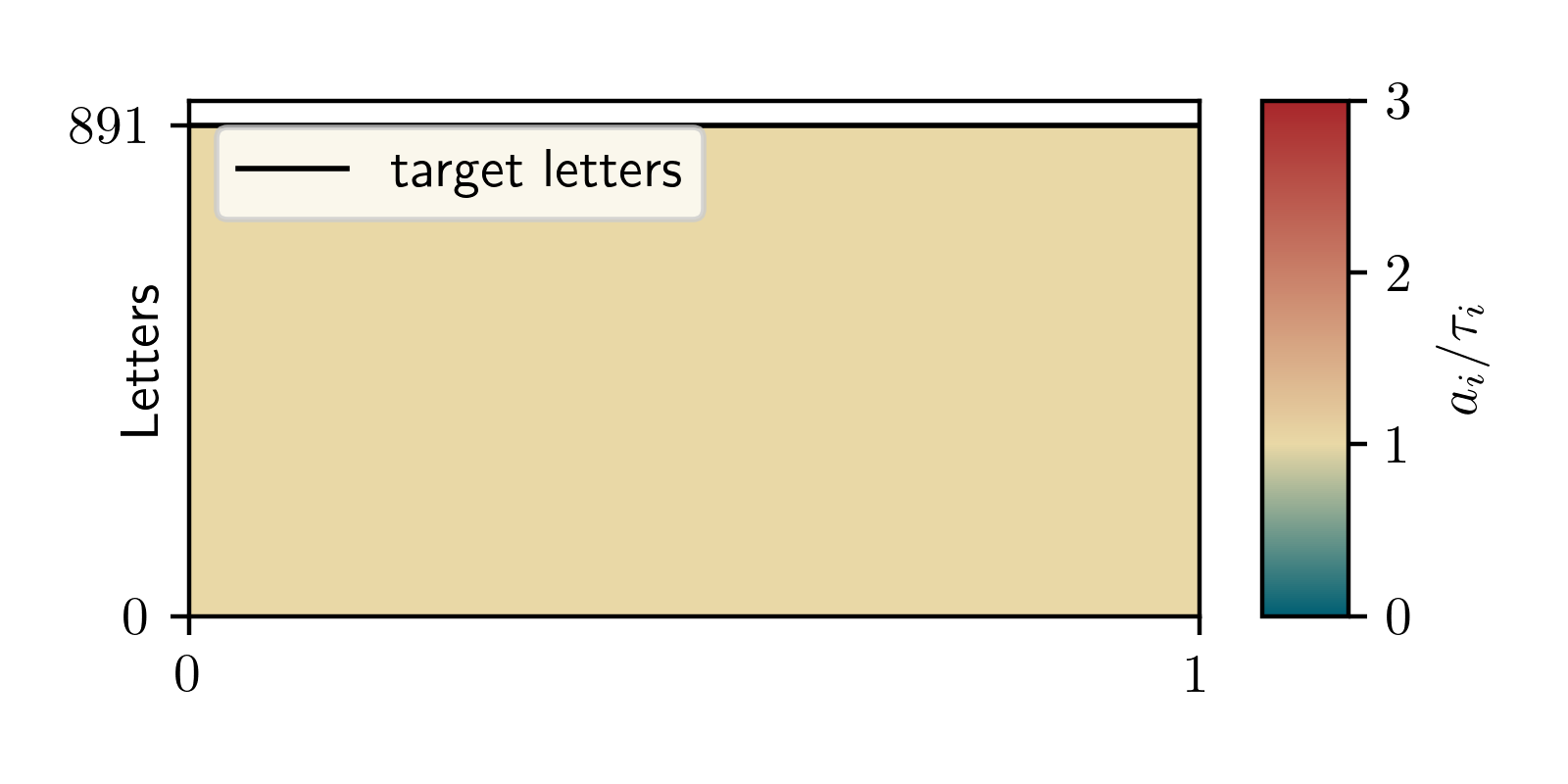}
        \caption{\buckets ($t_G = 1$)}
        \label{fig:results_Berlin_Large_greedy_bucket_fill}
    \end{subfigure}
    \caption{Large municipalities of Berlin ($\ell_G = 891$)}
    \label{fig:results_Berlin_Large}
\end{figure}

\begin{figure}
    \centering
    \begin{subfigure}{0.32\textwidth}
        \includegraphics[draft=\draft, width=\linewidth]{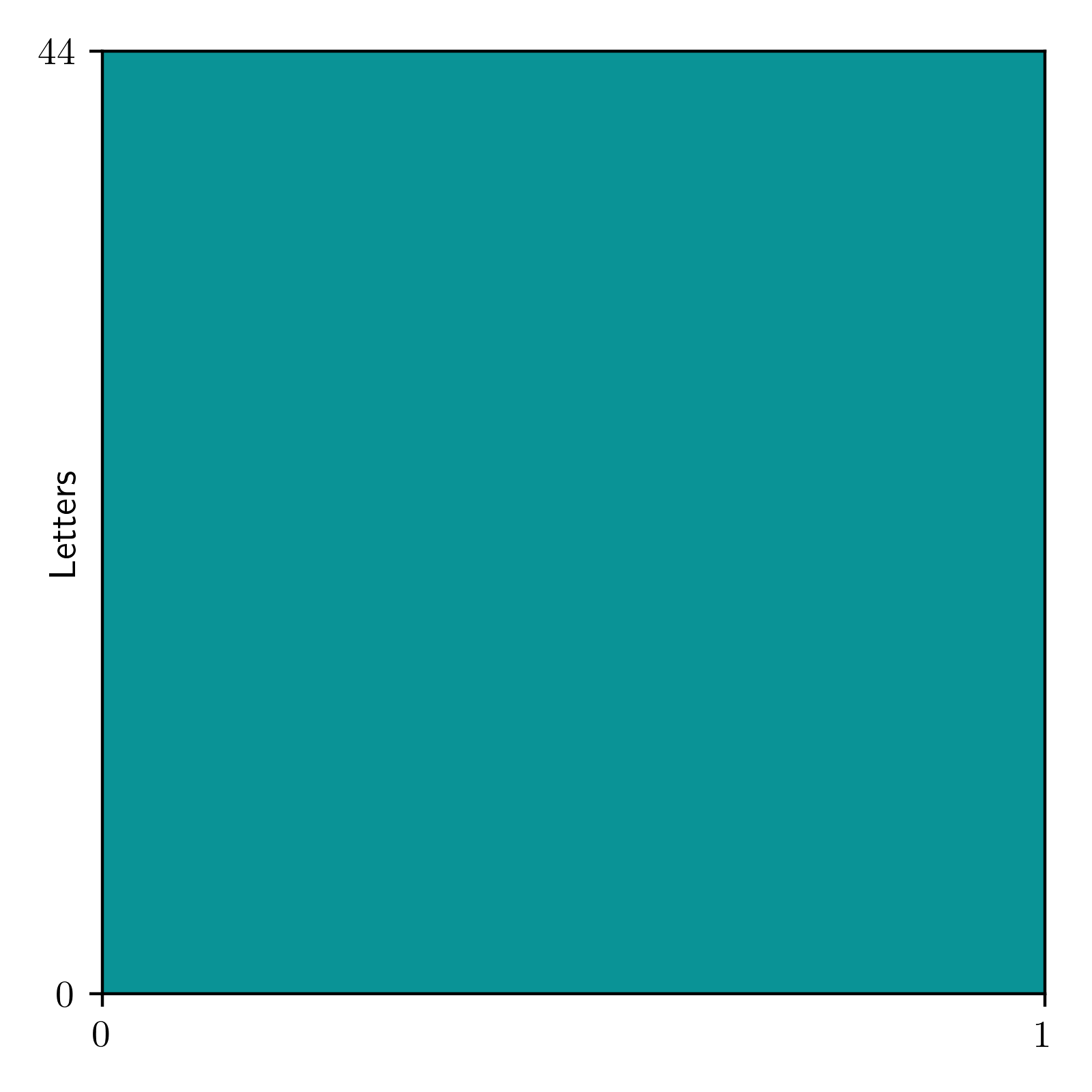}
        \includegraphics[draft=\draft, width=\linewidth]{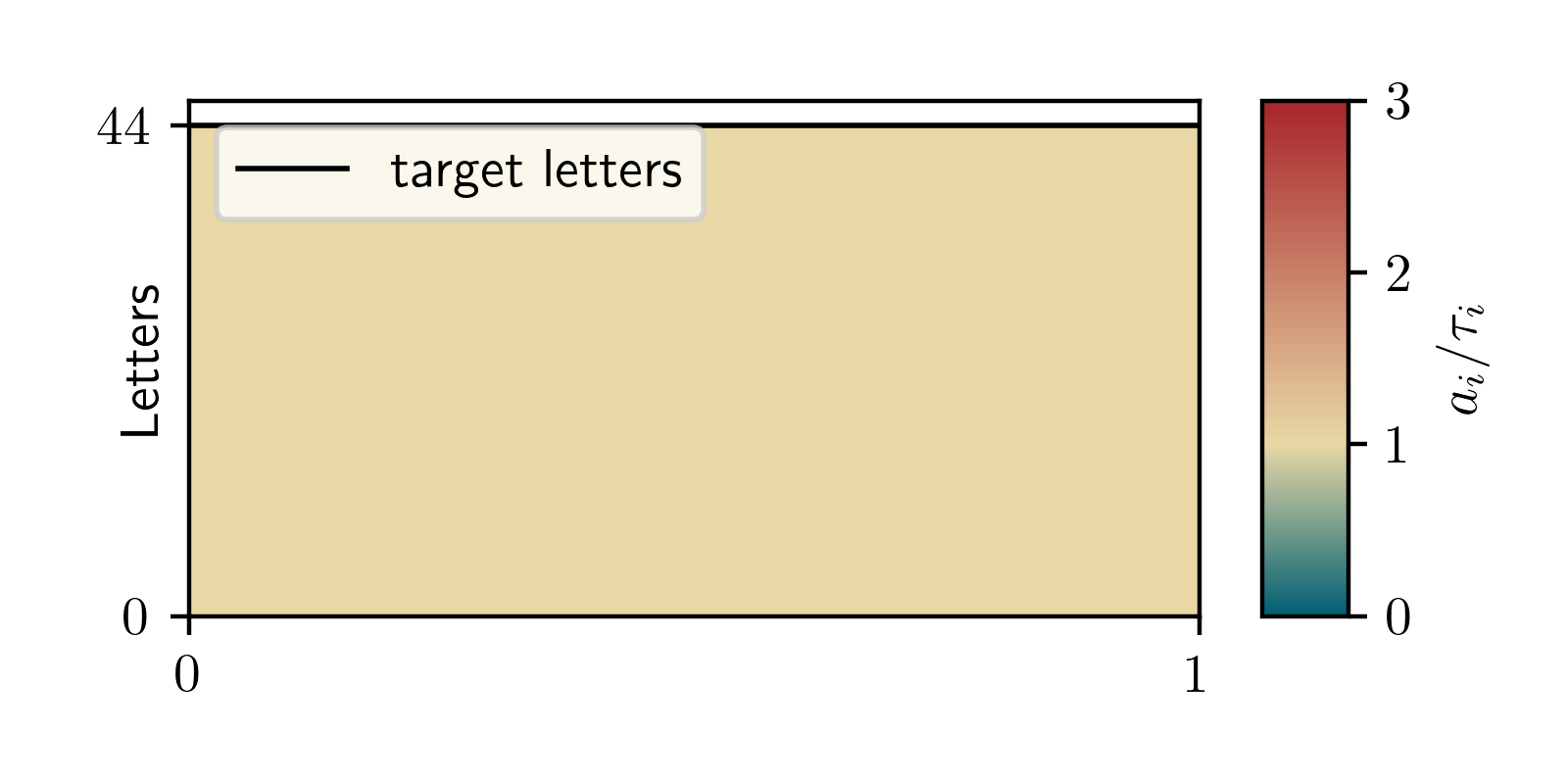}
        \caption{\greq ($t_G = 1$)}
        \label{fig:results_Brandenburg_Large_greedy_equal}
    \end{subfigure}
    \begin{subfigure}{0.32\textwidth}
        \includegraphics[draft=\draft, width=\linewidth]{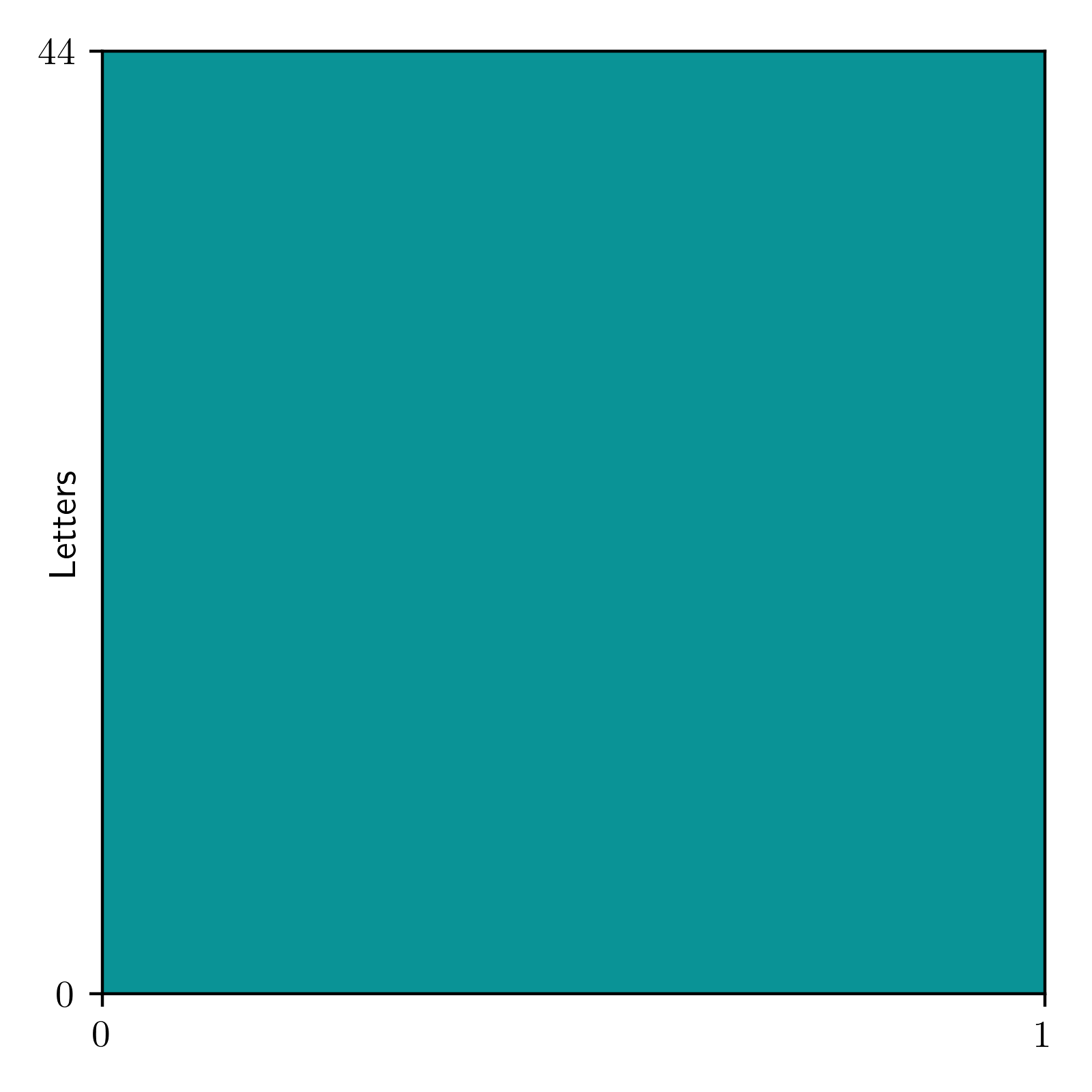}
        \includegraphics[draft=\draft, width=\linewidth]{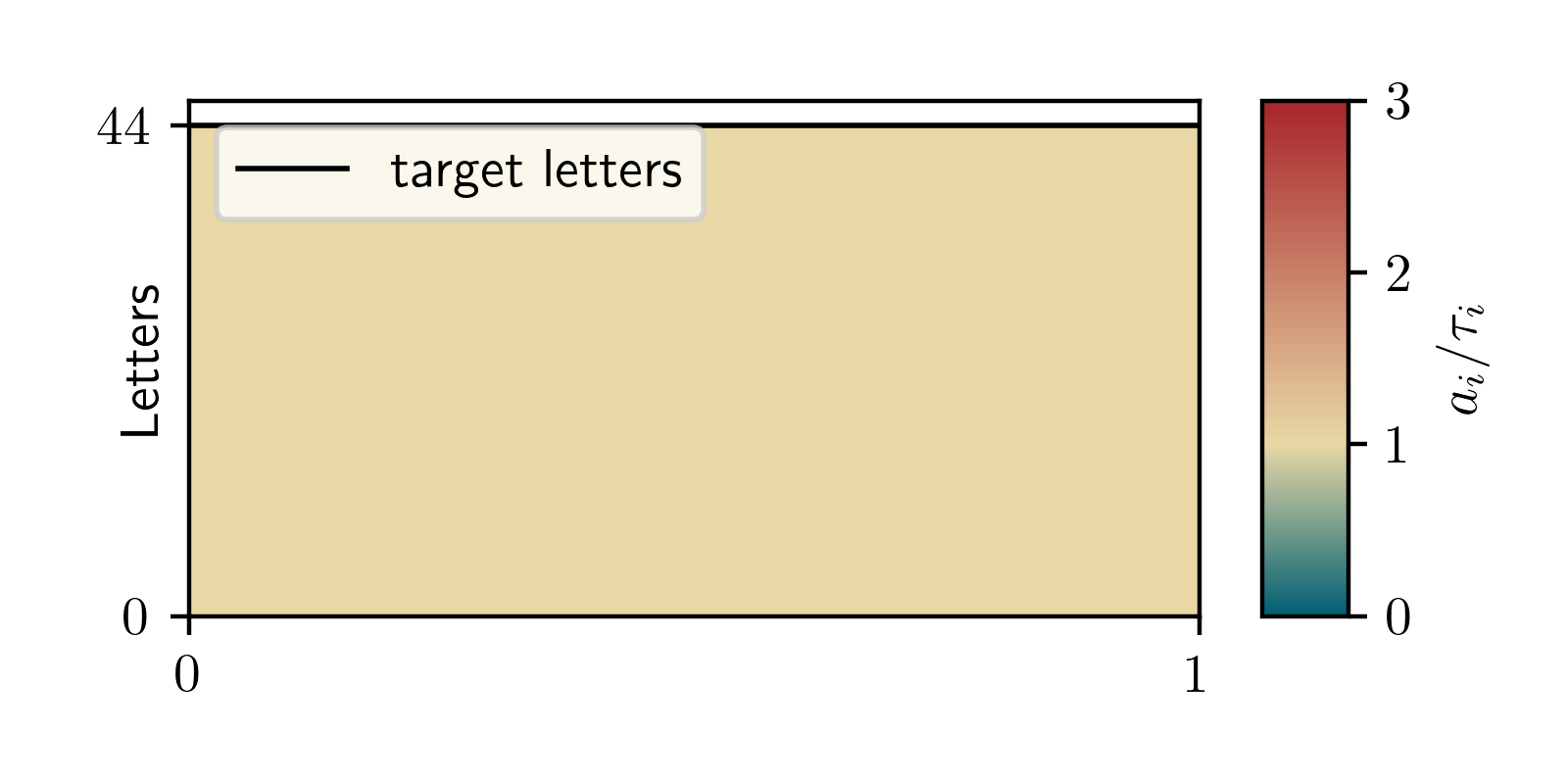}
        \caption{\colgen ($t_G\!=\!1$)}
        \label{fig:results_Brandenburg_Large_column_generation}
    \end{subfigure}
    \begin{subfigure}{0.32\textwidth}
        \includegraphics[draft=\draft, width=\linewidth]{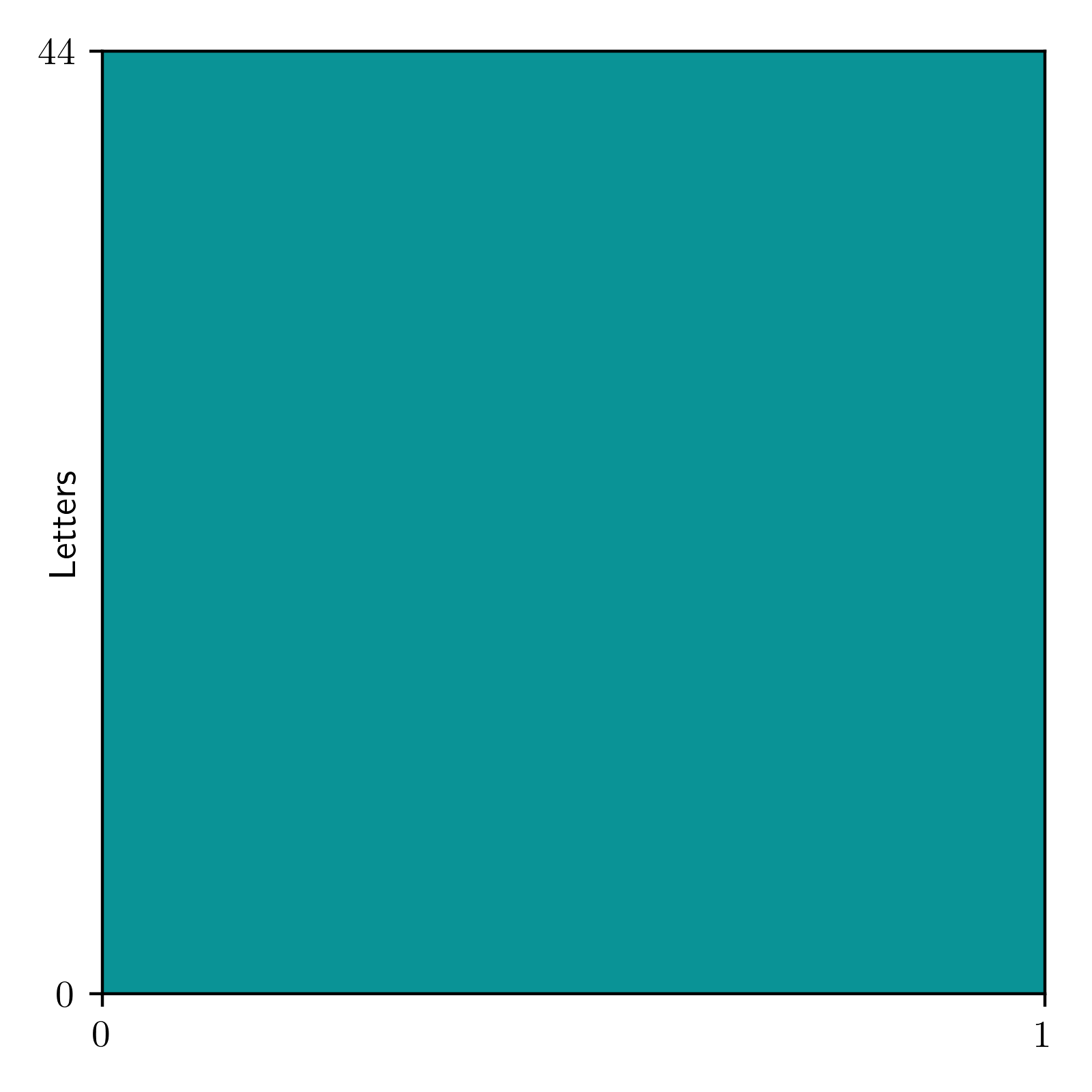}
        \includegraphics[draft=\draft, width=\linewidth]{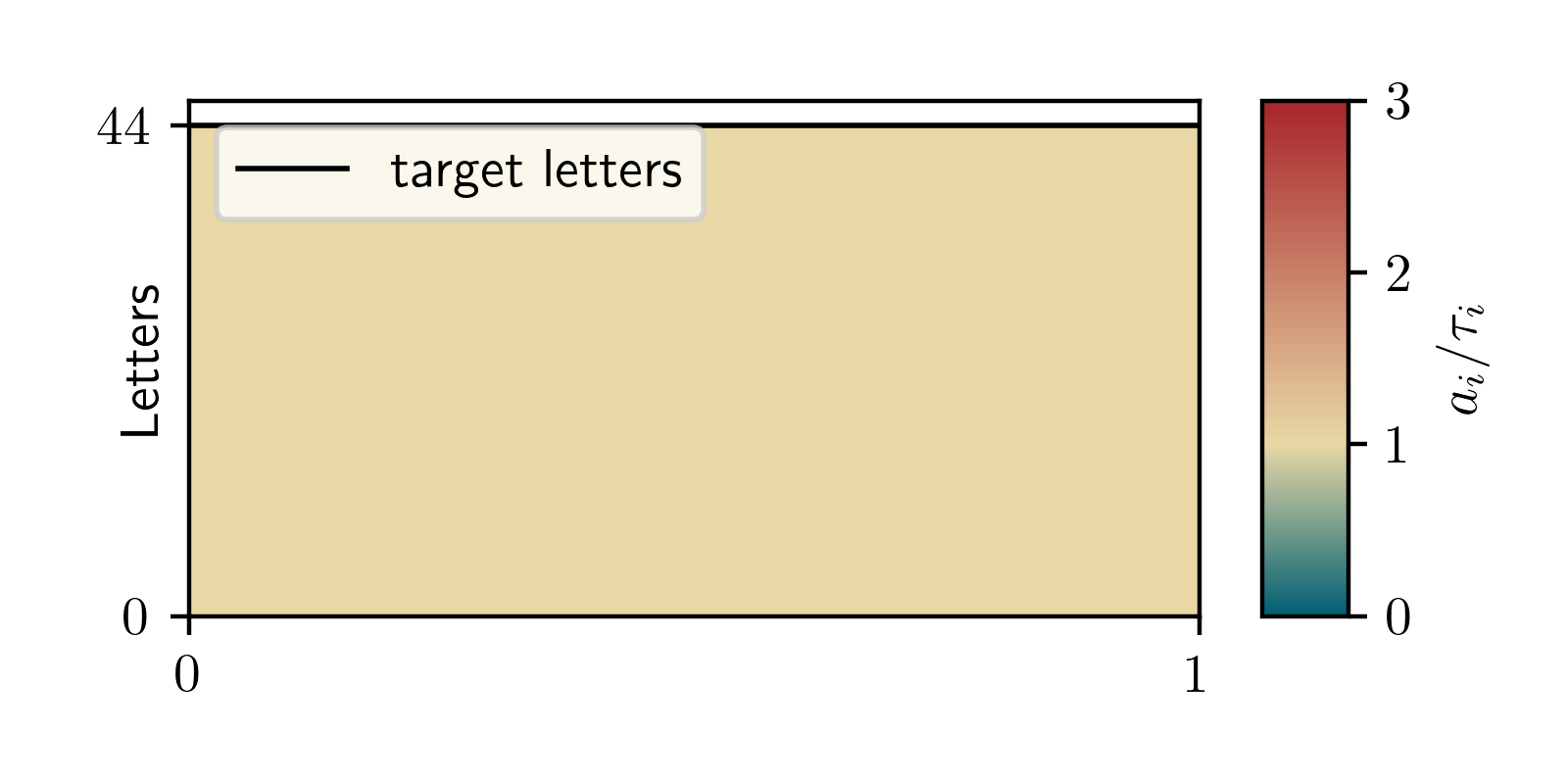}
        \caption{\buckets ($t_G = 1$)}
        \label{fig:results_Brandenburg_Large_greedy_bucket_fill}
    \end{subfigure}
    \caption{Large municipalities of Brandenburg ($\ell_G = 44$)}
    \label{fig:results_Brandenburg_Large}
\end{figure} 

\begin{figure}
    \centering
    \begin{subfigure}{0.32\textwidth}
        \includegraphics[draft=\draft, width=\linewidth]{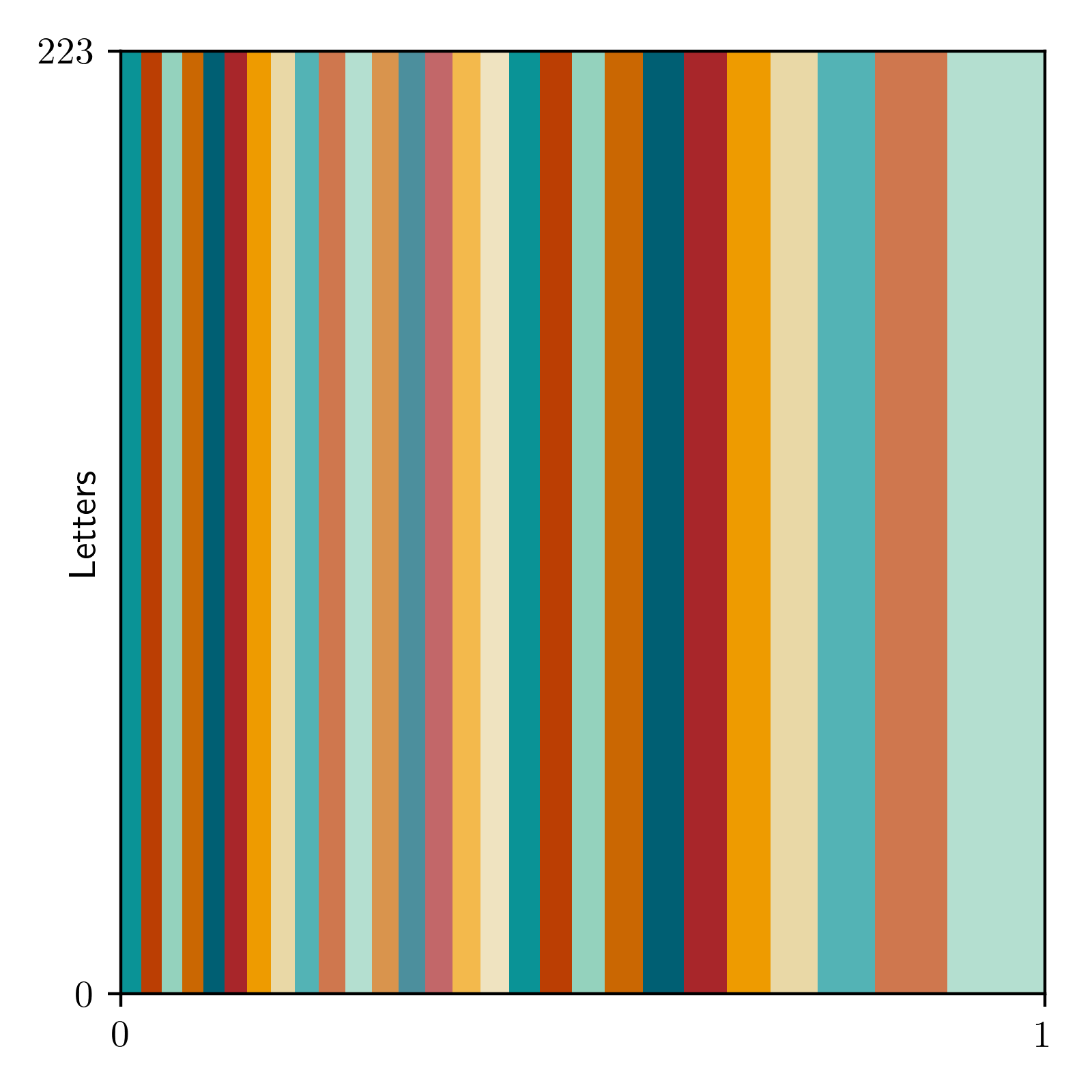}
        \includegraphics[draft=\draft, width=\linewidth]{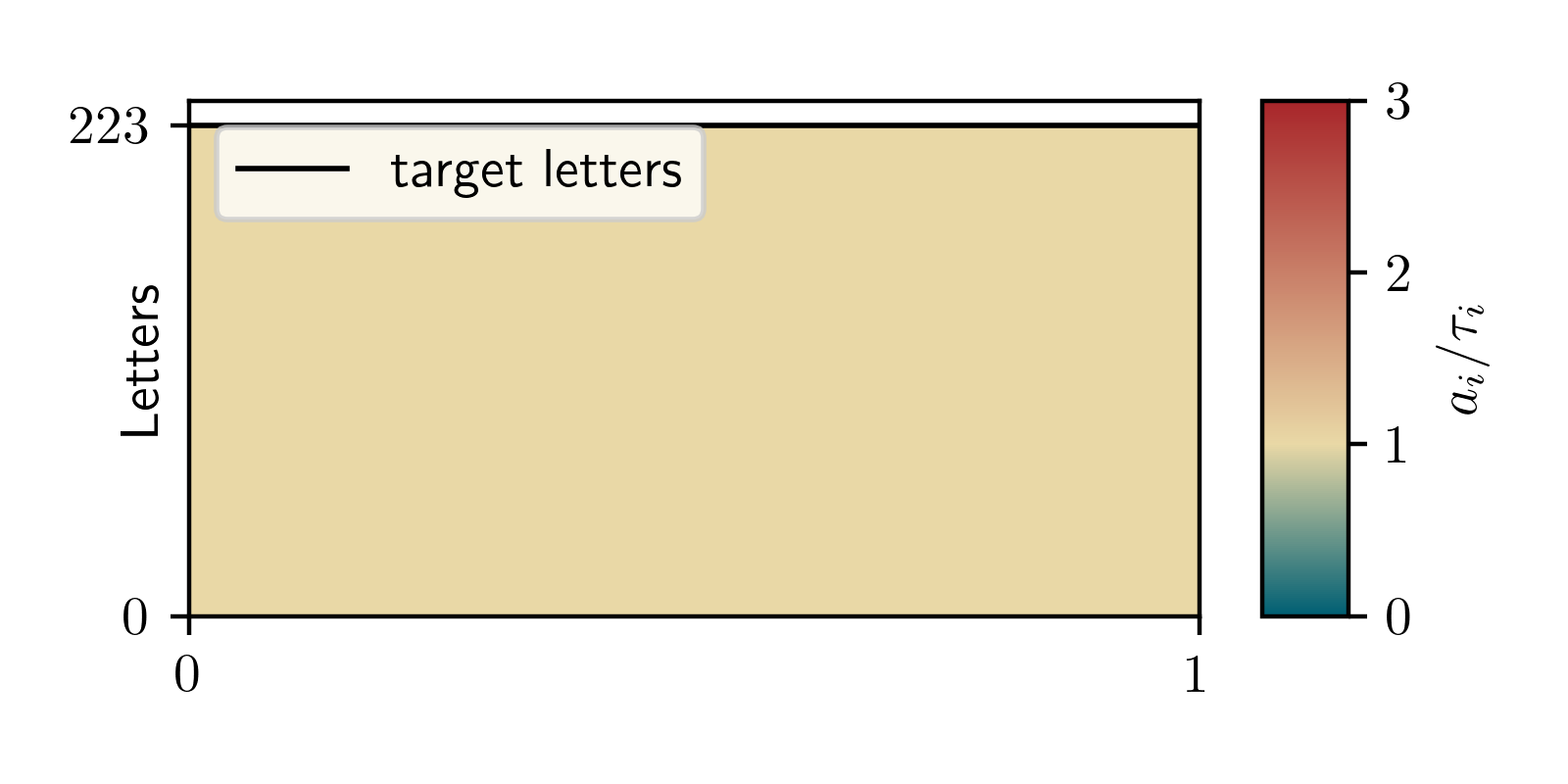}
        \caption{\greq ($t_G = 1$)}
        \label{fig:results_Brandenburg_Medium_greedy_equal}
    \end{subfigure}
    \begin{subfigure}{0.32\textwidth}
        \includegraphics[draft=\draft, width=\linewidth]{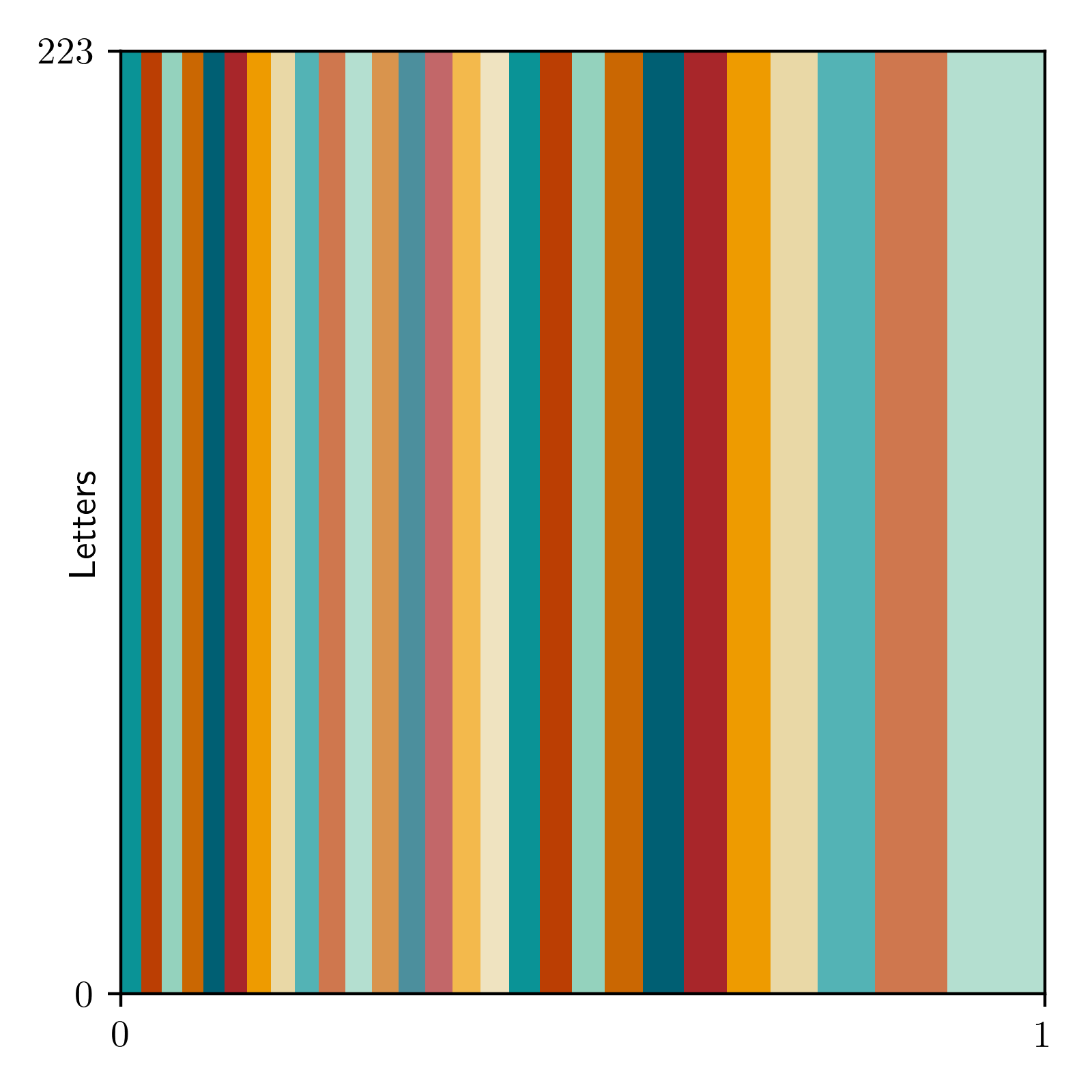}
        \includegraphics[draft=\draft, width=\linewidth]{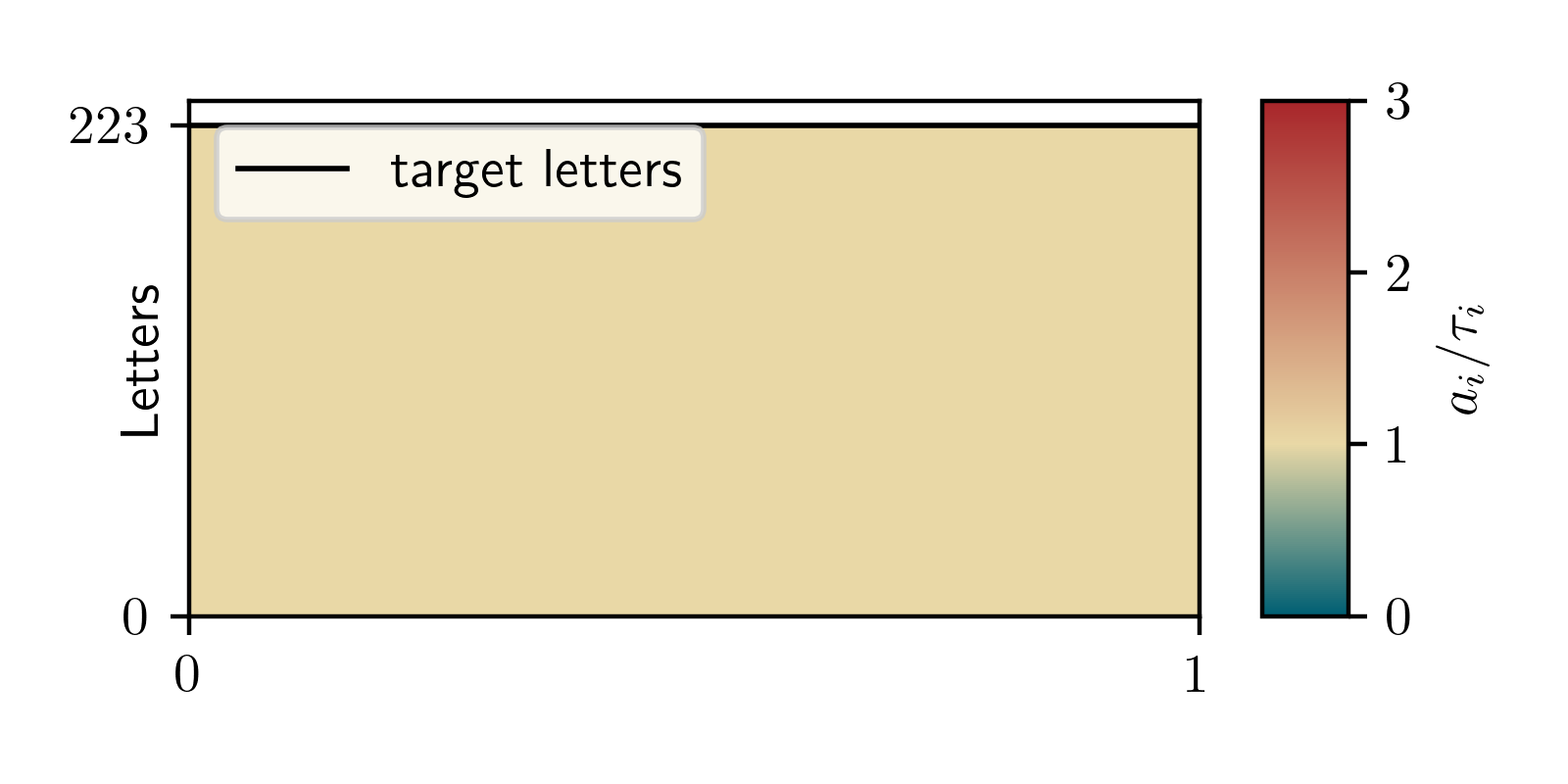}
        \caption{\colgen ($t_G\!=\!1$)}
        \label{fig:results_Brandenburg_Medium_column_generation}
    \end{subfigure}
    \begin{subfigure}{0.32\textwidth}
        \includegraphics[draft=\draft, width=\linewidth]{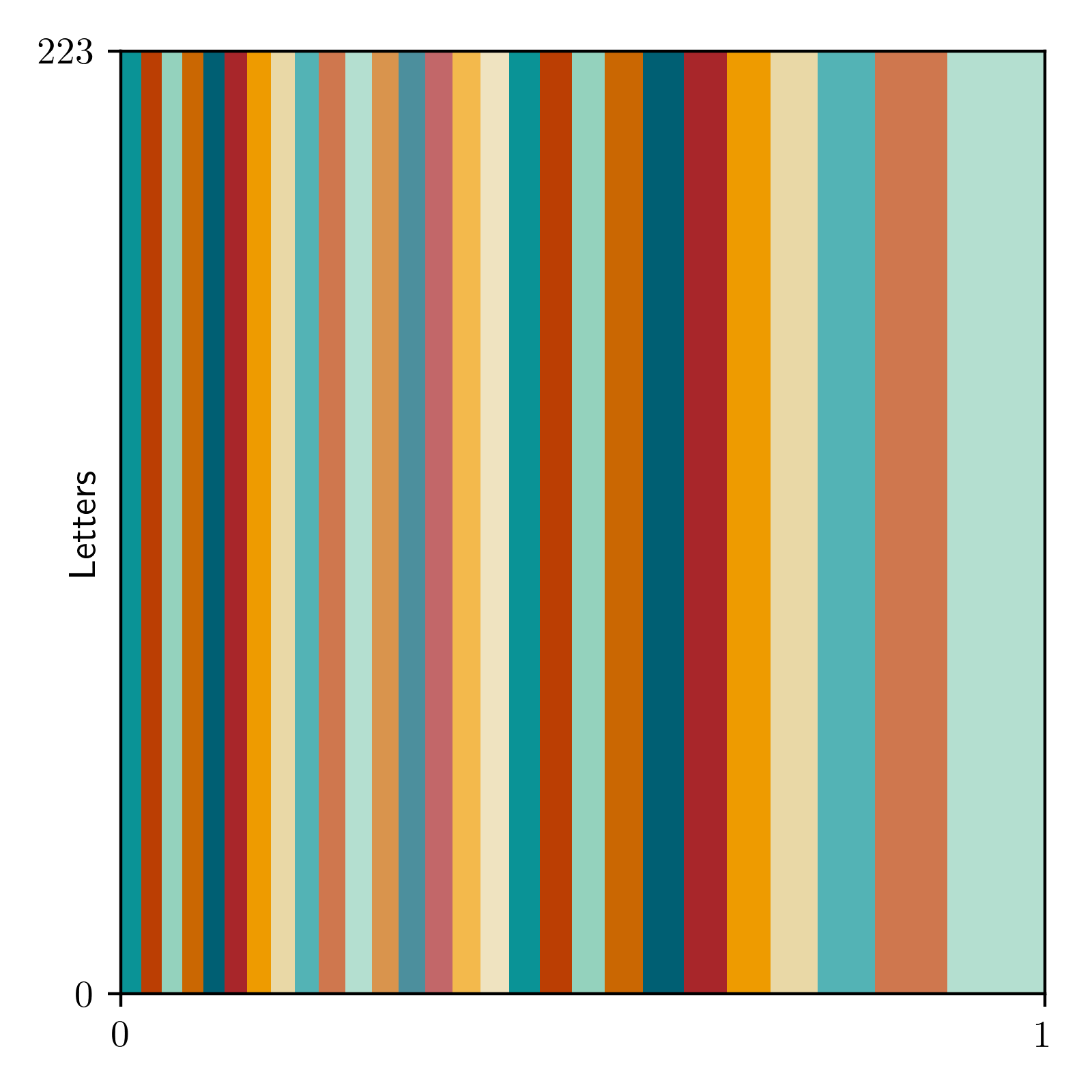}
        \includegraphics[draft=\draft, width=\linewidth]{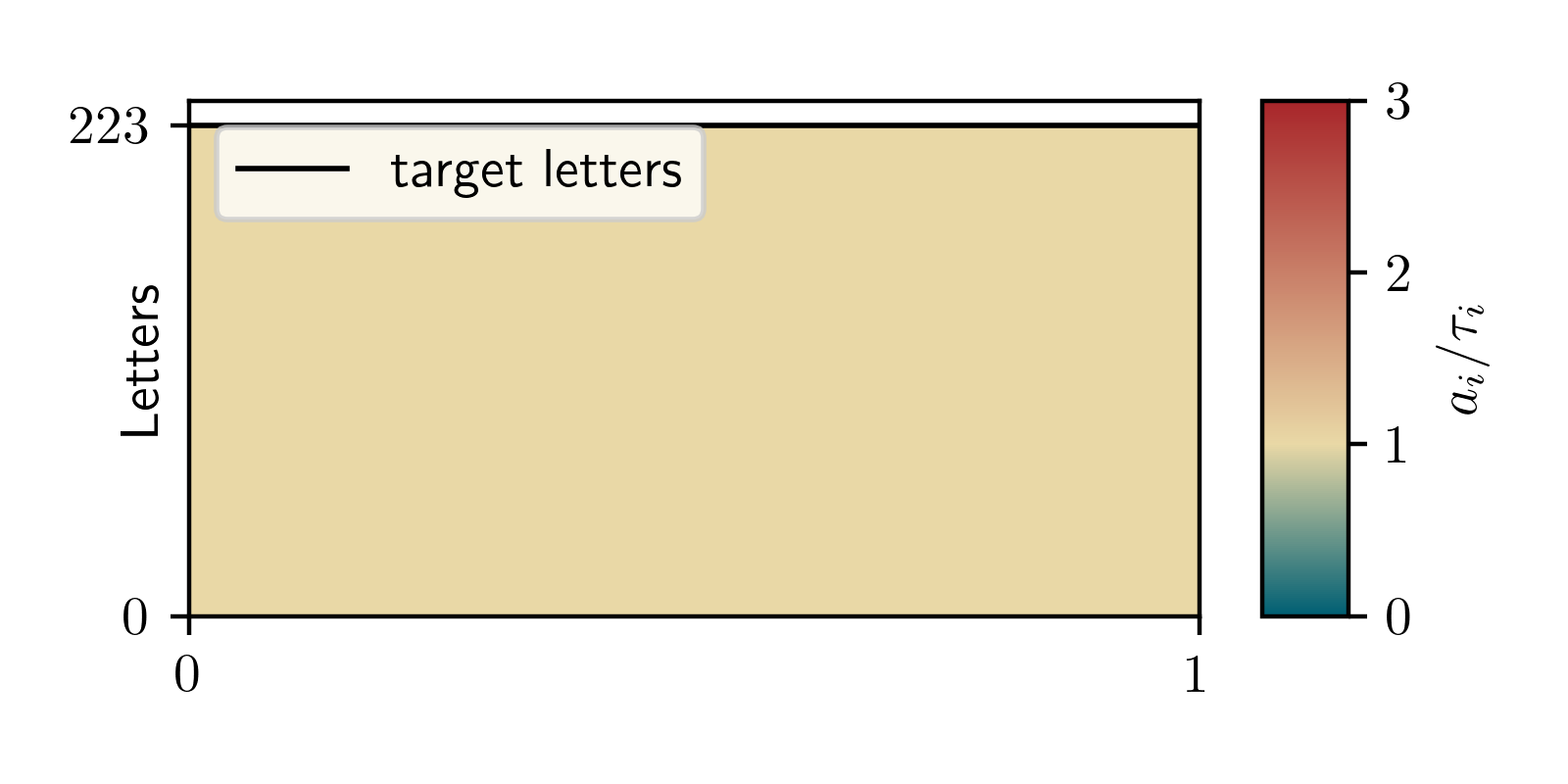}
        \caption{\buckets ($t_G = 1$)}
        \label{fig:results_Brandenburg_Medium_greedy_bucket_fill}
    \end{subfigure}
    \caption{Medium municipalities of Brandenburg ($\ell_G = 223$)}
    \label{fig:results_Brandenburg_Medium}
\end{figure} 

\begin{figure}
    \centering
    \begin{subfigure}{0.32\textwidth}
        \includegraphics[draft=\draft, width=\linewidth]{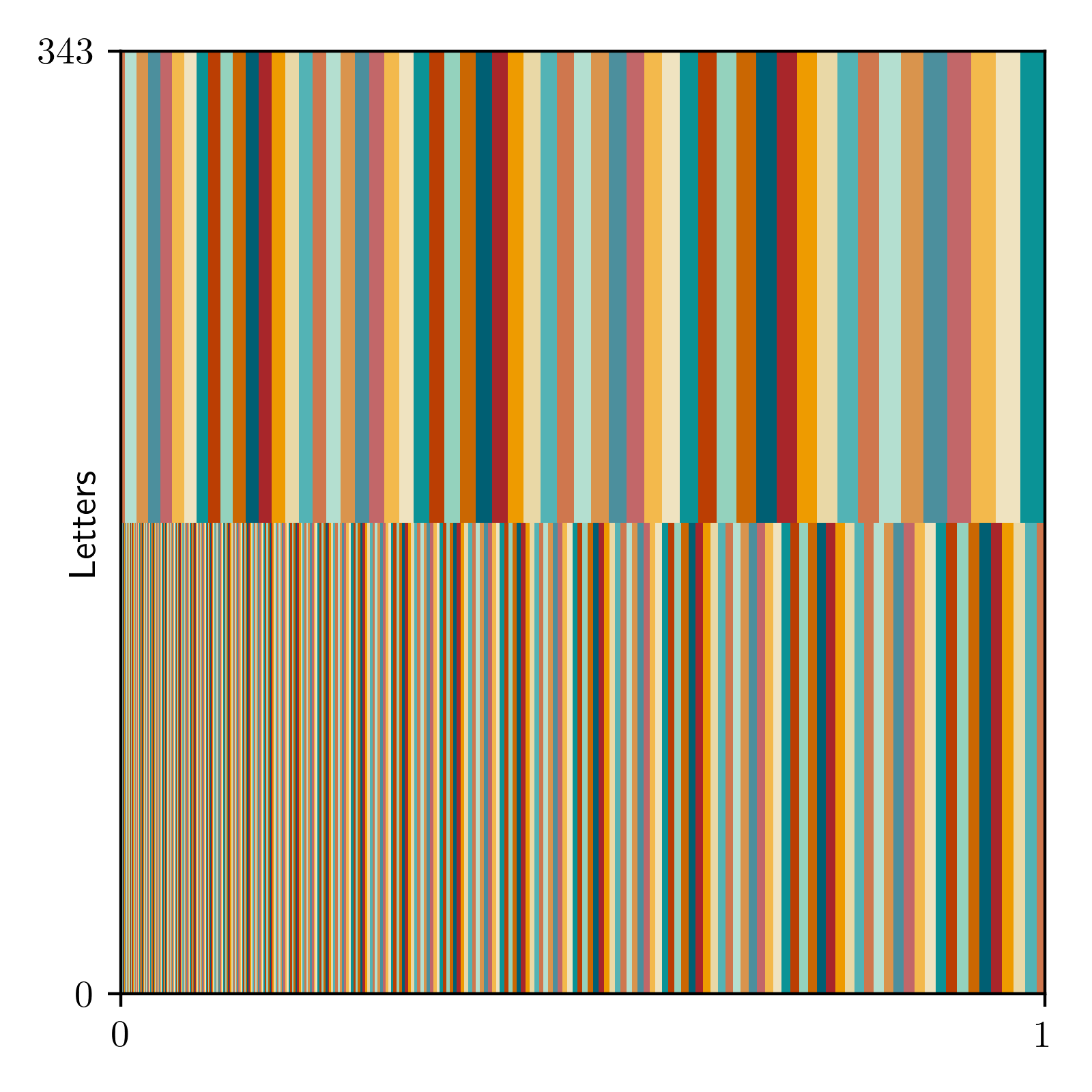}
        \includegraphics[draft=\draft, width=\linewidth]{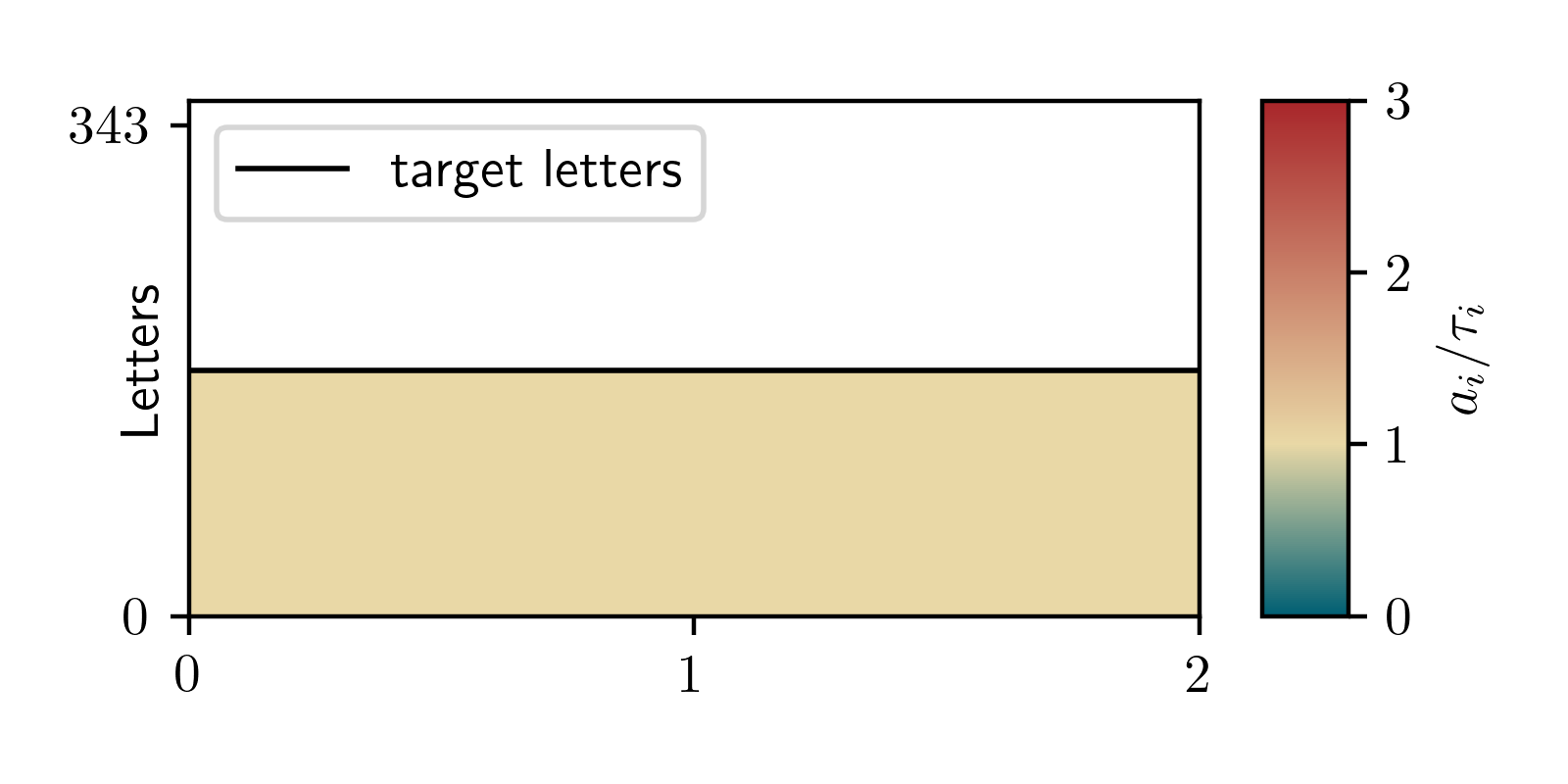}
        \caption{\greq ($t_G = 2$)}
        \label{fig:results_Brandenburg_Small_greedy_equal}
    \end{subfigure}
    \begin{subfigure}{0.32\textwidth}
        \includegraphics[draft=\draft, width=\linewidth]{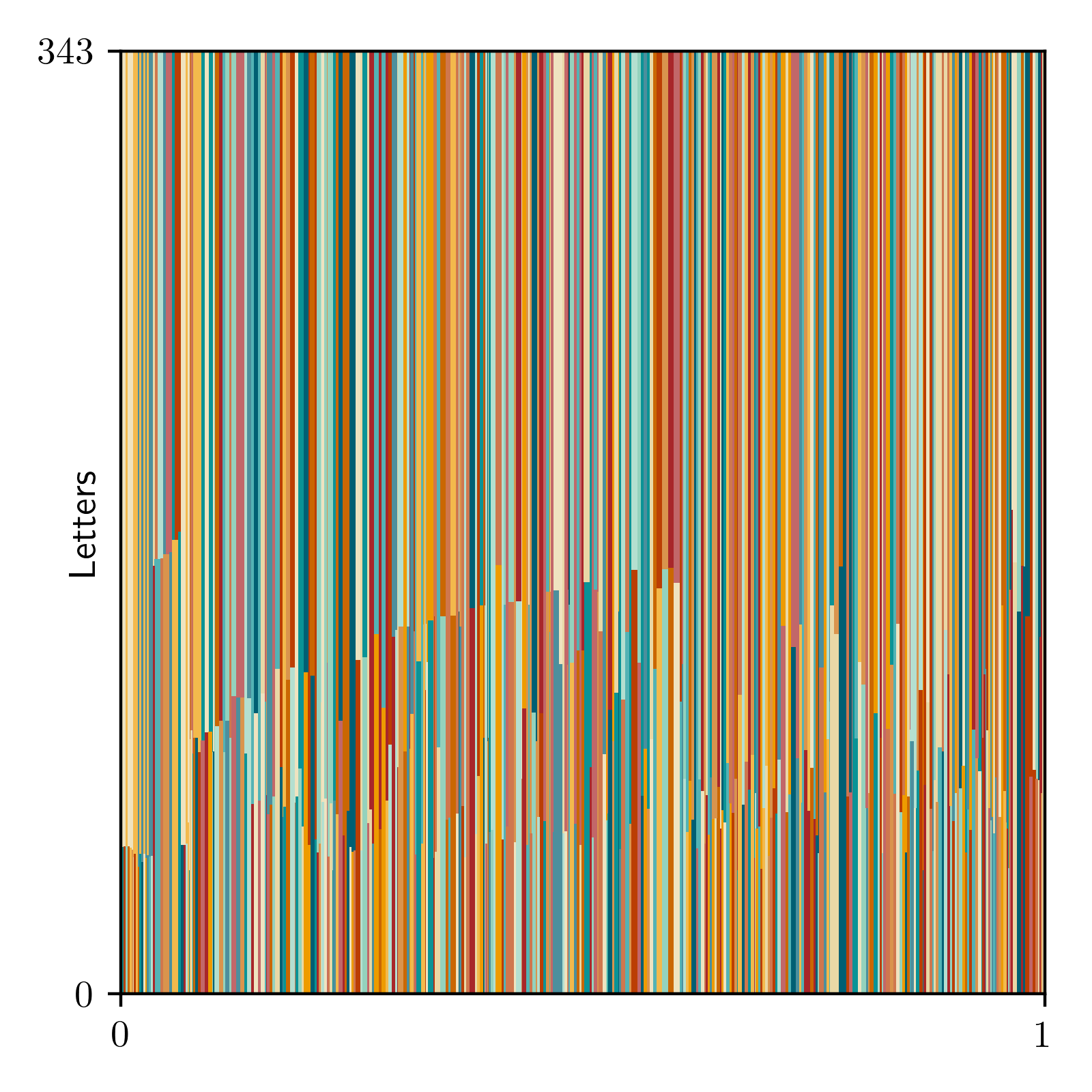}
        \includegraphics[draft=\draft, width=\linewidth]{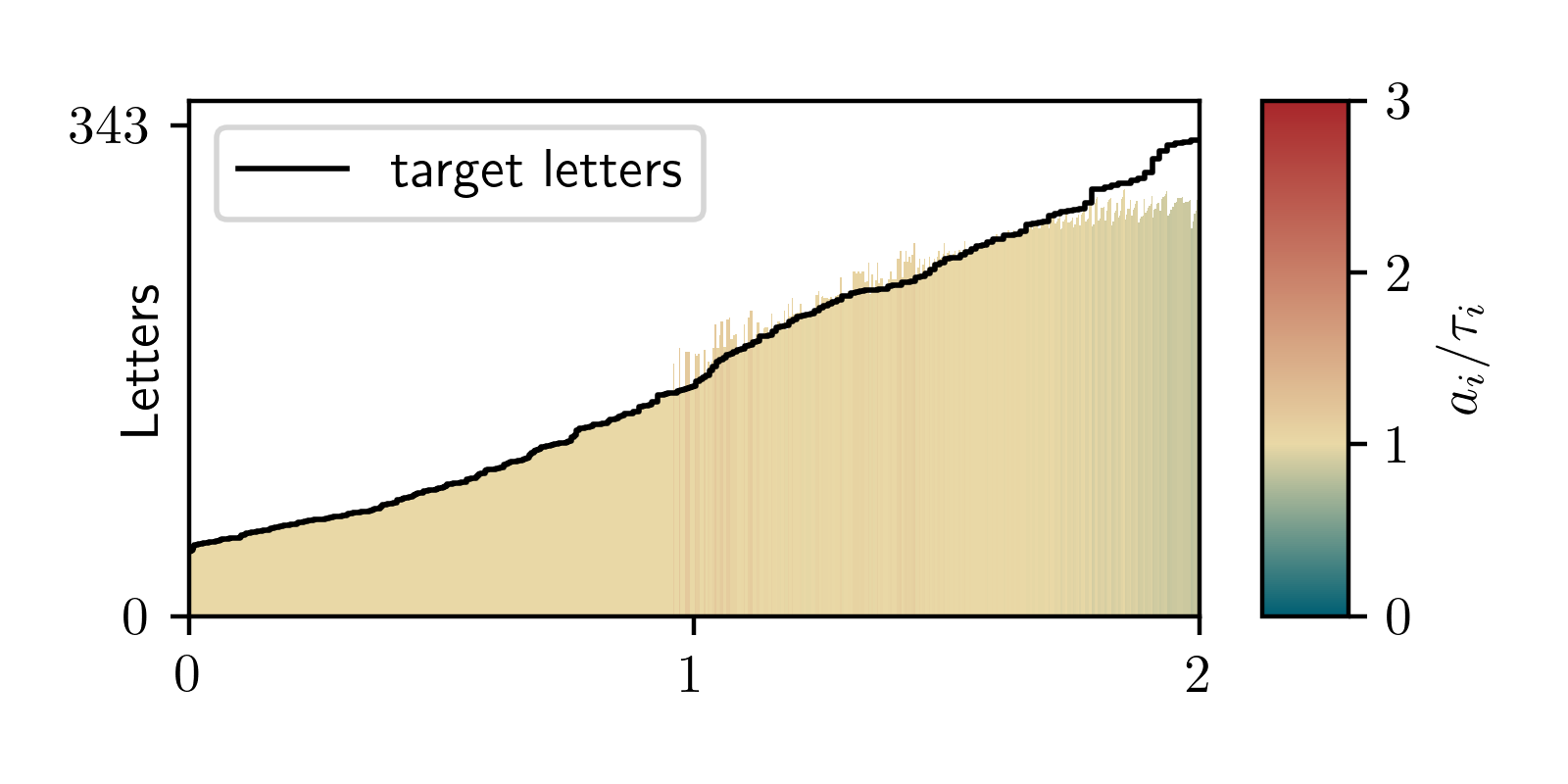}
        \caption{\colgen ($t_G\!=\!2$)}
        \label{fig:results_Brandenburg_Small_column_generation}
    \end{subfigure}
    \begin{subfigure}{0.32\textwidth}
        \includegraphics[draft=\draft, width=\linewidth]{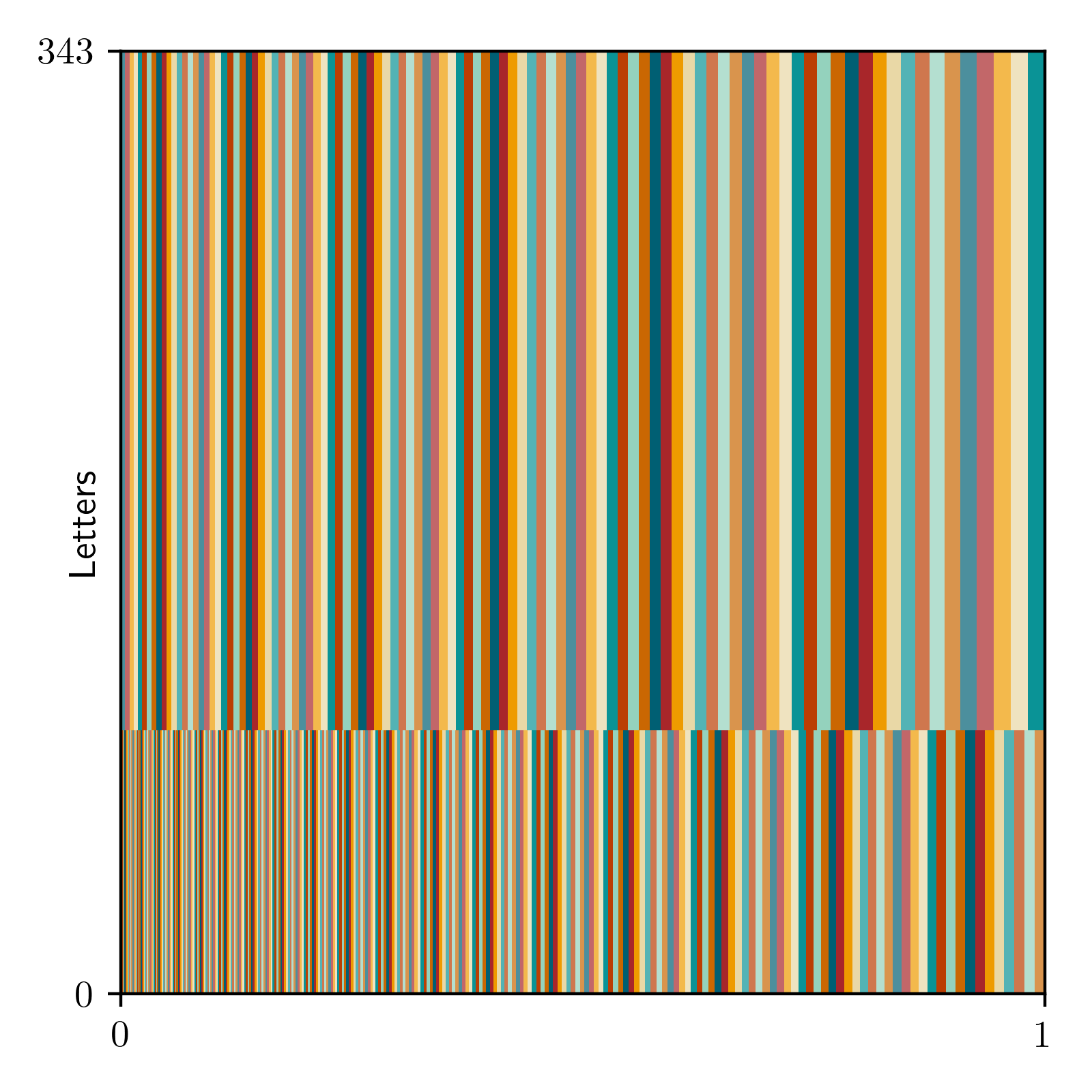}
        \includegraphics[draft=\draft, width=\linewidth]{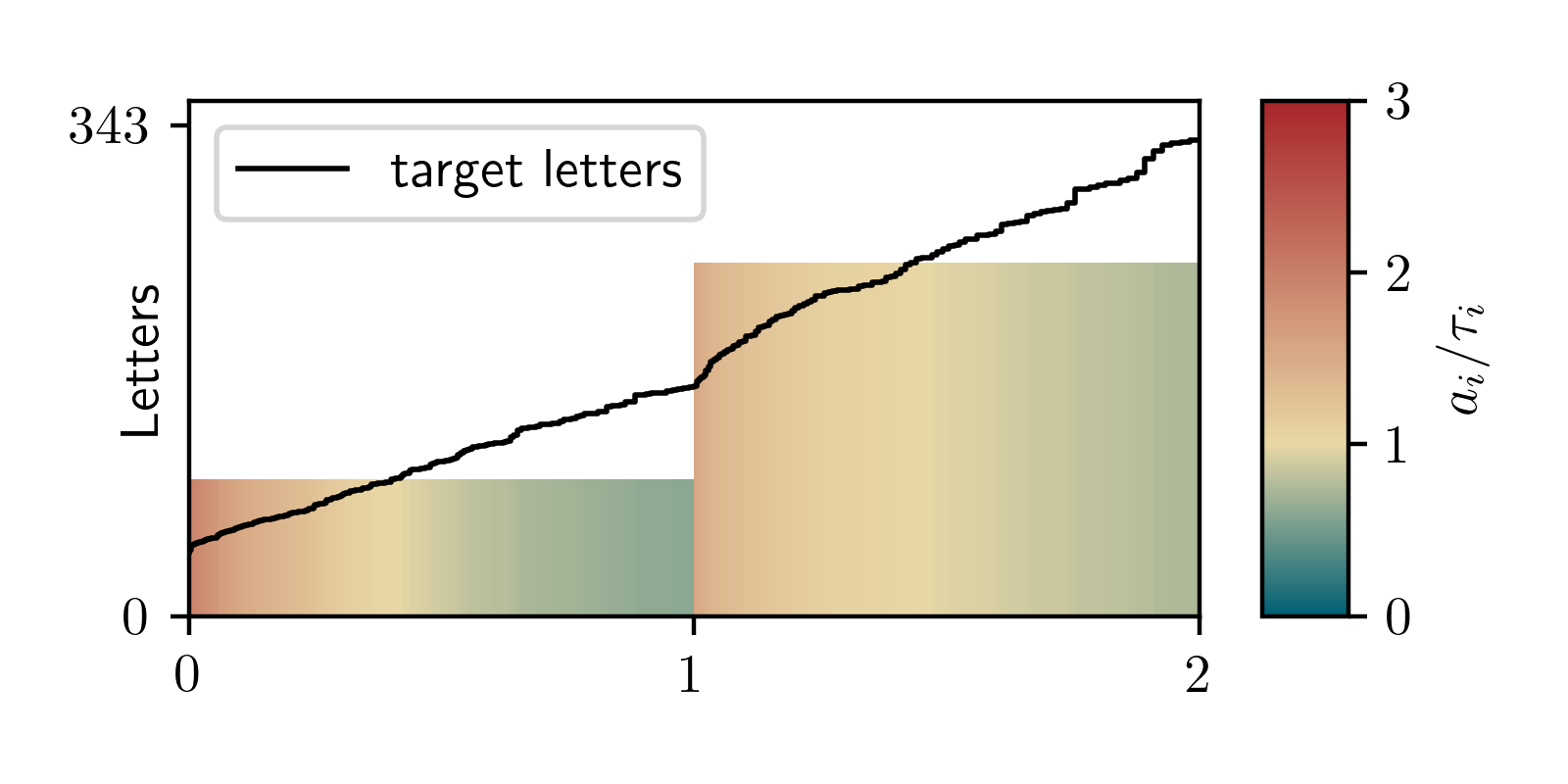}
        \caption{\buckets ($t_G = 2$)}
        \label{fig:results_Brandenburg_Small_greedy_bucket_fill}
    \end{subfigure}
    \caption{Small municipalities of Brandenburg ($\ell_G = 343$)}
    \label{fig:results_Brandenburg_Small}
\end{figure} 

\begin{figure}
    \centering
    \begin{subfigure}{0.32\textwidth}
        \includegraphics[draft=\draft, width=\linewidth]{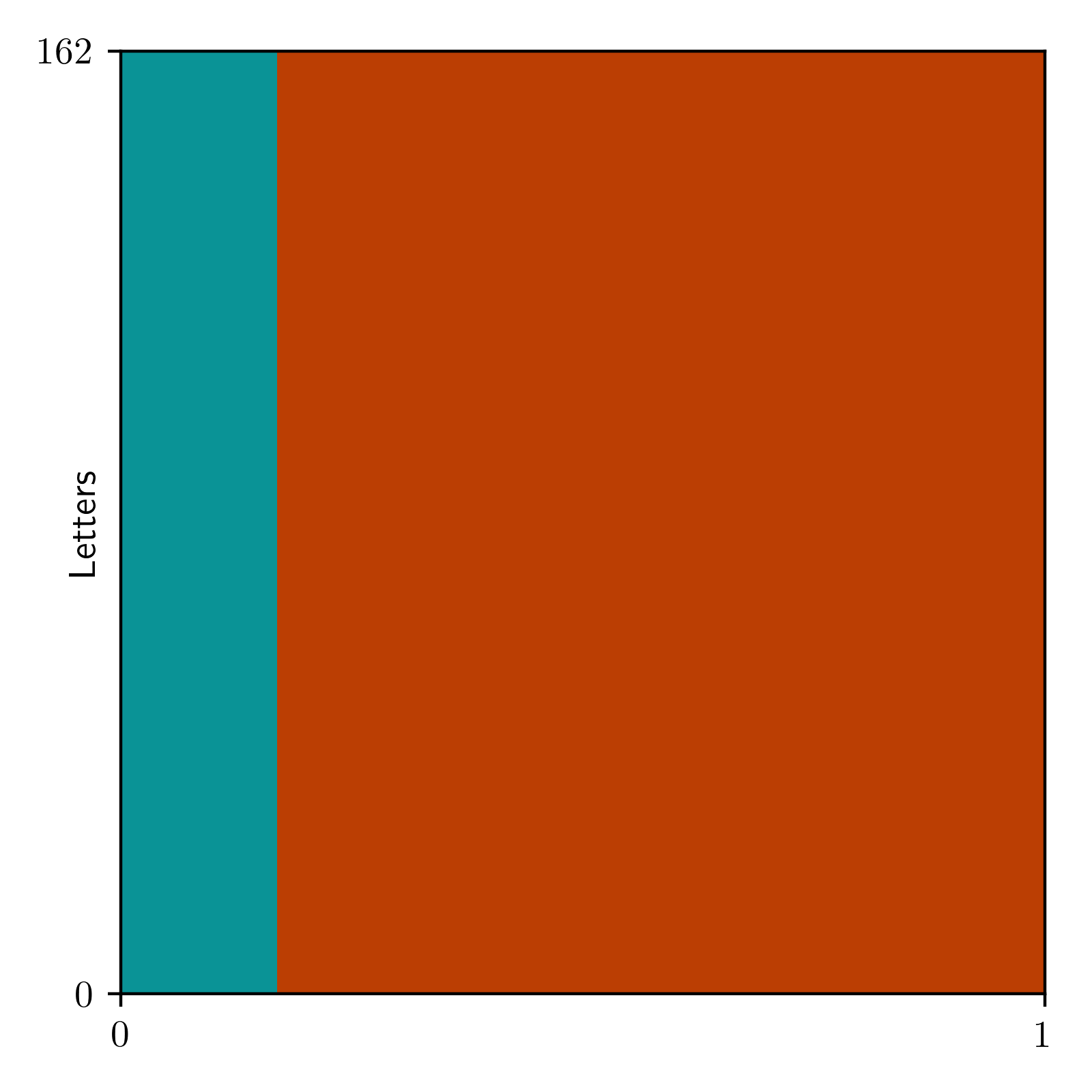}
        \includegraphics[draft=\draft, width=\linewidth]{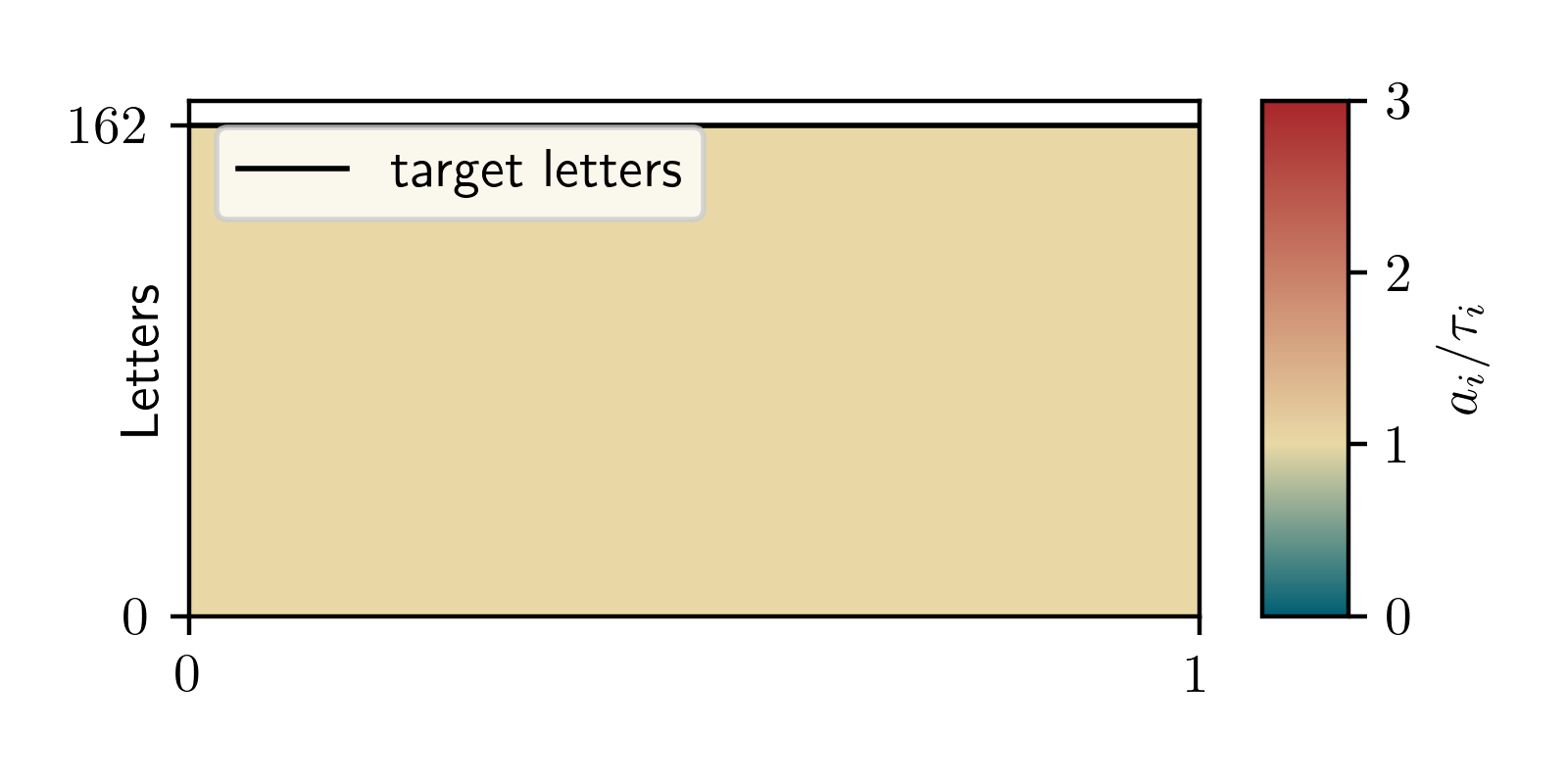}
        \caption{\greq ($t_G = 1$)}
        \label{fig:results_Bremen_Large_greedy_equal}
    \end{subfigure}
    \begin{subfigure}{0.32\textwidth}
        \includegraphics[draft=\draft, width=\linewidth]{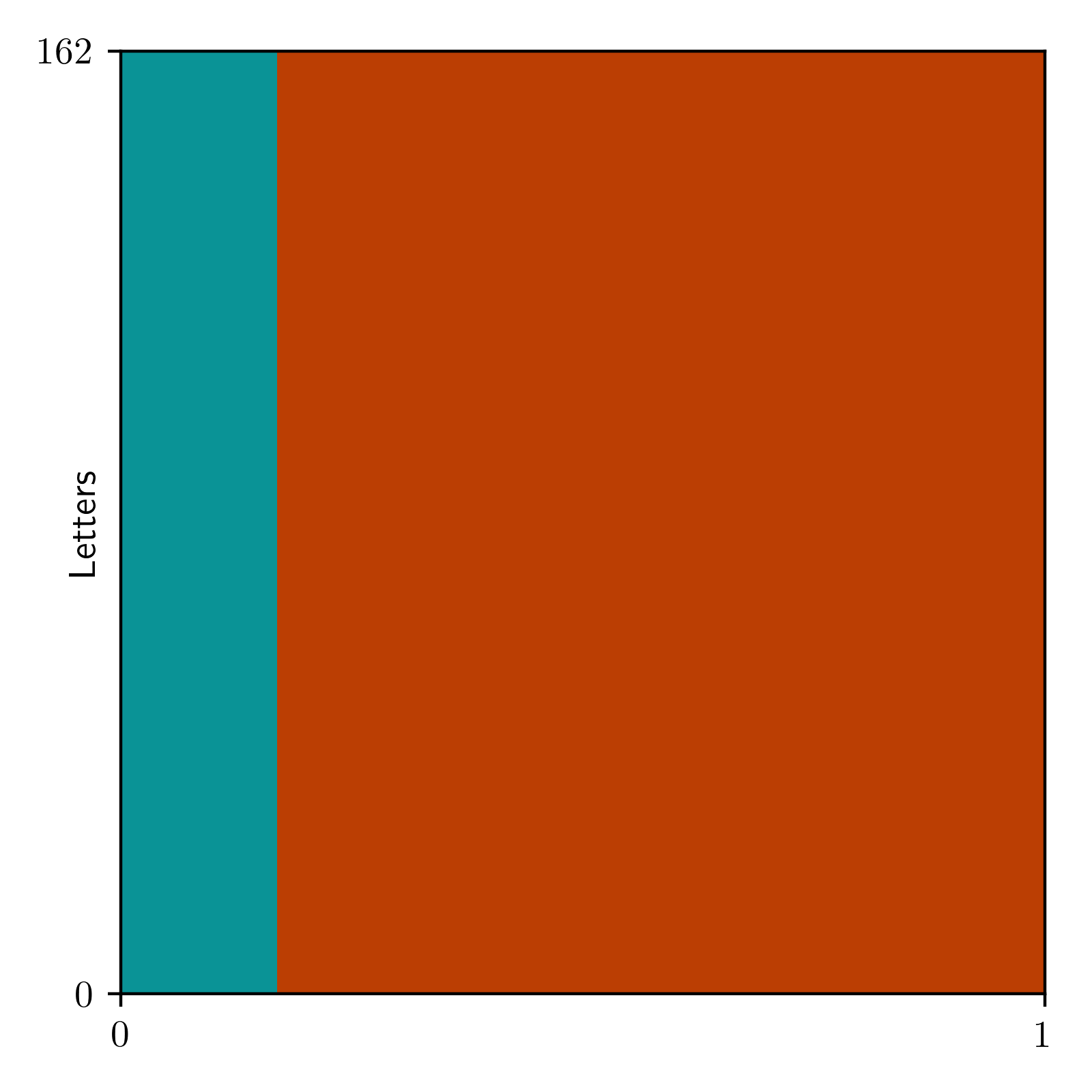}
        \includegraphics[draft=\draft, width=\linewidth]{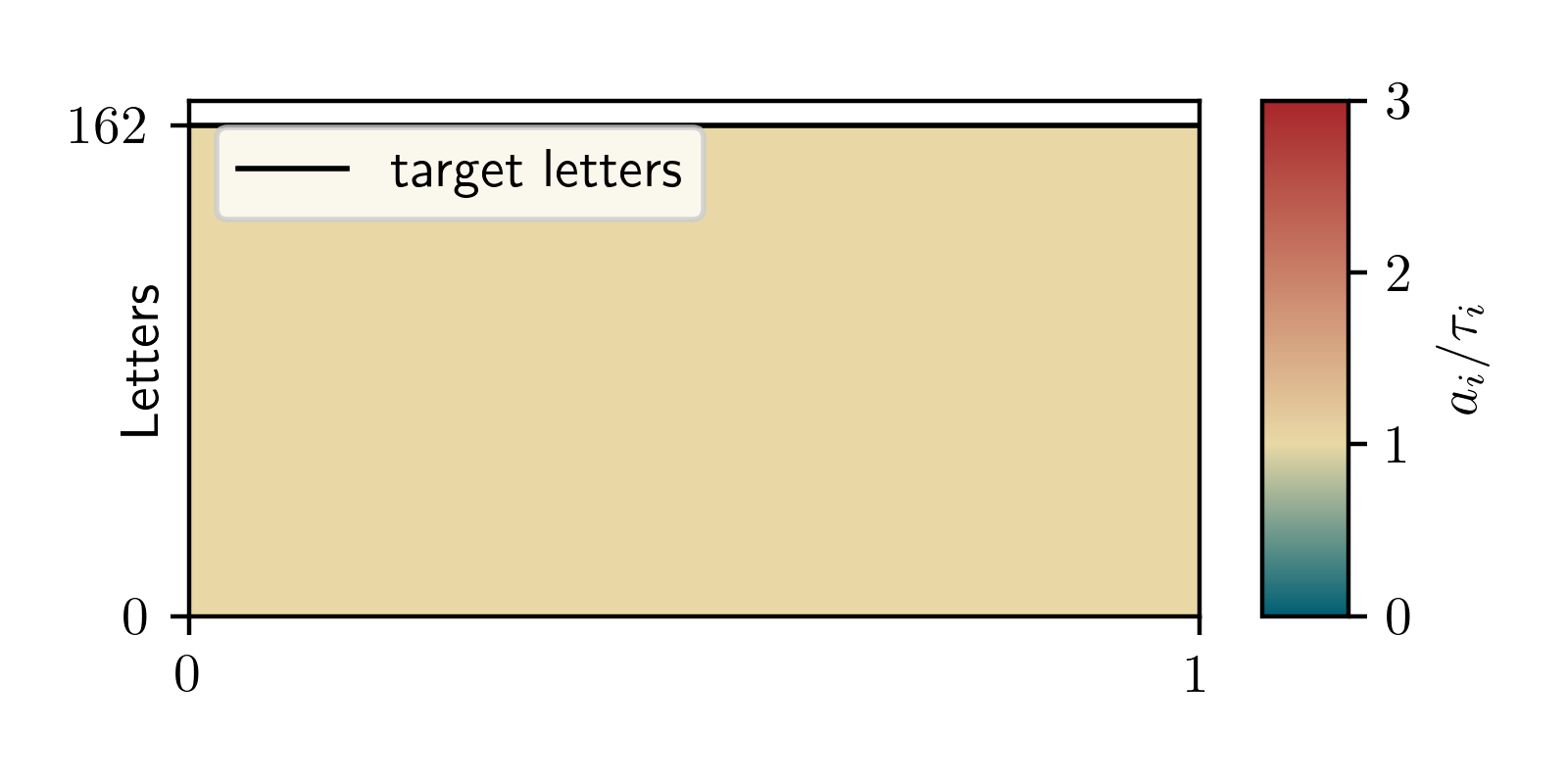}
        \caption{\colgen ($t_G\!=\!1$)}
        \label{fig:results_Bremen_Large_column_generation}
    \end{subfigure}
    \begin{subfigure}{0.32\textwidth}
        \includegraphics[draft=\draft, width=\linewidth]{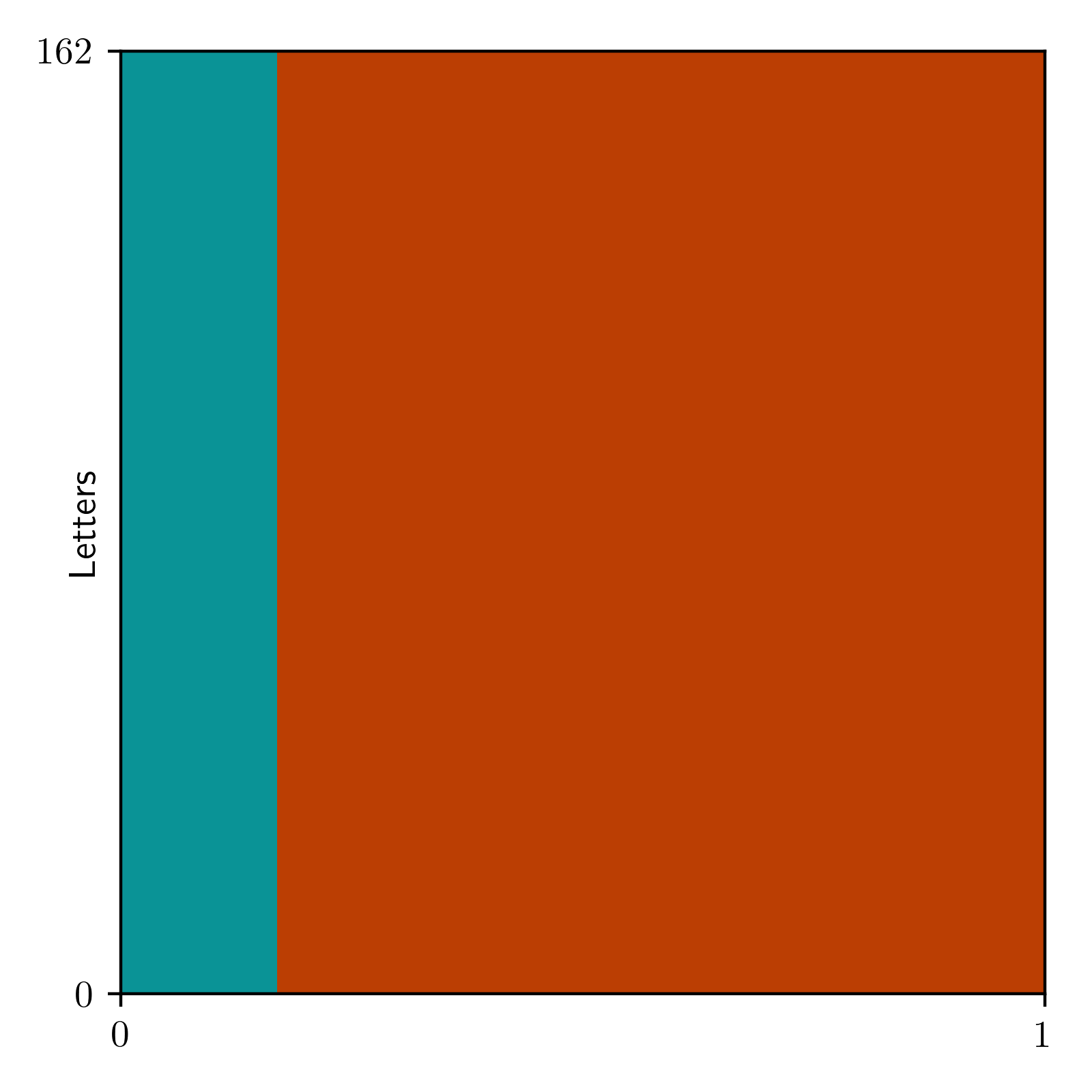}
        \includegraphics[draft=\draft, width=\linewidth]{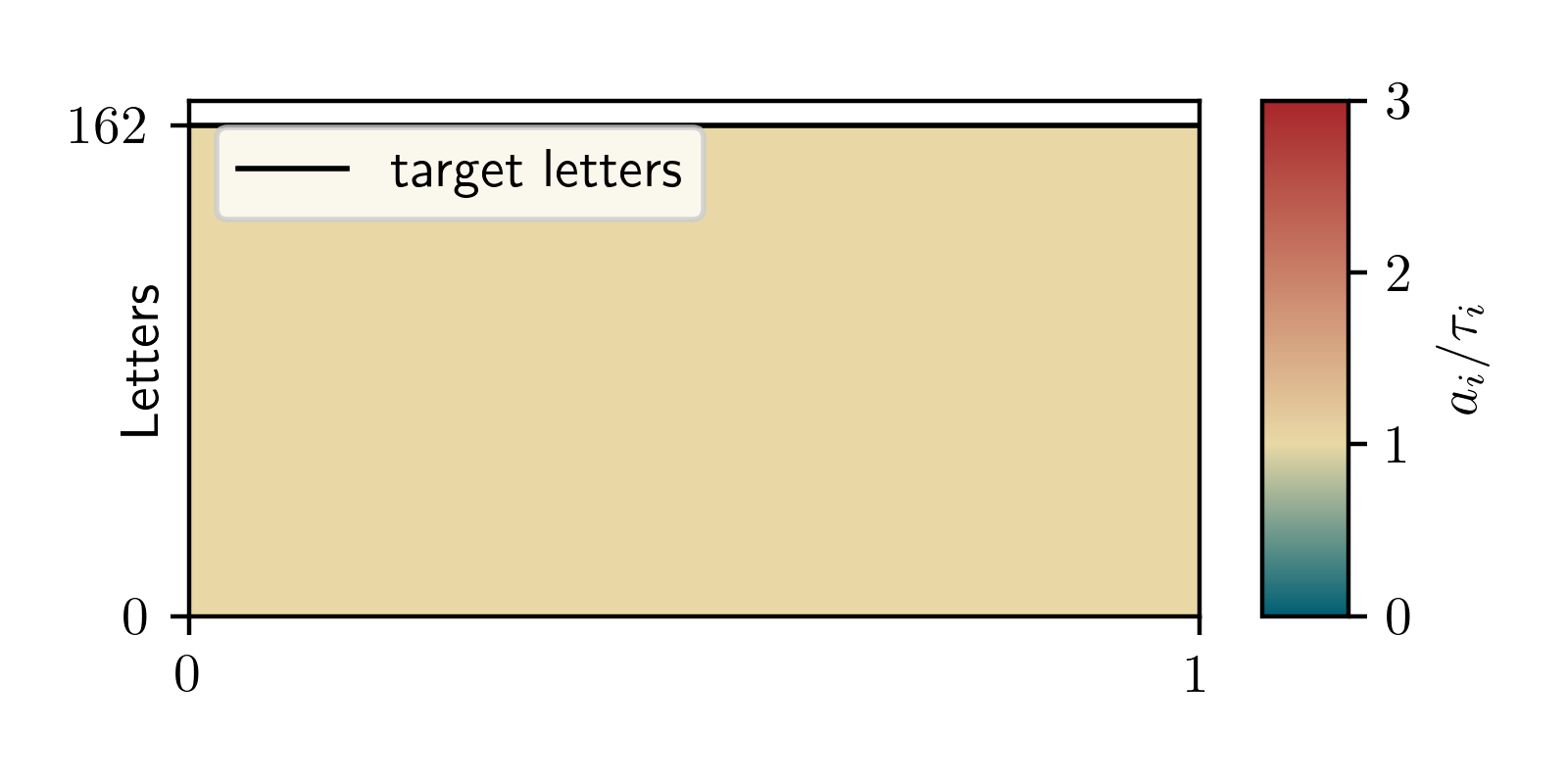}
        \caption{\buckets ($t_G = 1$)}
        \label{fig:results_Bremen_Large_greedy_bucket_fill}
    \end{subfigure}
    \caption{Large municipalities of Bremen ($\ell_G = 162$)}
    \label{fig:results_Bremen_Large}
\end{figure}

\begin{figure}
    \centering
    \begin{subfigure}{0.32\textwidth}
        \includegraphics[draft=\draft, width=\linewidth]{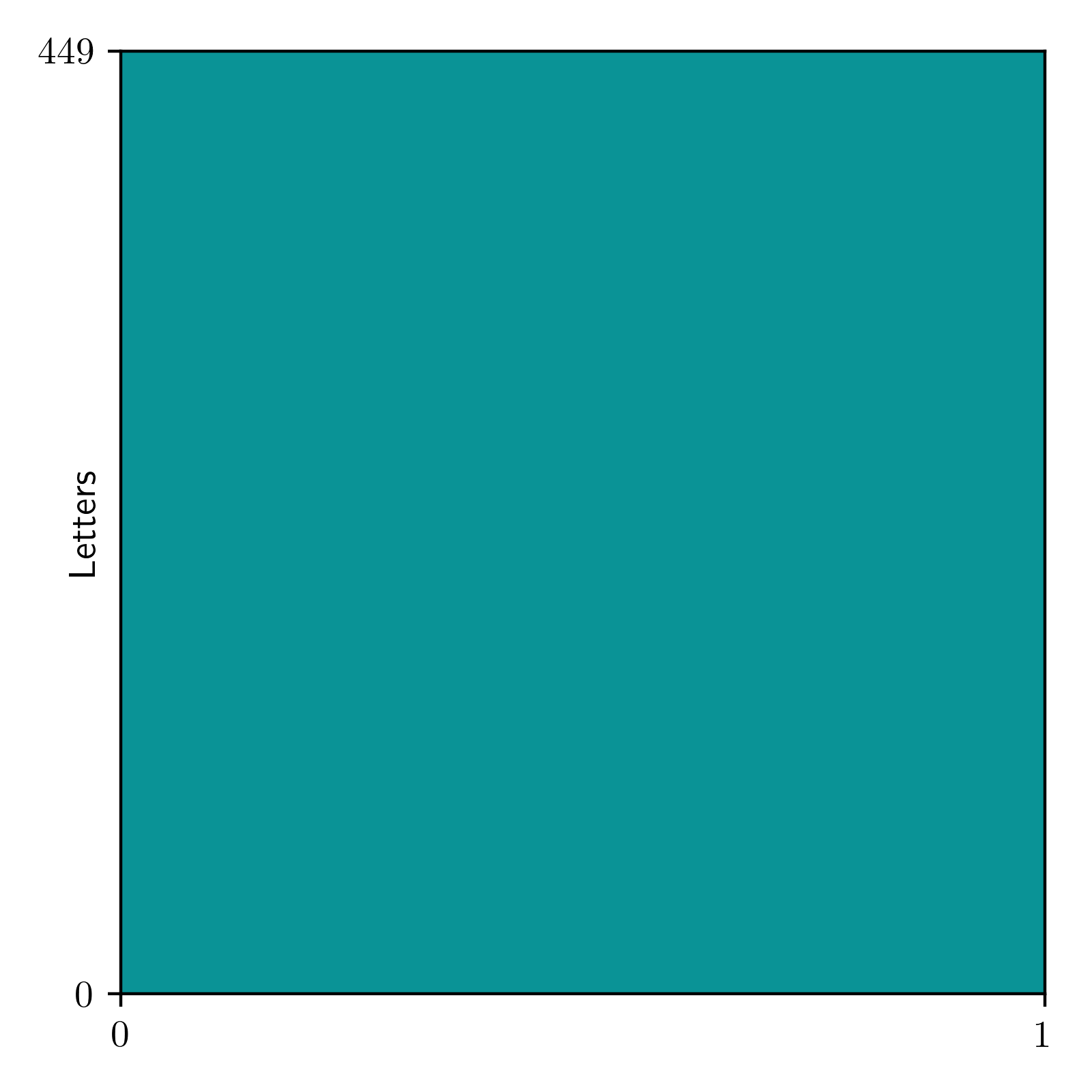}
        \includegraphics[draft=\draft, width=\linewidth]{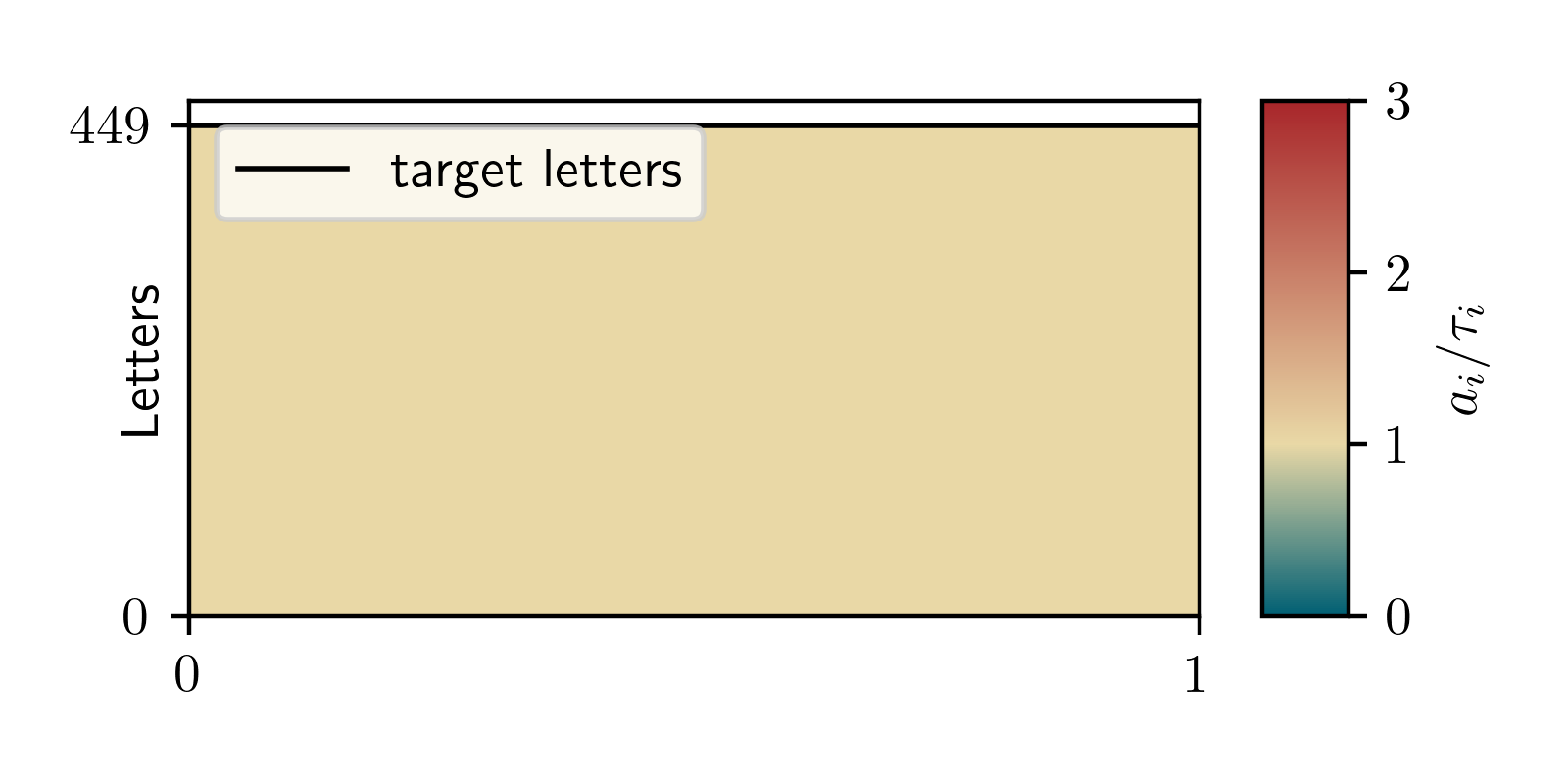}
        \caption{\greq ($t_G = 1$)}
        \label{fig:results_Hamburg_Large_greedy_equal}
    \end{subfigure}
    \begin{subfigure}{0.32\textwidth}
        \includegraphics[draft=\draft, width=\linewidth]{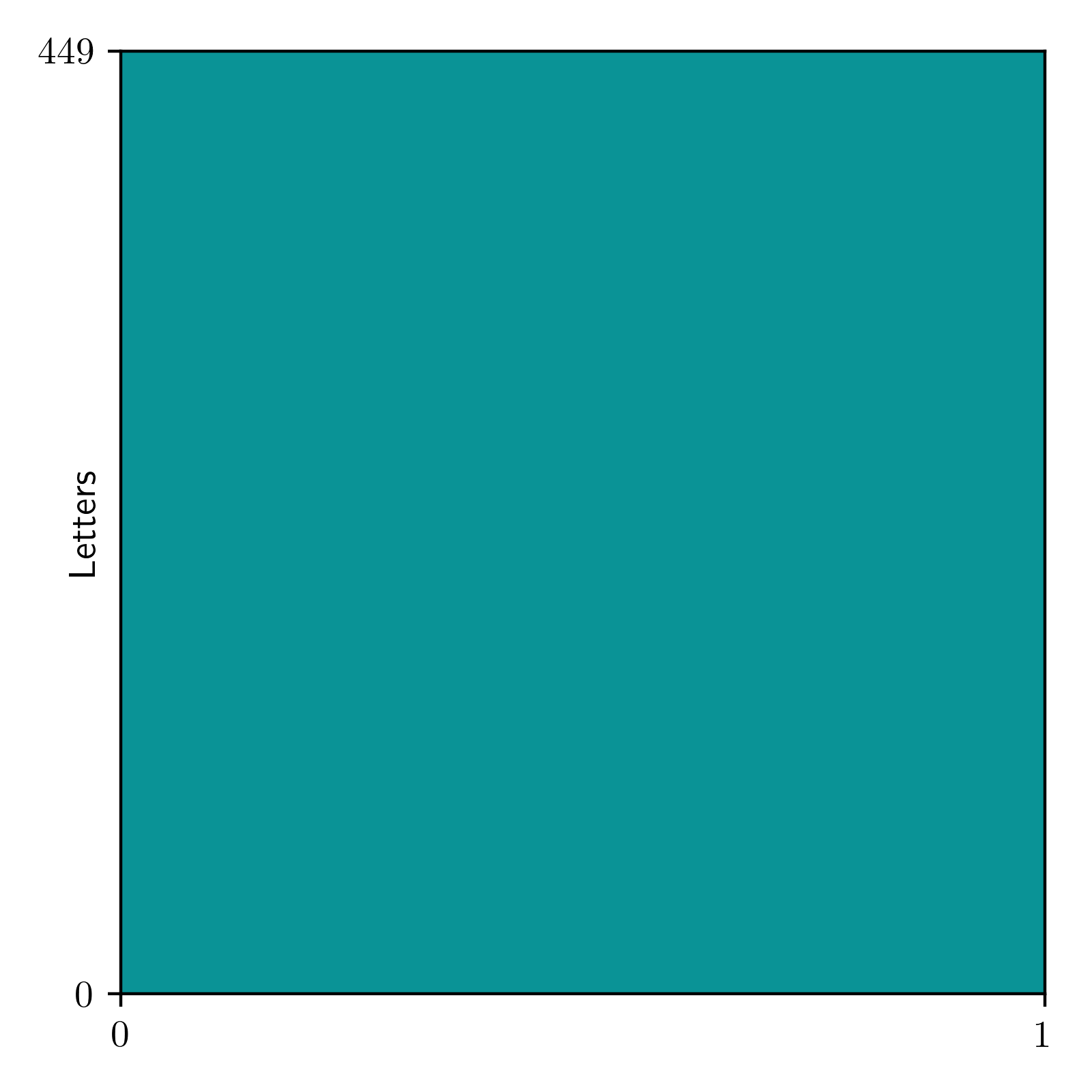}
        \includegraphics[draft=\draft, width=\linewidth]{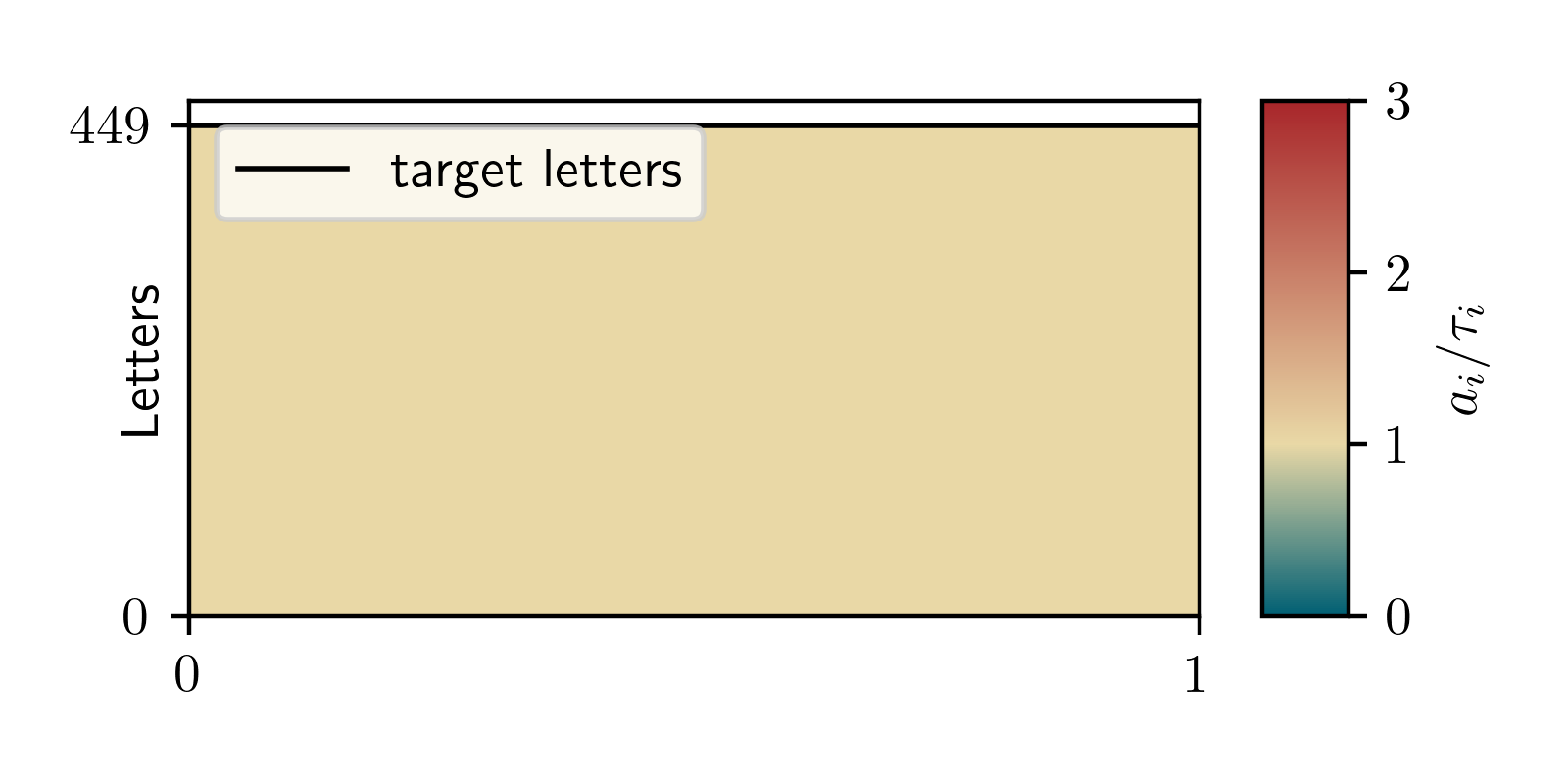}
        \caption{\colgen ($t_G\!=\!1$)}
        \label{fig:results_Hamburg_Large_column_generation}
    \end{subfigure}
    \begin{subfigure}{0.32\textwidth}
        \includegraphics[draft=\draft, width=\linewidth]{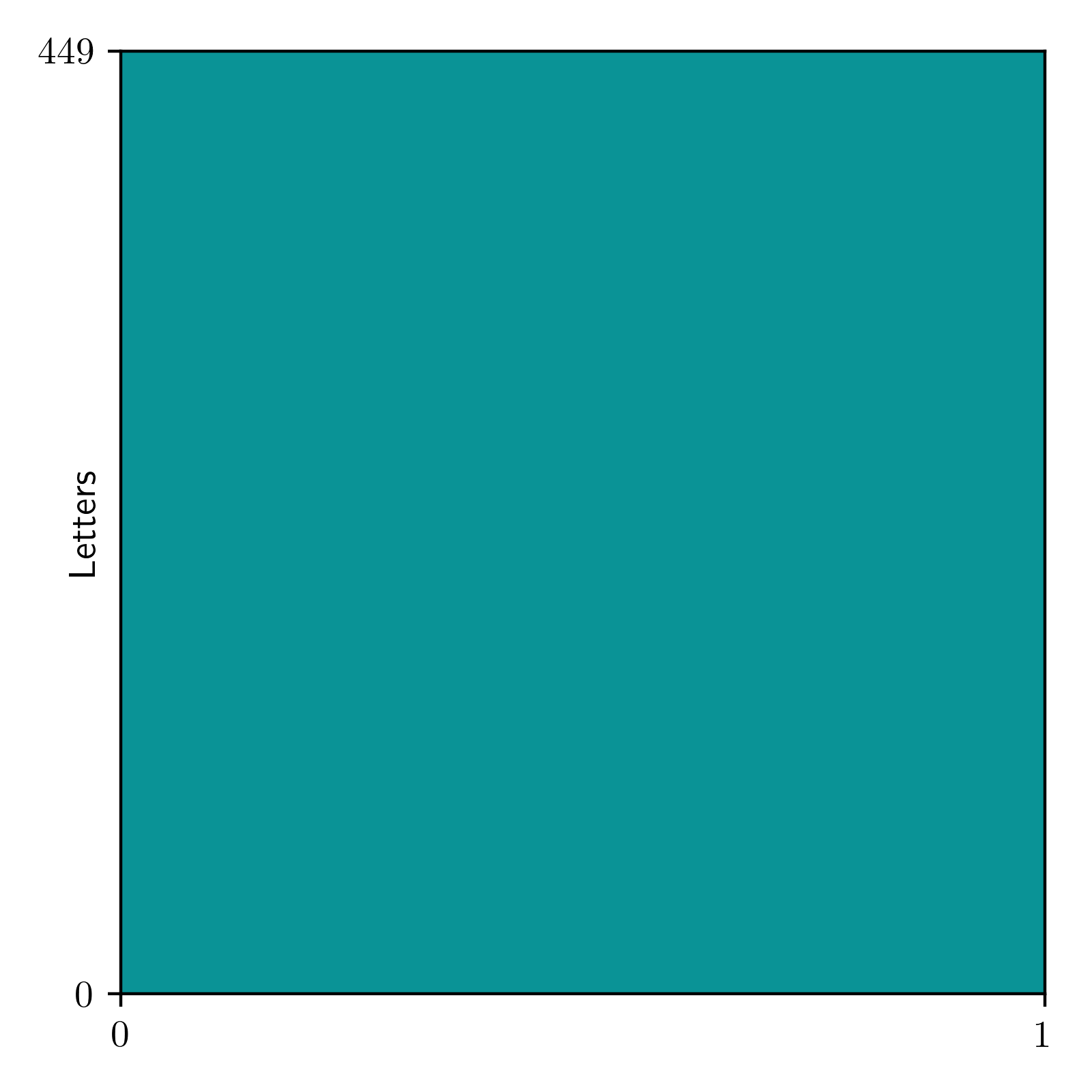}
        \includegraphics[draft=\draft, width=\linewidth]{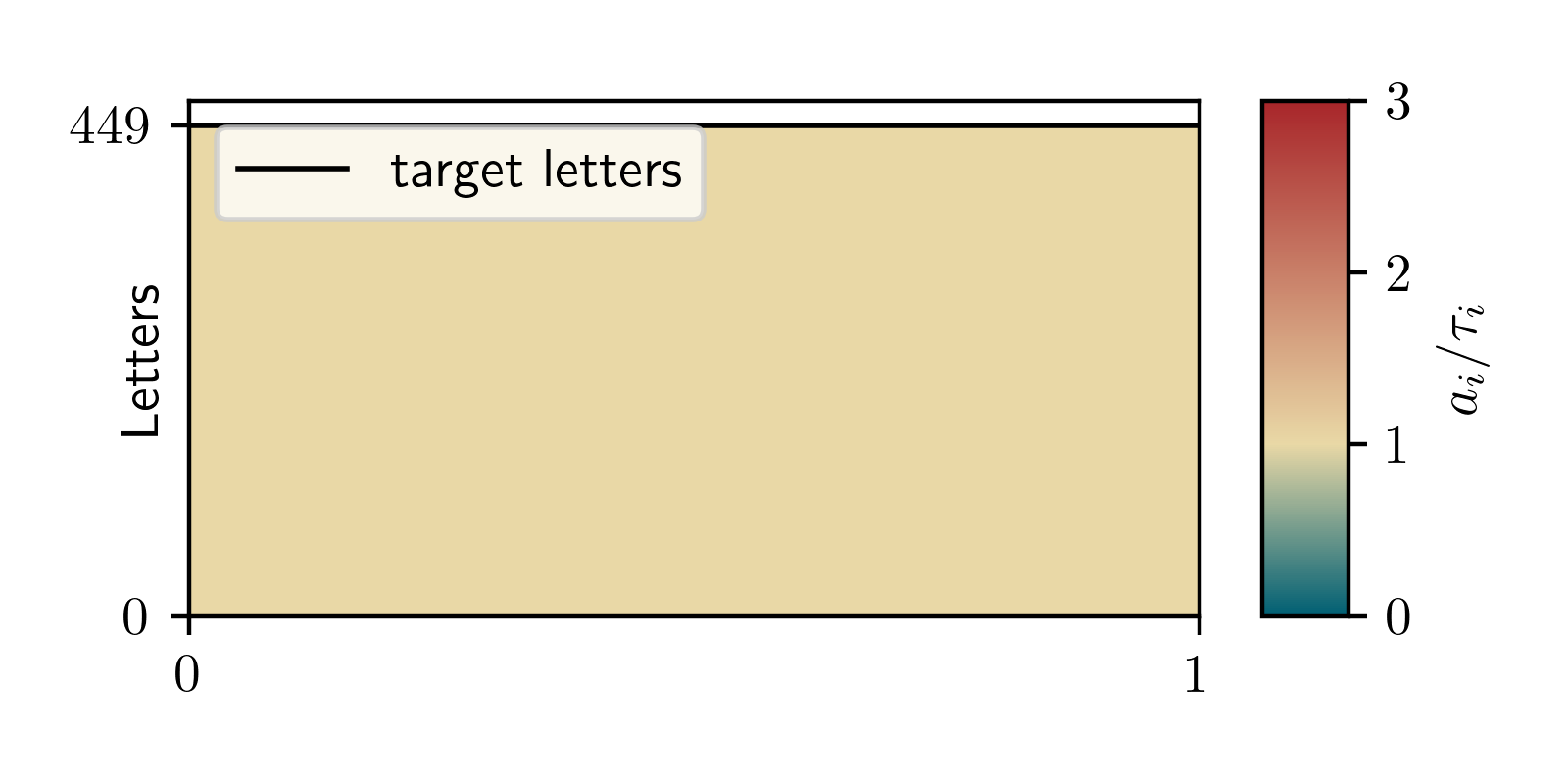}
        \caption{\buckets ($t_G = 1$)}
        \label{fig:results_Hamburg_Large_greedy_bucket_fill}
    \end{subfigure}
    \caption{Large municipalities of Hamburg ($\ell_G = 449$)}
    \label{fig:results_Hamburg_Large}
\end{figure}

\begin{figure}
    \centering
    \begin{subfigure}{0.32\textwidth}
        \includegraphics[draft=\draft, width=\linewidth]{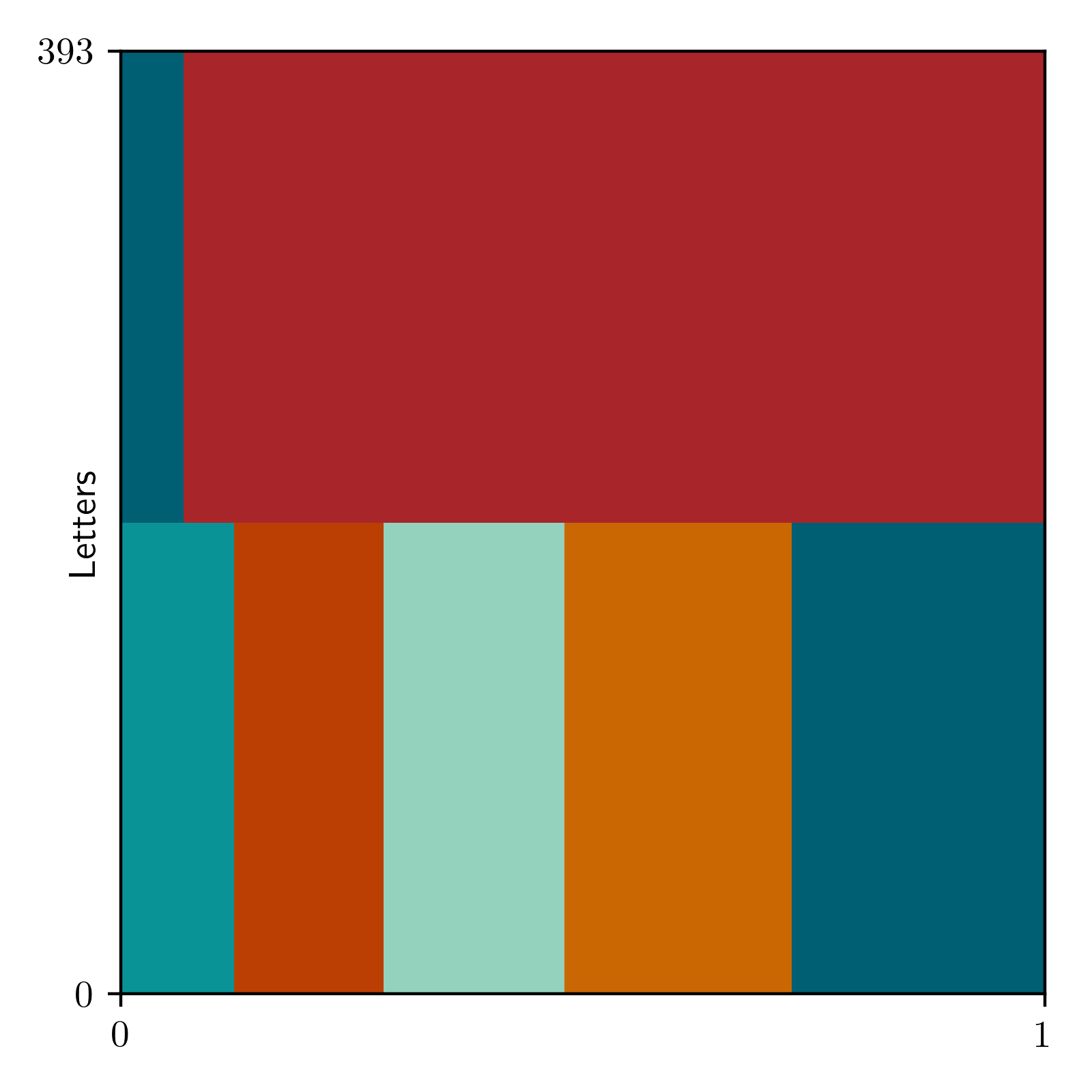}
        \includegraphics[draft=\draft, width=\linewidth]{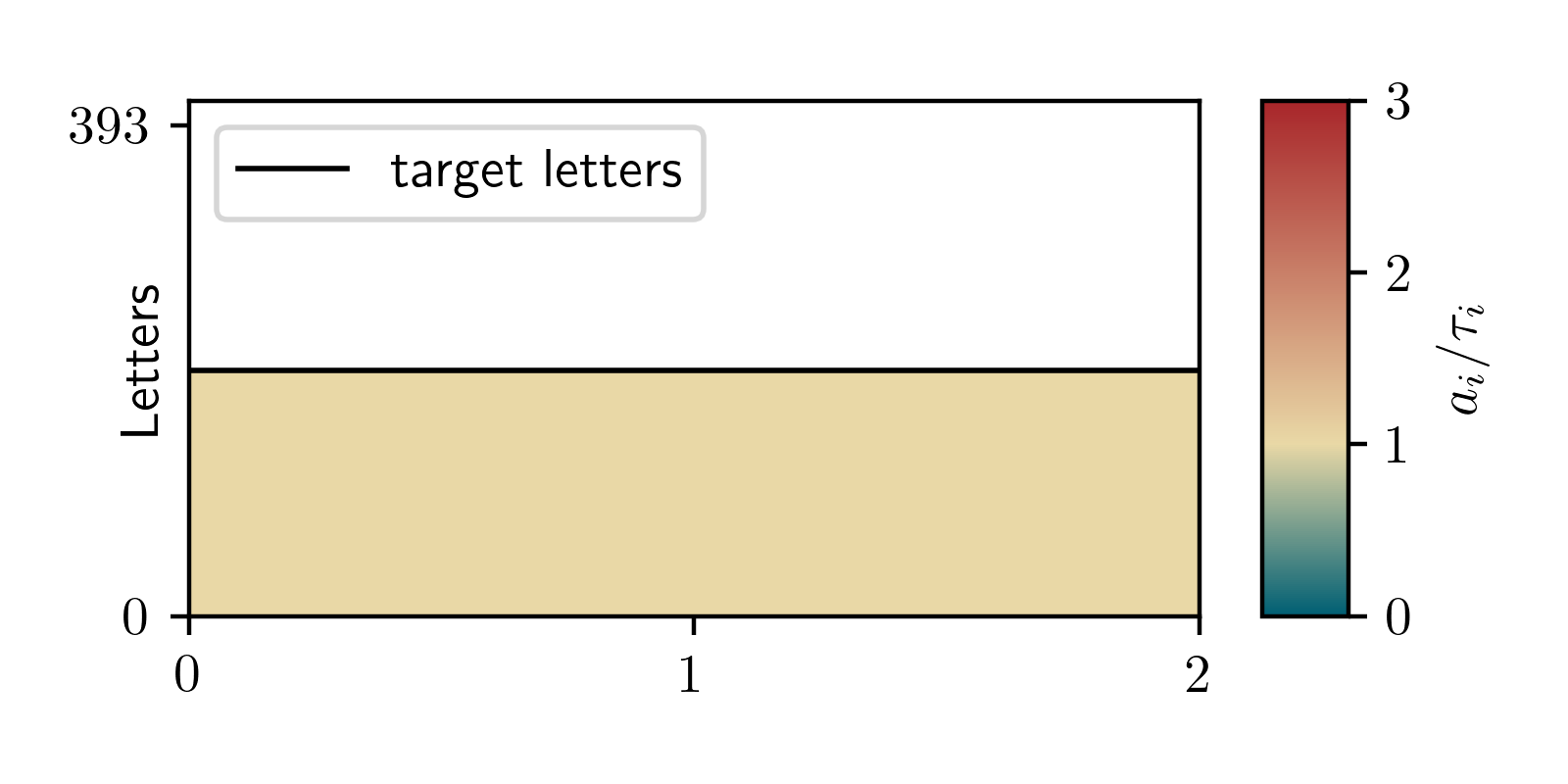}
        \caption{\greq ($t_G = 2$)}
        \label{fig:results_Hessen_Large_greedy_equal}
    \end{subfigure}
    \begin{subfigure}{0.32\textwidth}
        \includegraphics[draft=\draft, width=\linewidth]{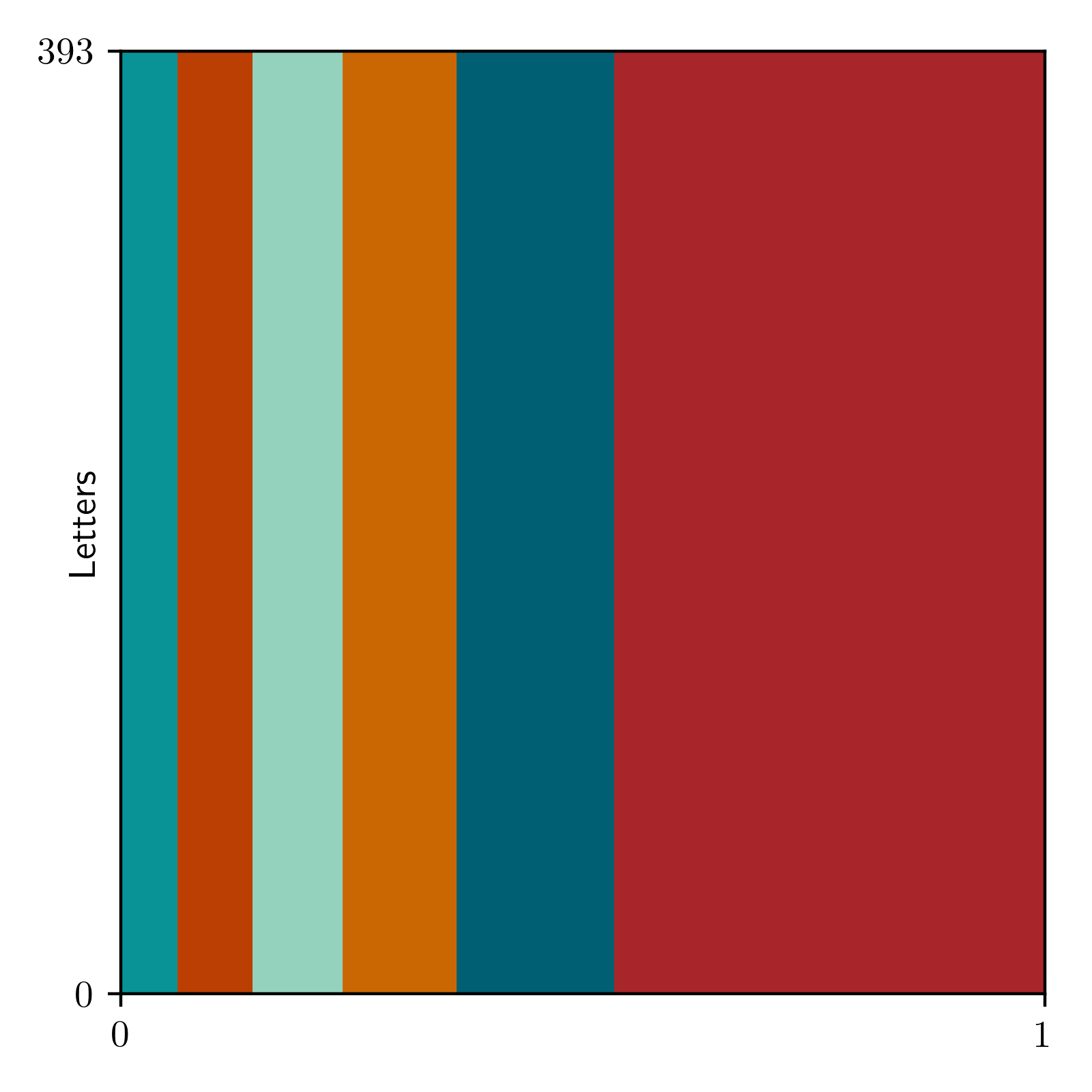}
        \includegraphics[draft=\draft, width=\linewidth]{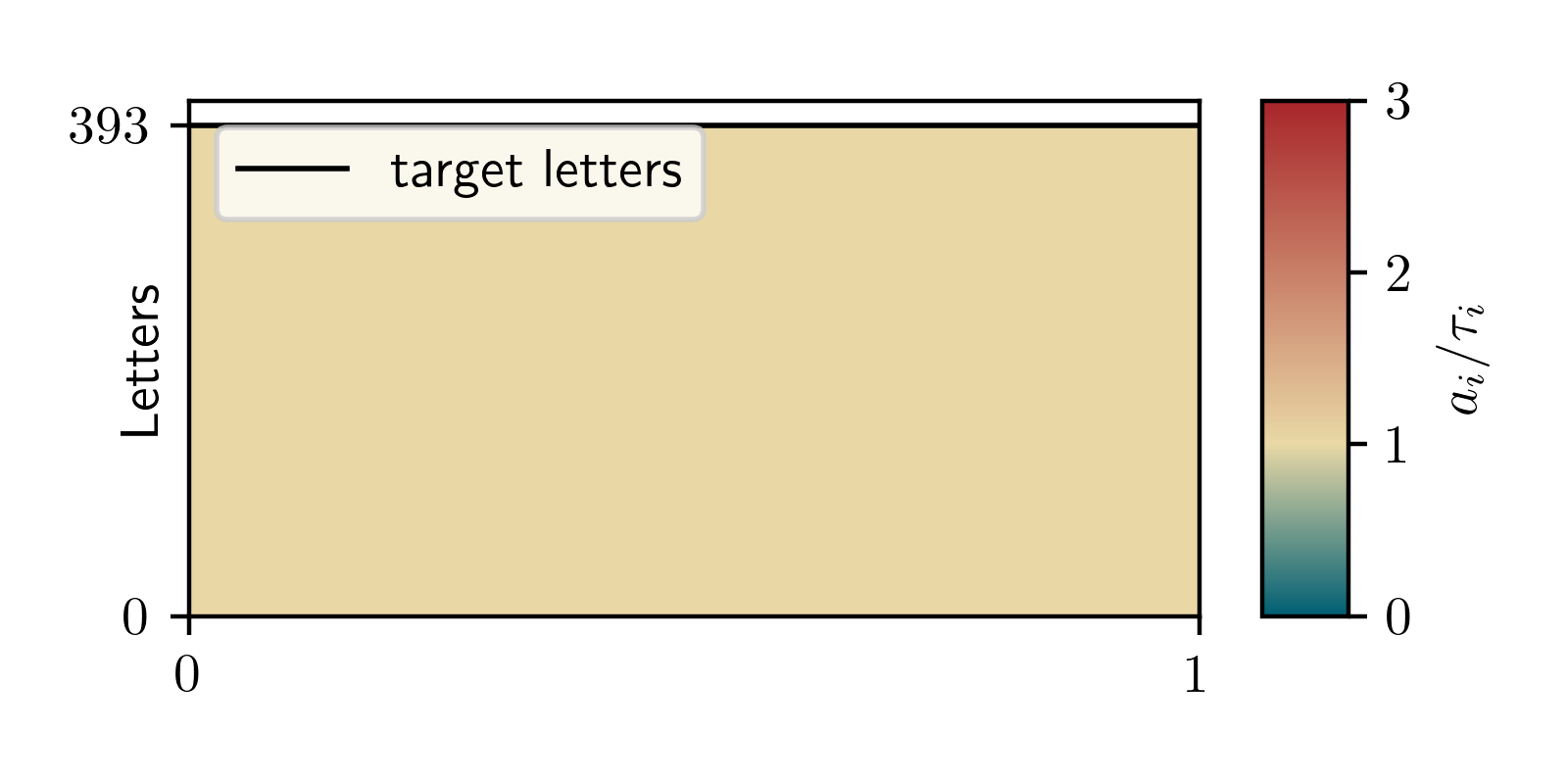}
        \caption{\colgen ($t_G\!=\!1$)}
        \label{fig:results_Hessen_Large_column_generation}
    \end{subfigure}
    \begin{subfigure}{0.32\textwidth}
        \includegraphics[draft=\draft, width=\linewidth]{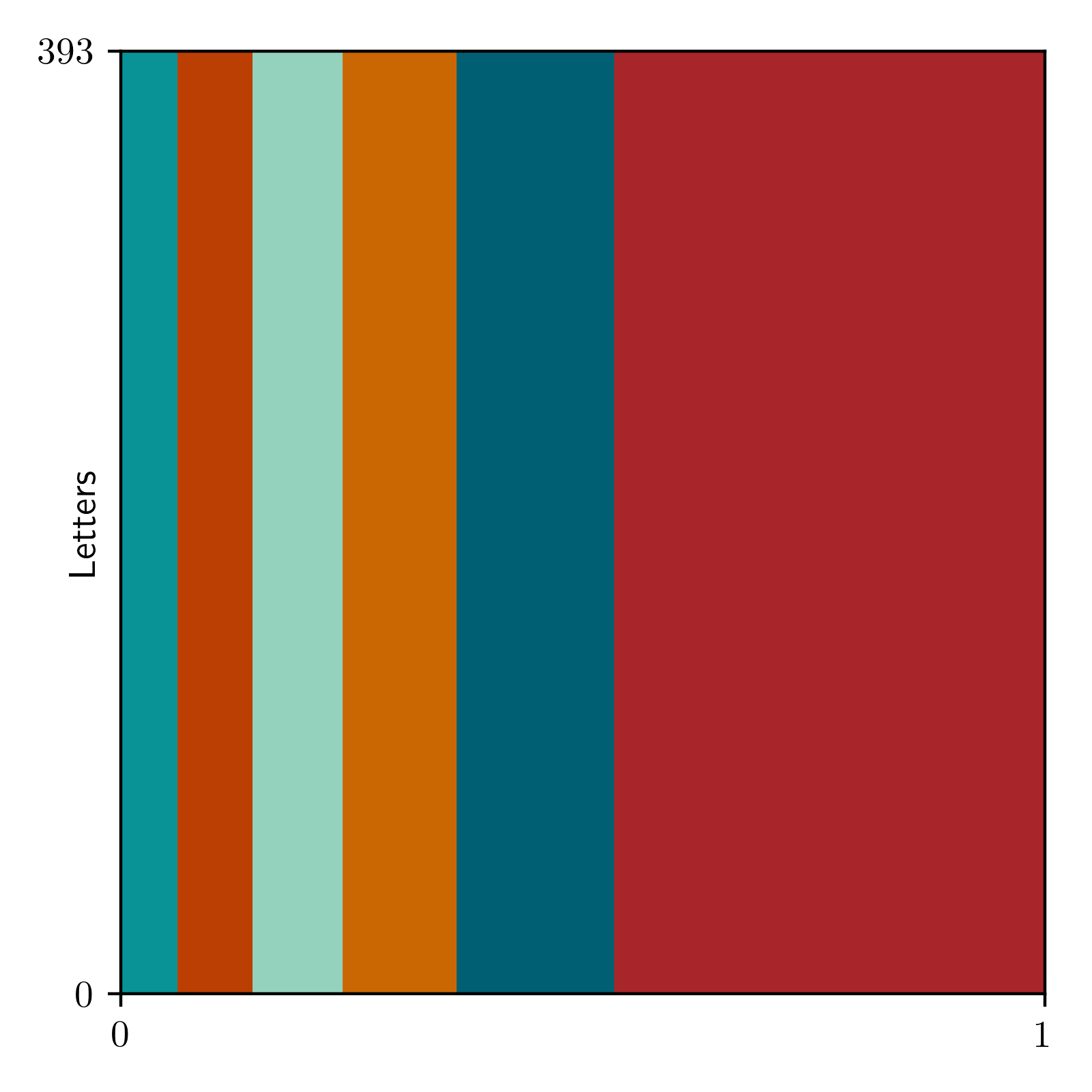}
        \includegraphics[draft=\draft, width=\linewidth]{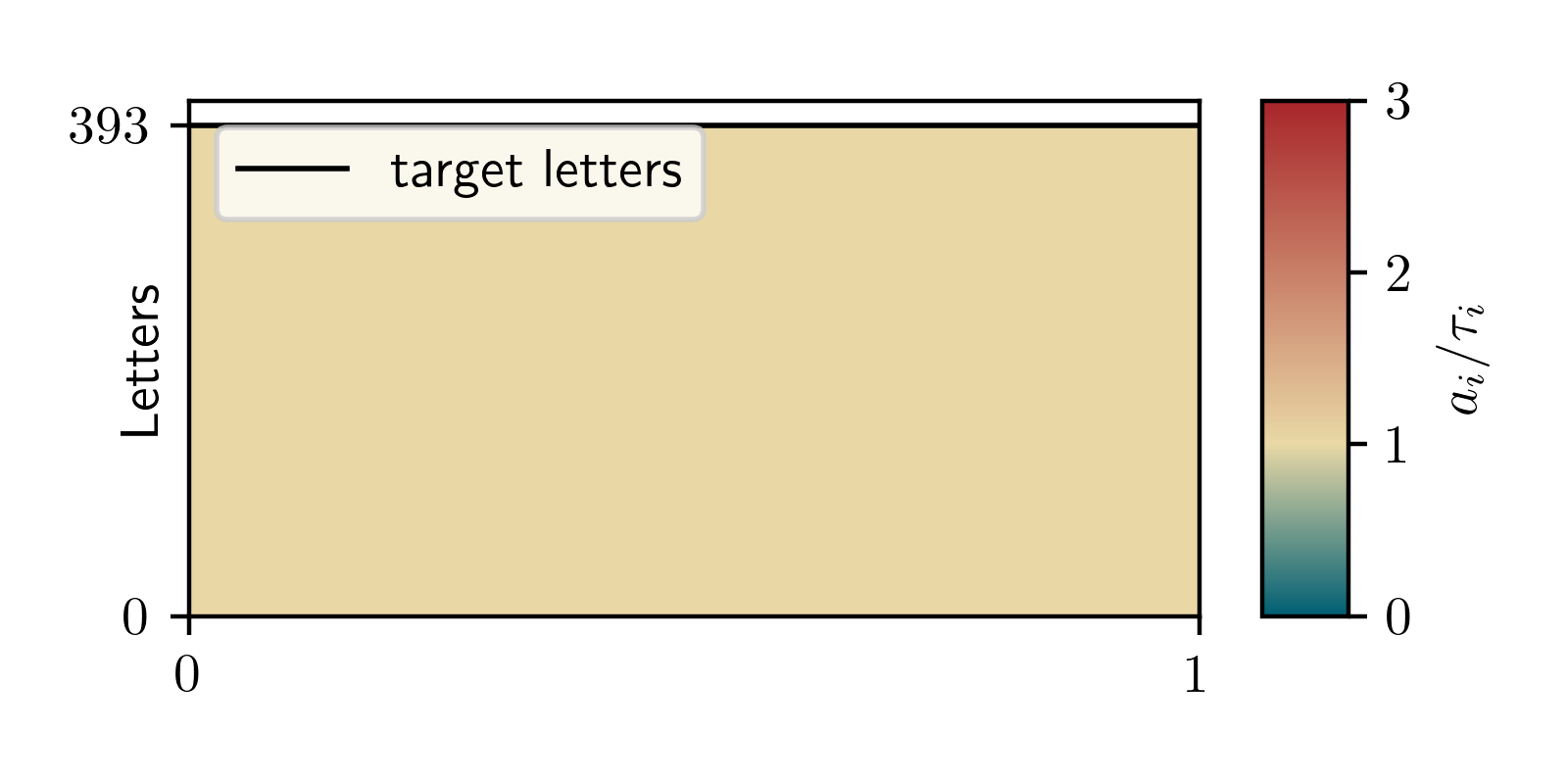}
        \caption{\buckets ($t_G = 1$)}
        \label{fig:results_Hessen_Large_greedy_bucket_fill}
    \end{subfigure}
    \caption{Large municipalities of Hessen ($\ell_G = 393$)}
    \label{fig:results_Hessen_Large}
\end{figure} 

\begin{figure}
    \centering
    \begin{subfigure}{0.32\textwidth}
        \includegraphics[draft=\draft, width=\linewidth]{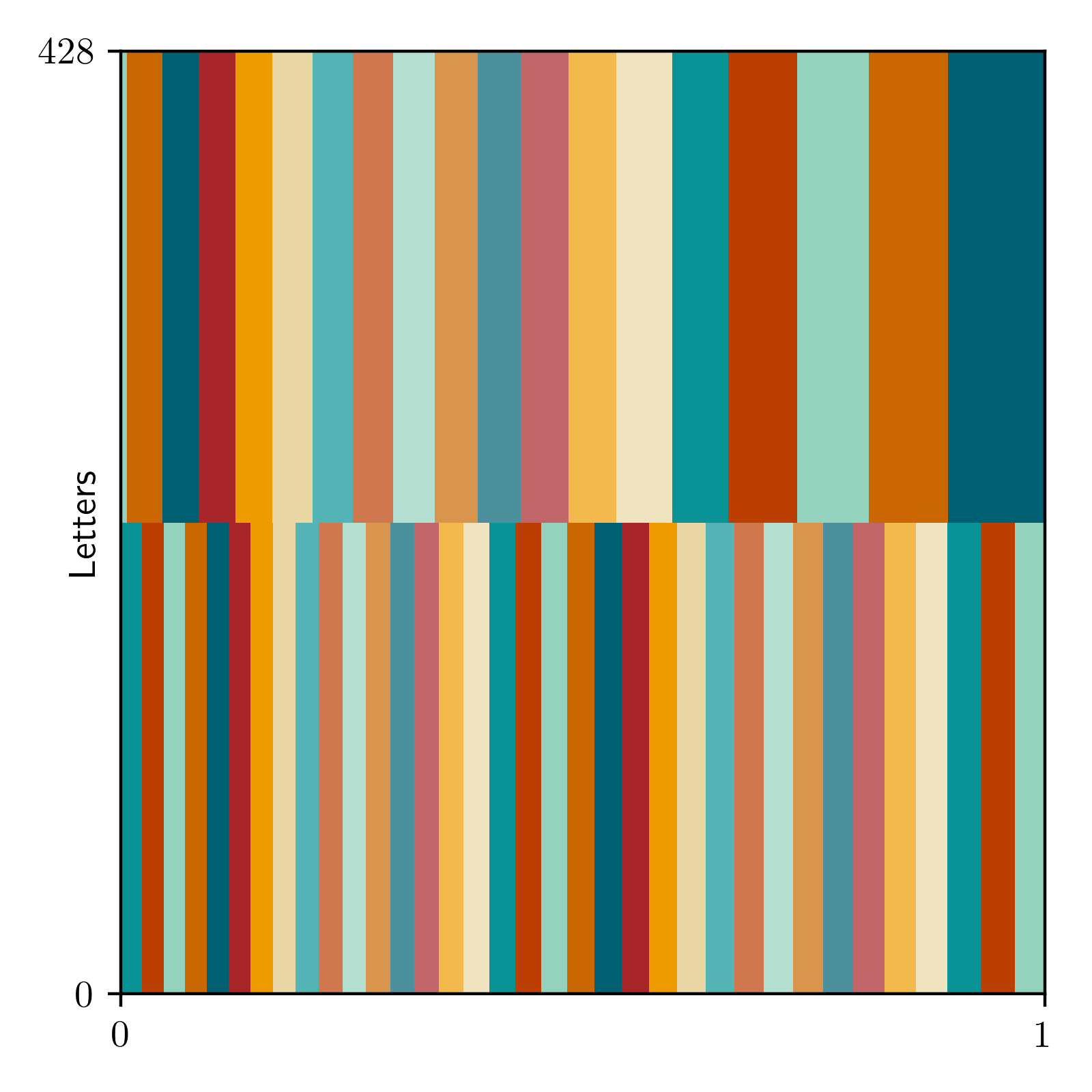}
        \includegraphics[draft=\draft, width=\linewidth]{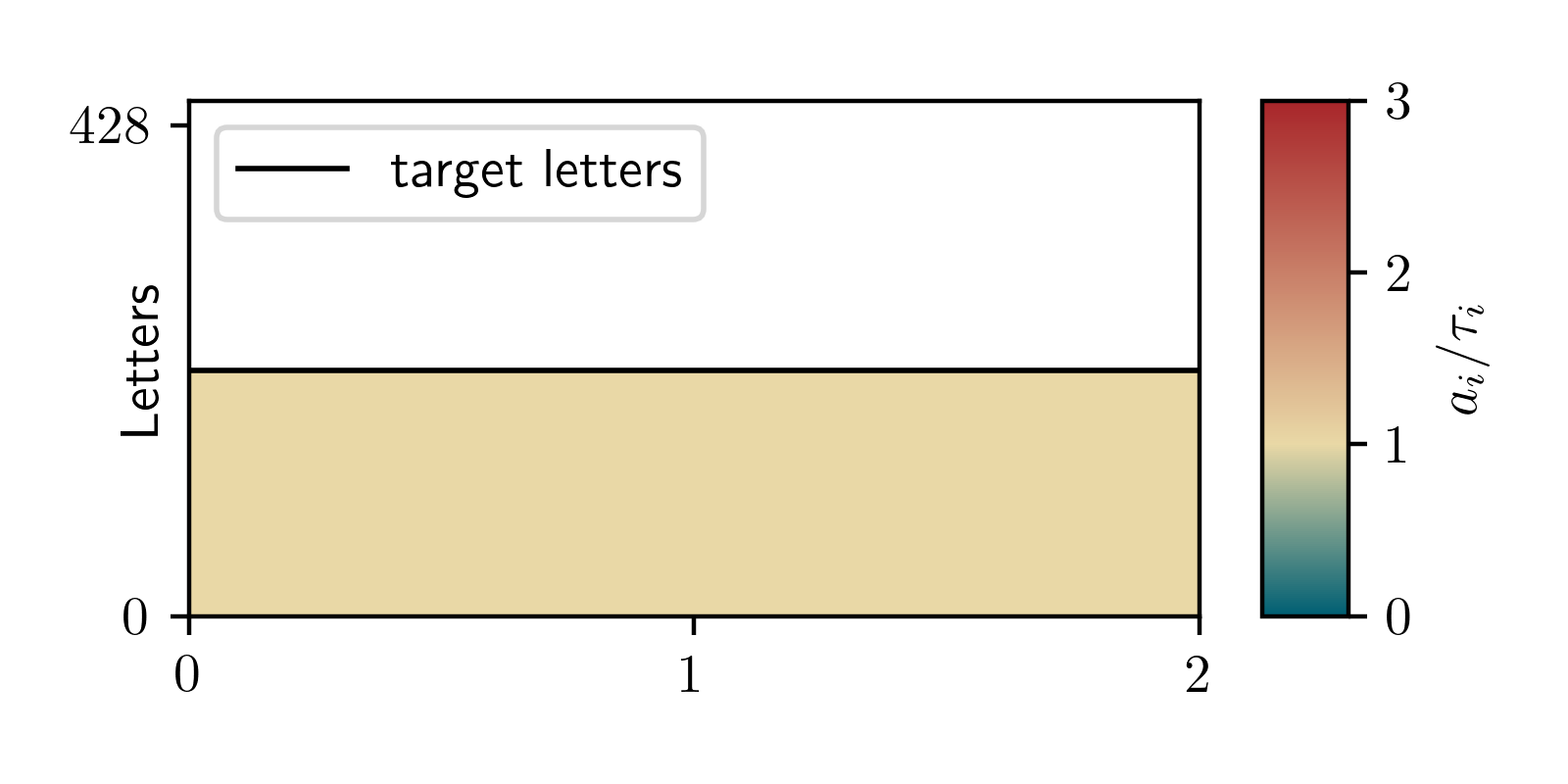}
        \caption{\greq ($t_G = 2$)}
        \label{fig:results_Hessen_Medium_greedy_equal}
    \end{subfigure}
    \begin{subfigure}{0.32\textwidth}
        \includegraphics[draft=\draft, width=\linewidth]{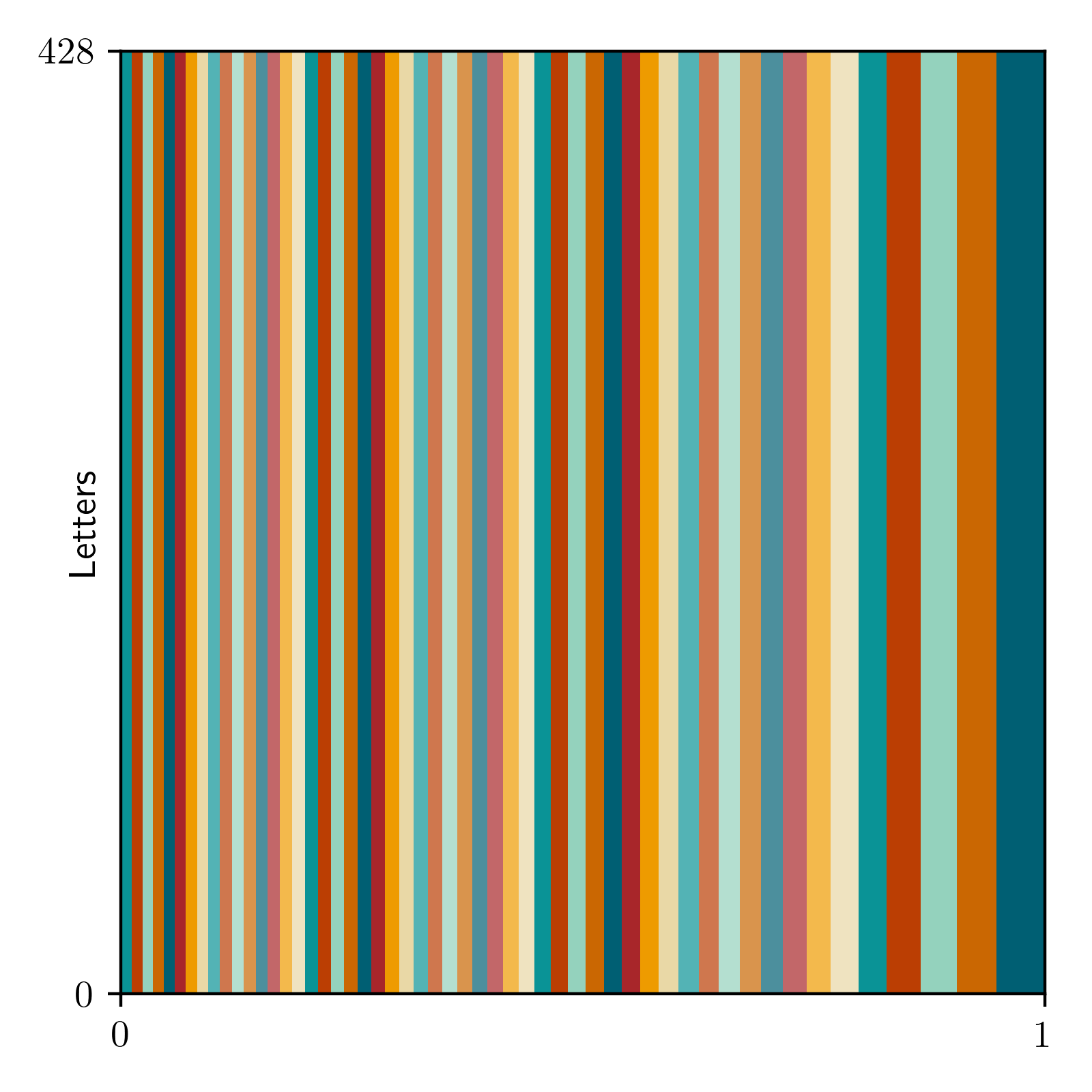}
        \includegraphics[draft=\draft, width=\linewidth]{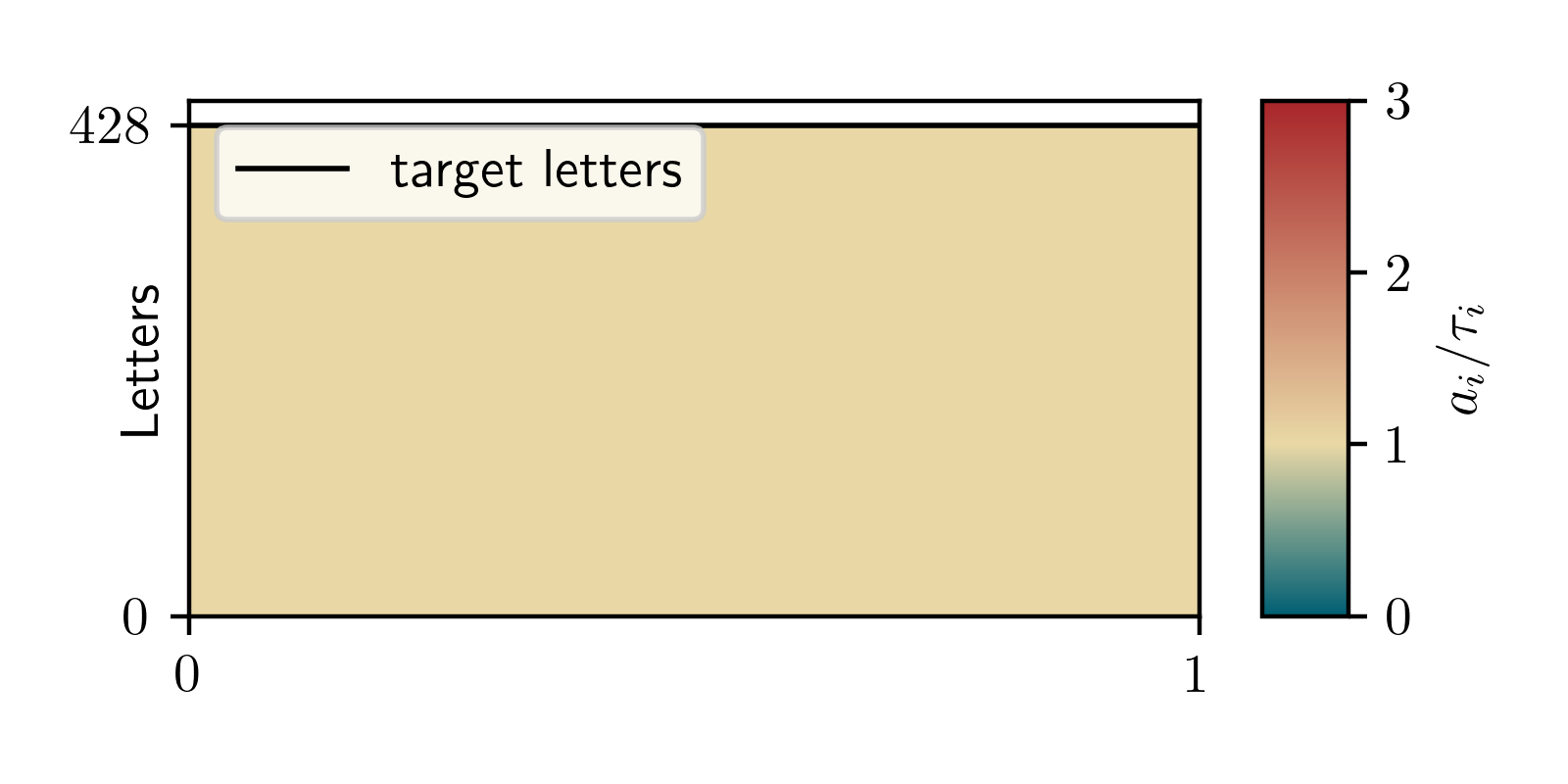}
        \caption{\colgen ($t_G\!=\!1$)}
        \label{fig:results_Hessen_Medium_column_generation}
    \end{subfigure}
    \begin{subfigure}{0.32\textwidth}
        \includegraphics[draft=\draft, width=\linewidth]{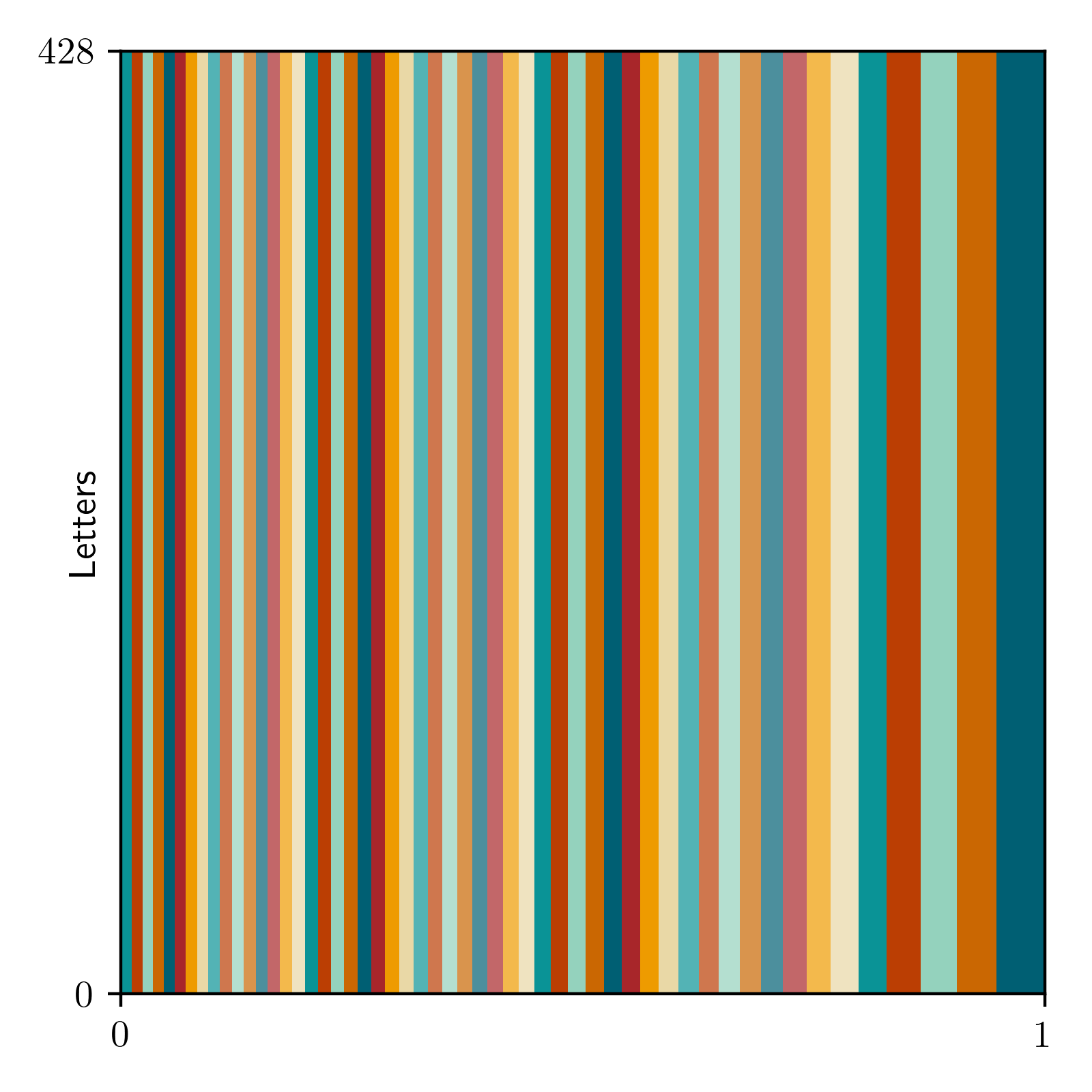}
        \includegraphics[draft=\draft, width=\linewidth]{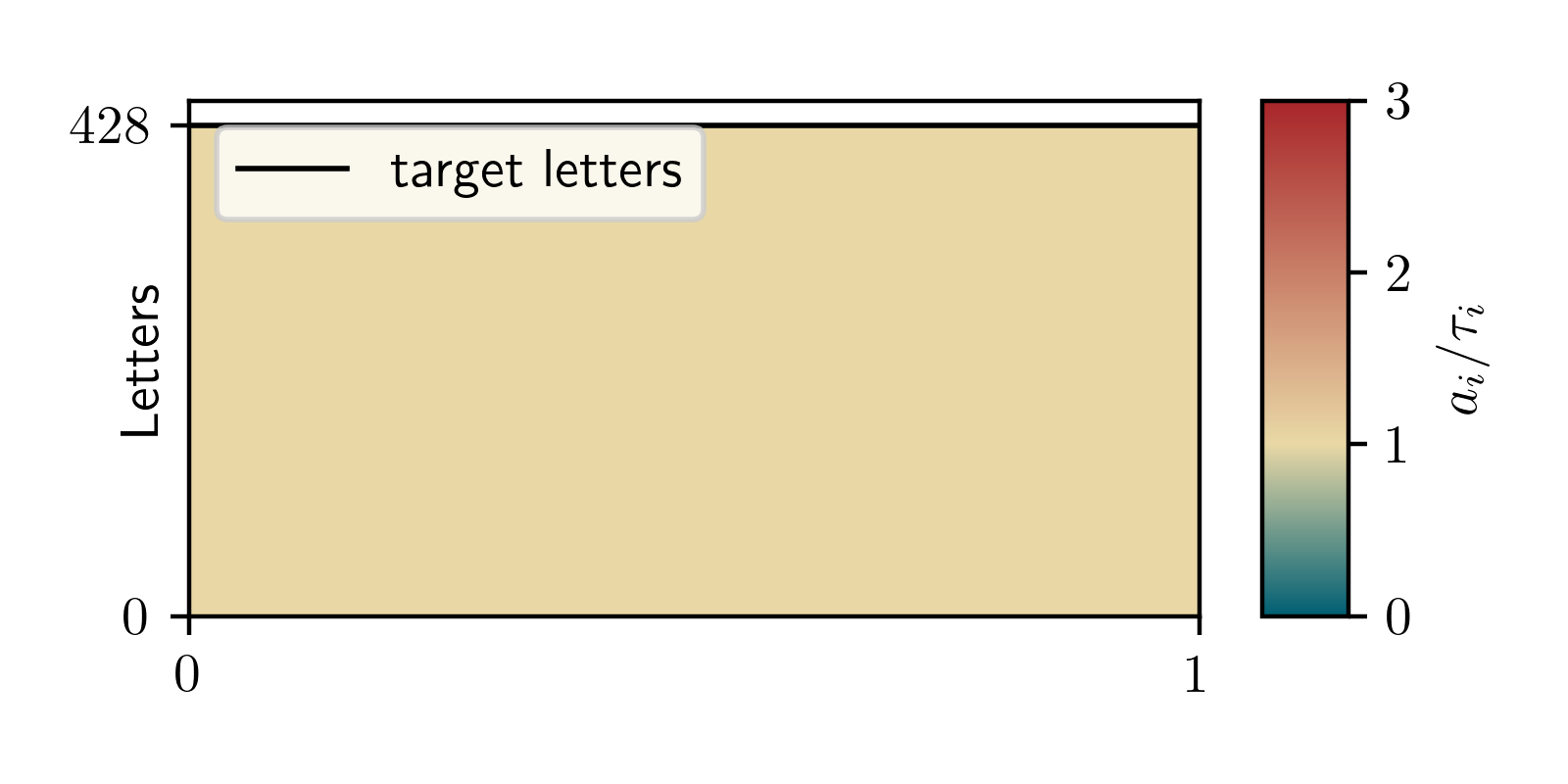}
        \caption{\buckets ($t_G = 1$)}
        \label{fig:results_Hessen_Medium_greedy_bucket_fill}
    \end{subfigure}
    \caption{Medium municipalities of Hessen ($\ell_G = 428$)}
    \label{fig:results_Hessen_Medium}
\end{figure} 

\begin{figure}
    \centering
    \begin{subfigure}{0.32\textwidth}
        \includegraphics[draft=\draft, width=\linewidth]{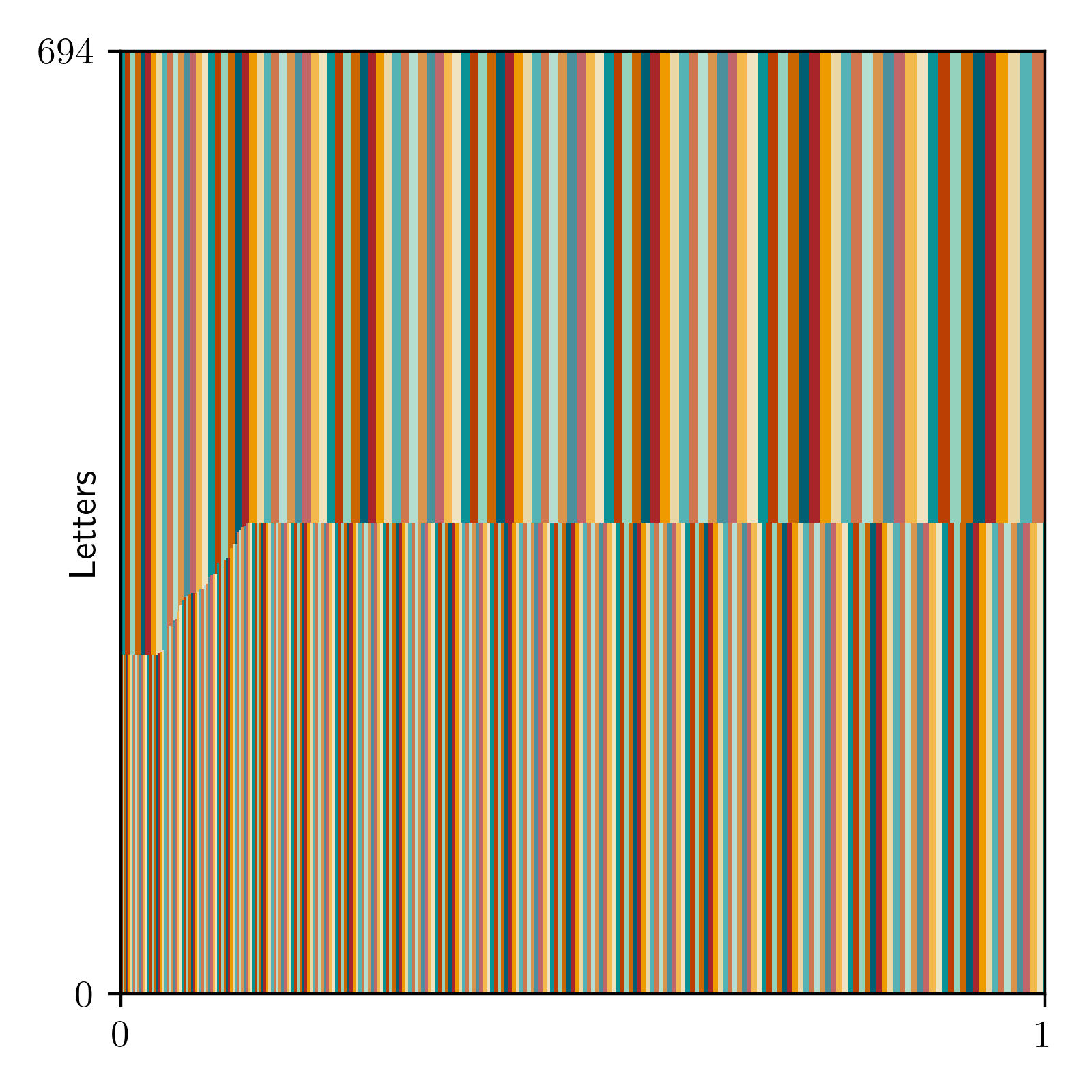}
        \includegraphics[draft=\draft, width=\linewidth]{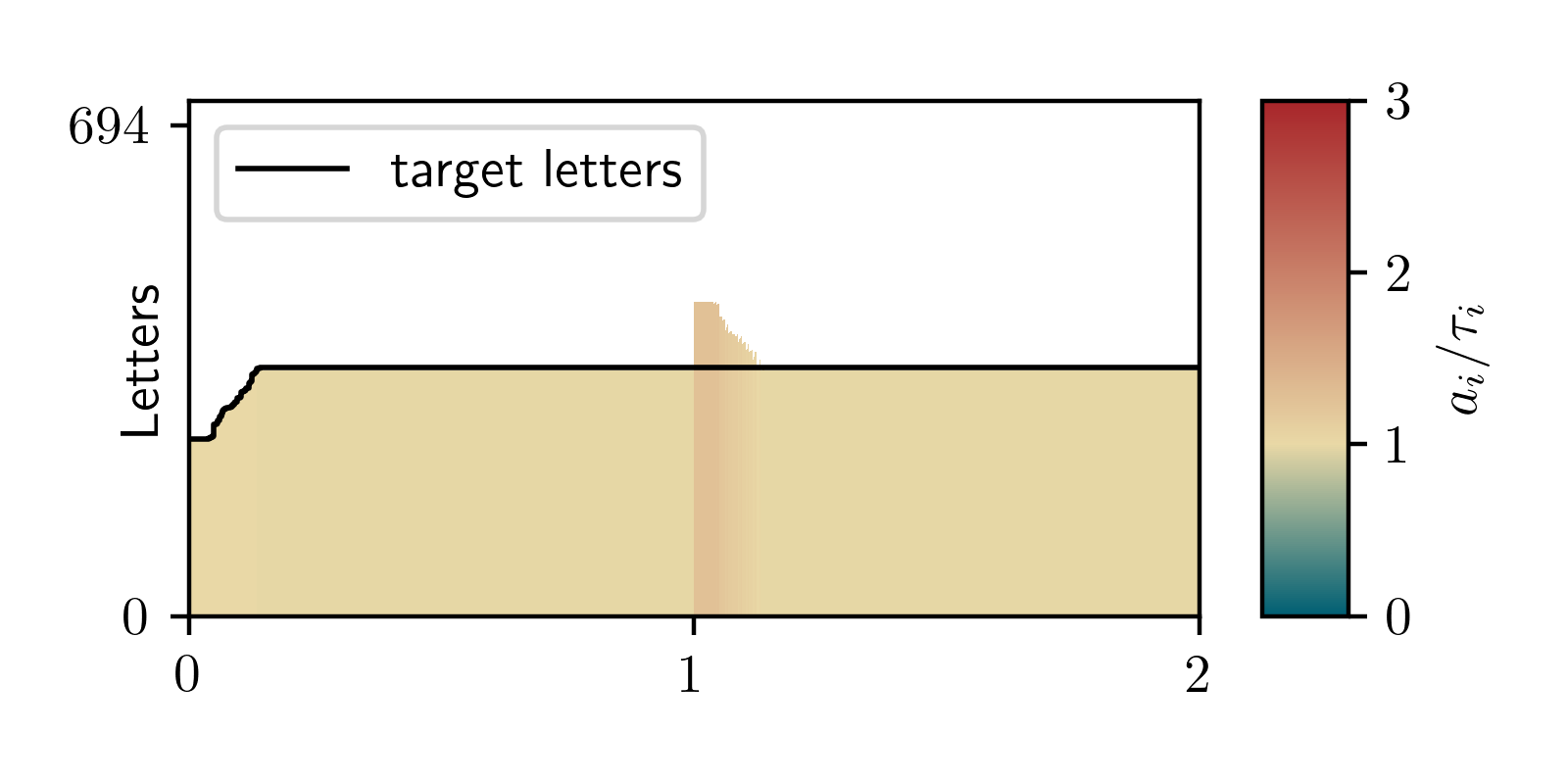}
        \caption{\greq ($t_G = 2$)}
        \label{fig:results_Hessen_Small_greedy_equal}
    \end{subfigure}
    \begin{subfigure}{0.32\textwidth}
        \includegraphics[draft=\draft, width=\linewidth]{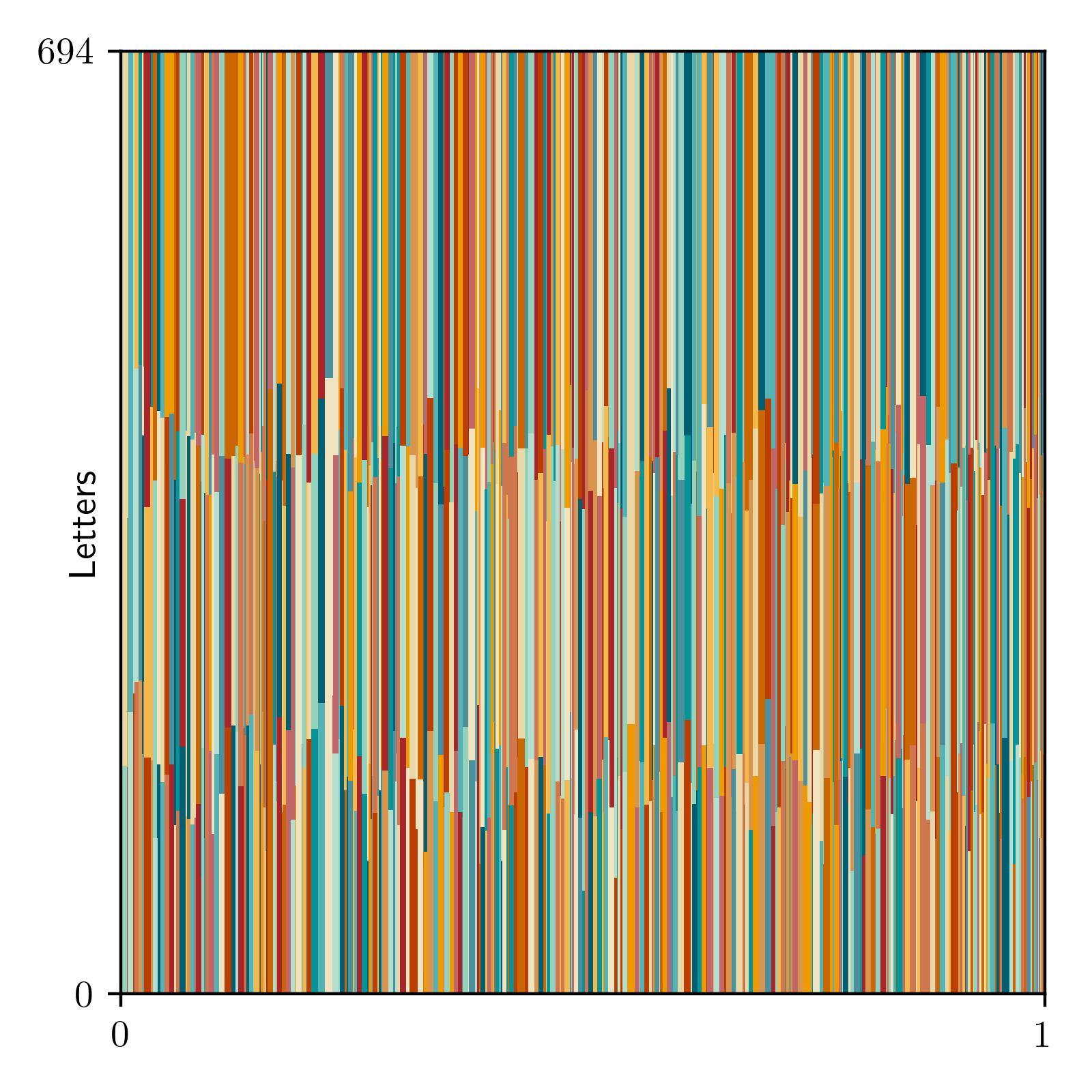}
        \includegraphics[draft=\draft, width=\linewidth]{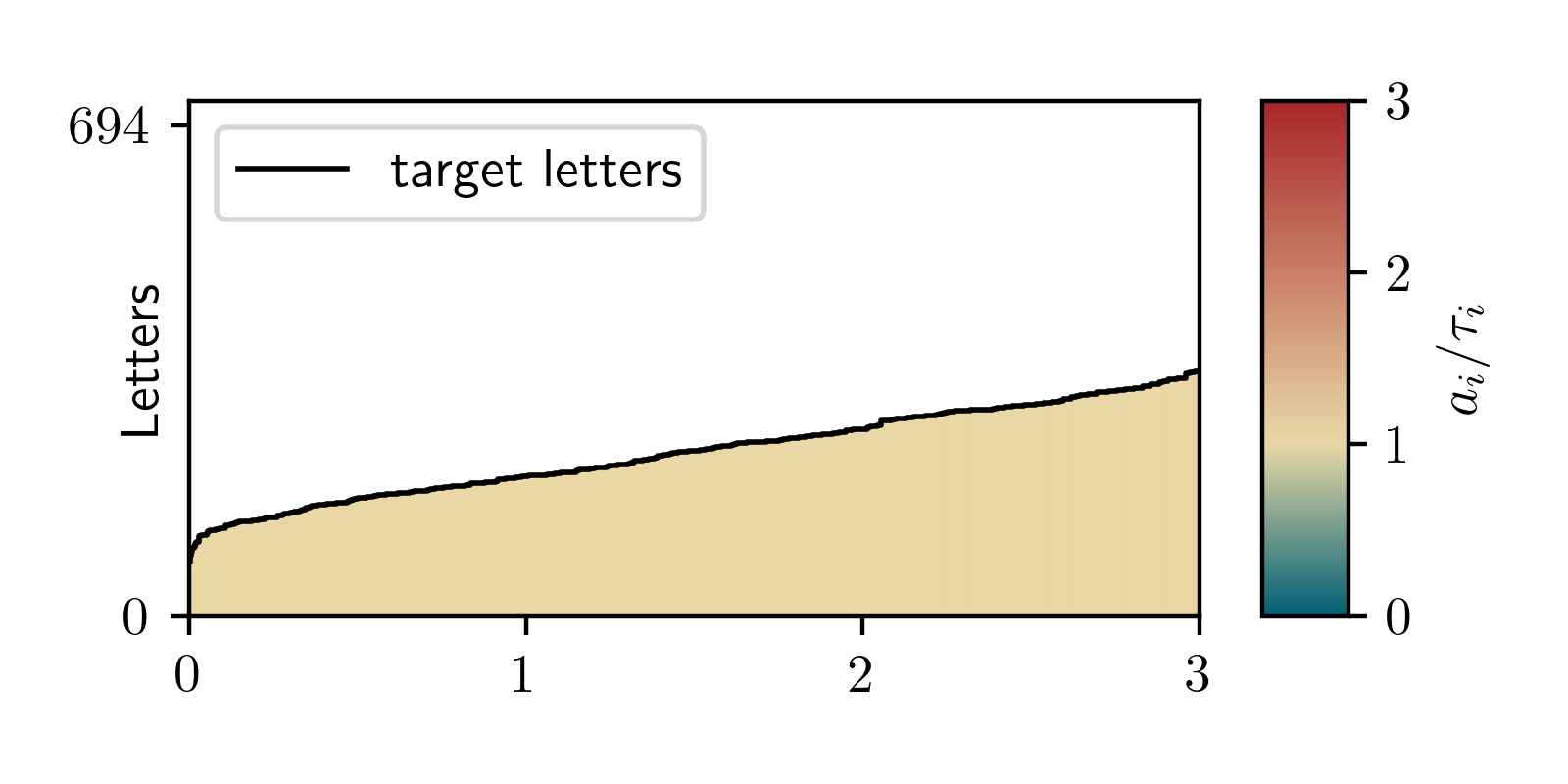}
        \caption{\colgen ($t_G\!=\!3$)}
        \label{fig:results_Hessen_Small_column_generation}
    \end{subfigure}
    \begin{subfigure}{0.32\textwidth}
        \includegraphics[draft=\draft, width=\linewidth]{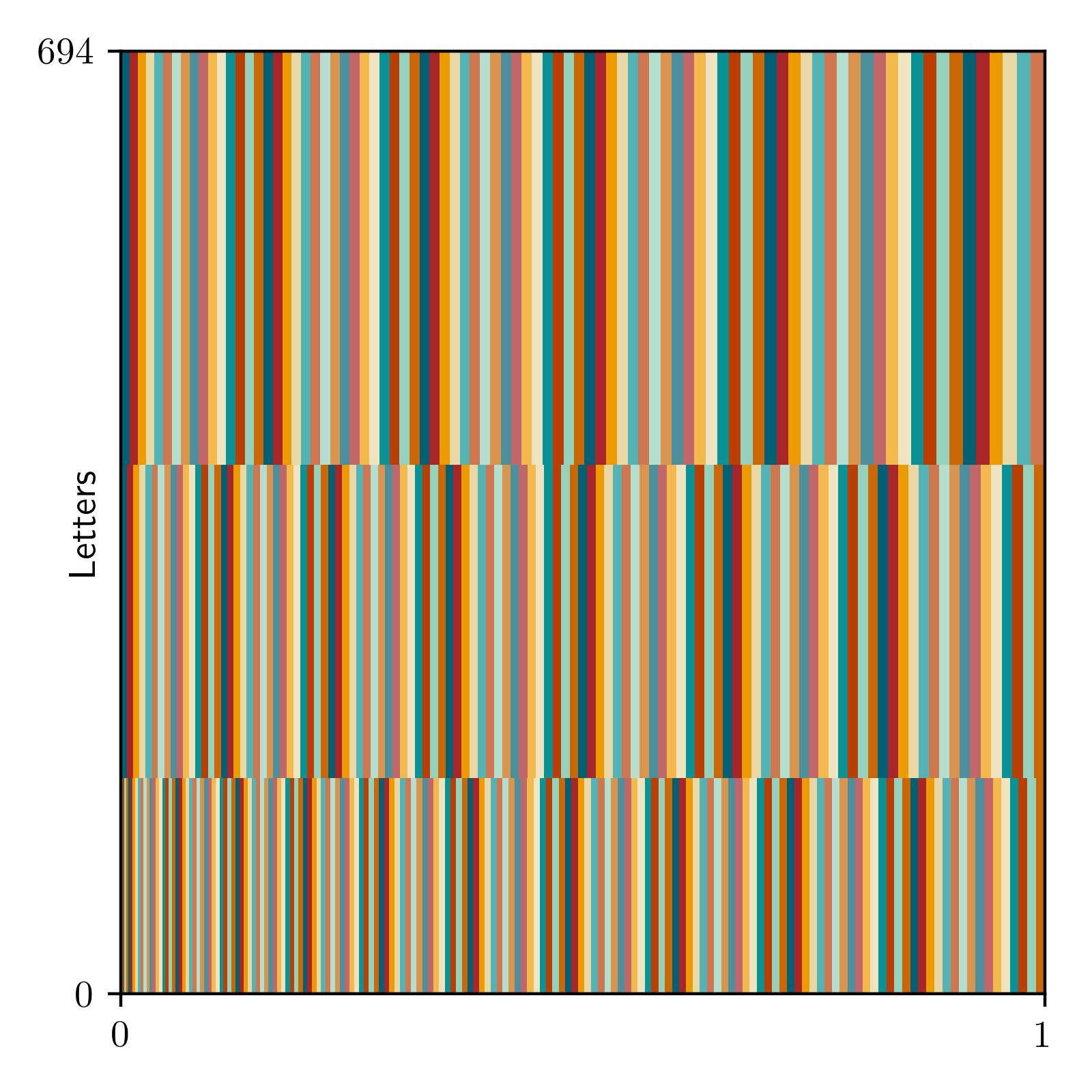}
        \includegraphics[draft=\draft, width=\linewidth]{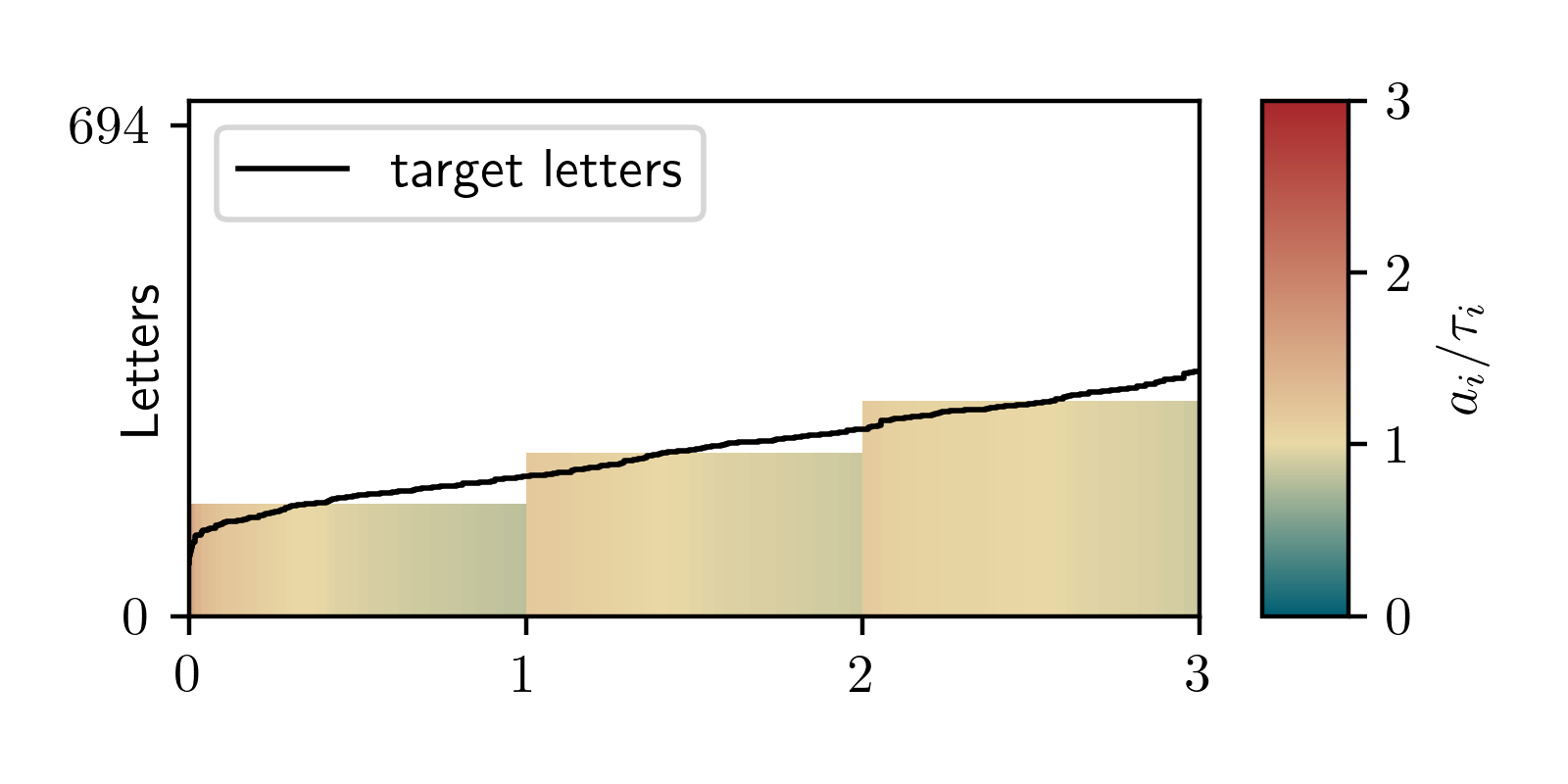}
        \caption{\buckets ($t_G = 3$)}
        \label{fig:results_Hessen_Small_greedy_bucket_fill}
    \end{subfigure}
    \caption{Small municipalities of Hessen ($\ell_G = 694$)}
    \label{fig:results_Hessen_Small}
\end{figure} 

\begin{figure}
    \centering
    \begin{subfigure}{0.32\textwidth}
        \includegraphics[draft=\draft, width=\linewidth]{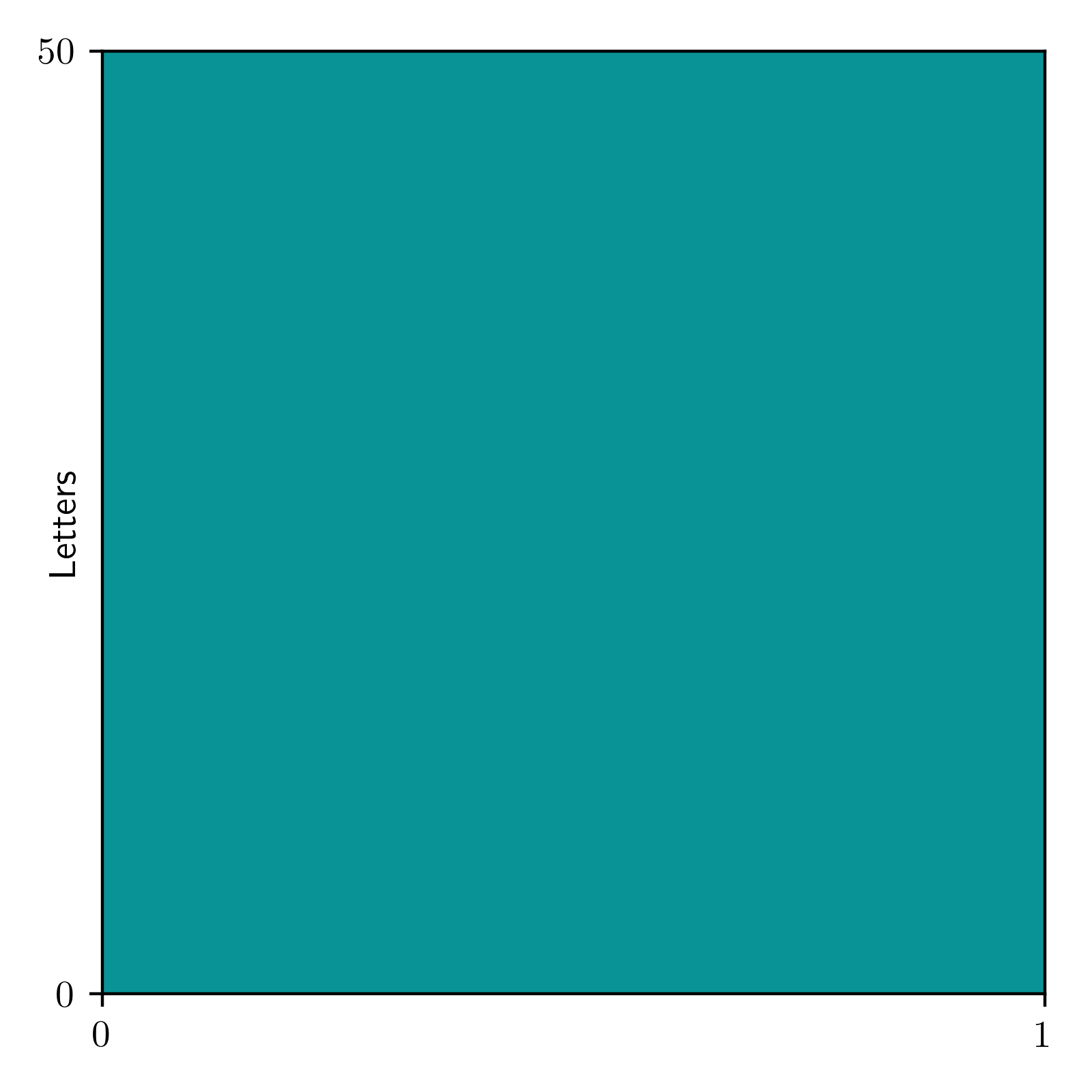}
        \includegraphics[draft=\draft, width=\linewidth]{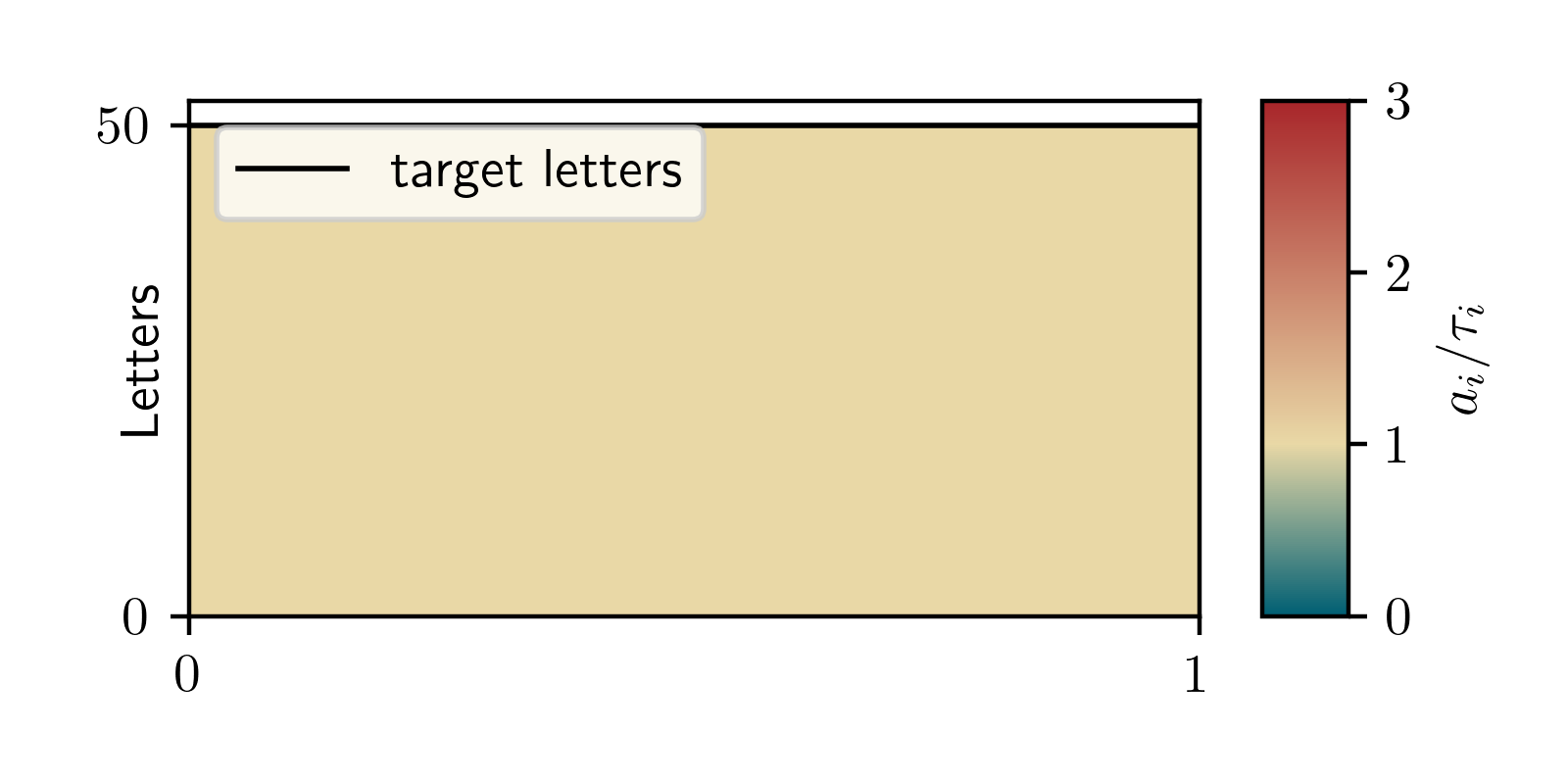}
        \caption{\greq ($t_G = 1$)}
        \label{fig:results_Mecklenburg-Vorpommern_Large_greedy_equal}
    \end{subfigure}
    \begin{subfigure}{0.32\textwidth}
        \includegraphics[draft=\draft, width=\linewidth]{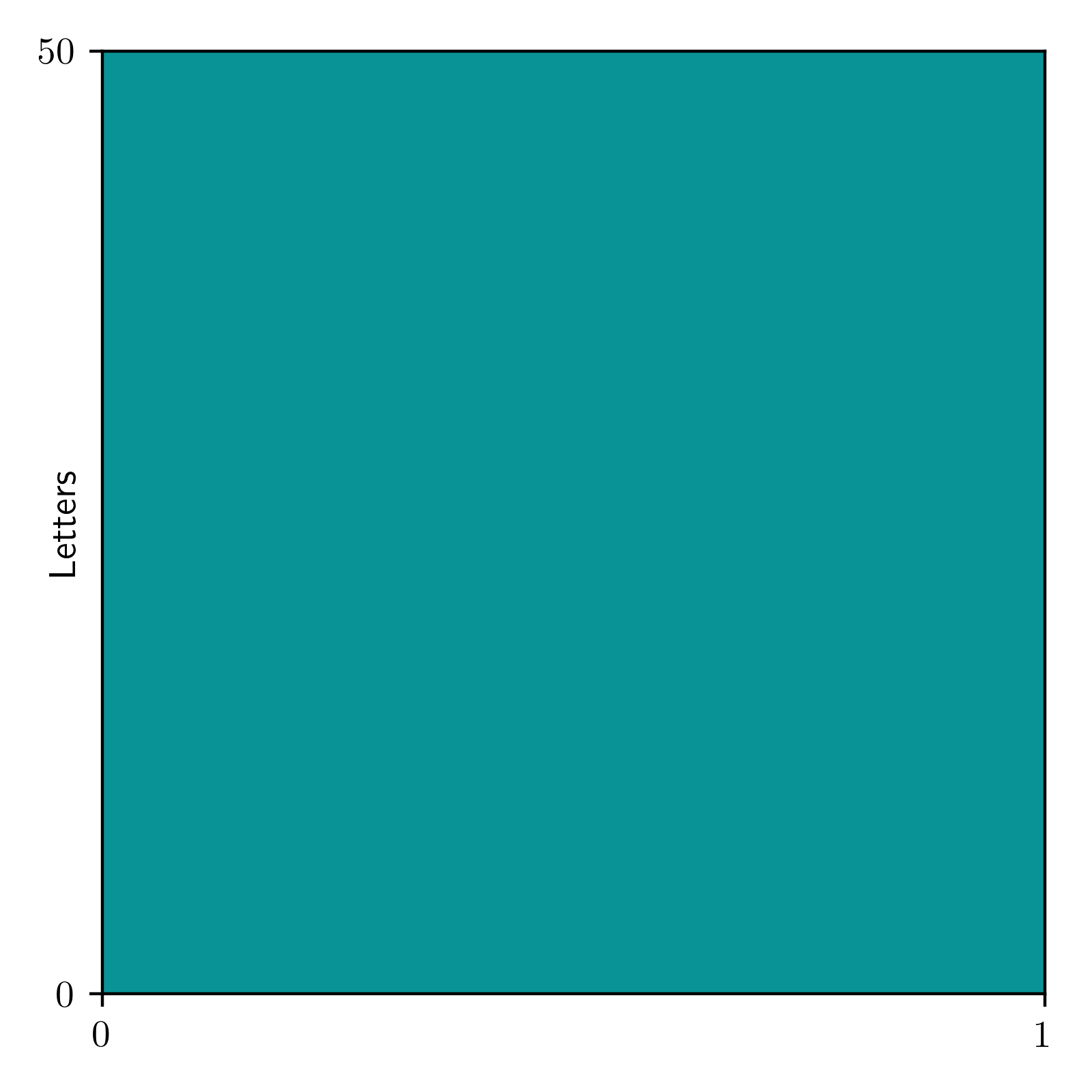}
        \includegraphics[draft=\draft, width=\linewidth]{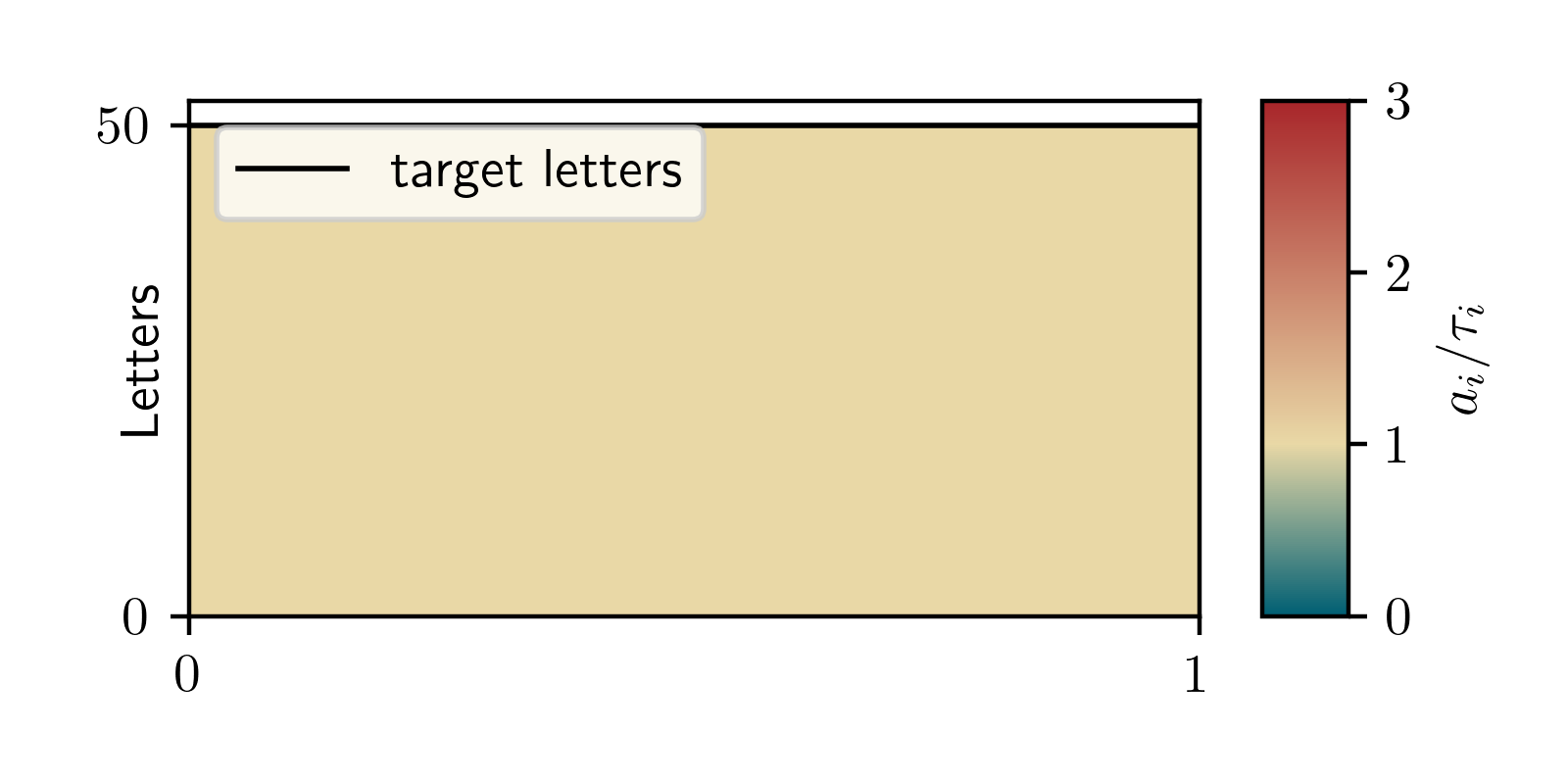}
        \caption{\colgen ($t_G\!=\!1$)}
        \label{fig:results_Mecklenburg-Vorpommern_Large_column_generation}
    \end{subfigure}
    \begin{subfigure}{0.32\textwidth}
        \includegraphics[draft=\draft, width=\linewidth]{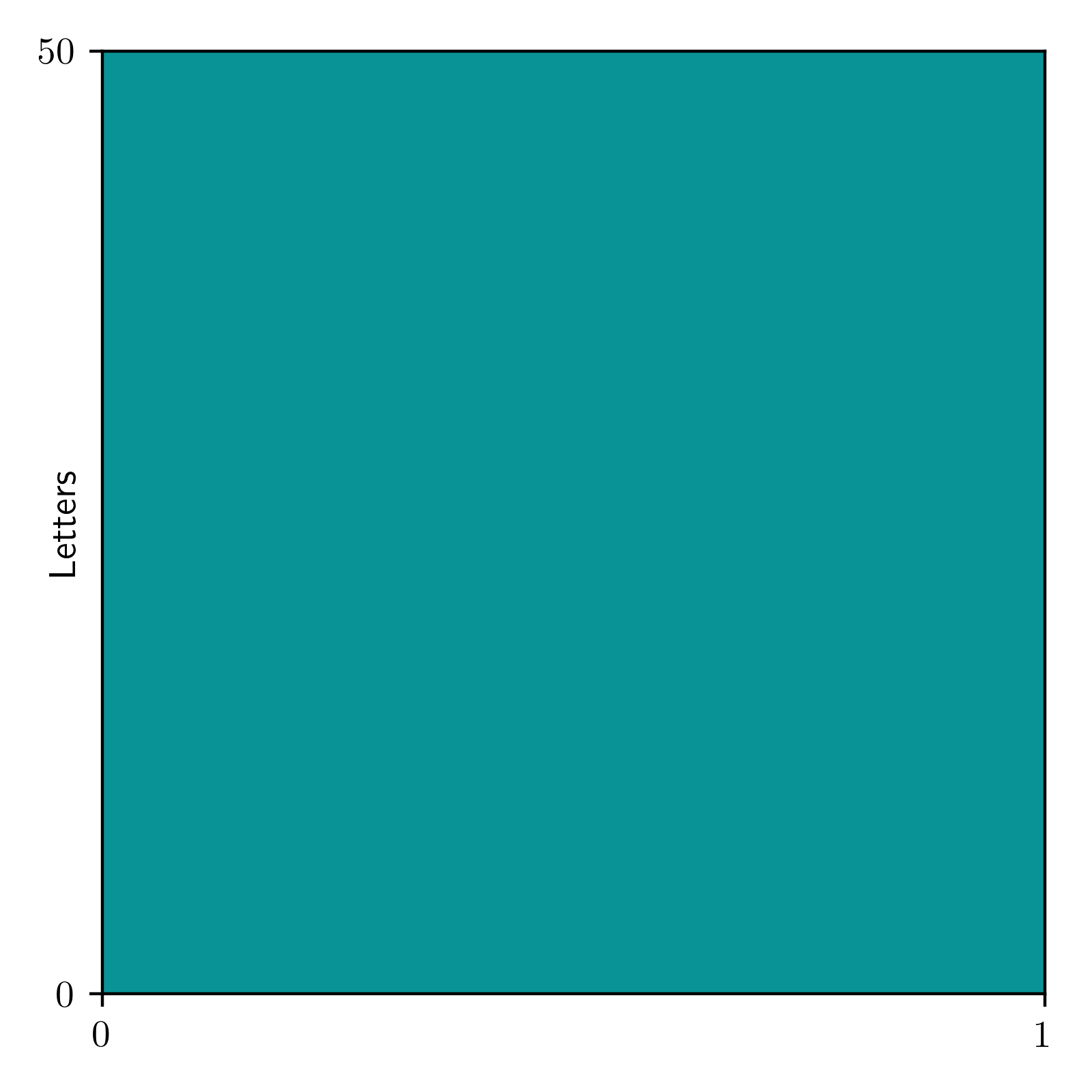}
        \includegraphics[draft=\draft, width=\linewidth]{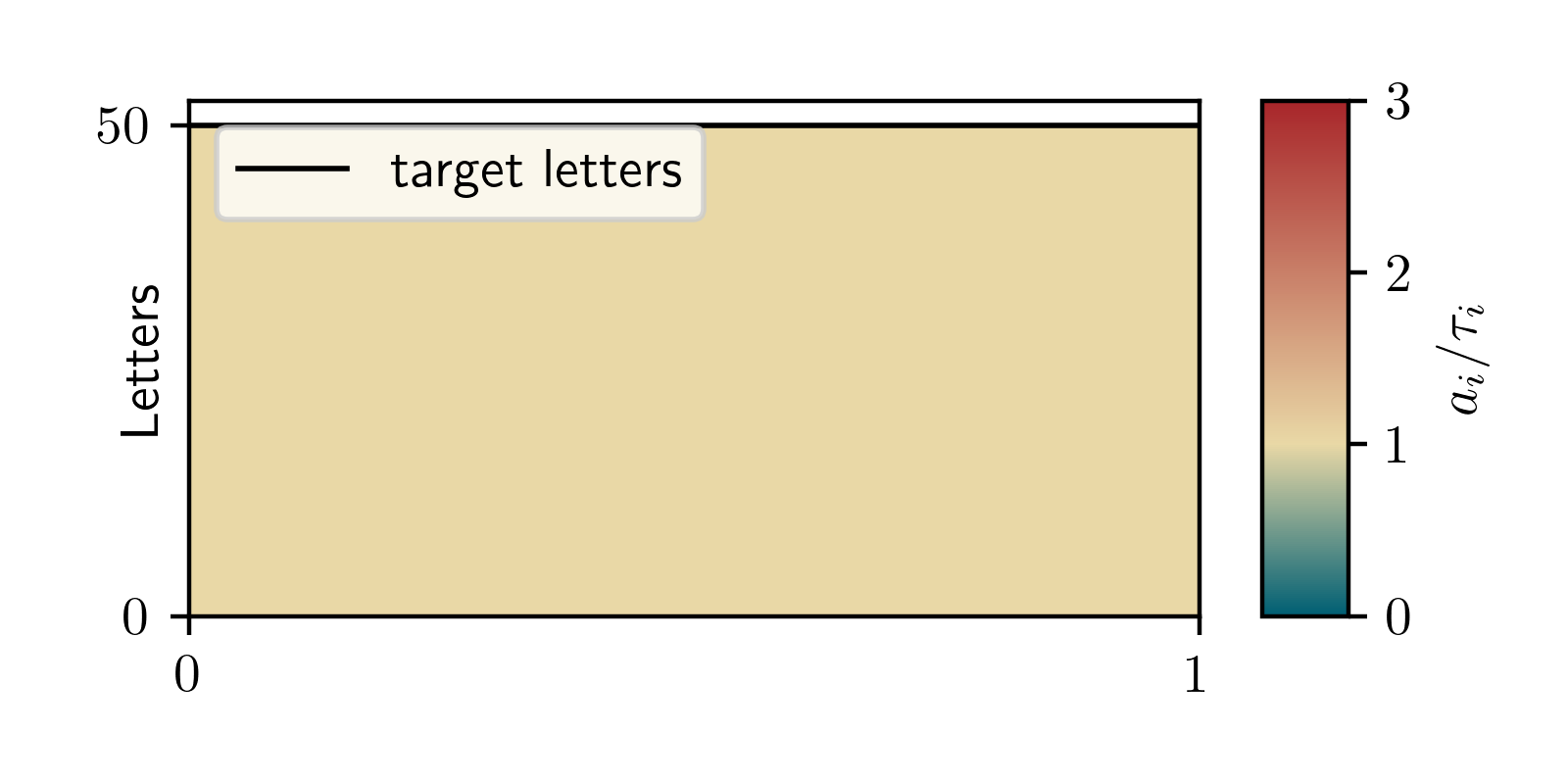}
        \caption{\buckets ($t_G = 1$)}
        \label{fig:results_Mecklenburg-Vorpommern_Large_greedy_bucket_fill}
    \end{subfigure}
    \caption{Large municipalities of Mecklenburg-Vorpommern ($\ell_G = 50$)}
    \label{fig:results_Mecklenburg-Vorpommern_Large}
\end{figure} 

\begin{figure}
    \centering
    \begin{subfigure}{0.32\textwidth}
        \includegraphics[draft=\draft, width=\linewidth]{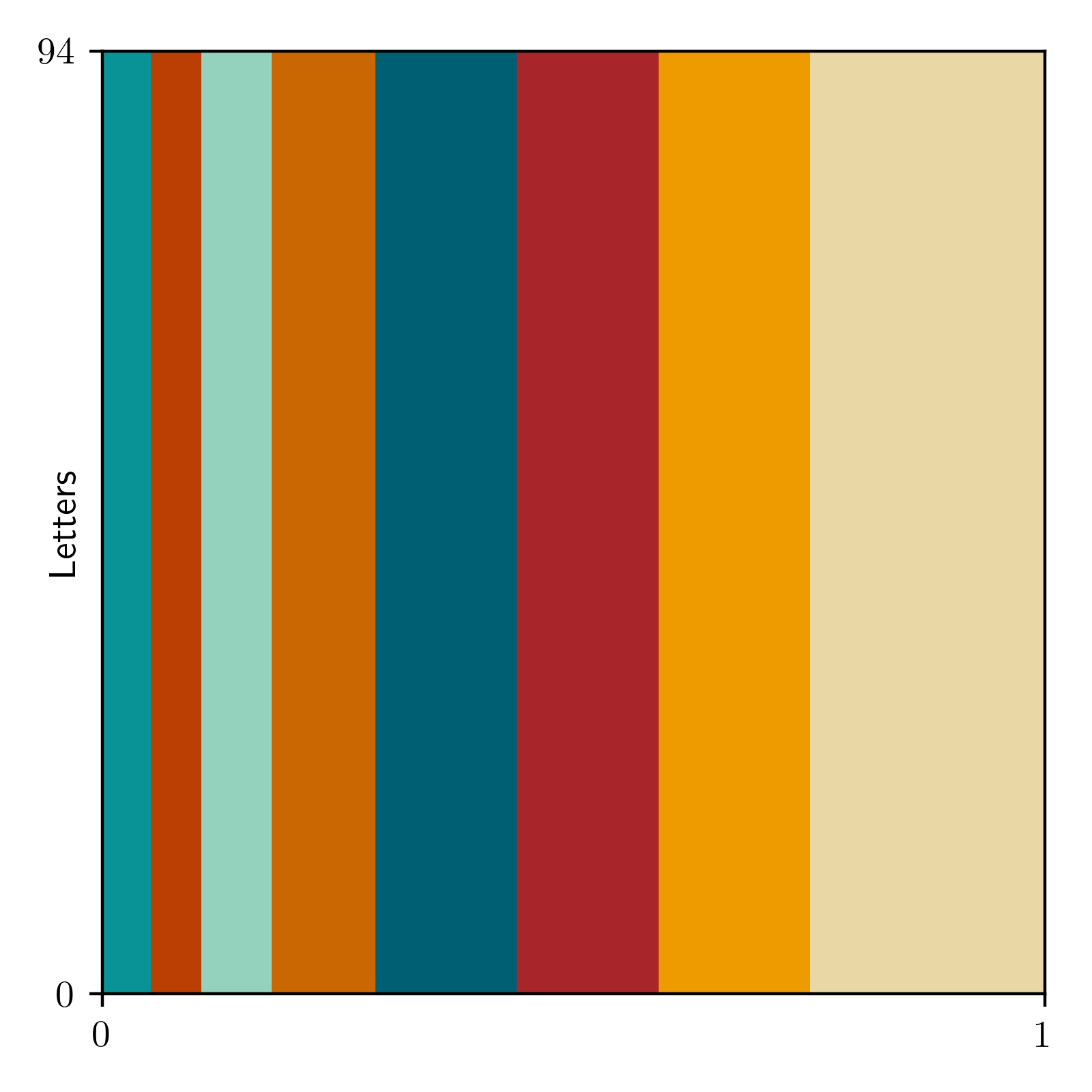}
        \includegraphics[draft=\draft, width=\linewidth]{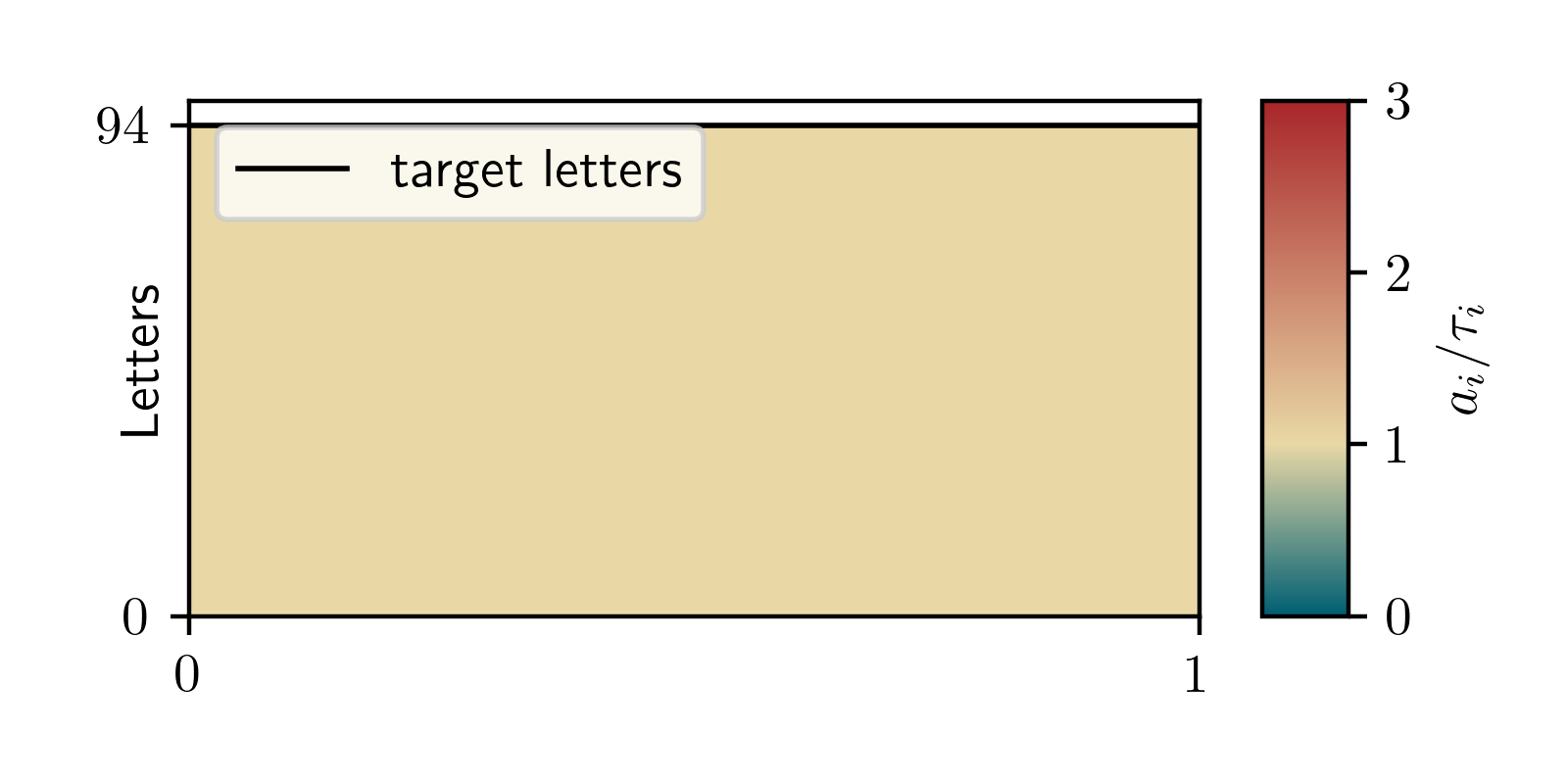}
        \caption{\greq ($t_G = 1$)}
        \label{fig:results_Mecklenburg-Vorpommern_Medium_greedy_equal}
    \end{subfigure}
    \begin{subfigure}{0.32\textwidth}
        \includegraphics[draft=\draft, width=\linewidth]{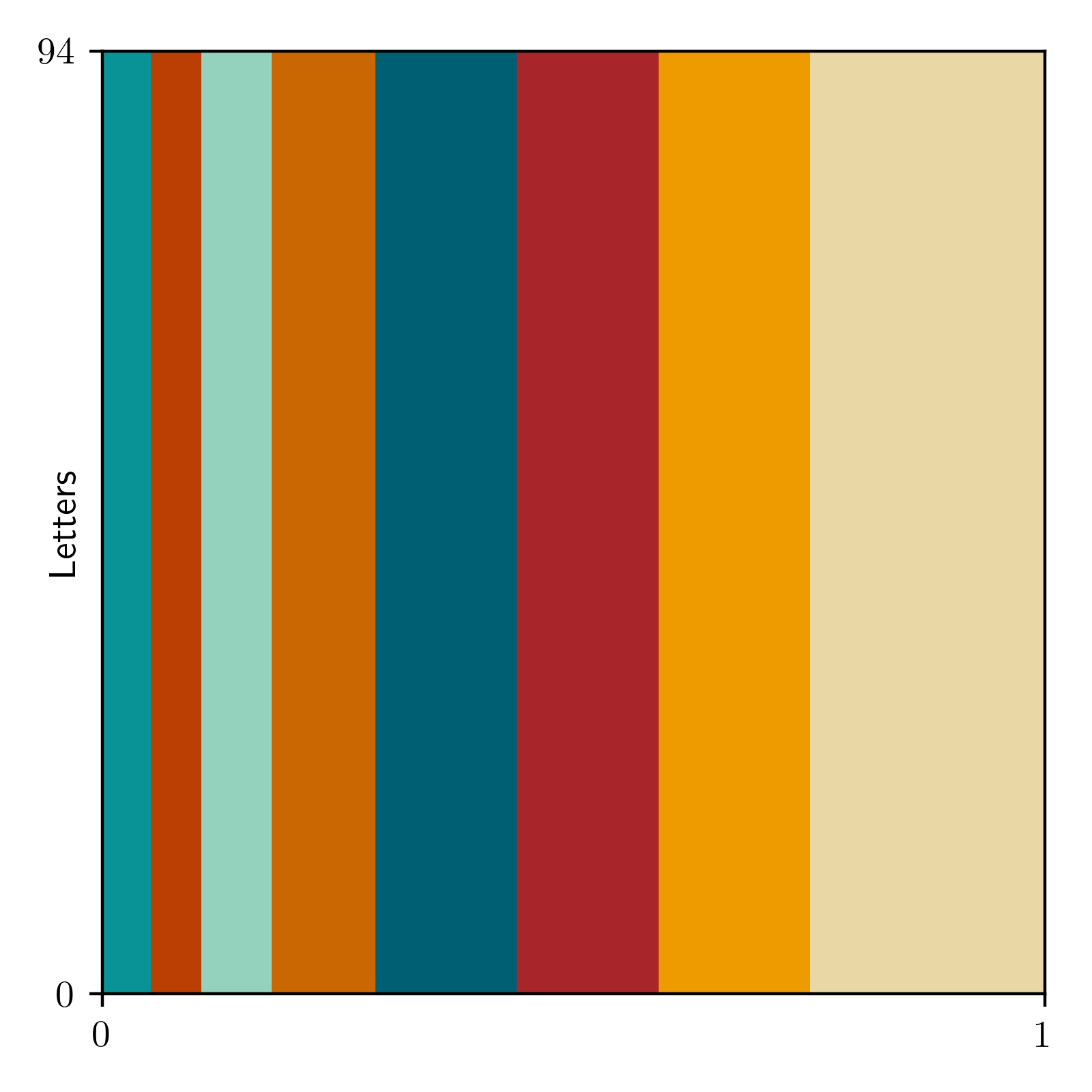}
        \includegraphics[draft=\draft, width=\linewidth]{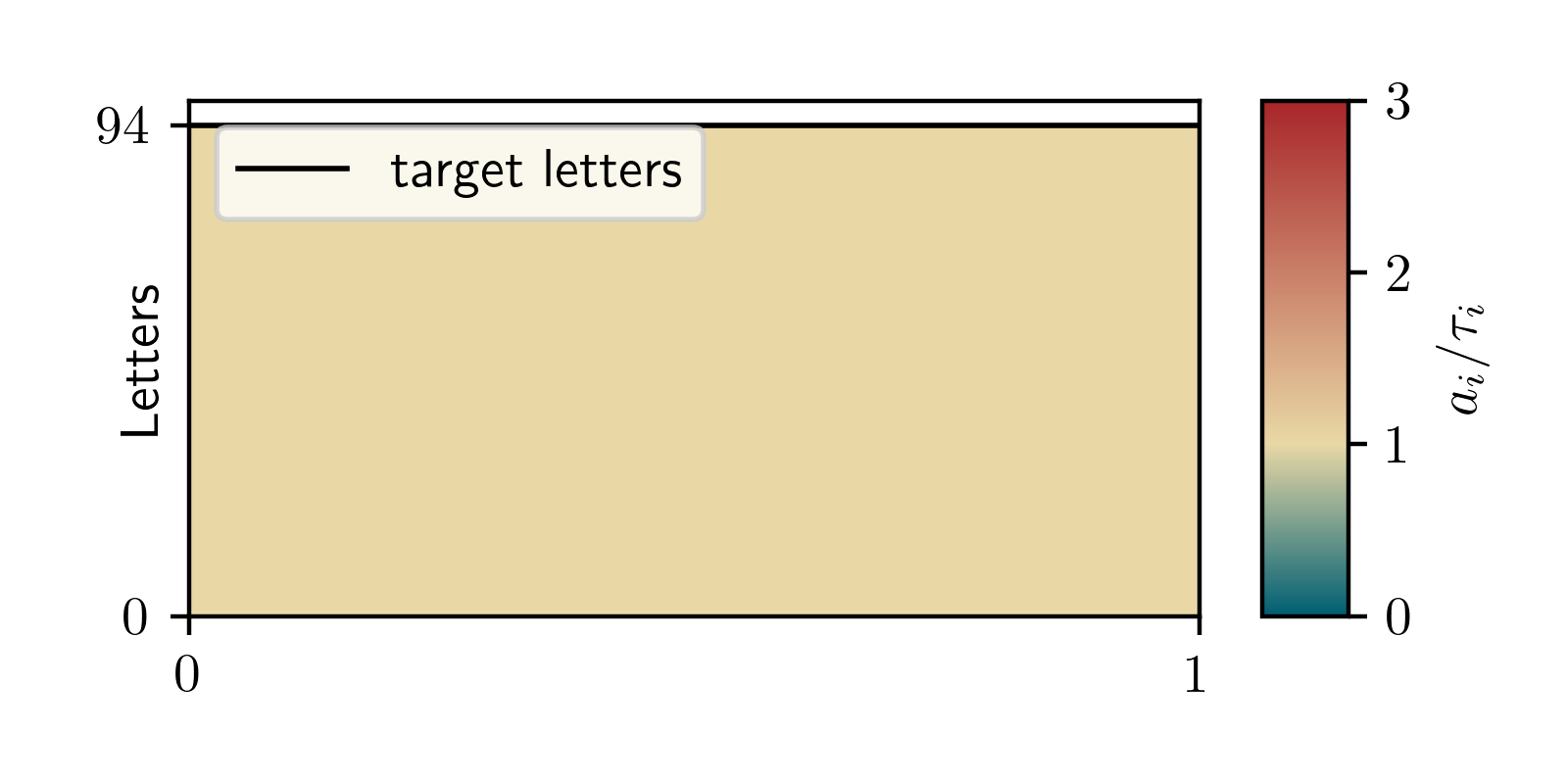}
        \caption{\colgen ($t_G\!=\!1$)}
        \label{fig:results_Mecklenburg-Vorpommern_Medium_column_generation}
    \end{subfigure}
    \begin{subfigure}{0.32\textwidth}
        \includegraphics[draft=\draft, width=\linewidth]{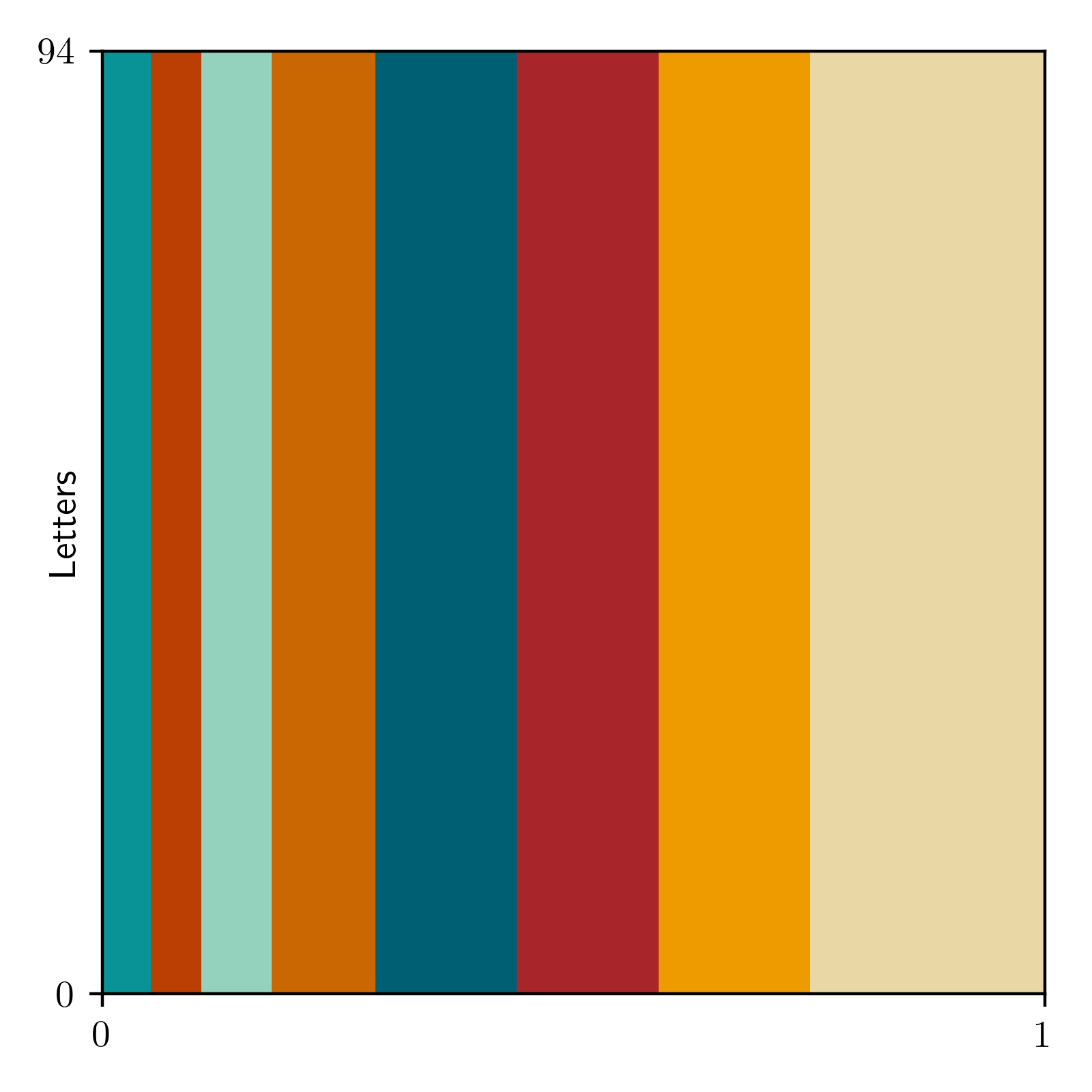}
        \includegraphics[draft=\draft, width=\linewidth]{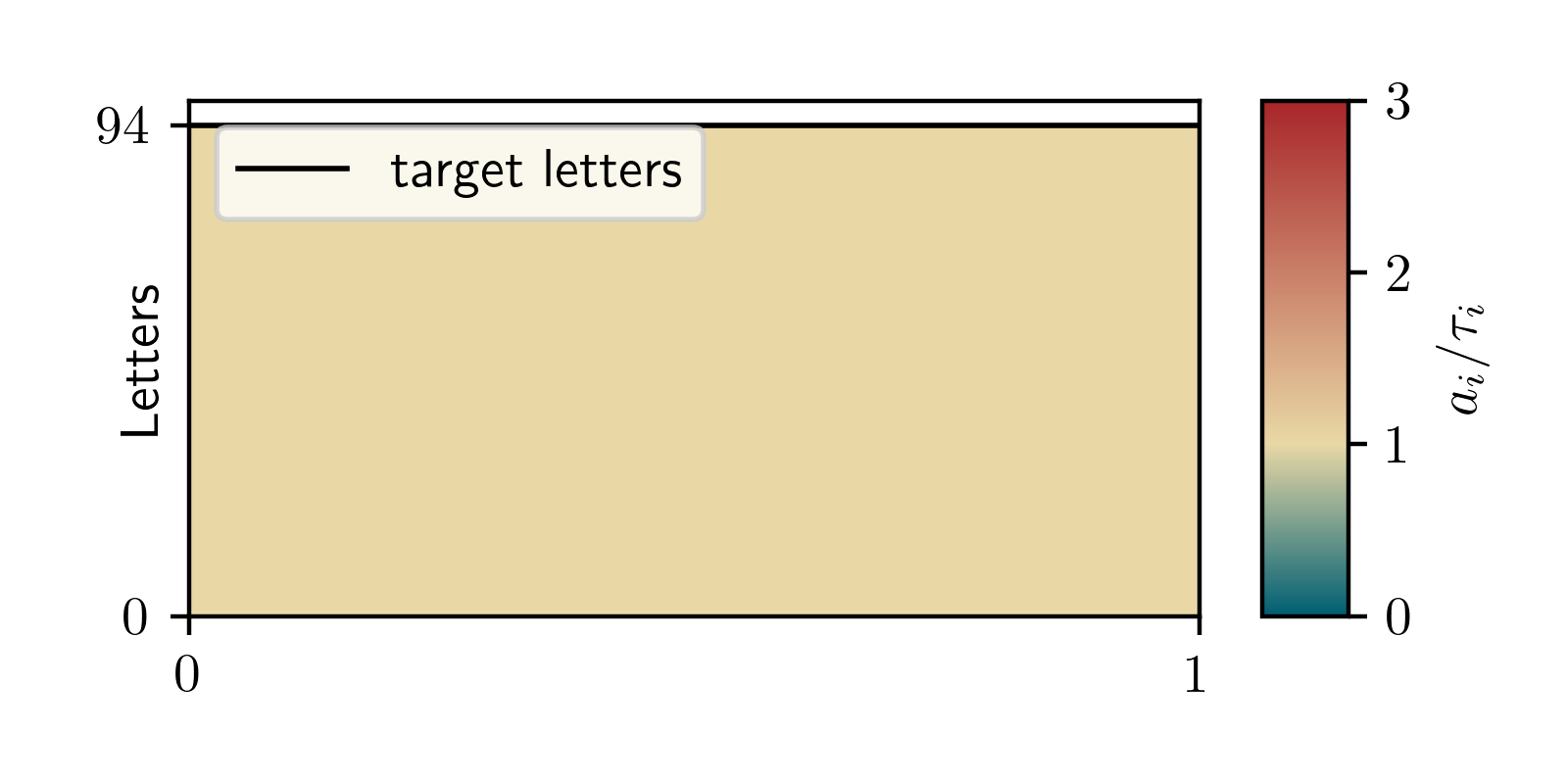}
        \caption{\buckets ($t_G = 1$)}
        \label{fig:results_Mecklenburg-Vorpommern_Medium_greedy_bucket_fill}
    \end{subfigure}
    \caption{Medium municipalities of Mecklenburg-Vorpommern ($\ell_G = 94$)}
    \label{fig:results_Mecklenburg-Vorpommern_Medium}
\end{figure} 

\begin{figure}
    \centering
    \begin{subfigure}{0.32\textwidth}
        \includegraphics[draft=\draft, width=\linewidth]{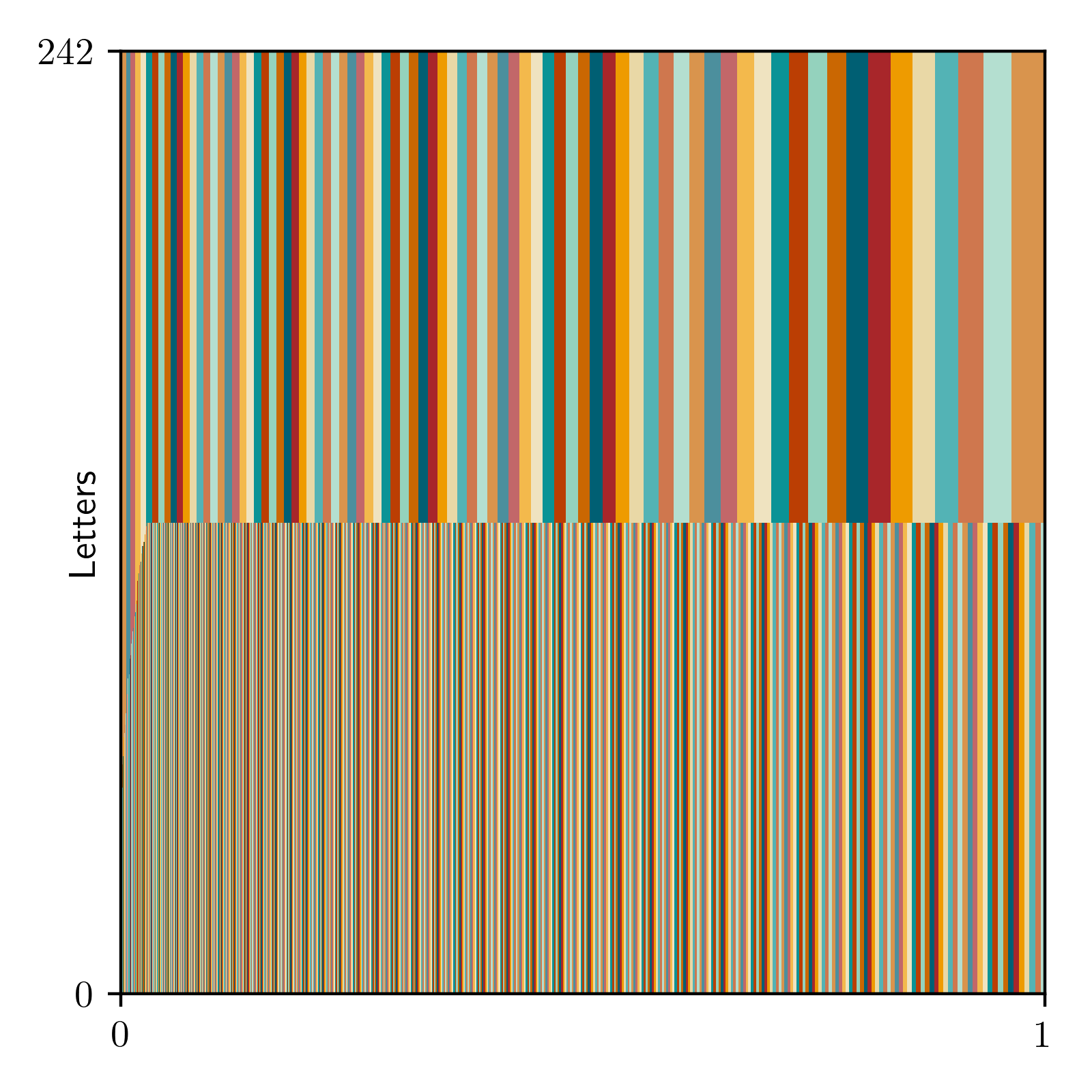}
        \includegraphics[draft=\draft, width=\linewidth]{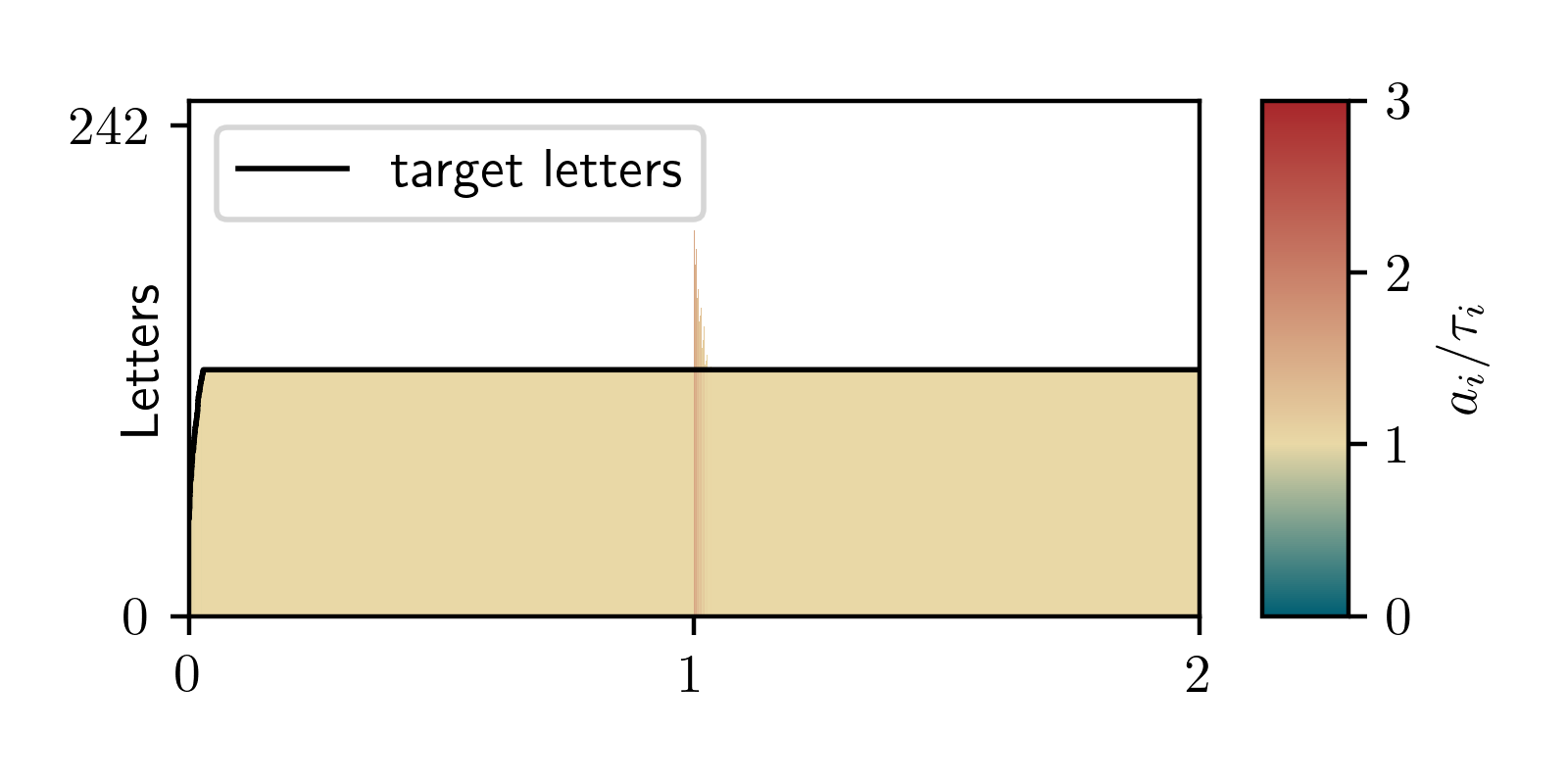}
        \caption{\greq ($t_G = 2$)}
        \label{fig:results_Mecklenburg-Vorpommern_Small_greedy_equal}
    \end{subfigure}
    \begin{subfigure}{0.32\textwidth}
        \includegraphics[draft=\draft, width=\linewidth]{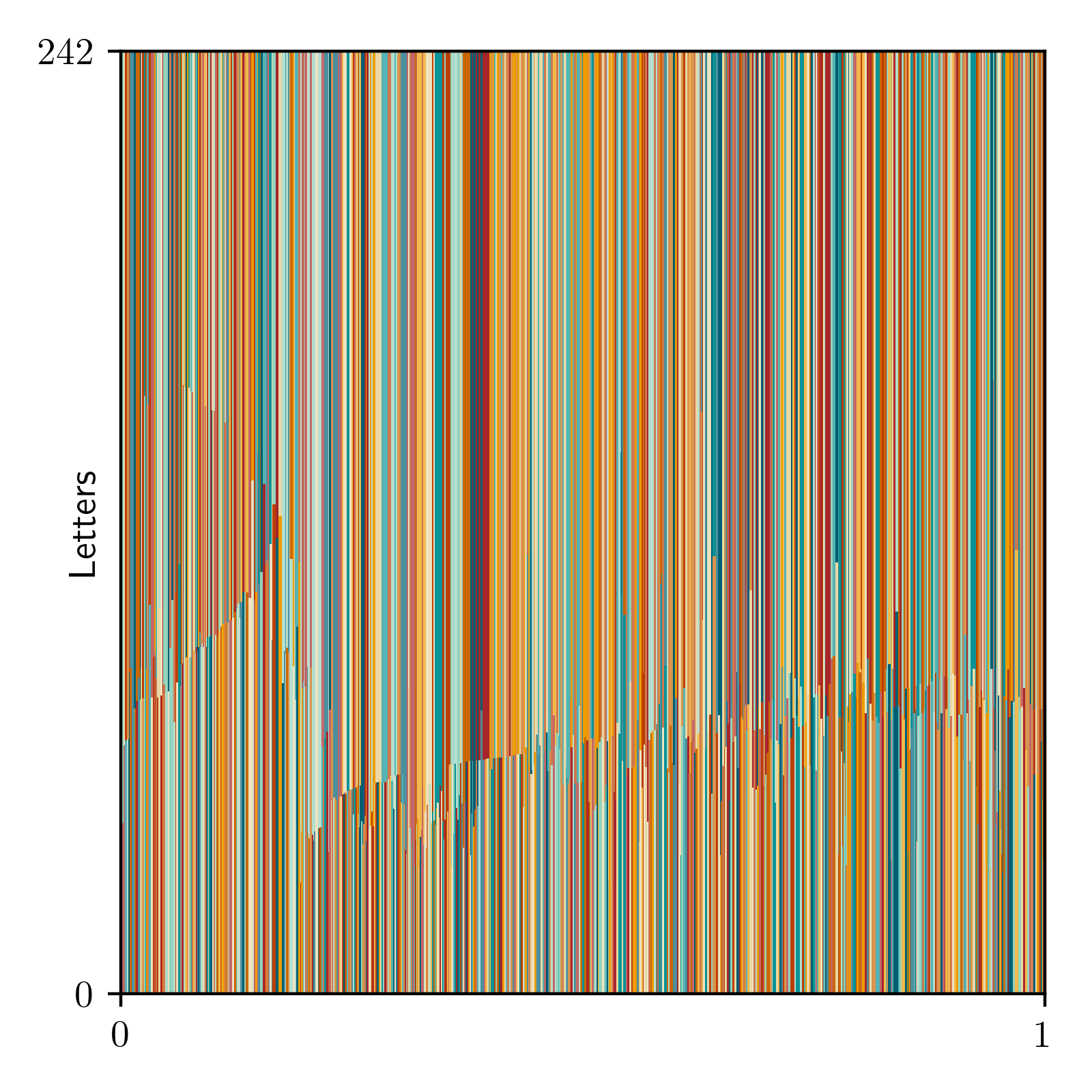}
        \includegraphics[draft=\draft, width=\linewidth]{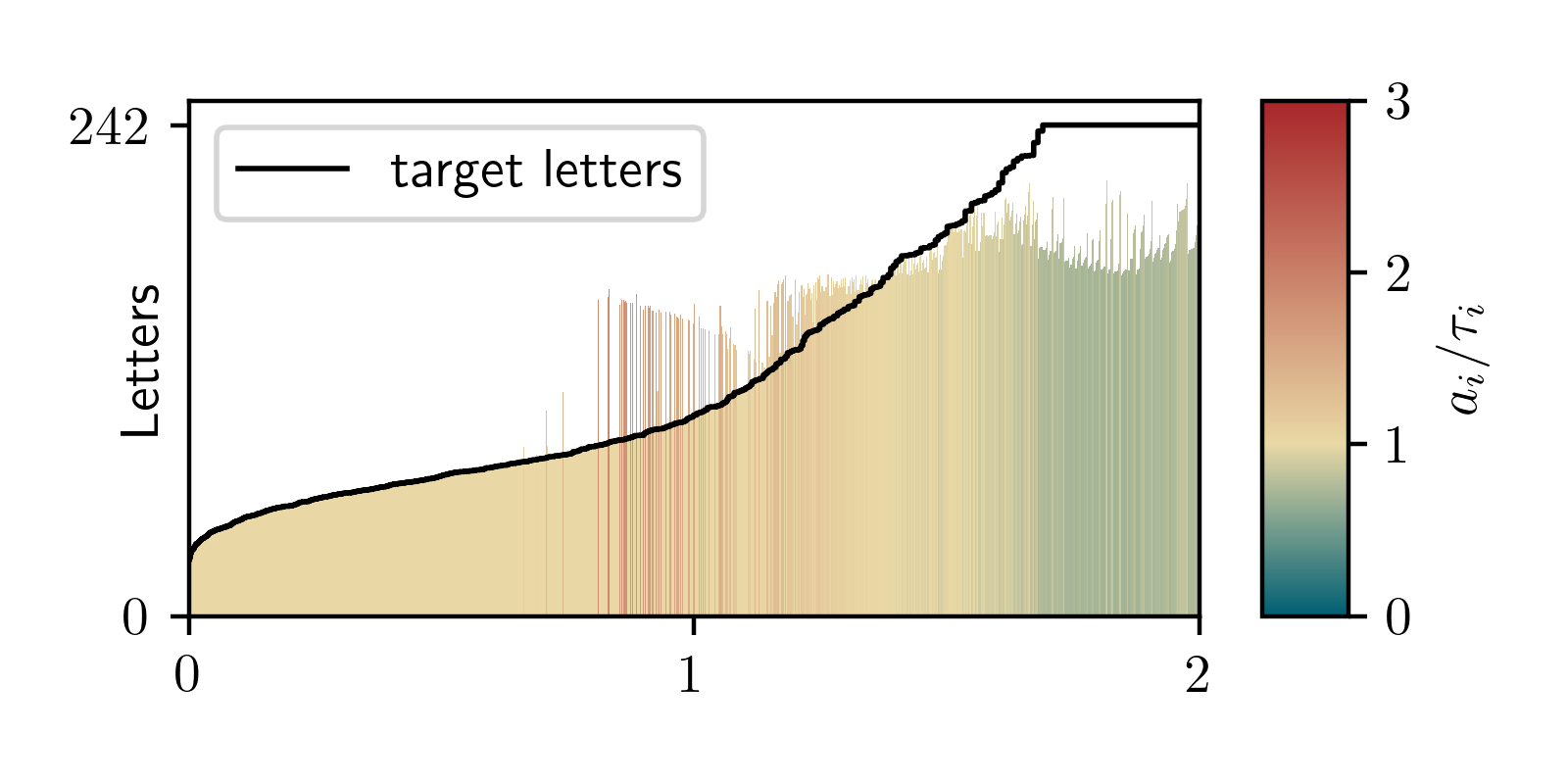}
        \caption{\colgen ($t_G\!=\!2$)}
        \label{fig:results_Mecklenburg-Vorpommern_Small_column_generation}
    \end{subfigure}
    \begin{subfigure}{0.32\textwidth}
        \includegraphics[draft=\draft, width=\linewidth]{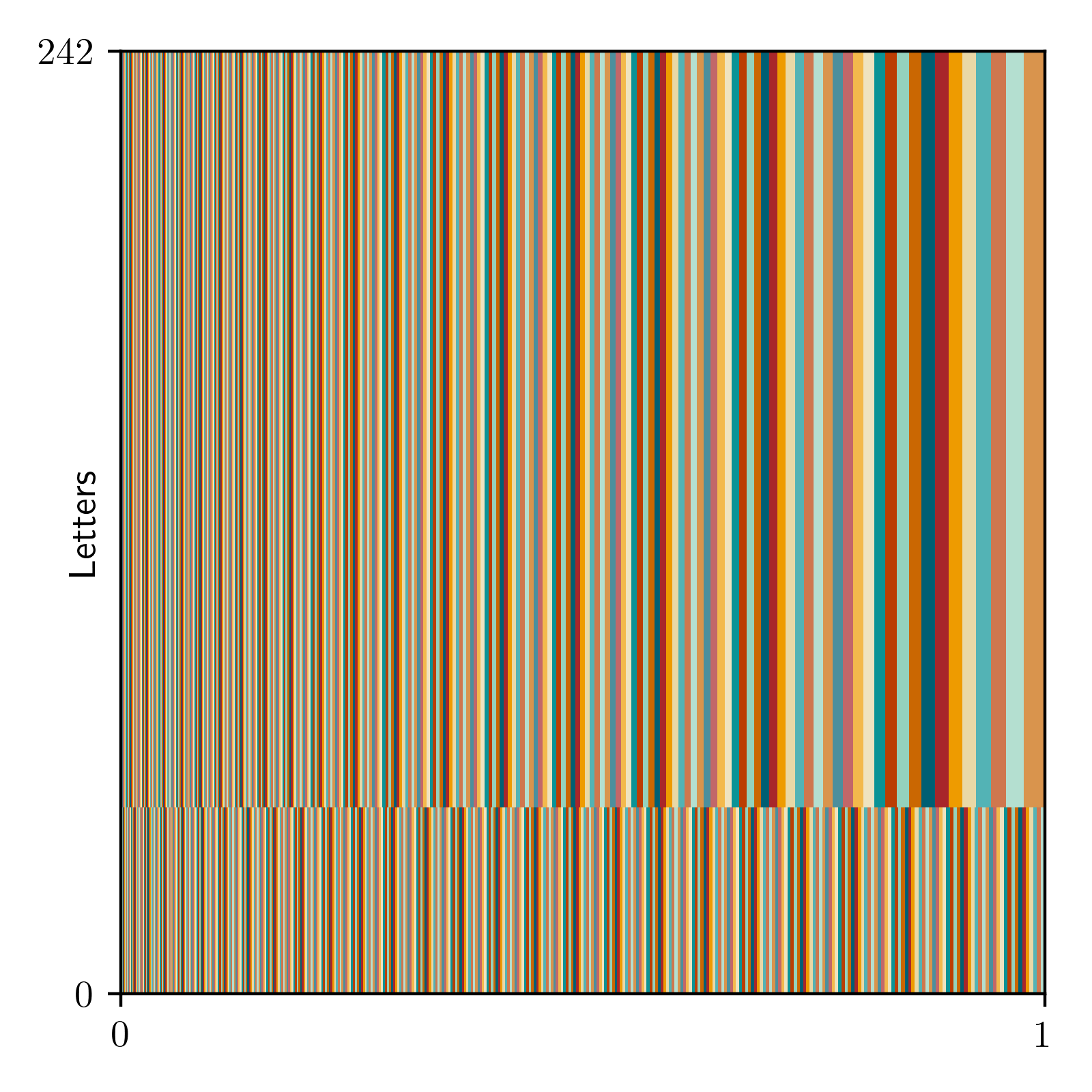}
        \includegraphics[draft=\draft, width=\linewidth]{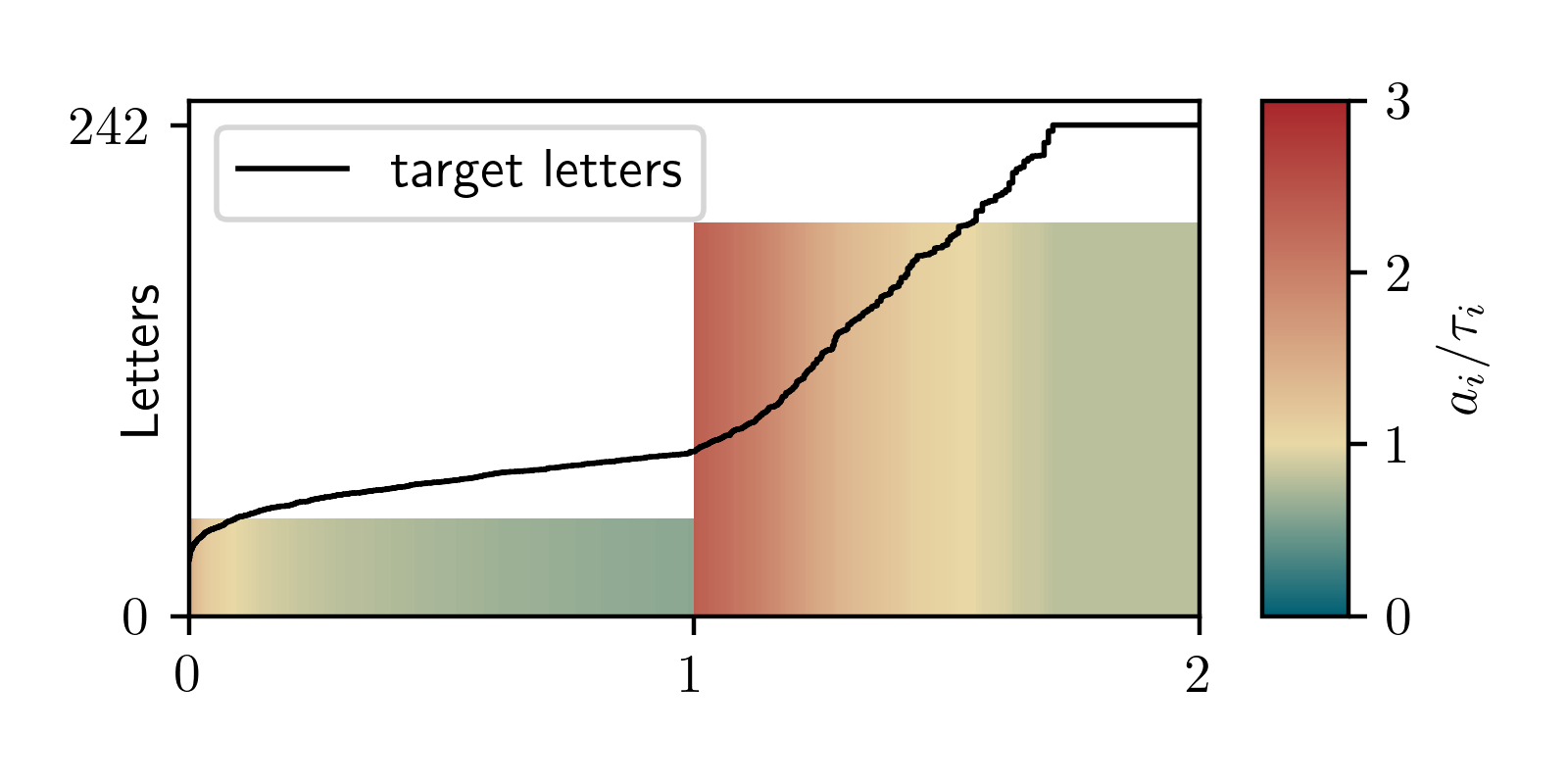}
        \caption{\buckets ($t_G = 2$)}
        \label{fig:results_Mecklenburg-Vorpommern_Small_greedy_bucket_fill}
    \end{subfigure}
    \caption{Small municipalities of Mecklenburg-Vorpommern ($\ell_G = 242$)}
    \label{fig:results_Mecklenburg-Vorpommern_Small}
\end{figure} 

\begin{figure}
    \centering
    \begin{subfigure}{0.32\textwidth}
        \includegraphics[draft=\draft, width=\linewidth]{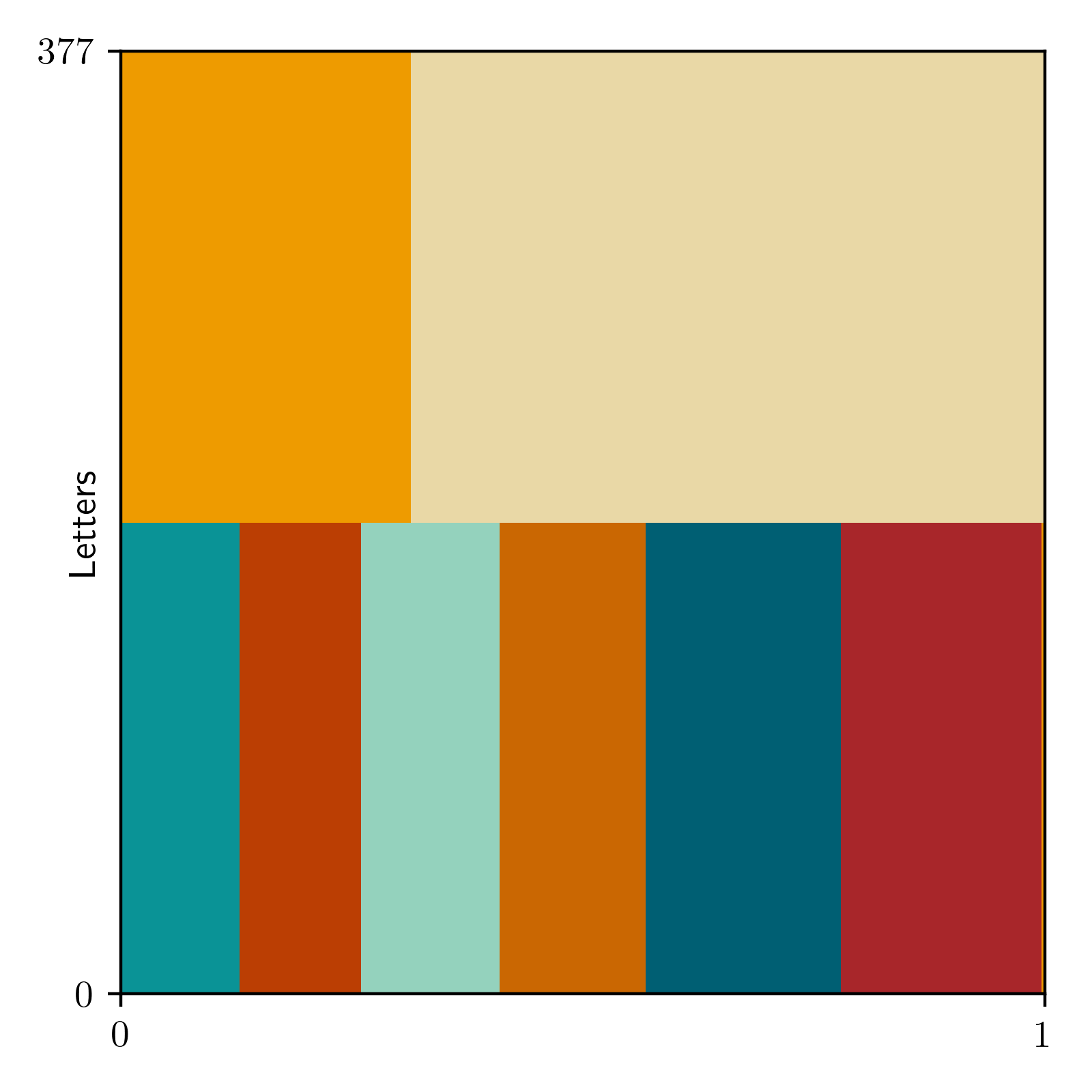}
        \includegraphics[draft=\draft, width=\linewidth]{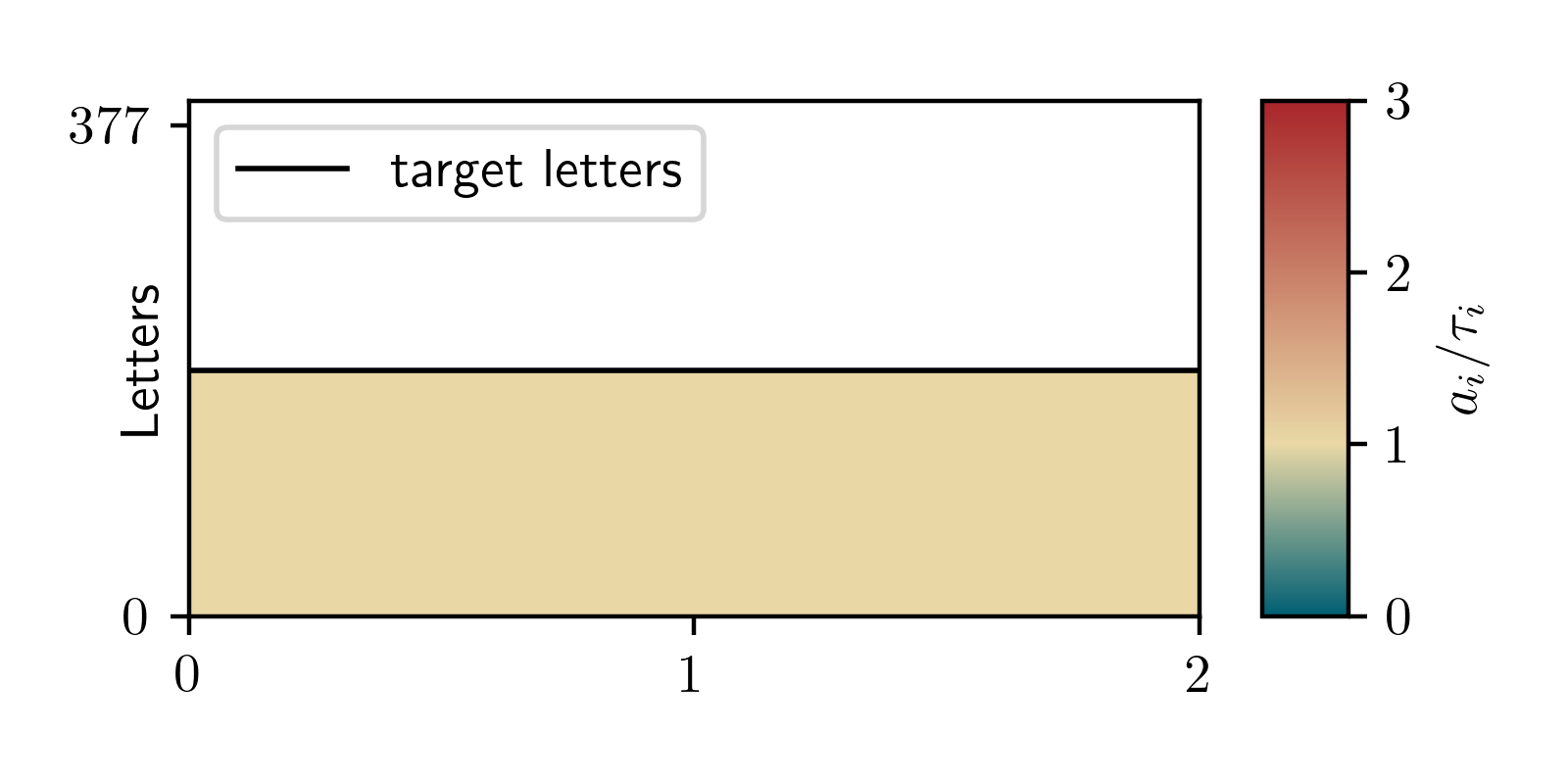}
        \caption{\greq ($t_G = 2$)}
        \label{fig:results_Niedersachsen_Large_greedy_equal}
    \end{subfigure}
    \begin{subfigure}{0.32\textwidth}
        \includegraphics[draft=\draft, width=\linewidth]{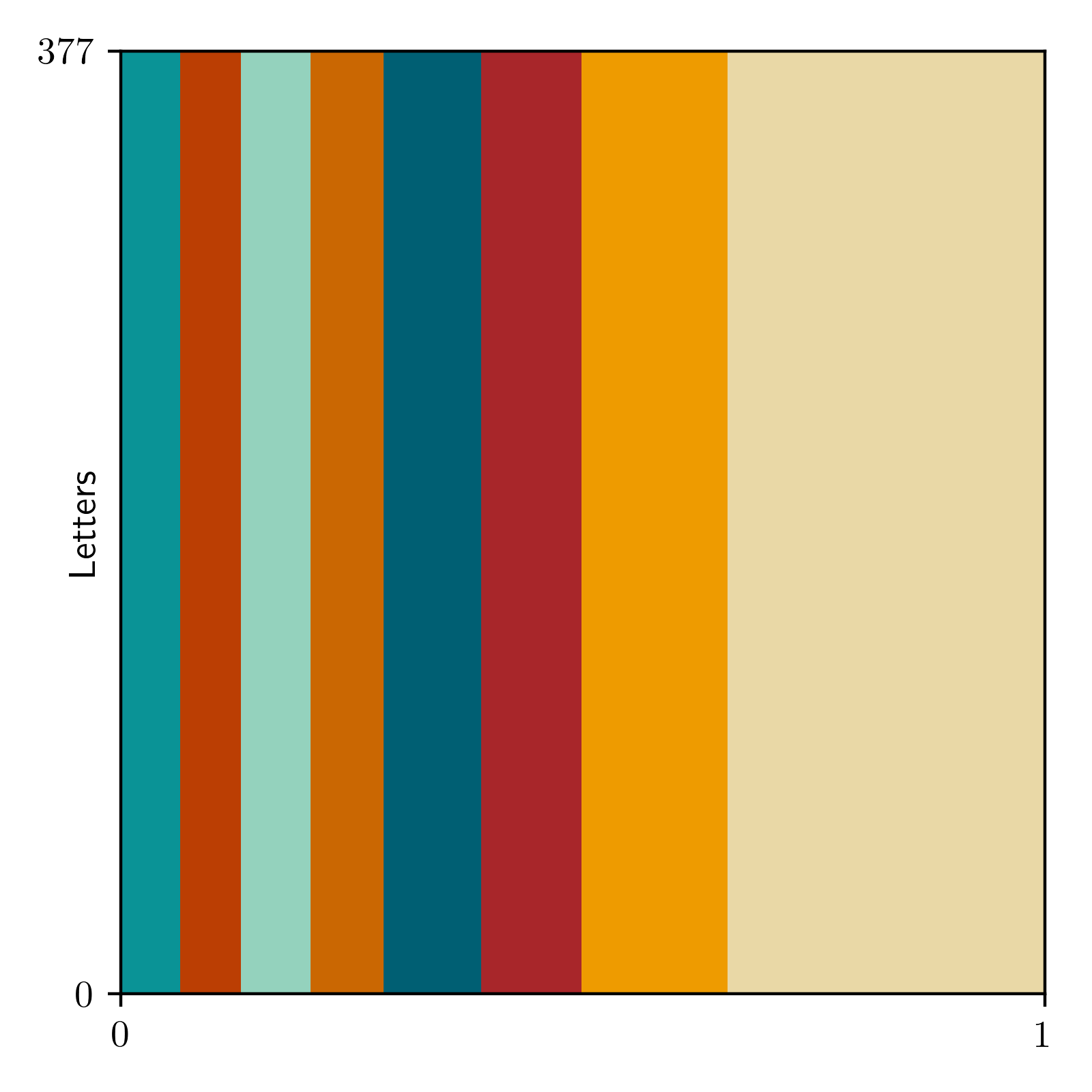}
        \includegraphics[draft=\draft, width=\linewidth]{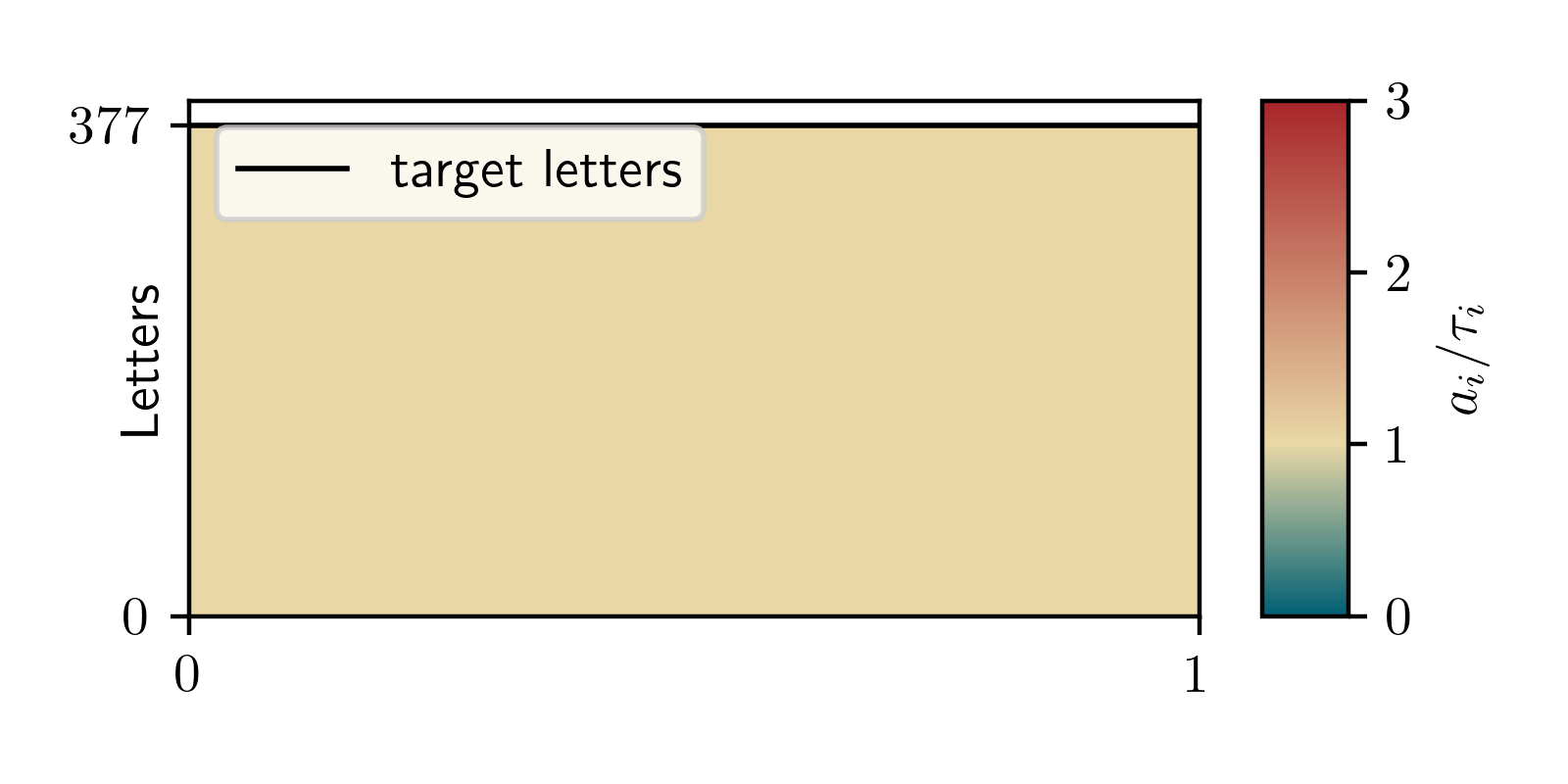}
        \caption{\colgen ($t_G\!=\!1$)}
        \label{fig:results_Niedersachsen_Large_column_generation}
    \end{subfigure}
    \begin{subfigure}{0.32\textwidth}
        \includegraphics[draft=\draft, width=\linewidth]{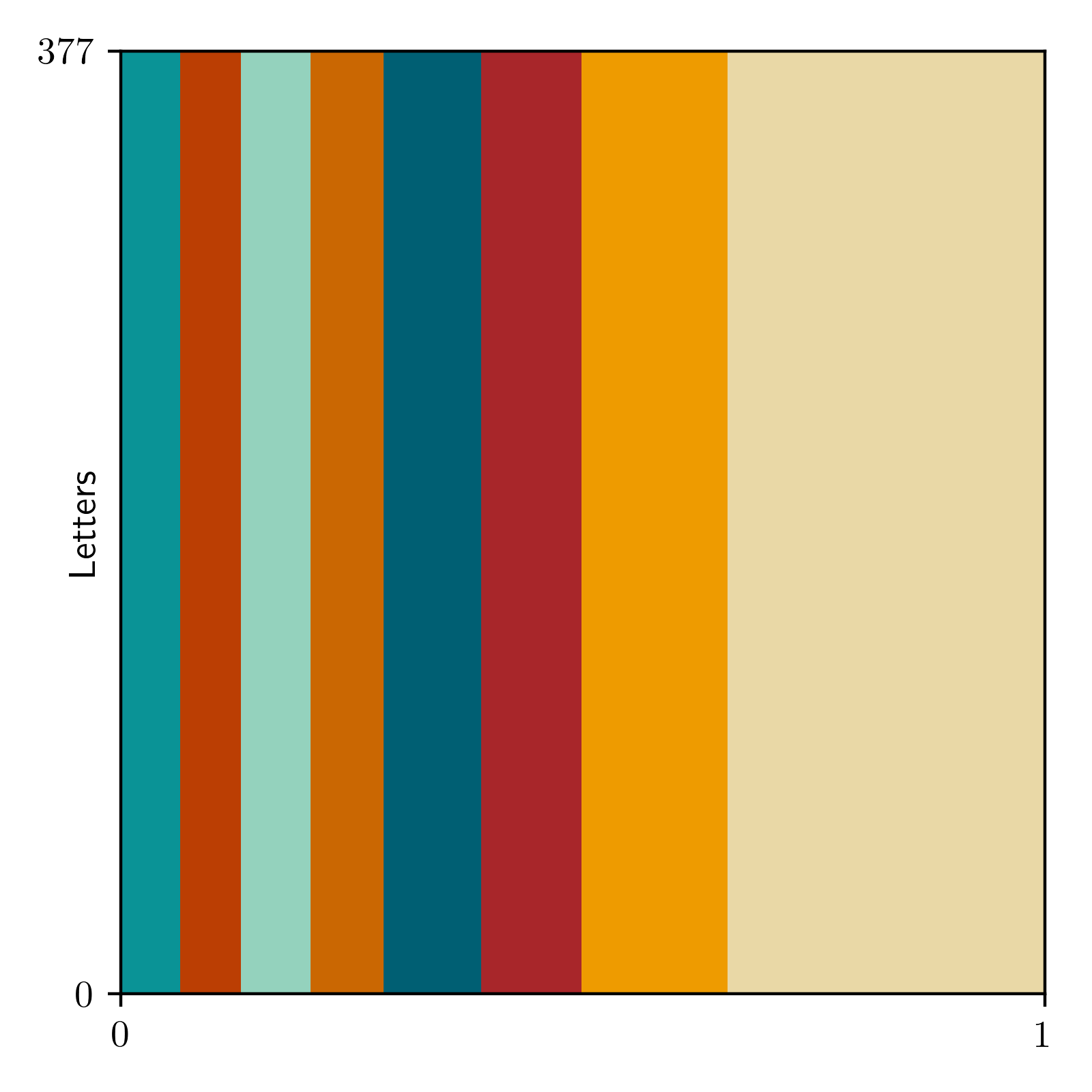}
        \includegraphics[draft=\draft, width=\linewidth]{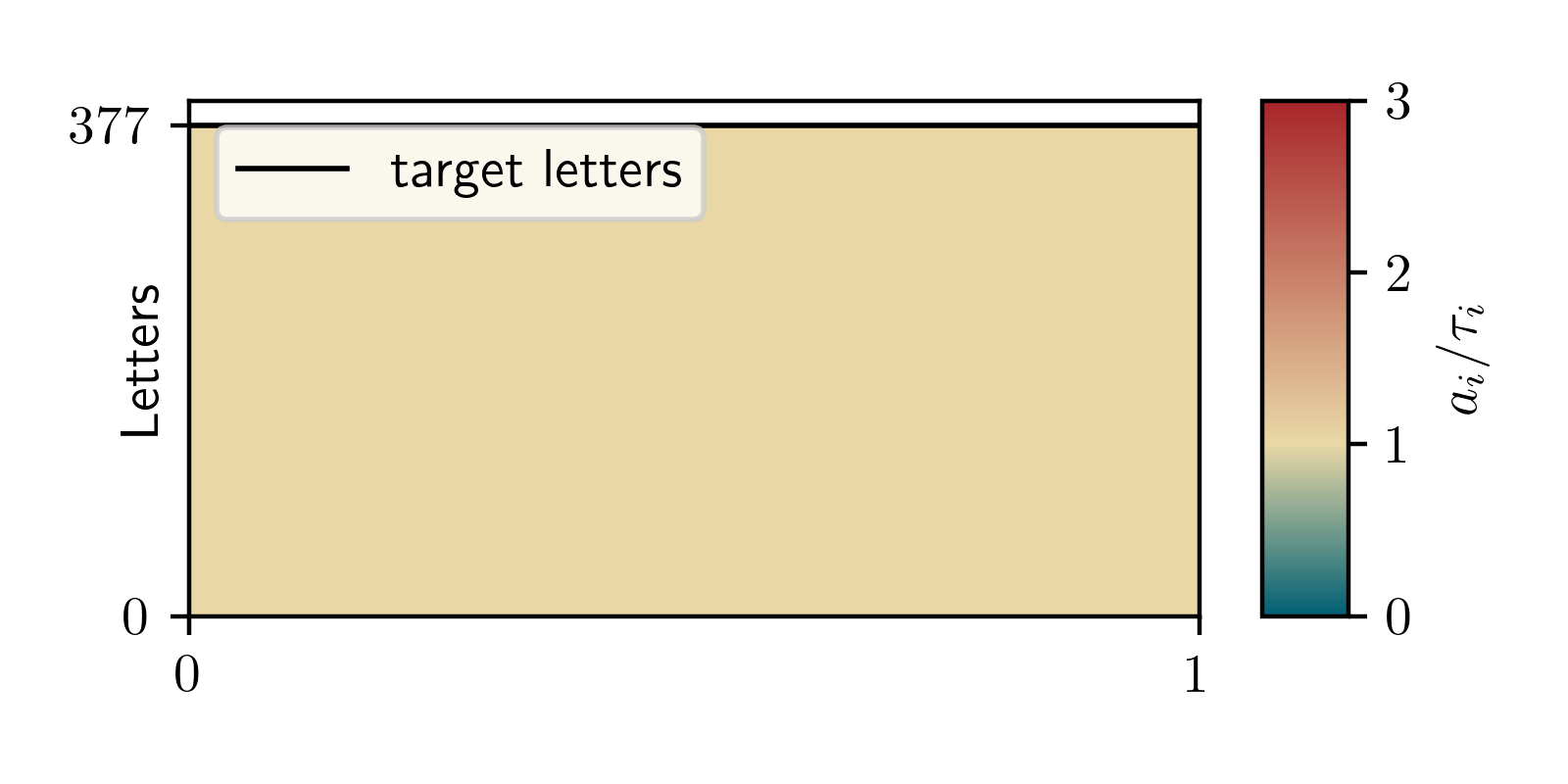}
        \caption{\buckets ($t_G = 1$)}
        \label{fig:results_Niedersachsen_Large_greedy_bucket_fill}
    \end{subfigure}
    \caption{Large municipalities of Niedersachsen ($\ell_G = 377$)}
    \label{fig:results_Niedersachsen_Large}
\end{figure} 

\begin{figure}
    \centering
    \begin{subfigure}{0.32\textwidth}
        \includegraphics[draft=\draft, width=\linewidth]{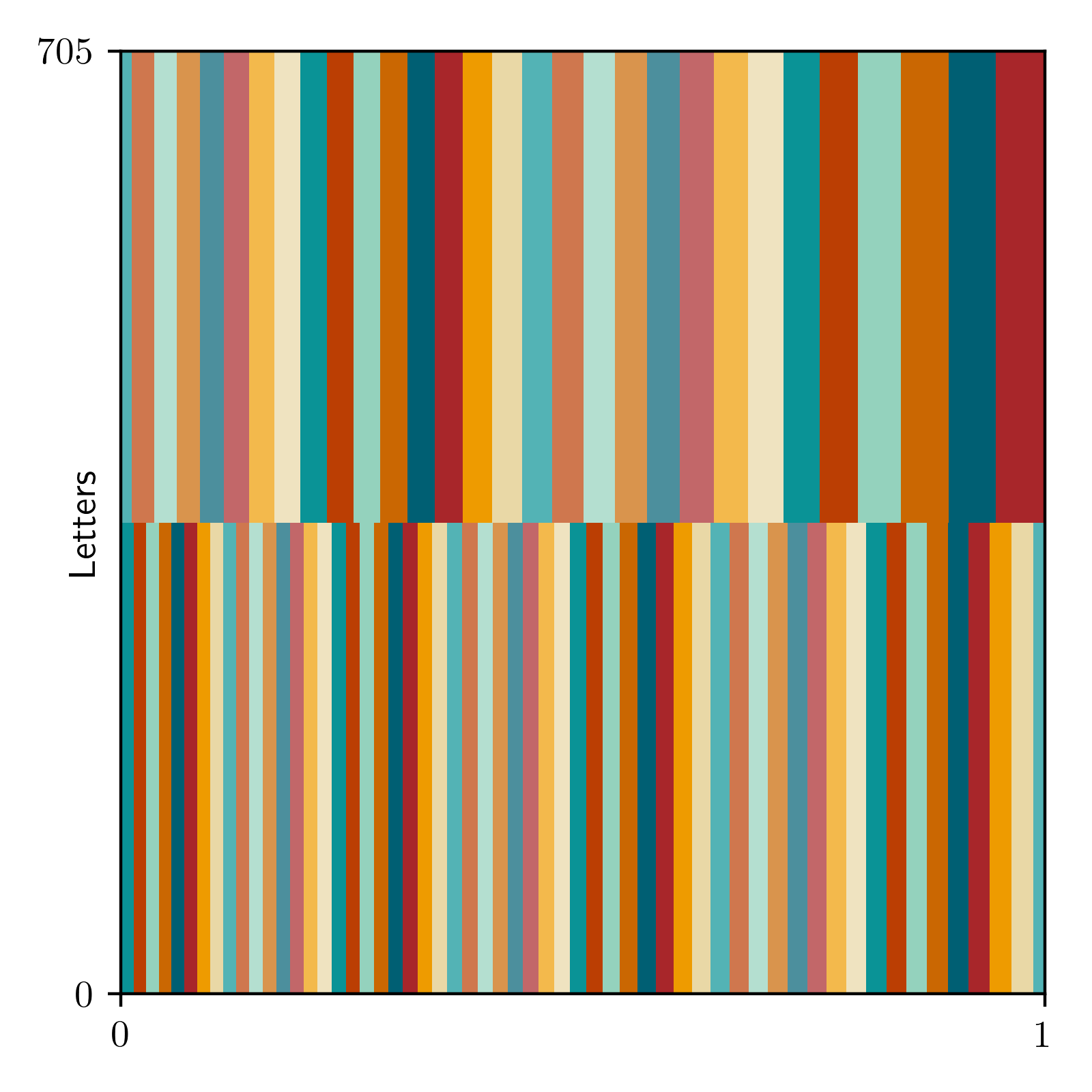}
        \includegraphics[draft=\draft, width=\linewidth]{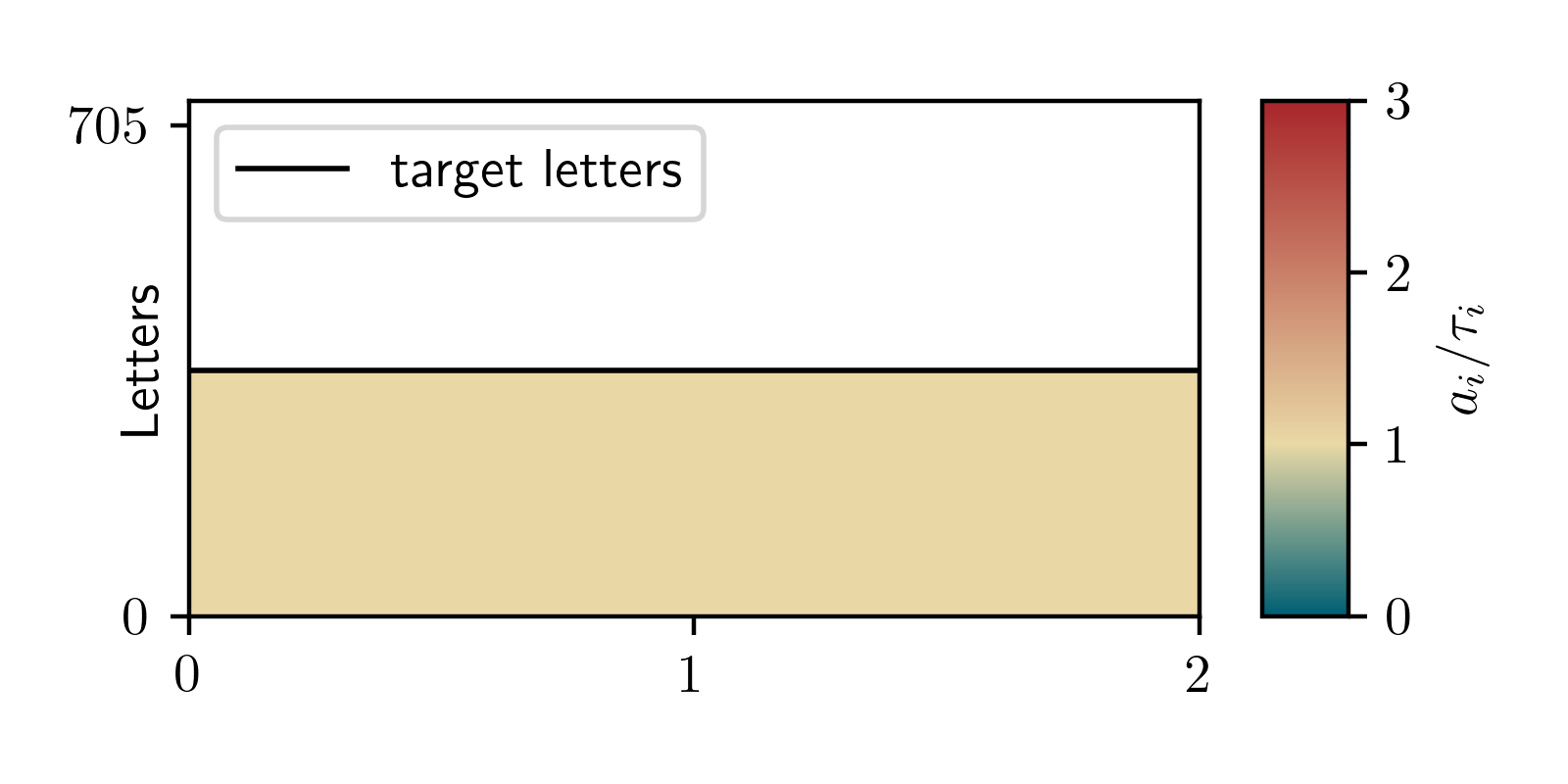}
        \caption{\greq ($t_G = 2$)}
        \label{fig:results_Niedersachsen_Medium_greedy_equal}
    \end{subfigure}
    \begin{subfigure}{0.32\textwidth}
        \includegraphics[draft=\draft, width=\linewidth]{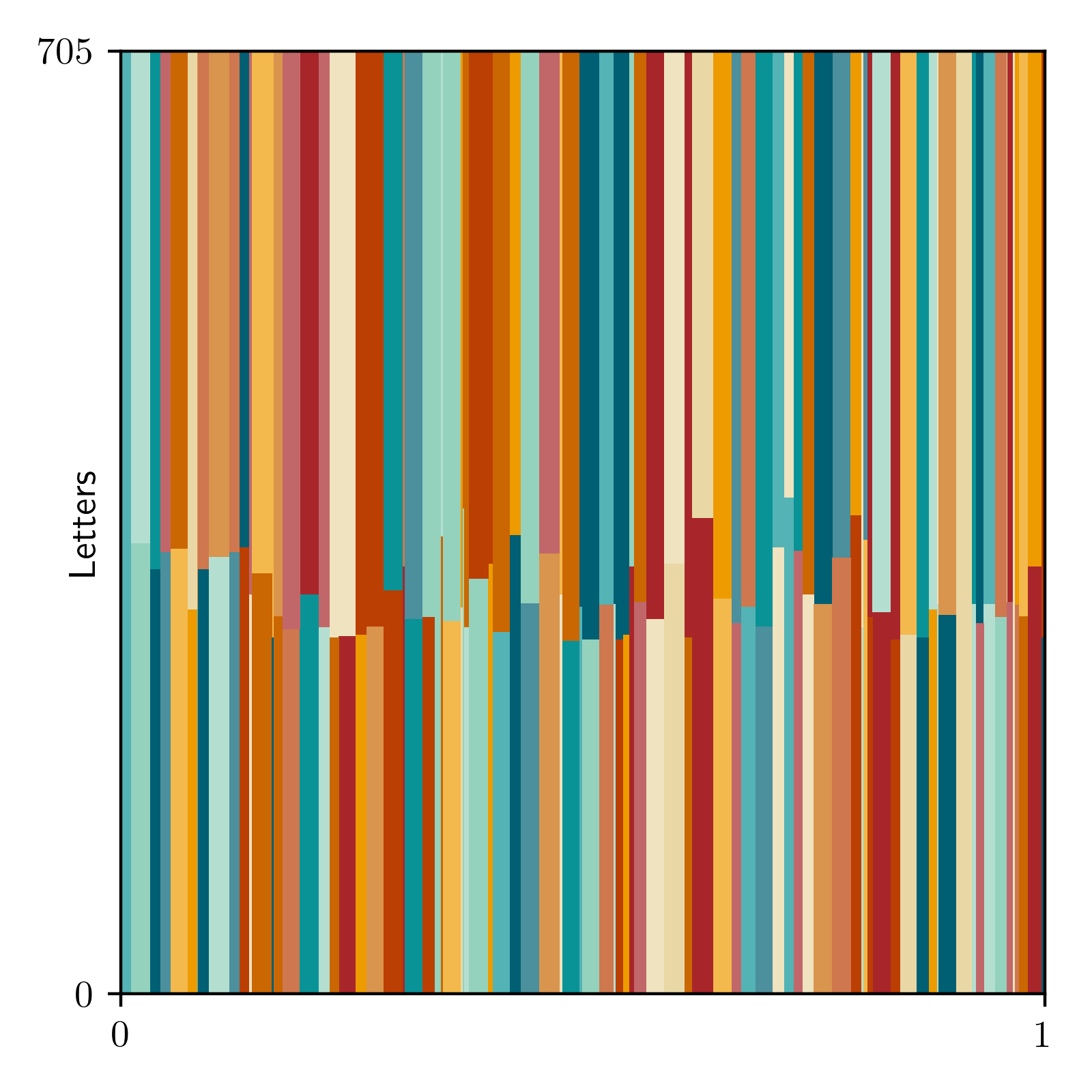}
        \includegraphics[draft=\draft, width=\linewidth]{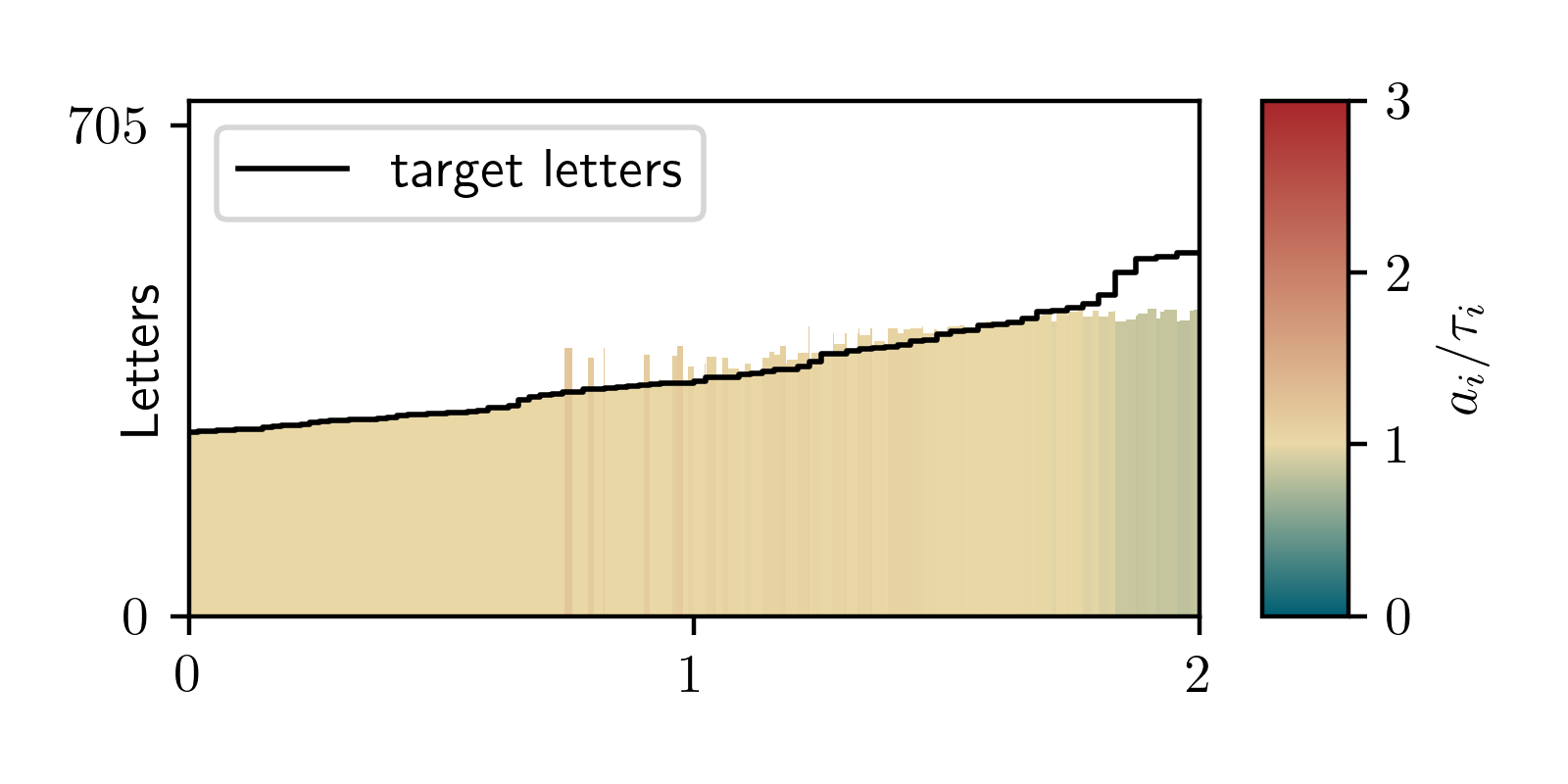}
        \caption{\colgen ($t_G\!=\!2$)}
        \label{fig:results_Niedersachsen_Medium_column_generation}
    \end{subfigure}
    \begin{subfigure}{0.32\textwidth}
        \includegraphics[draft=\draft, width=\linewidth]{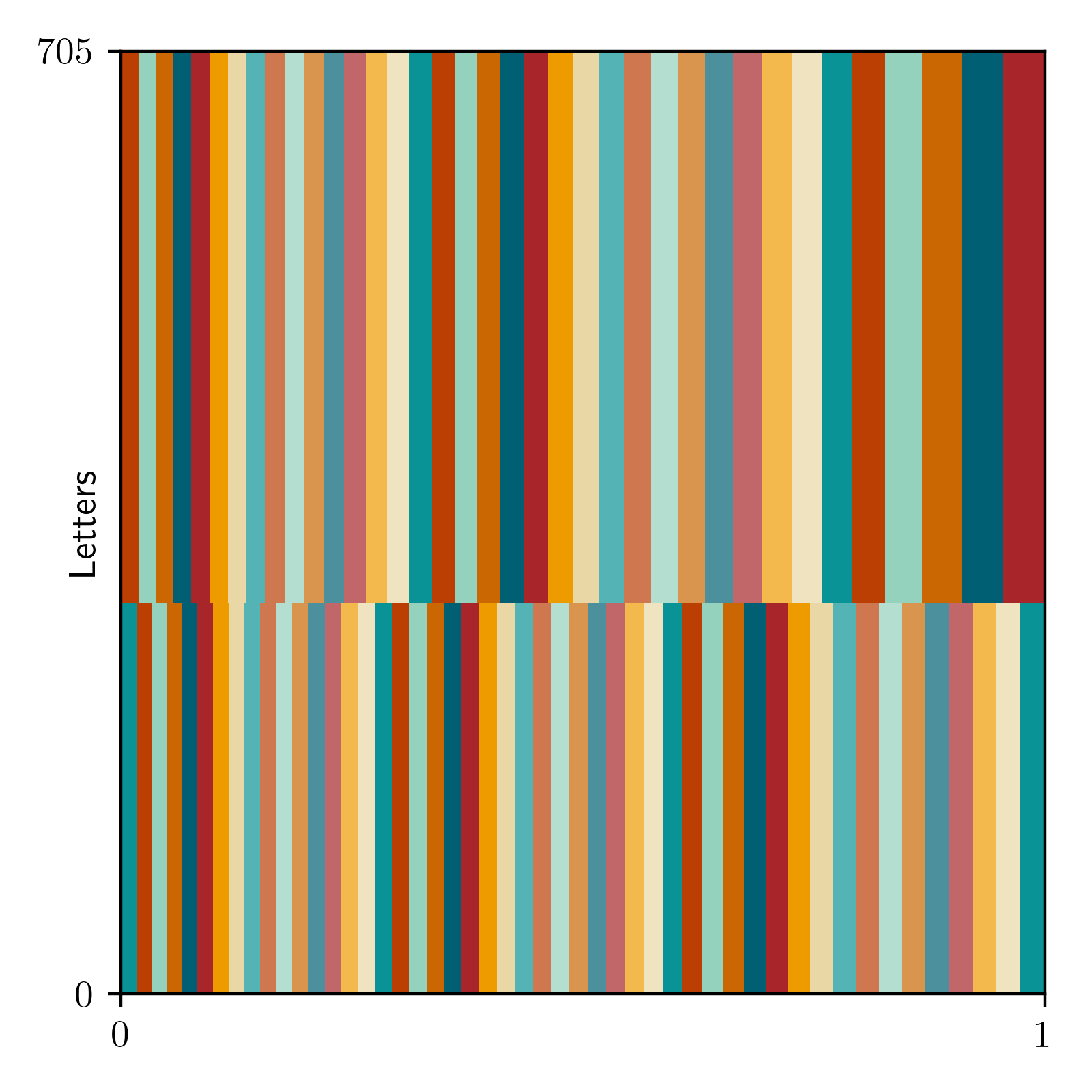}
        \includegraphics[draft=\draft, width=\linewidth]{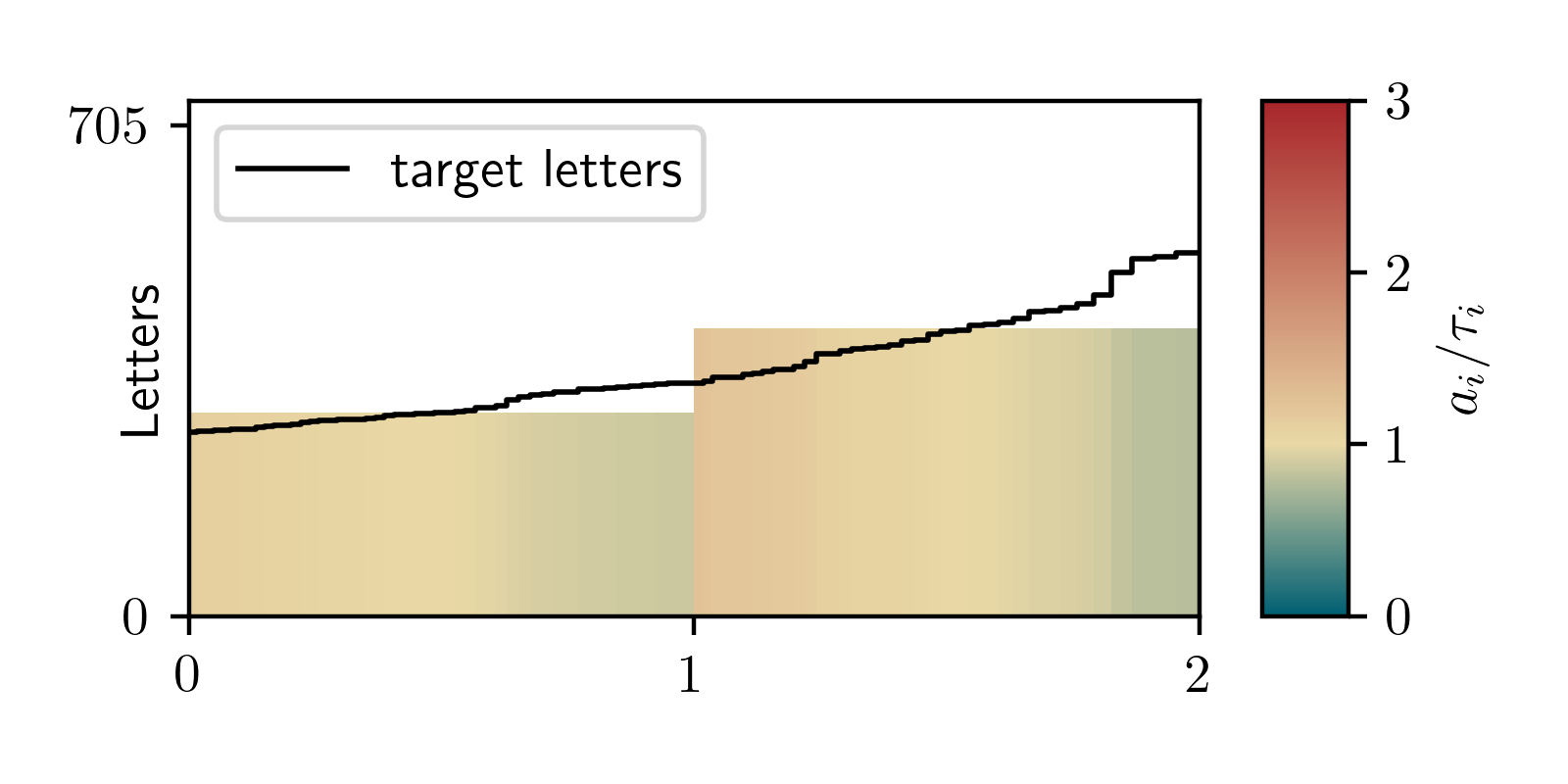}
        \caption{\buckets ($t_G = 2$)}
        \label{fig:results_Niedersachsen_Medium_greedy_bucket_fill}
    \end{subfigure}
    \caption{Medium municipalities of Niedersachsen ($\ell_G = 705$)}
    \label{fig:results_Niedersachsen_Medium}
\end{figure} 

\begin{figure}
    \centering
    \begin{subfigure}{0.32\textwidth}
        \includegraphics[draft=\draft, width=\linewidth]{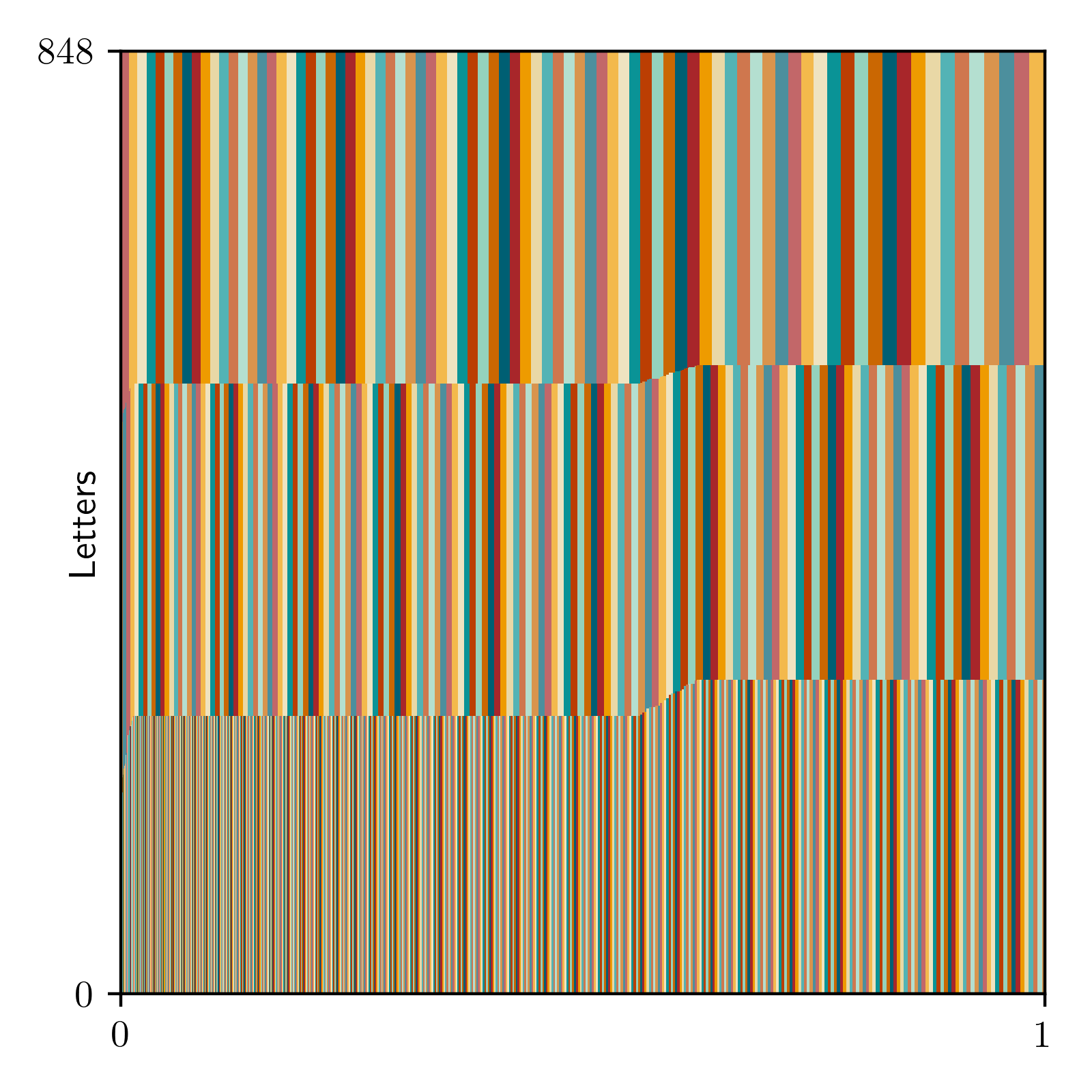}
        \includegraphics[draft=\draft, width=\linewidth]{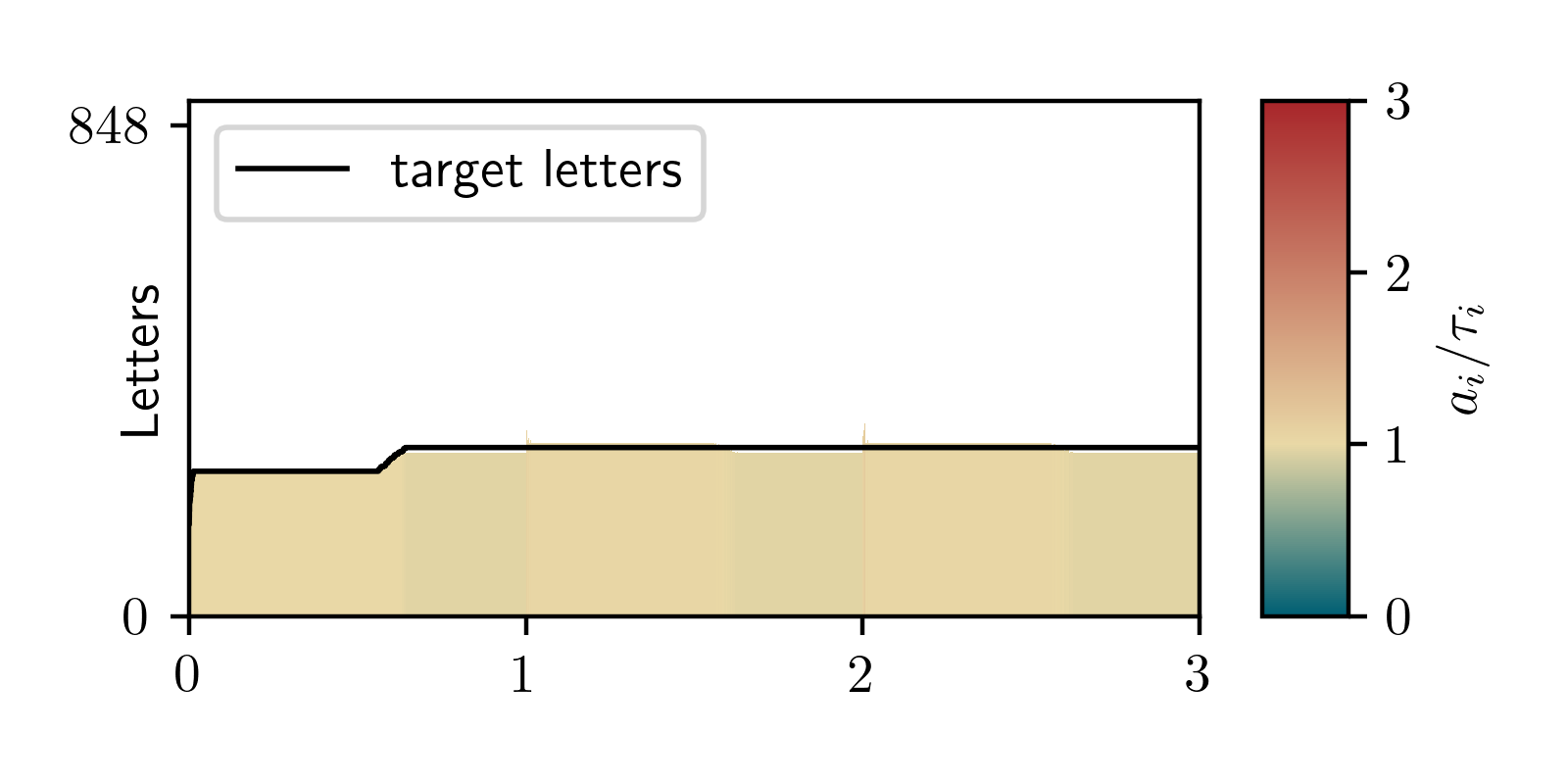}
        \caption{\greq ($t_G = 3$)}
        \label{fig:results_Niedersachsen_Small_greedy_equal}
    \end{subfigure}
    \begin{subfigure}{0.32\textwidth}
        \includegraphics[draft=\draft, width=\linewidth]{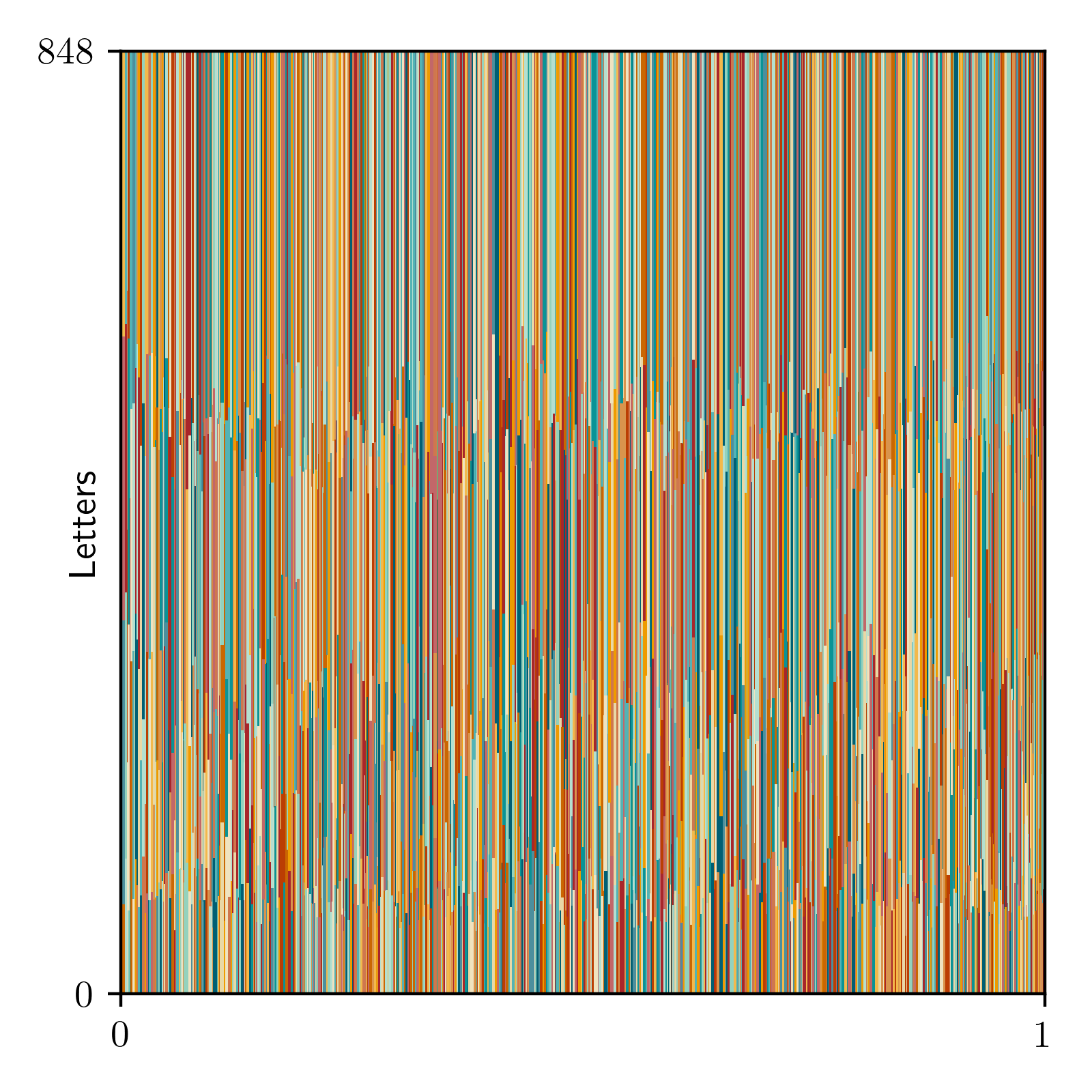}
        \includegraphics[draft=\draft, width=\linewidth]{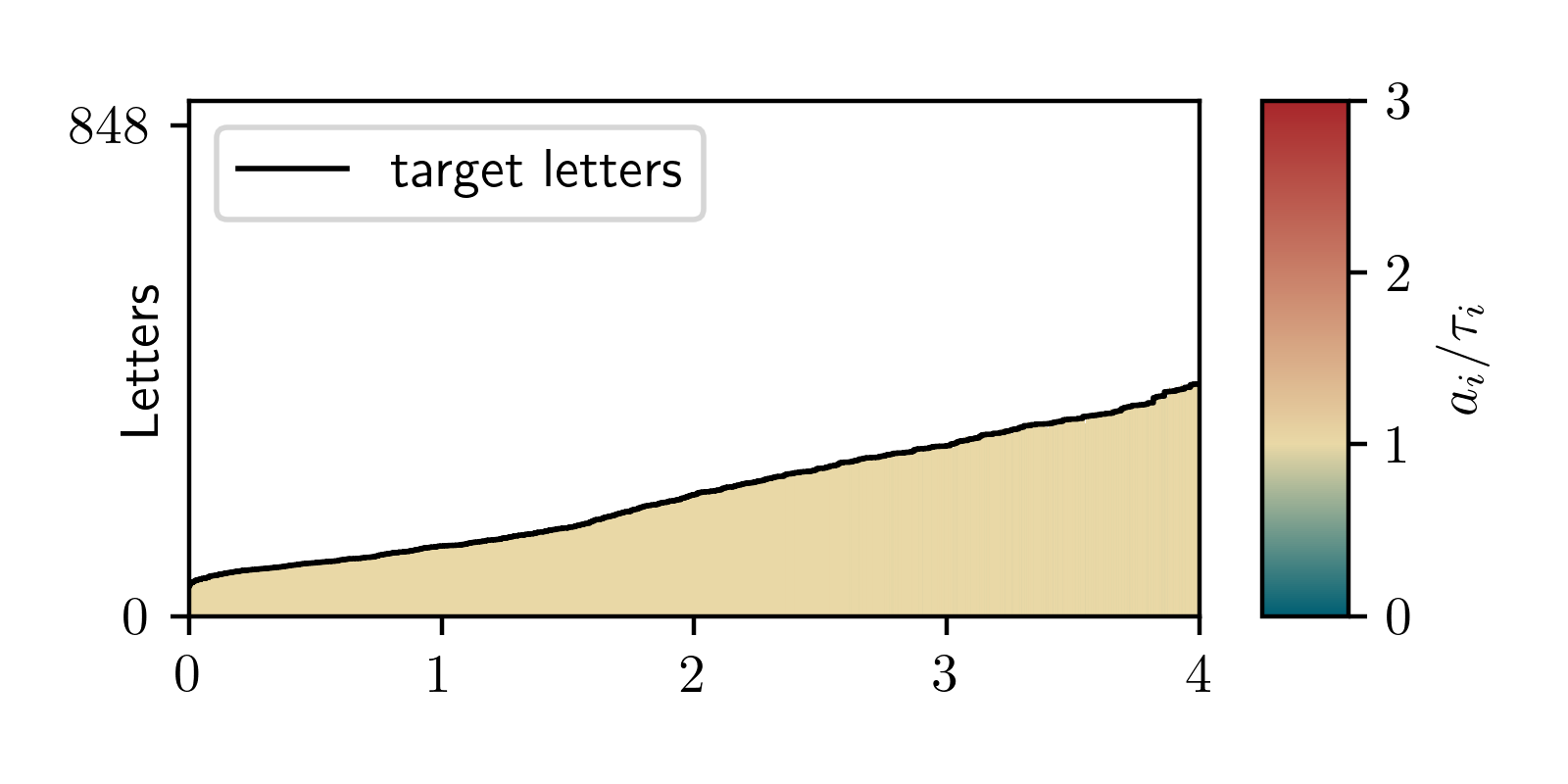}
        \caption{\colgen ($t_G\!=\!4$)}
        \label{fig:results_Niedersachsen_Small_column_generation}
    \end{subfigure}
    \begin{subfigure}{0.32\textwidth}
        \includegraphics[draft=\draft, width=\linewidth]{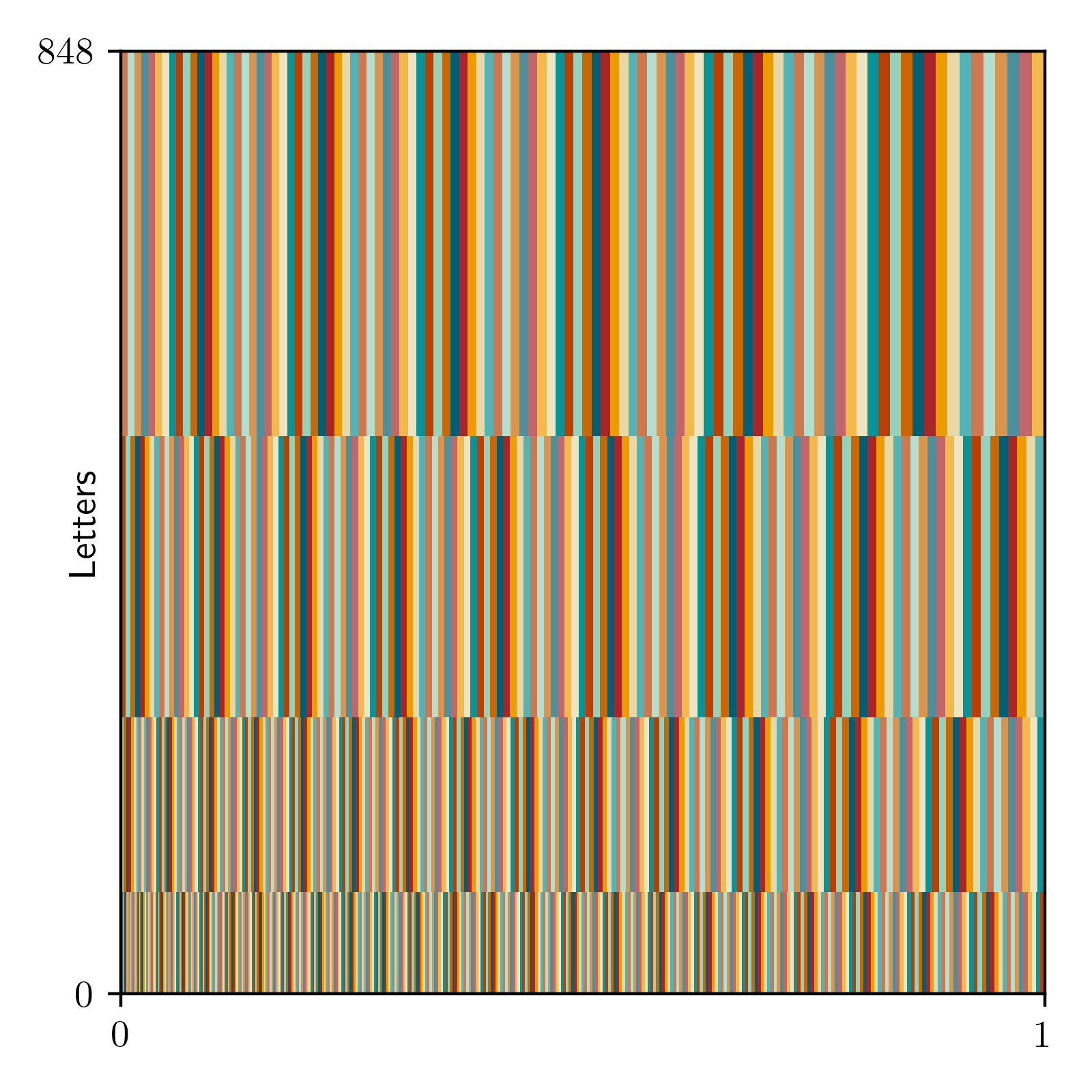}
        \includegraphics[draft=\draft, width=\linewidth]{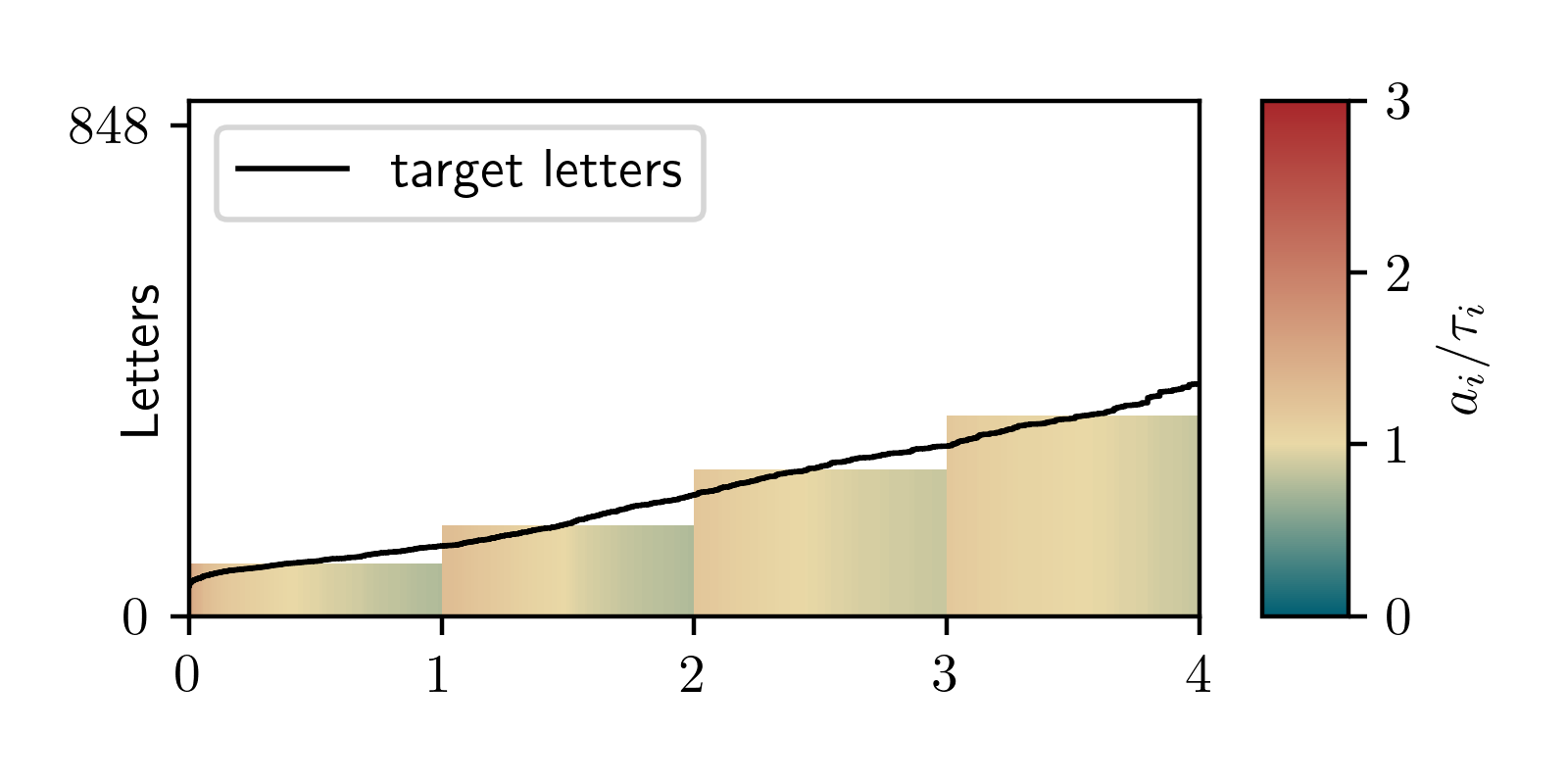}
        \caption{\buckets ($t_G = 4$)}
        \label{fig:results_Niedersachsen_Small_greedy_bucket_fill}
    \end{subfigure}
    \caption{Small municipalities of Niedersachsen ($\ell_G = 848$)}
    \label{fig:results_Niedersachsen_Small}
\end{figure} 

\begin{figure}
    \centering
    \begin{subfigure}{0.32\textwidth}
        \includegraphics[draft=\draft, width=\linewidth]{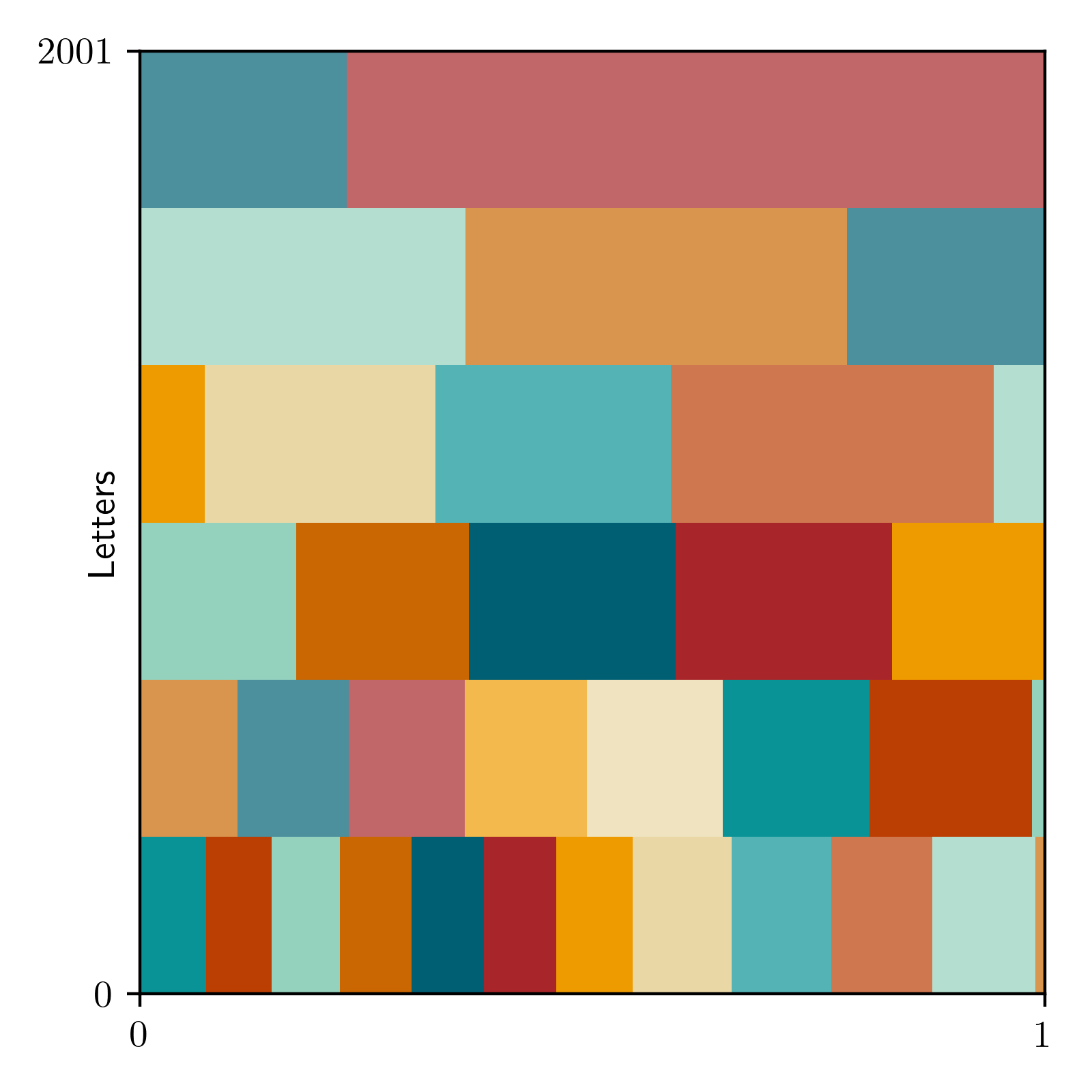}
        \includegraphics[draft=\draft, width=\linewidth]{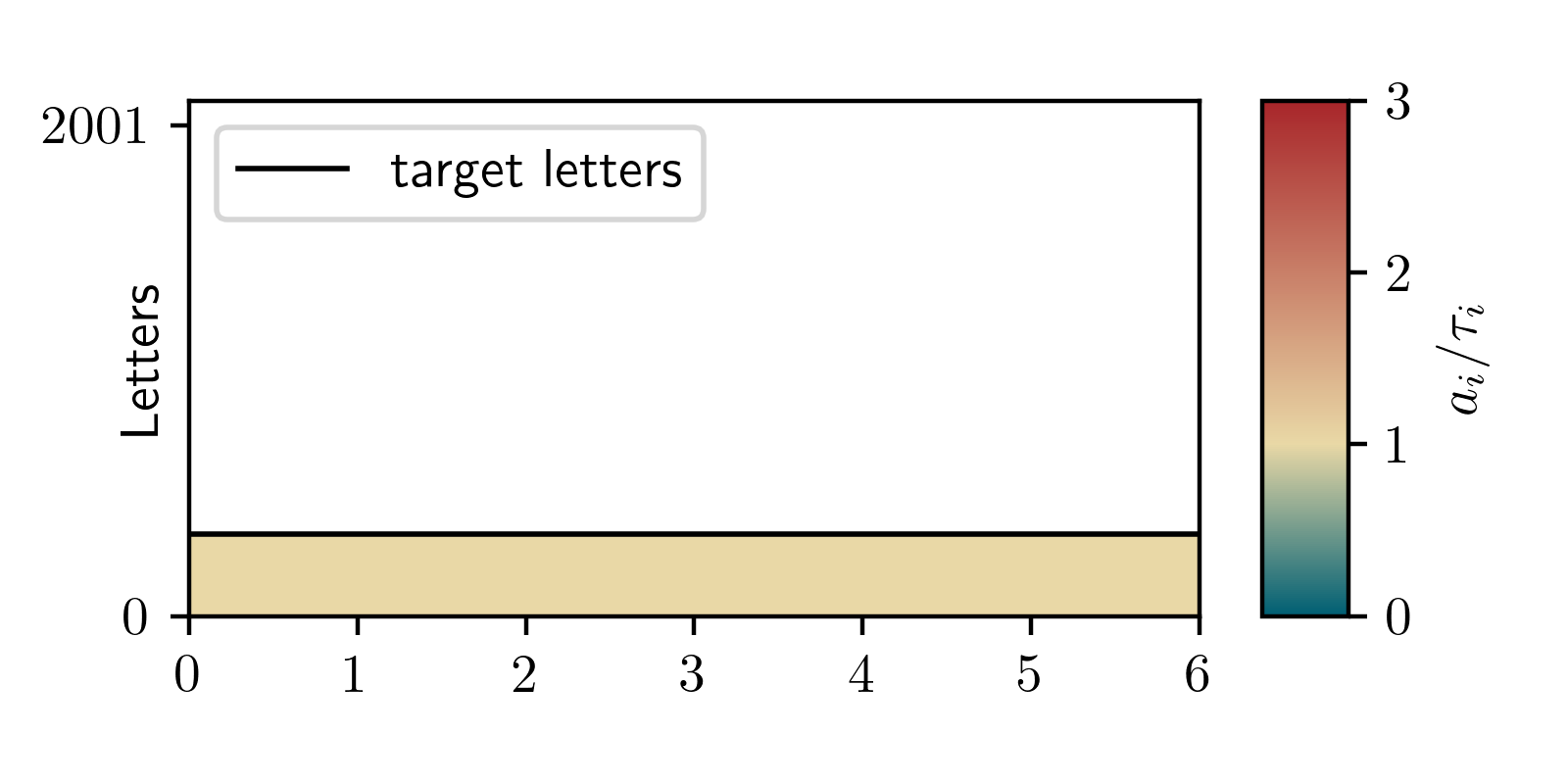}
        \caption{\greq ($t_G = 6$)}
        \label{fig:results_Nordrhein-Westfalen_Large_greedy_equal}
    \end{subfigure}
    \begin{subfigure}{0.32\textwidth}
        \includegraphics[draft=\draft, width=\linewidth]{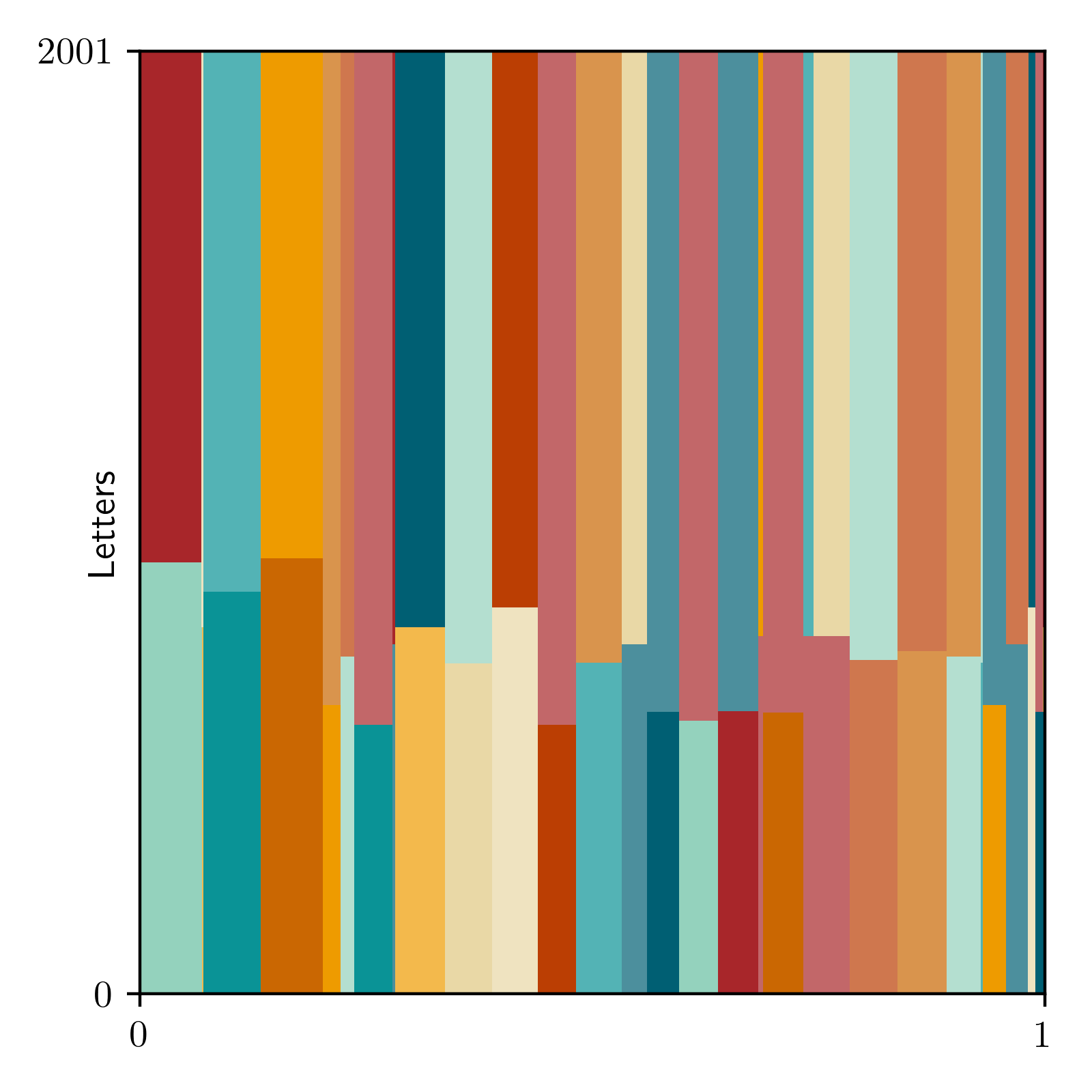}
        \includegraphics[draft=\draft, width=\linewidth]{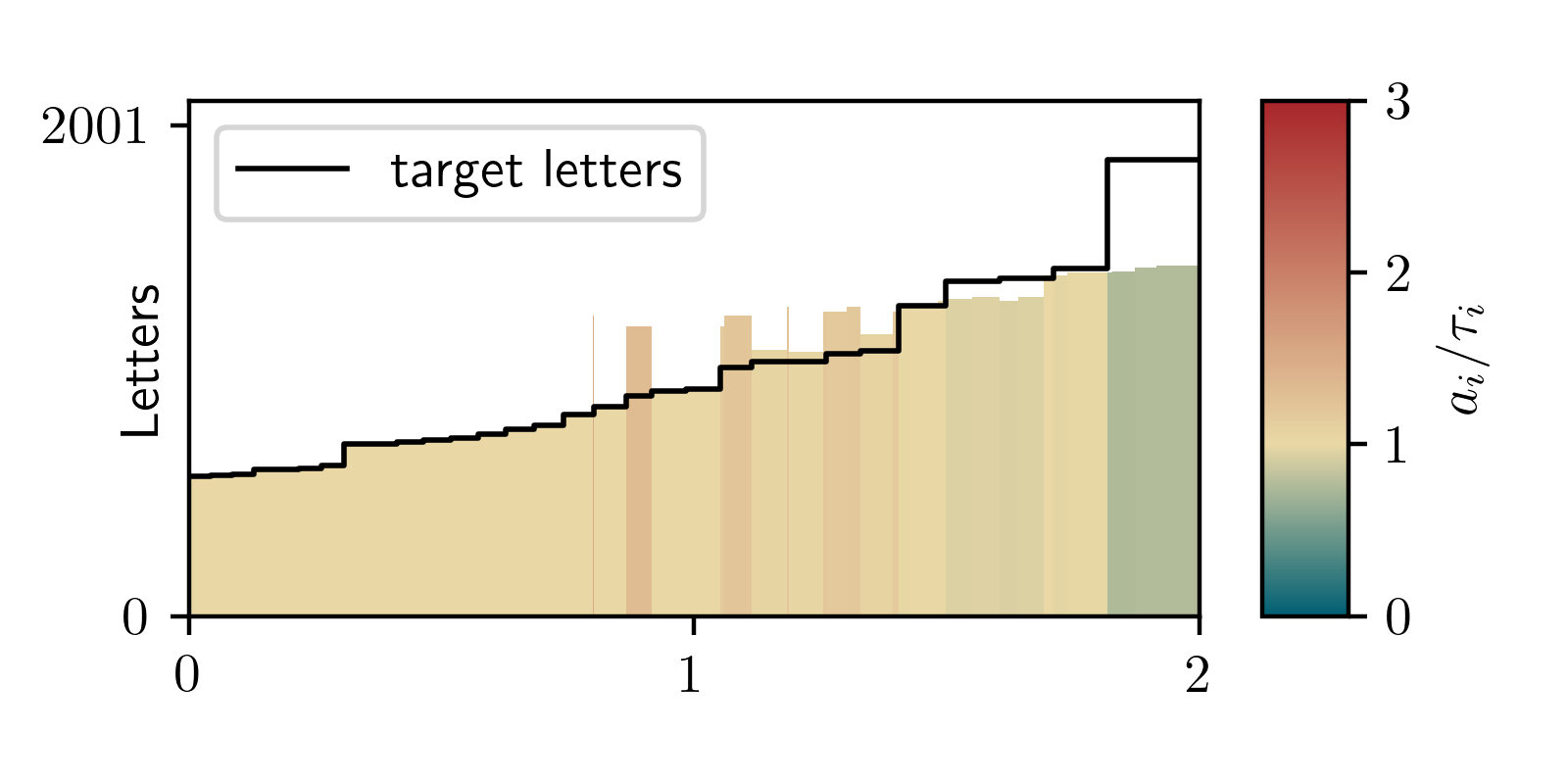}
        \caption{\colgen ($t_G\!=\!2$)}
        \label{fig:results_Nordrhein-Westfalen_Large_column_generation}
    \end{subfigure}
    \begin{subfigure}{0.32\textwidth}
        \includegraphics[draft=\draft, width=\linewidth]{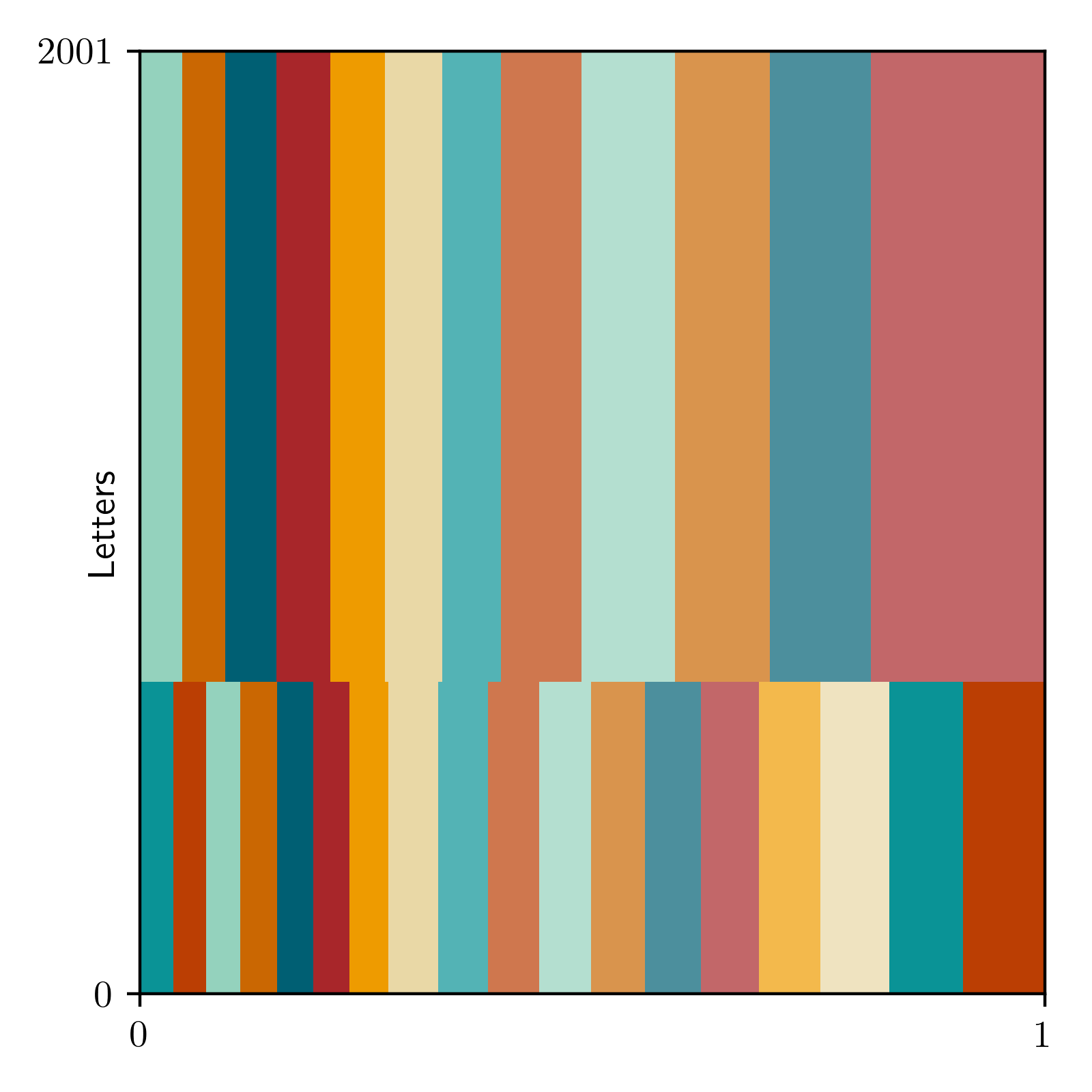}
        \includegraphics[draft=\draft, width=\linewidth]{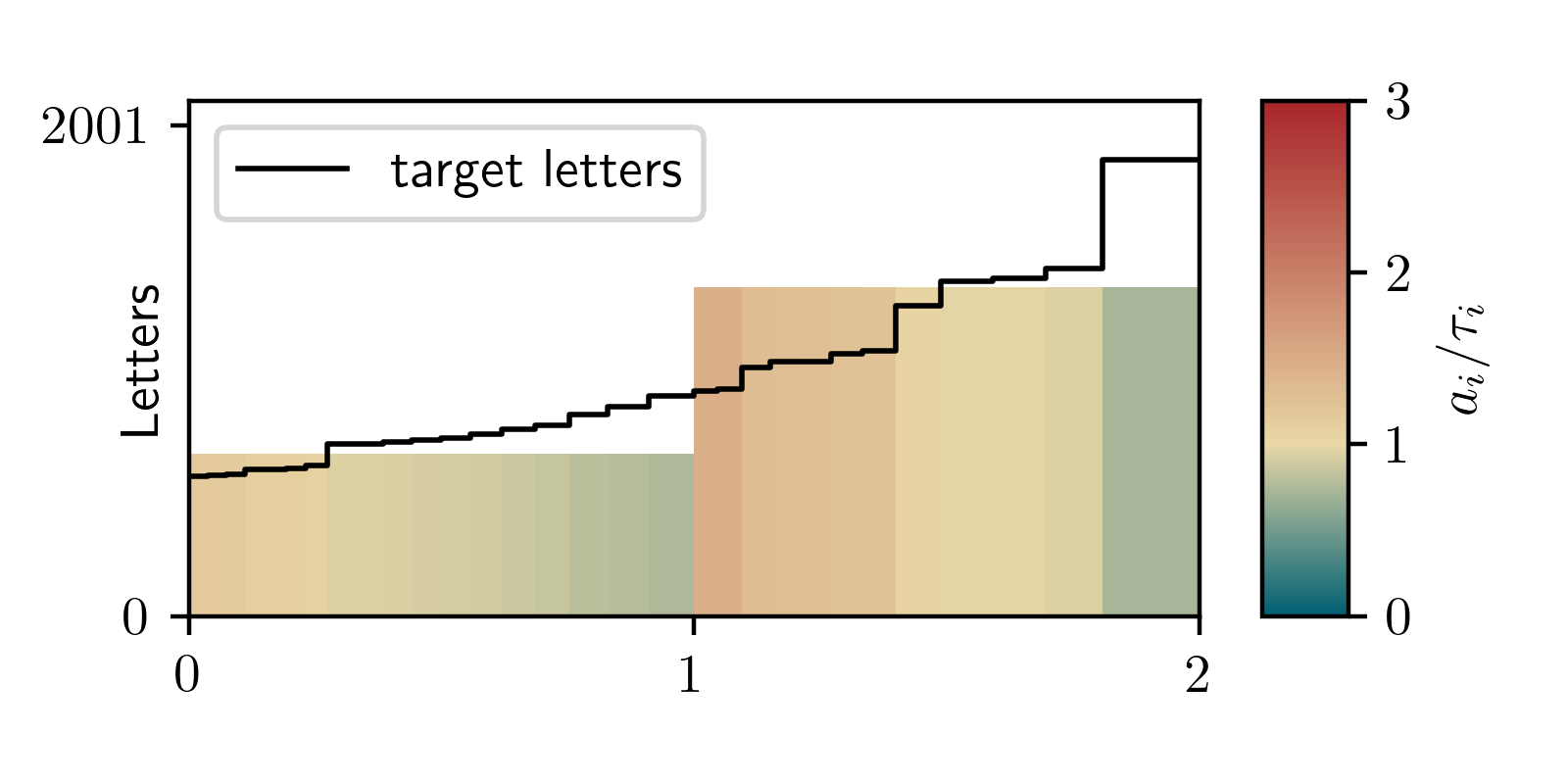}
        \caption{\buckets ($t_G = 2$)}
        \label{fig:results_Nordrhein-Westfalen_Large_greedy_bucket_fill}
    \end{subfigure}
    \caption{Large municipalities of Nordrhein-Westfalen ($\ell_G = 2001$)}
    \label{fig:results_Nordrhein-Westfalen_Large}
\end{figure} 

\begin{figure}
    \centering
    \begin{subfigure}{0.32\textwidth}
        \includegraphics[draft=\draft, width=\linewidth]{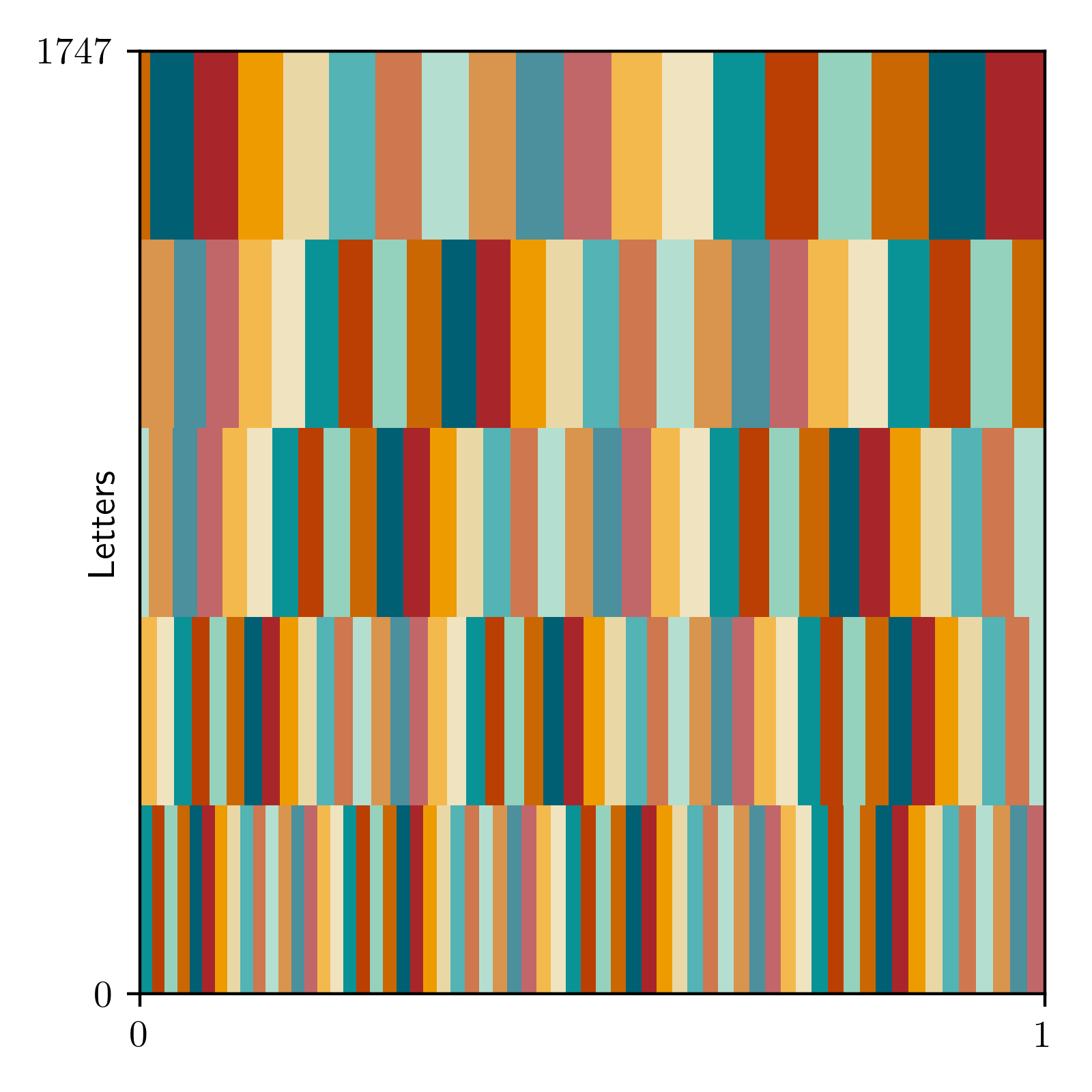}
        \includegraphics[draft=\draft, width=\linewidth]{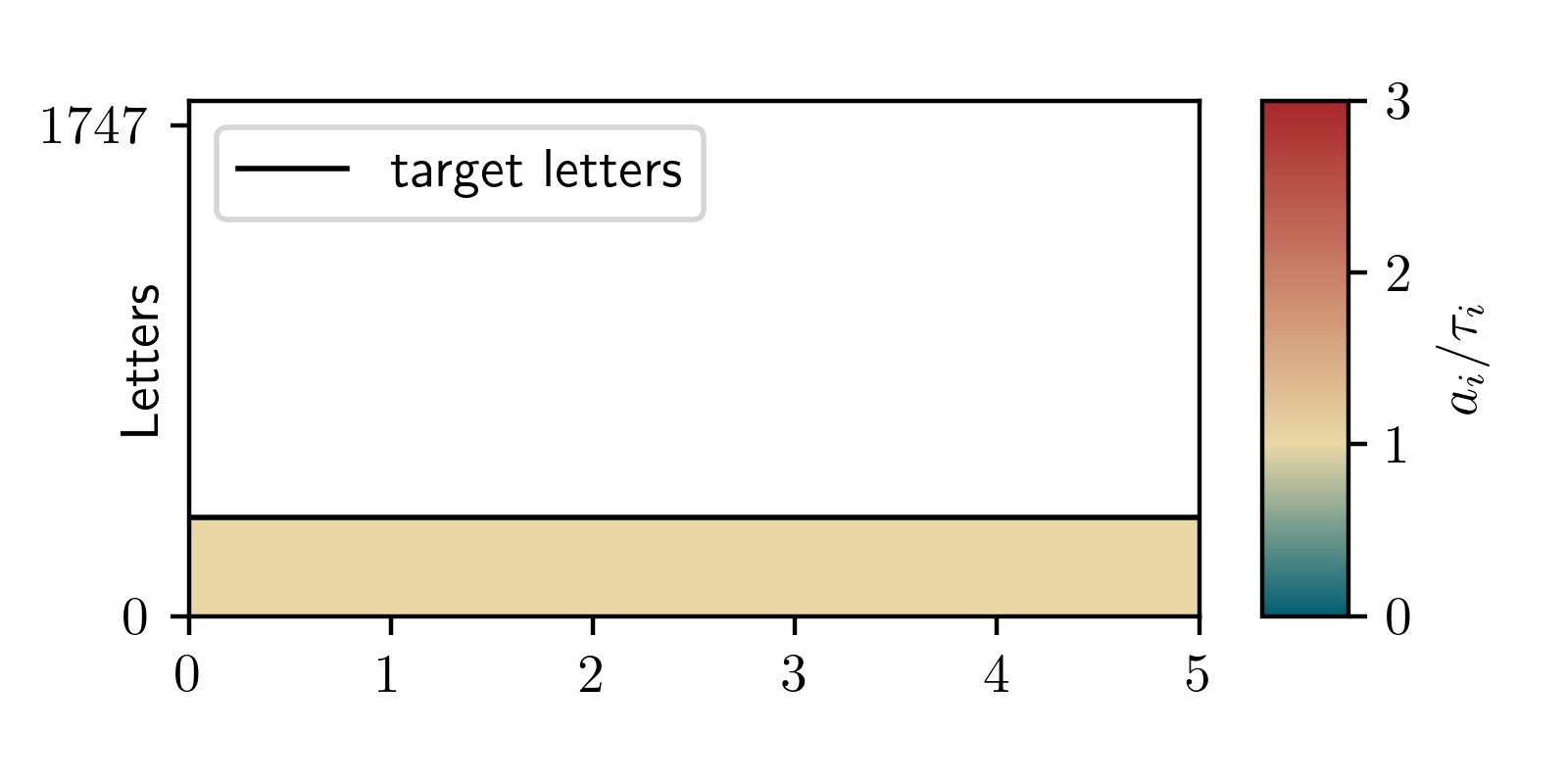}
        \caption{\greq ($t_G = 5$)}
        \label{fig:results_Nordrhein-Westfalen_Medium_greedy_equal}
    \end{subfigure}
    \begin{subfigure}{0.32\textwidth}
        \includegraphics[draft=\draft, width=\linewidth]{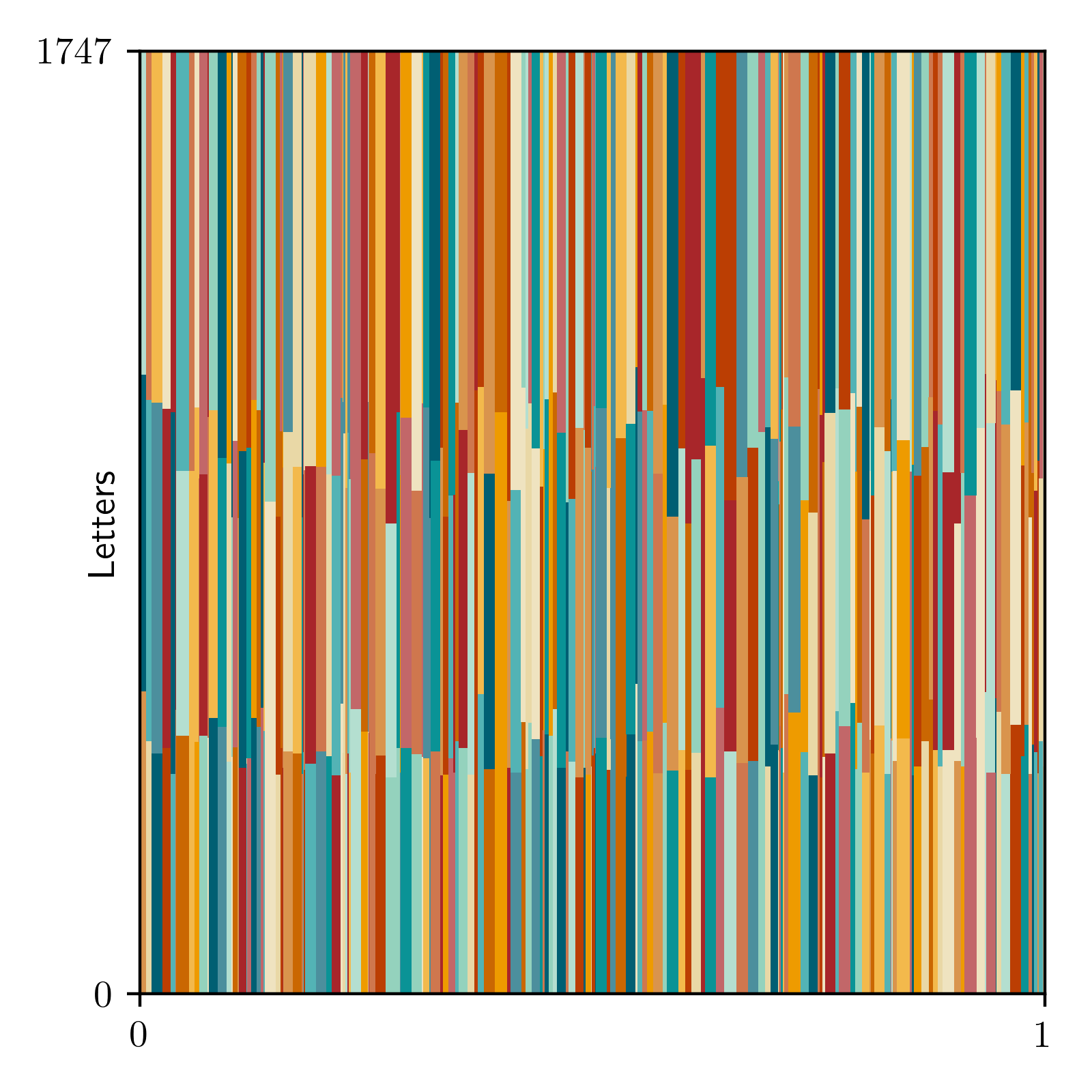}
        \includegraphics[draft=\draft, width=\linewidth]{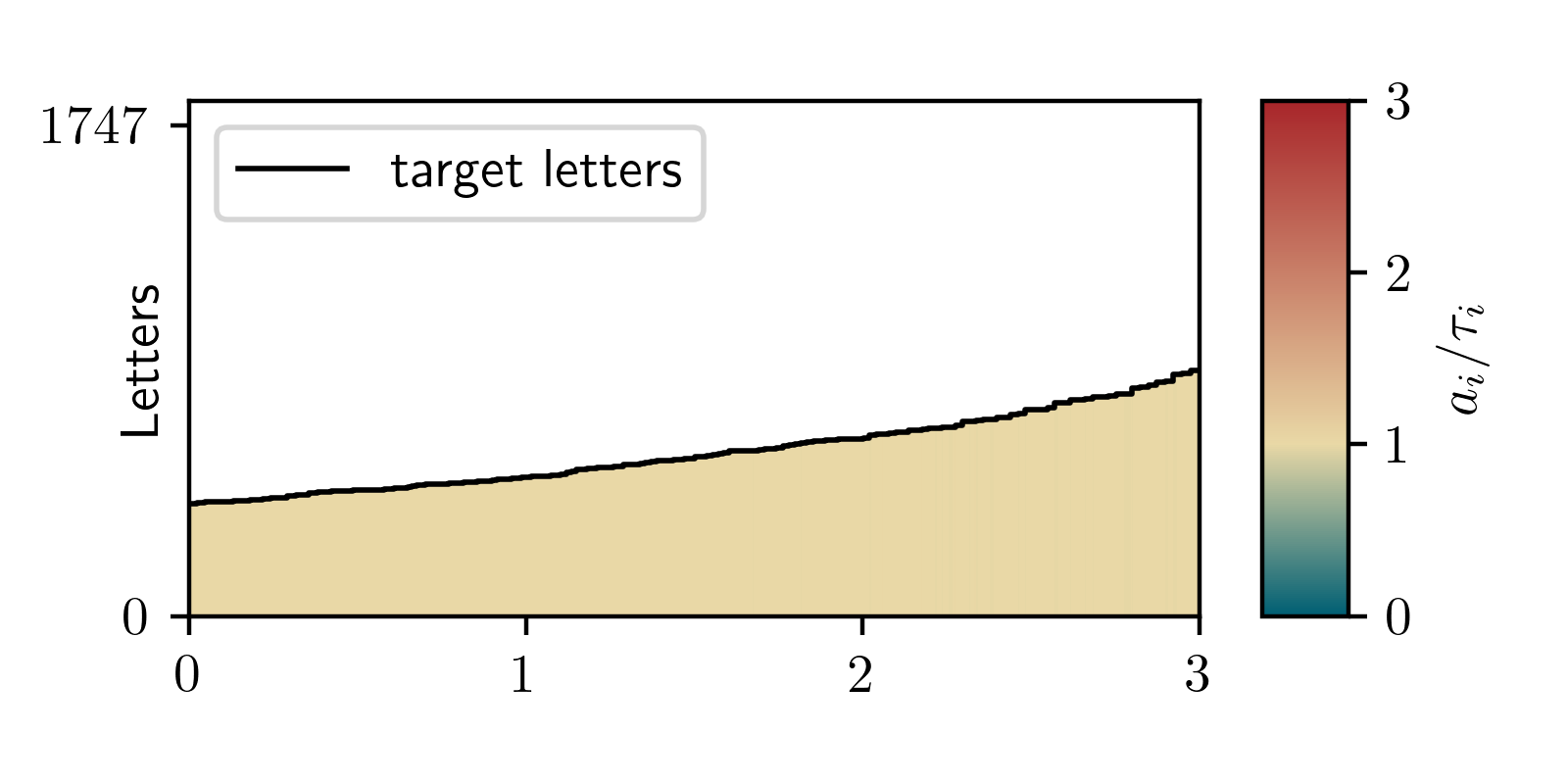}
        \caption{\colgen ($t_G\!=\!3$)}
        \label{fig:results_Nordrhein-Westfalen_Medium_column_generation}
    \end{subfigure}
    \begin{subfigure}{0.32\textwidth}
        \includegraphics[draft=\draft, width=\linewidth]{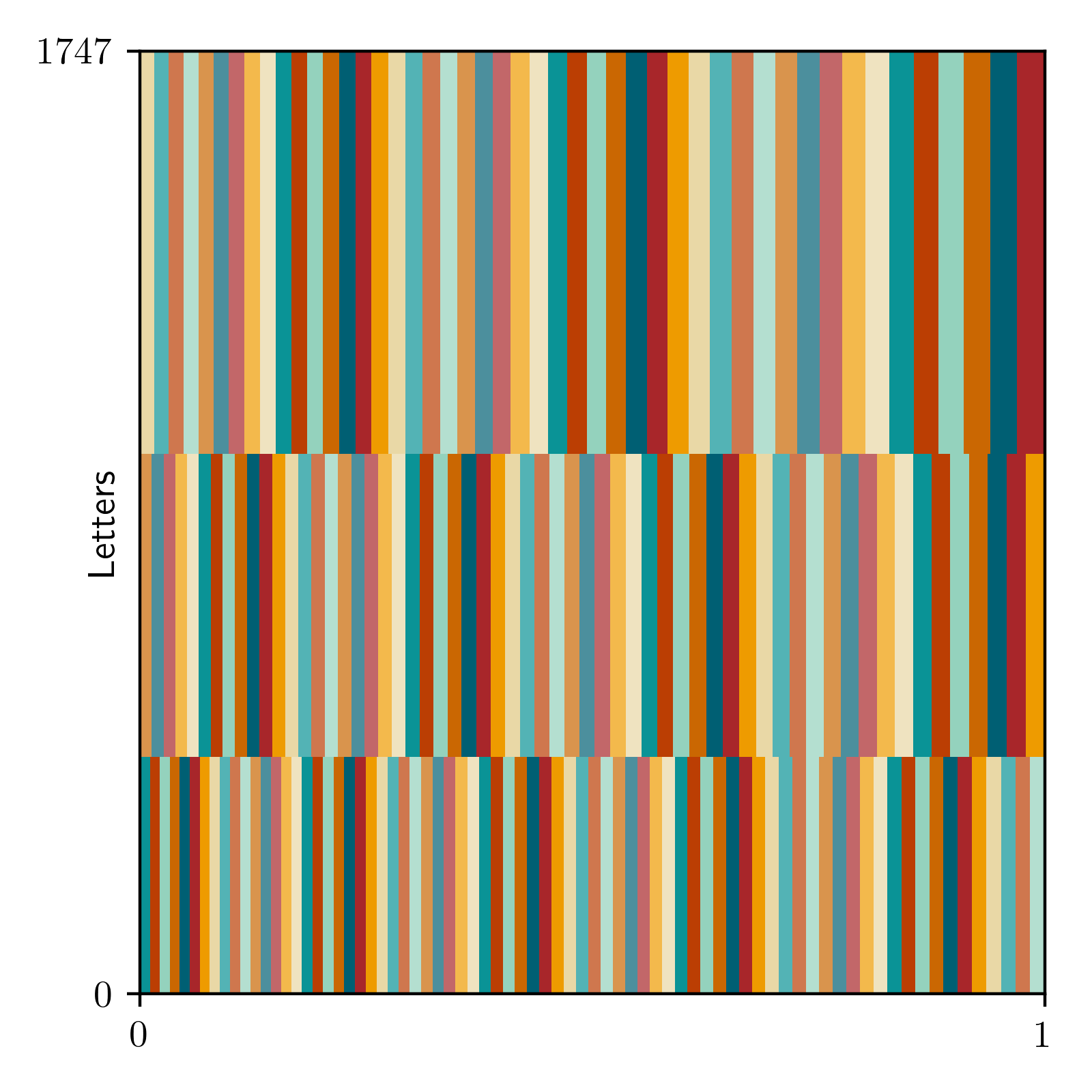}
        \includegraphics[draft=\draft, width=\linewidth]{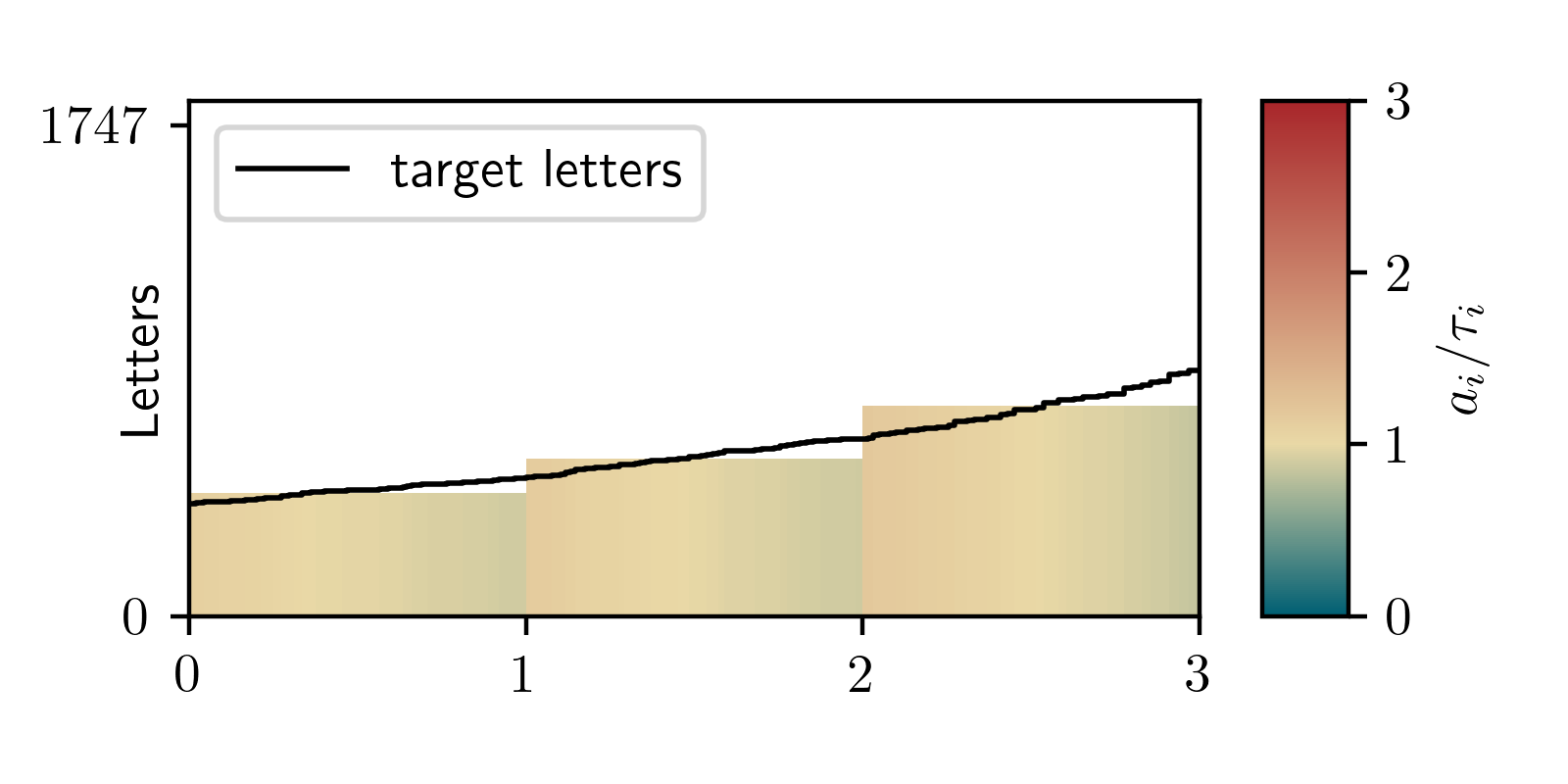}
        \caption{\buckets ($t_G = 3$)}
        \label{fig:results_Nordrhein-Westfalen_Medium_greedy_bucket_fill}
    \end{subfigure}
    \caption{Medium municipalities of Nordrhein-Westfalen ($\ell_G = 1747$)}
    \label{fig:results_Nordrhein-Westfalen_Medium}
\end{figure} 

\begin{figure}
    \centering
    \begin{subfigure}{0.32\textwidth}
        \includegraphics[draft=\draft, width=\linewidth]{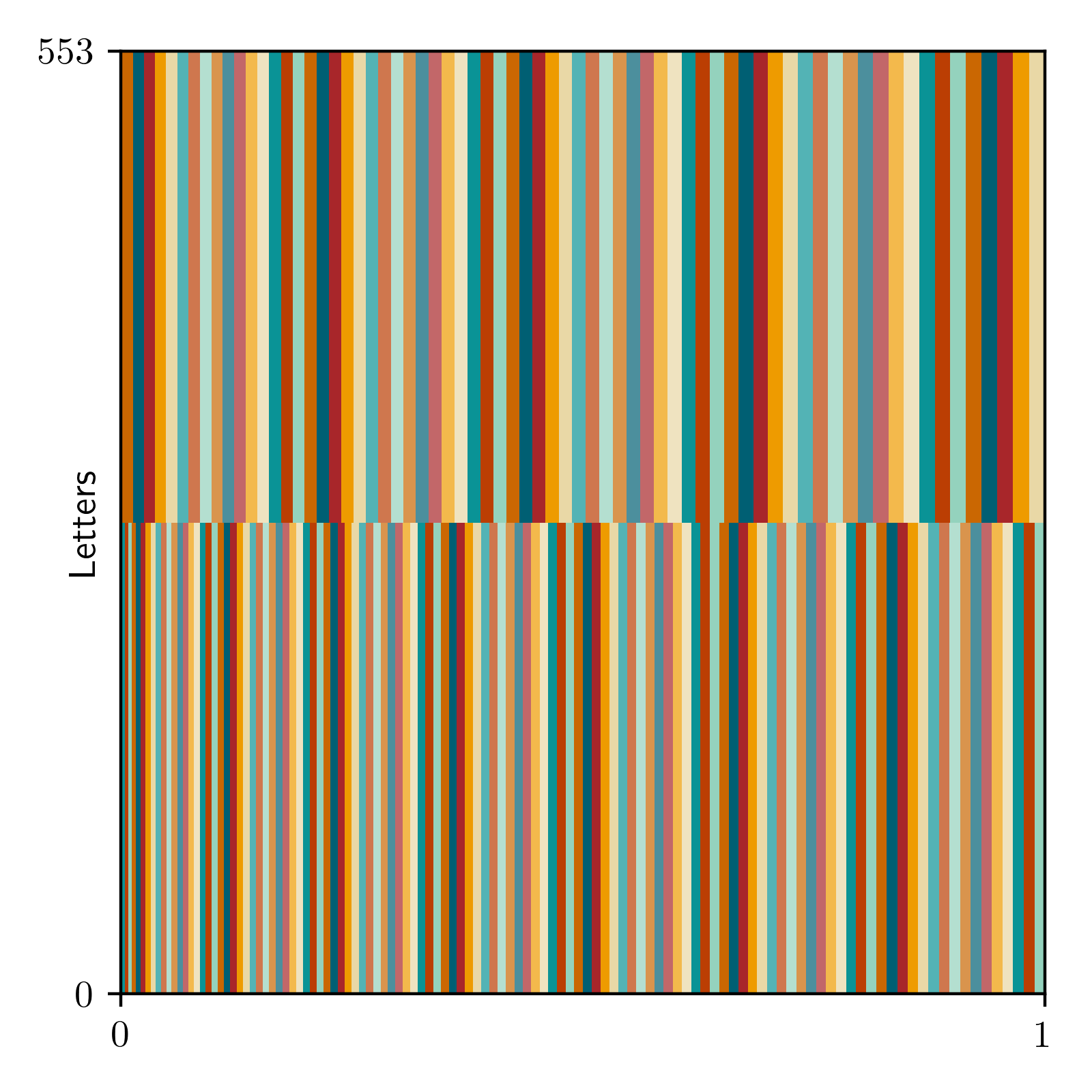}
        \includegraphics[draft=\draft, width=\linewidth]{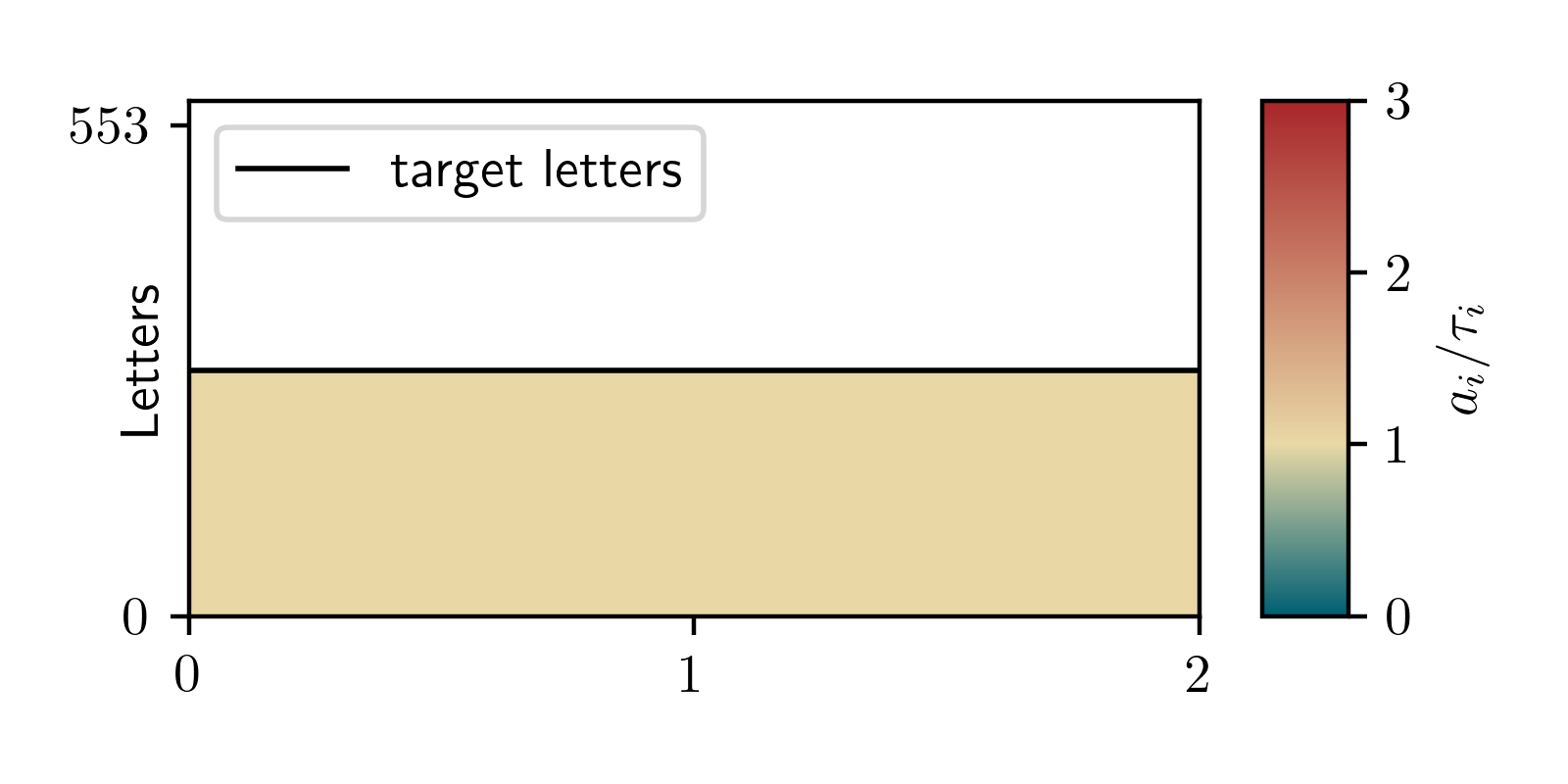}
        \caption{\greq ($t_G = 2$)}
        \label{fig:results_Nordrhein-Westfalen_Small_greedy_equal}
    \end{subfigure}
    \begin{subfigure}{0.32\textwidth}
        \includegraphics[draft=\draft, width=\linewidth]{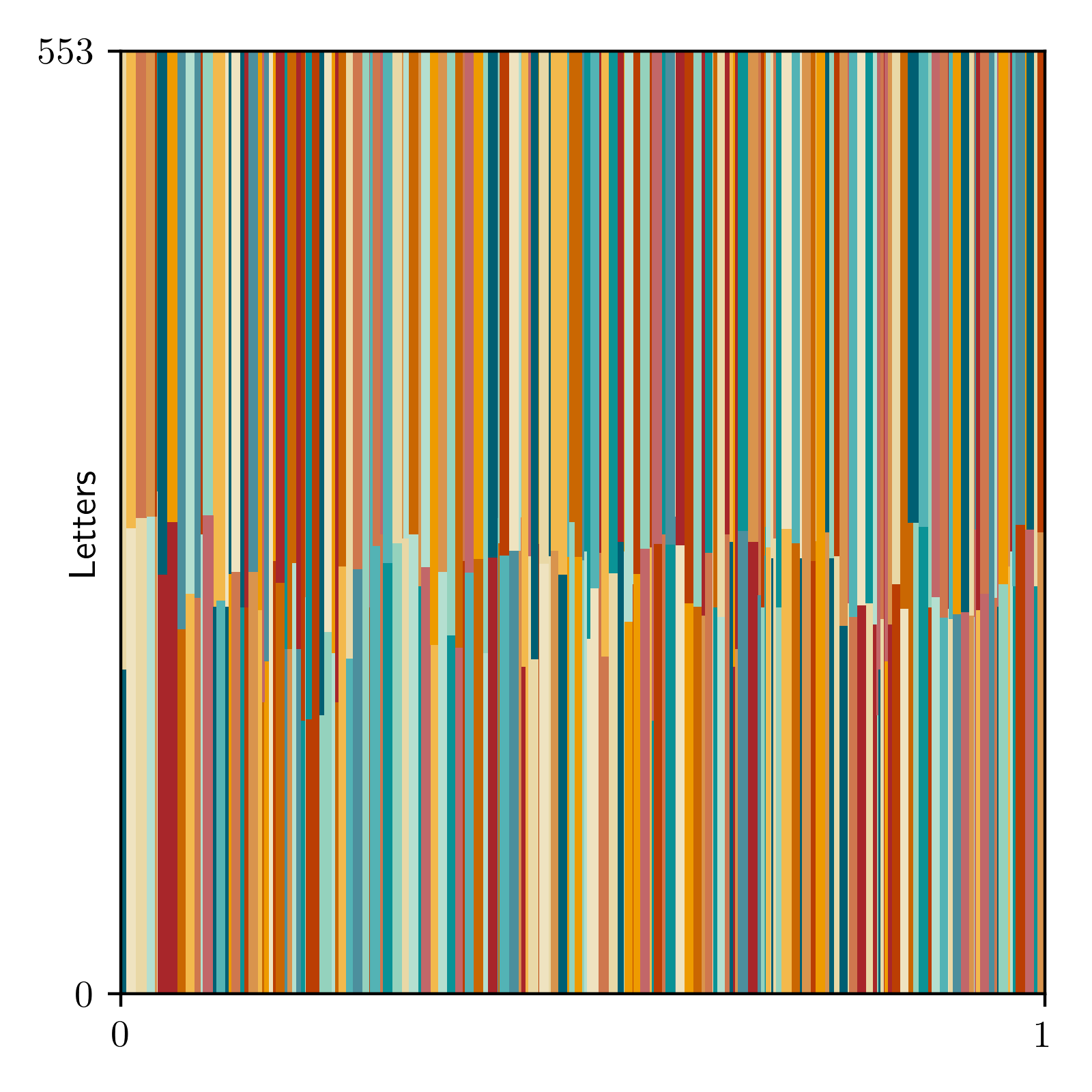}
        \includegraphics[draft=\draft, width=\linewidth]{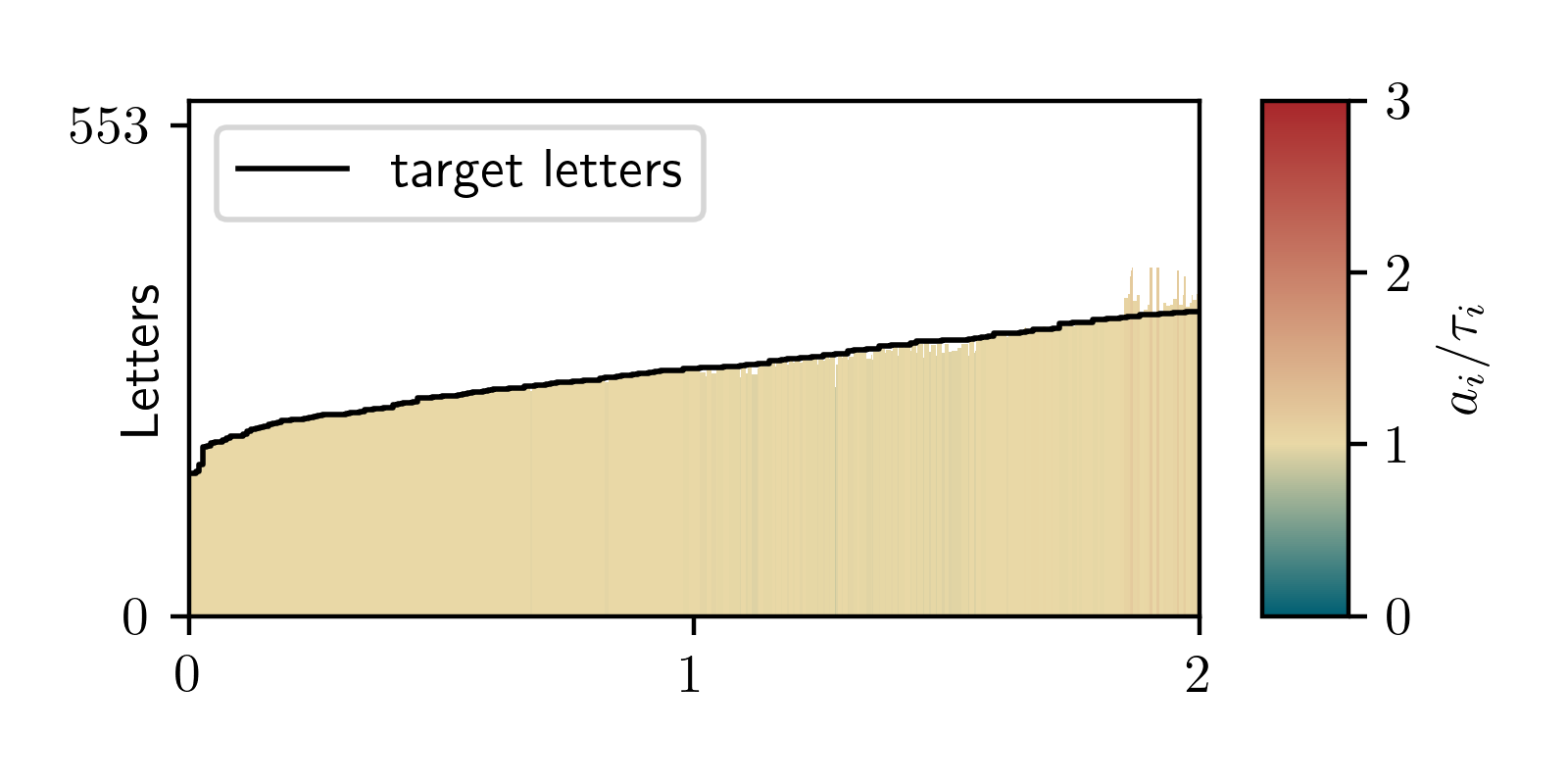}
        \caption{\colgen ($t_G\!=\!2$)}
        \label{fig:results_Nordrhein-Westfalen_Small_column_generation}
    \end{subfigure}
    \begin{subfigure}{0.32\textwidth}
        \includegraphics[draft=\draft, width=\linewidth]{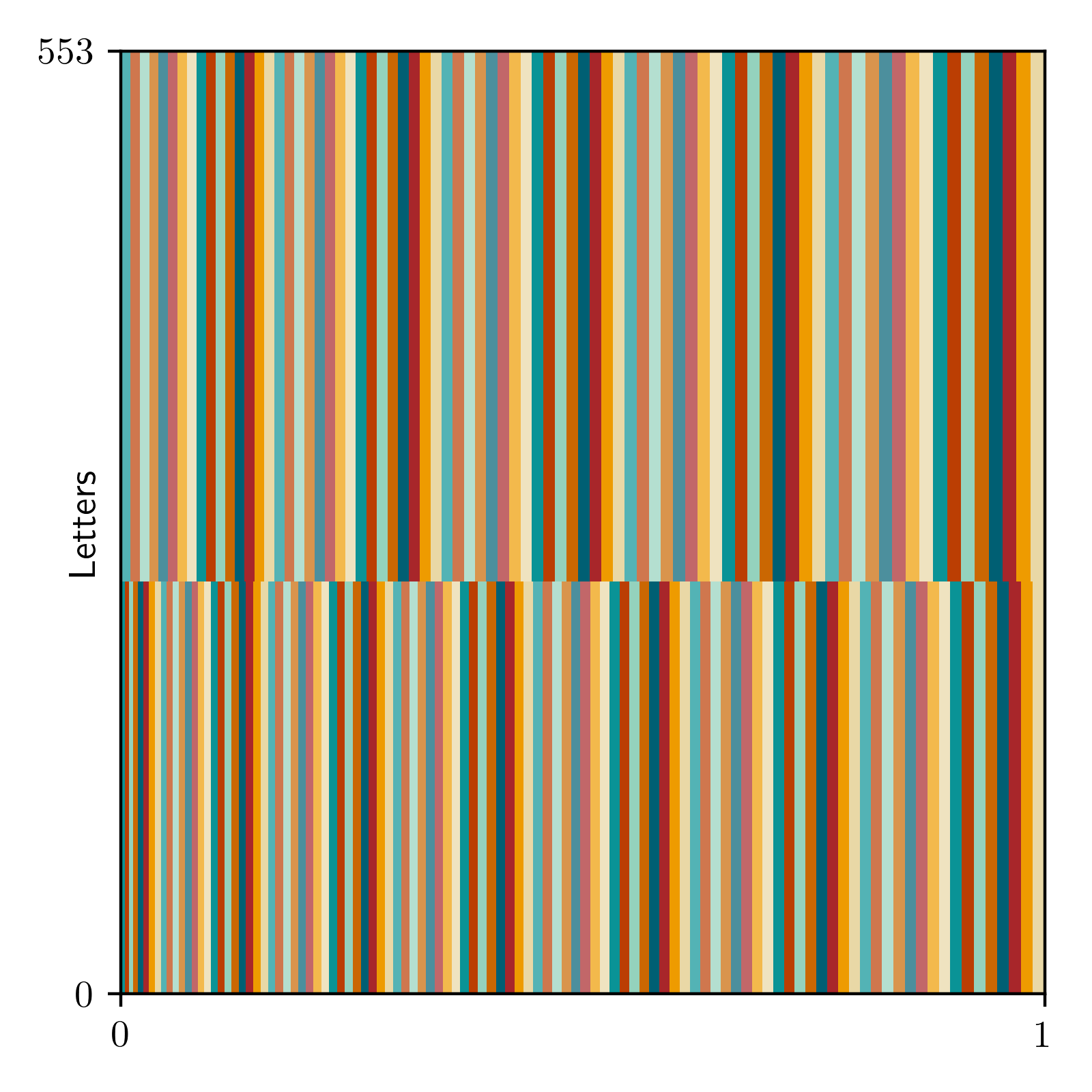}
        \includegraphics[draft=\draft, width=\linewidth]{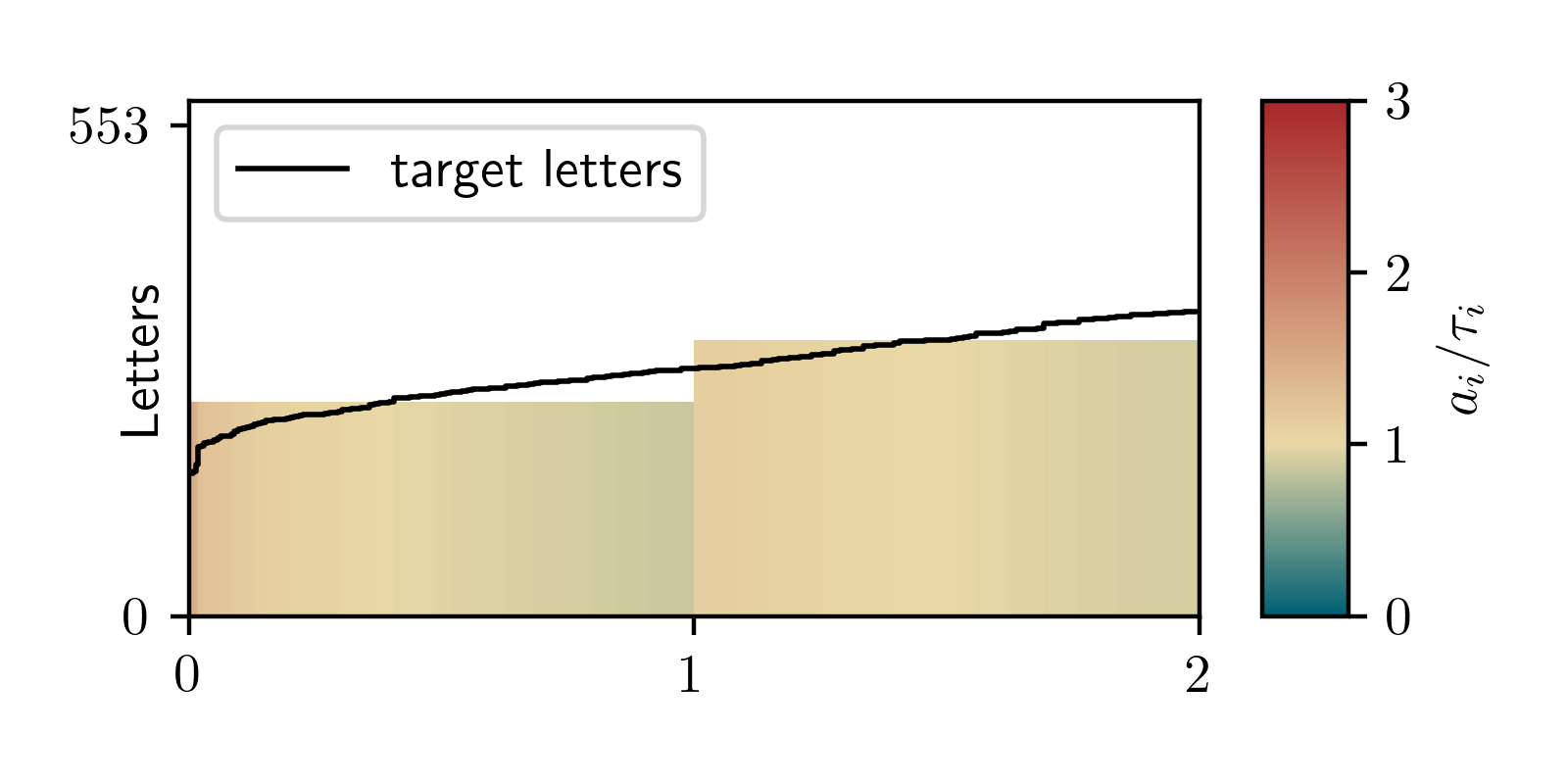}
        \caption{\buckets ($t_G = 2$)}
        \label{fig:results_Nordrhein-Westfalen_Small_greedy_bucket_fill}
    \end{subfigure}
    \caption{Small municipalities of Nordrhein-Westfalen ($\ell_G = 553$)}
    \label{fig:results_Nordrhein-Westfalen_Small}
\end{figure} 

\begin{figure}
    \centering
    \begin{subfigure}{0.32\textwidth}
        \includegraphics[draft=\draft, width=\linewidth]{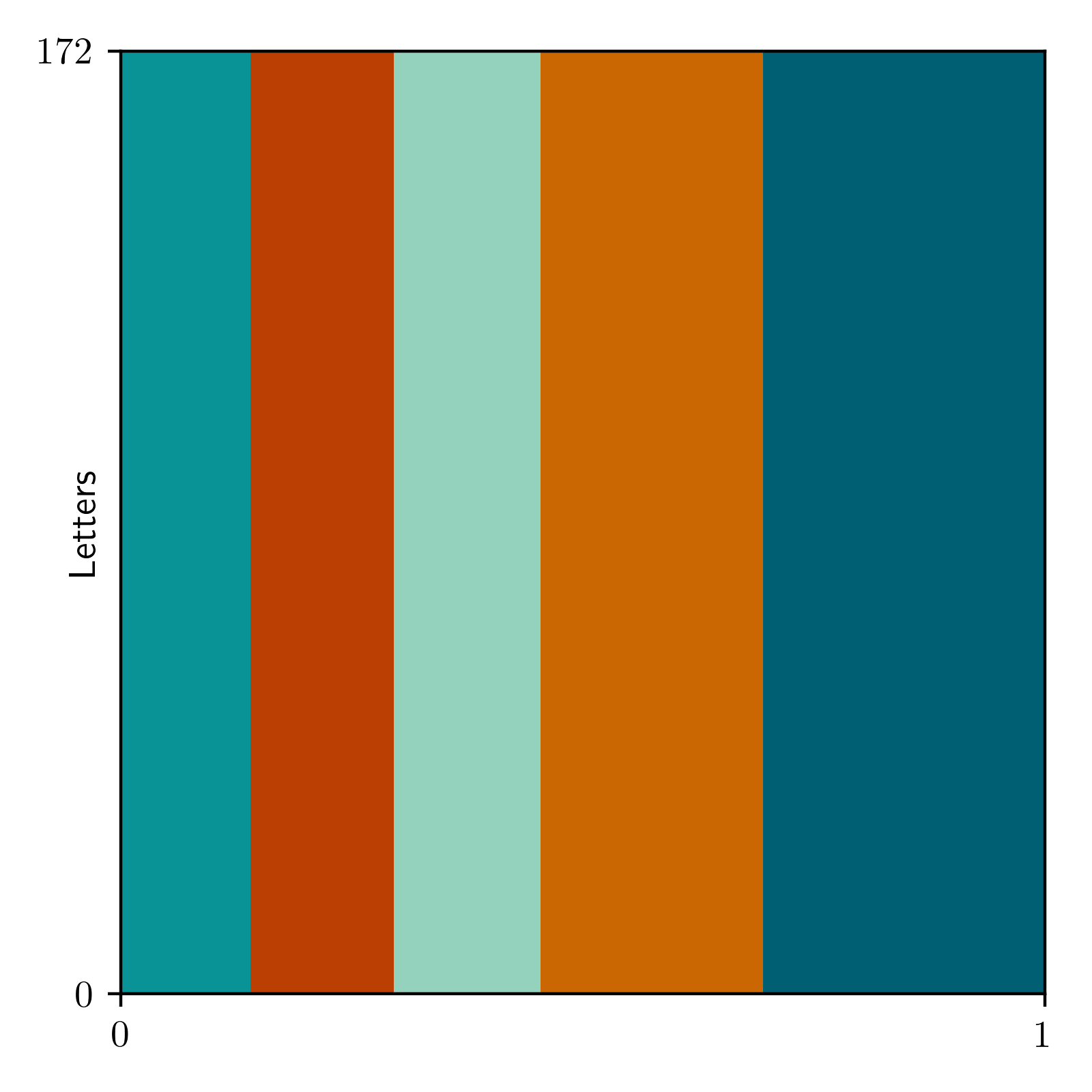}
        \includegraphics[draft=\draft, width=\linewidth]{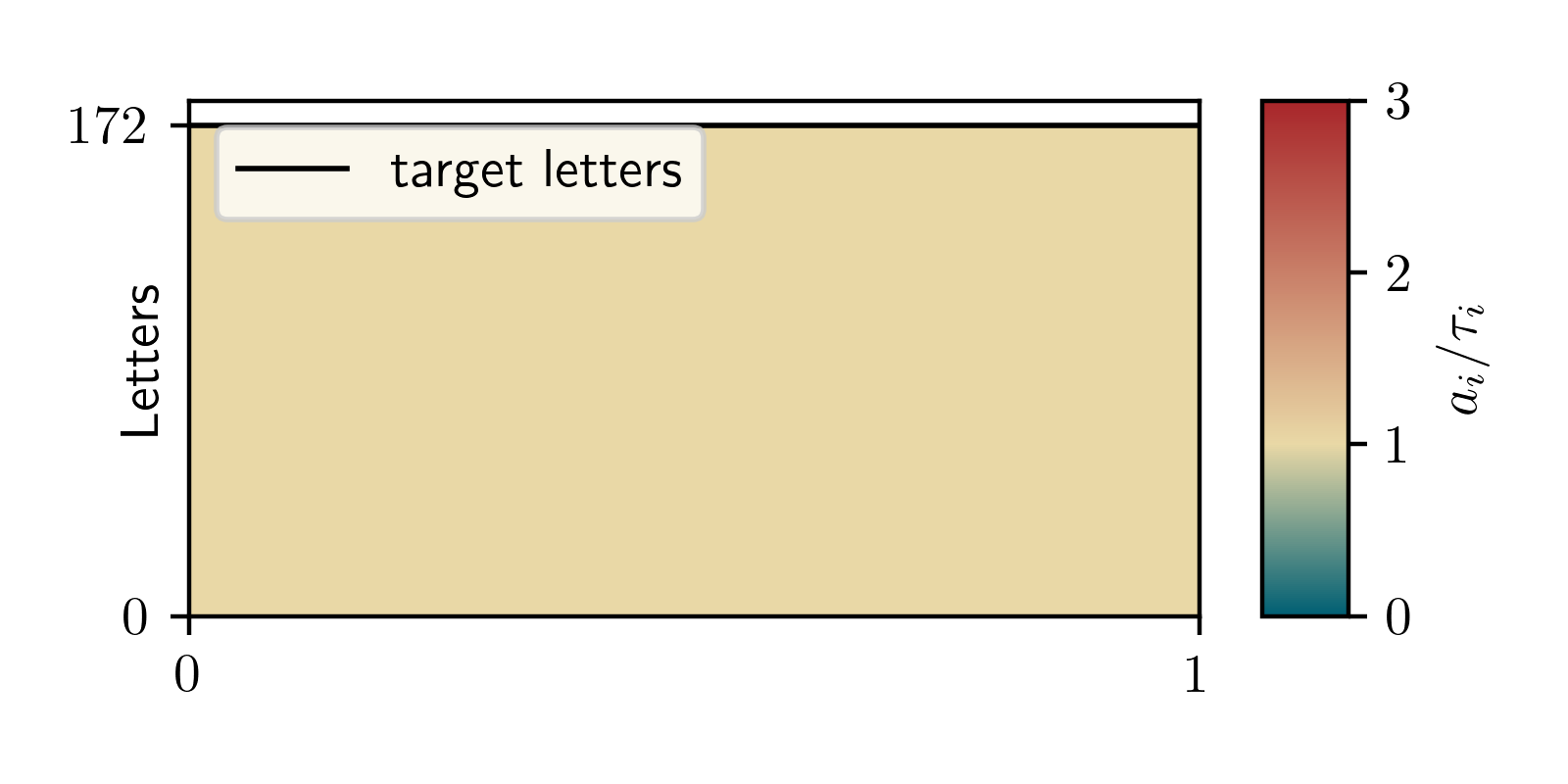}
        \caption{\greq ($t_G = 1$)}
        \label{fig:results_Rheinland-Pfalz_Large_greedy_equal}
    \end{subfigure}
    \begin{subfigure}{0.32\textwidth}
        \includegraphics[draft=\draft, width=\linewidth]{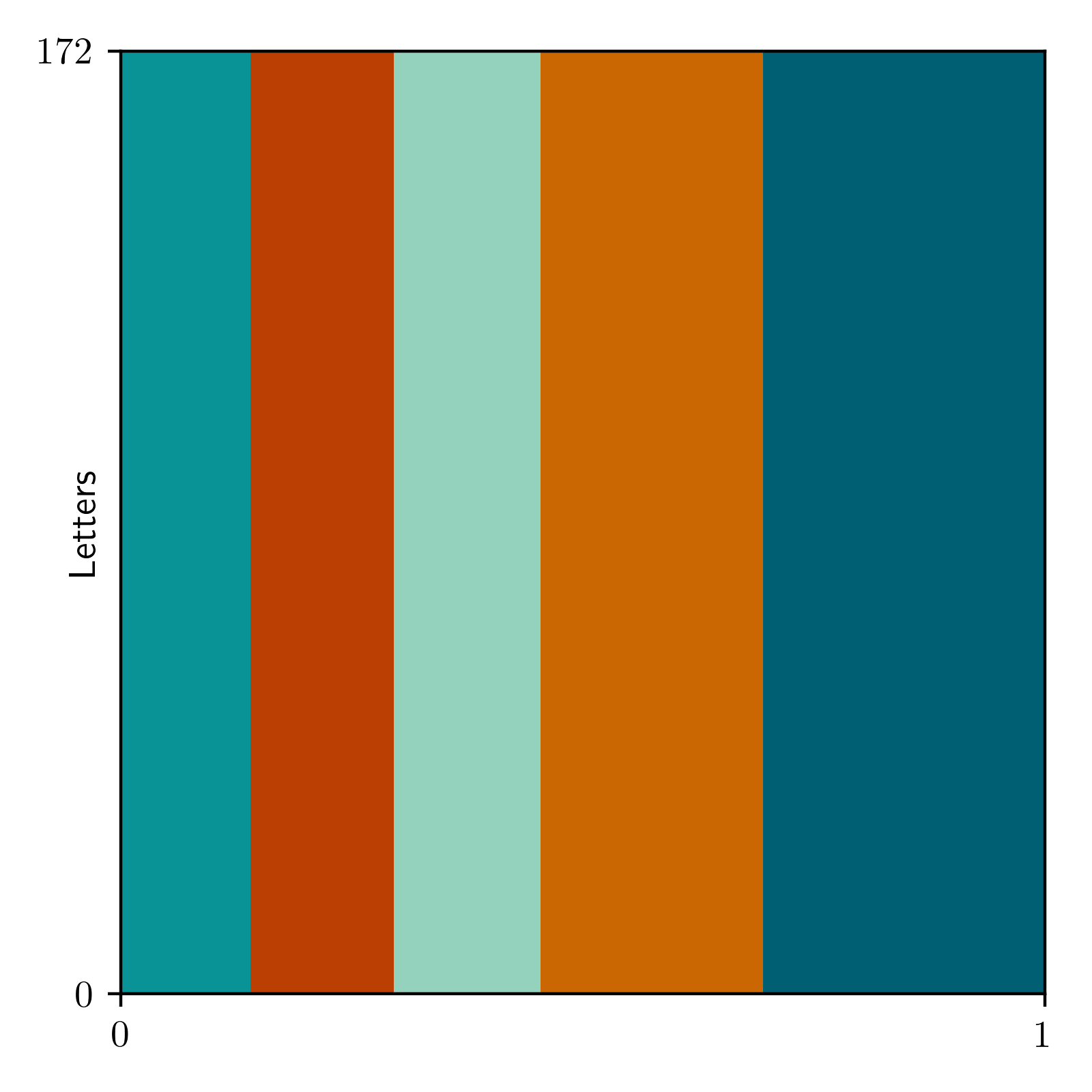}
        \includegraphics[draft=\draft, width=\linewidth]{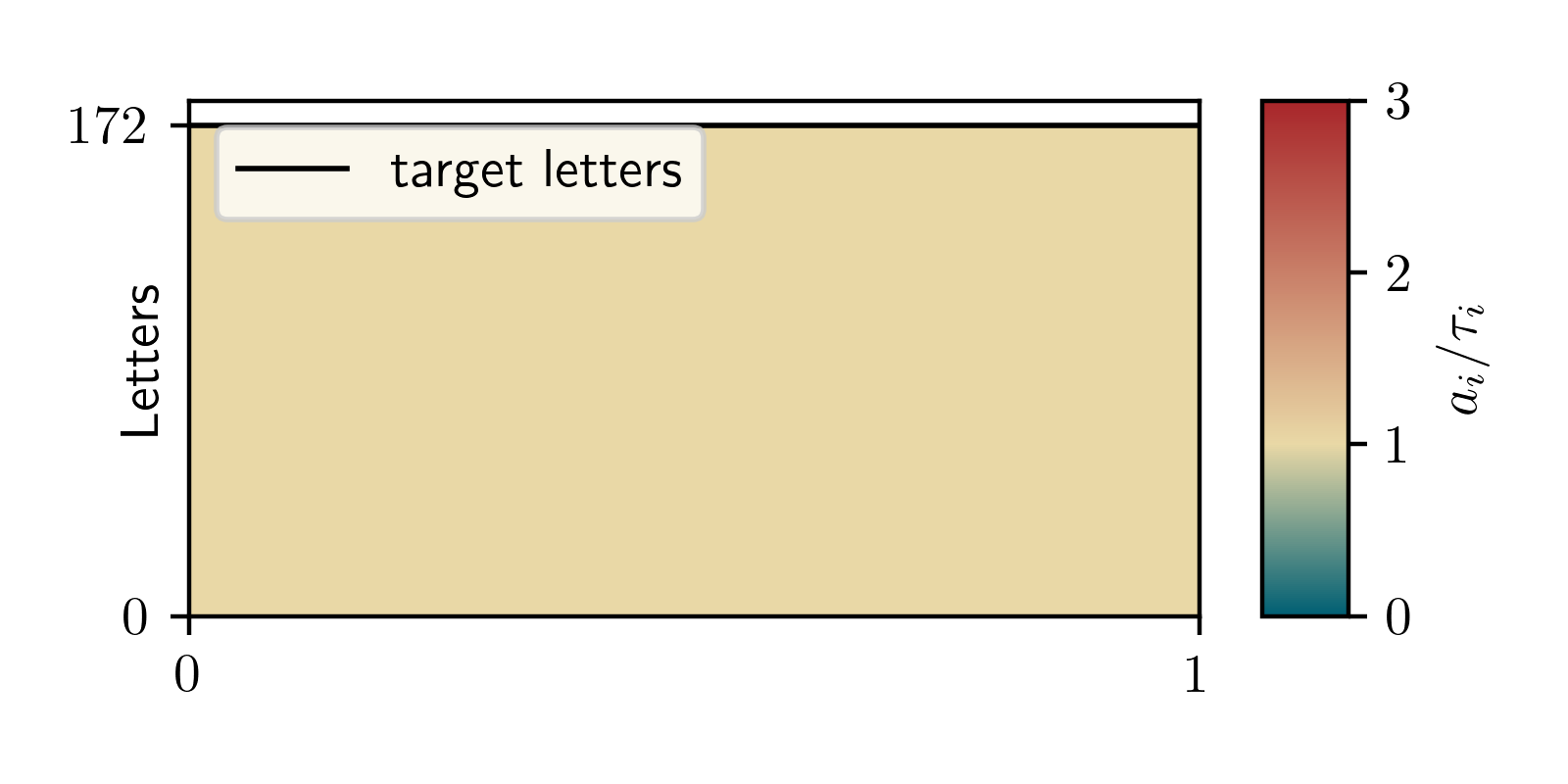}
        \caption{\colgen ($t_G\!=\!1$)}
        \label{fig:results_Rheinland-Pfalz_Large_column_generation}
    \end{subfigure}
    \begin{subfigure}{0.32\textwidth}
        \includegraphics[draft=\draft, width=\linewidth]{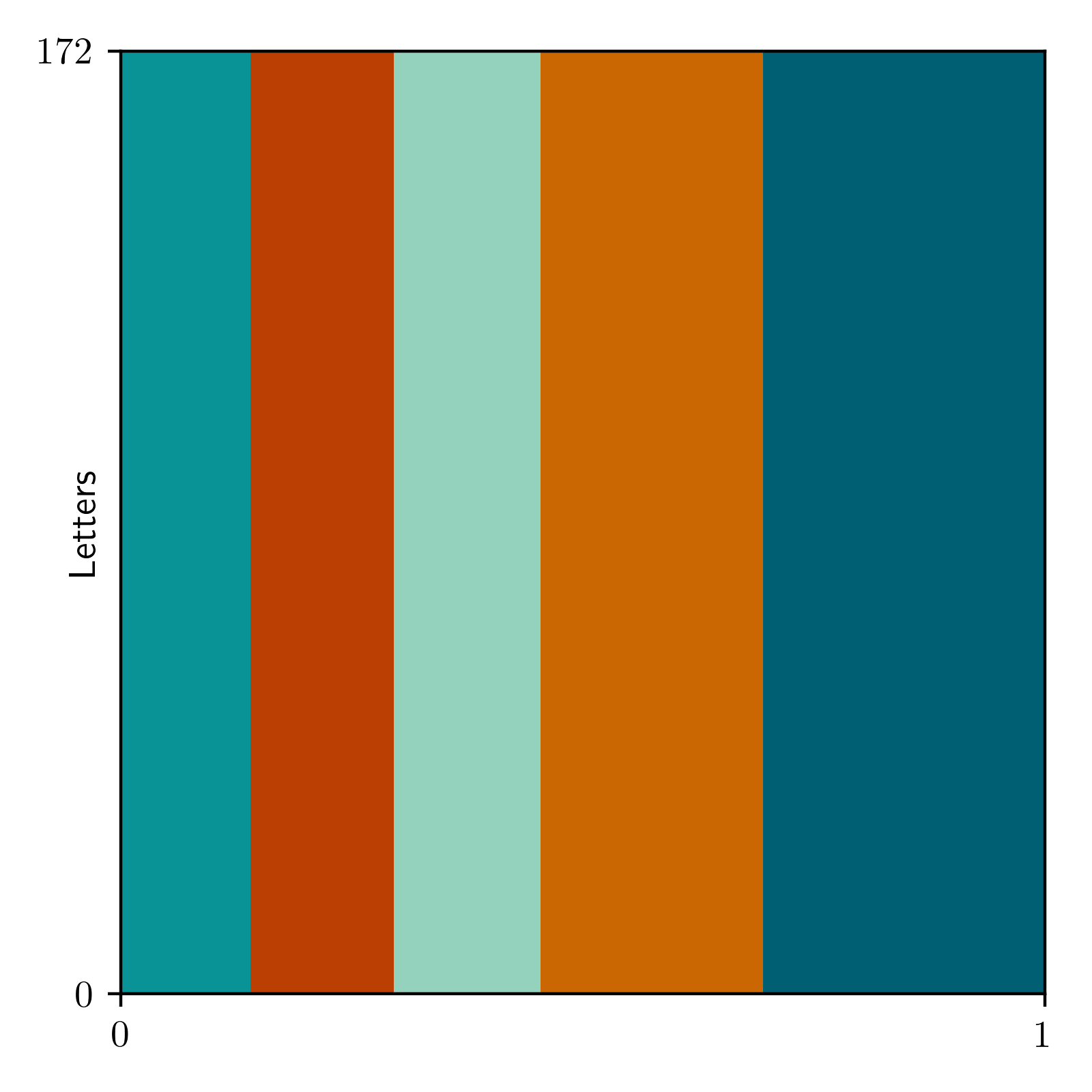}
        \includegraphics[draft=\draft, width=\linewidth]{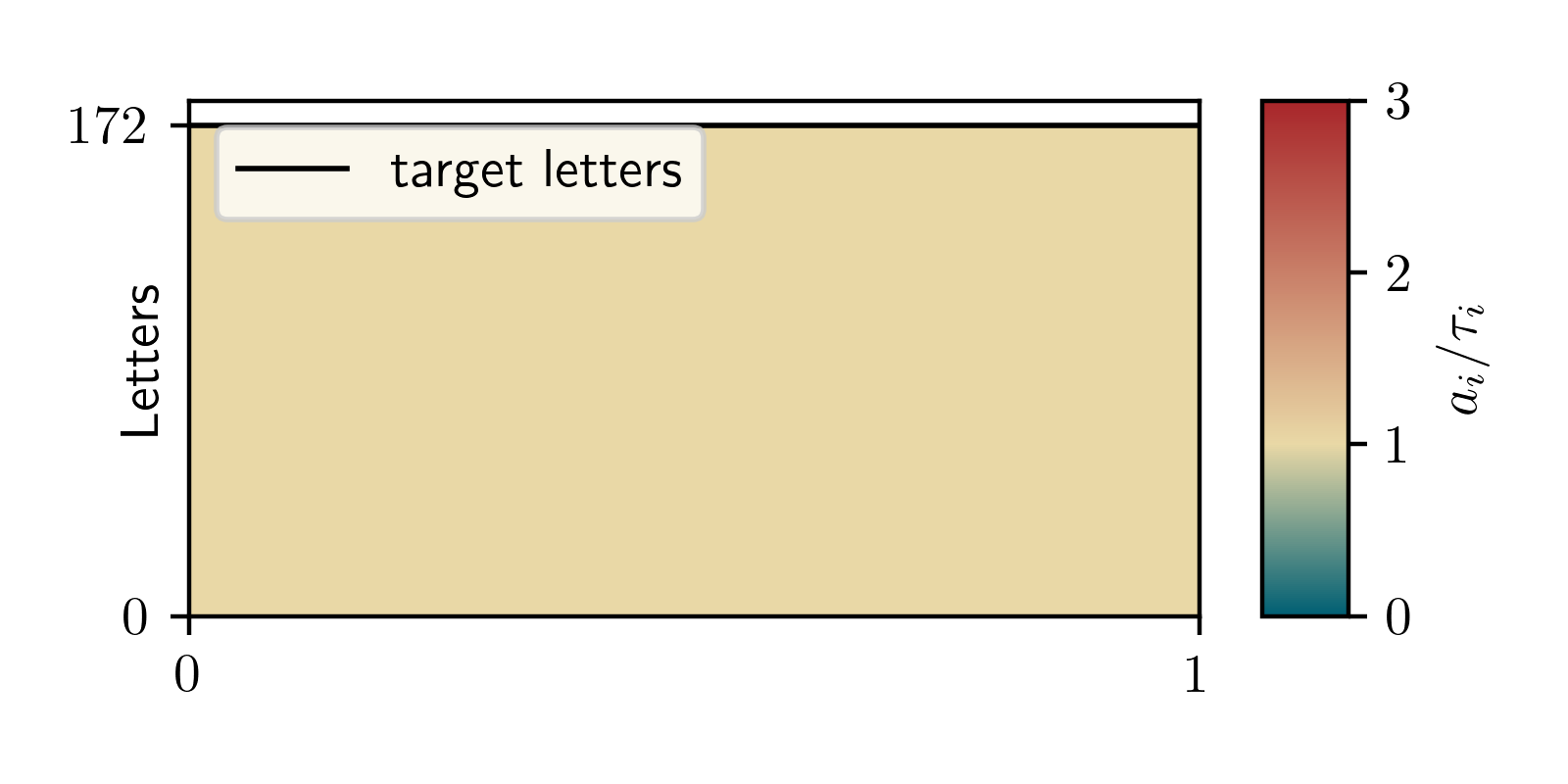}
        \caption{\buckets ($t_G = 1$)}
        \label{fig:results_Rheinland-Pfalz_Large_greedy_bucket_fill}
    \end{subfigure}
    \caption{Large municipalities of Rheinland-Pfalz ($\ell_G = 172$)}
    \label{fig:results_Rheinland-Pfalz_Large}
\end{figure} 

\begin{figure}
    \centering
    \begin{subfigure}{0.32\textwidth}
        \includegraphics[draft=\draft, width=\linewidth]{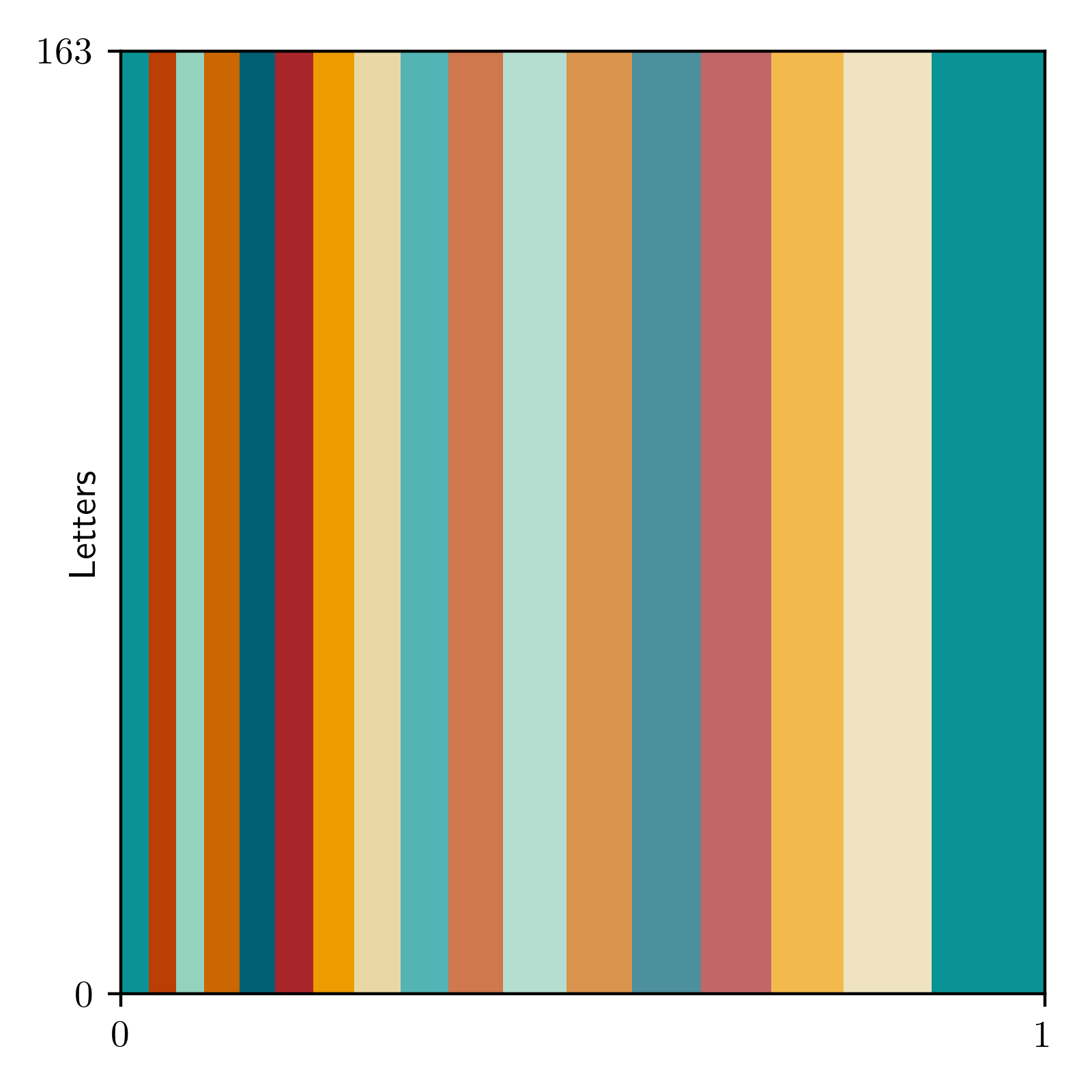}
        \includegraphics[draft=\draft, width=\linewidth]{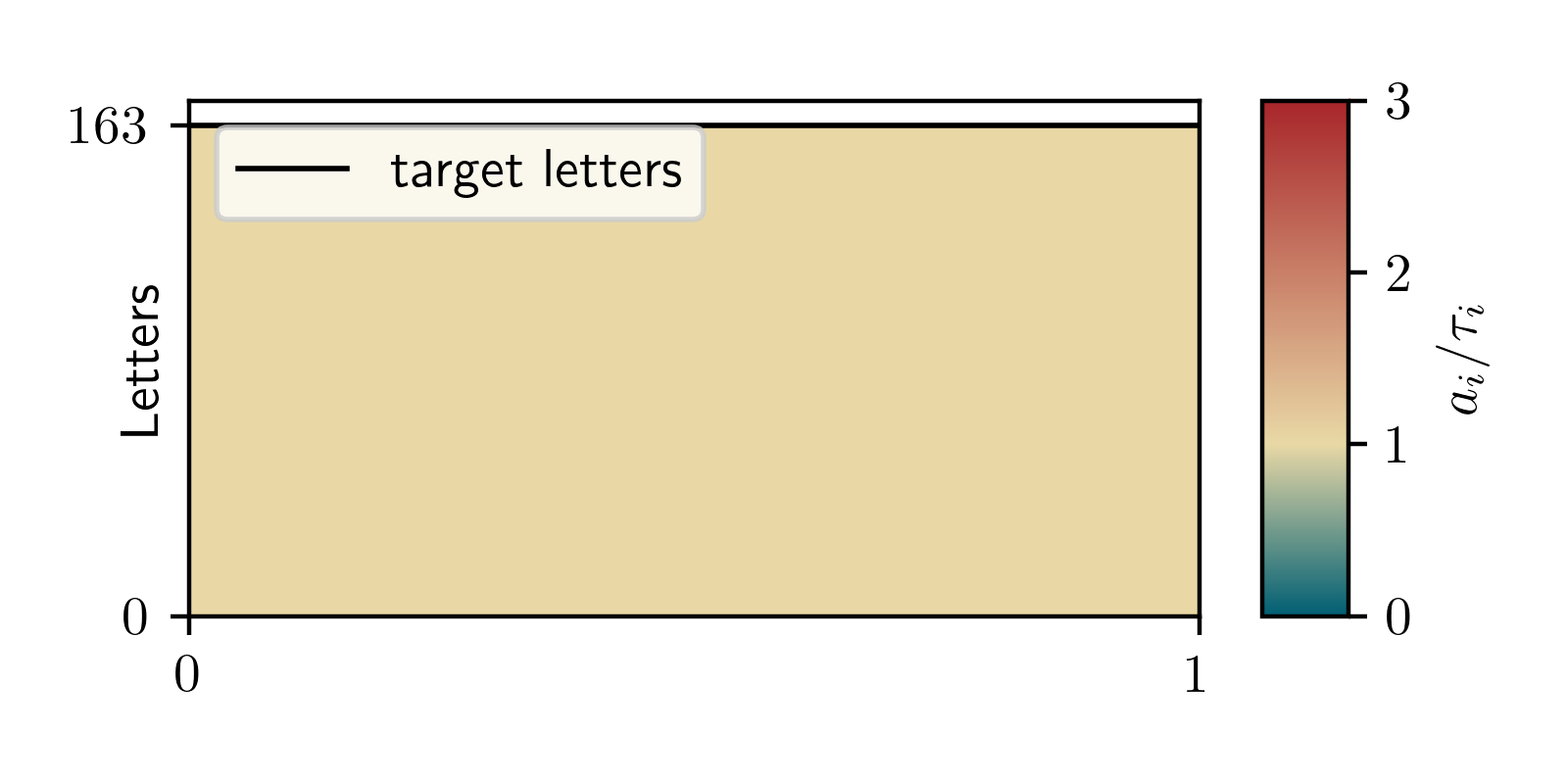}
        \caption{\greq ($t_G = 1$)}
        \label{fig:results_Rheinland-Pfalz_Medium_greedy_equal}
    \end{subfigure}
    \begin{subfigure}{0.32\textwidth}
        \includegraphics[draft=\draft, width=\linewidth]{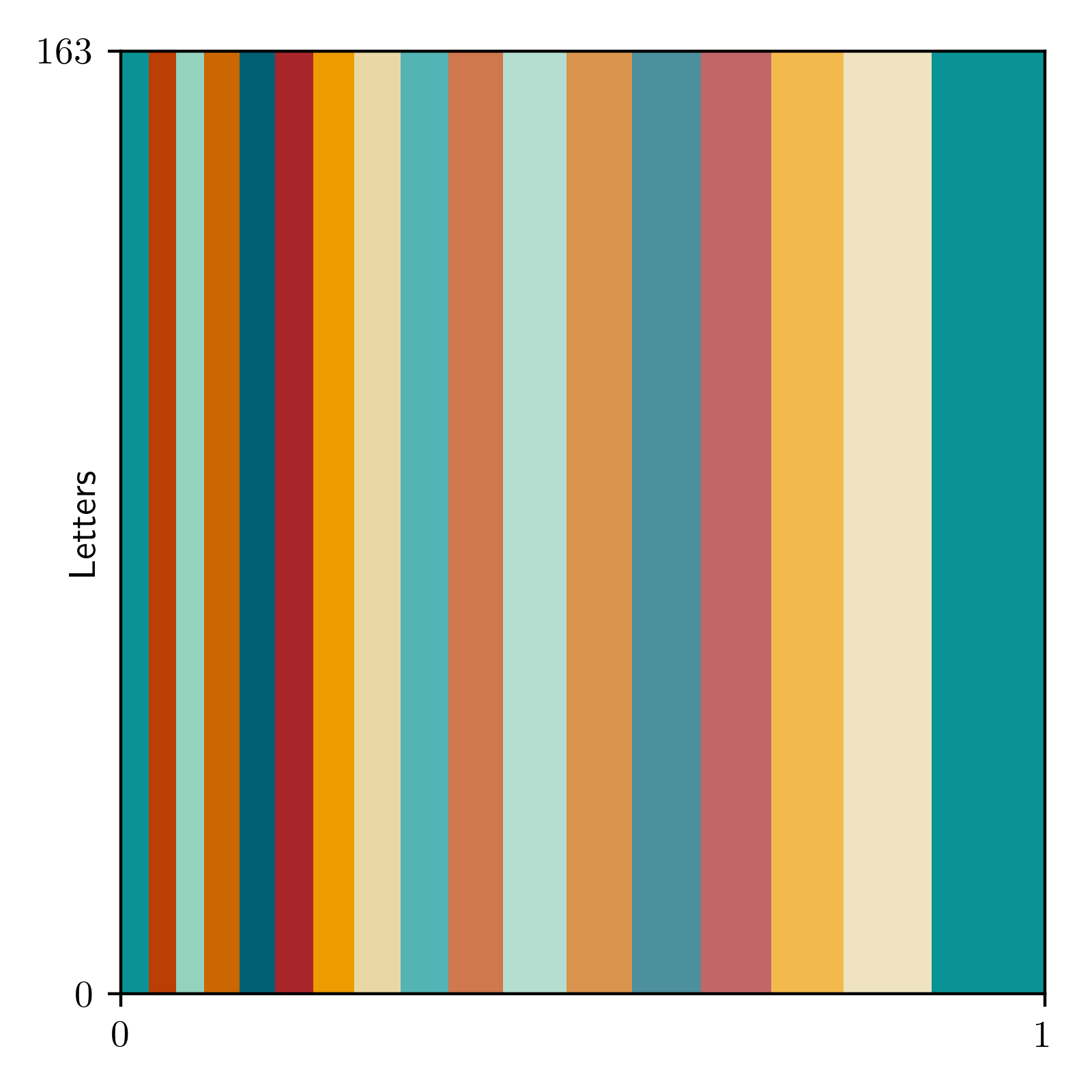}
        \includegraphics[draft=\draft, width=\linewidth]{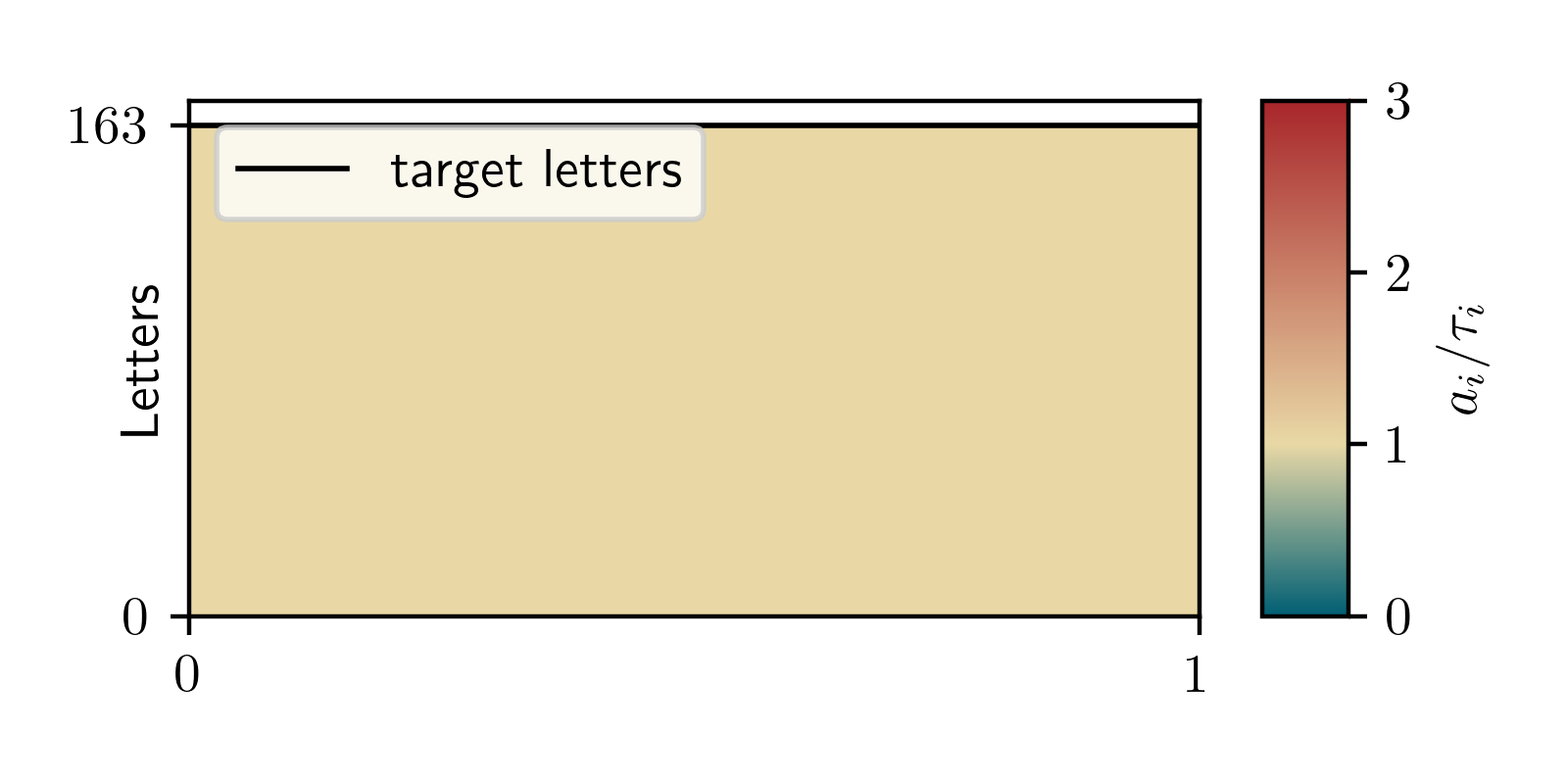}
        \caption{\colgen ($t_G\!=\!1$)}
        \label{fig:results_Rheinland-Pfalz_Medium_column_generation}
    \end{subfigure}
    \begin{subfigure}{0.32\textwidth}
        \includegraphics[draft=\draft, width=\linewidth]{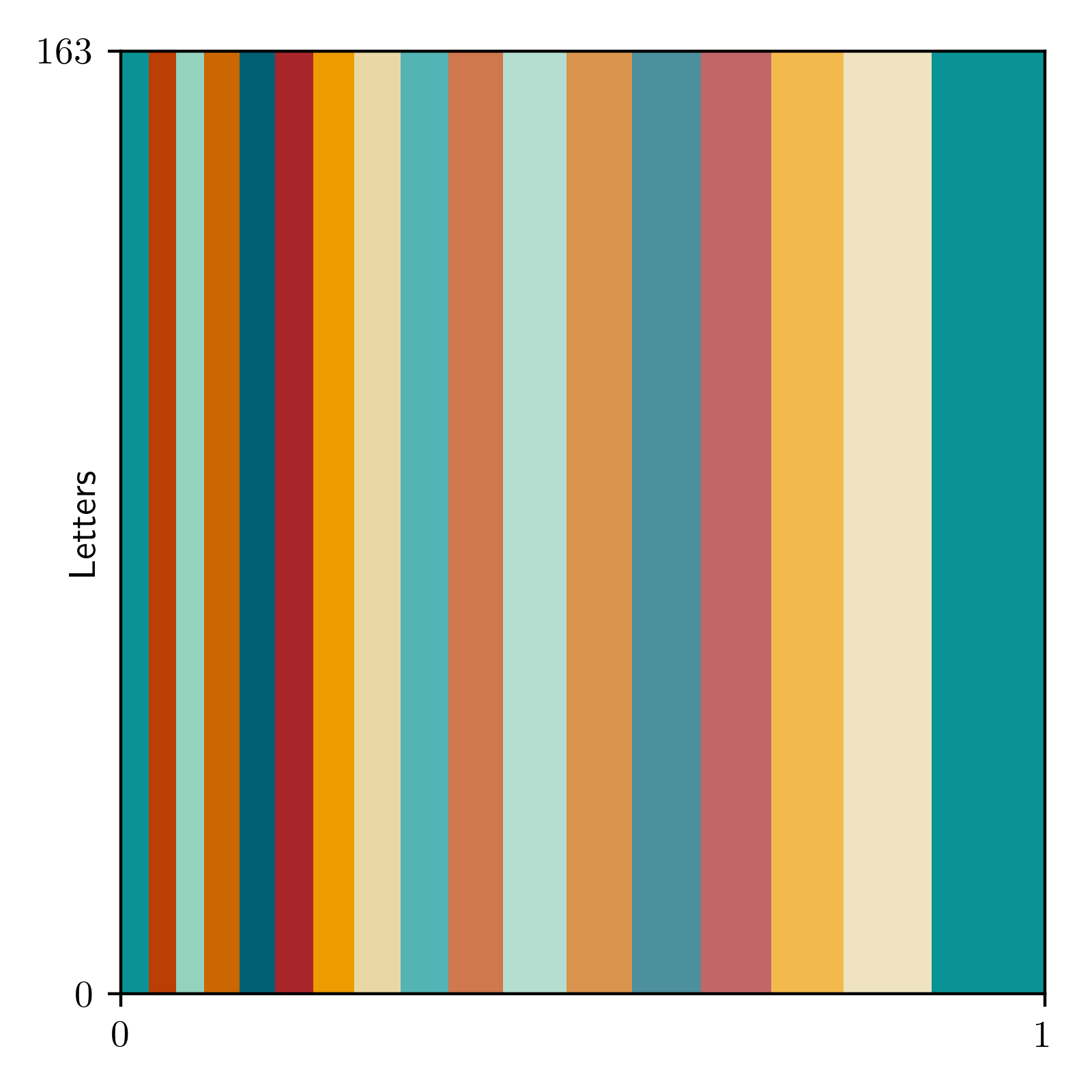}
        \includegraphics[draft=\draft, width=\linewidth]{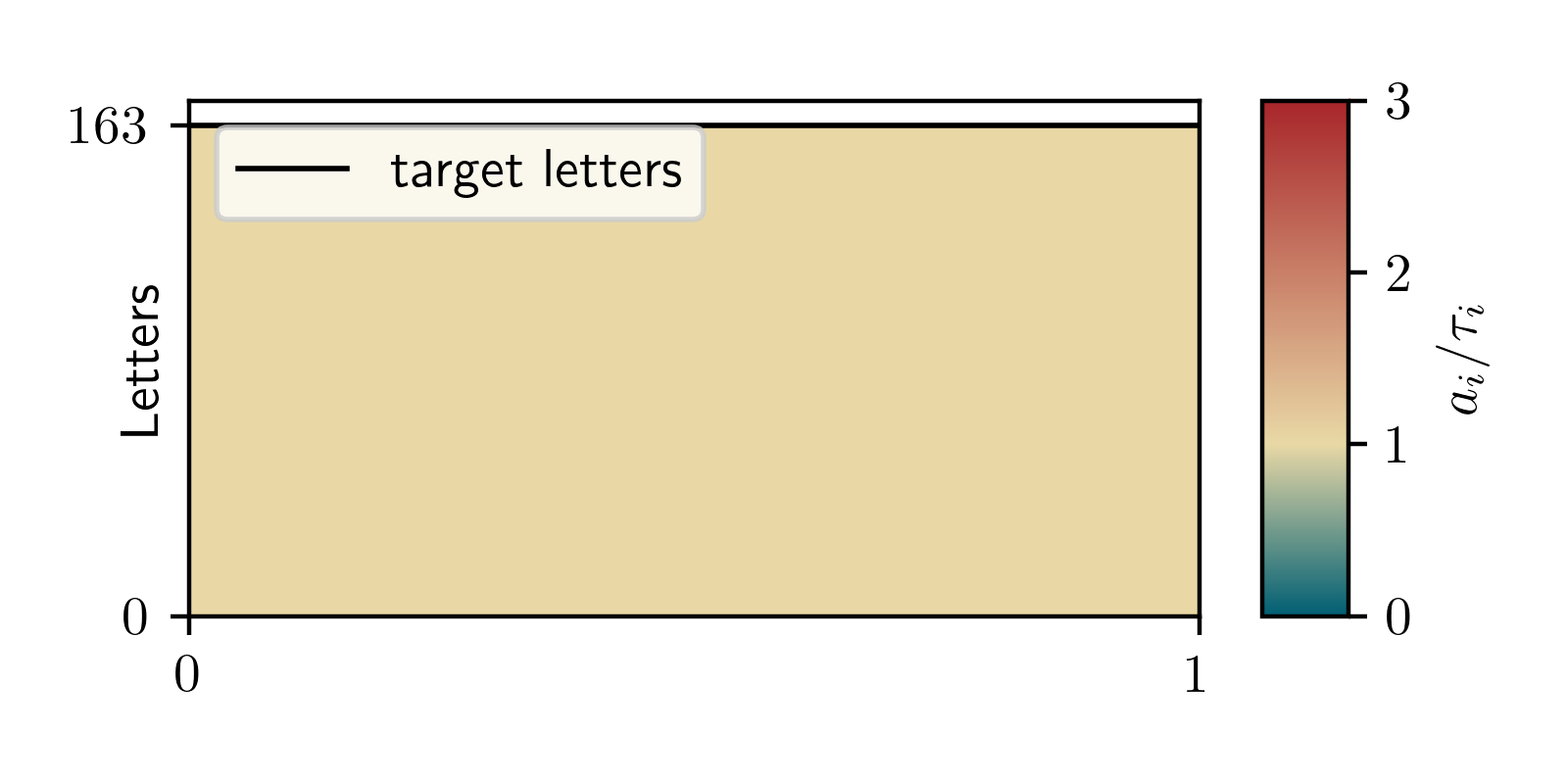}
        \caption{\buckets ($t_G = 1$)}
        \label{fig:results_Rheinland-Pfalz_Medium_greedy_bucket_fill}
    \end{subfigure}
    \caption{Medium municipalities of Rheinland-Pfalz ($\ell_G = 163$)}
    \label{fig:results_Rheinland-Pfalz_Medium}
\end{figure} 

\begin{figure}
    \centering
    \begin{subfigure}{0.32\textwidth}
        \includegraphics[draft=\draft, width=\linewidth]{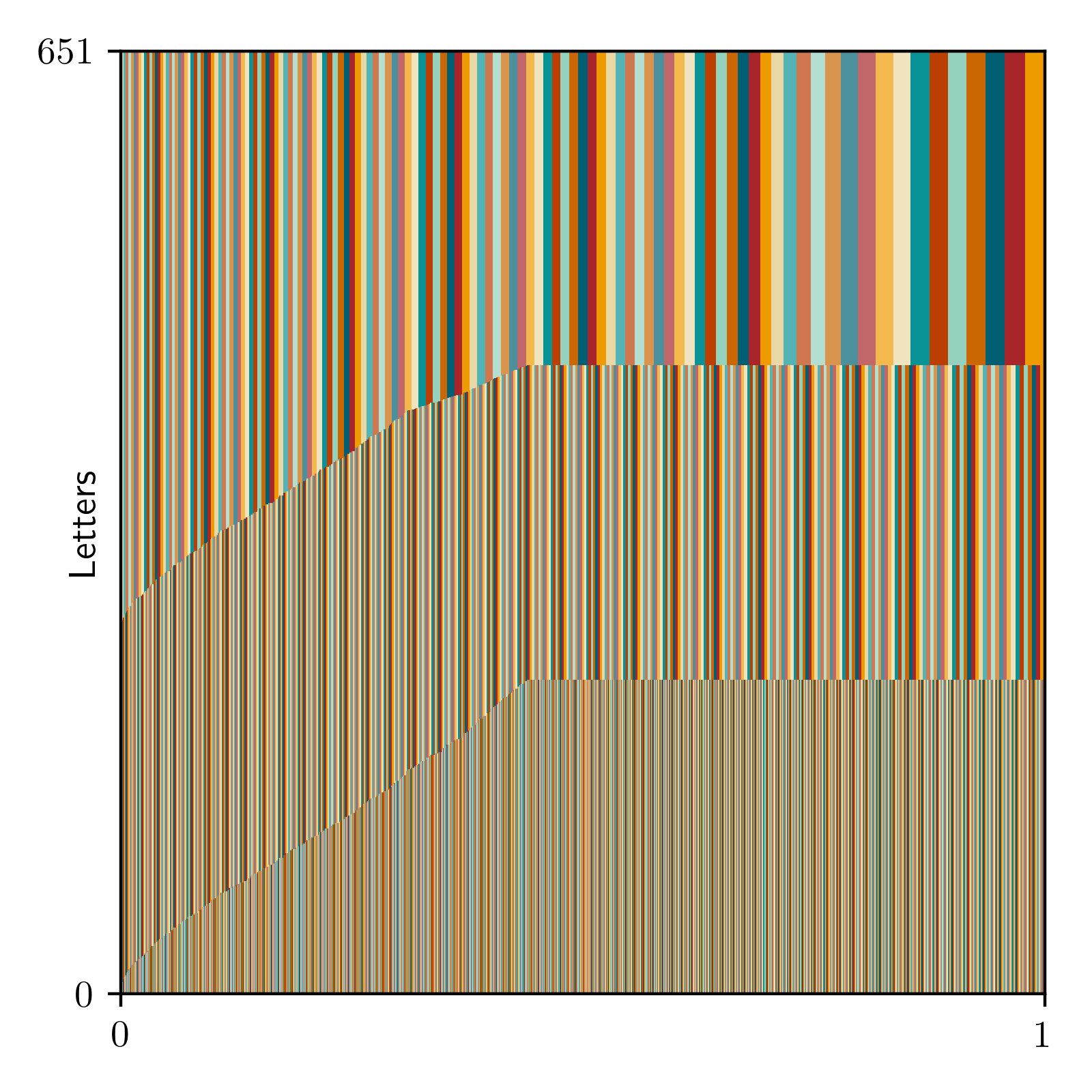}
        \includegraphics[draft=\draft, width=\linewidth]{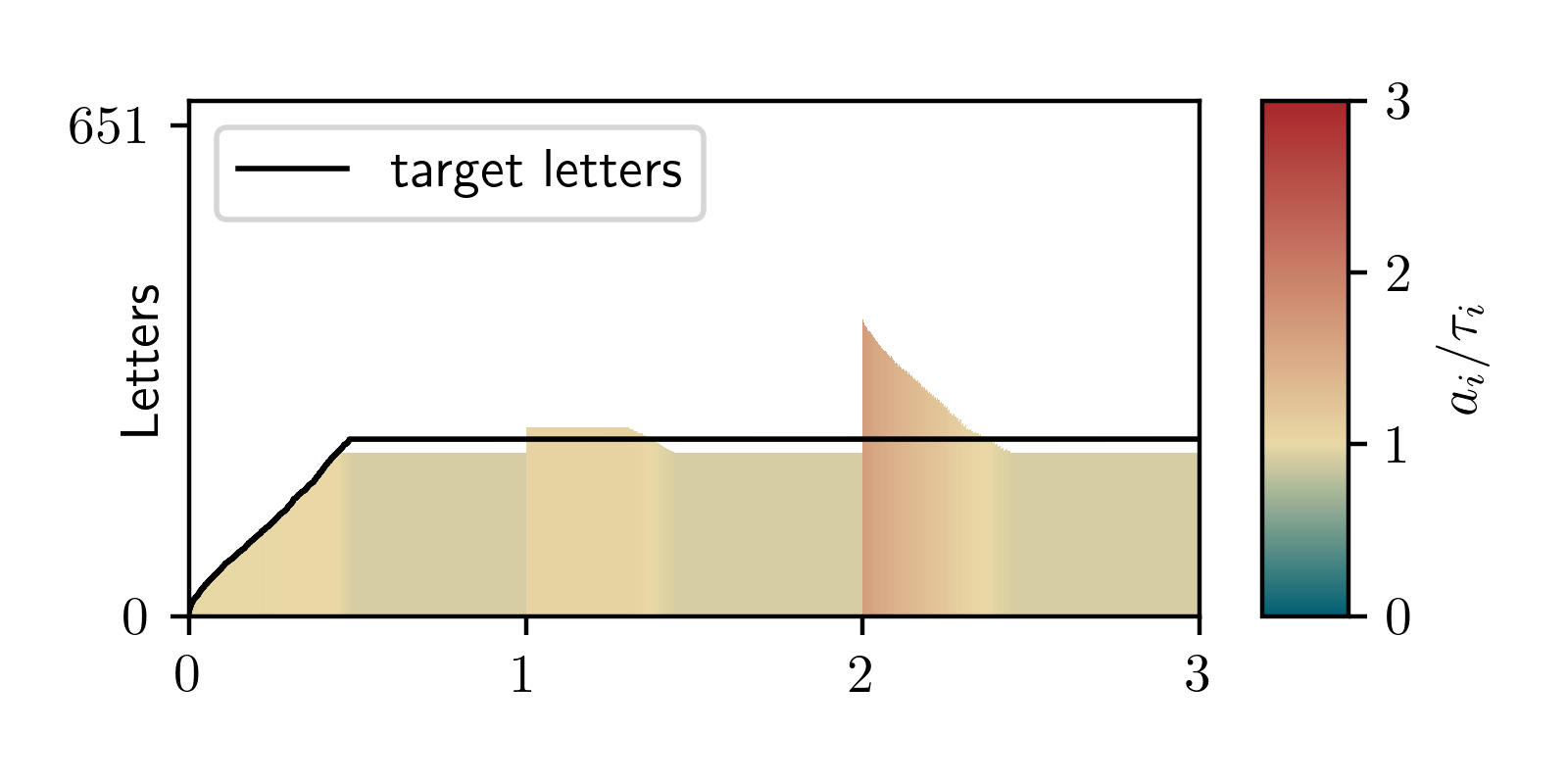}
        \caption{\greq ($t_G = 3$)}
        \label{fig:results_Rheinland-Pfalz_Small_greedy_equal}
    \end{subfigure}
    \begin{subfigure}{0.32\textwidth}
        \includegraphics[draft=\draft, width=\linewidth]{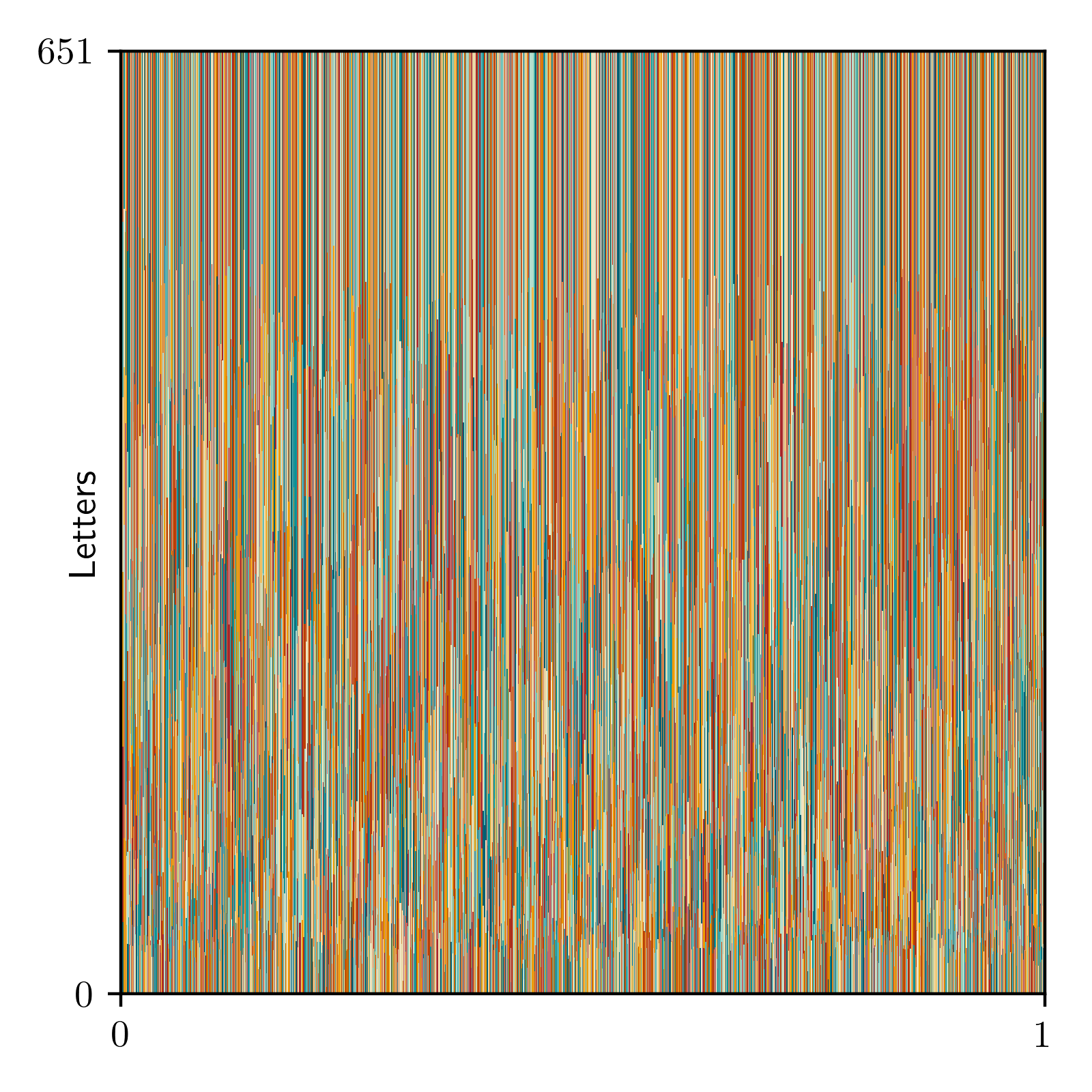}
        \includegraphics[draft=\draft, width=\linewidth]{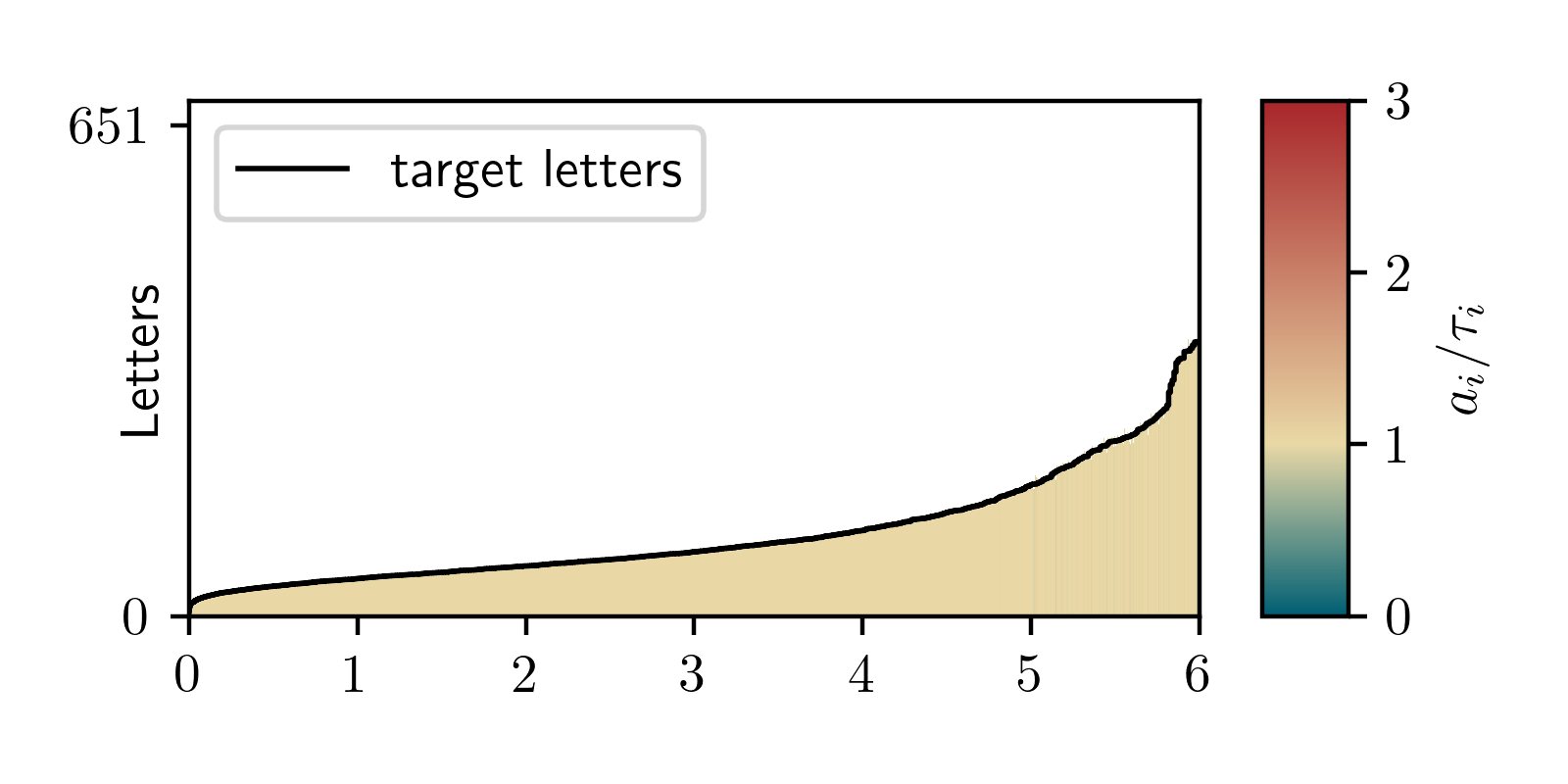}
        \caption{\colgen ($t_G\!=\!6$)}
        \label{fig:results_Rheinland-Pfalz_Small_column_generation}
    \end{subfigure}
    \begin{subfigure}{0.32\textwidth}
        \includegraphics[draft=\draft, width=\linewidth]{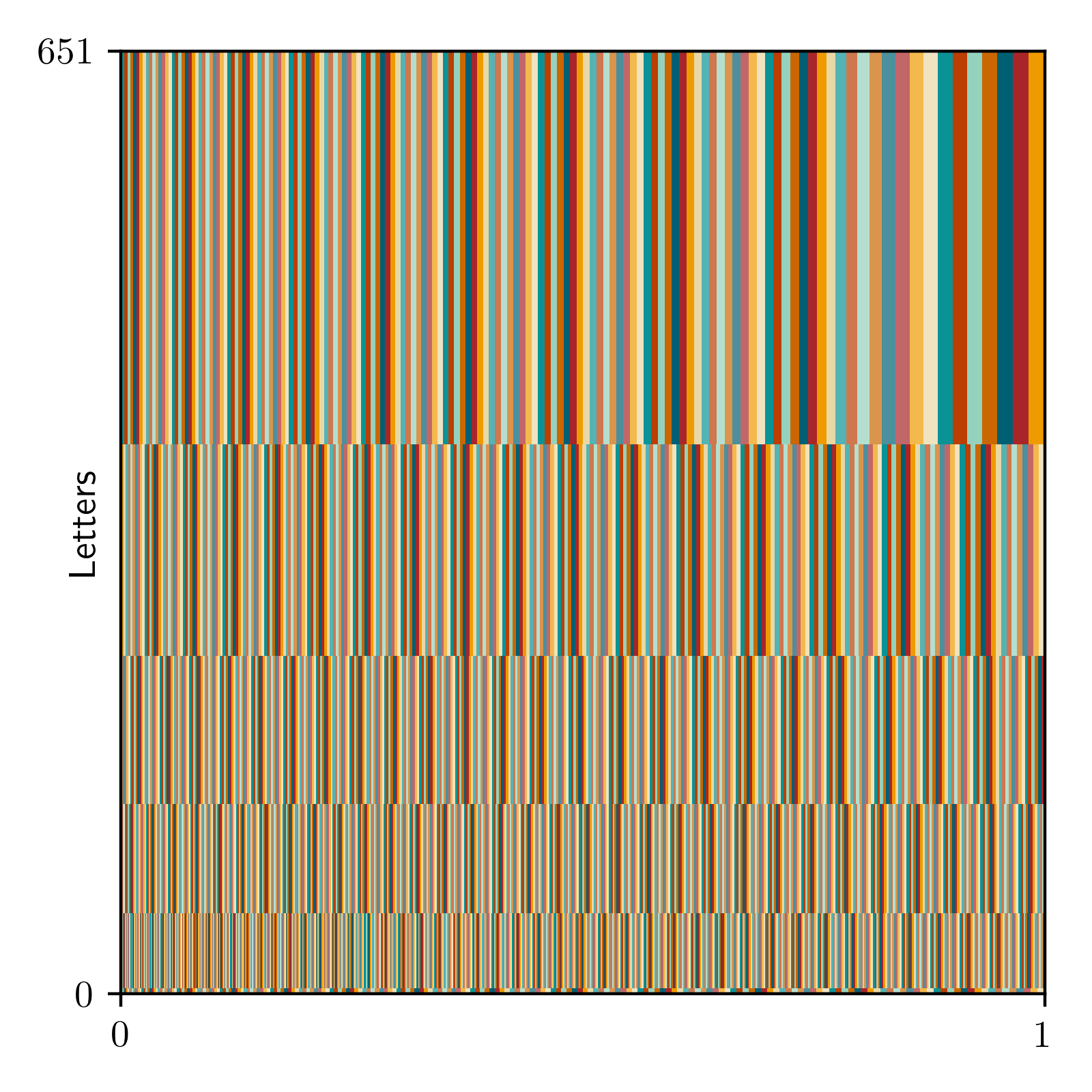}
        \includegraphics[draft=\draft, width=\linewidth]{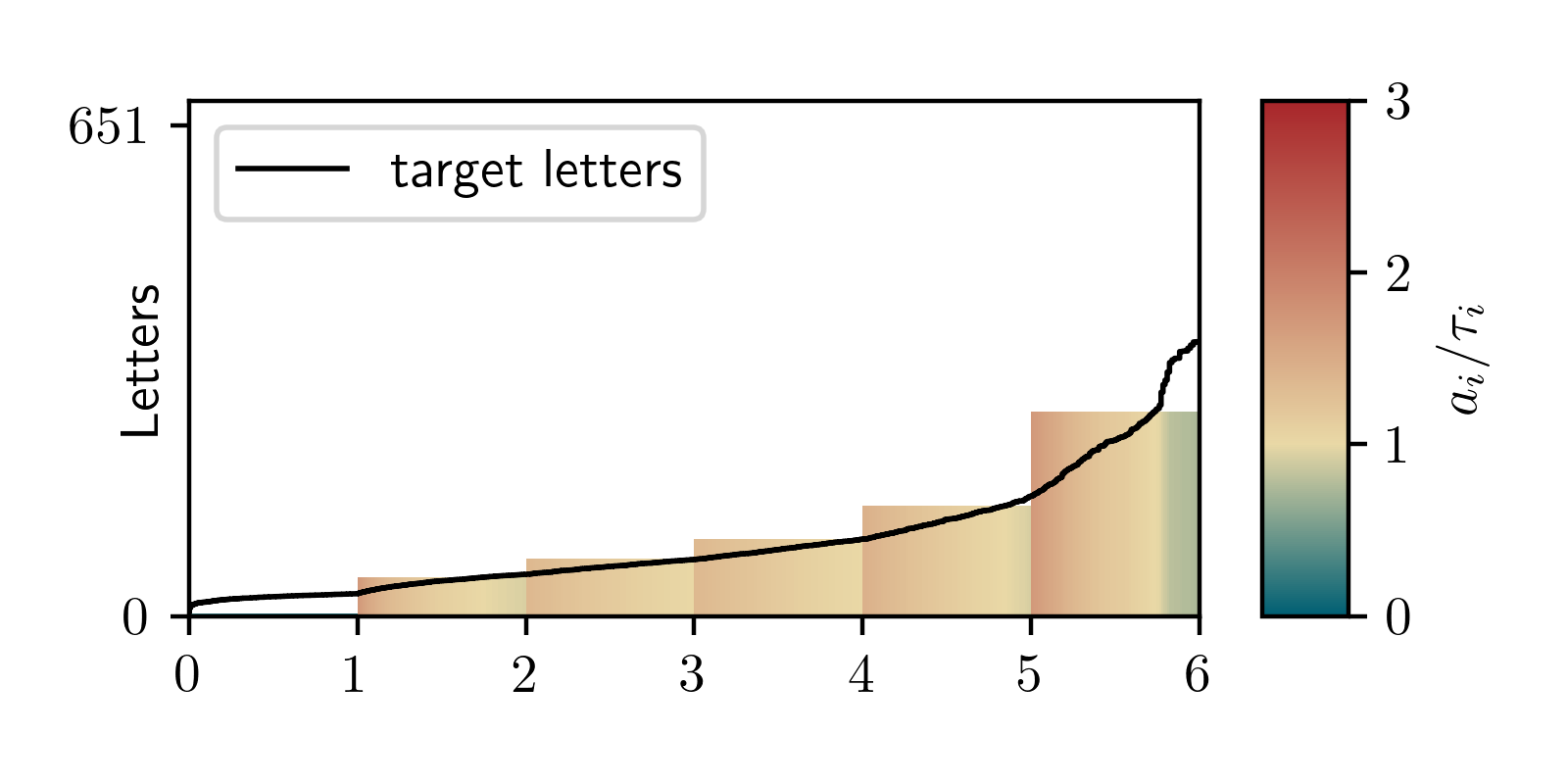}
        \caption{\buckets ($t_G = 6$)}
        \label{fig:results_Rheinland-Pfalz_Small_greedy_bucket_fill}
    \end{subfigure}
    \caption{Small municipalities of Rheinland-Pfalz ($\ell_G = 651$)}
    \label{fig:results_Rheinland-Pfalz_Small}
\end{figure} 

\begin{figure}
    \centering
    \begin{subfigure}{0.32\textwidth}
        \includegraphics[draft=\draft, width=\linewidth]{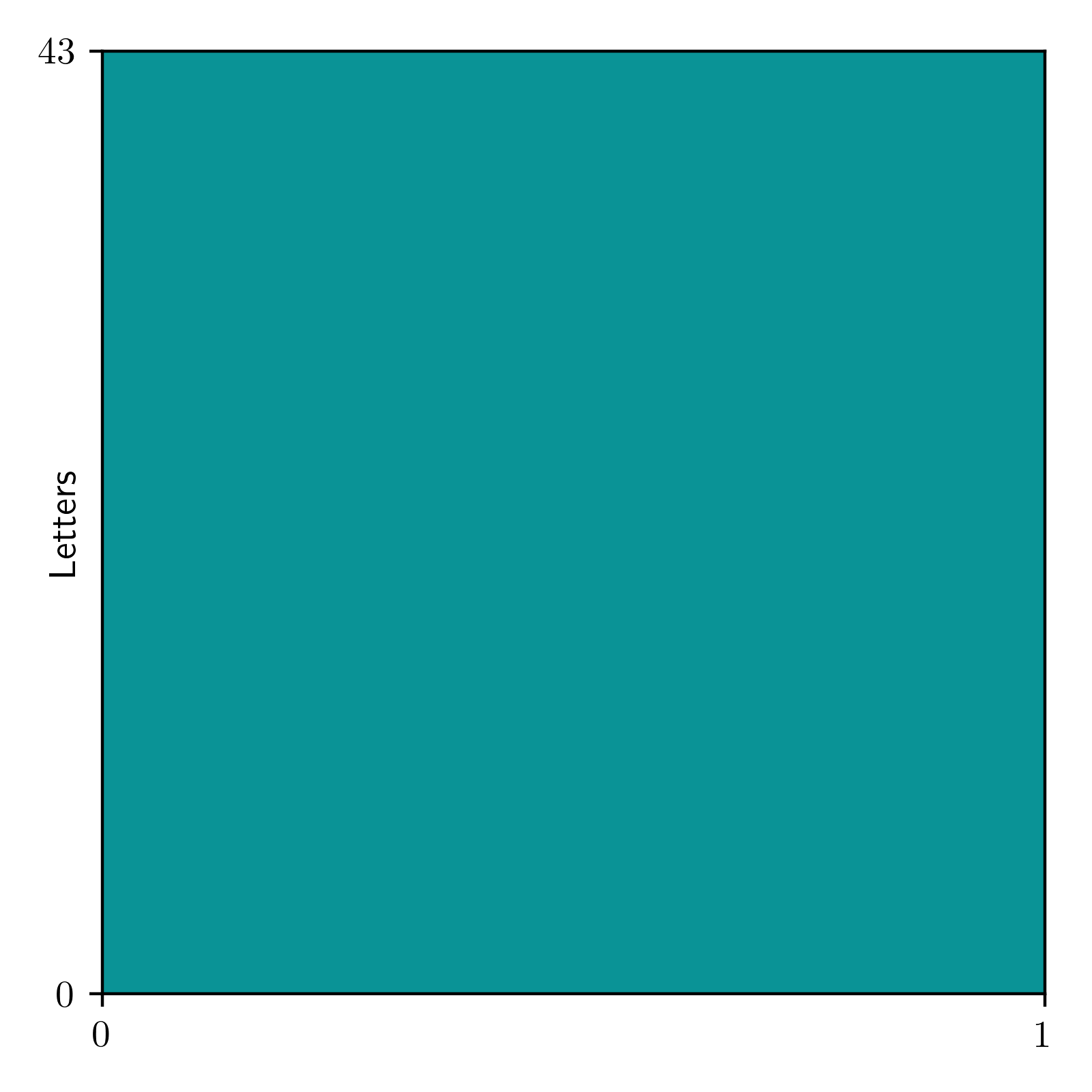}
        \includegraphics[draft=\draft, width=\linewidth]{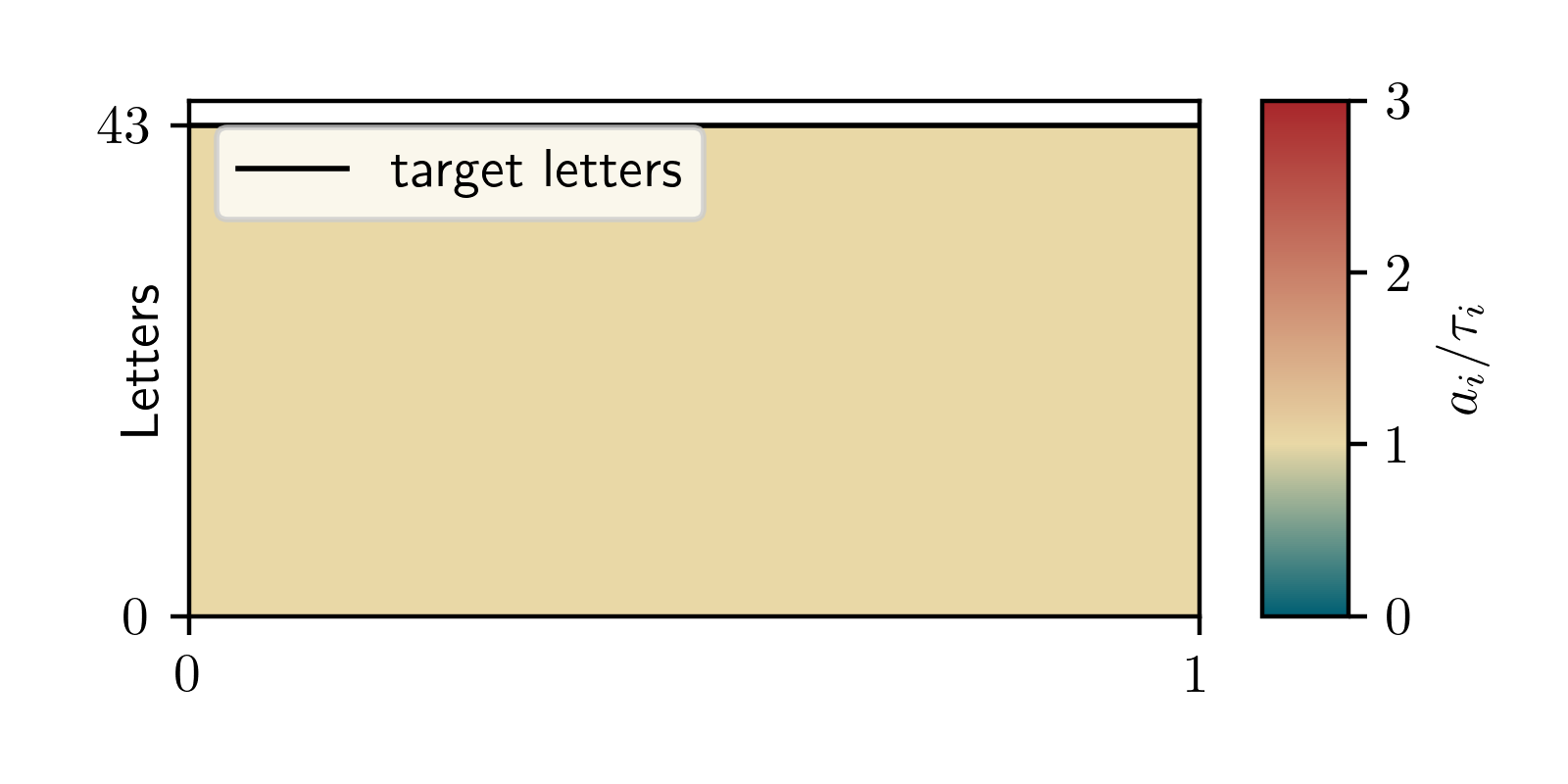}
        \caption{\greq ($t_G = 1$)}
        \label{fig:results_Saarland_Large_greedy_equal}
    \end{subfigure}
    \begin{subfigure}{0.32\textwidth}
        \includegraphics[draft=\draft, width=\linewidth]{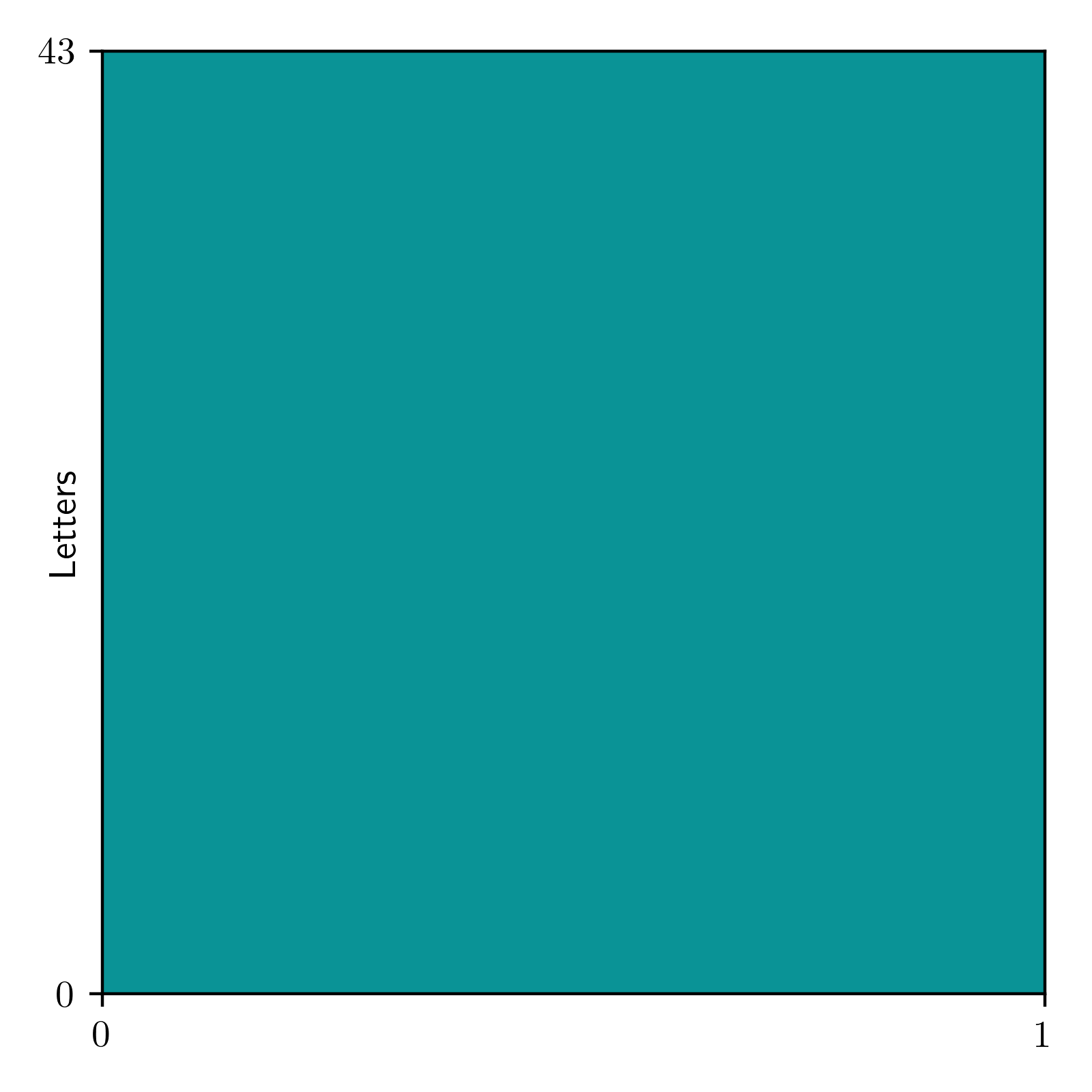}
        \includegraphics[draft=\draft, width=\linewidth]{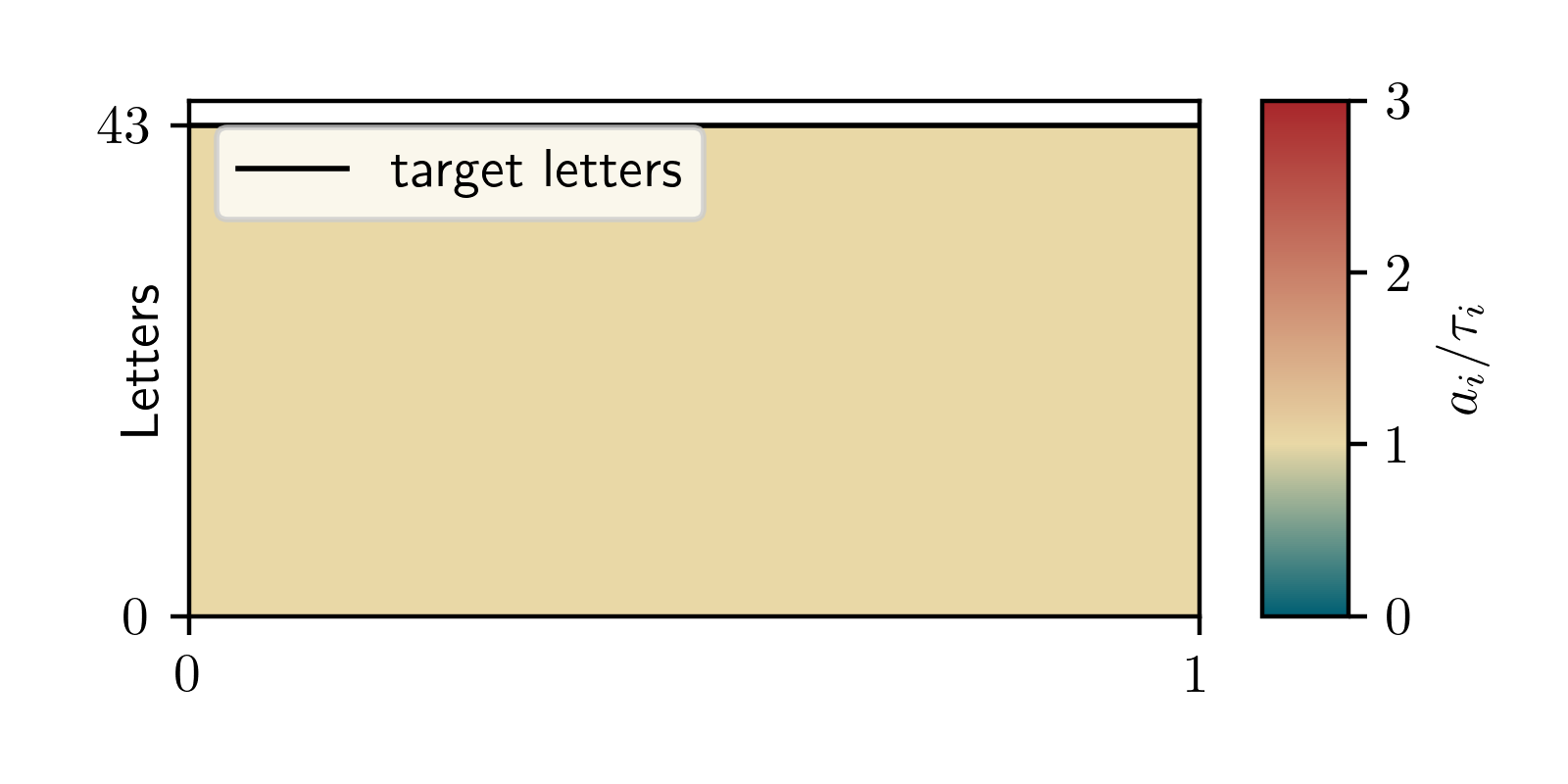}
        \caption{\colgen ($t_G\!=\!1$)}
        \label{fig:results_Saarland_Large_column_generation}
    \end{subfigure}
    \begin{subfigure}{0.32\textwidth}
        \includegraphics[draft=\draft, width=\linewidth]{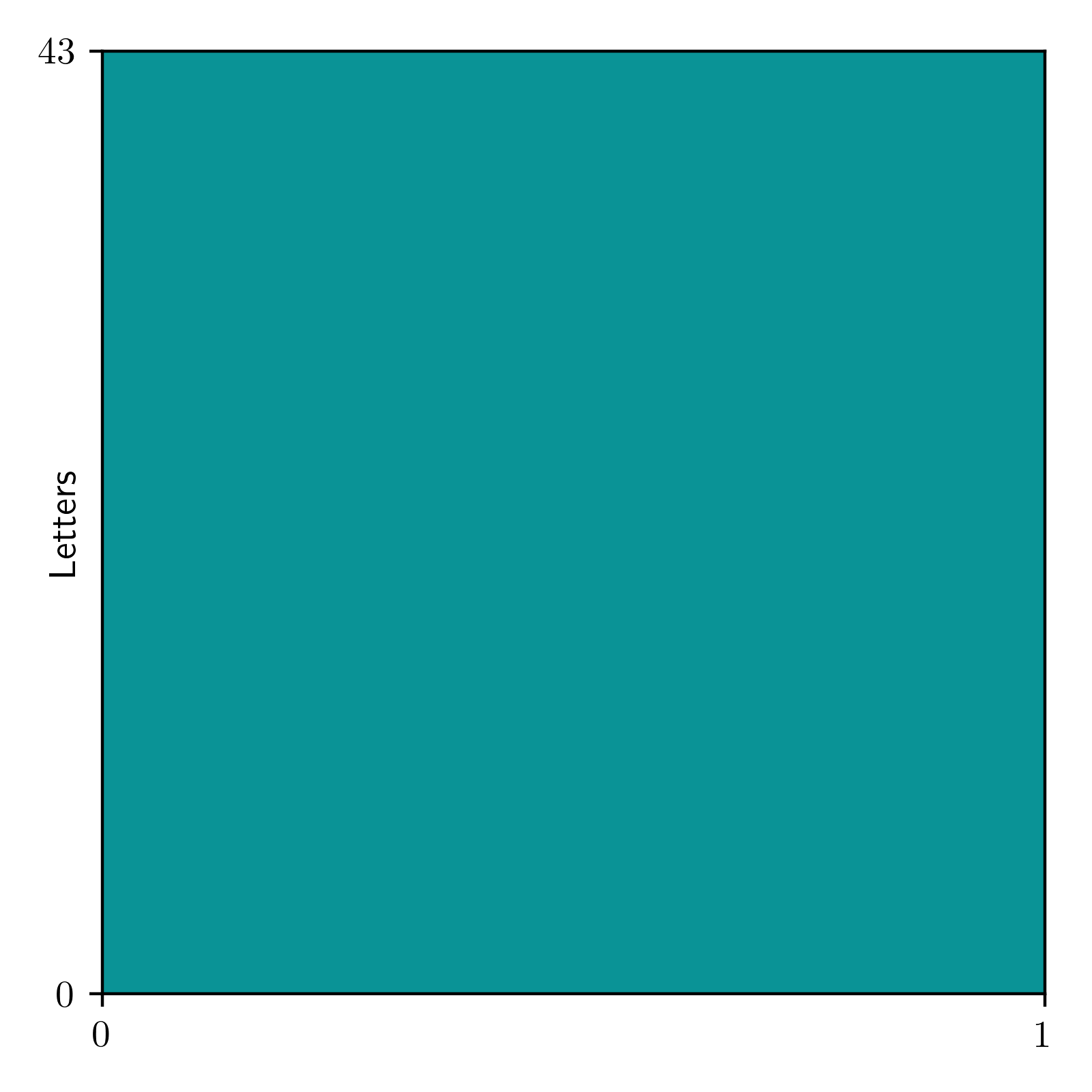}
        \includegraphics[draft=\draft, width=\linewidth]{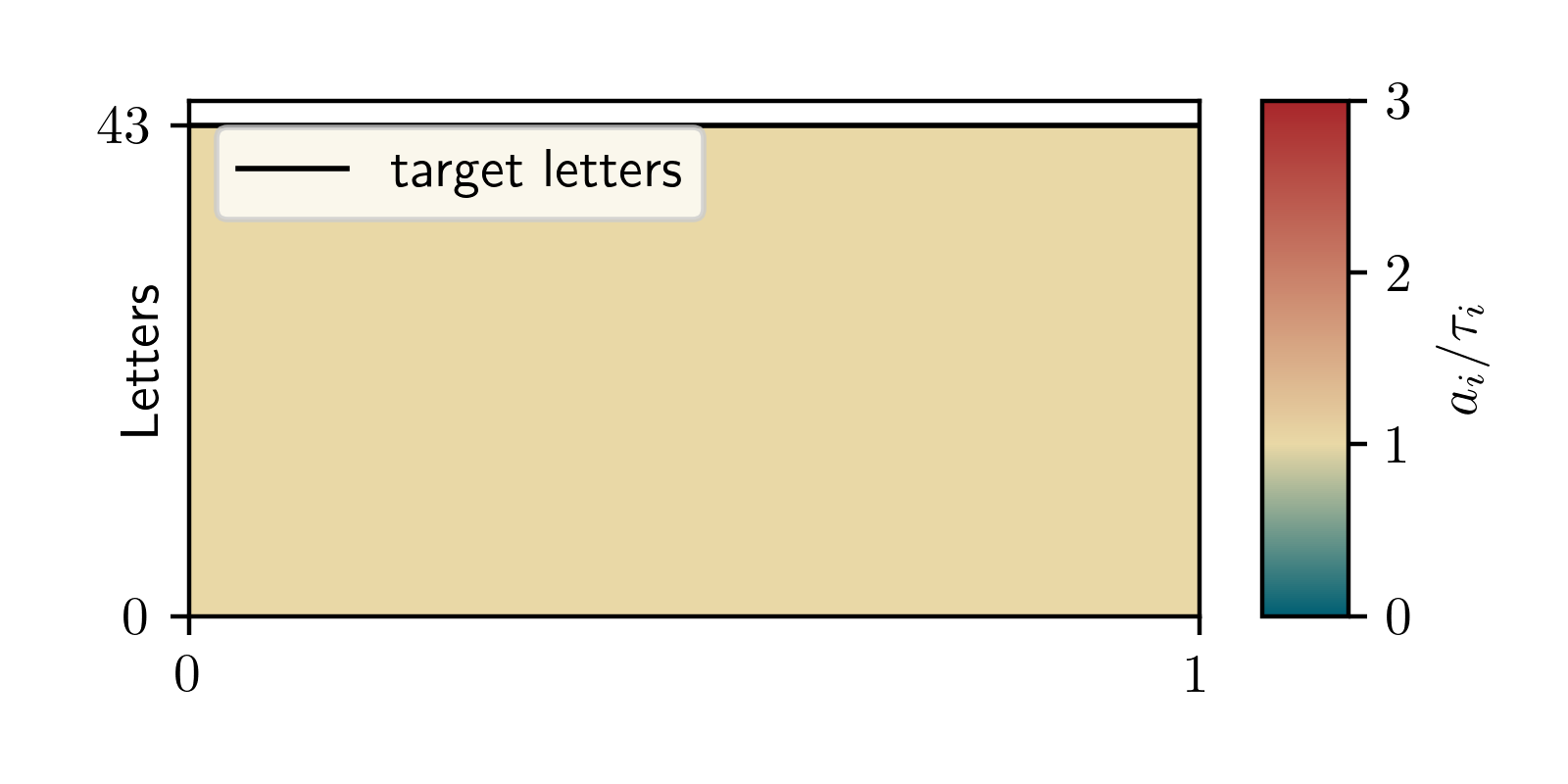}
        \caption{\buckets ($t_G = 1$)}
        \label{fig:results_Saarland_Large_greedy_bucket_fill}
    \end{subfigure}
    \caption{Large municipalities of Saarland ($\ell_G = 43$)}
    \label{fig:results_Saarland_Large}
\end{figure} 

\begin{figure}
    \centering
    \begin{subfigure}{0.32\textwidth}
        \includegraphics[draft=\draft, width=\linewidth]{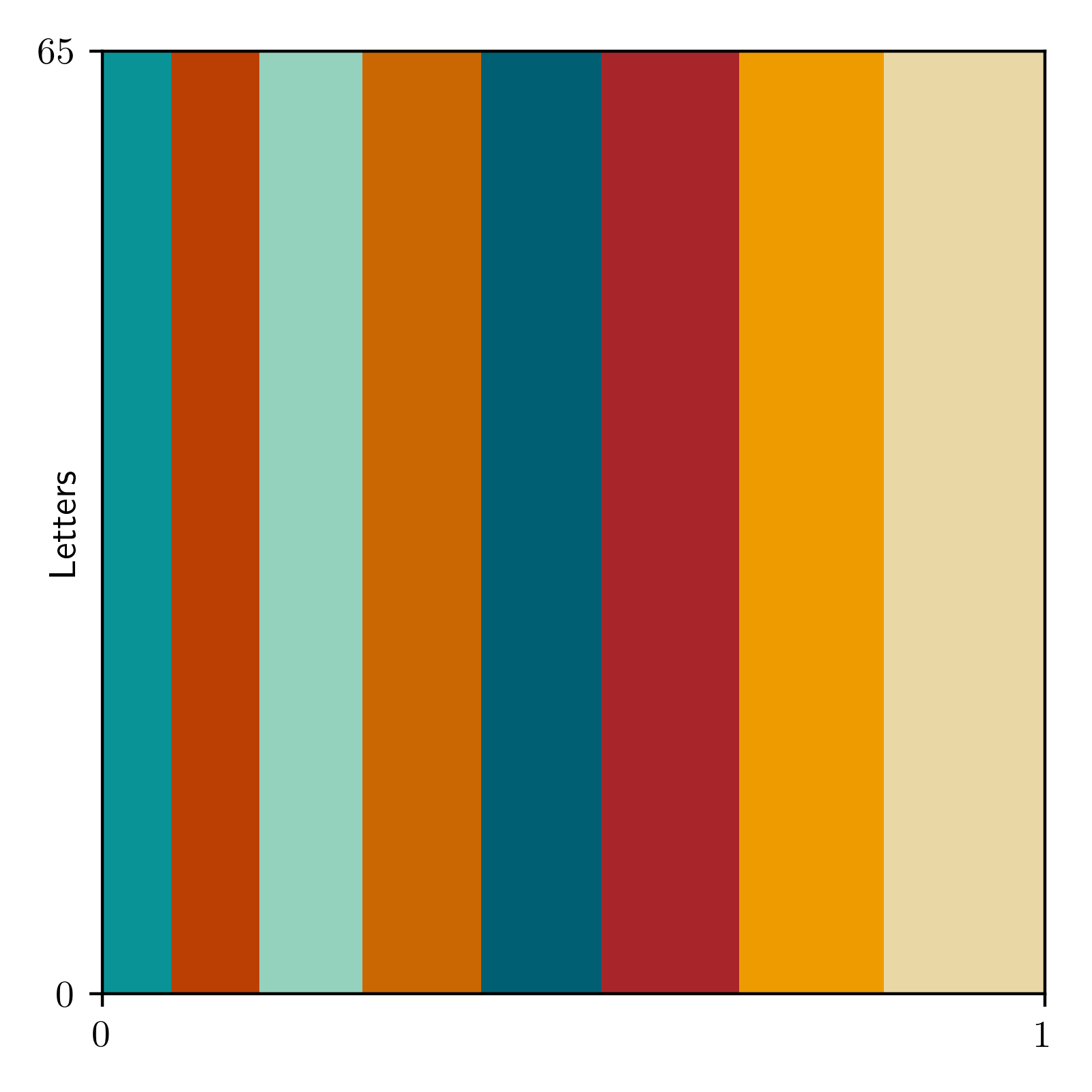}
        \includegraphics[draft=\draft, width=\linewidth]{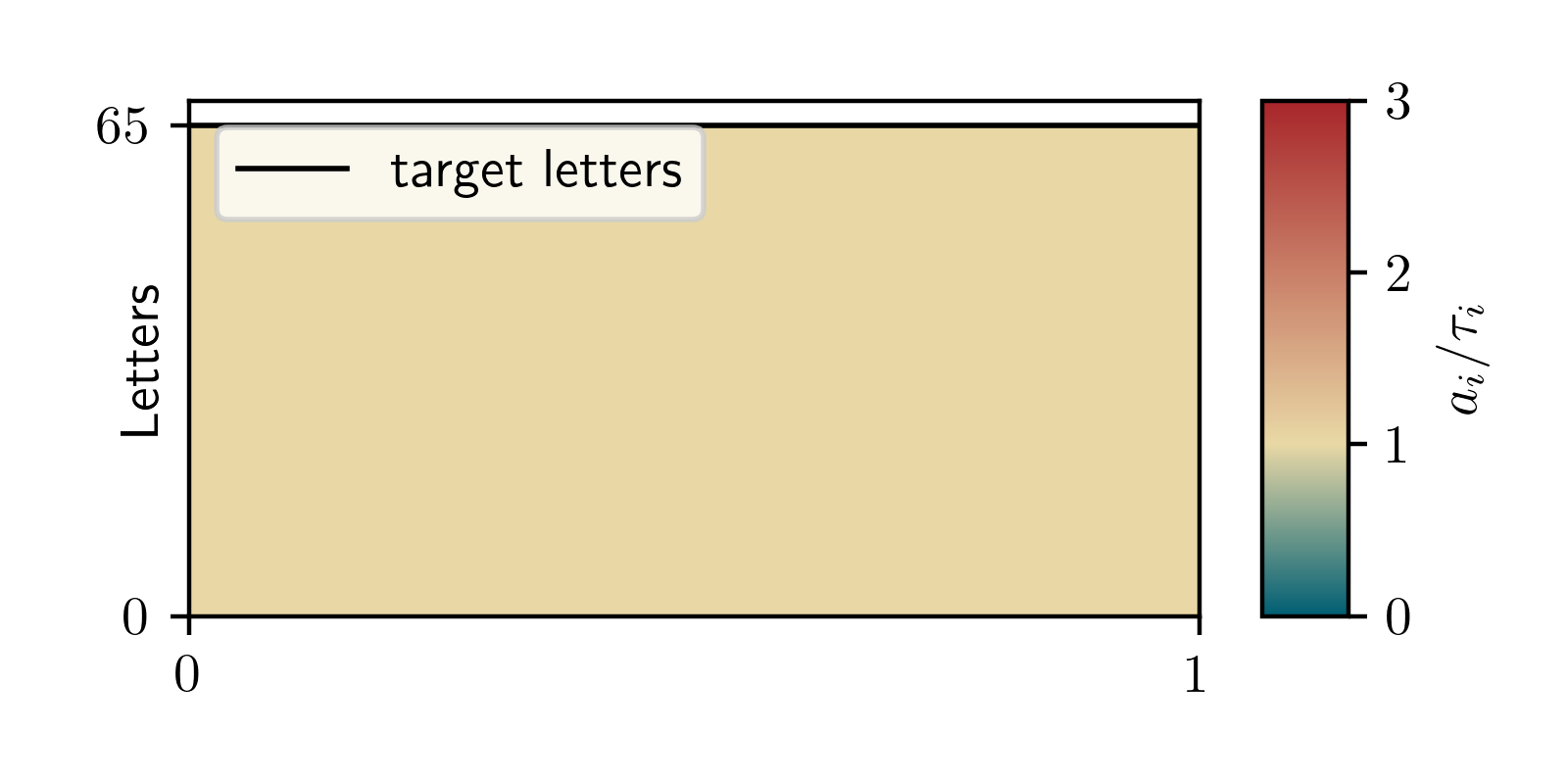}
        \caption{\greq ($t_G = 1$)}
        \label{fig:results_Saarland_Medium_greedy_equal}
    \end{subfigure}
    \begin{subfigure}{0.32\textwidth}
        \includegraphics[draft=\draft, width=\linewidth]{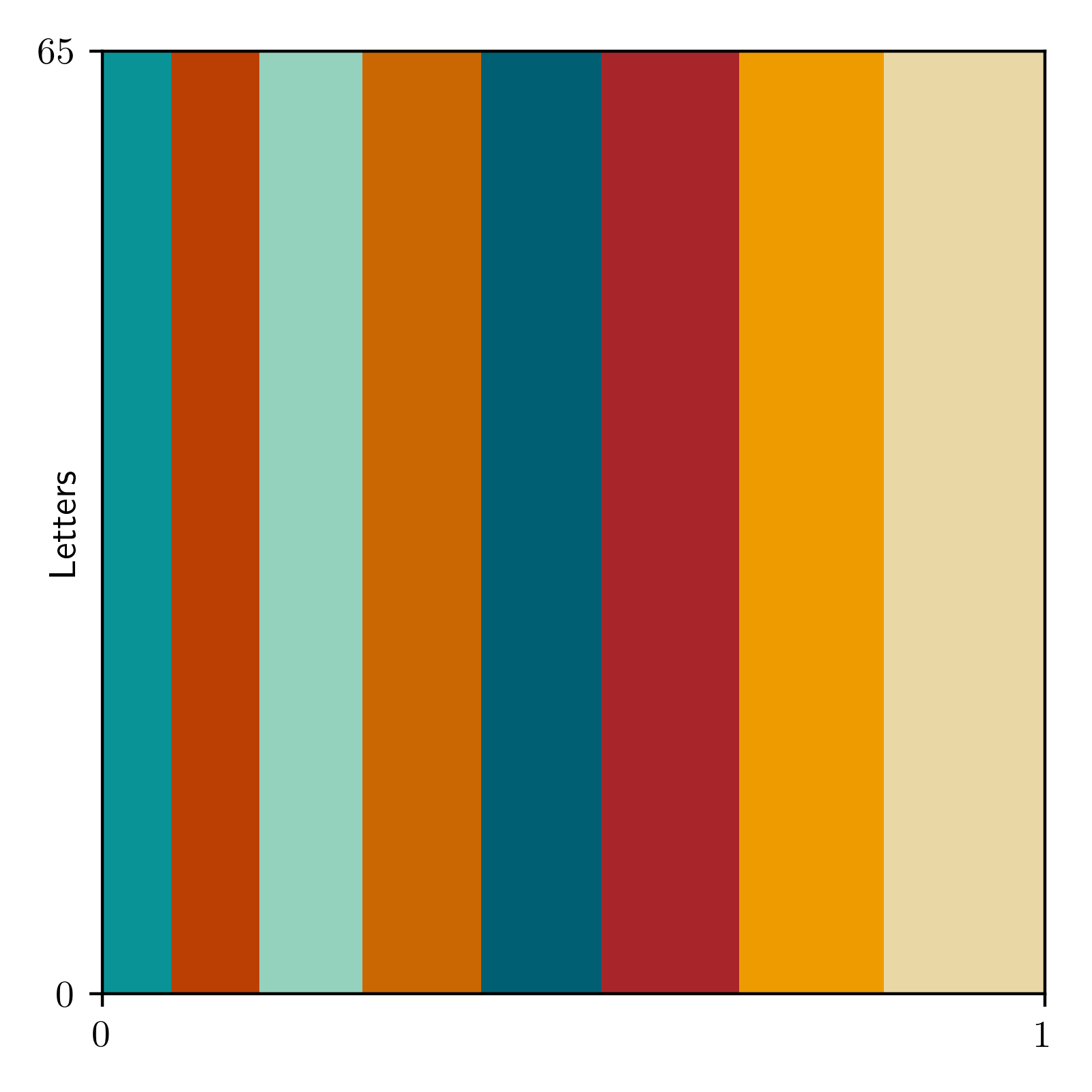}
        \includegraphics[draft=\draft, width=\linewidth]{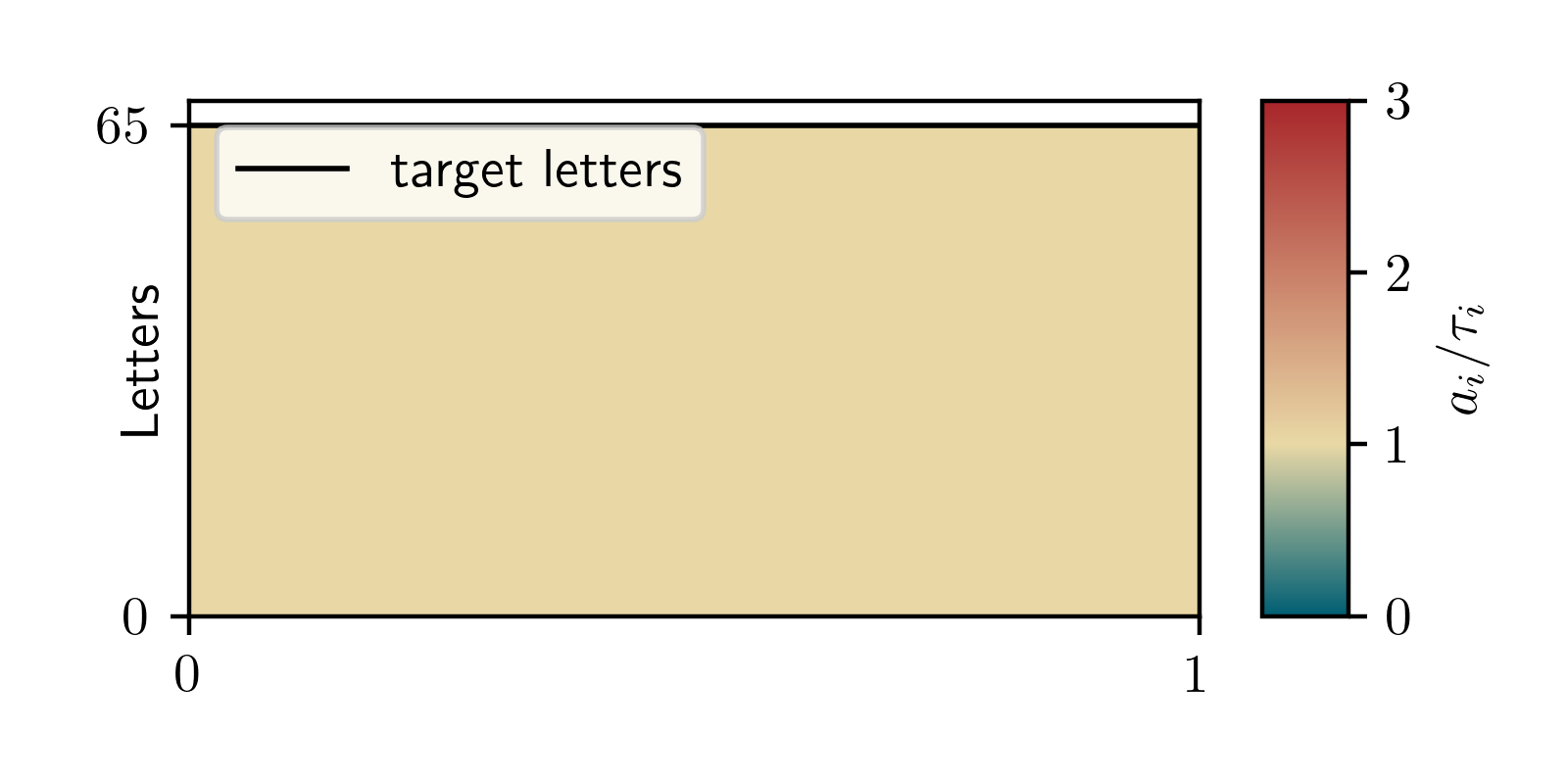}
        \caption{\colgen ($t_G\!=\!1$)}
        \label{fig:results_Saarland_Medium_column_generation}
    \end{subfigure}
    \begin{subfigure}{0.32\textwidth}
        \includegraphics[draft=\draft, width=\linewidth]{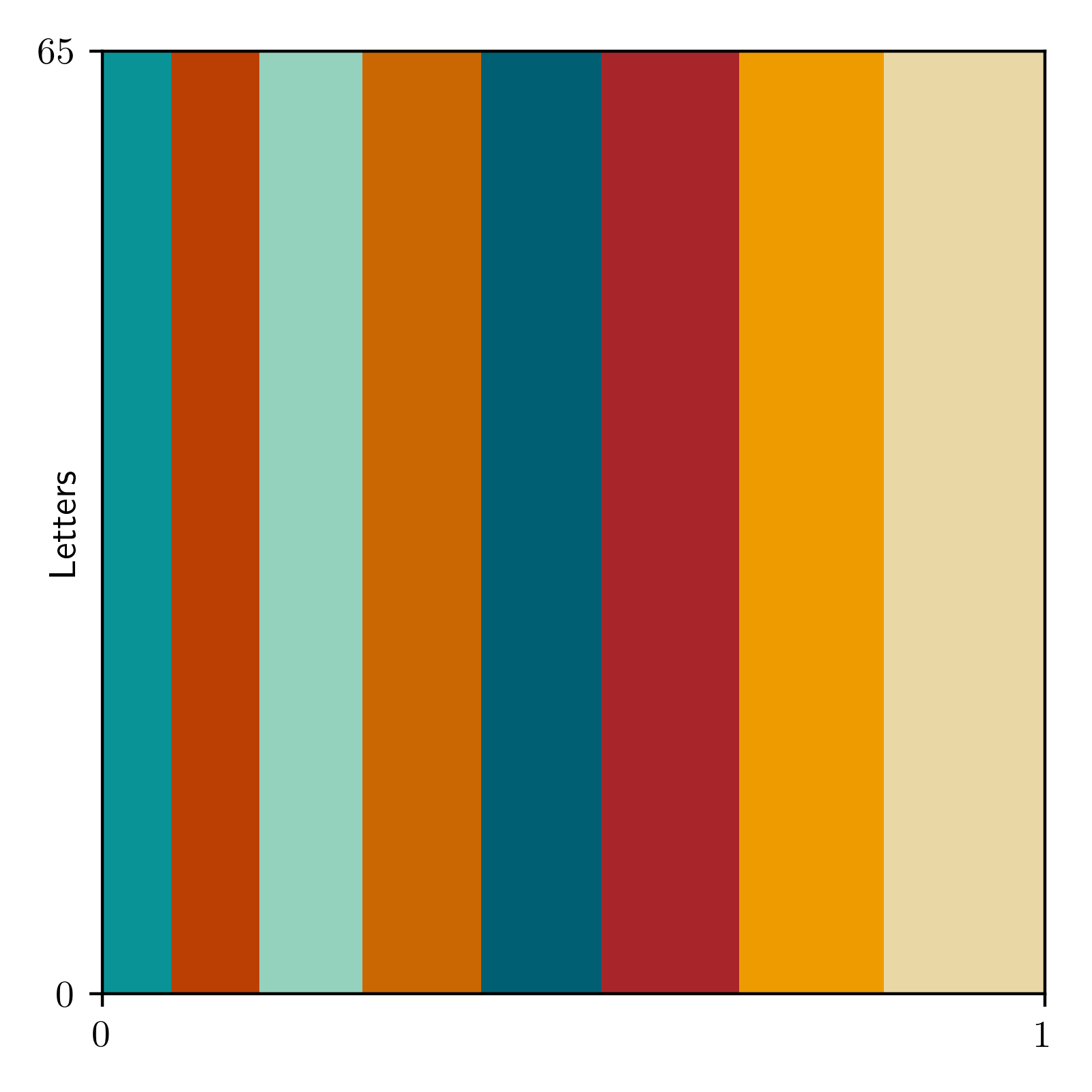}
        \includegraphics[draft=\draft, width=\linewidth]{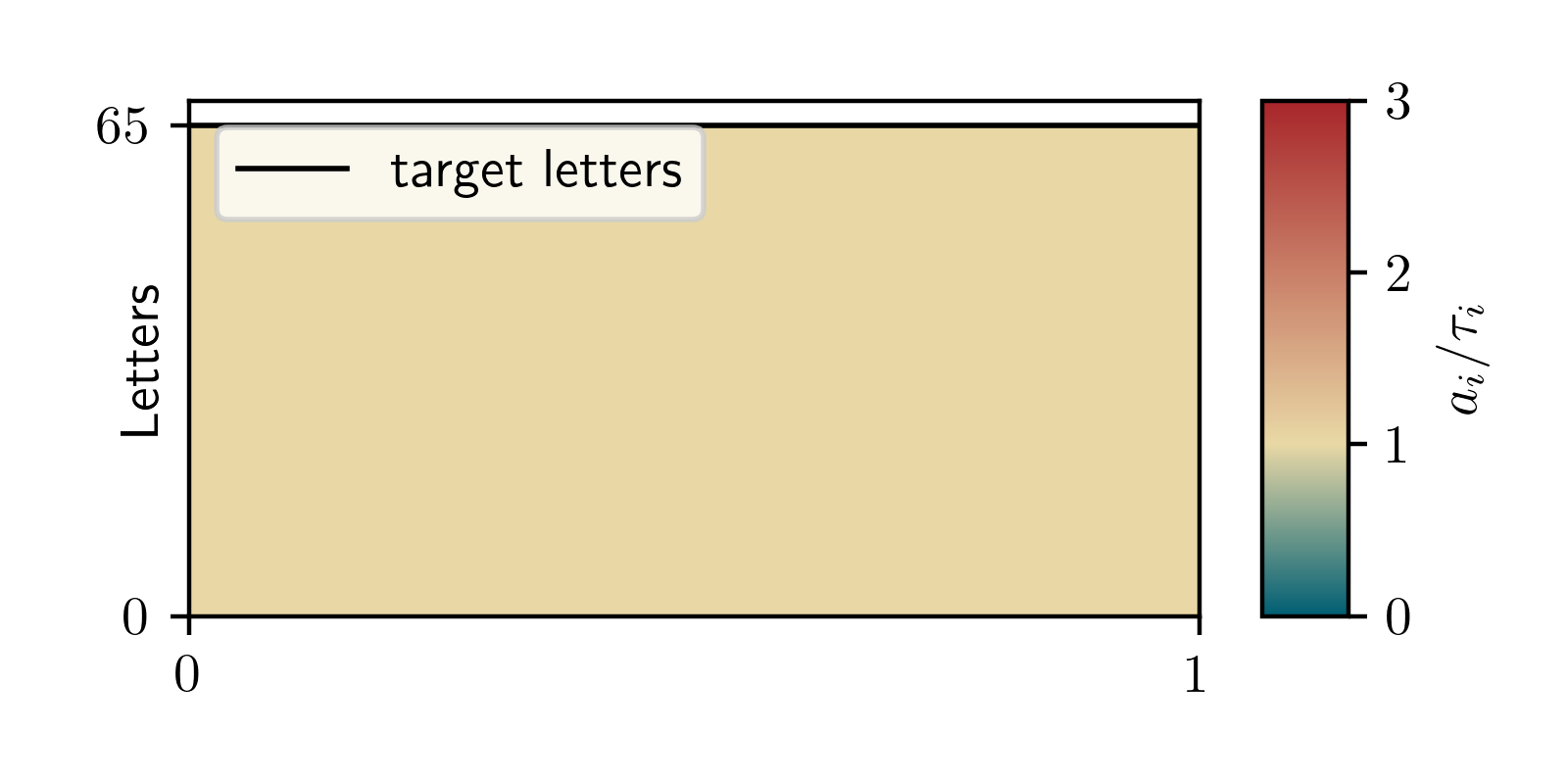}
        \caption{\buckets ($t_G = 1$)}
        \label{fig:results_Saarland_Medium_greedy_bucket_fill}
    \end{subfigure}
    \caption{Medium municipalities of Saarland ($\ell_G = 65$)}
    \label{fig:results_Saarland_Medium}
\end{figure} 

\begin{figure}
    \centering
    \begin{subfigure}{0.32\textwidth}
        \includegraphics[draft=\draft, width=\linewidth]{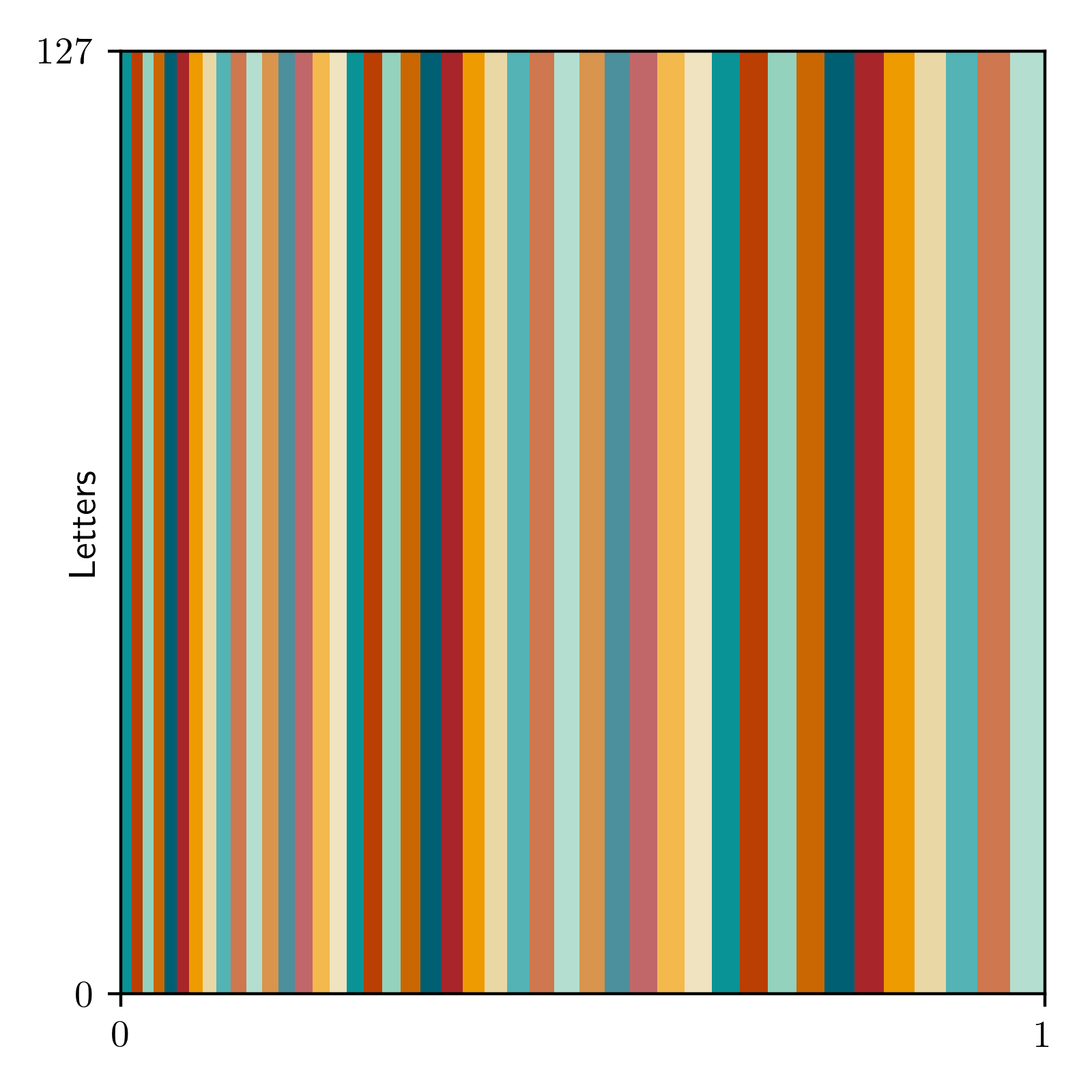}
        \includegraphics[draft=\draft, width=\linewidth]{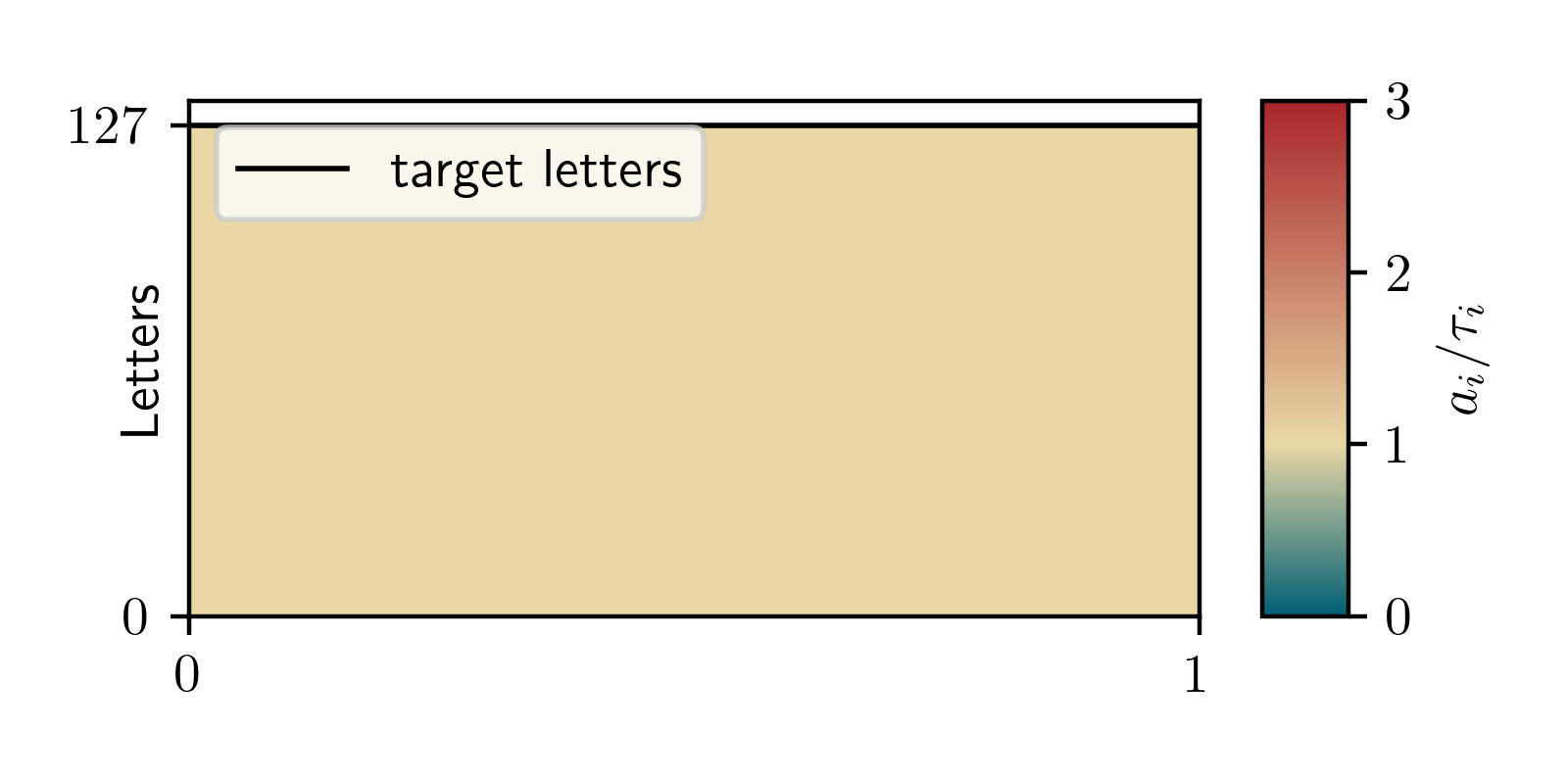}
        \caption{\greq ($t_G = 1$)}
        \label{fig:results_Saarland_Small_greedy_equal}
    \end{subfigure}
    \begin{subfigure}{0.32\textwidth}
        \includegraphics[draft=\draft, width=\linewidth]{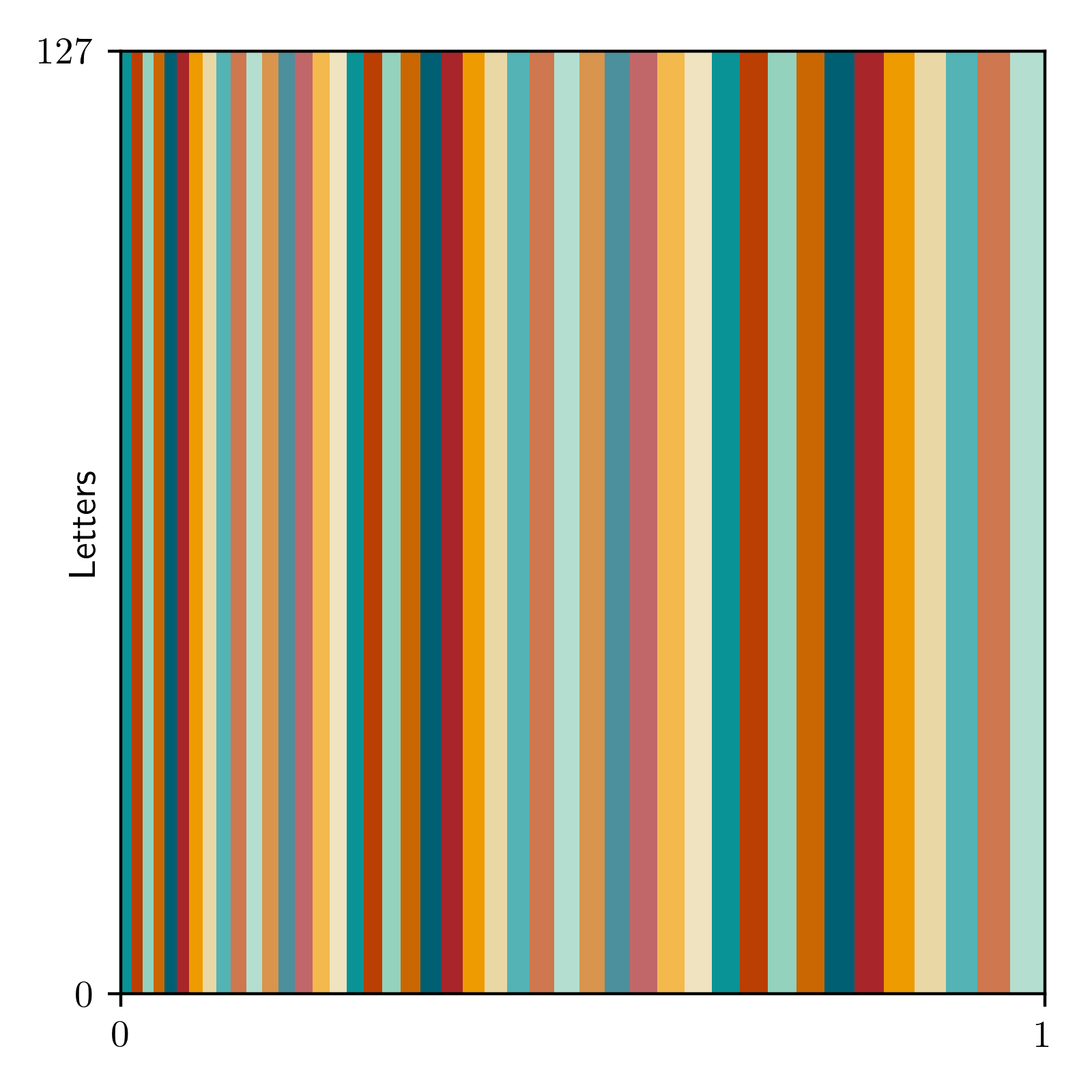}
        \includegraphics[draft=\draft, width=\linewidth]{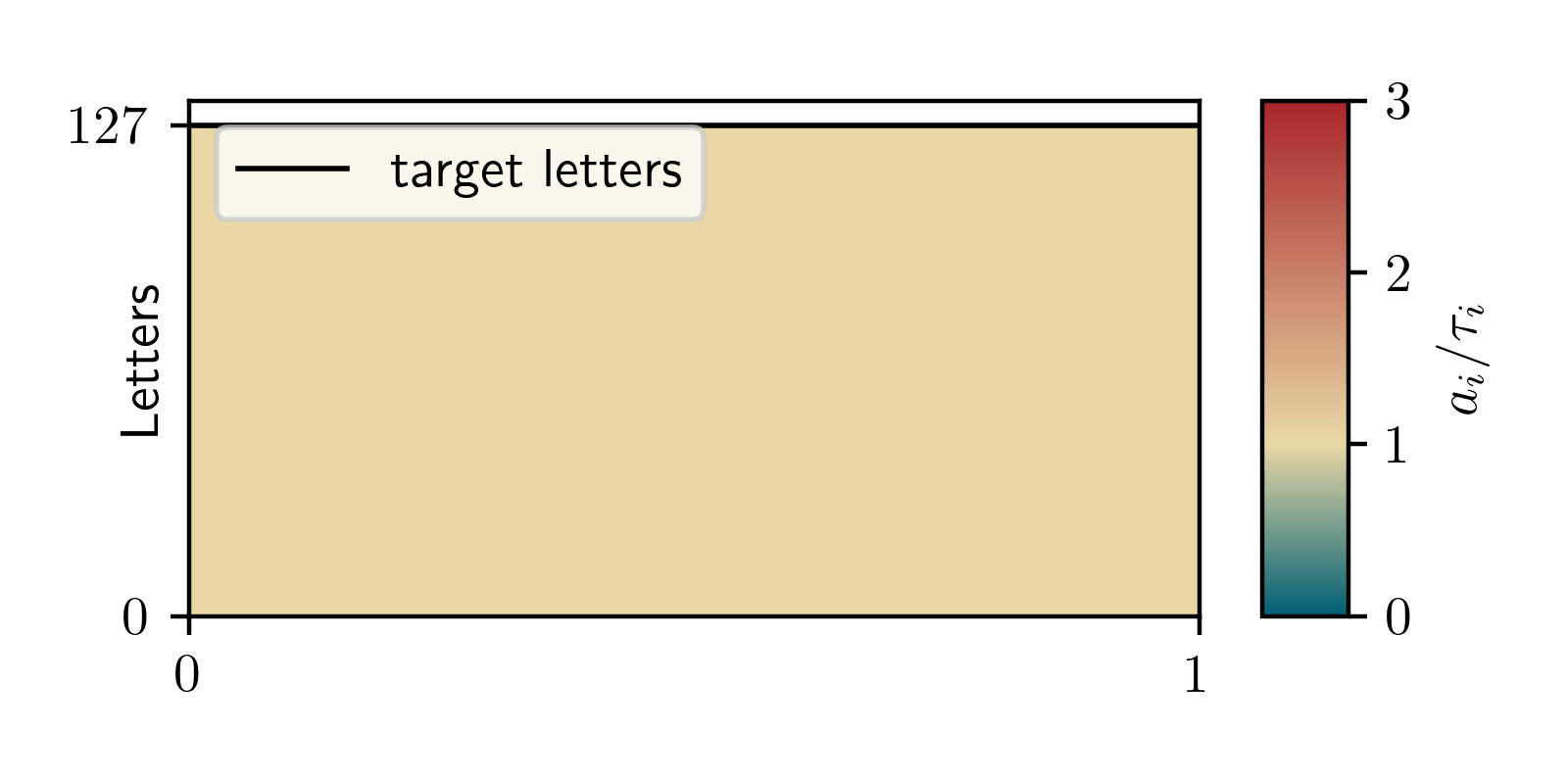}
        \caption{\colgen ($t_G\!=\!1$)}
        \label{fig:results_Saarland_Small_column_generation}
    \end{subfigure}
    \begin{subfigure}{0.32\textwidth}
        \includegraphics[draft=\draft, width=\linewidth]{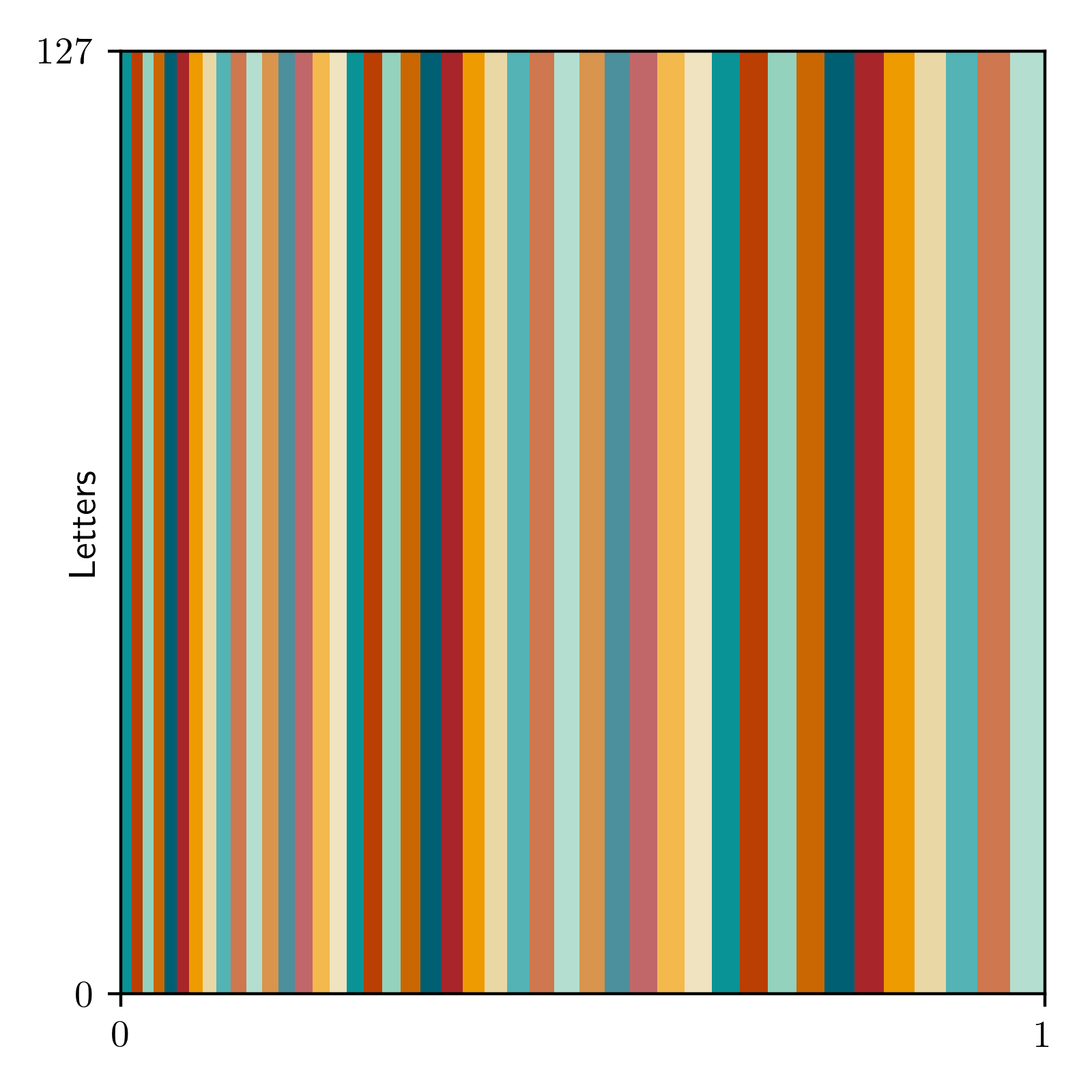}
        \includegraphics[draft=\draft, width=\linewidth]{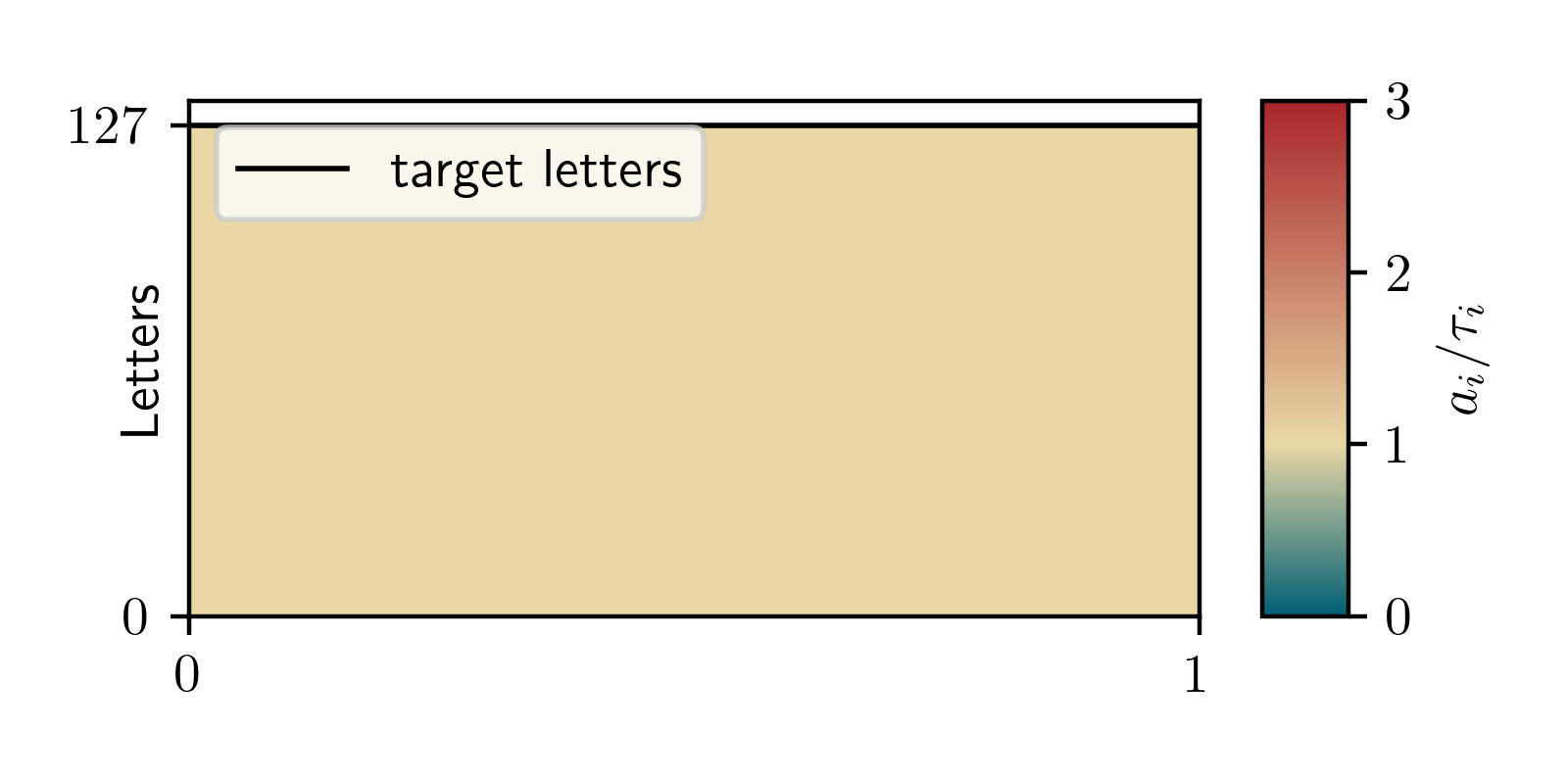}
        \caption{\buckets ($t_G = 1$)}
        \label{fig:results_Saarland_Small_greedy_bucket_fill}
    \end{subfigure}
    \caption{Small municipalities of Saarland ($\ell_G = 127$)}
    \label{fig:results_Saarland_Small}
\end{figure} 

\begin{figure}
    \centering
    \begin{subfigure}{0.32\textwidth}
        \includegraphics[draft=\draft, width=\linewidth]{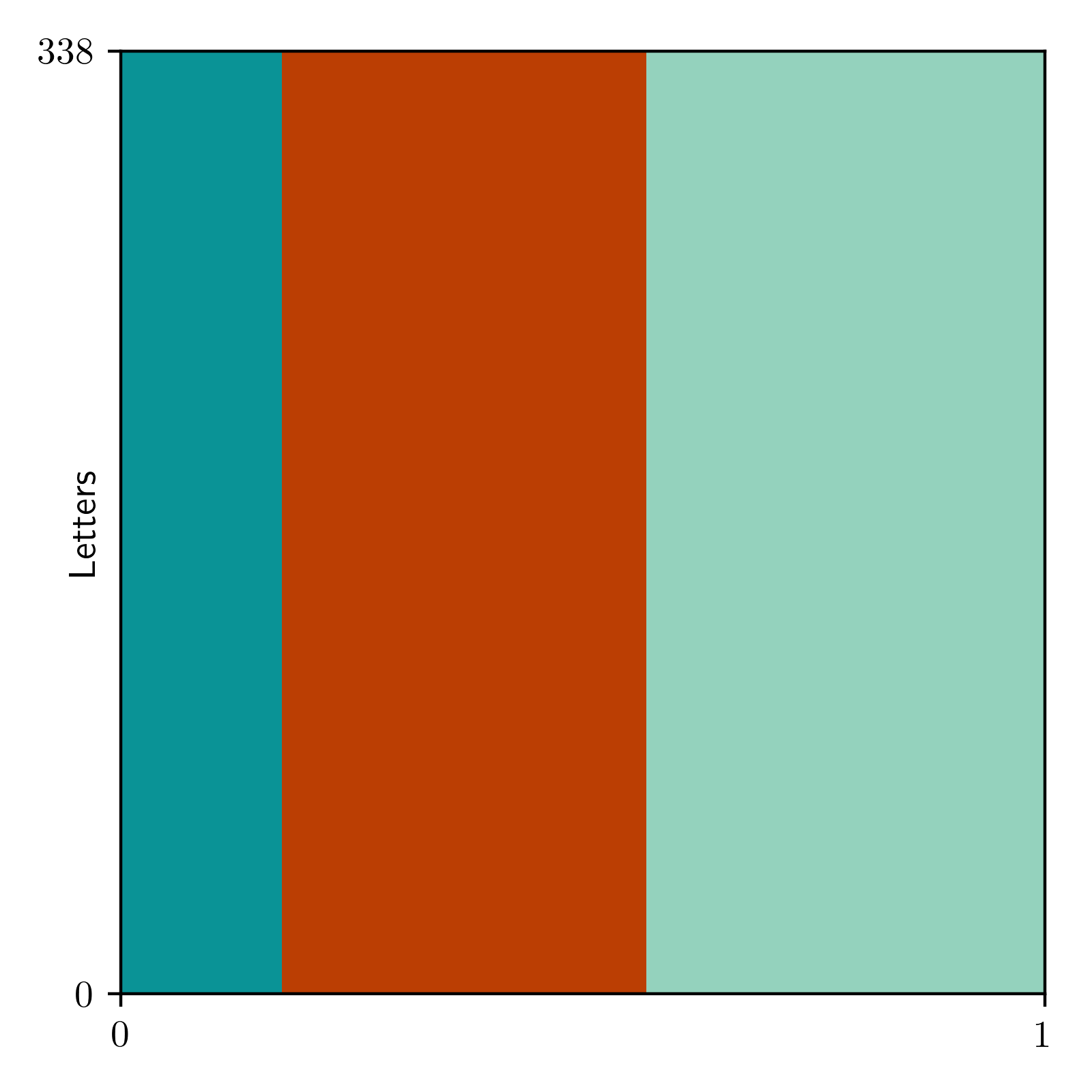}
        \includegraphics[draft=\draft, width=\linewidth]{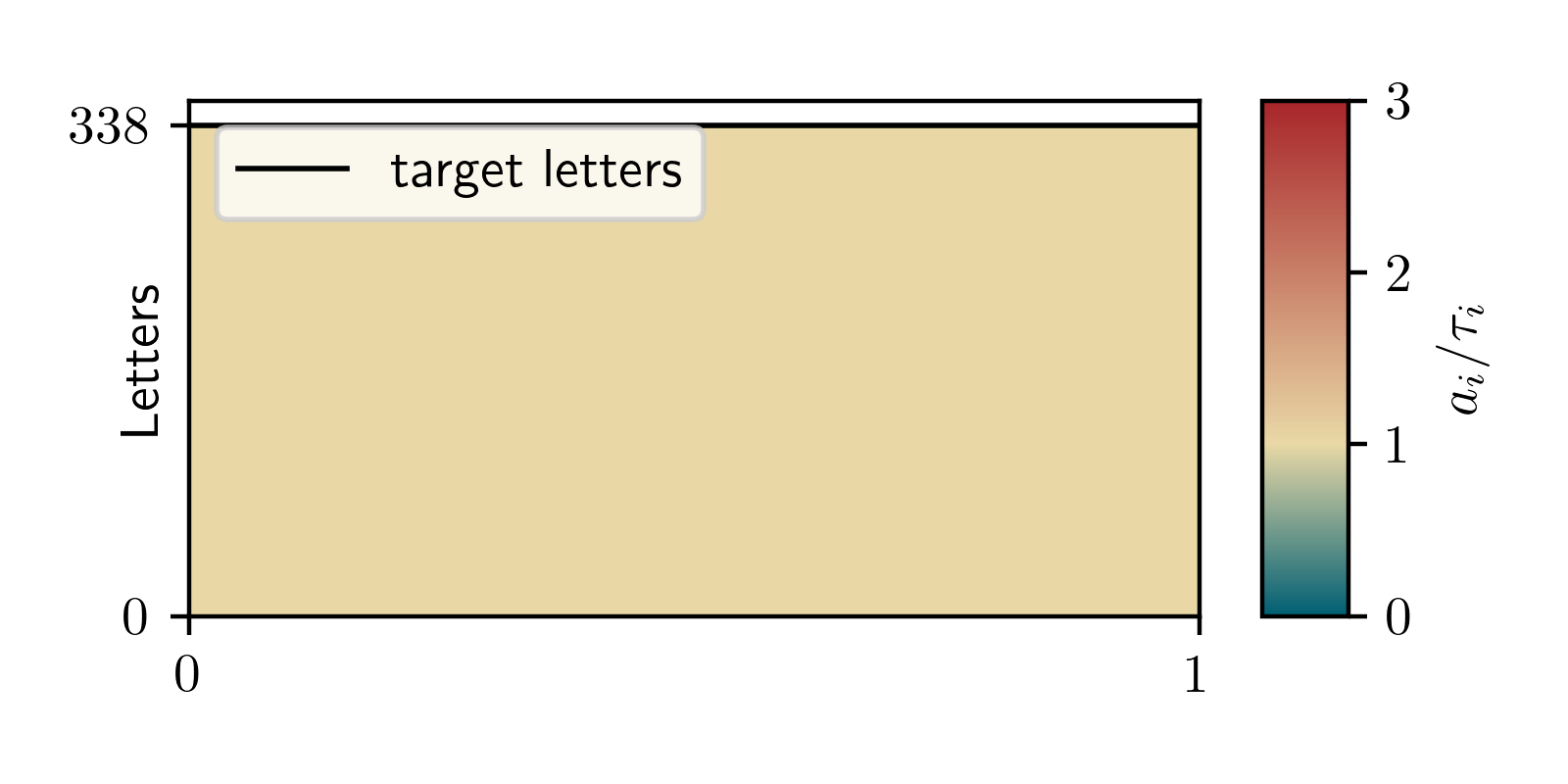}
        \caption{\greq ($t_G = 1$)}
        \label{fig:results_Sachsen_Large_greedy_equal}
    \end{subfigure}
    \begin{subfigure}{0.32\textwidth}
        \includegraphics[draft=\draft, width=\linewidth]{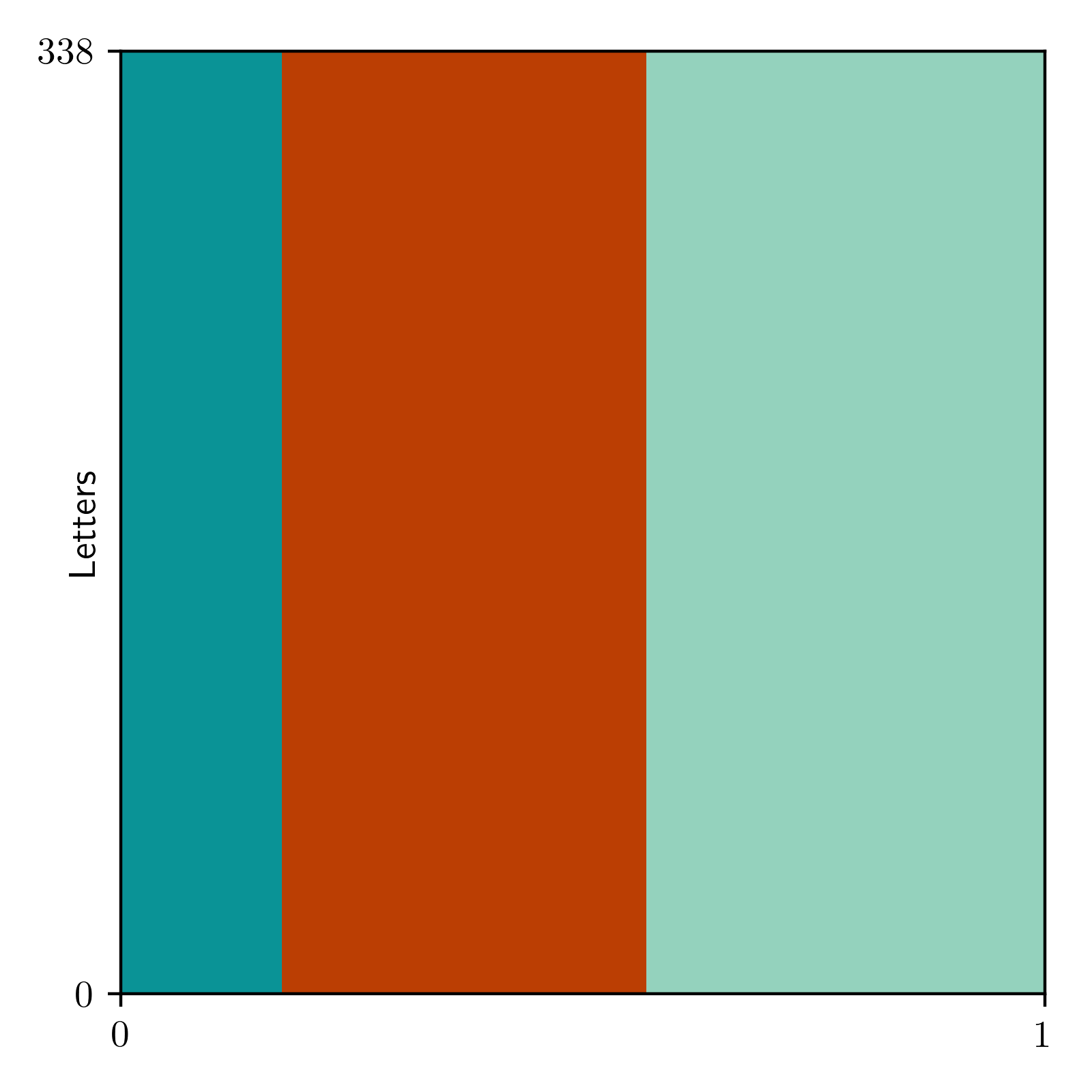}
        \includegraphics[draft=\draft, width=\linewidth]{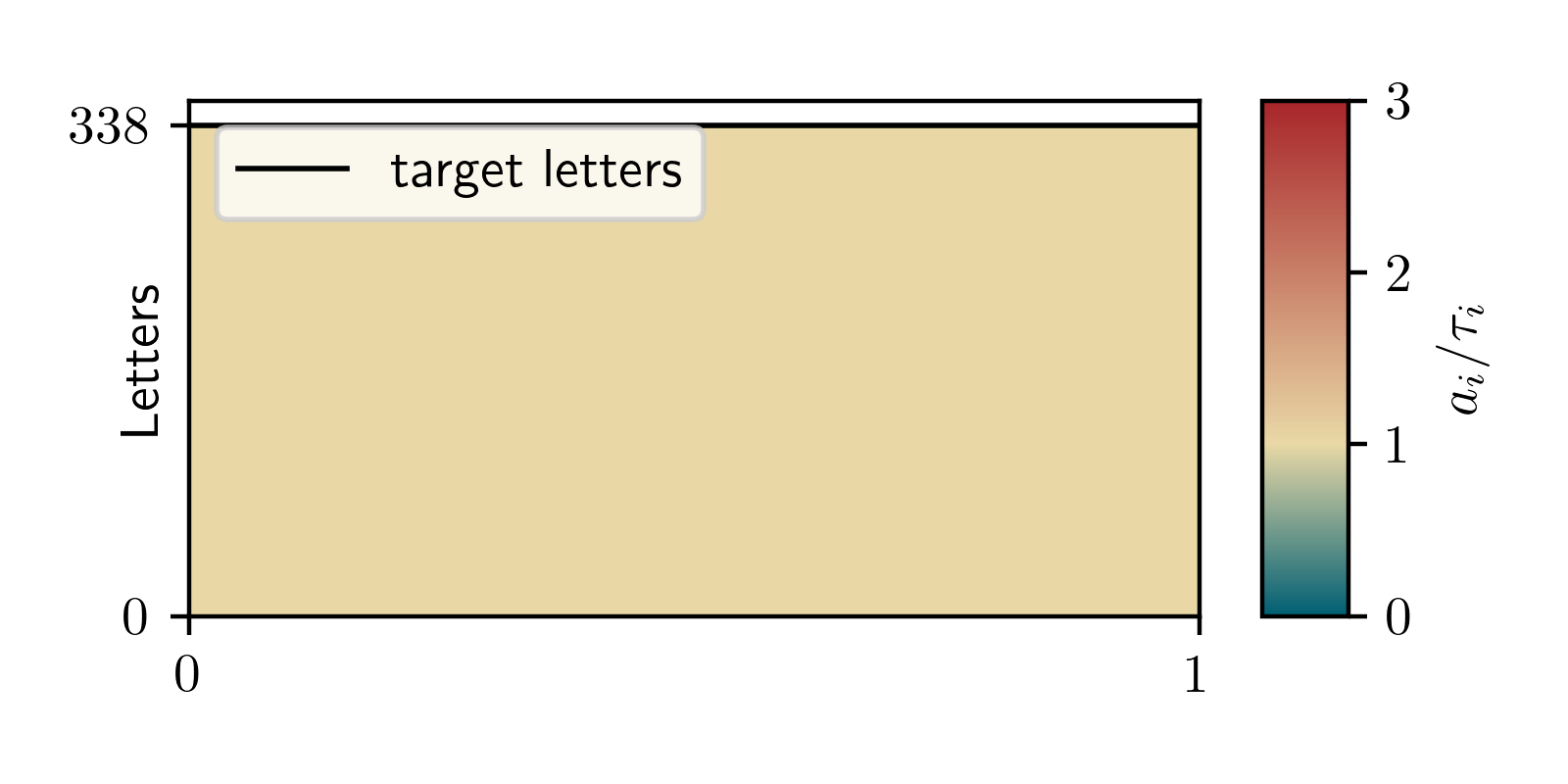}
        \caption{\colgen ($t_G\!=\!1$)}
        \label{fig:results_Sachsen_Large_column_generation}
    \end{subfigure}
    \begin{subfigure}{0.32\textwidth}
        \includegraphics[draft=\draft, width=\linewidth]{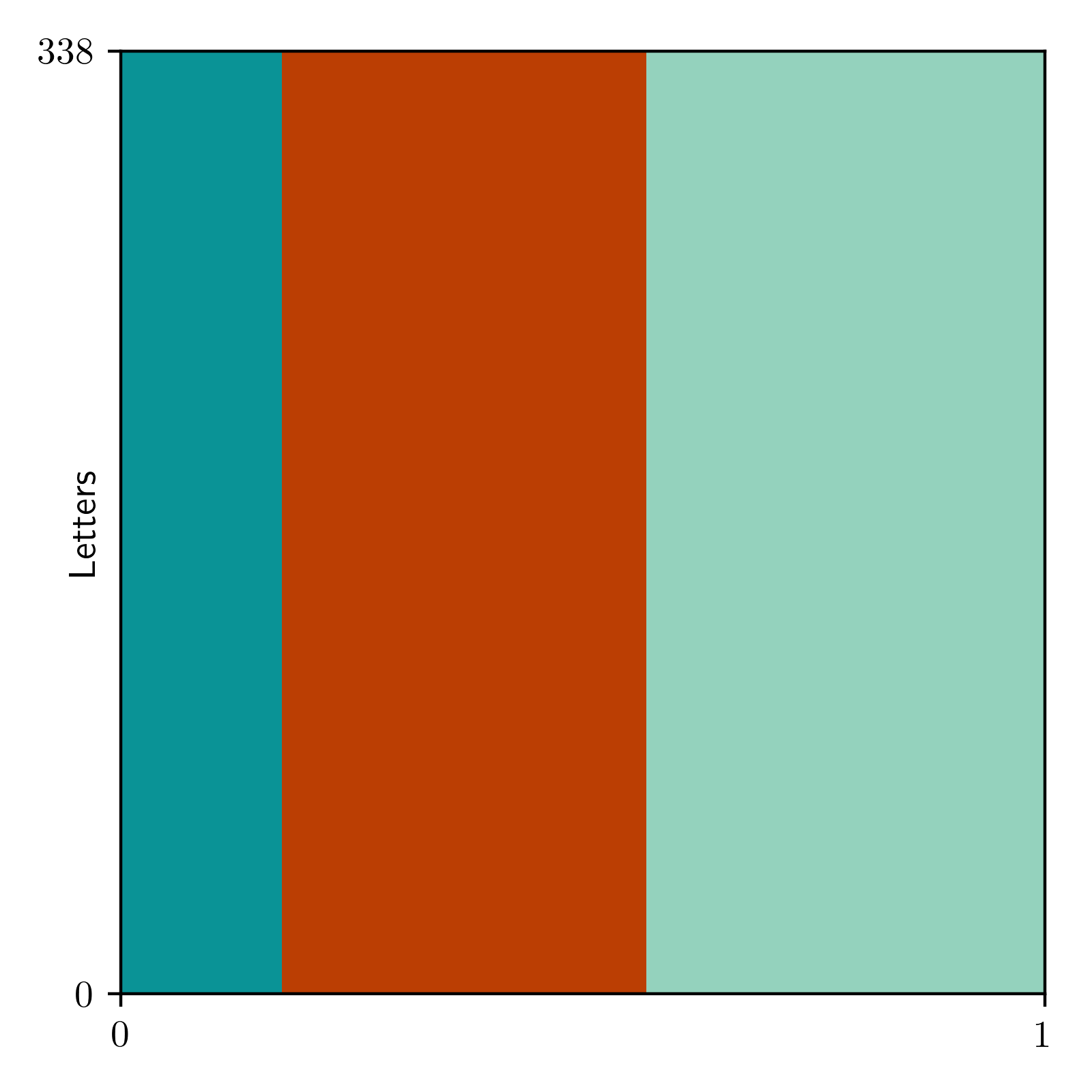}
        \includegraphics[draft=\draft, width=\linewidth]{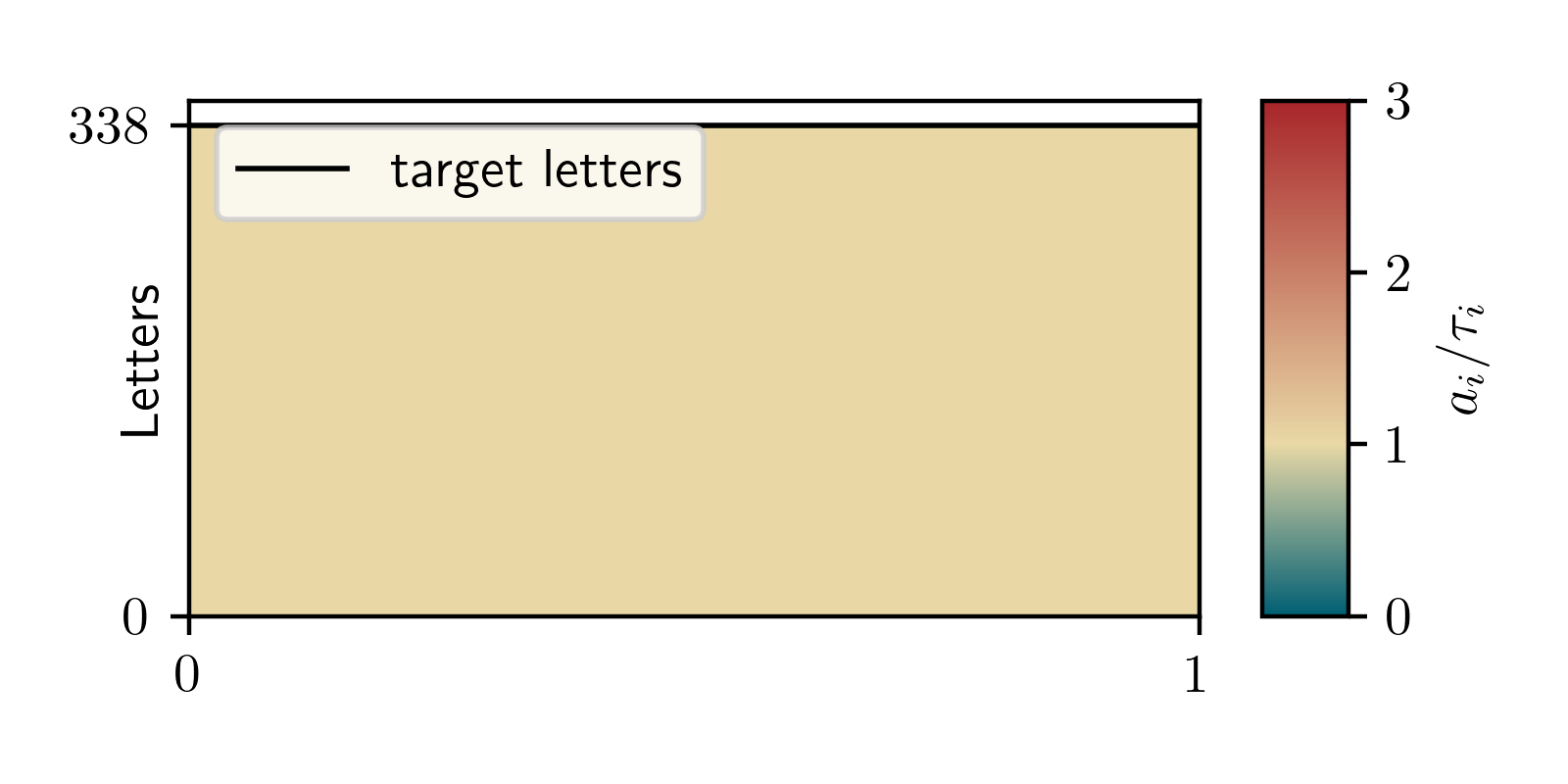}
        \caption{\buckets ($t_G = 1$)}
        \label{fig:results_Sachsen_Large_greedy_bucket_fill}
    \end{subfigure}
    \caption{Large municipalities of Sachsen ($\ell_G = 338$)}
    \label{fig:results_Sachsen_Large}
\end{figure} 

\begin{figure}
    \centering
    \begin{subfigure}{0.32\textwidth}
        \includegraphics[draft=\draft, width=\linewidth]{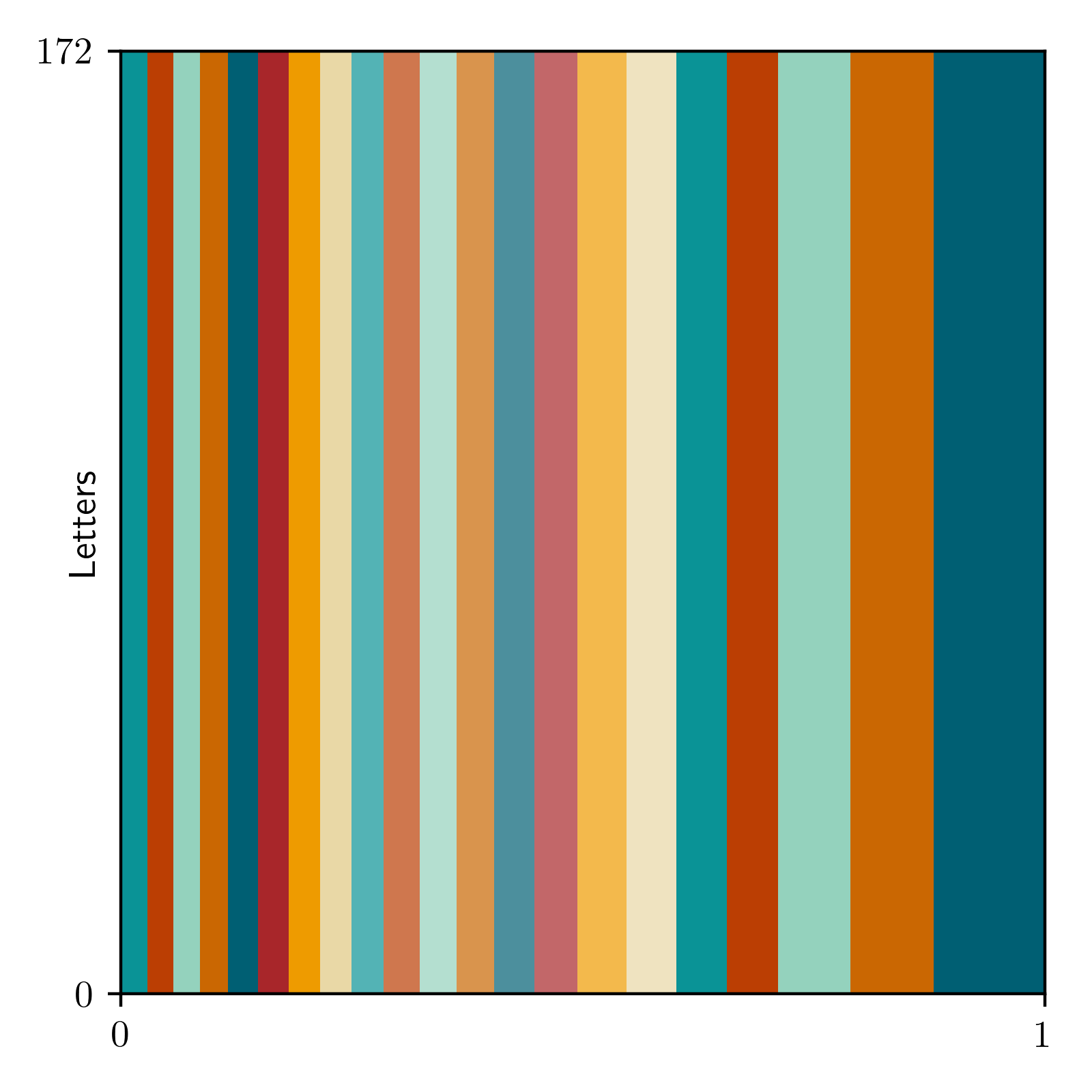}
        \includegraphics[draft=\draft, width=\linewidth]{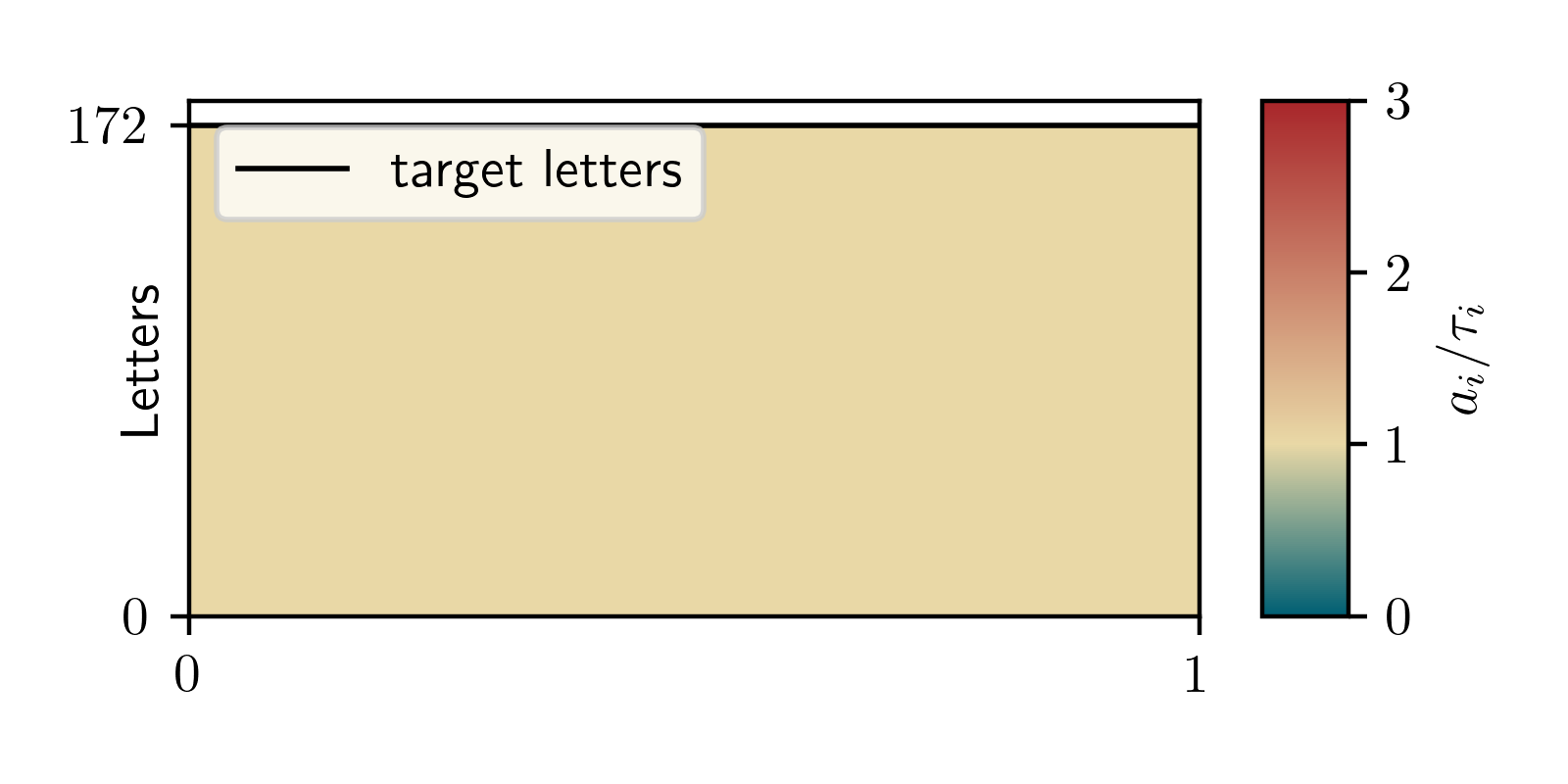}
        \caption{\greq ($t_G = 1$)}
        \label{fig:results_Sachsen_Medium_greedy_equal}
    \end{subfigure}
    \begin{subfigure}{0.32\textwidth}
        \includegraphics[draft=\draft, width=\linewidth]{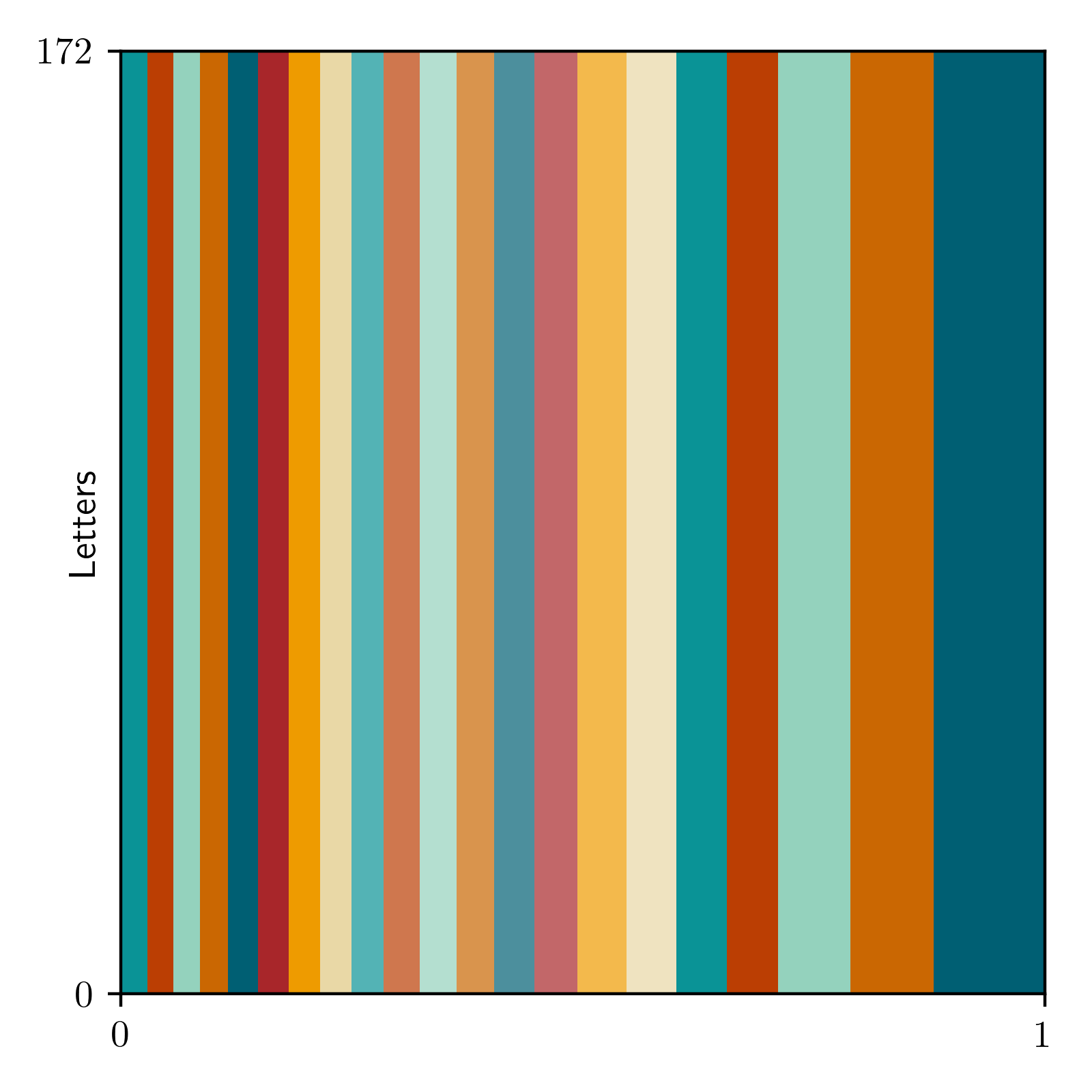}
        \includegraphics[draft=\draft, width=\linewidth]{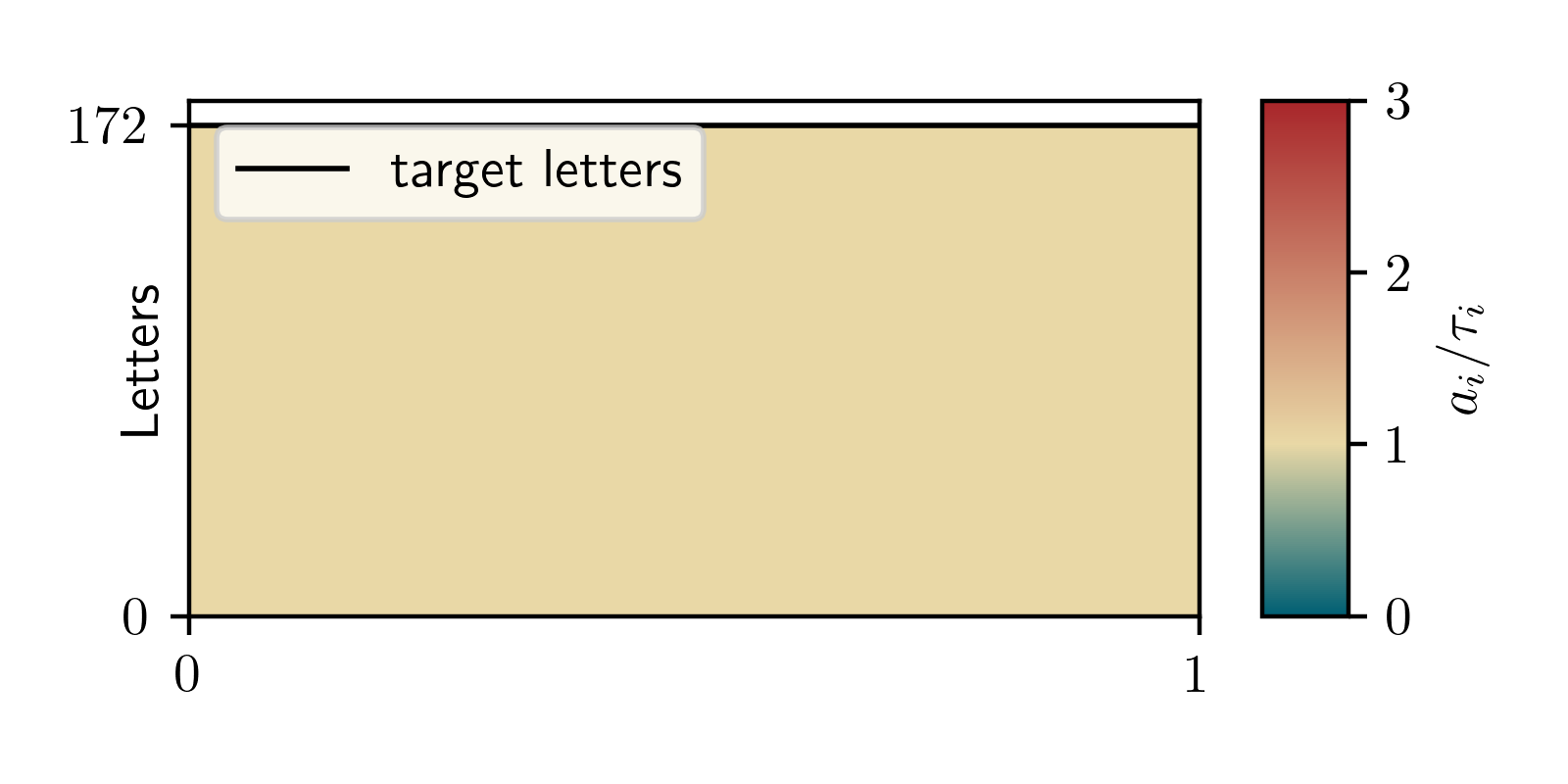}
        \caption{\colgen ($t_G\!=\!1$)}
        \label{fig:results_Sachsen_Medium_column_generation}
    \end{subfigure}
    \begin{subfigure}{0.32\textwidth}
        \includegraphics[draft=\draft, width=\linewidth]{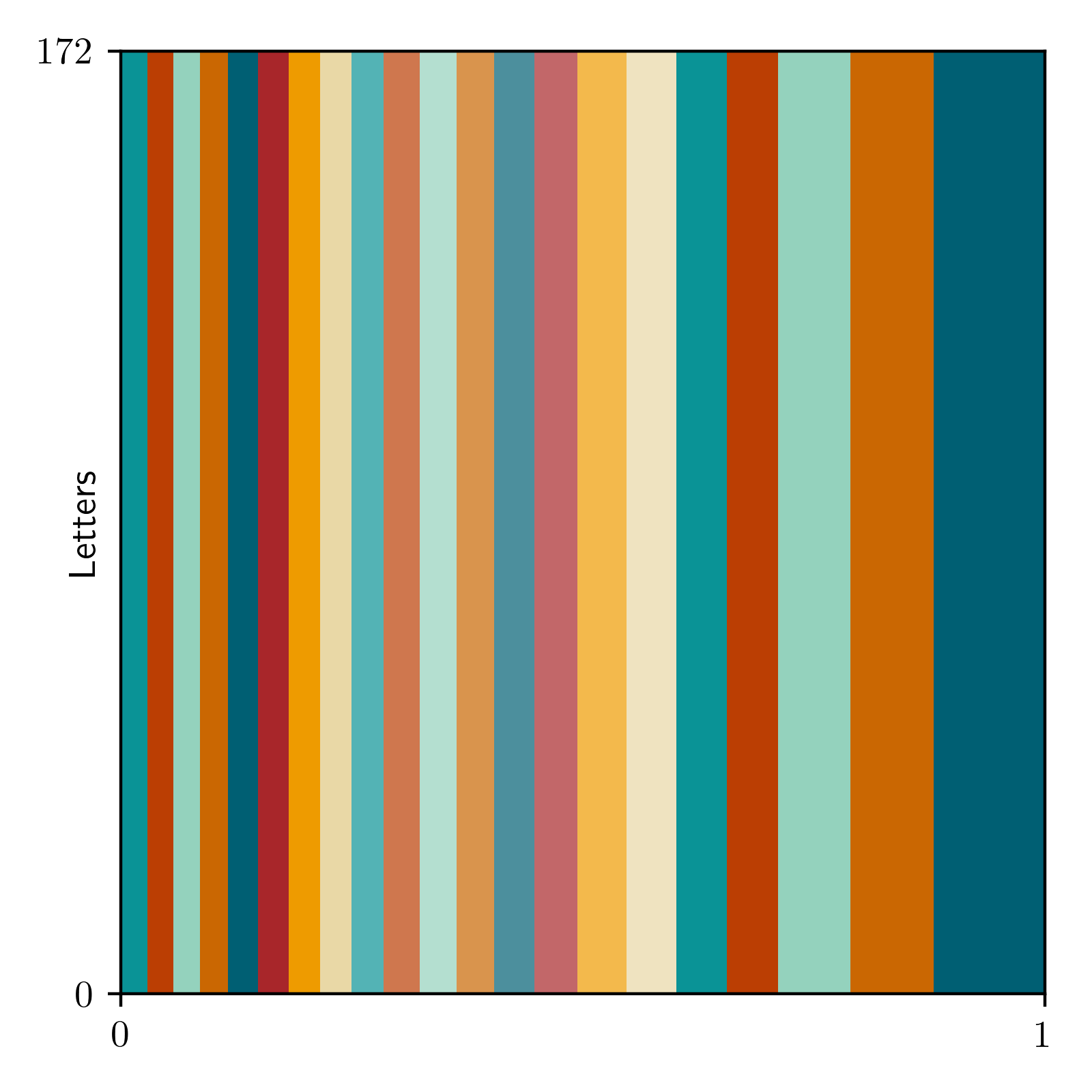}
        \includegraphics[draft=\draft, width=\linewidth]{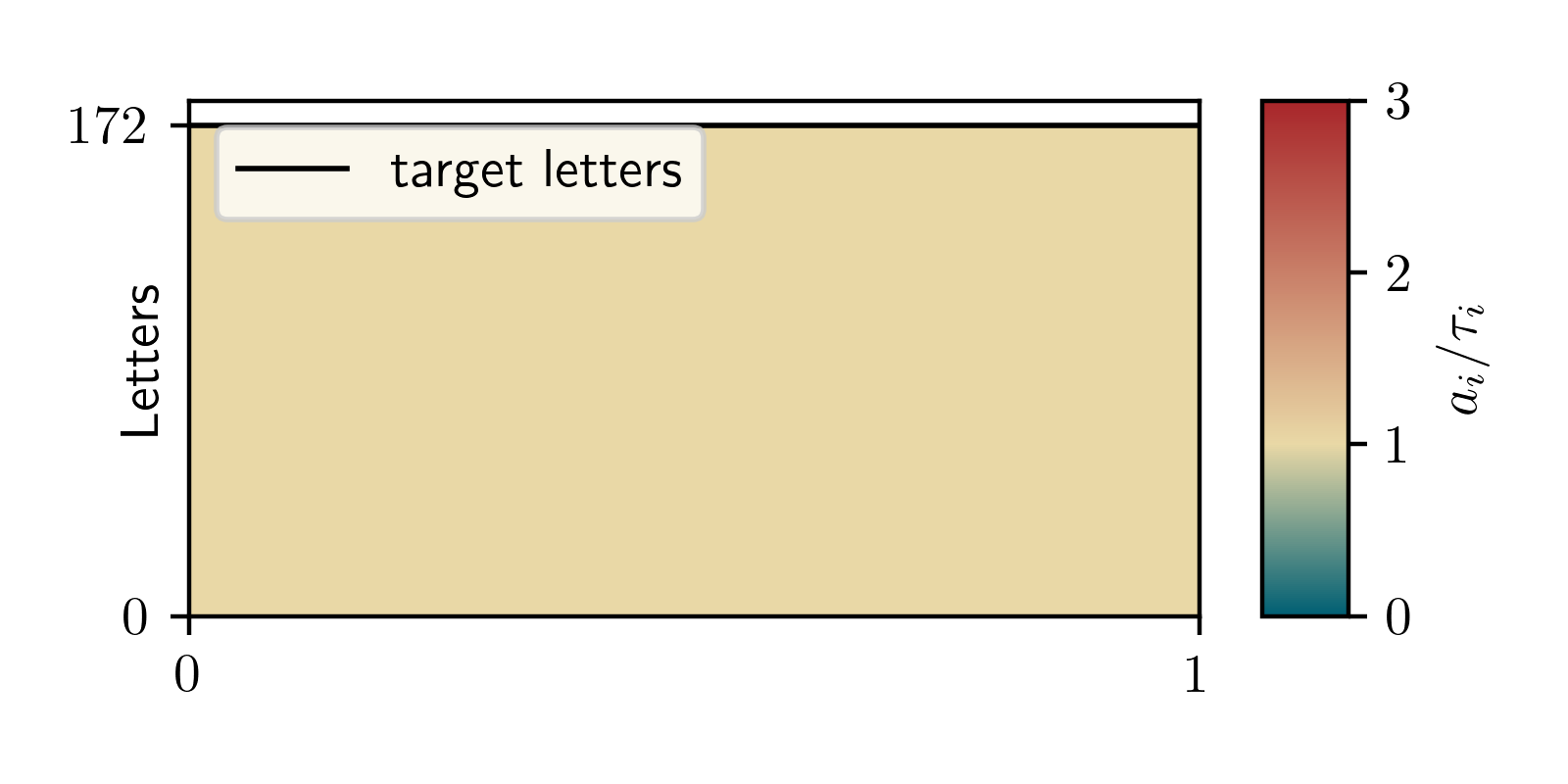}
        \caption{\buckets ($t_G = 1$)}
        \label{fig:results_Sachsen_Medium_greedy_bucket_fill}
    \end{subfigure}
    \caption{Medium municipalities of Sachsen ($\ell_G = 172$)}
    \label{fig:results_Sachsen_Medium}
\end{figure} 

\begin{figure}
    \centering
    \begin{subfigure}{0.32\textwidth}
        \includegraphics[draft=\draft, width=\linewidth]{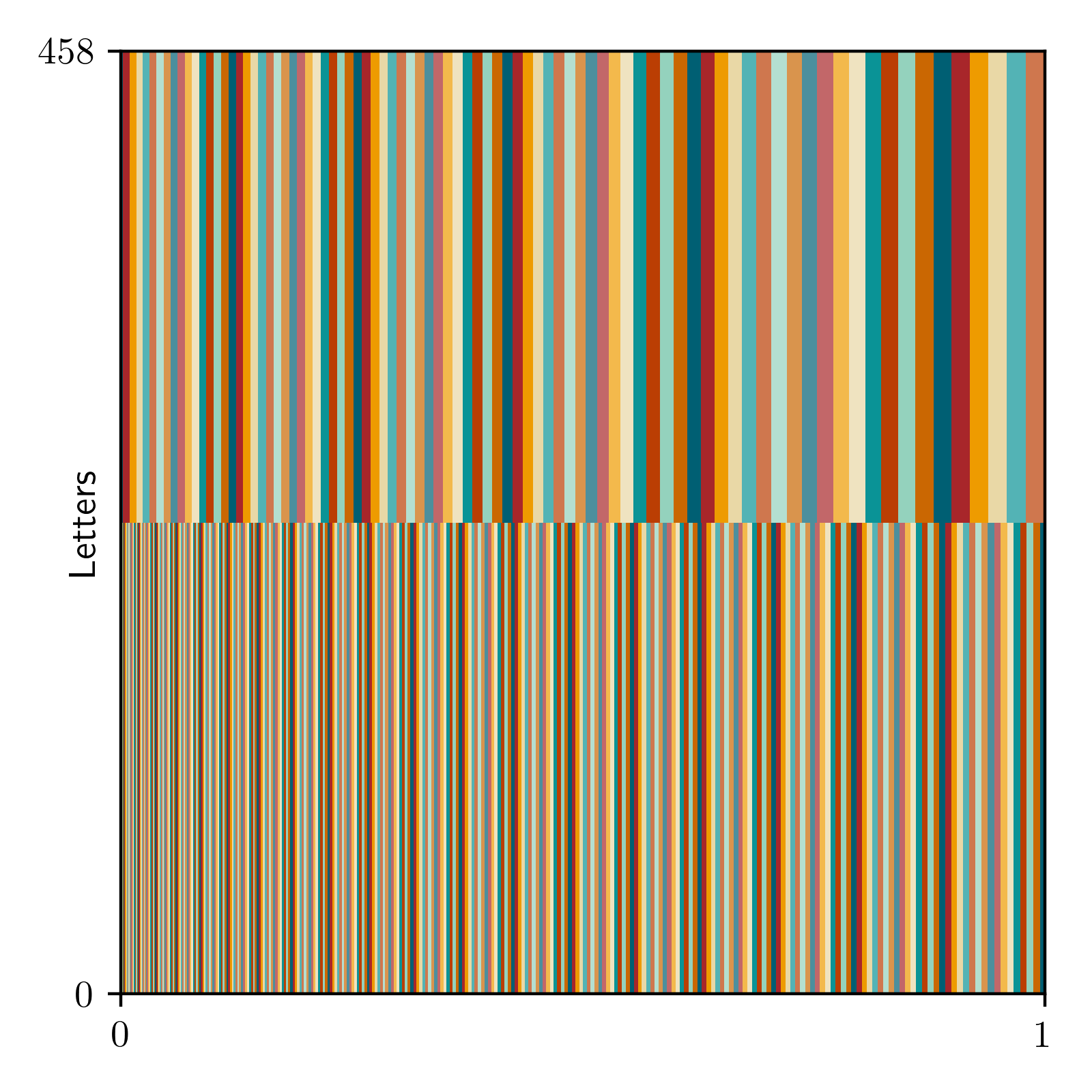}
        \includegraphics[draft=\draft, width=\linewidth]{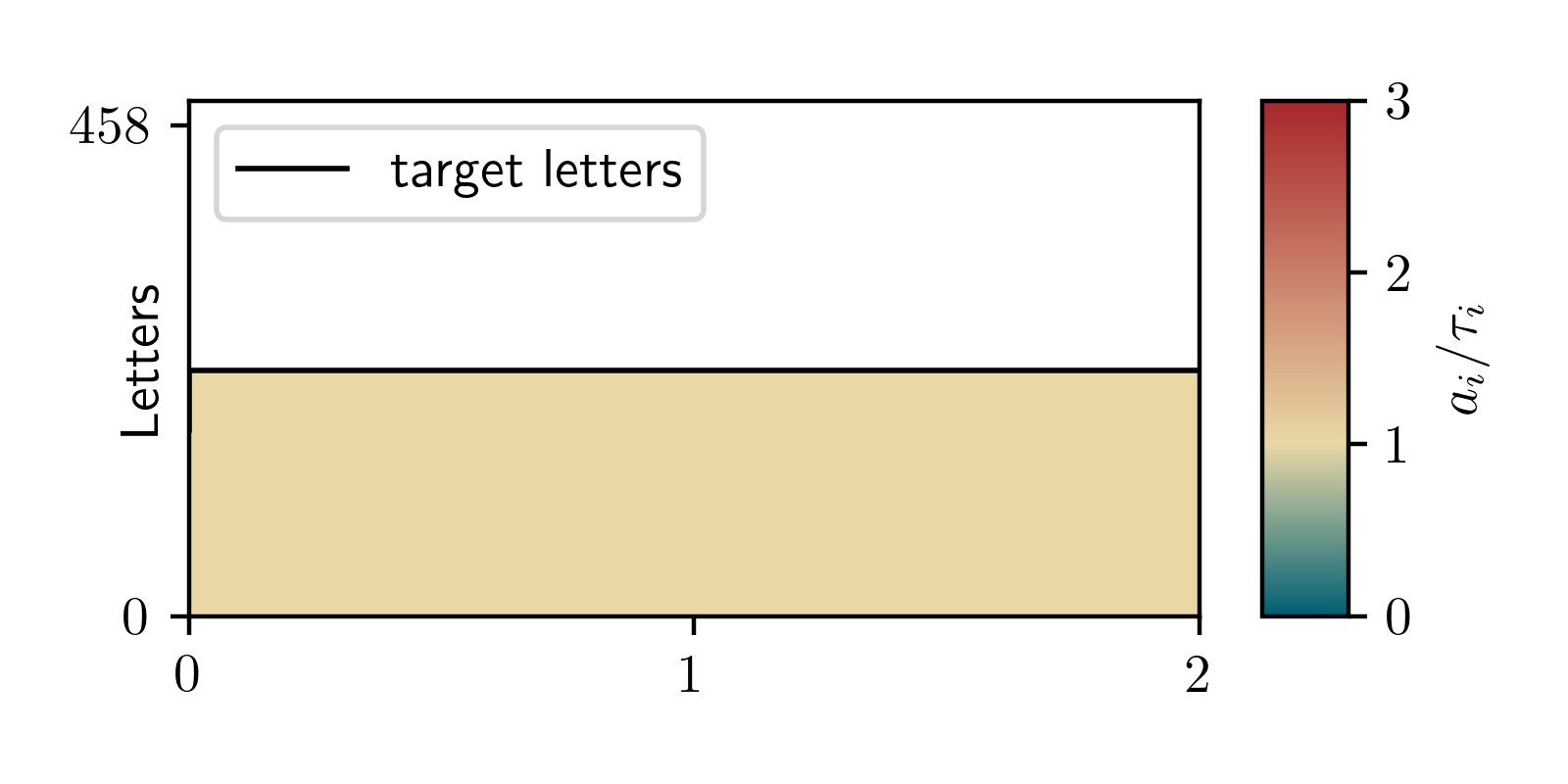}
        \caption{\greq ($t_G = 2$)}
        \label{fig:results_Sachsen_Small_greedy_equal}
    \end{subfigure}
    \begin{subfigure}{0.32\textwidth}
        \includegraphics[draft=\draft, width=\linewidth]{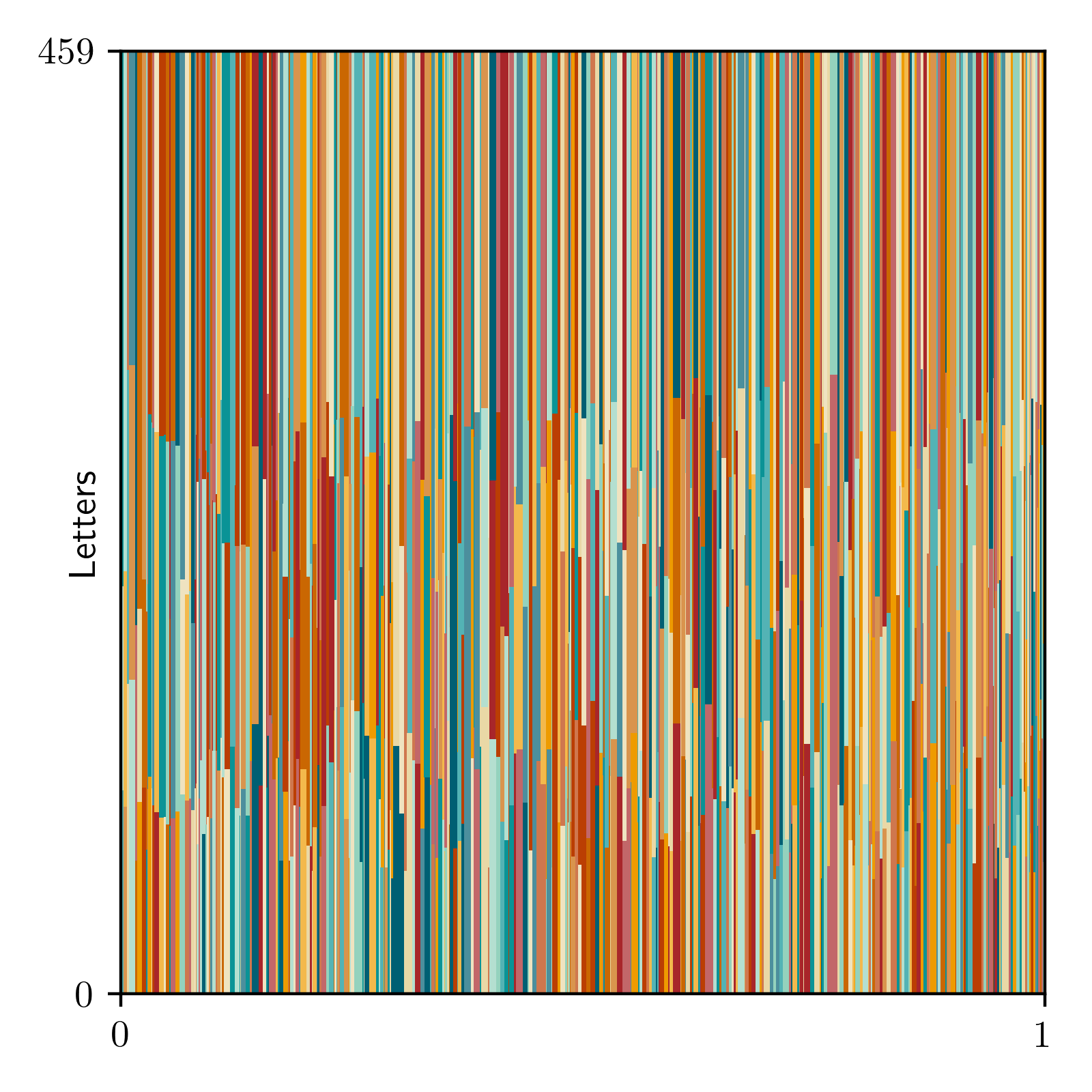}
        \includegraphics[draft=\draft, width=\linewidth]{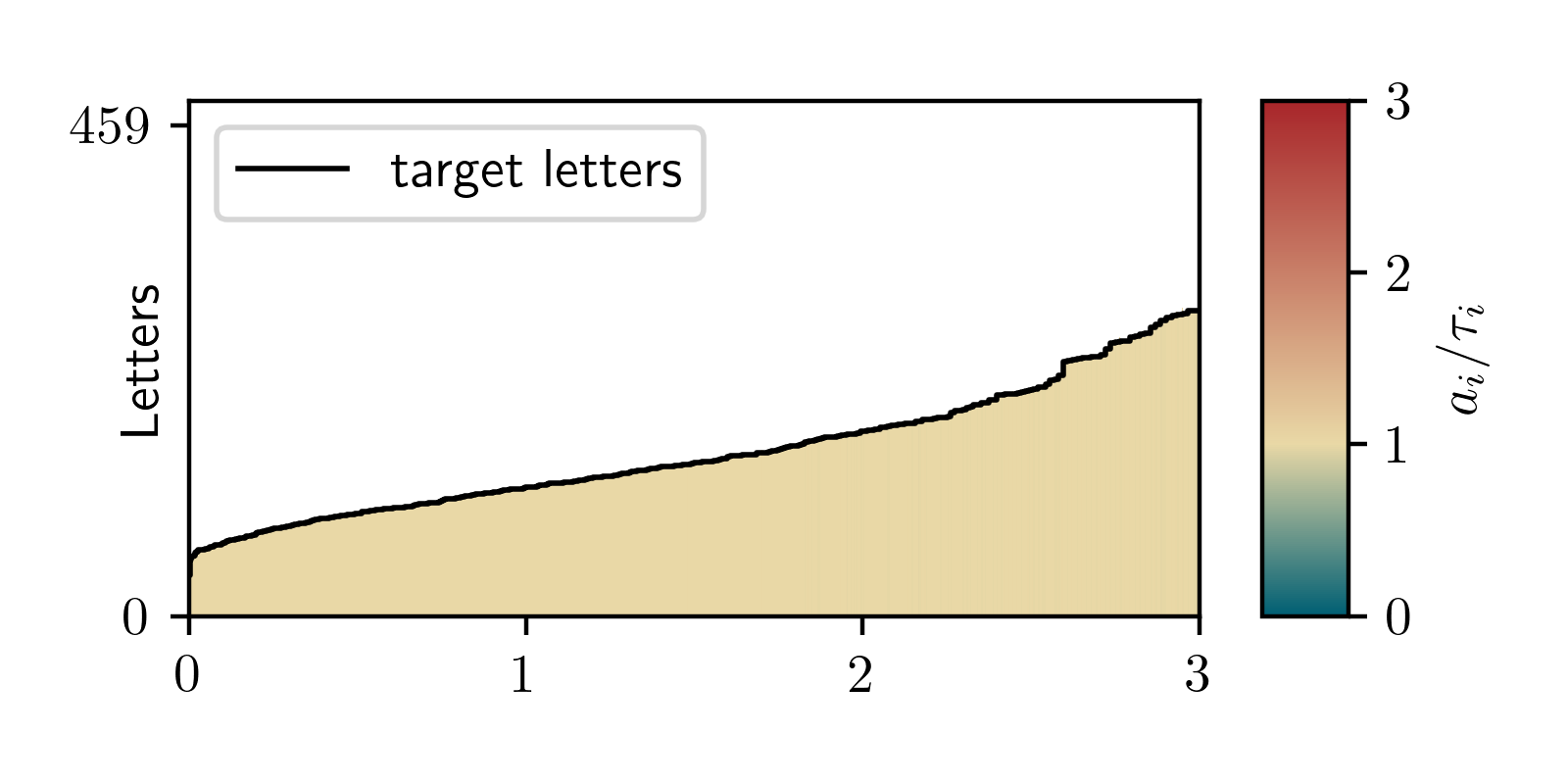}
        \caption{\colgen ($t_G\!=\!3$)}
        \label{fig:results_Sachsen_Small_column_generation}
    \end{subfigure}
    \begin{subfigure}{0.32\textwidth}
        \includegraphics[draft=\draft, width=\linewidth]{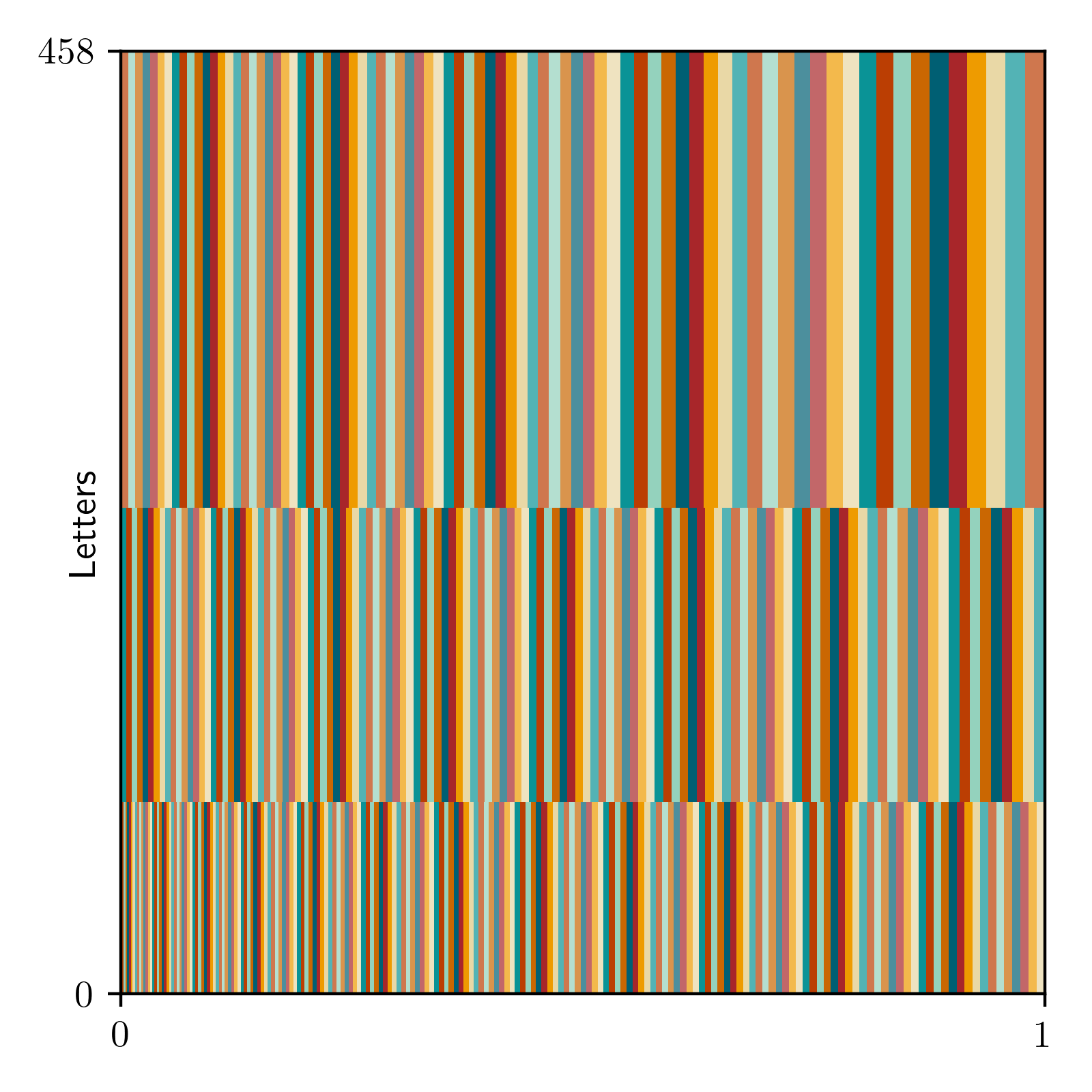}
        \includegraphics[draft=\draft, width=\linewidth]{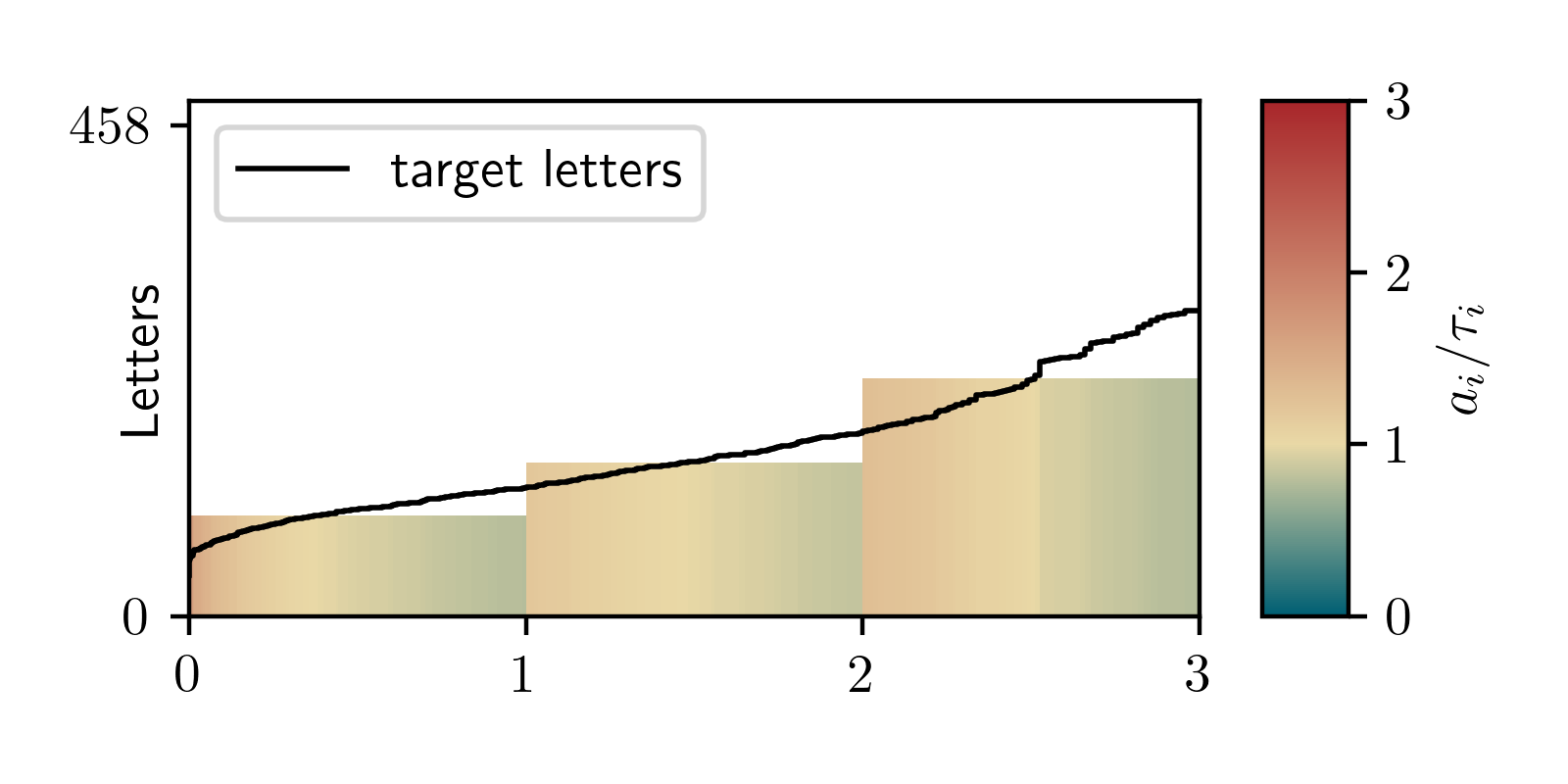}
        \caption{\buckets ($t_G = 3$)}
        \label{fig:results_Sachsen_Small_greedy_bucket_fill}
    \end{subfigure}
    \caption{Small municipalities of Sachsen ($\ell_G = 458$)}
    \label{fig:results_Sachsen_Small}
\end{figure} 

\begin{figure}
    \centering
    \begin{subfigure}{0.32\textwidth}
        \includegraphics[draft=\draft, width=\linewidth]{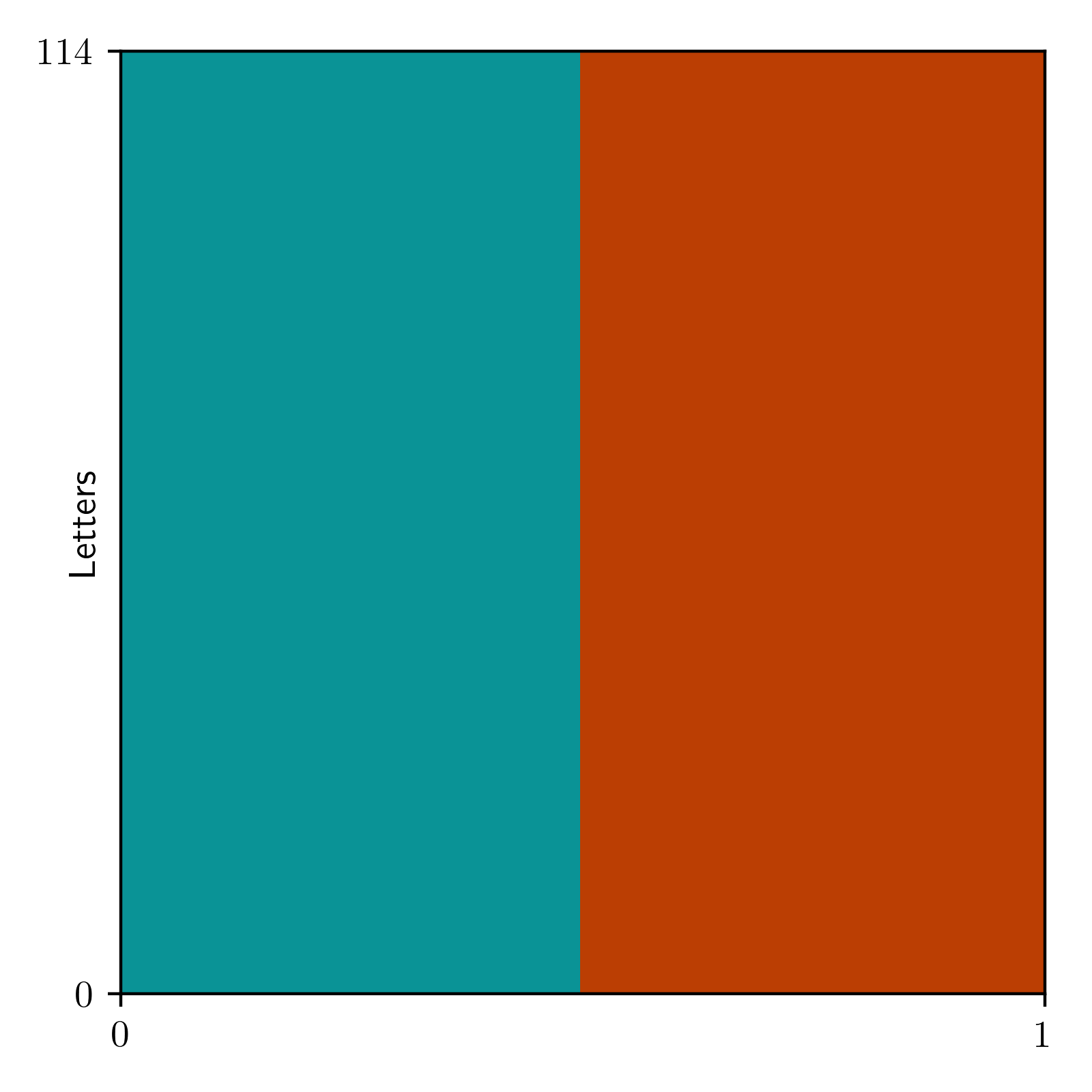}
        \includegraphics[draft=\draft, width=\linewidth]{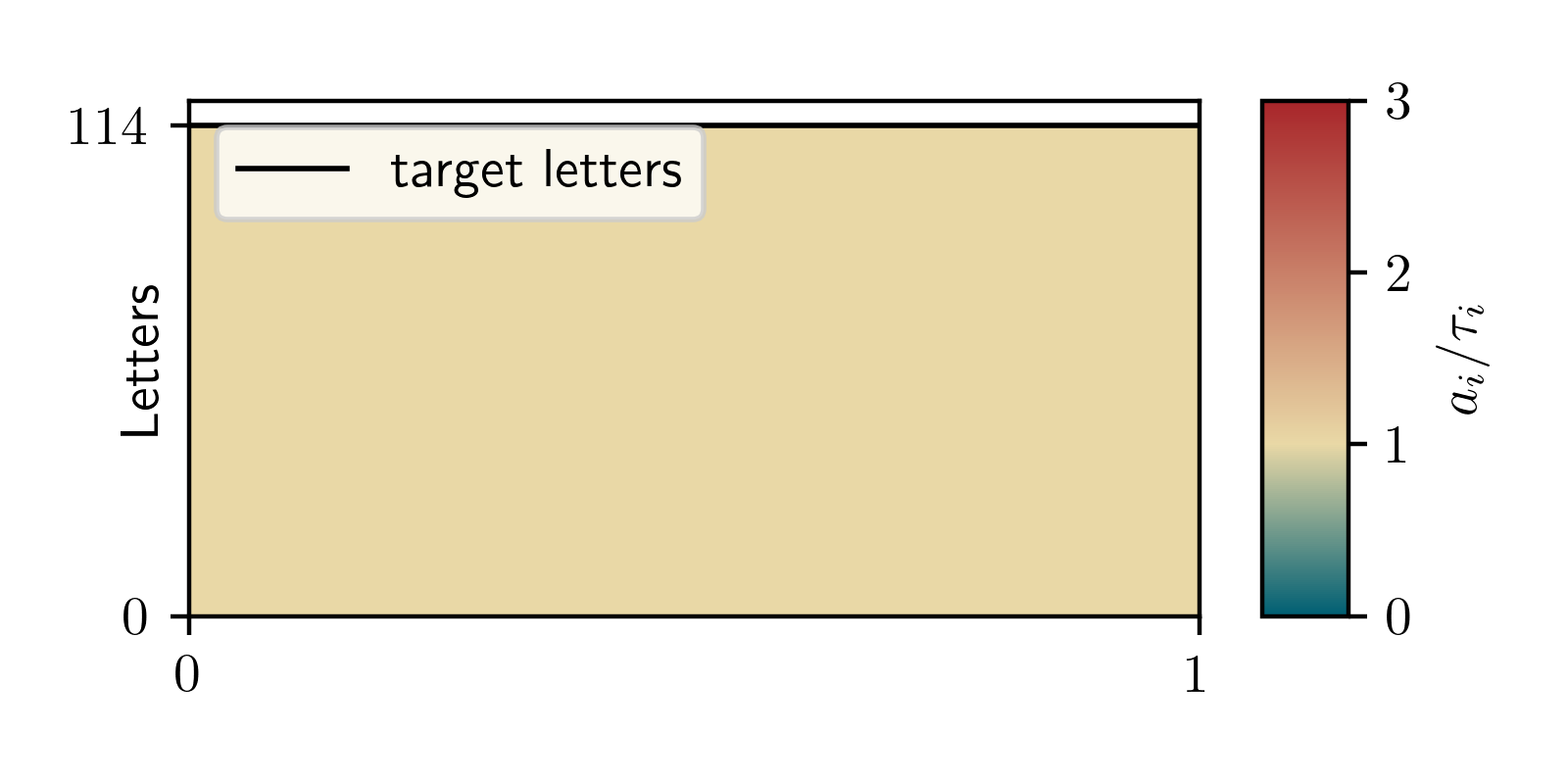}
        \caption{\greq ($t_G = 1$)}
        \label{fig:results_Sachsen-Anhalt_Large_greedy_equal}
    \end{subfigure}
    \begin{subfigure}{0.32\textwidth}
        \includegraphics[draft=\draft, width=\linewidth]{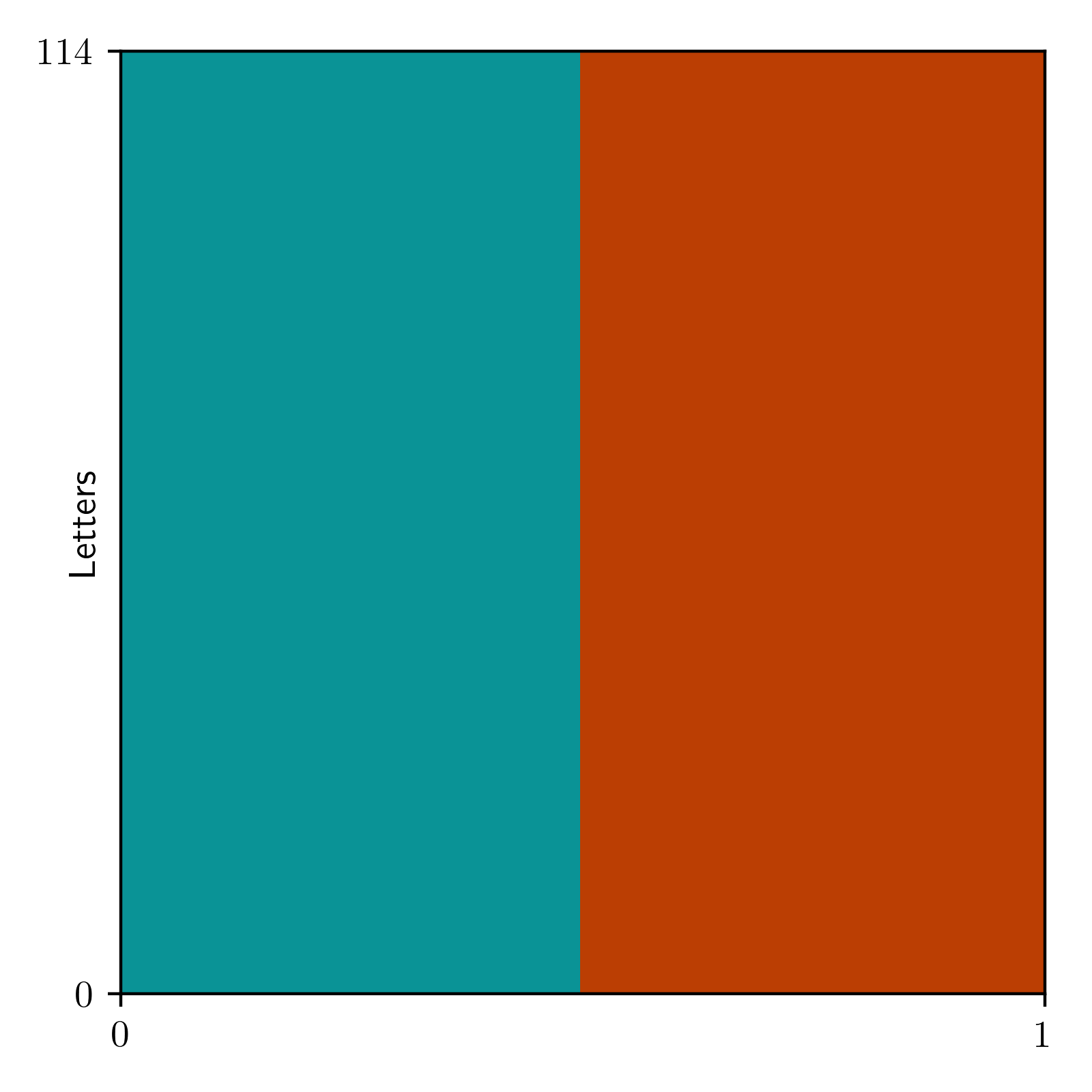}
        \includegraphics[draft=\draft, width=\linewidth]{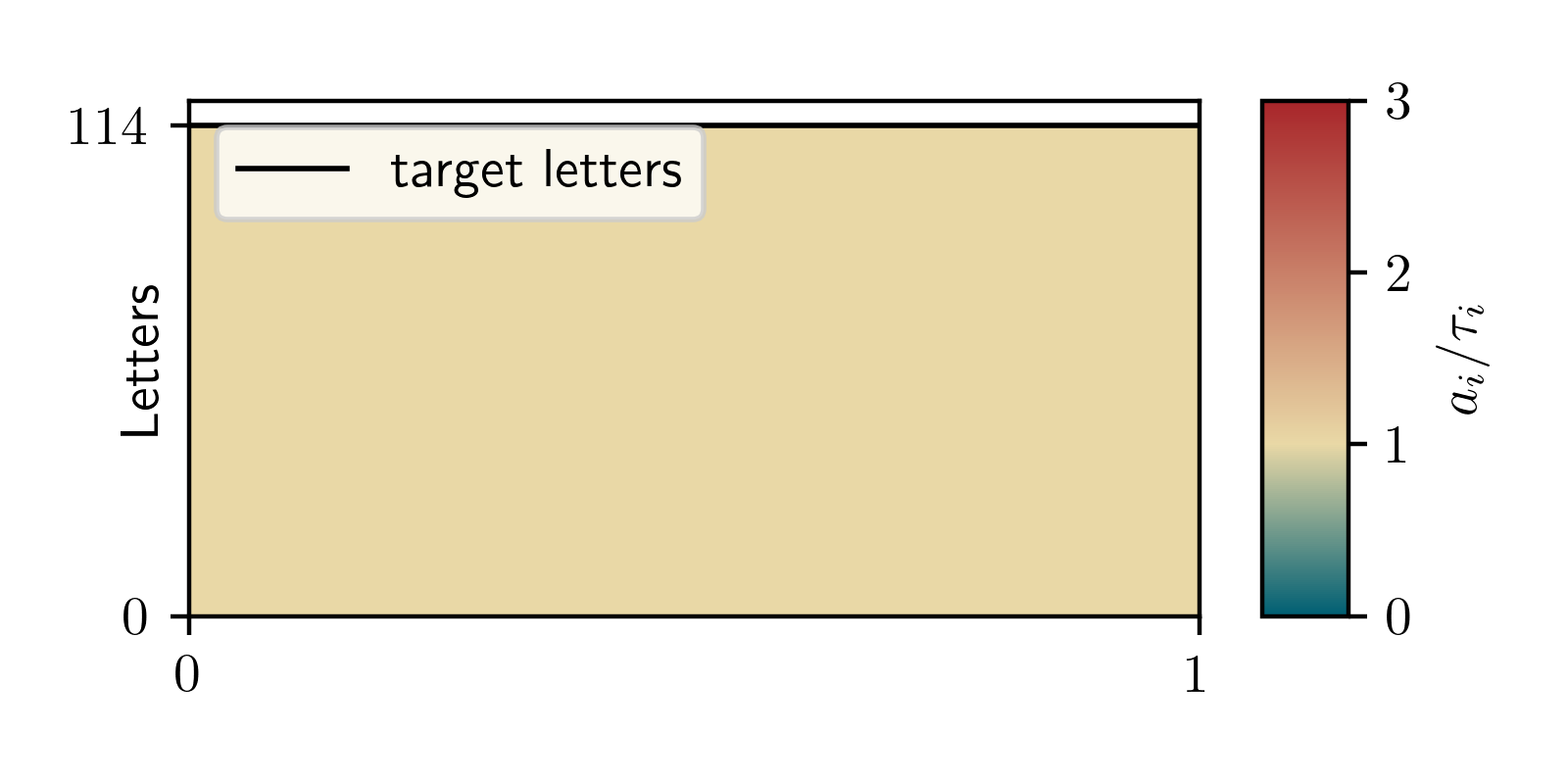}
        \caption{\colgen ($t_G\!=\!1$)}
        \label{fig:results_Sachsen-Anhalt_Large_column_generation}
    \end{subfigure}
    \begin{subfigure}{0.32\textwidth}
        \includegraphics[draft=\draft, width=\linewidth]{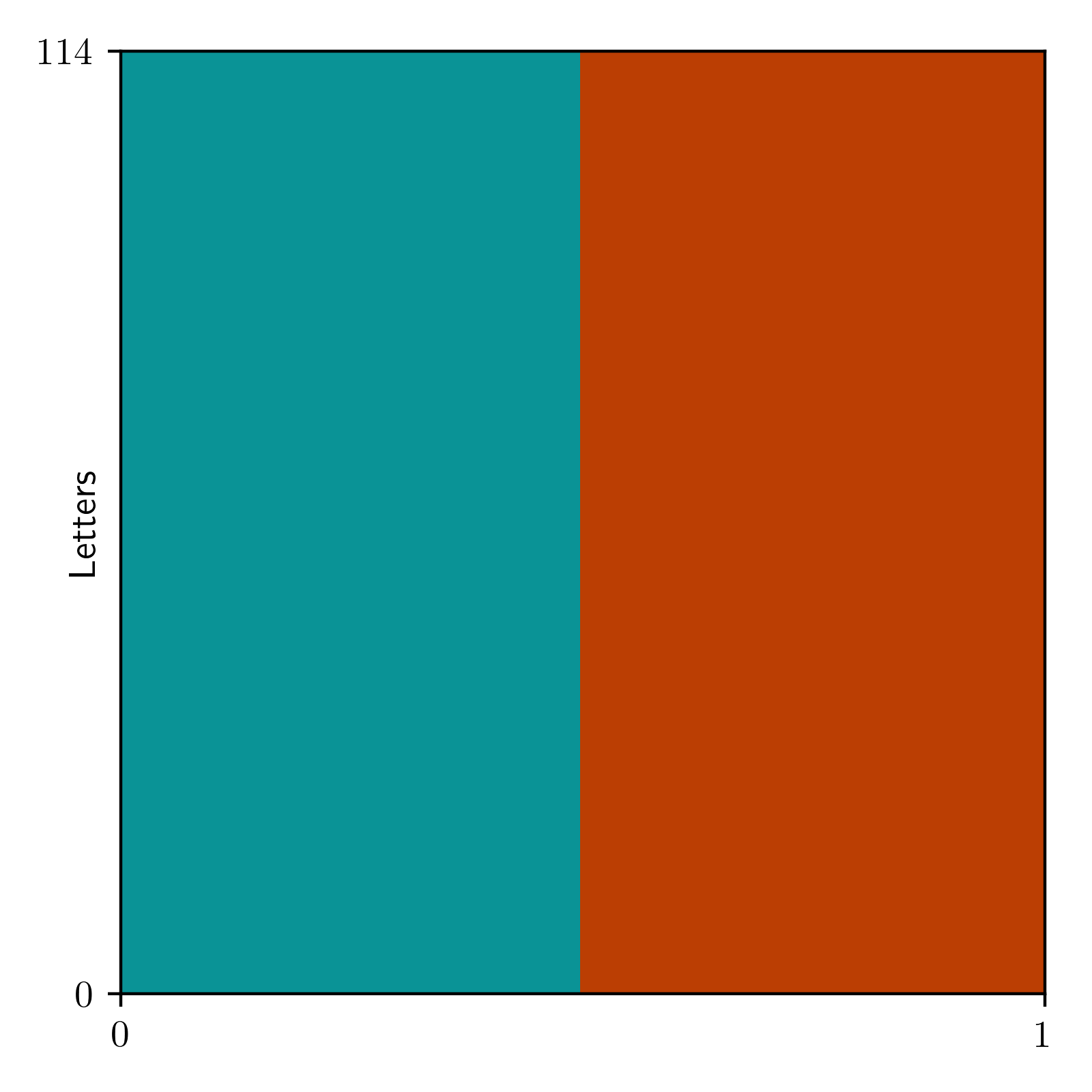}
        \includegraphics[draft=\draft, width=\linewidth]{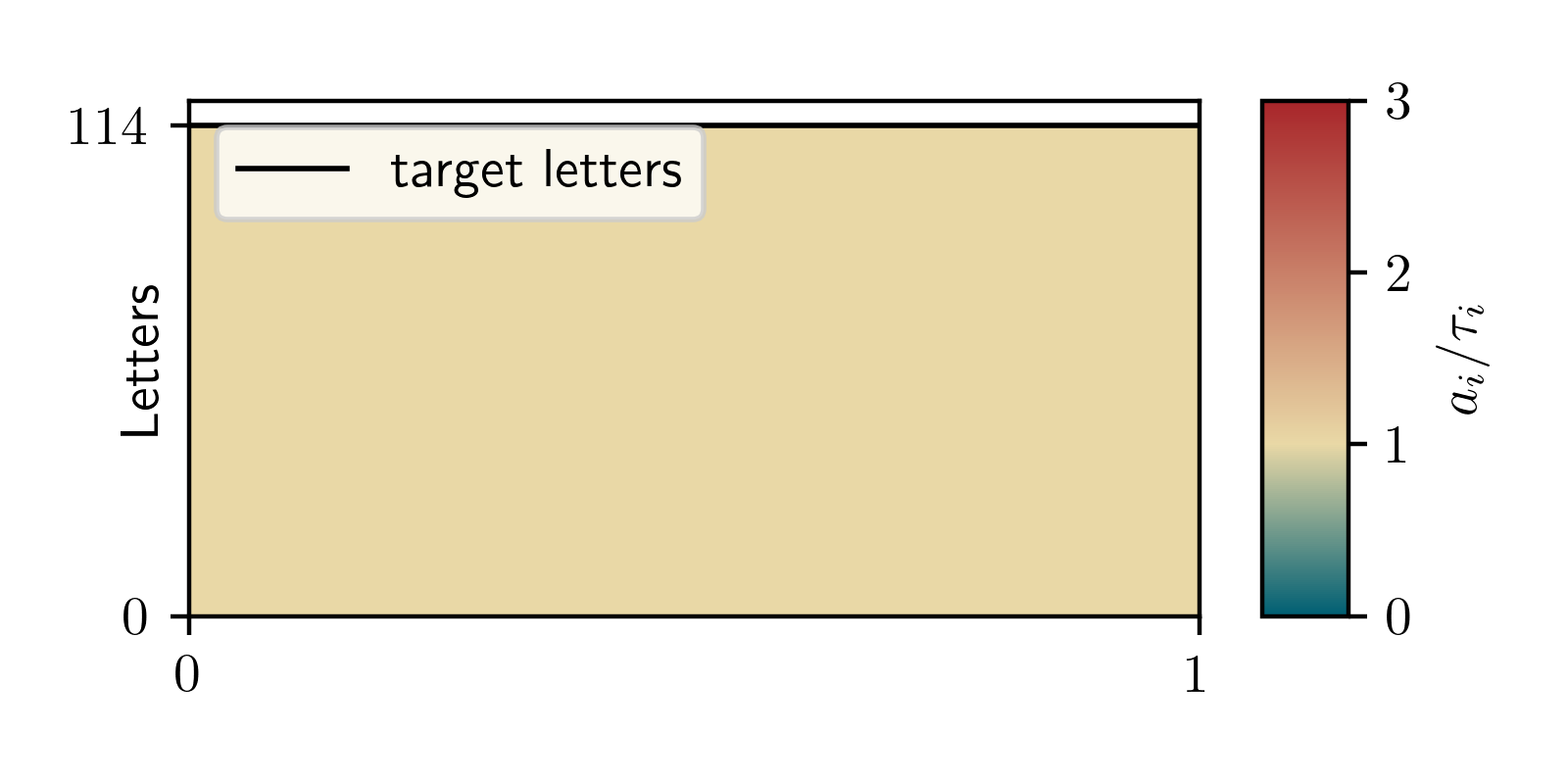}
        \caption{\buckets ($t_G = 1$)}
        \label{fig:results_Sachsen-Anhalt_Large_greedy_bucket_fill}
    \end{subfigure}
    \caption{Large municipalities of Sachsen-Anhalt ($\ell_G = 114$)}
    \label{fig:results_Sachsen-Anhalt_Large}
\end{figure} 

\begin{figure}
    \centering
    \begin{subfigure}{0.32\textwidth}
        \includegraphics[draft=\draft, width=\linewidth]{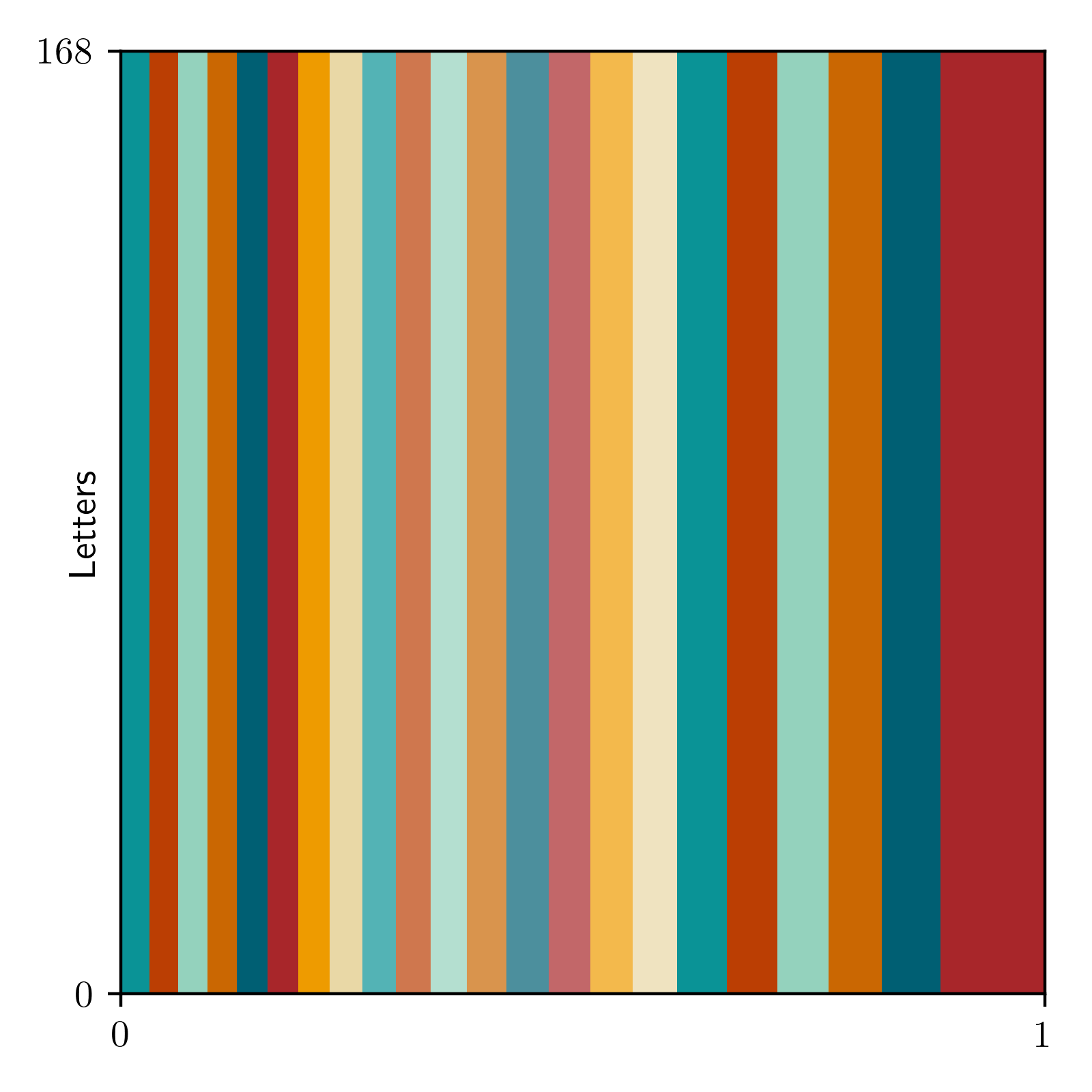}
        \includegraphics[draft=\draft, width=\linewidth]{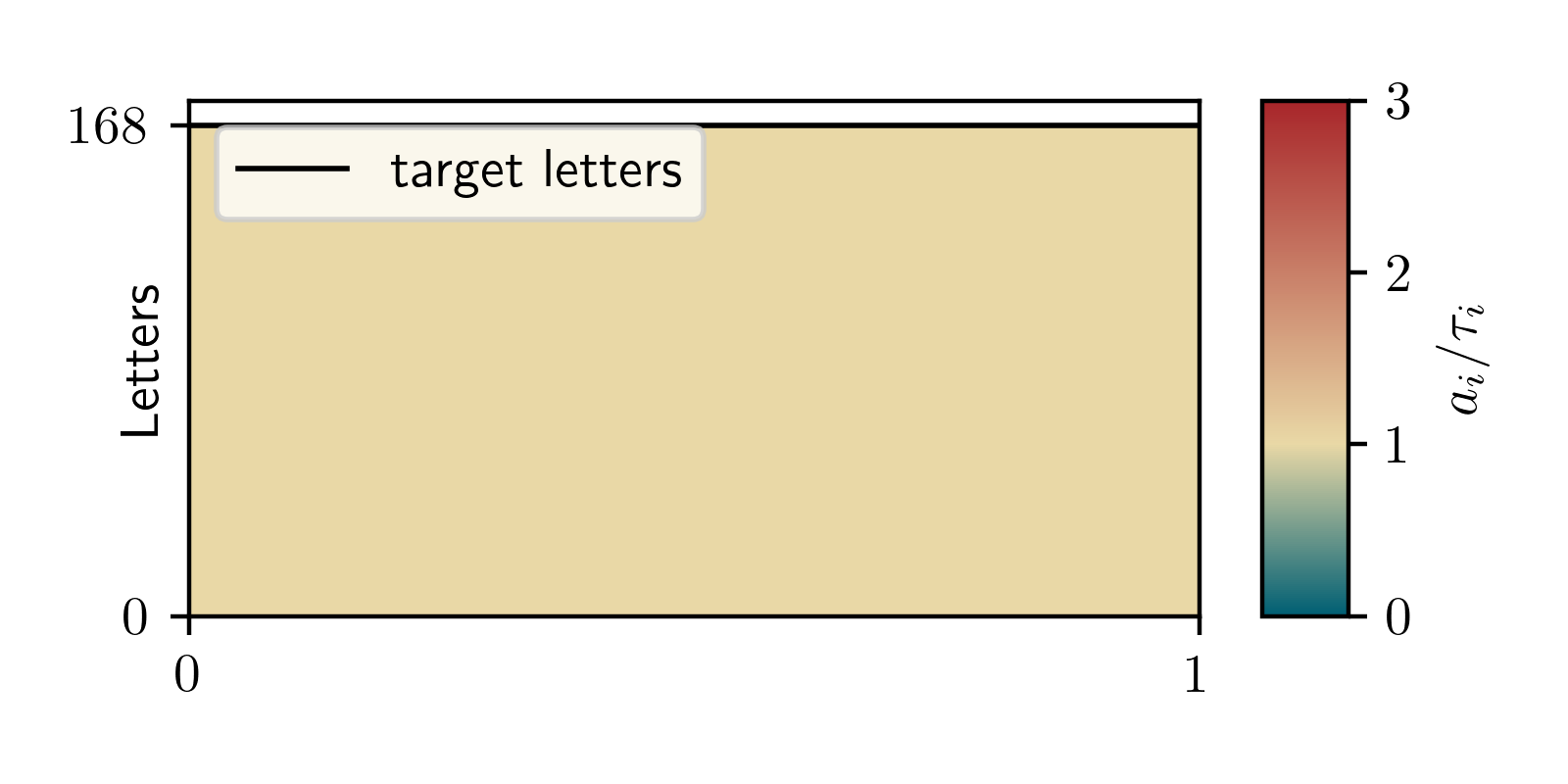}
        \caption{\greq ($t_G = 1$)}
        \label{fig:results_Sachsen-Anhalt_Medium_greedy_equal}
    \end{subfigure}
    \begin{subfigure}{0.32\textwidth}
        \includegraphics[draft=\draft, width=\linewidth]{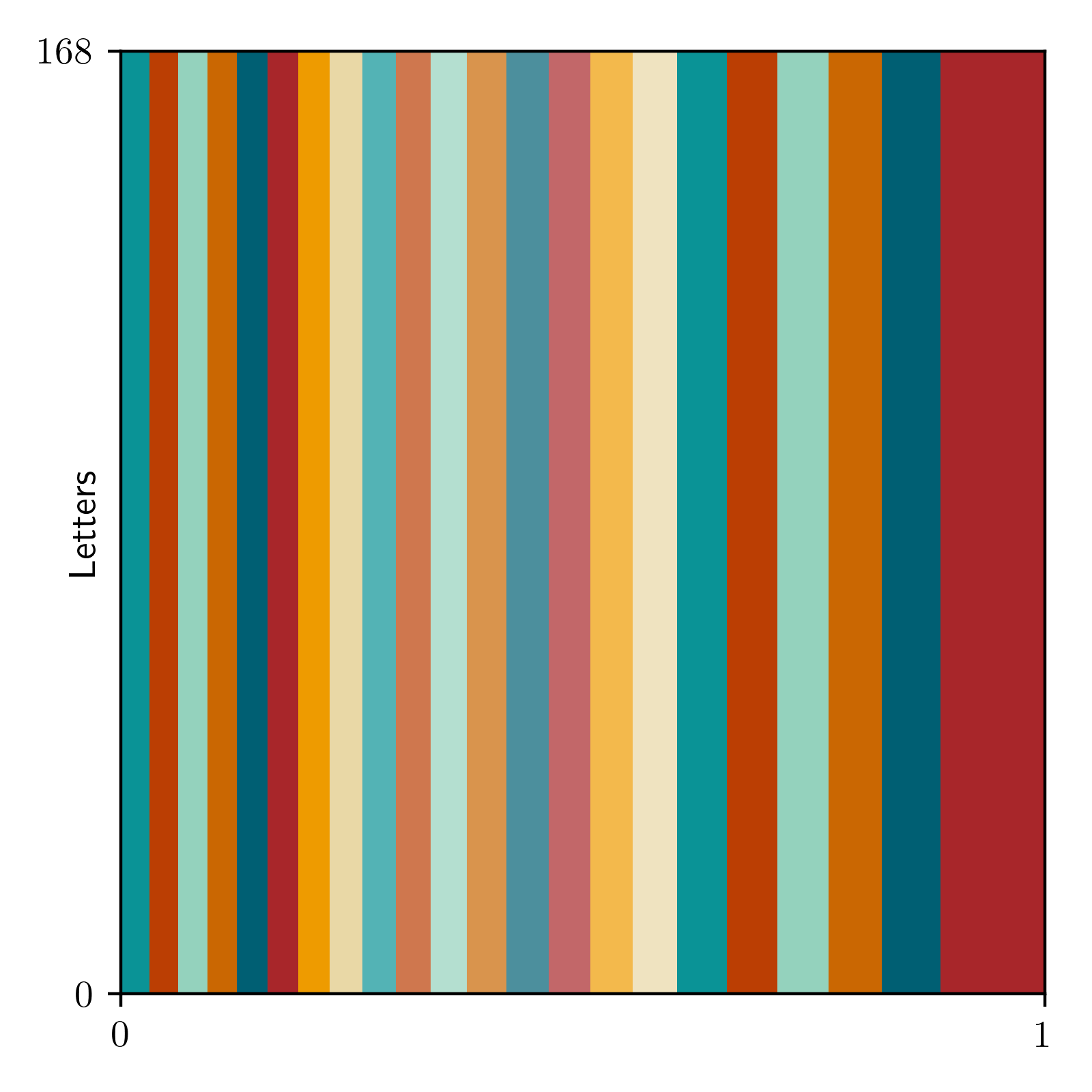}
        \includegraphics[draft=\draft, width=\linewidth]{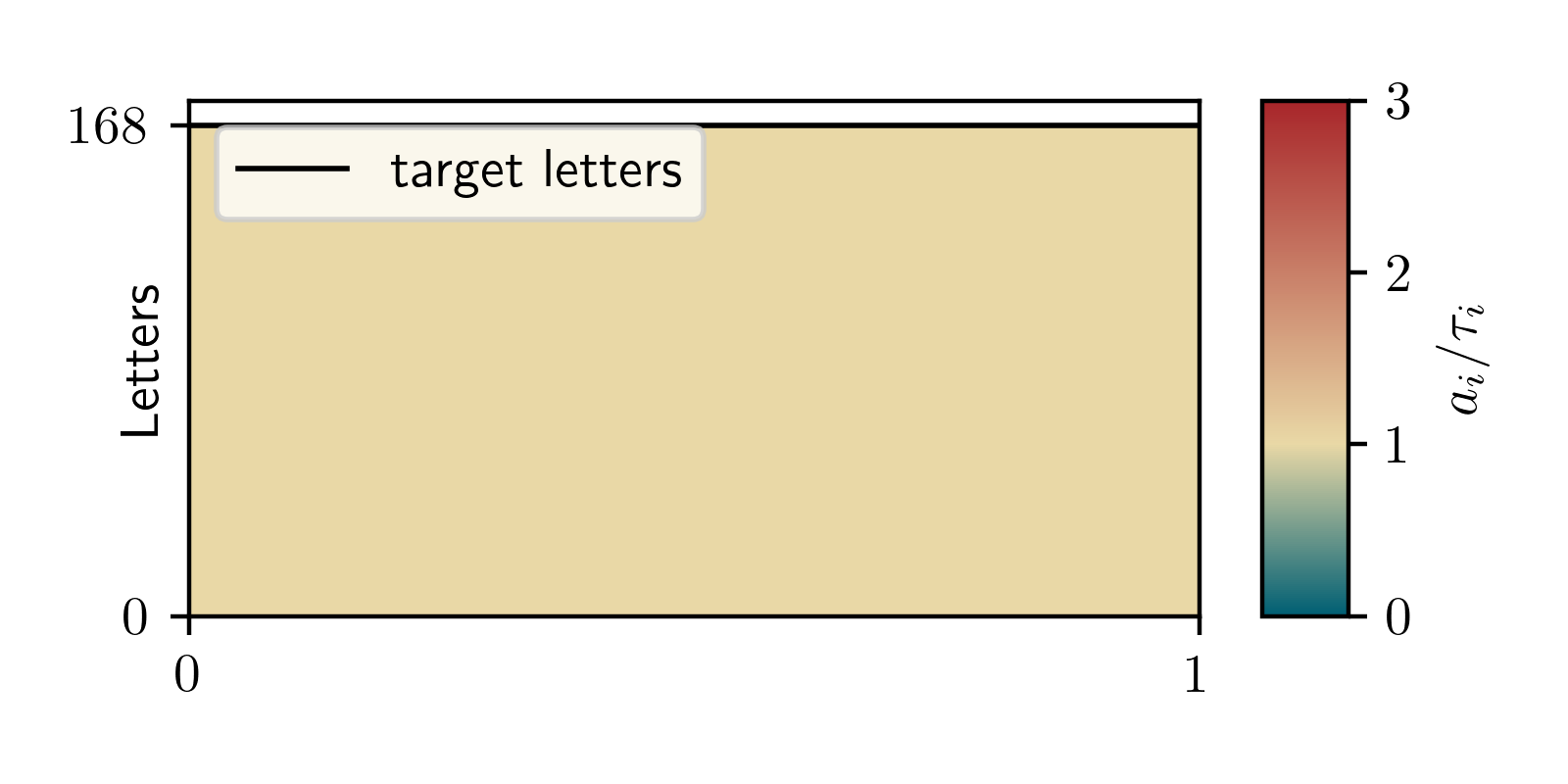}
        \caption{\colgen ($t_G\!=\!1$)}
        \label{fig:results_Sachsen-Anhalt_Medium_column_generation}
    \end{subfigure}
    \begin{subfigure}{0.32\textwidth}
        \includegraphics[draft=\draft, width=\linewidth]{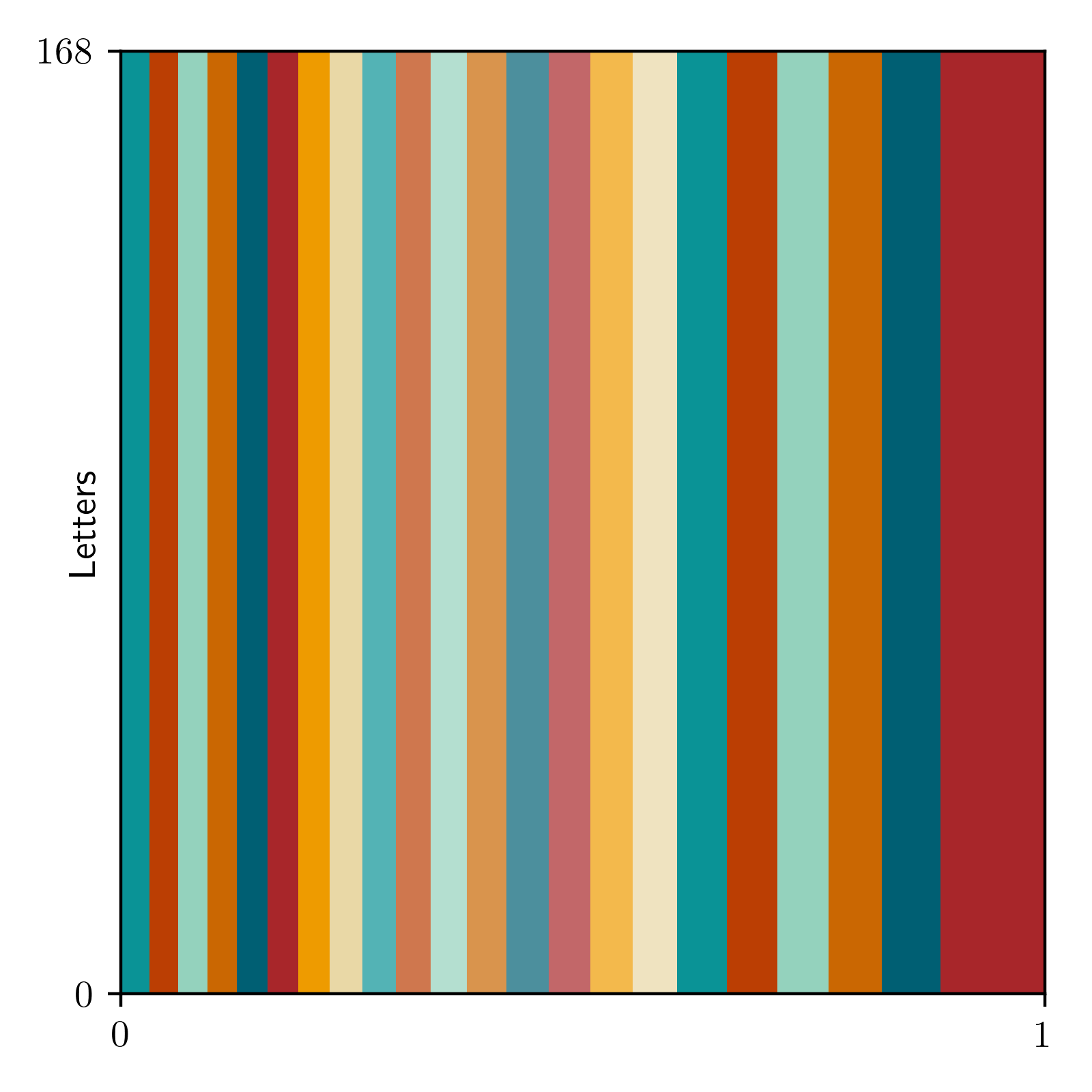}
        \includegraphics[draft=\draft, width=\linewidth]{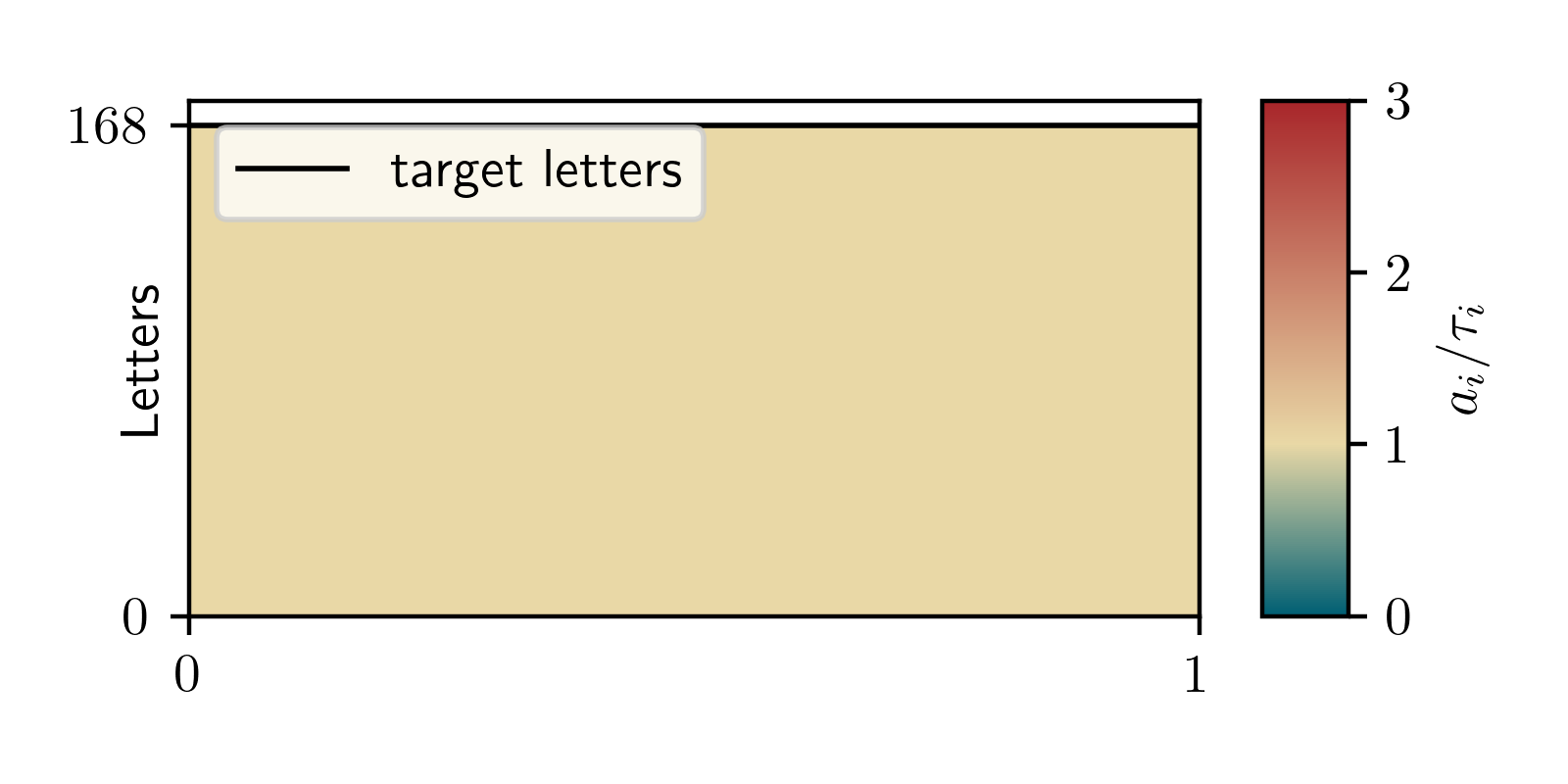}
        \caption{\buckets ($t_G = 1$)}
        \label{fig:results_Sachsen-Anhalt_Medium_greedy_bucket_fill}
    \end{subfigure}
    \caption{Medium municipalities of Sachsen-Anhalt ($\ell_G = 168$)}
    \label{fig:results_Sachsen-Anhalt_Medium}
\end{figure} 

\begin{figure}
    \centering
    \begin{subfigure}{0.32\textwidth}
        \includegraphics[draft=\draft, width=\linewidth]{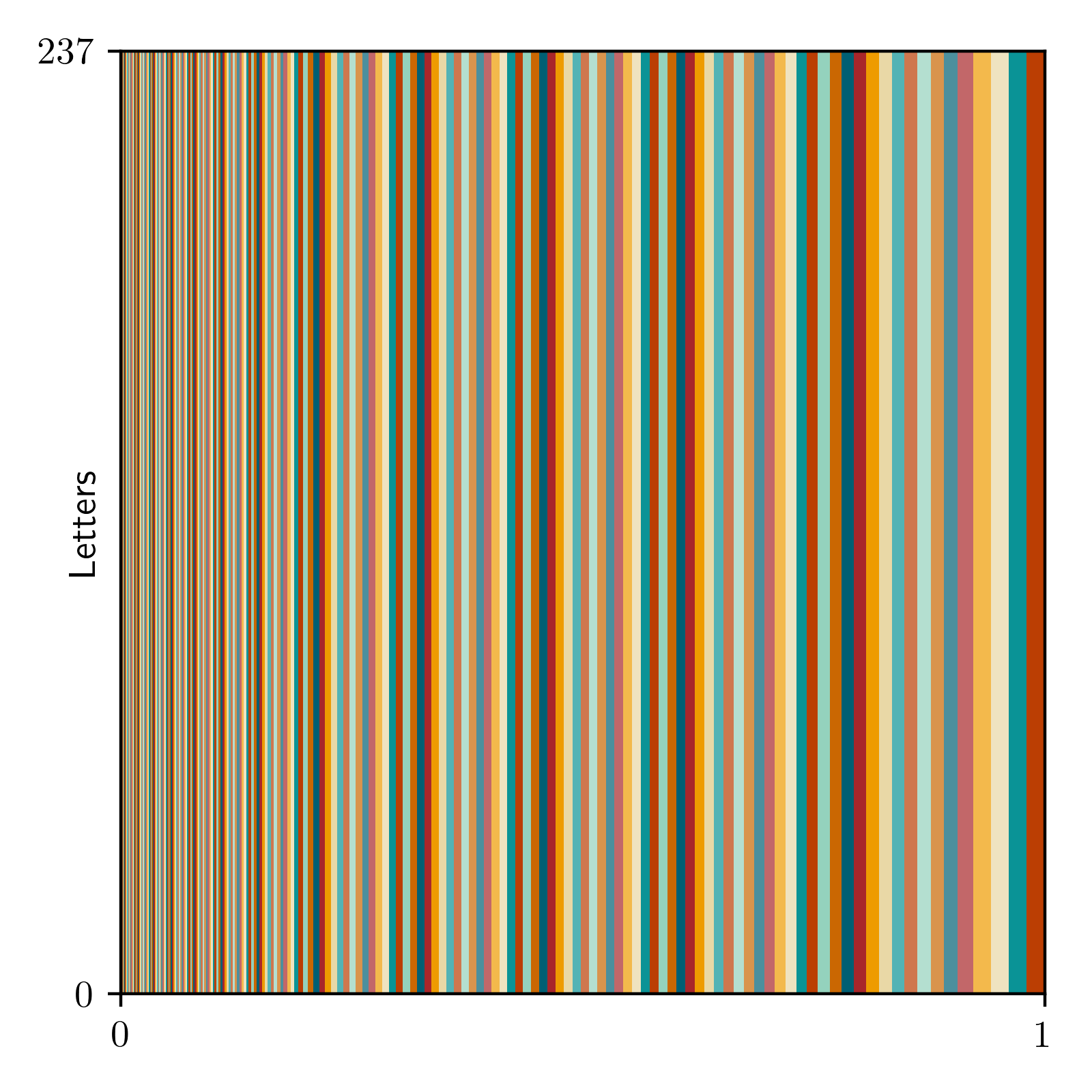}
        \includegraphics[draft=\draft, width=\linewidth]{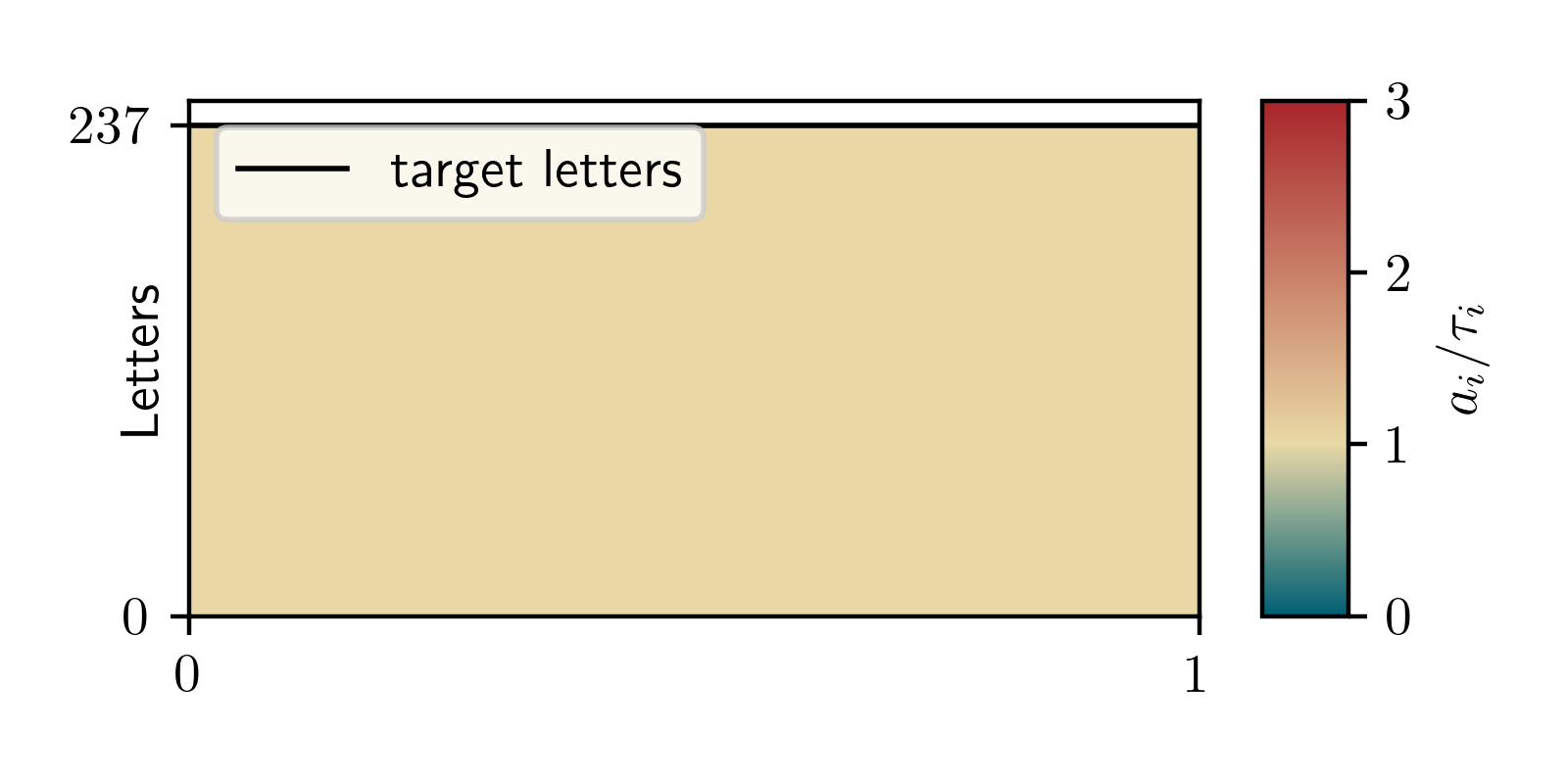}
        \caption{\greq ($t_G = 1$)}
        \label{fig:results_Sachsen-Anhalt_Small_greedy_equal}
    \end{subfigure}
    \begin{subfigure}{0.32\textwidth}
        \includegraphics[draft=\draft, width=\linewidth]{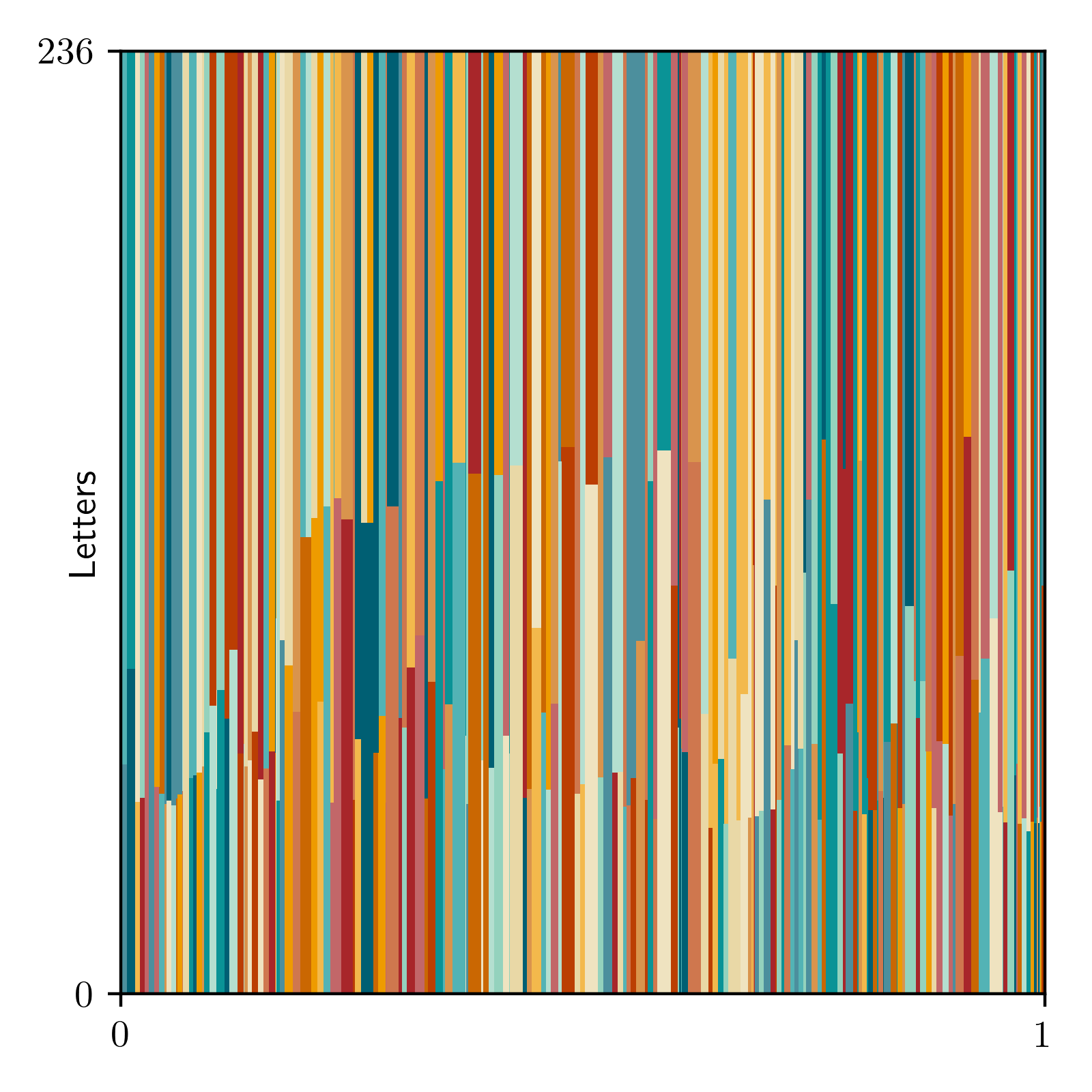}
        \includegraphics[draft=\draft, width=\linewidth]{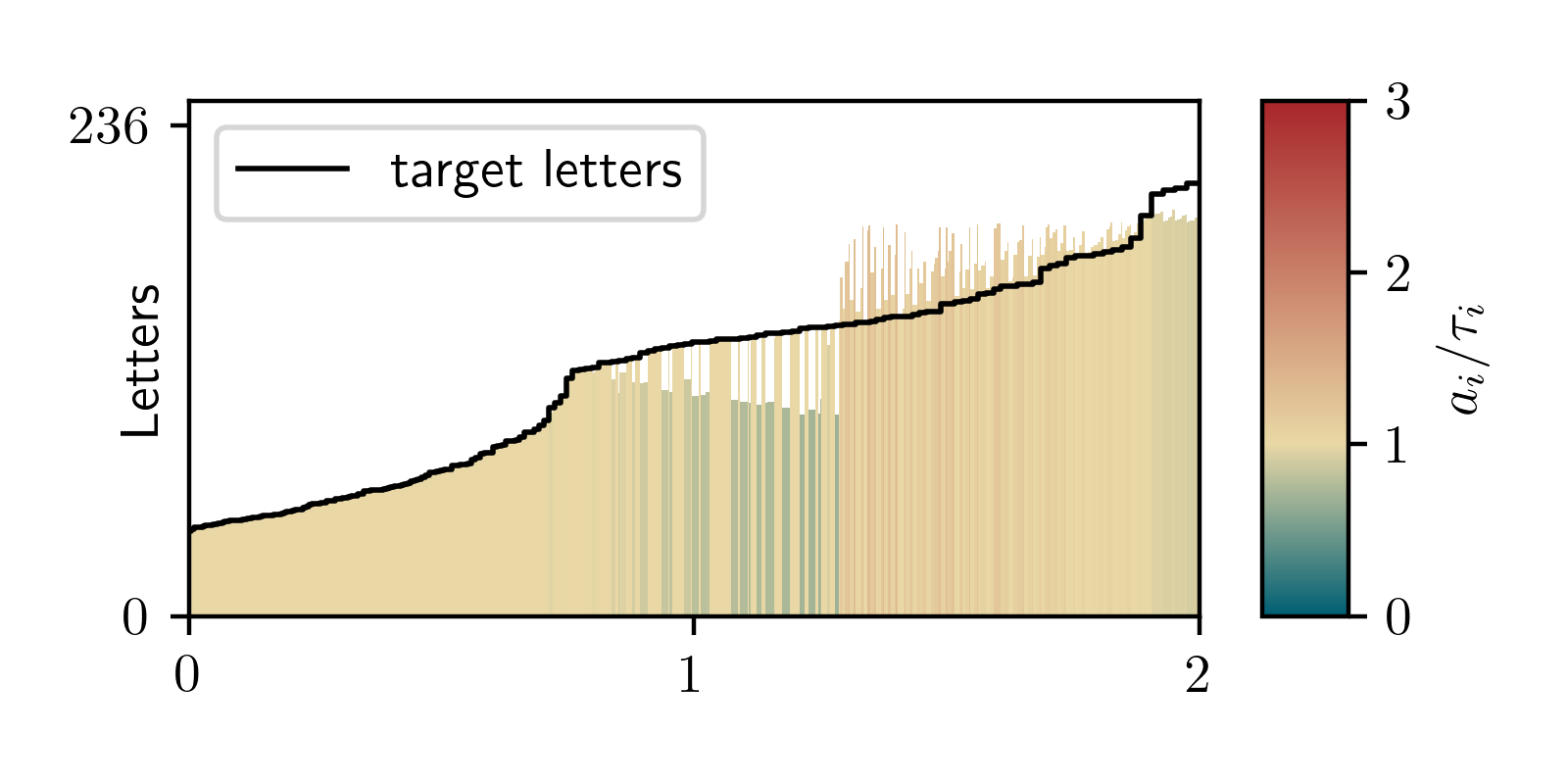}
        \caption{\colgen ($t_G\!=\!2$)}
        \label{fig:results_Sachsen-Anhalt_Small_column_generation}
    \end{subfigure}
    \begin{subfigure}{0.32\textwidth}
        \includegraphics[draft=\draft, width=\linewidth]{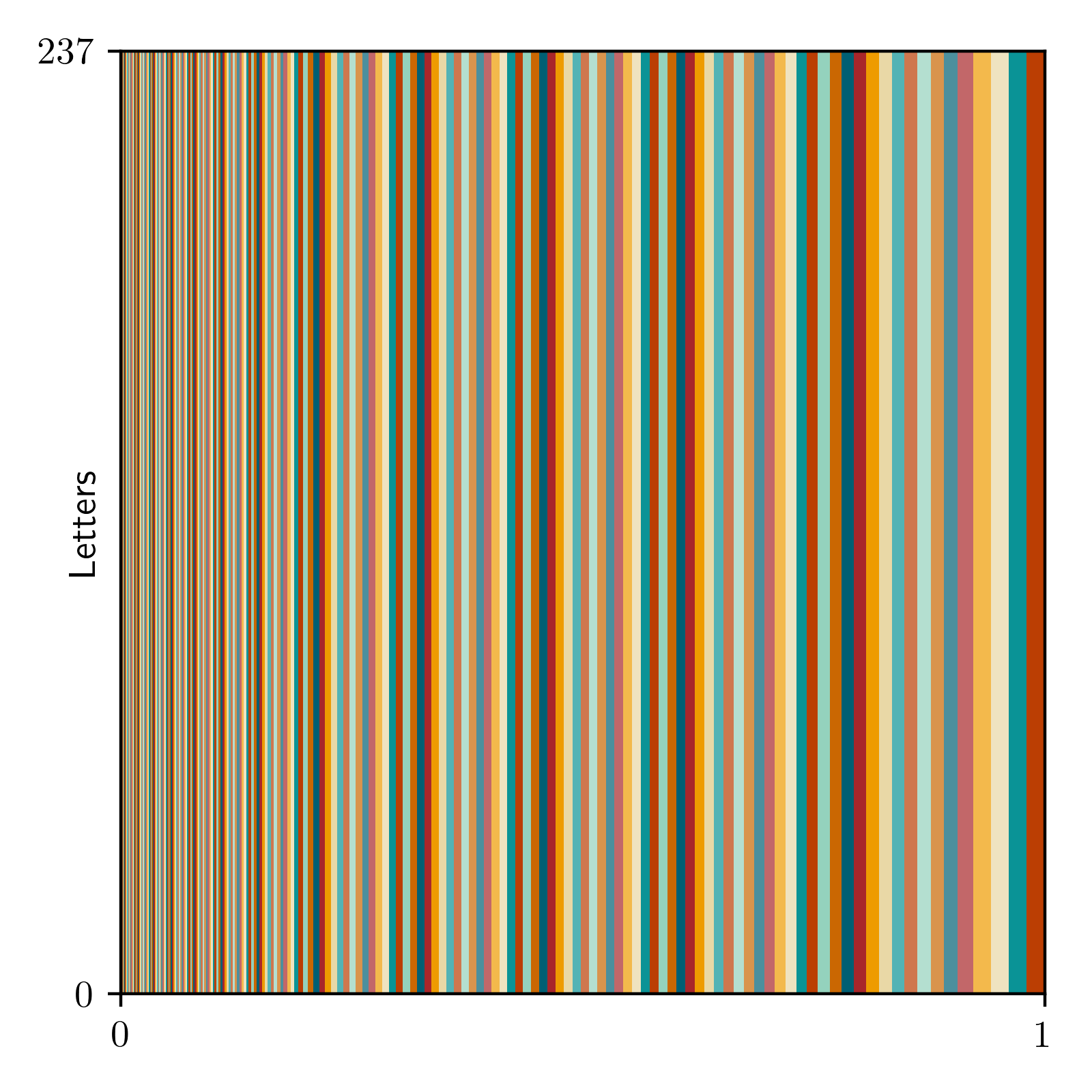}
        \includegraphics[draft=\draft, width=\linewidth]{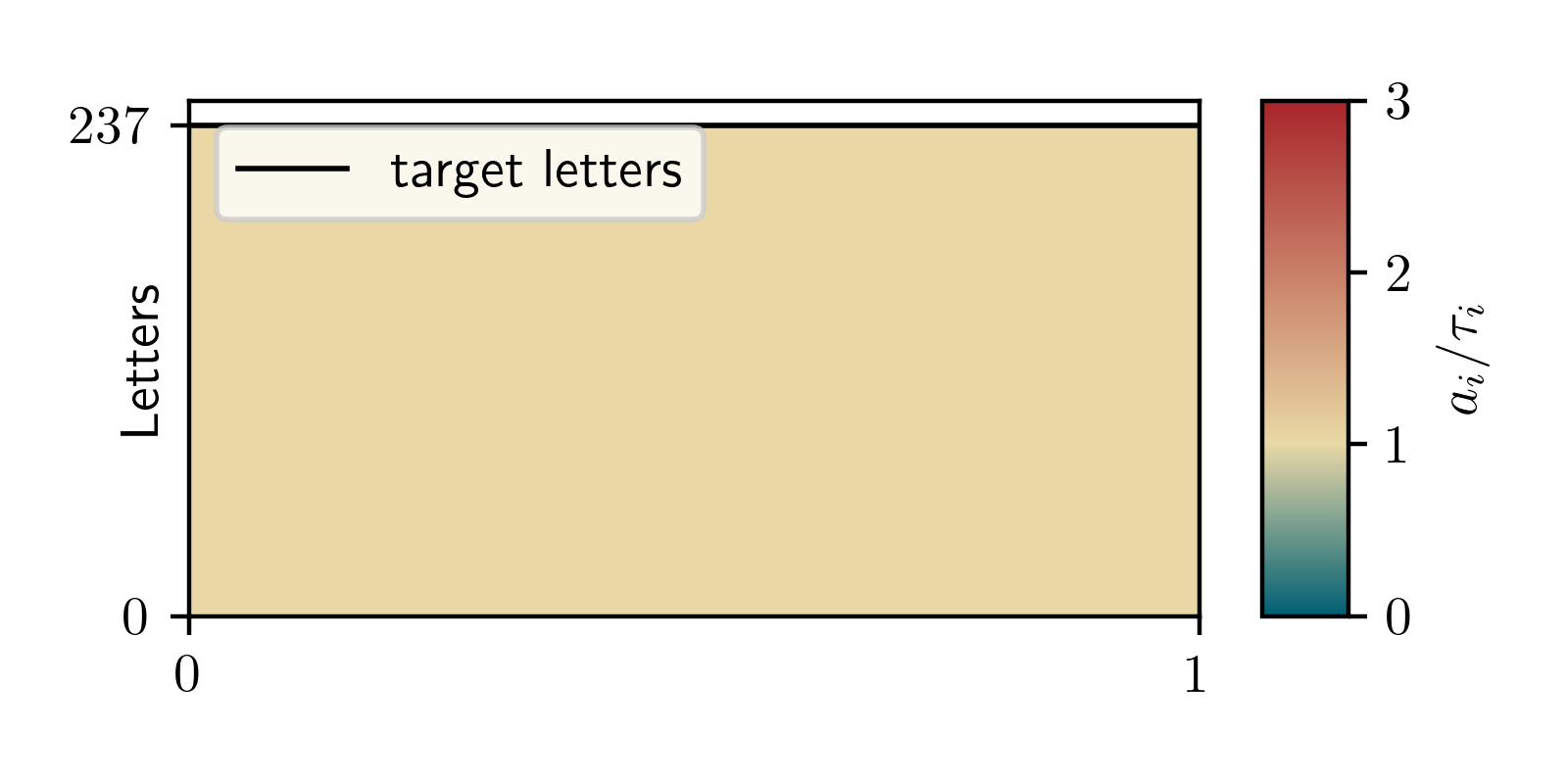}
        \caption{\buckets ($t_G = 1$)}
        \label{fig:results_Sachsen-Anhalt_Small_greedy_bucket_fill}
    \end{subfigure}
    \caption{Small municipalities of Sachsen-Anhalt ($\ell_G = 237$)}
    \label{fig:results_Sachsen-Anhalt_Small}
\end{figure} 

\begin{figure}
    \centering
    \begin{subfigure}{0.32\textwidth}
        \includegraphics[draft=\draft, width=\linewidth]{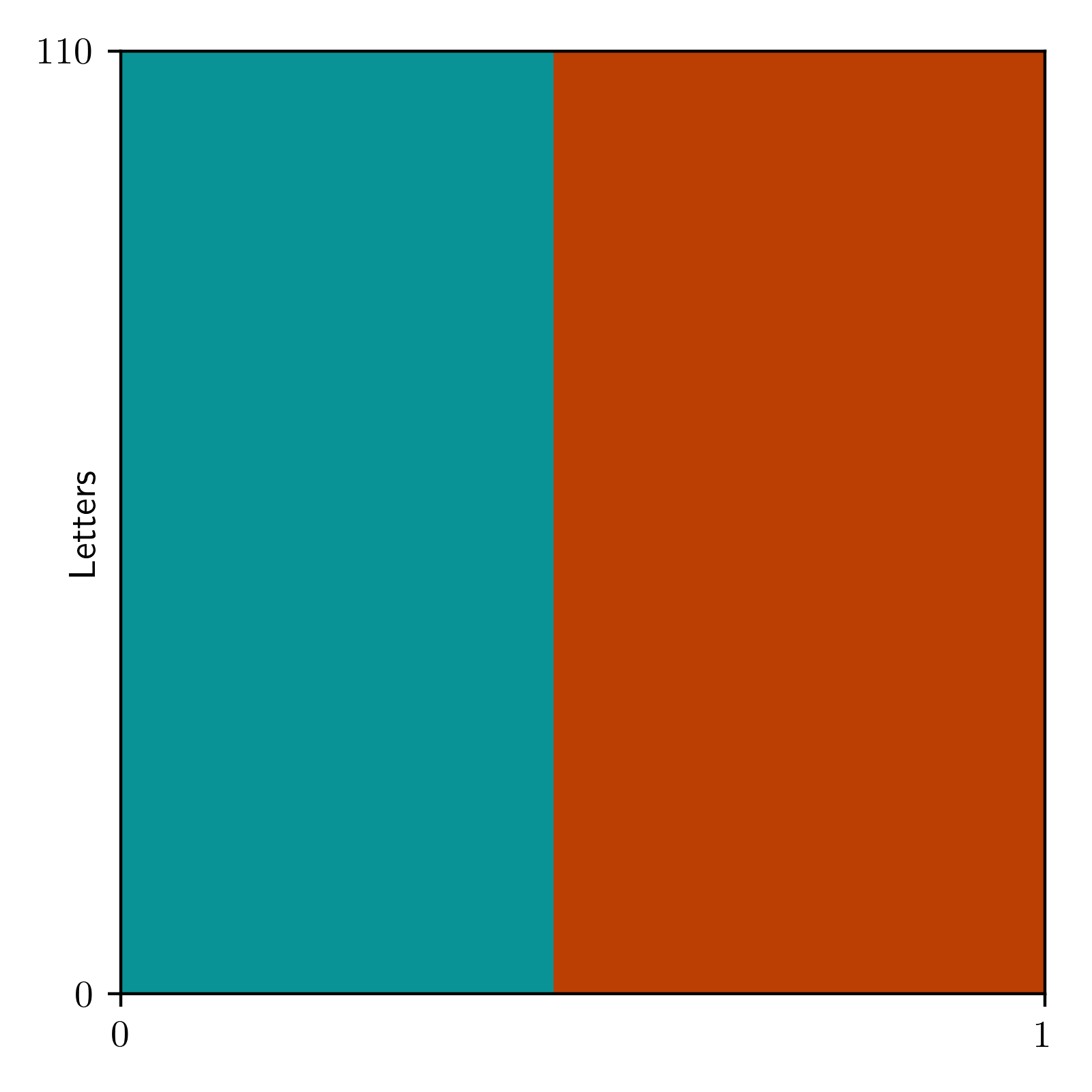}
        \includegraphics[draft=\draft, width=\linewidth]{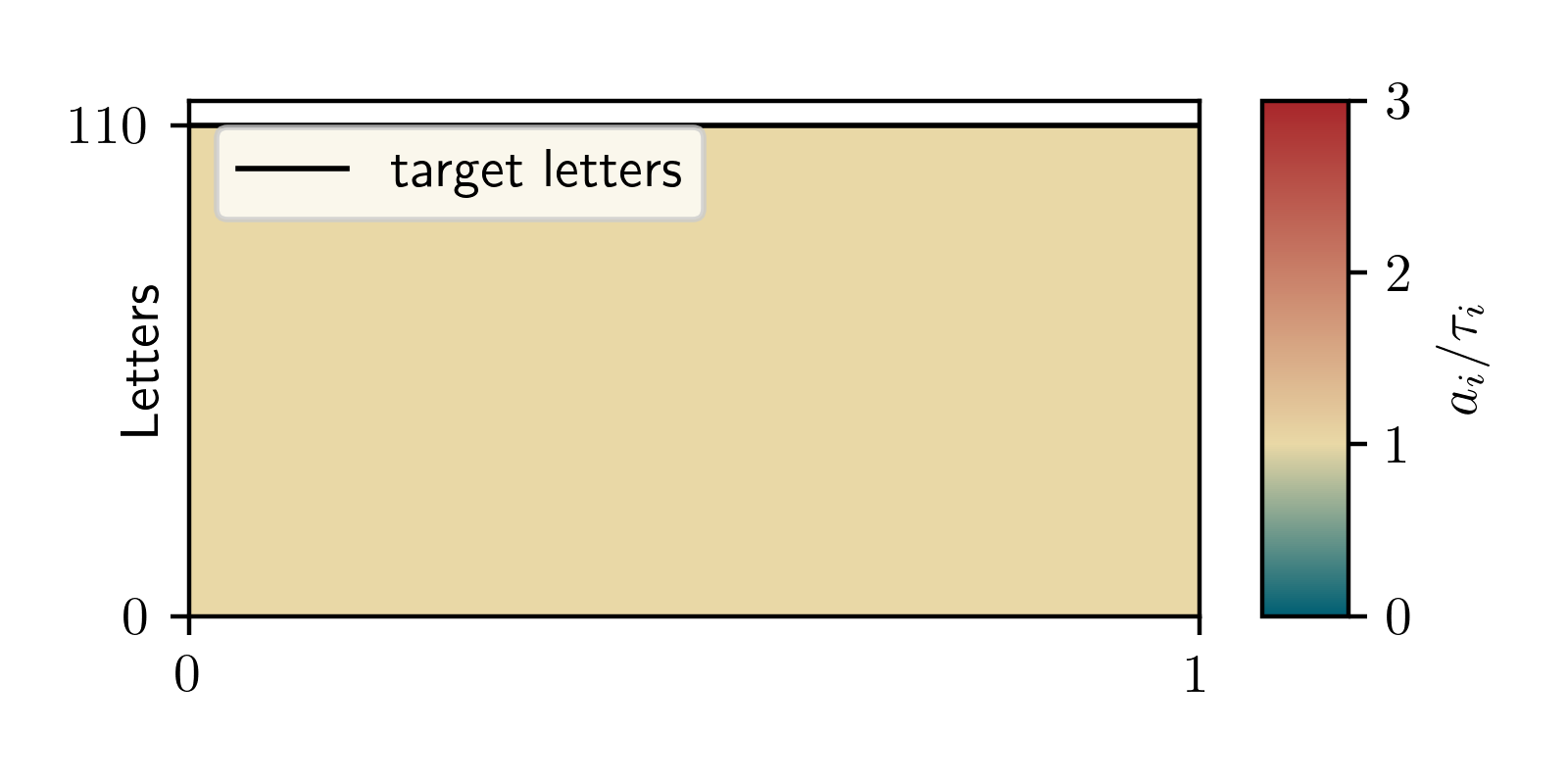}
        \caption{\greq ($t_G = 1$)}
        \label{fig:results_Schleswig-Holstein_Large_greedy_equal}
    \end{subfigure}
    \begin{subfigure}{0.32\textwidth}
        \includegraphics[draft=\draft, width=\linewidth]{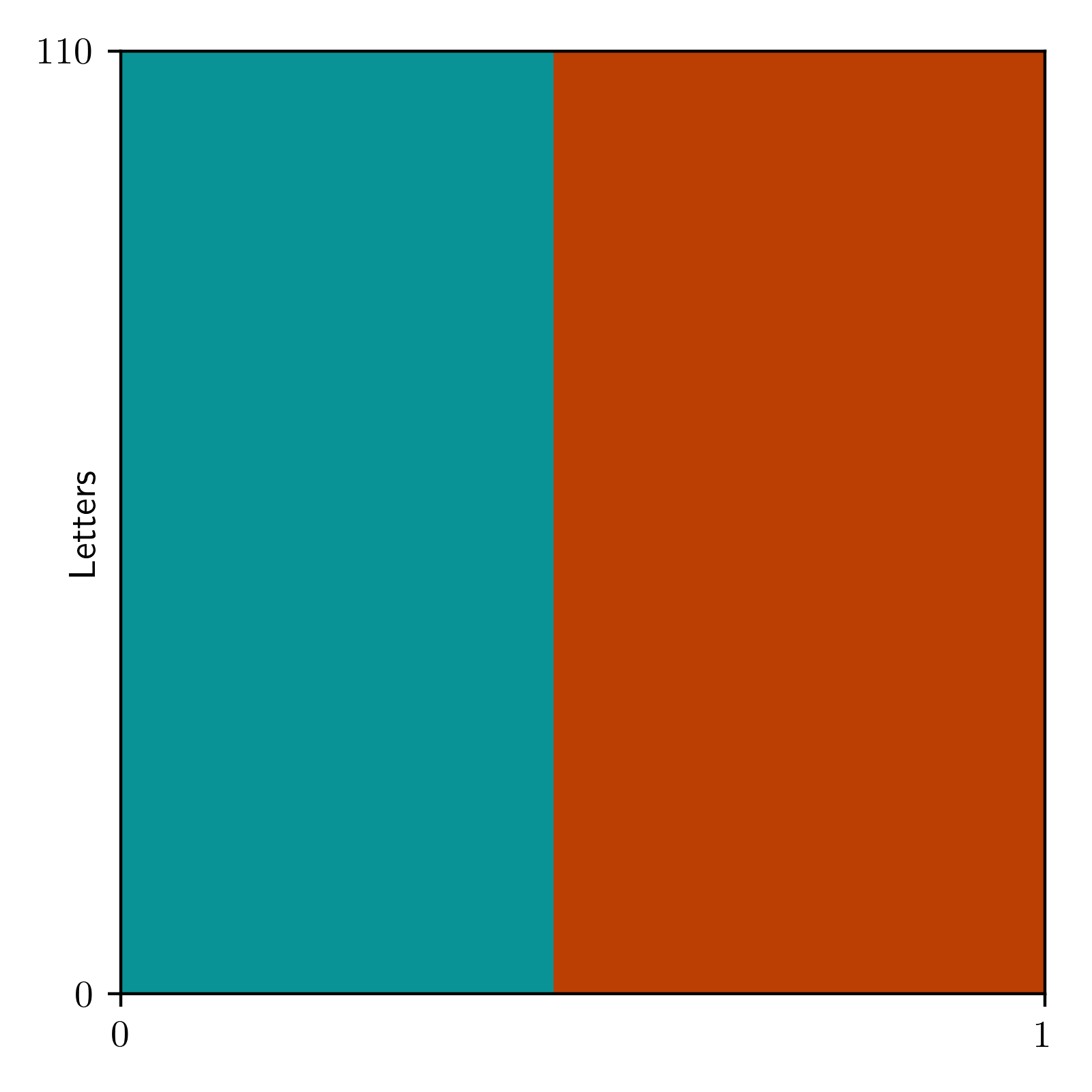}
        \includegraphics[draft=\draft, width=\linewidth]{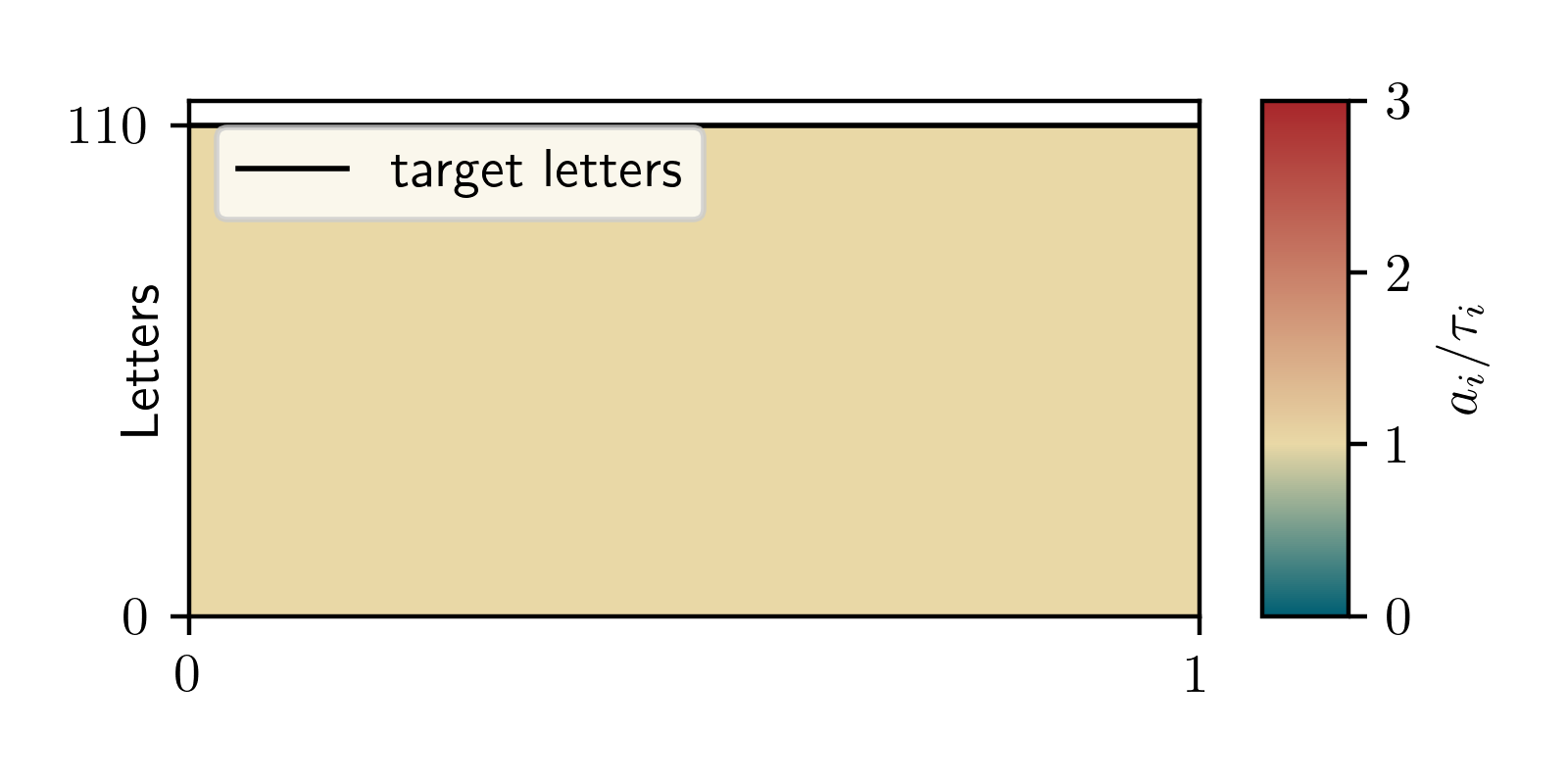}
        \caption{\colgen ($t_G\!=\!1$)}
        \label{fig:results_Schleswig-Holstein_Large_column_generation}
    \end{subfigure}
    \begin{subfigure}{0.32\textwidth}
        \includegraphics[draft=\draft, width=\linewidth]{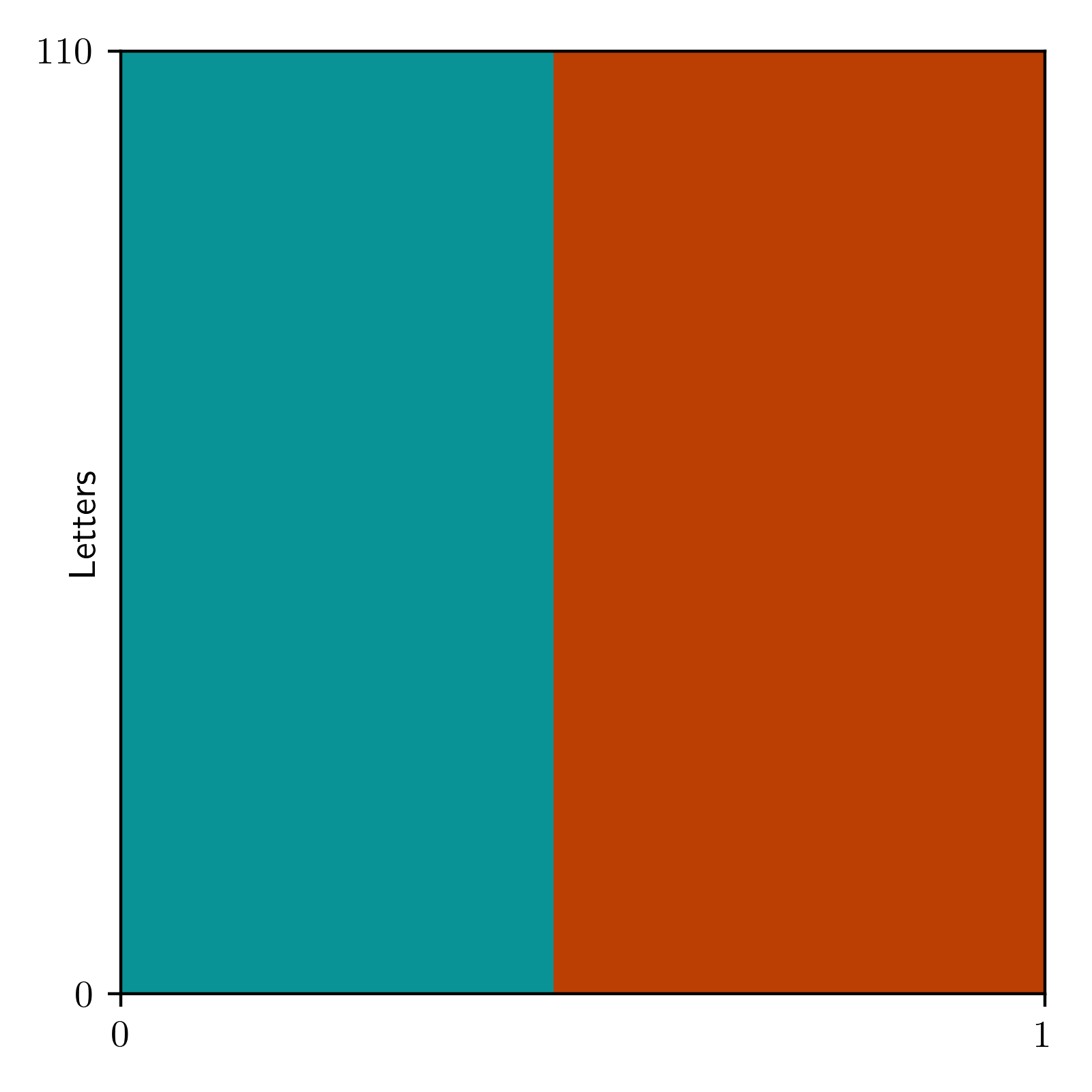}
        \includegraphics[draft=\draft, width=\linewidth]{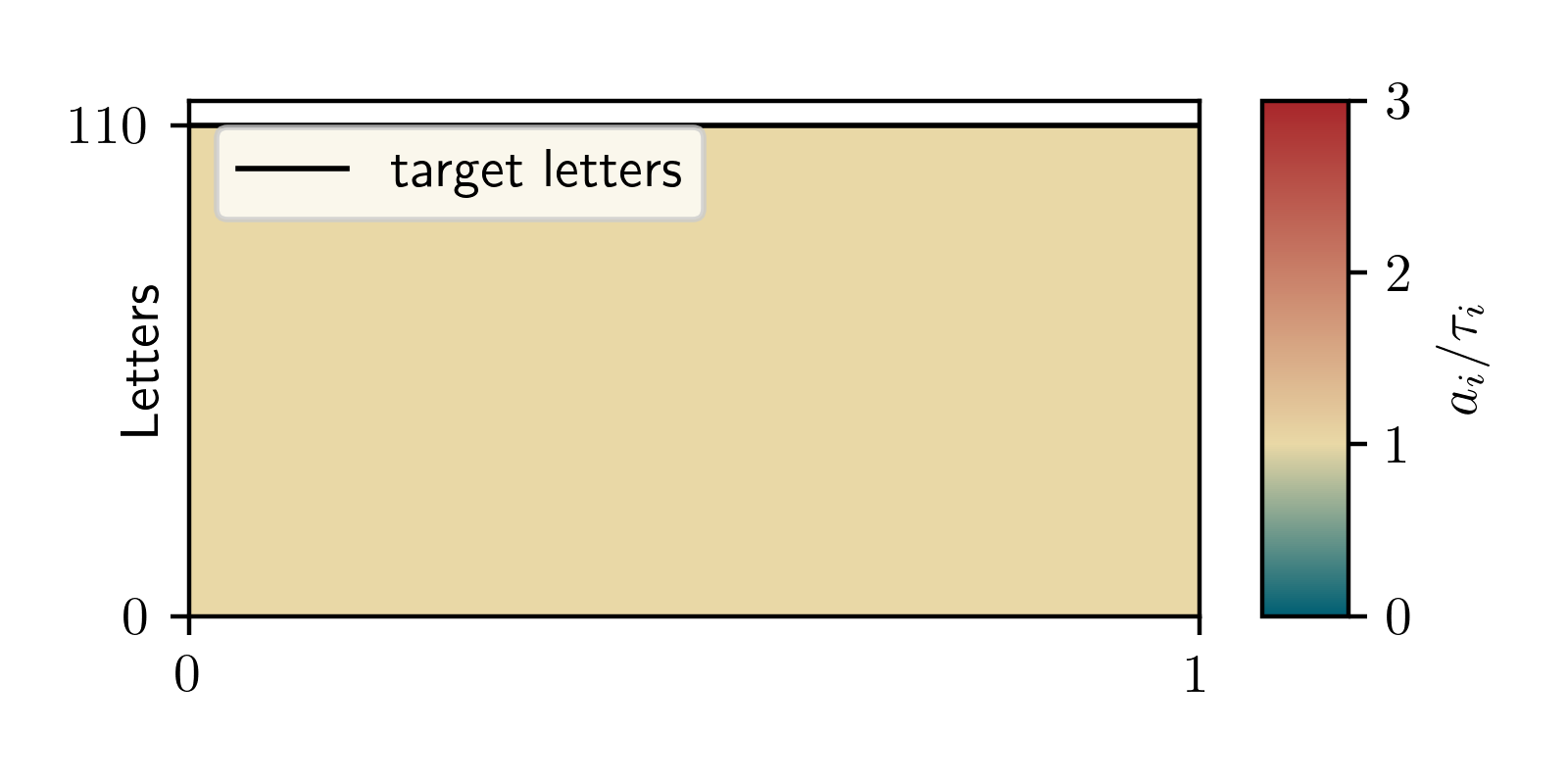}
        \caption{\buckets ($t_G = 1$)}
        \label{fig:results_Schleswig-Holstein_Large_greedy_bucket_fill}
    \end{subfigure}
    \caption{Large municipalities of Schleswig-Holstein ($\ell_G = 110$)}
    \label{fig:results_Schleswig-Holstein_Large}
\end{figure} 

\begin{figure}
    \centering
    \begin{subfigure}{0.32\textwidth}
        \includegraphics[draft=\draft, width=\linewidth]{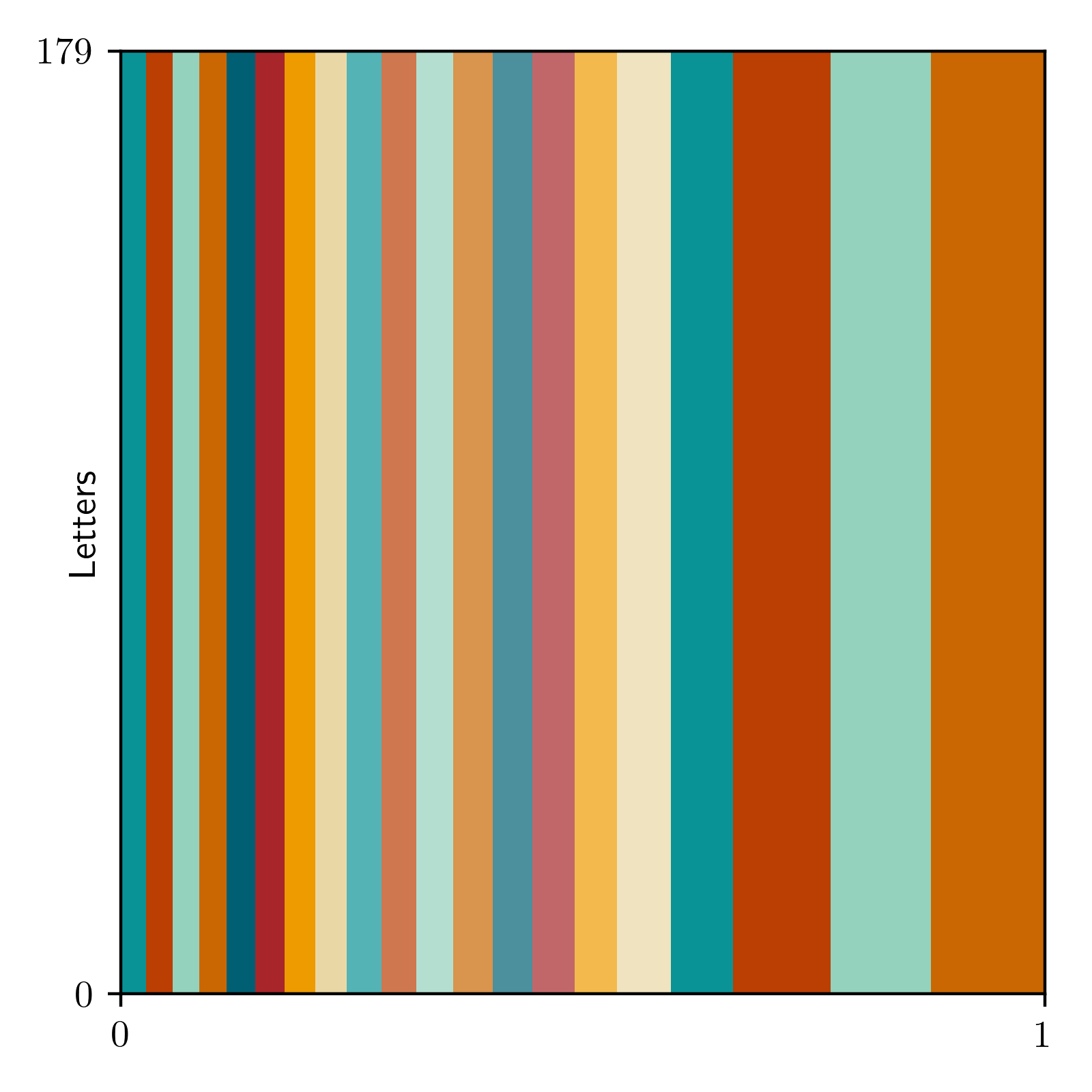}
        \includegraphics[draft=\draft, width=\linewidth]{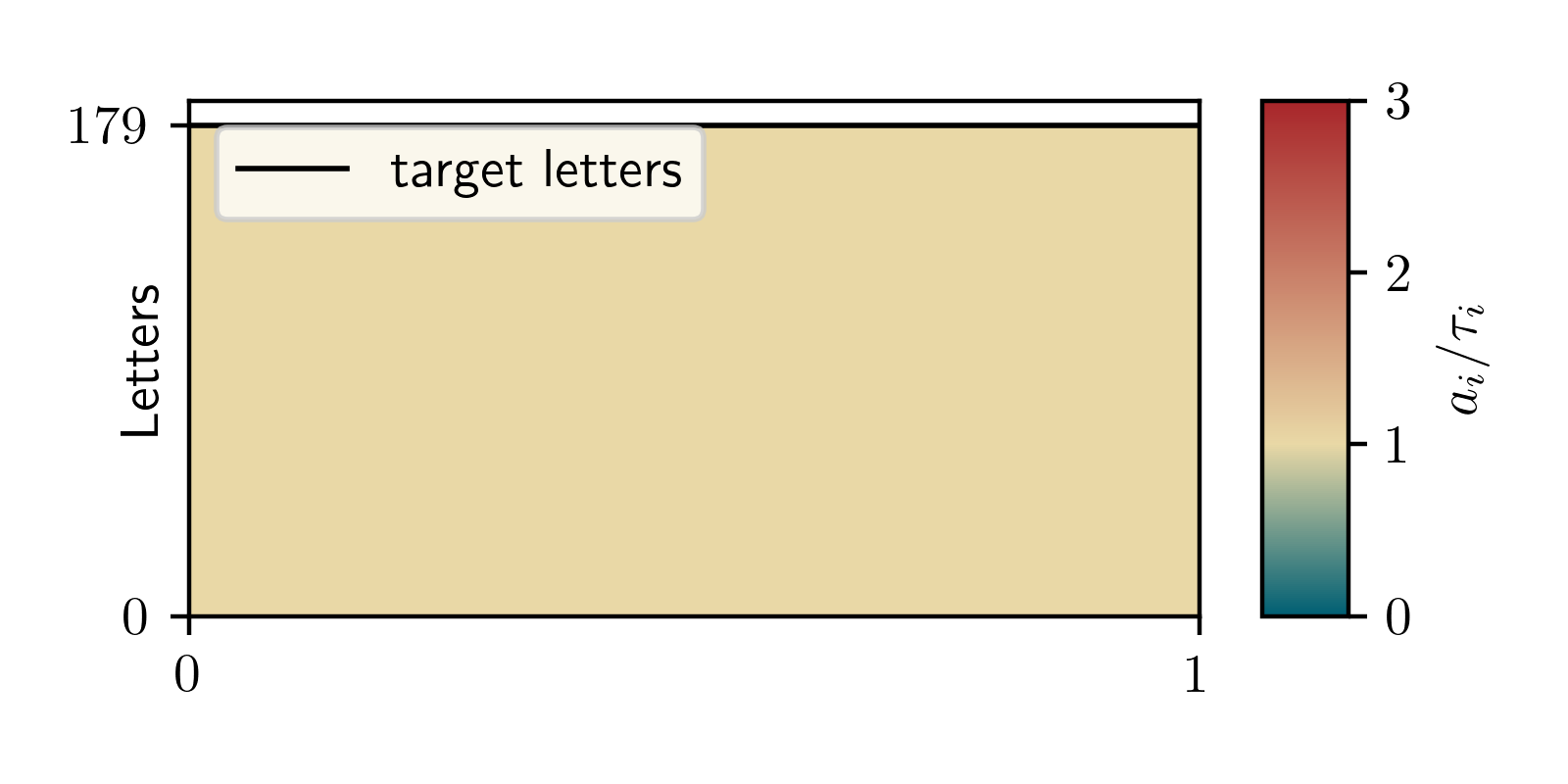}
        \caption{\greq ($t_G = 1$)}
        \label{fig:results_Schleswig-Holstein_Medium_greedy_equal}
    \end{subfigure}
    \begin{subfigure}{0.32\textwidth}
        \includegraphics[draft=\draft, width=\linewidth]{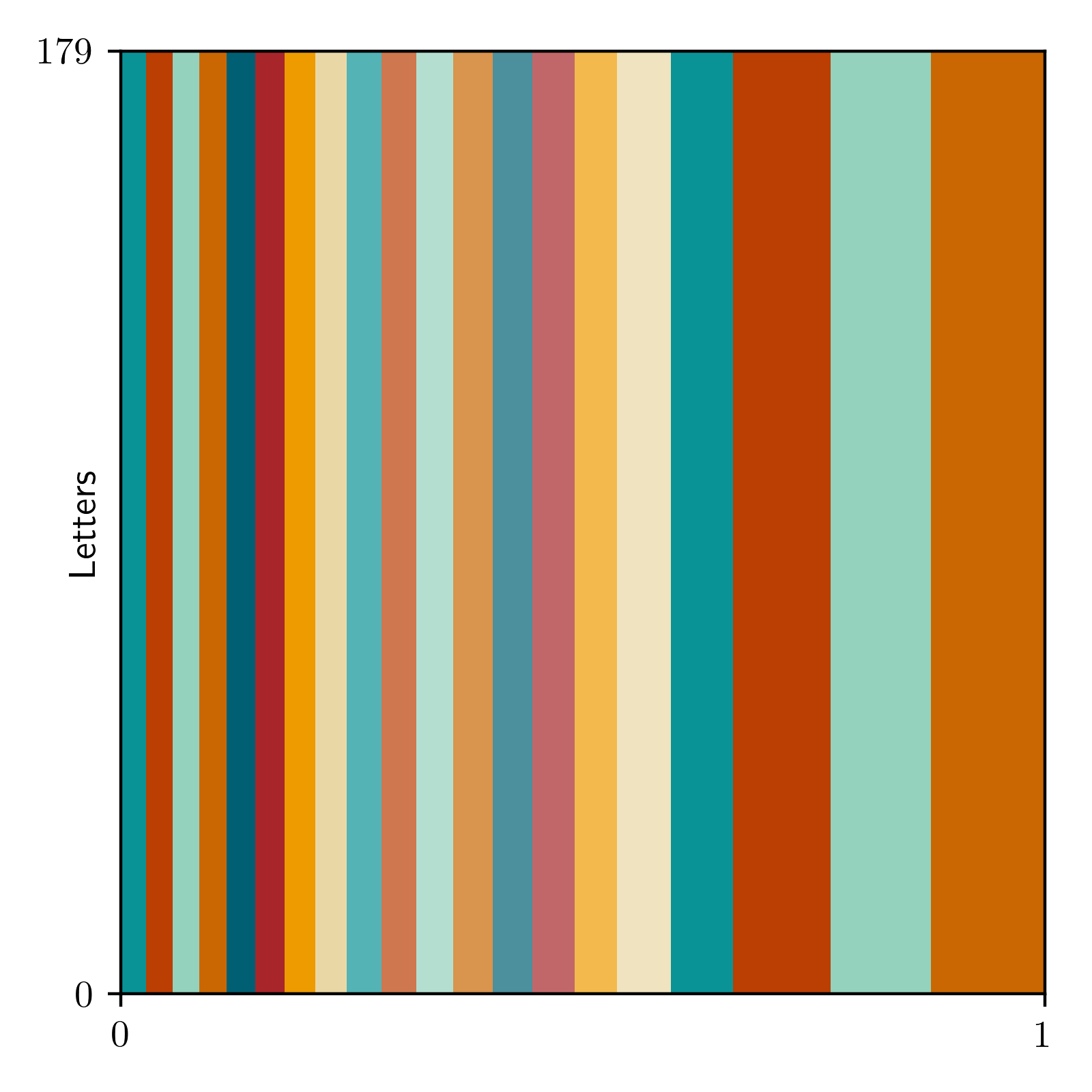}
        \includegraphics[draft=\draft, width=\linewidth]{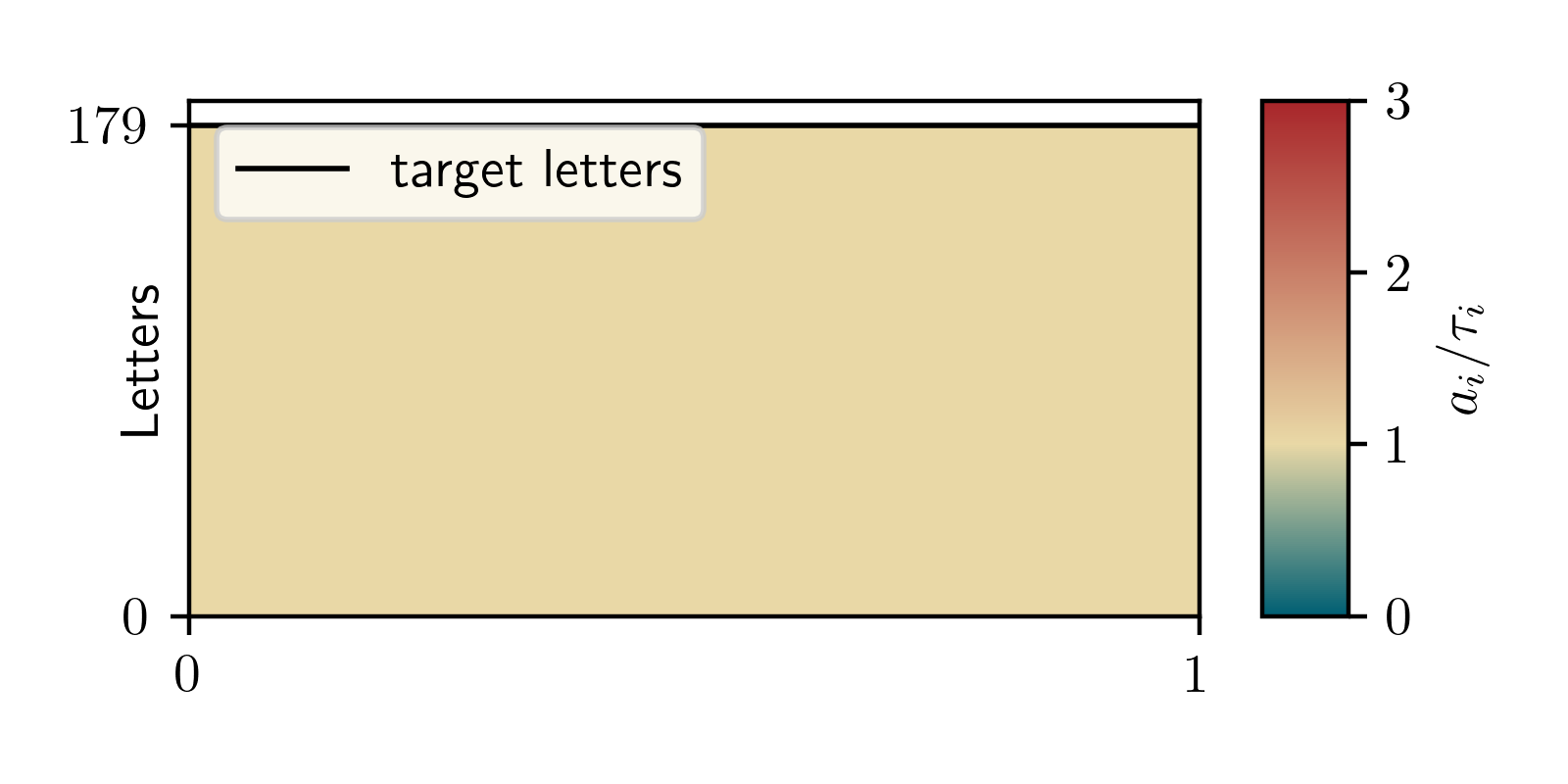}
        \caption{\colgen ($t_G\!=\!1$)}
        \label{fig:results_Schleswig-Holstein_Medium_column_generation}
    \end{subfigure}
    \begin{subfigure}{0.32\textwidth}
        \includegraphics[draft=\draft, width=\linewidth]{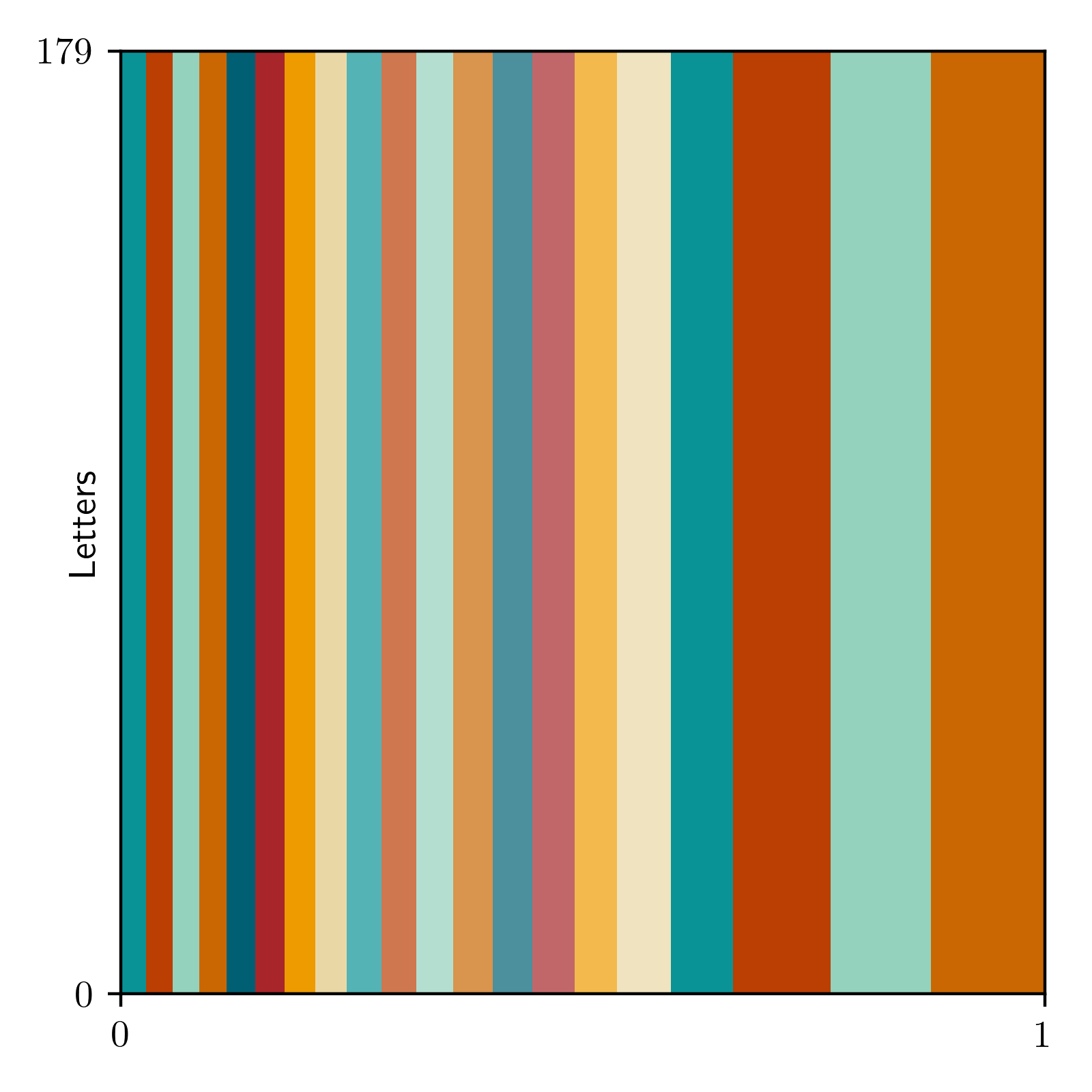}
        \includegraphics[draft=\draft, width=\linewidth]{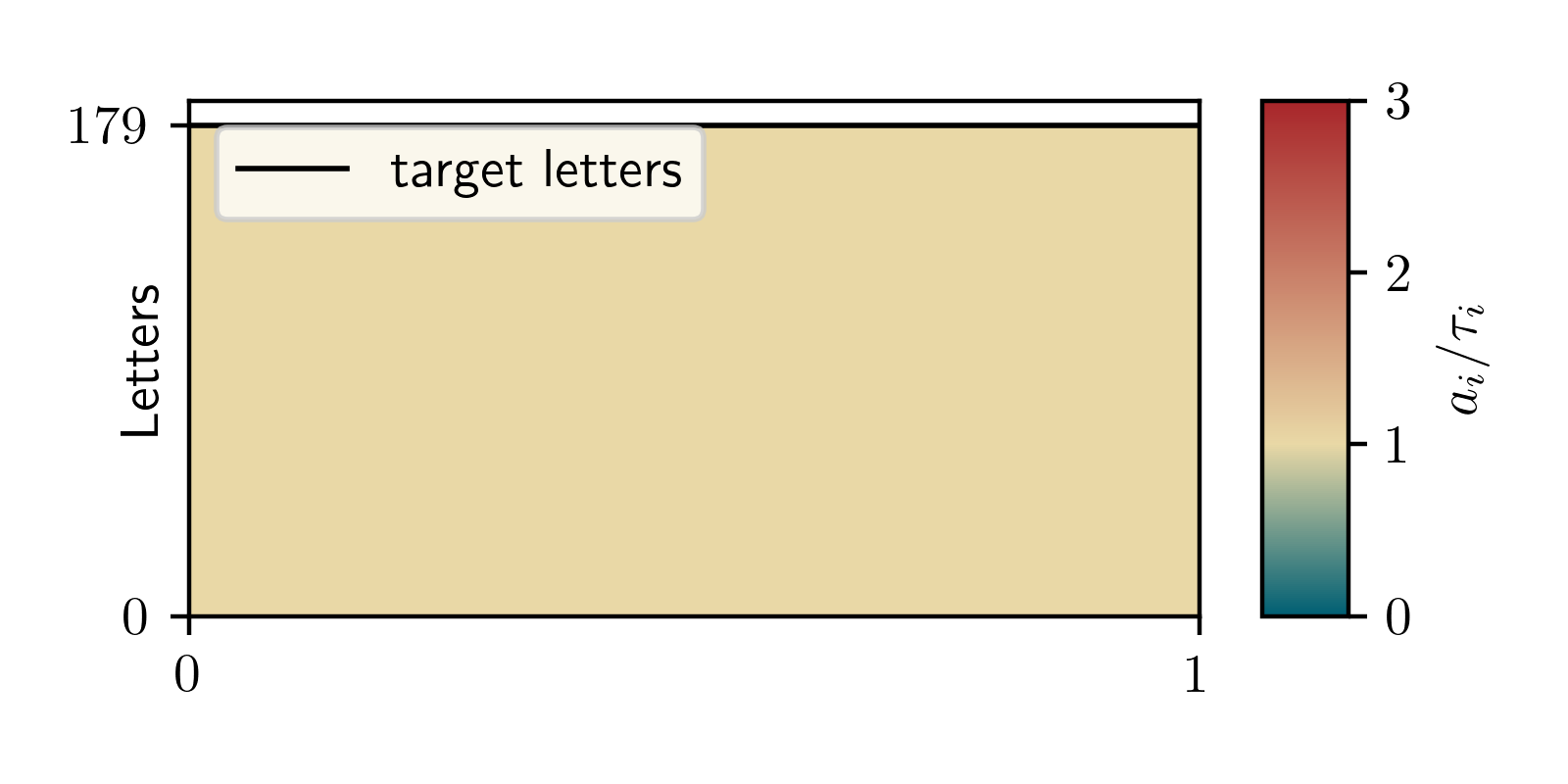}
        \caption{\buckets ($t_G = 1$)}
        \label{fig:results_Schleswig-Holstein_Medium_greedy_bucket_fill}
    \end{subfigure}
    \caption{Medium municipalities of Schleswig-Holstein ($\ell_G = 179$)}
    \label{fig:results_Schleswig-Holstein_Medium}
\end{figure} 

\begin{figure}
    \centering
    \begin{subfigure}{0.32\textwidth}
        \includegraphics[draft=\draft, width=\linewidth]{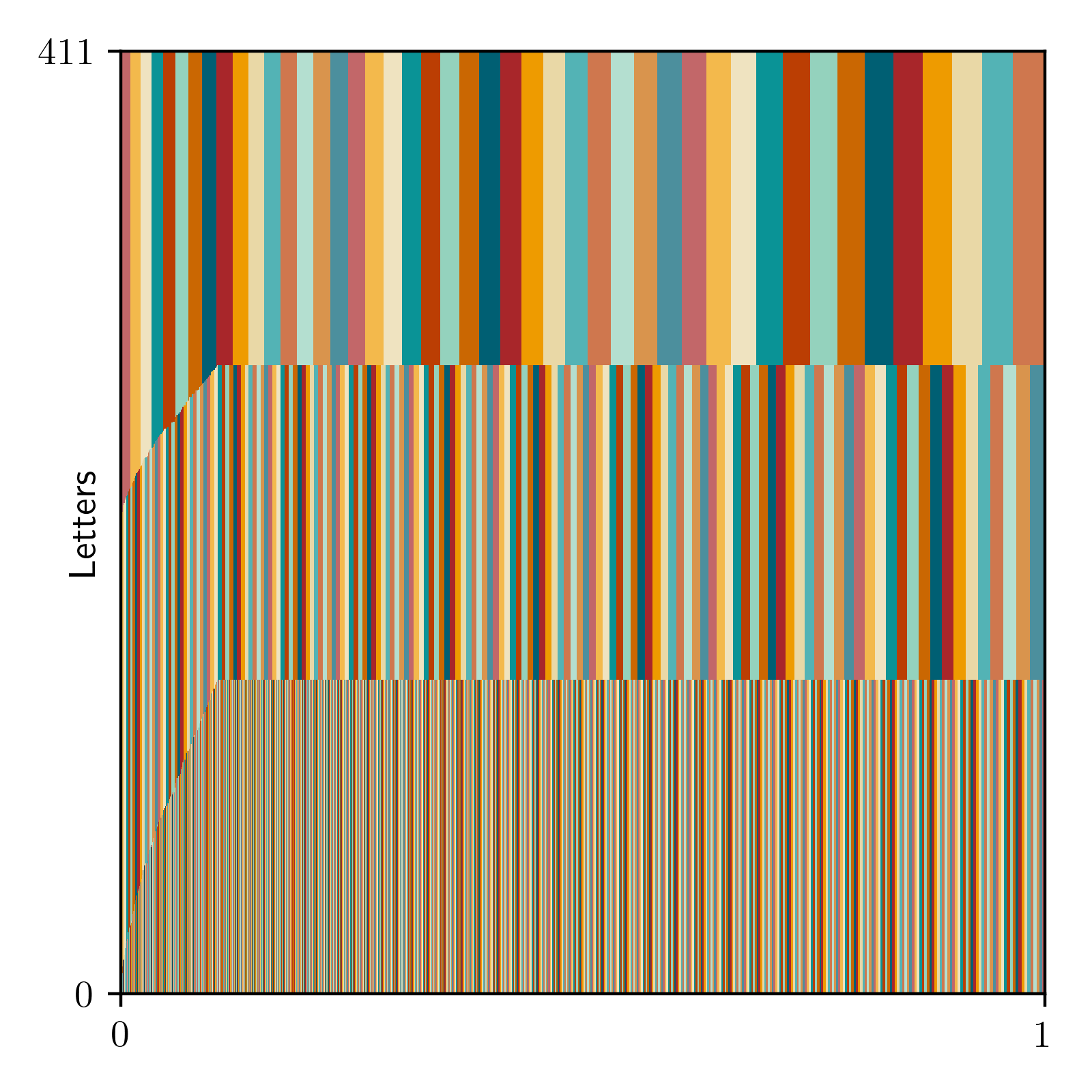}
        \includegraphics[draft=\draft, width=\linewidth]{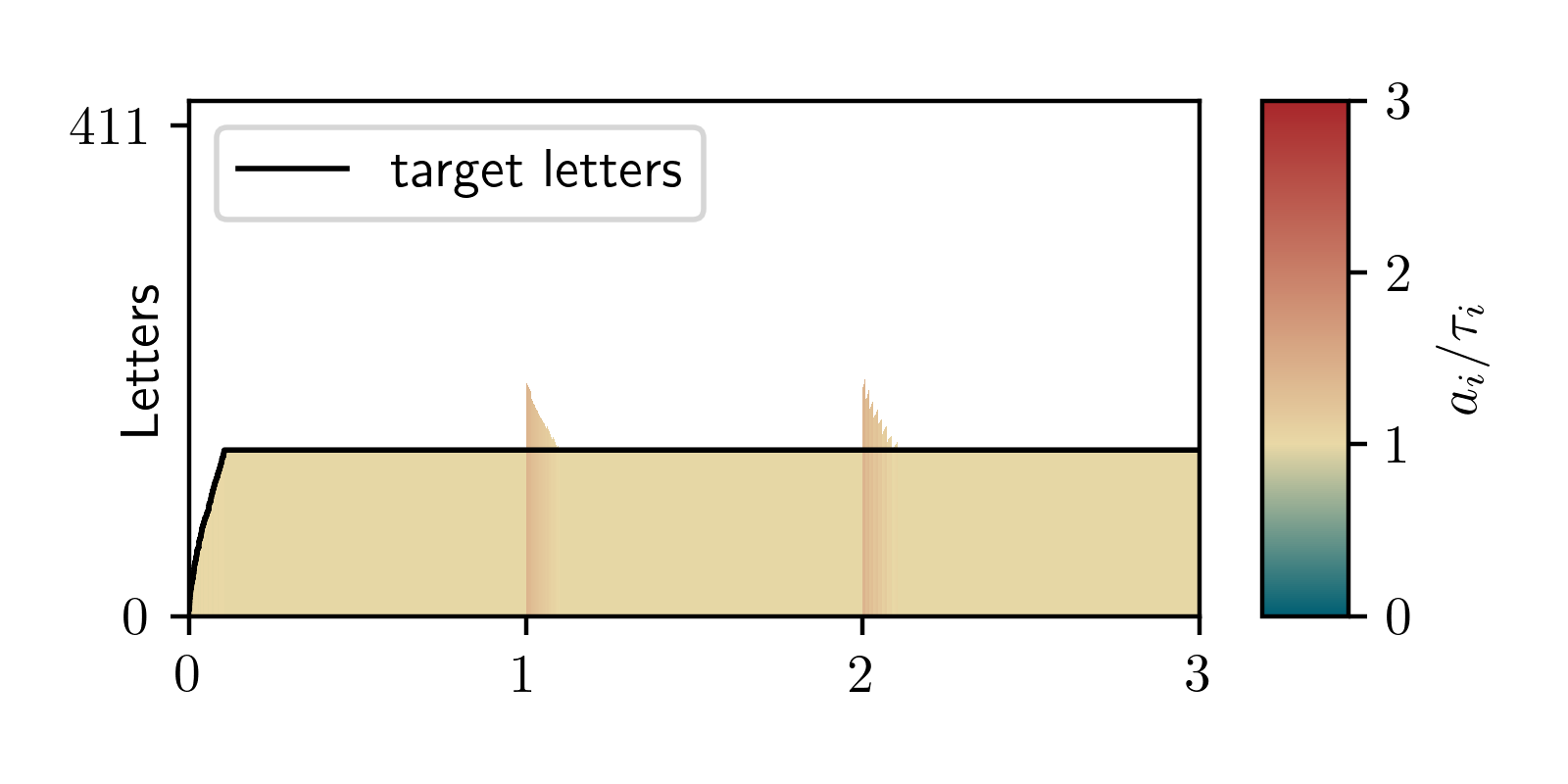}
        \caption{\greq ($t_G = 3$)}
        \label{fig:results_Schleswig-Holstein_Small_greedy_equal}
    \end{subfigure}
    \begin{subfigure}{0.32\textwidth}
        \includegraphics[draft=\draft, width=\linewidth]{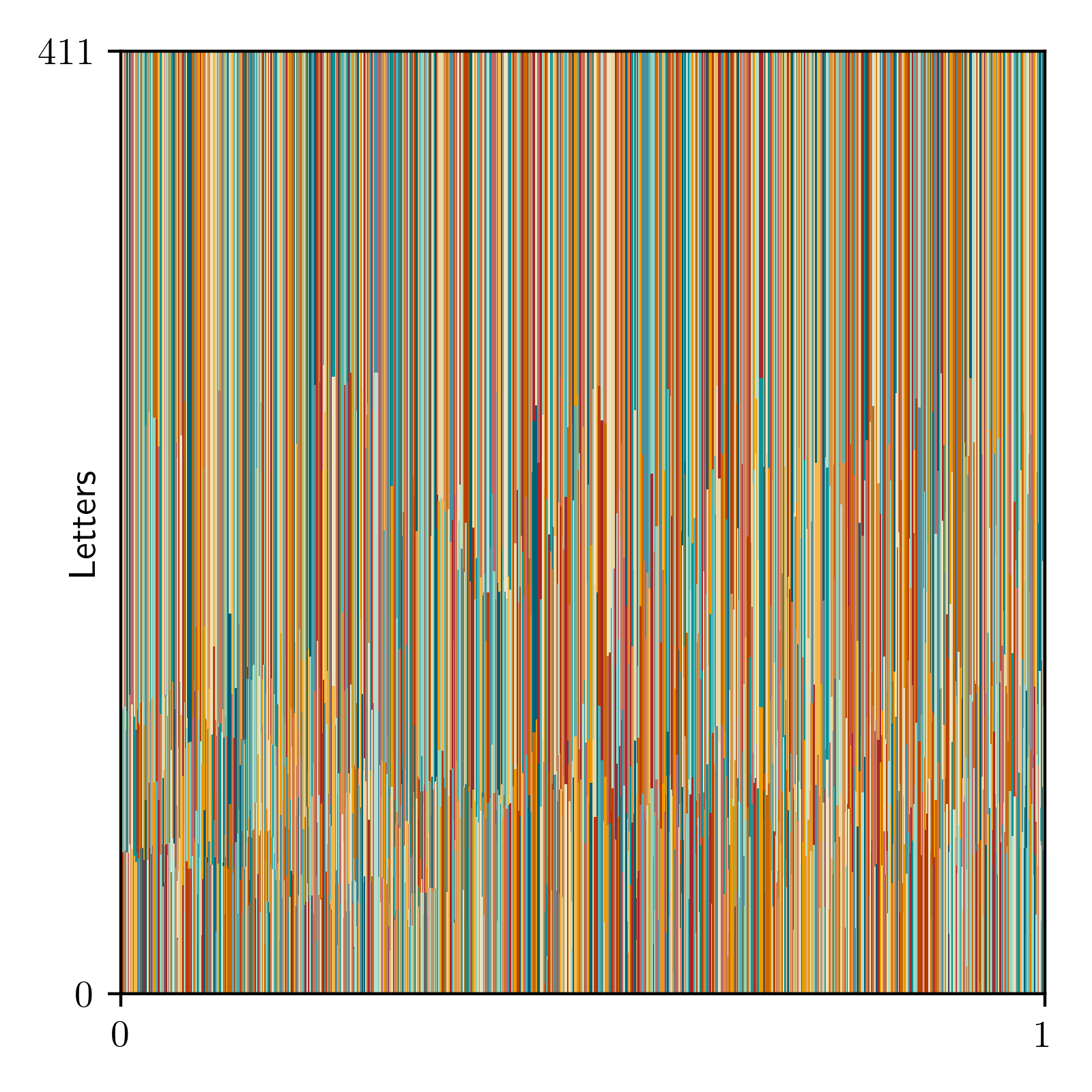}
        \includegraphics[draft=\draft, width=\linewidth]{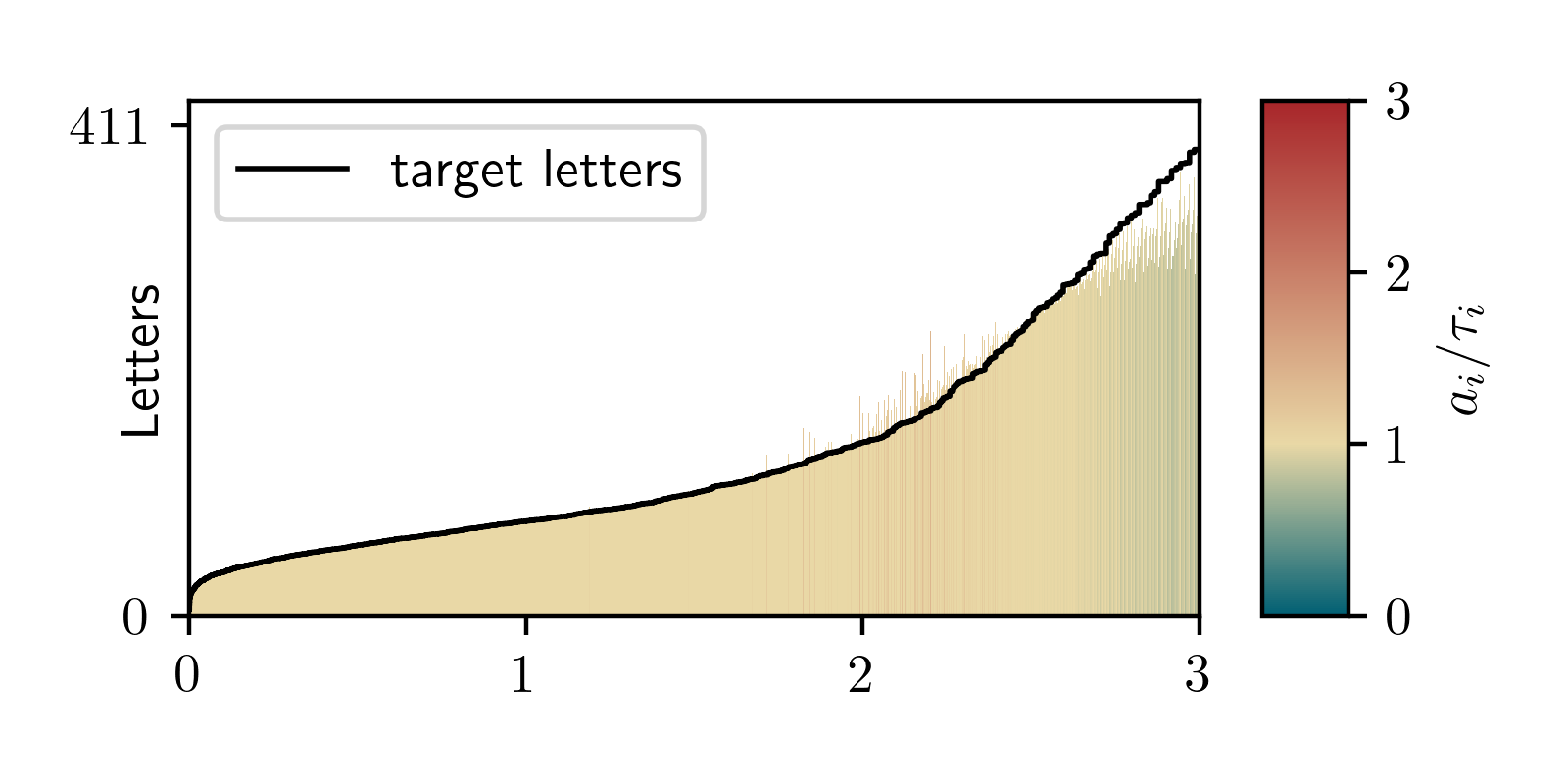}
        \caption{\colgen ($t_G\!=\!3$)}
        \label{fig:results_Schleswig-Holstein_Small_column_generation}
    \end{subfigure}
    \begin{subfigure}{0.32\textwidth}
        \includegraphics[draft=\draft, width=\linewidth]{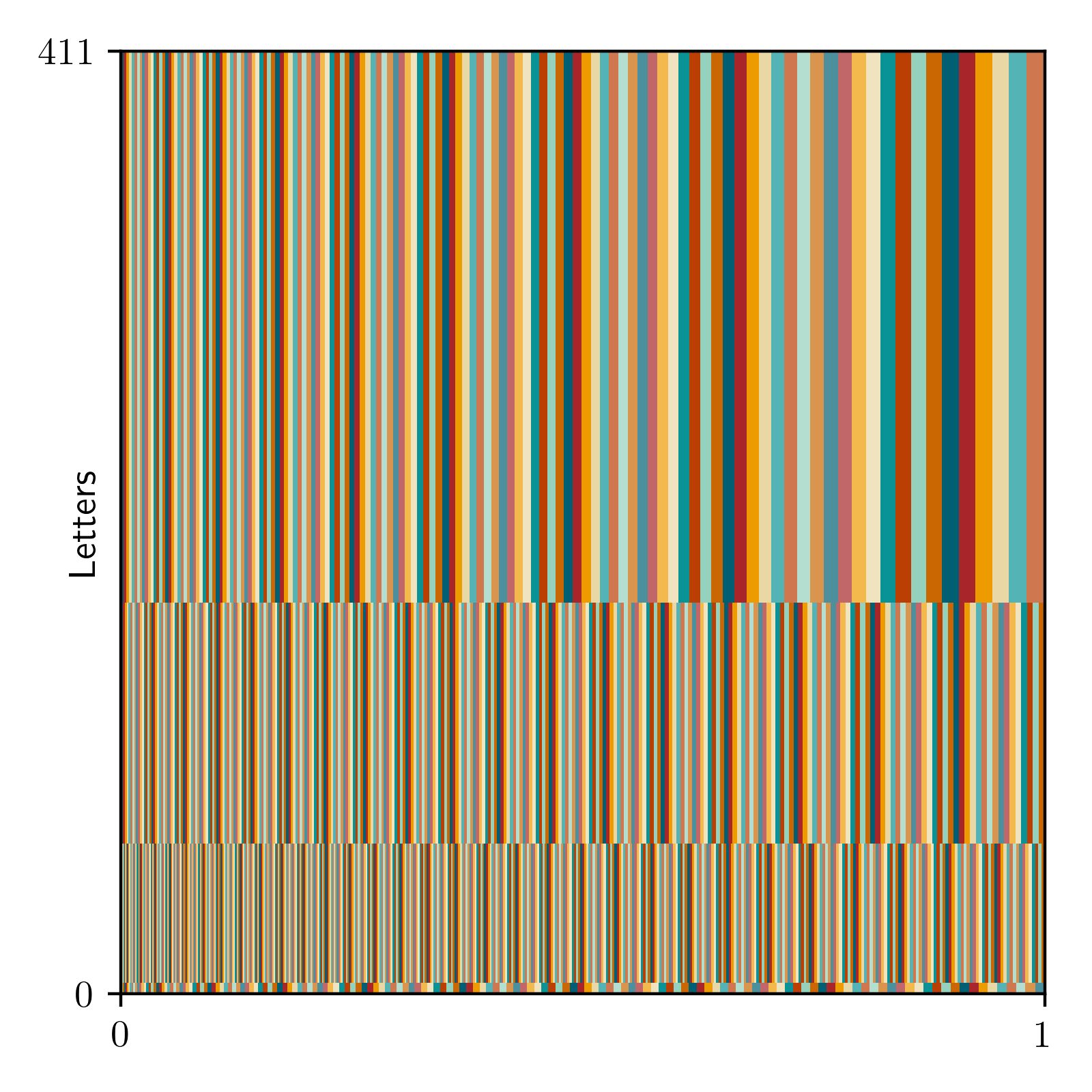}
        \includegraphics[draft=\draft, width=\linewidth]{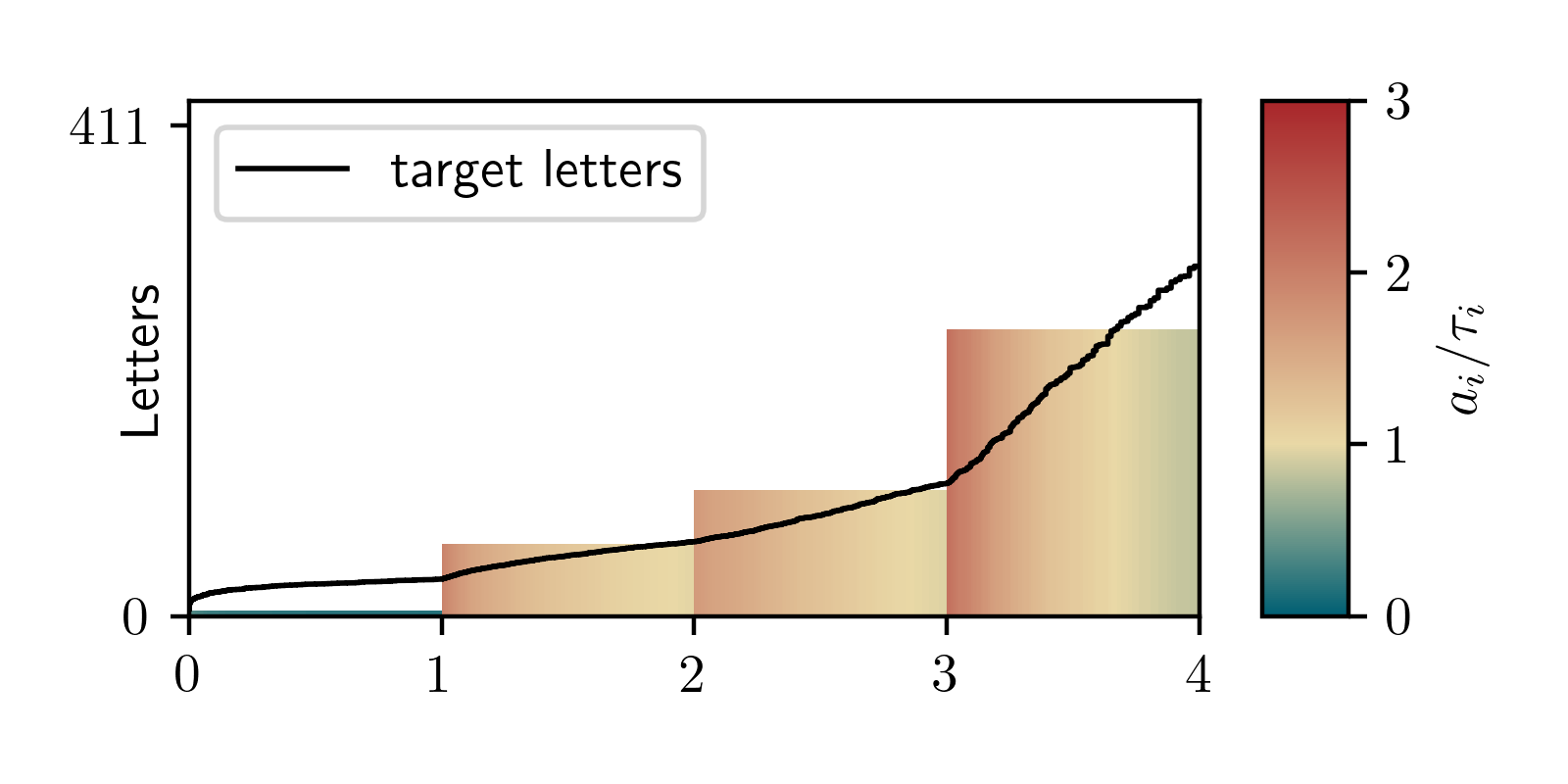}
        \caption{\buckets ($t_G = 4$)}
        \label{fig:results_Schleswig-Holstein_Small_greedy_bucket_fill}
    \end{subfigure}
    \caption{Small municipalities of Schleswig-Holstein ($\ell_G = 411$)}
    \label{fig:results_Schleswig-Holstein_Small}
\end{figure} 

\begin{figure}
    \centering
    \begin{subfigure}{0.32\textwidth}
        \includegraphics[draft=\draft, width=\linewidth]{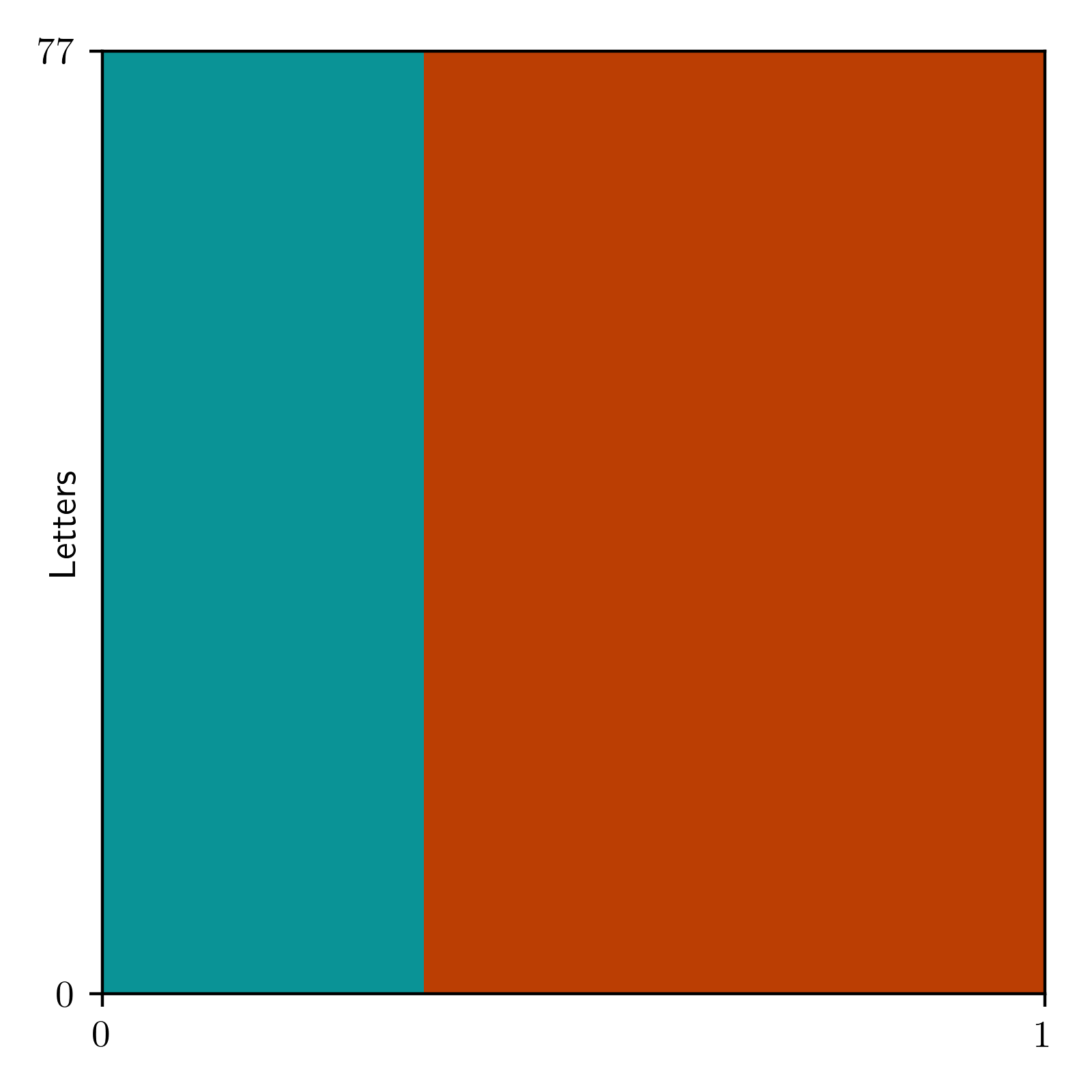}
        \includegraphics[draft=\draft, width=\linewidth]{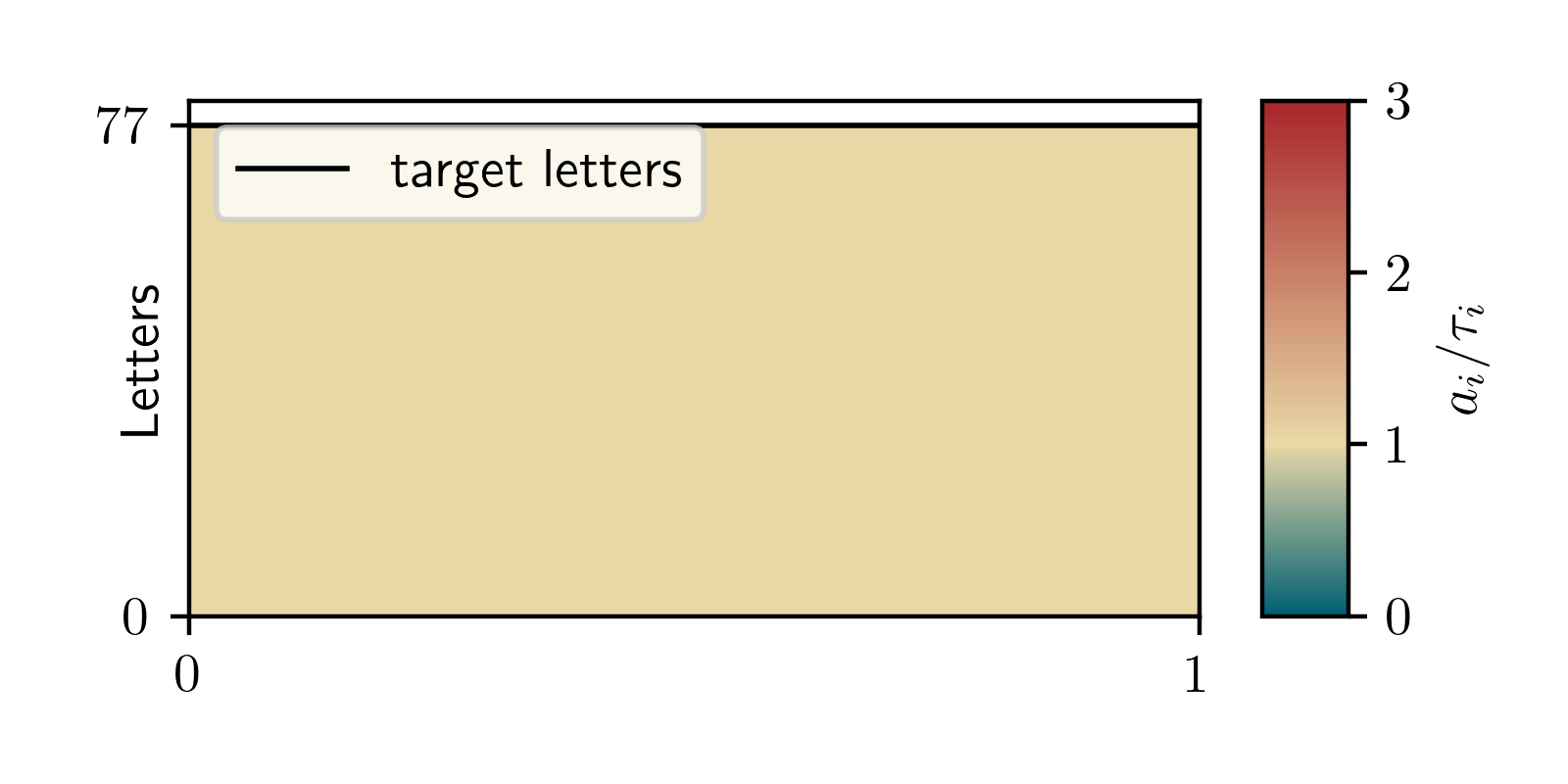}
        \caption{\greq ($t_G = 1$)}
        \label{fig:results_Thüringen_Large_greedy_equal}
    \end{subfigure}
    \begin{subfigure}{0.32\textwidth}
        \includegraphics[draft=\draft, width=\linewidth]{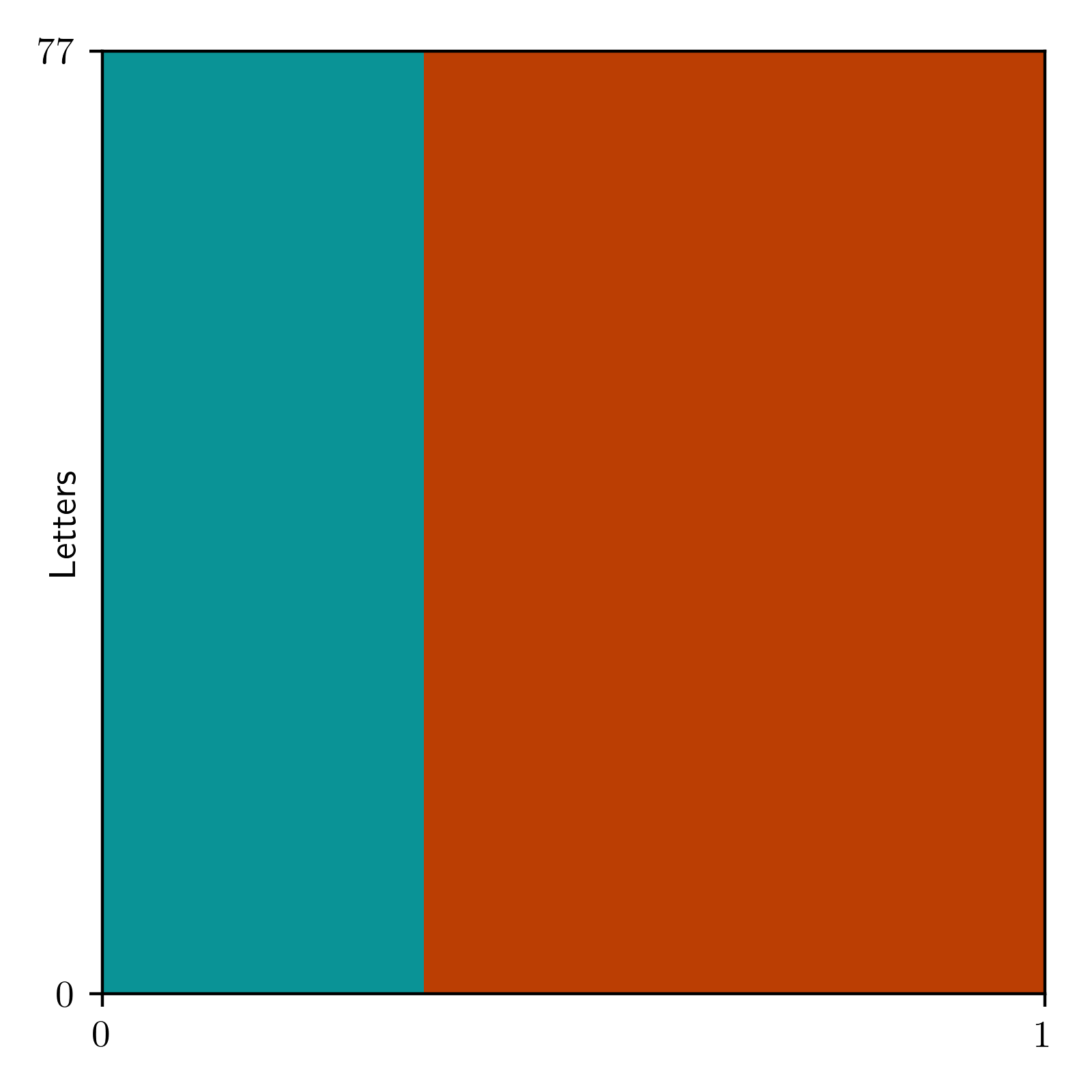}
        \includegraphics[draft=\draft, width=\linewidth]{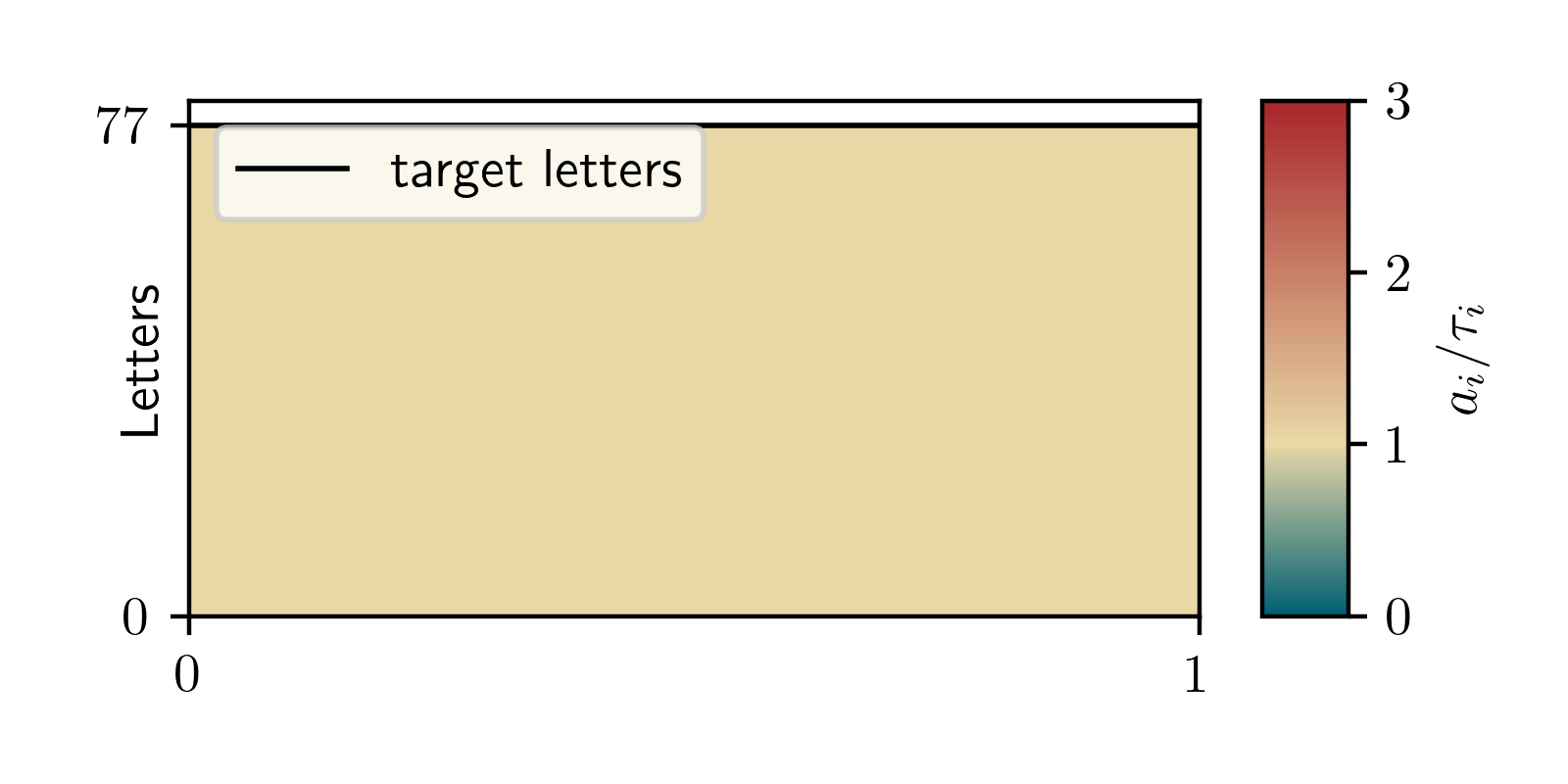}
        \caption{\colgen ($t_G\!=\!1$)}
        \label{fig:results_Thüringen_Large_column_generation}
    \end{subfigure}
    \begin{subfigure}{0.32\textwidth}
        \includegraphics[draft=\draft, width=\linewidth]{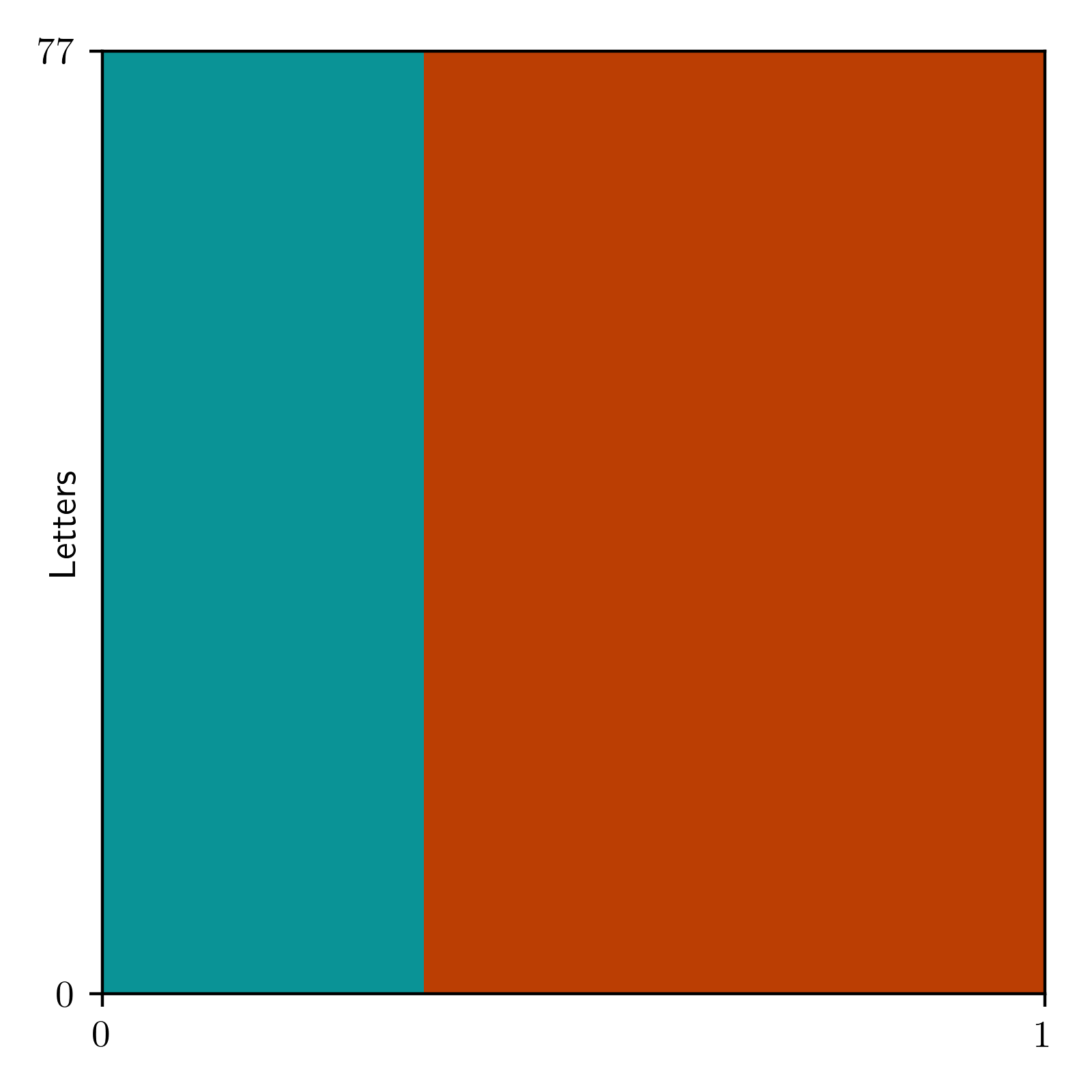}
        \includegraphics[draft=\draft, width=\linewidth]{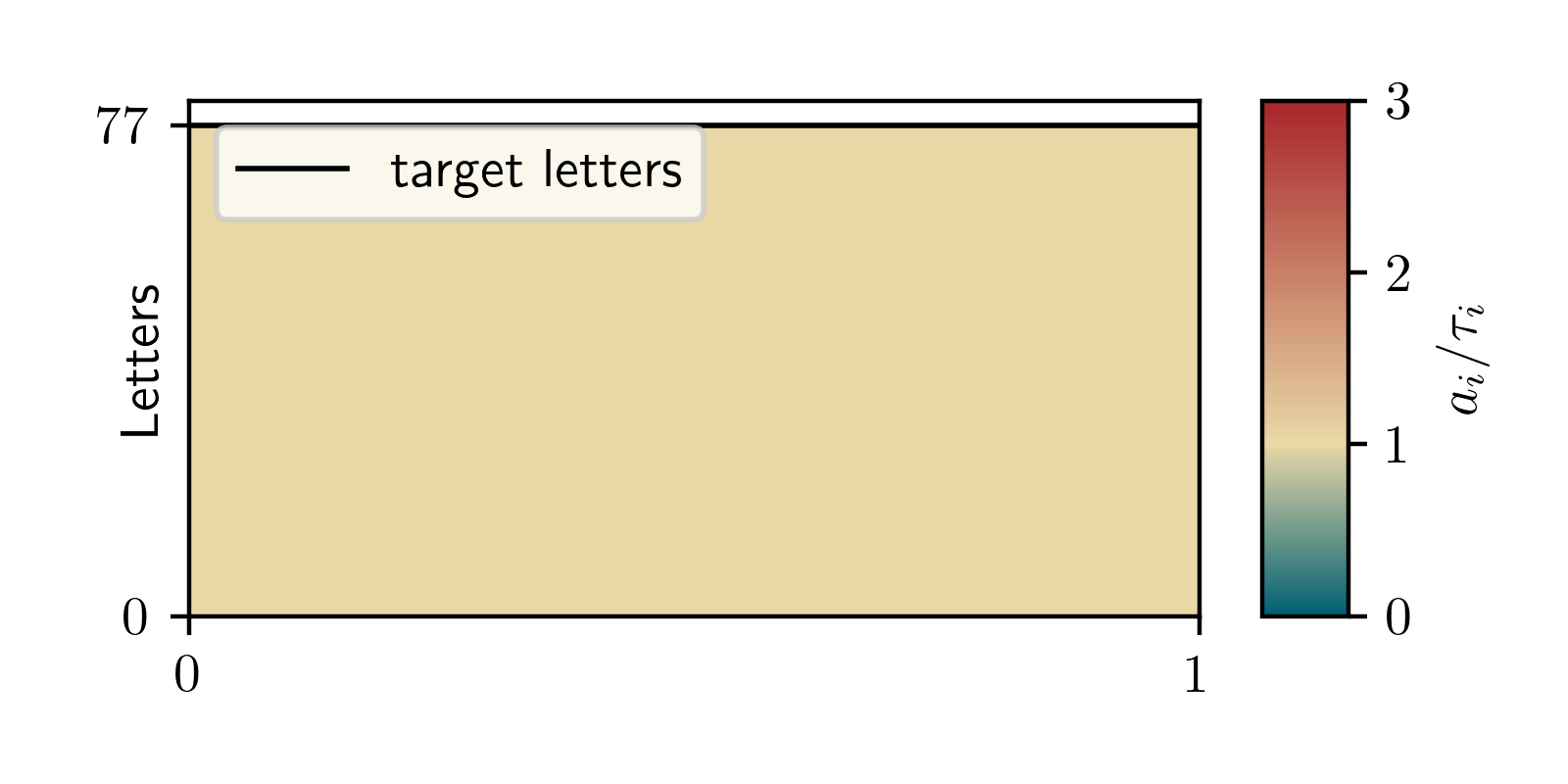}
        \caption{\buckets ($t_G = 1$)}
        \label{fig:results_Thüringen_Large_greedy_bucket_fill}
    \end{subfigure}
    \caption{Large municipalities of Thüringen ($\ell_G = 77$)}
    \label{fig:results_Thüringen_Large}
\end{figure} 

\begin{figure}
    \centering
    \begin{subfigure}{0.32\textwidth}
        \includegraphics[draft=\draft, width=\linewidth]{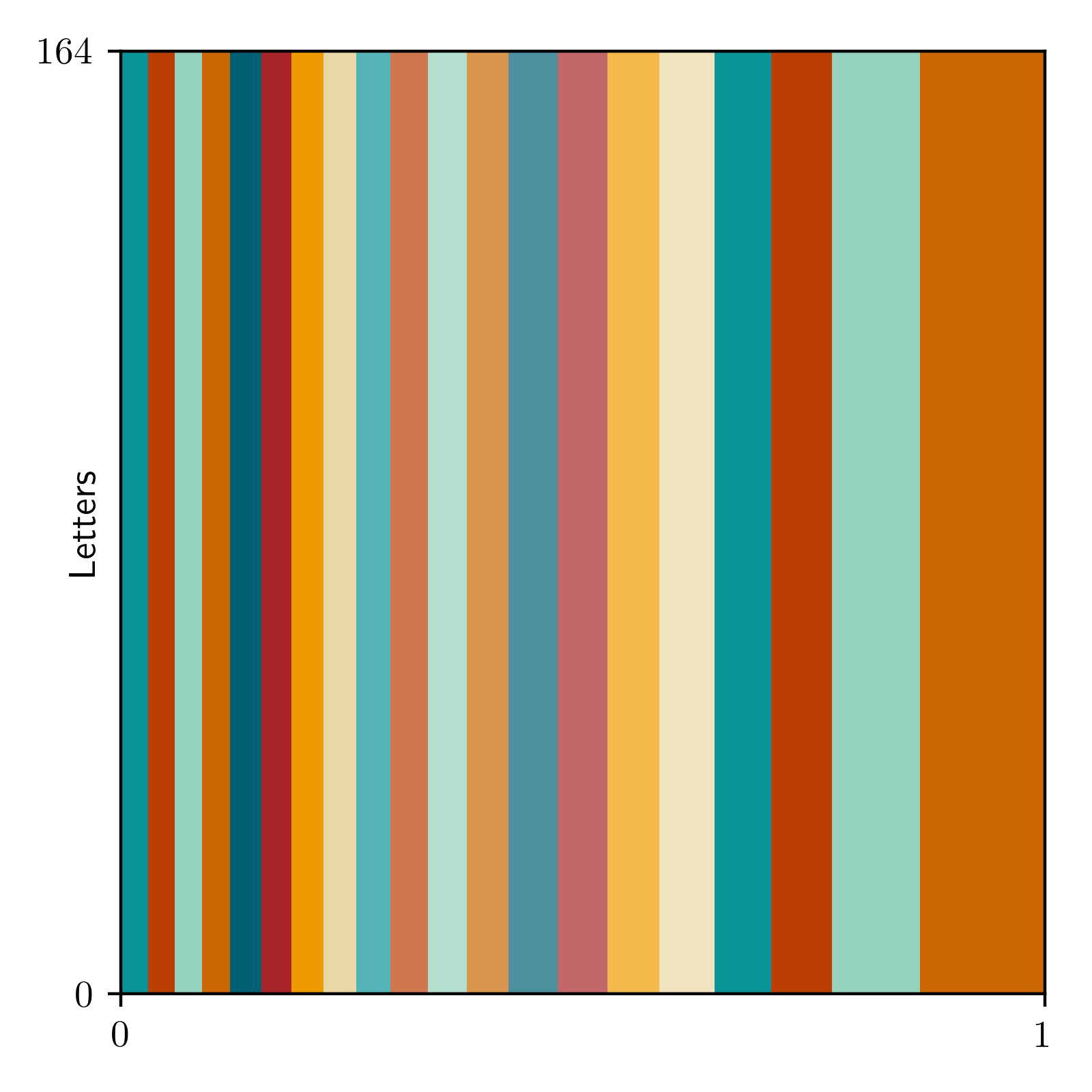}
        \includegraphics[draft=\draft, width=\linewidth]{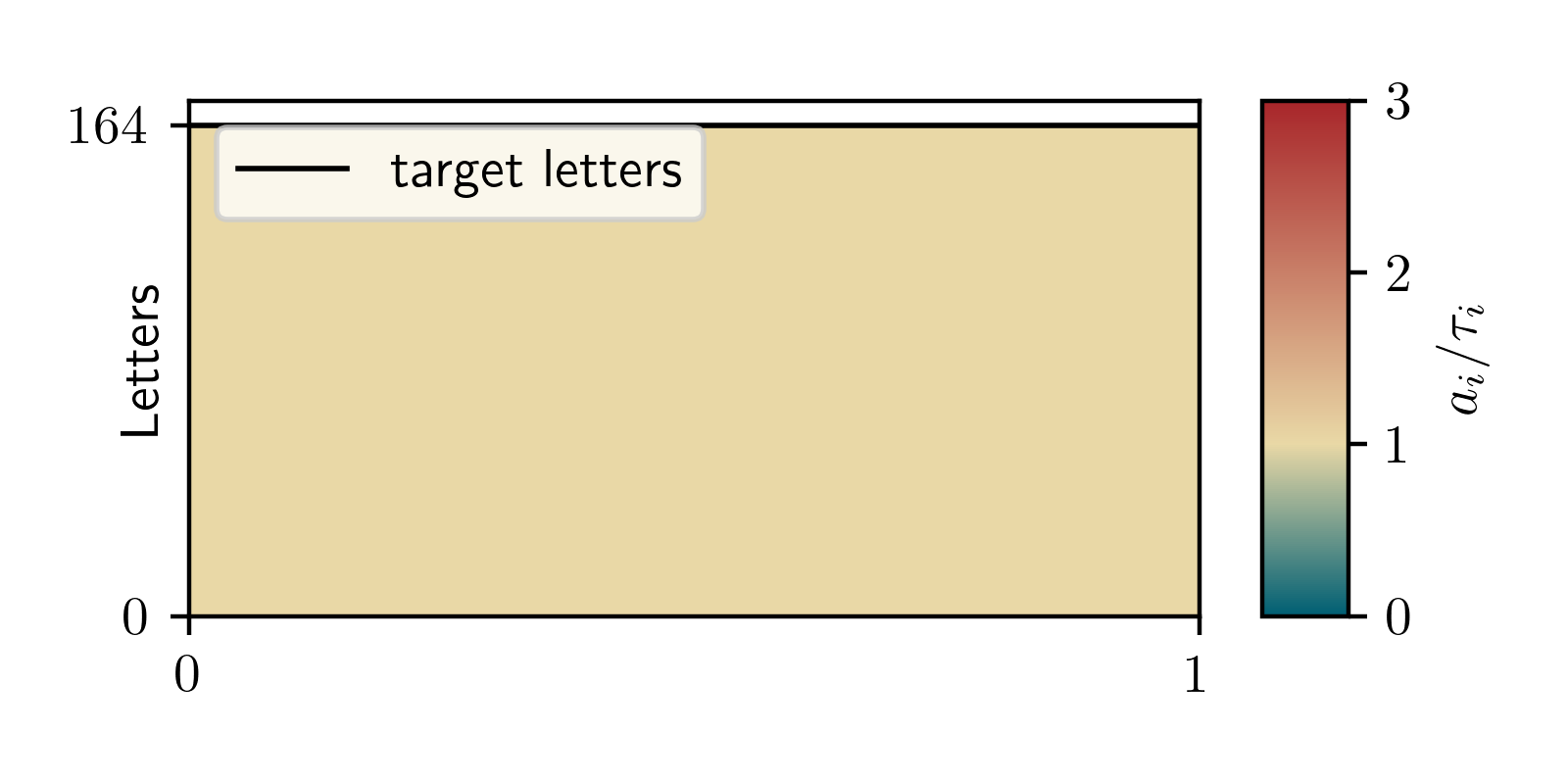}
        \caption{\greq ($t_G = 1$)}
        \label{fig:results_Thüringen_Medium_greedy_equal}
    \end{subfigure}
    \begin{subfigure}{0.32\textwidth}
        \includegraphics[draft=\draft, width=\linewidth]{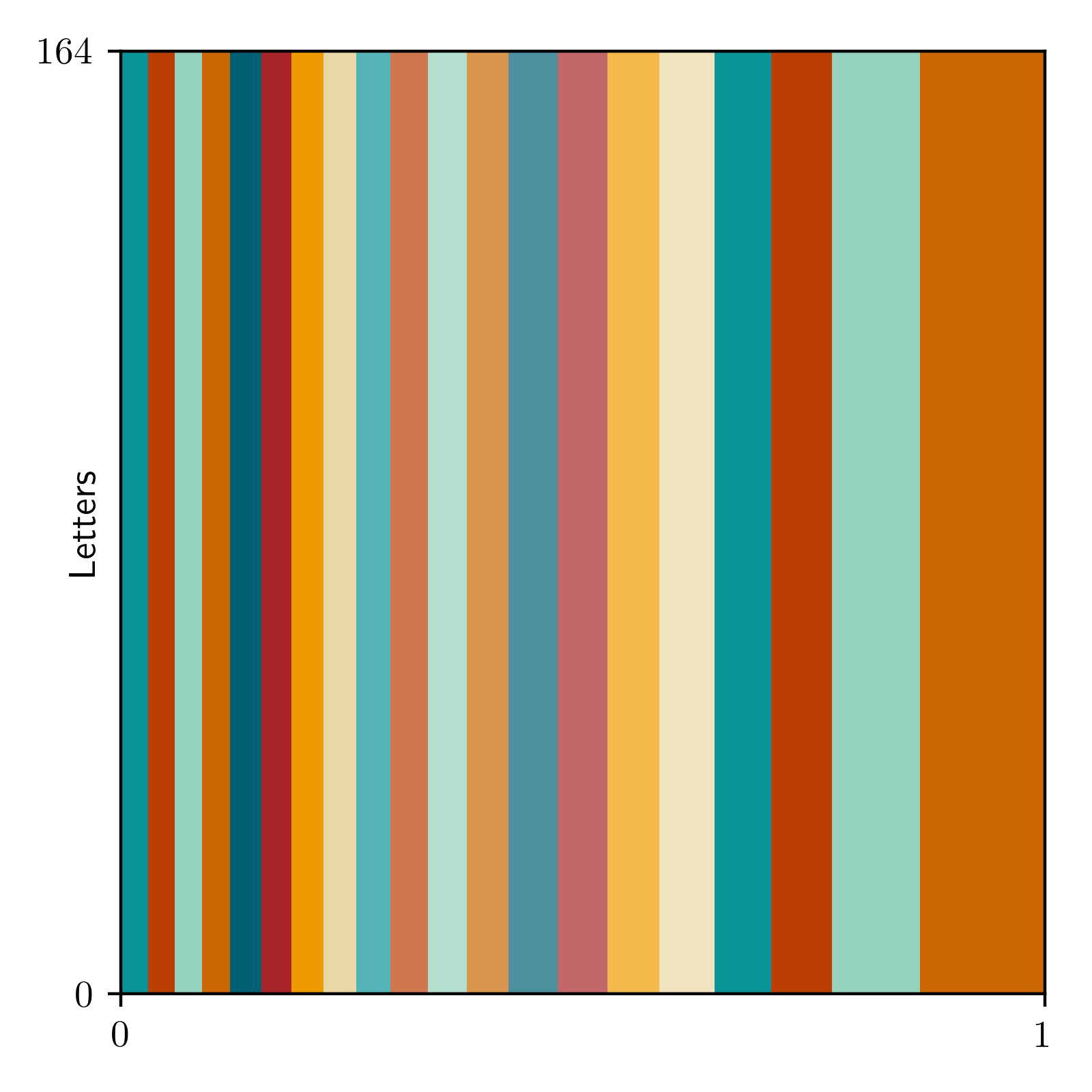}
        \includegraphics[draft=\draft, width=\linewidth]{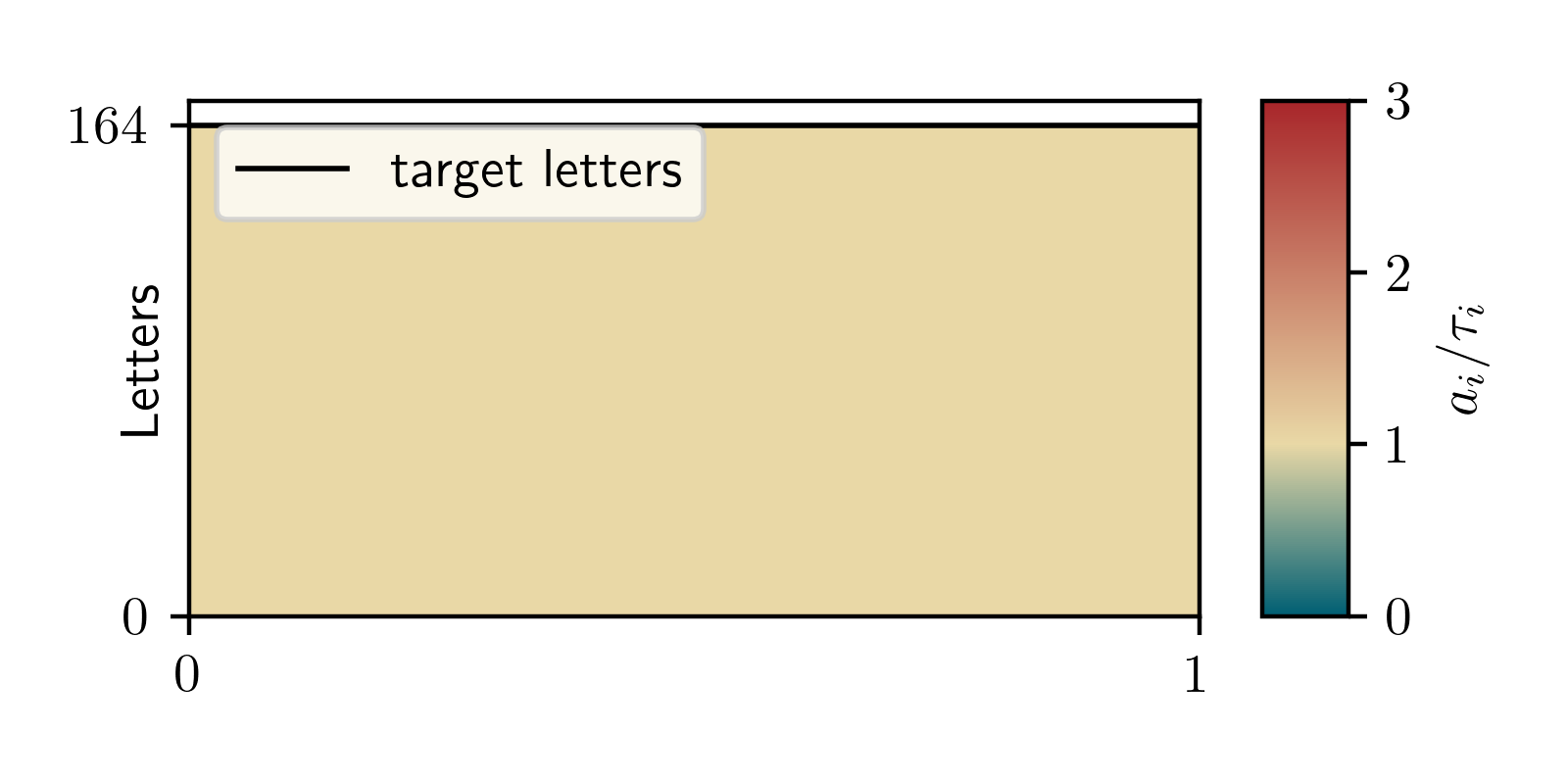}
        \caption{\colgen ($t_G\!=\!1$)}
        \label{fig:results_Thüringen_Medium_column_generation}
    \end{subfigure}
    \begin{subfigure}{0.32\textwidth}
        \includegraphics[draft=\draft, width=\linewidth]{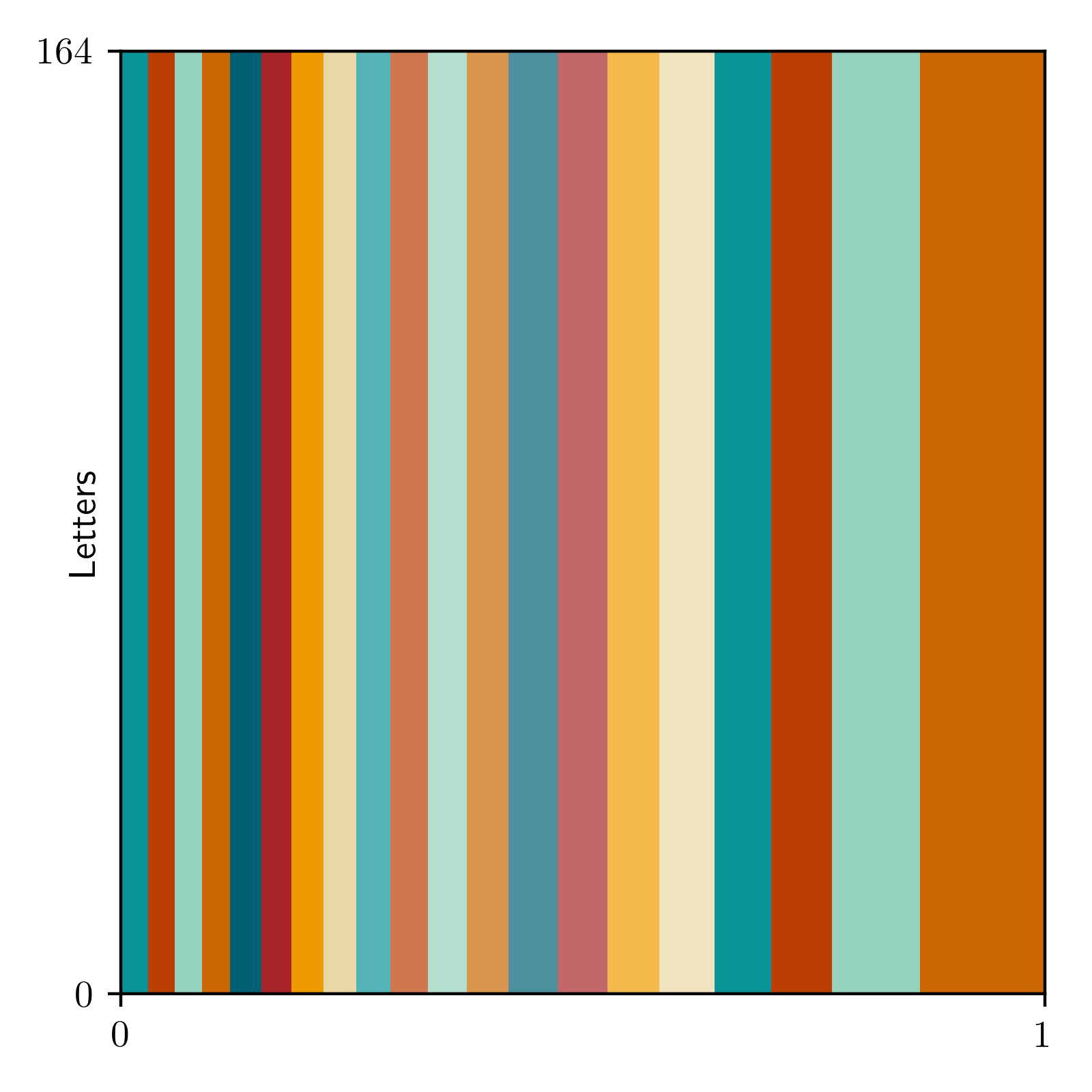}
        \includegraphics[draft=\draft, width=\linewidth]{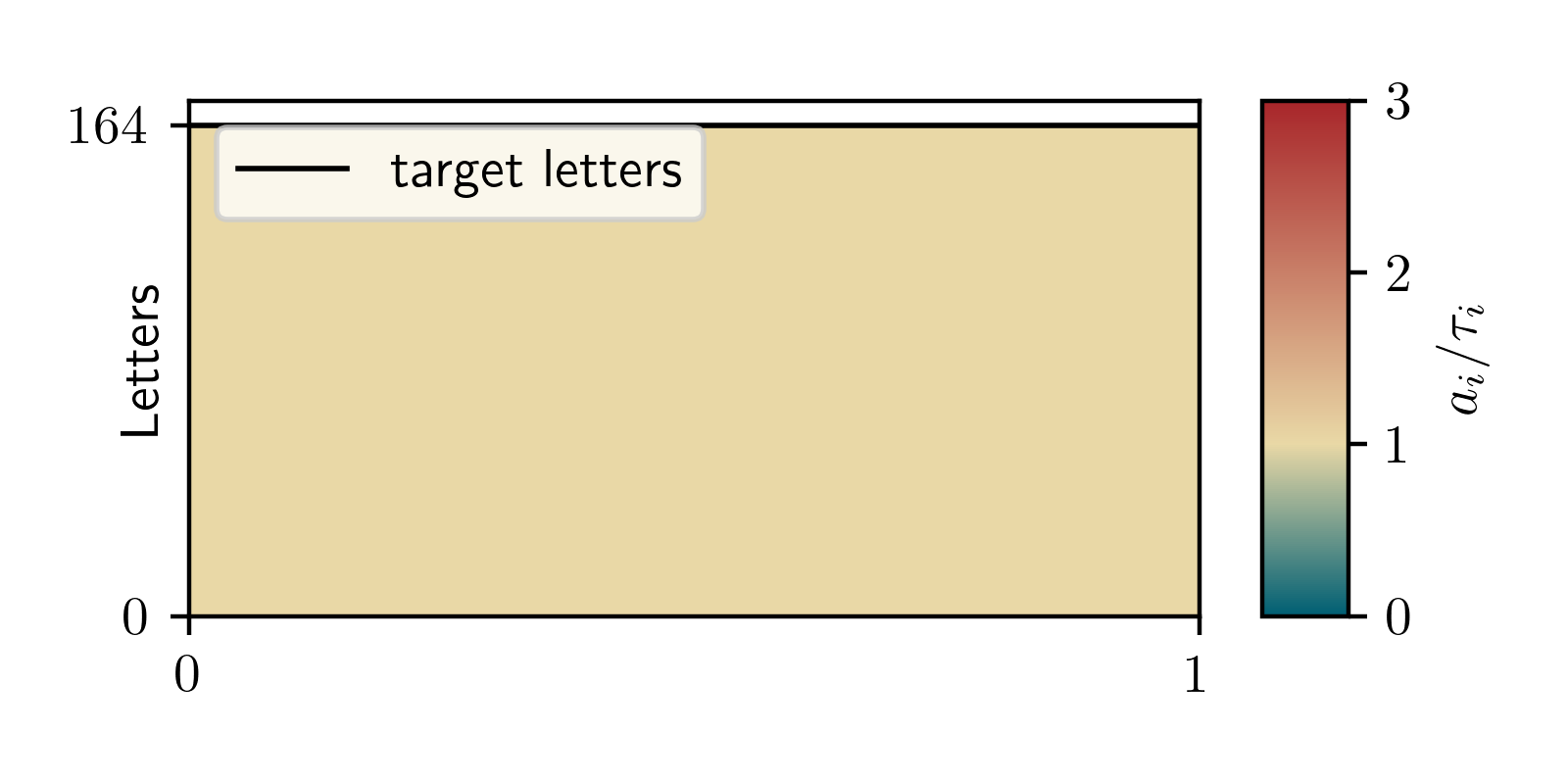}
        \caption{\buckets ($t_G = 1$)}
        \label{fig:results_Thüringen_Medium_greedy_bucket_fill}
    \end{subfigure}
    \caption{Medium municipalities of Thüringen ($\ell_G = 164$)}
    \label{fig:results_Thüringen_Medium}
\end{figure} 

\begin{figure}
    \centering
    \begin{subfigure}{0.32\textwidth}
        \includegraphics[draft=\draft, width=\linewidth]{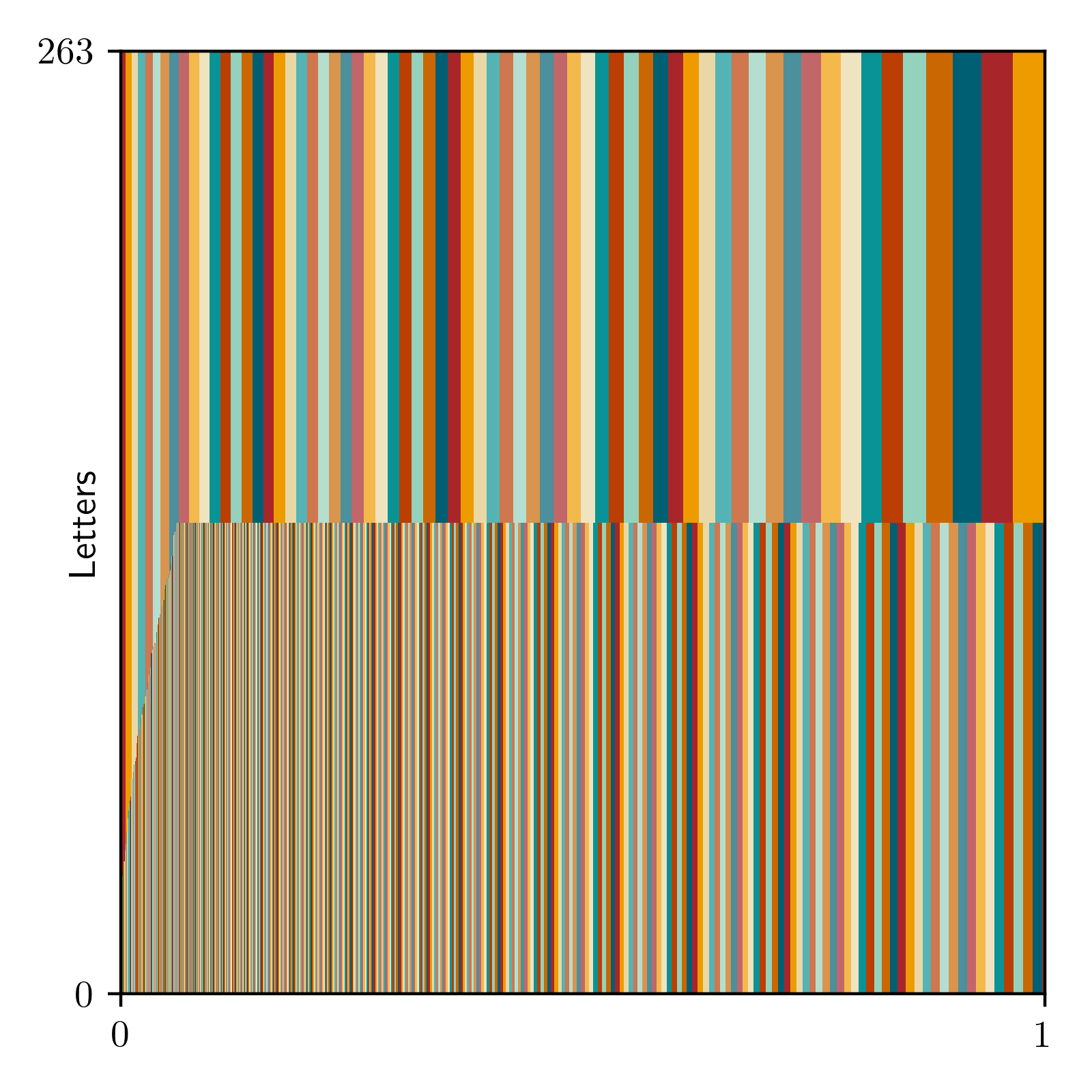}
        \includegraphics[draft=\draft, width=\linewidth]{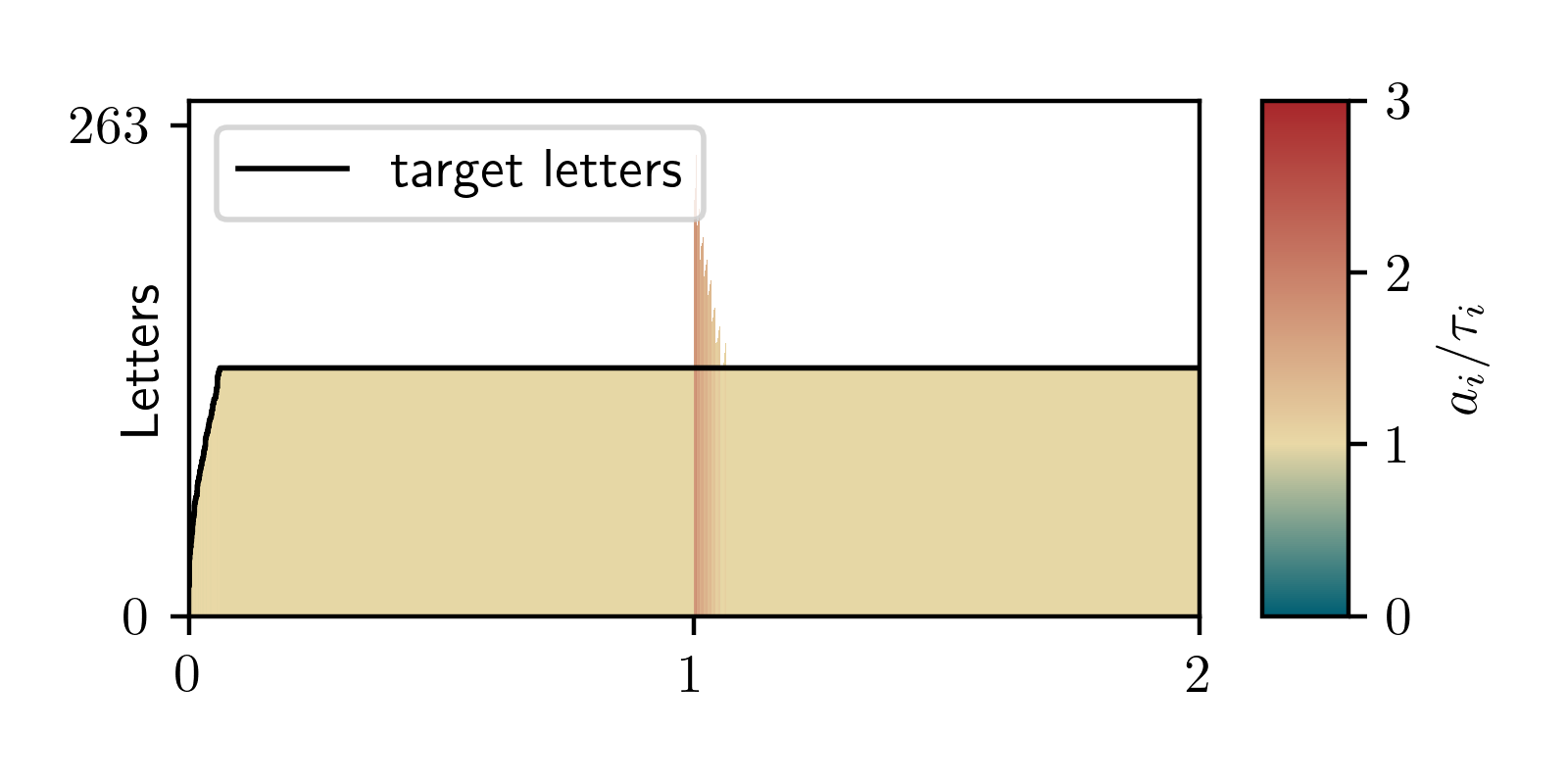}
        \caption{\greq ($t_G = 2$)}
        \label{fig:results_Thüringen_Small_greedy_equal}
    \end{subfigure}
    \begin{subfigure}{0.32\textwidth}
        \includegraphics[draft=\draft, width=\linewidth]{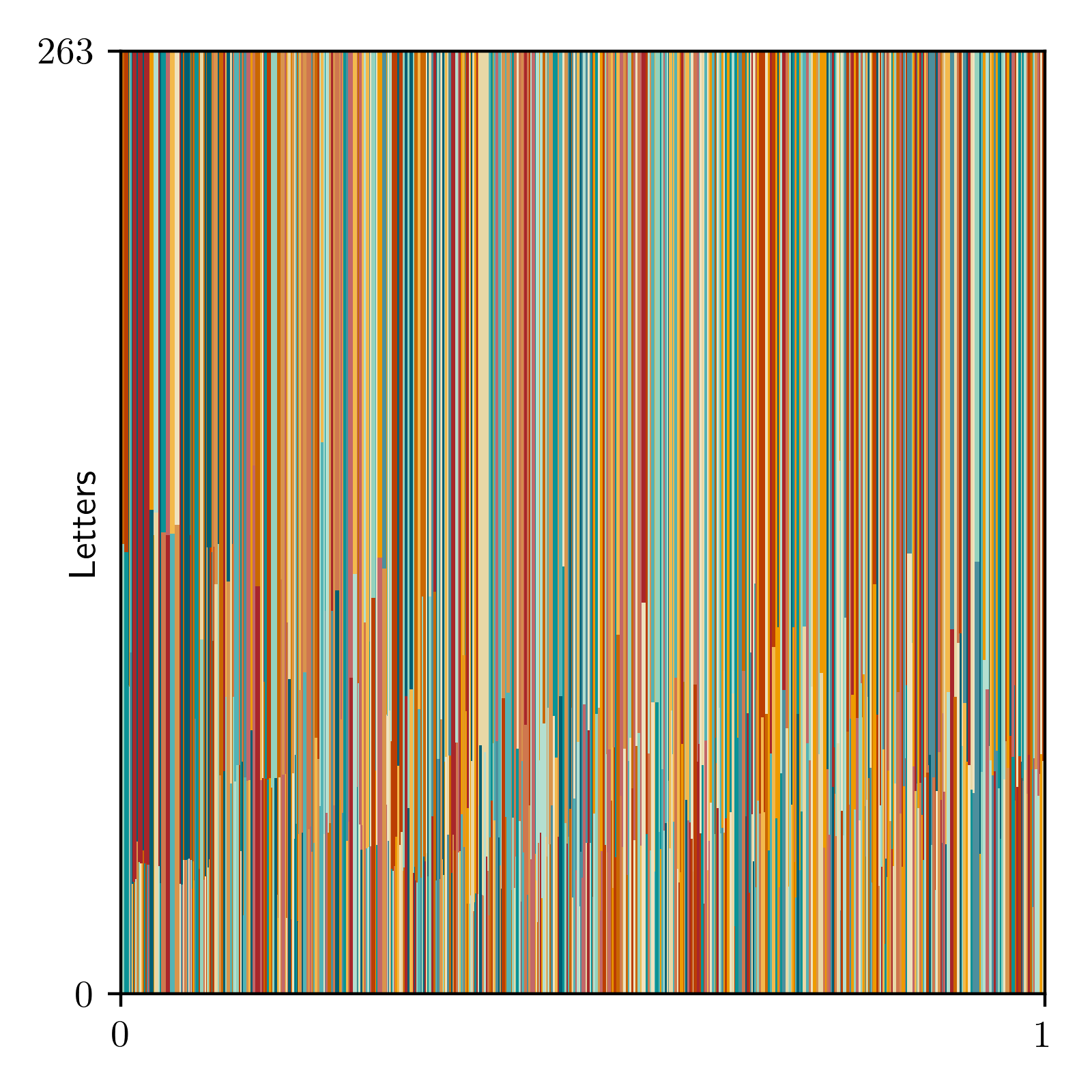}
        \includegraphics[draft=\draft, width=\linewidth]{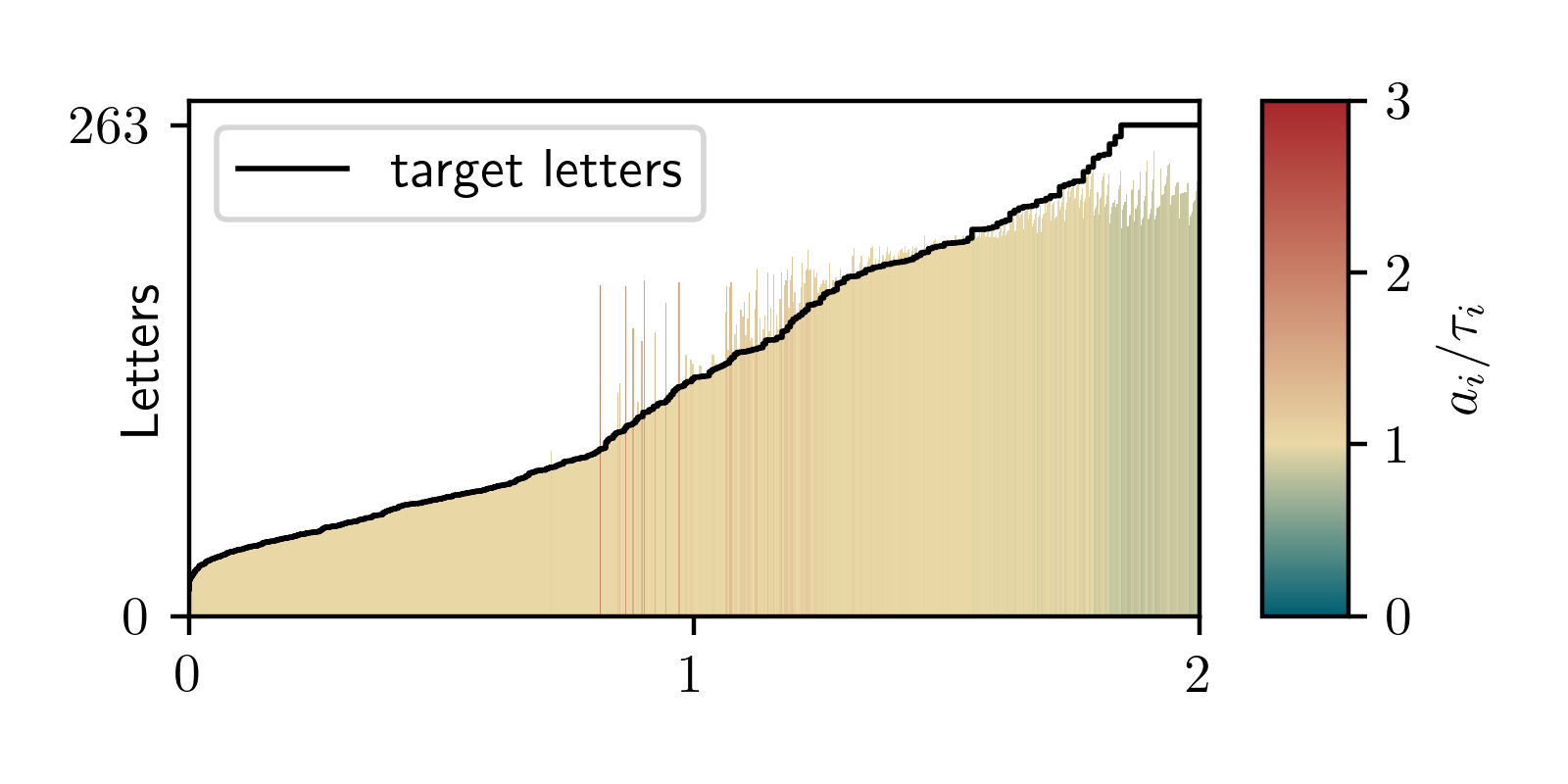}
        \caption{\colgen ($t_G\!=\!2$)}
        \label{fig:results_Thüringen_Small_column_generation}
    \end{subfigure}
    \begin{subfigure}{0.32\textwidth}
        \includegraphics[draft=\draft, width=\linewidth]{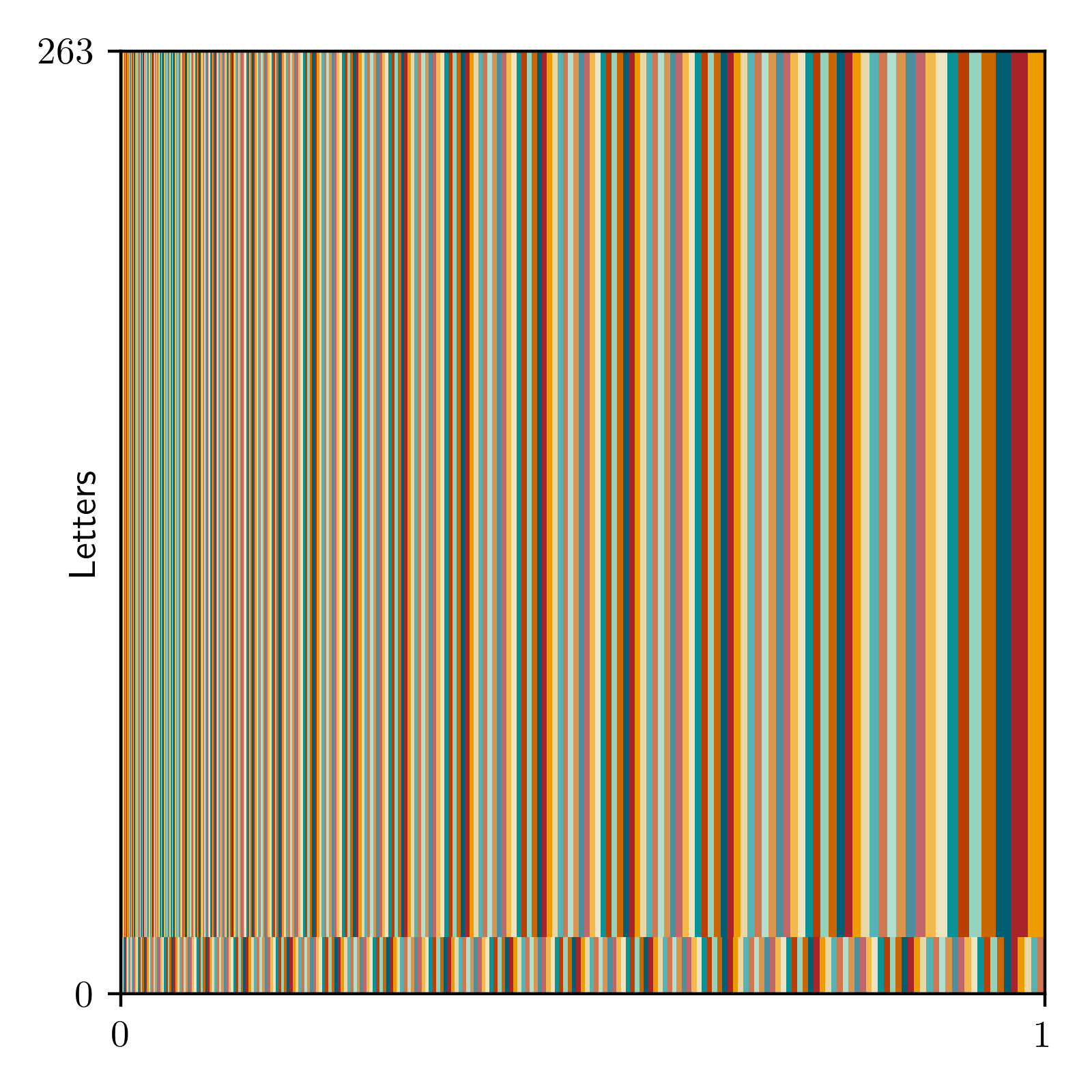}
        \includegraphics[draft=\draft, width=\linewidth]{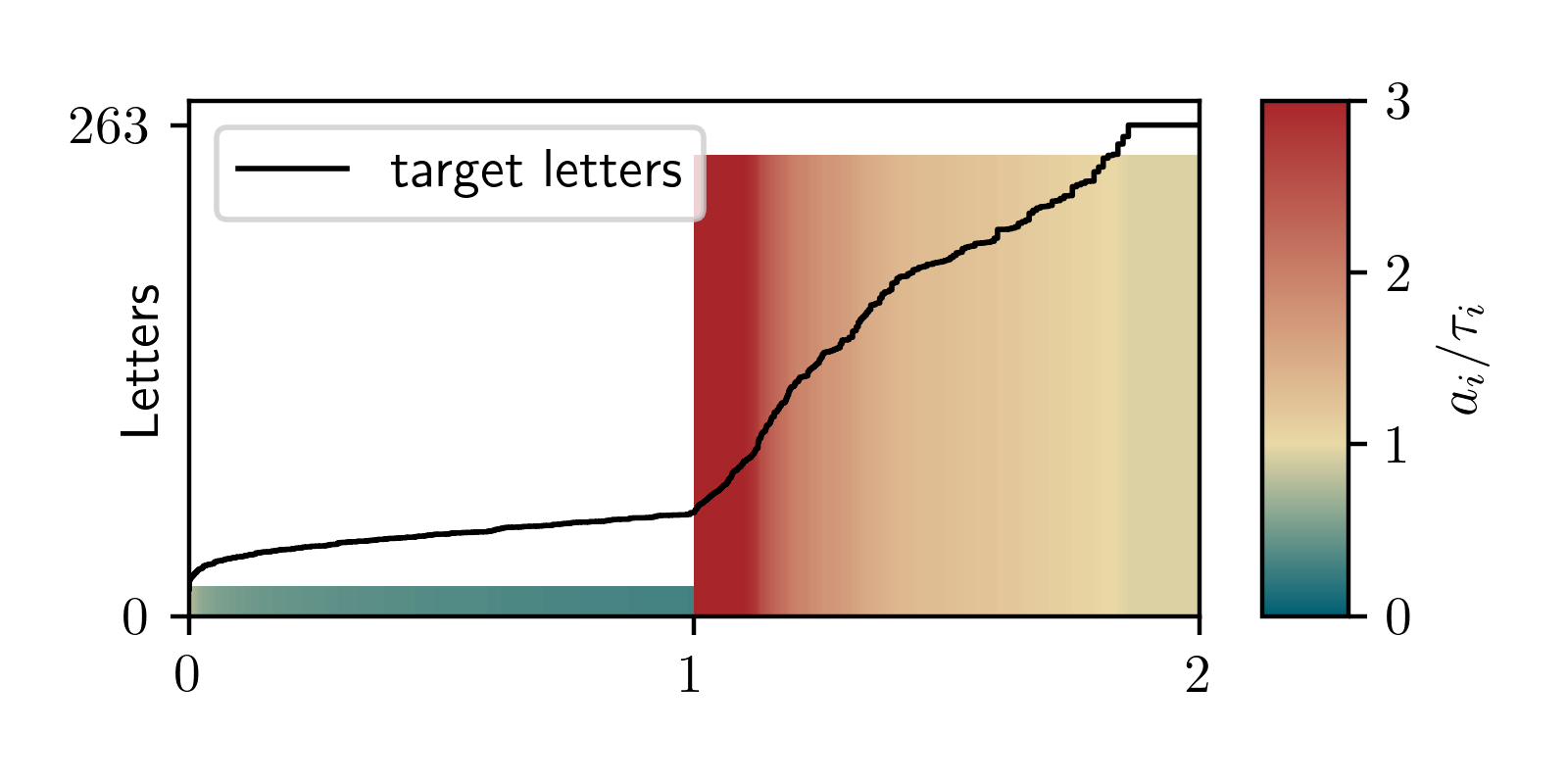}
        \caption{\buckets ($t_G = 2$)}
        \label{fig:results_Thüringen_Small_greedy_bucket_fill}
    \end{subfigure}
    \caption{Small municipalities of Thüringen ($\ell_G = 263$)}
    \label{fig:results_Thüringen_Small}
\end{figure}

\begin{landscape}

\begin{figure}
    \centering
    \includegraphics[width=\linewidth]{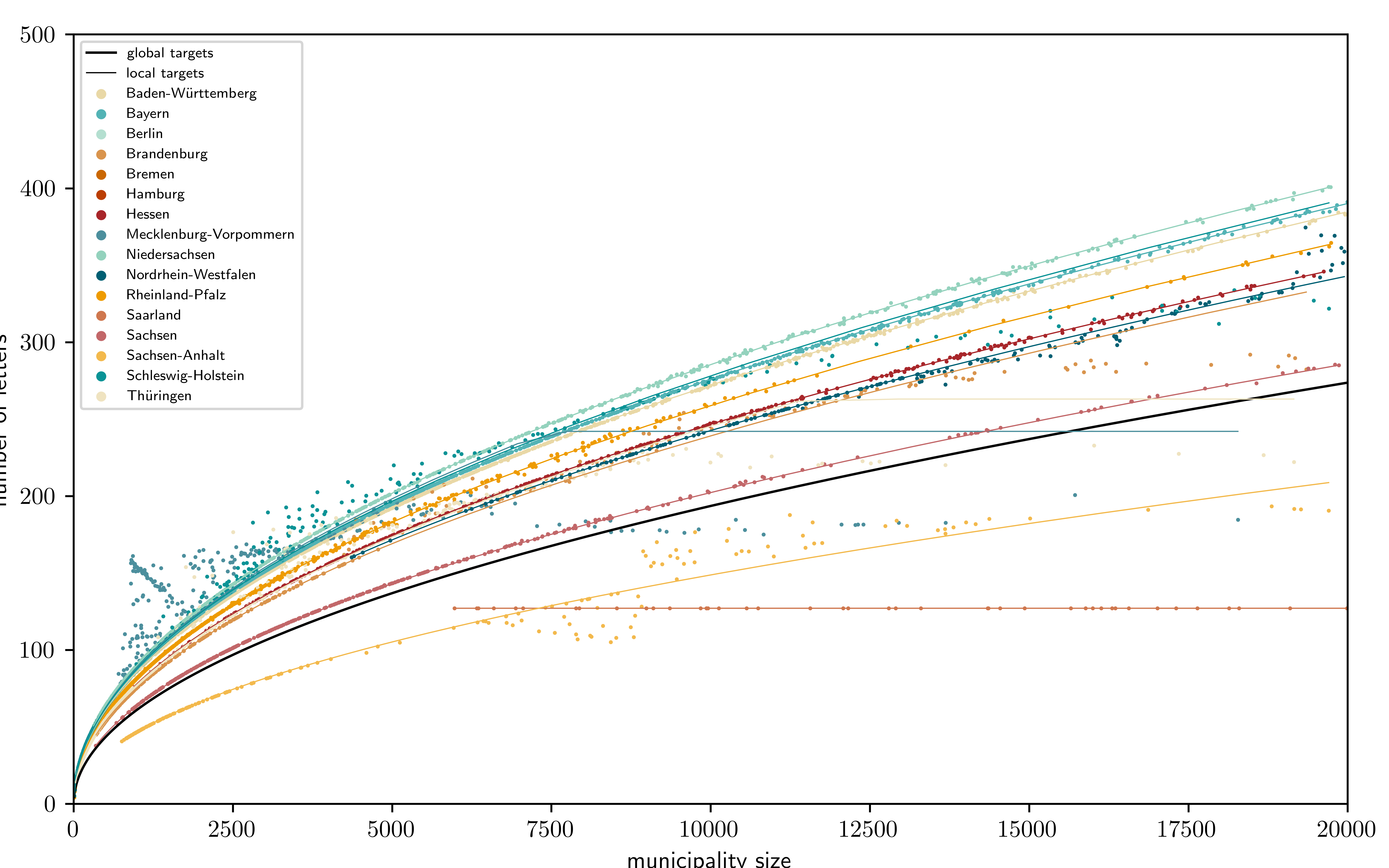}
    \caption{\colgen for groups with small cities}
    \label{fig:global_local_targets_colgen_small}
\end{figure} 

\begin{figure}
    \centering
    \includegraphics[width=\linewidth]{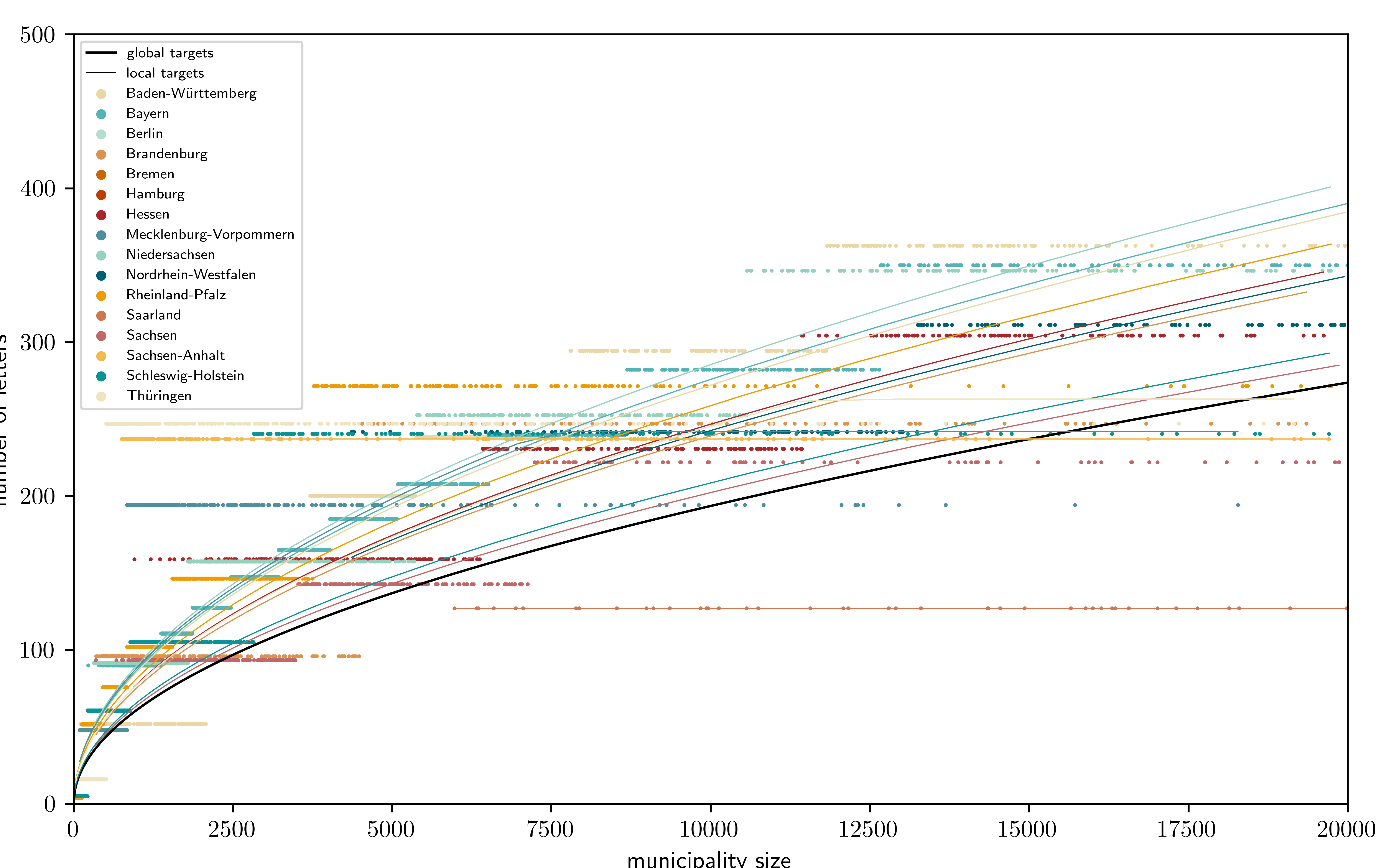}
    \caption{\buckets for groups with small cities}
    \label{fig:global_local_targets_bucket_small}
\end{figure} 

\begin{figure}
    \centering
    \includegraphics[width=\linewidth]{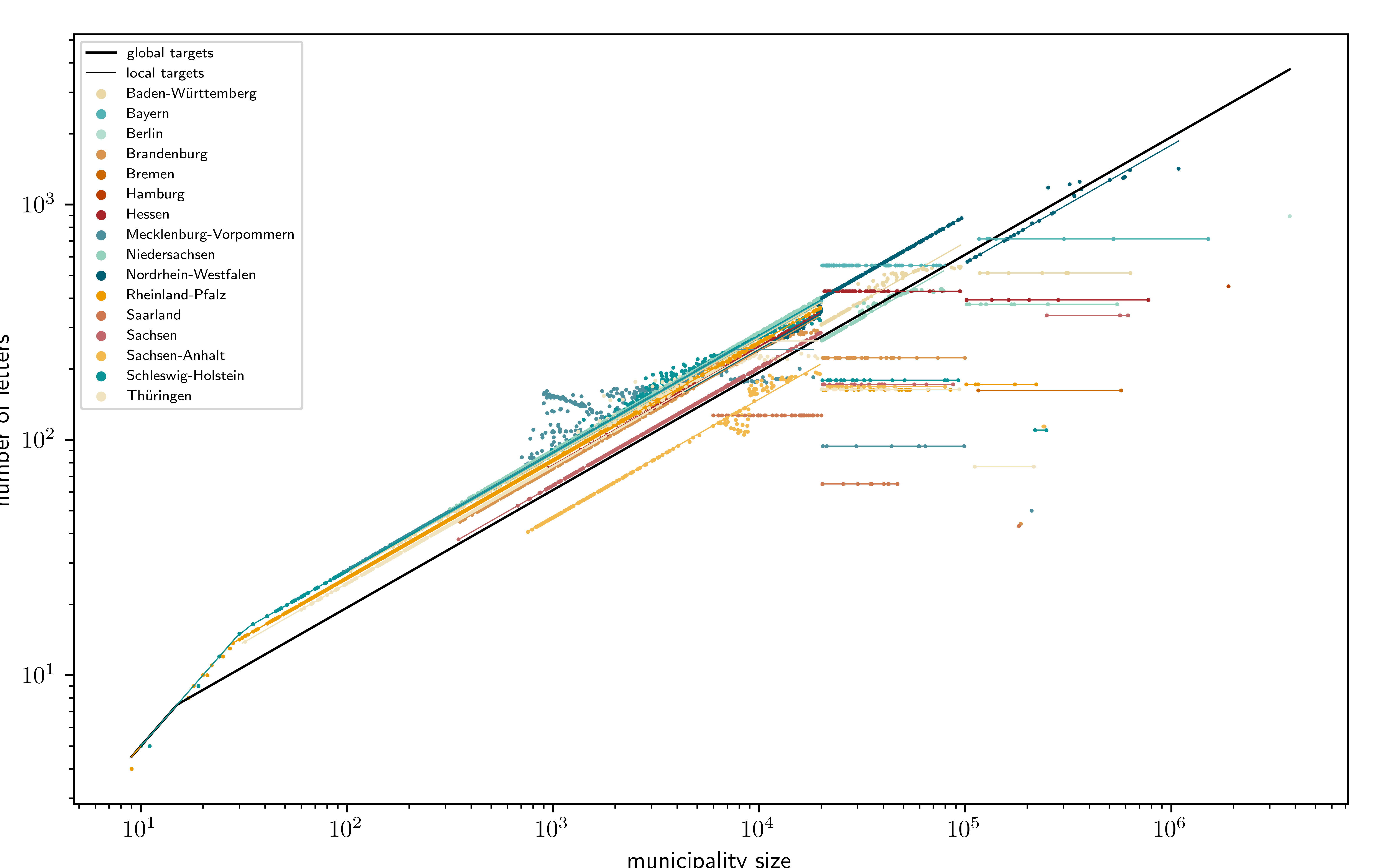}
    \caption{\colgen for all groups}
    \label{fig:global_local_targets_colgen_all}
\end{figure} 
    
\begin{figure}
    \centering
    \includegraphics[width=\linewidth]{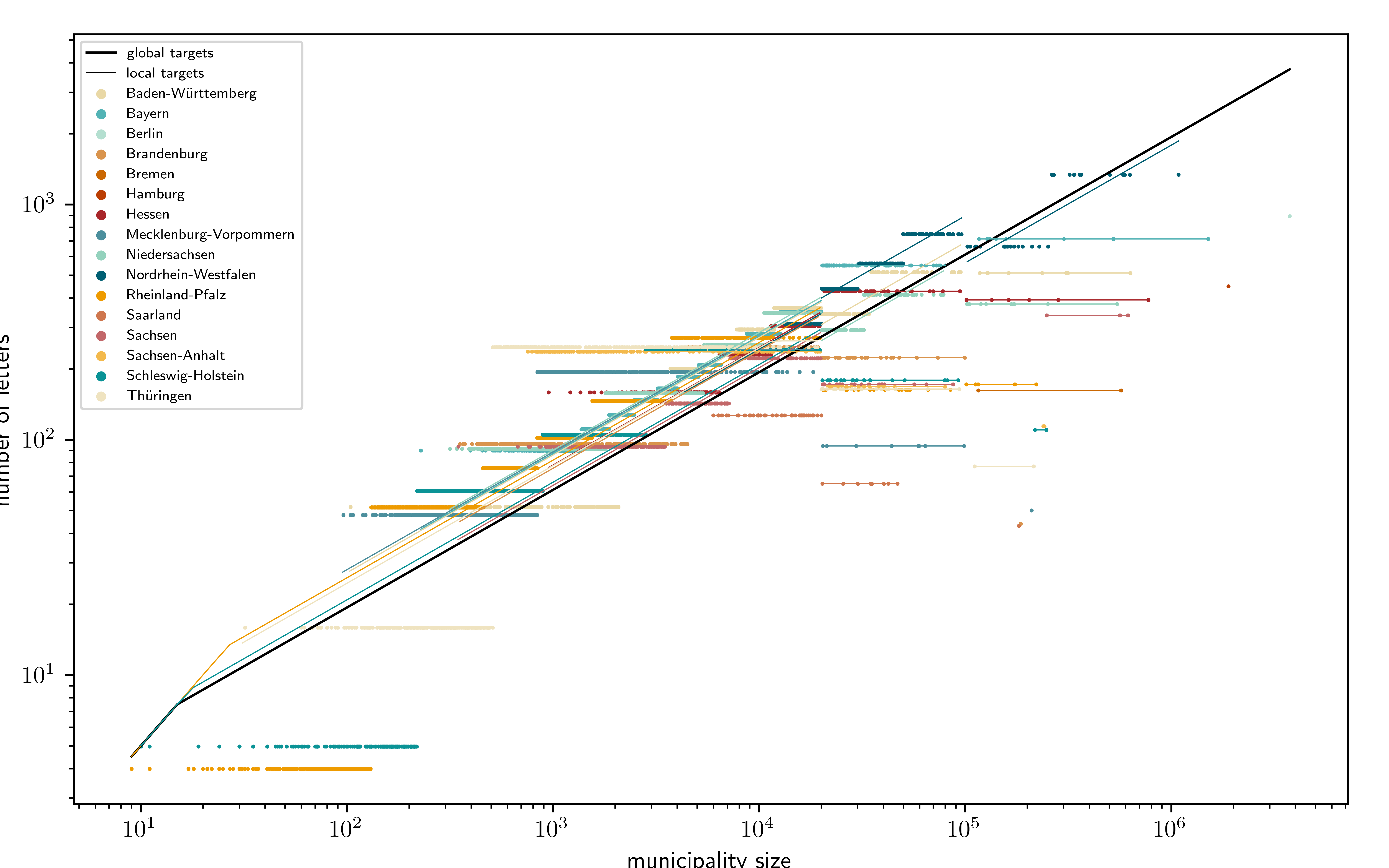}
    \caption{\buckets for all groups}
    \label{fig:global_local_targets_bucket_all}
\end{figure} 

\end{landscape}

\begin{figure}
    \centering
    \begin{subfigure}{0.32\textwidth}
        \includegraphics[draft=\draft, width=\linewidth]{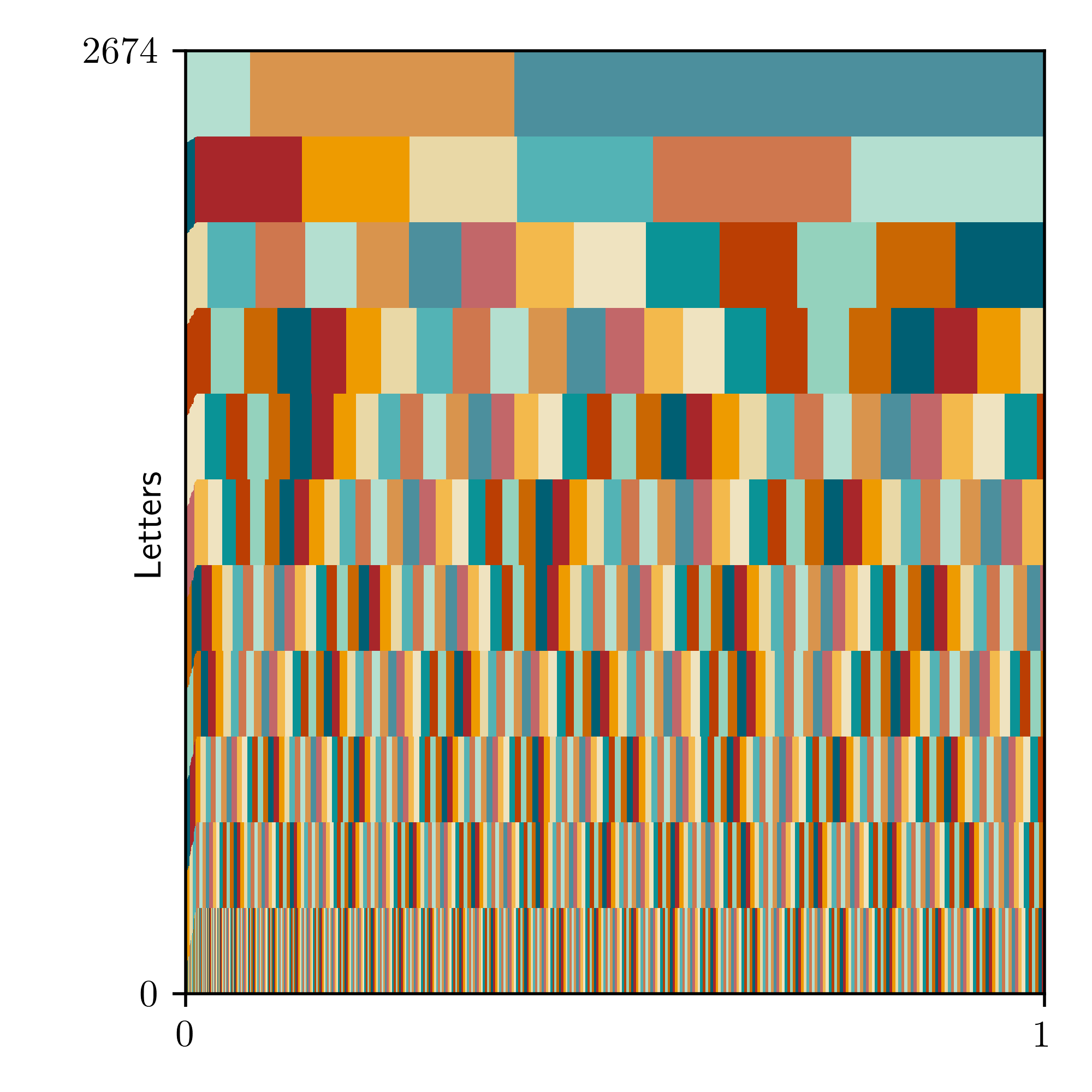}
        \includegraphics[draft=\draft, width=\linewidth]{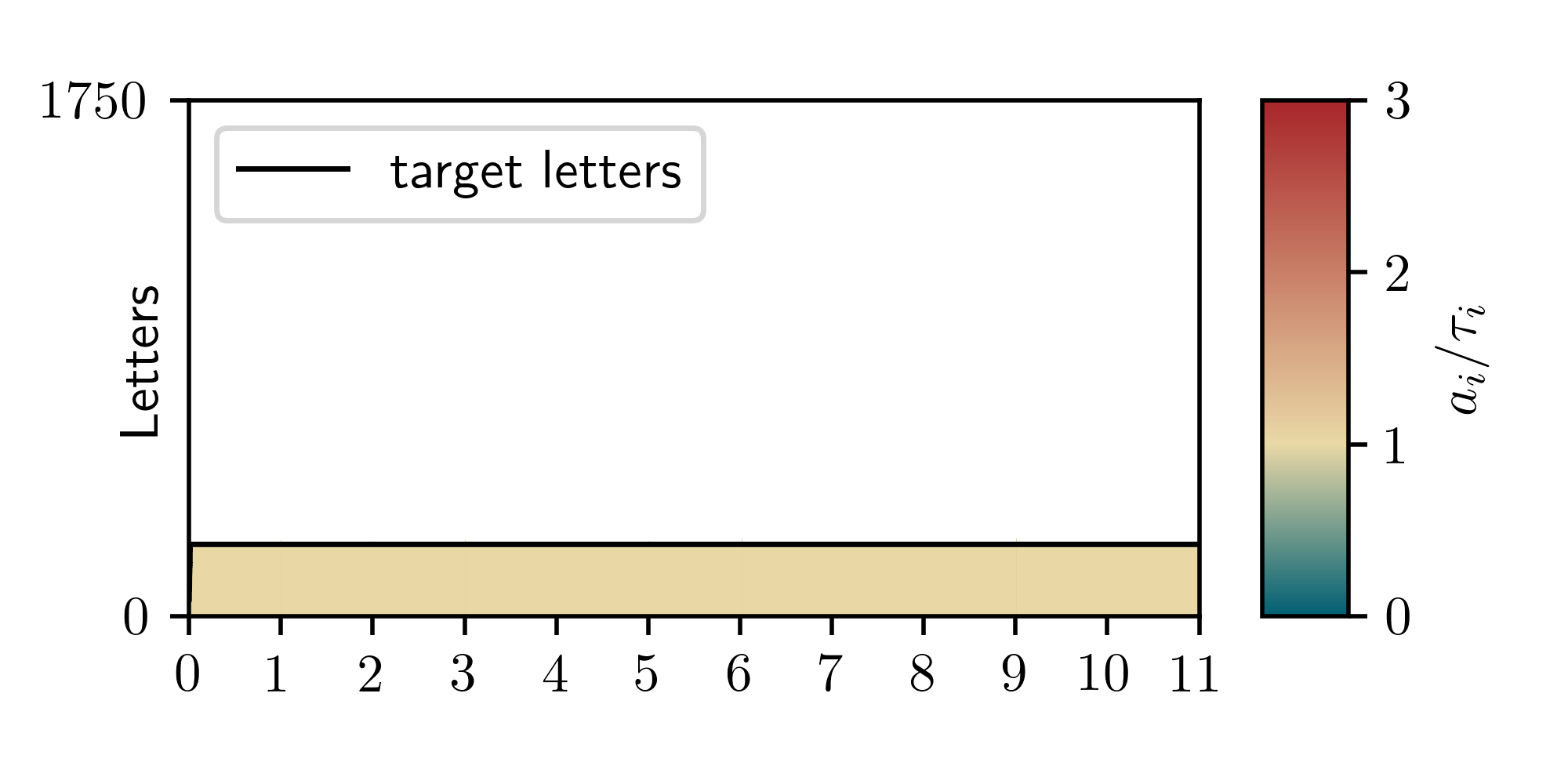}
        \caption{\greq ($t = 11$)}
        \label{fig:results_Baden-Württemberg_All_greedy_equal}
    \end{subfigure}
    \begin{subfigure}{0.32\textwidth}
        \includegraphics[draft=\draft, width=\linewidth]{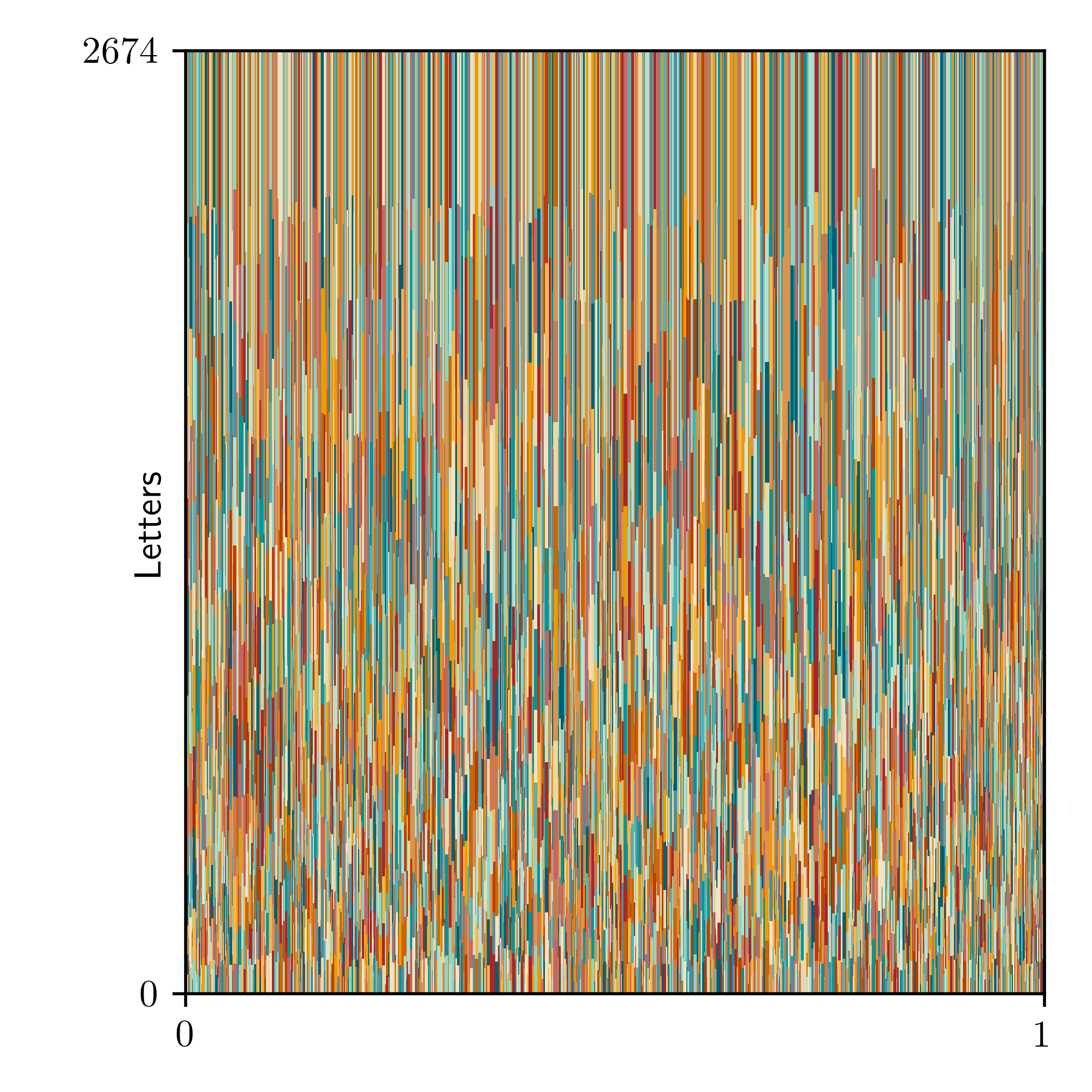}
        \includegraphics[draft=\draft, width=\linewidth]{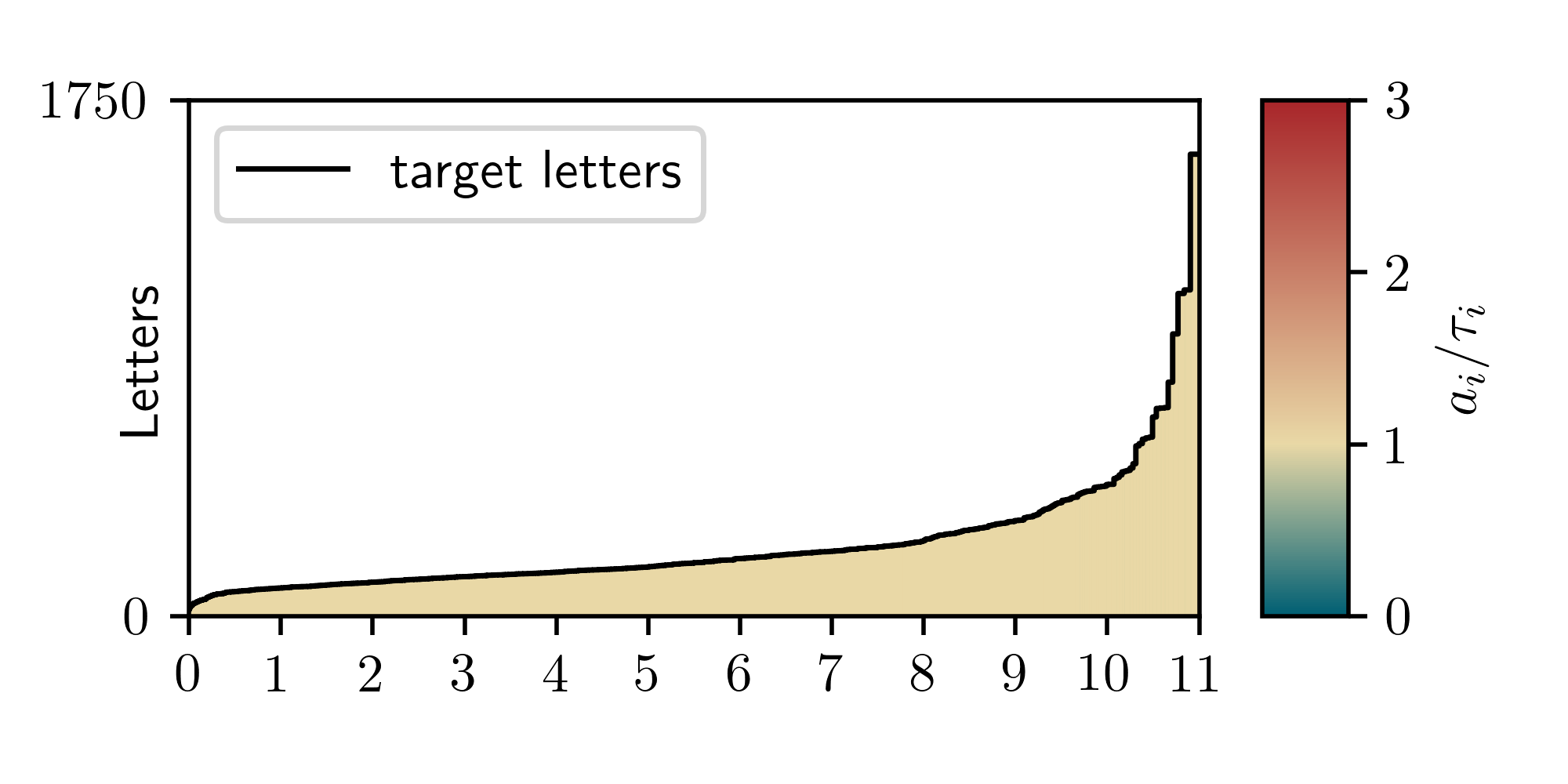}
        \caption{\colgen ($t = 11$)}
        \label{fig:results_Baden-Württemberg_All_column_generation}
    \end{subfigure}
    \begin{subfigure}{0.32\textwidth}
        \includegraphics[draft=\draft, width=\linewidth]{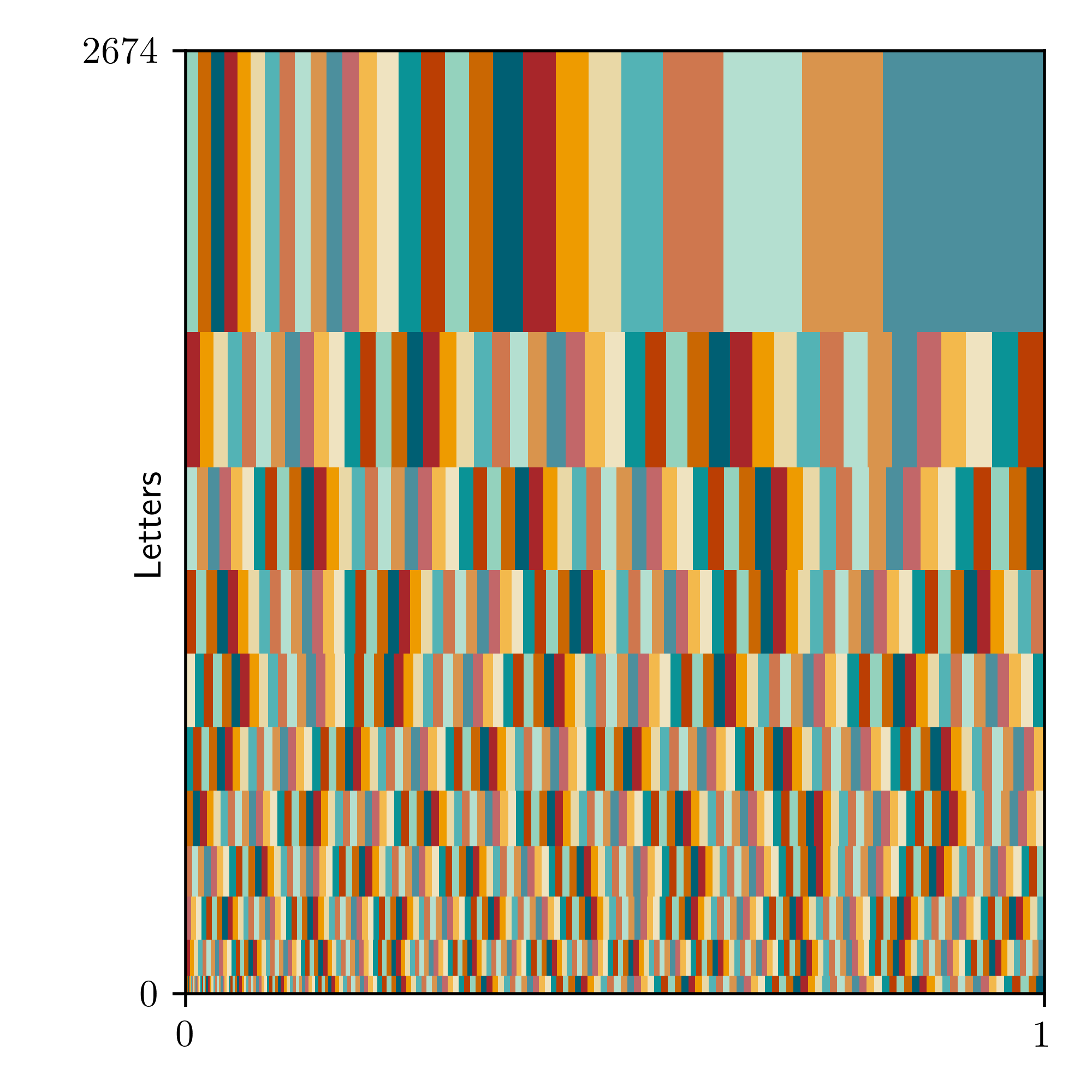}
        \includegraphics[draft=\draft, width=\linewidth]{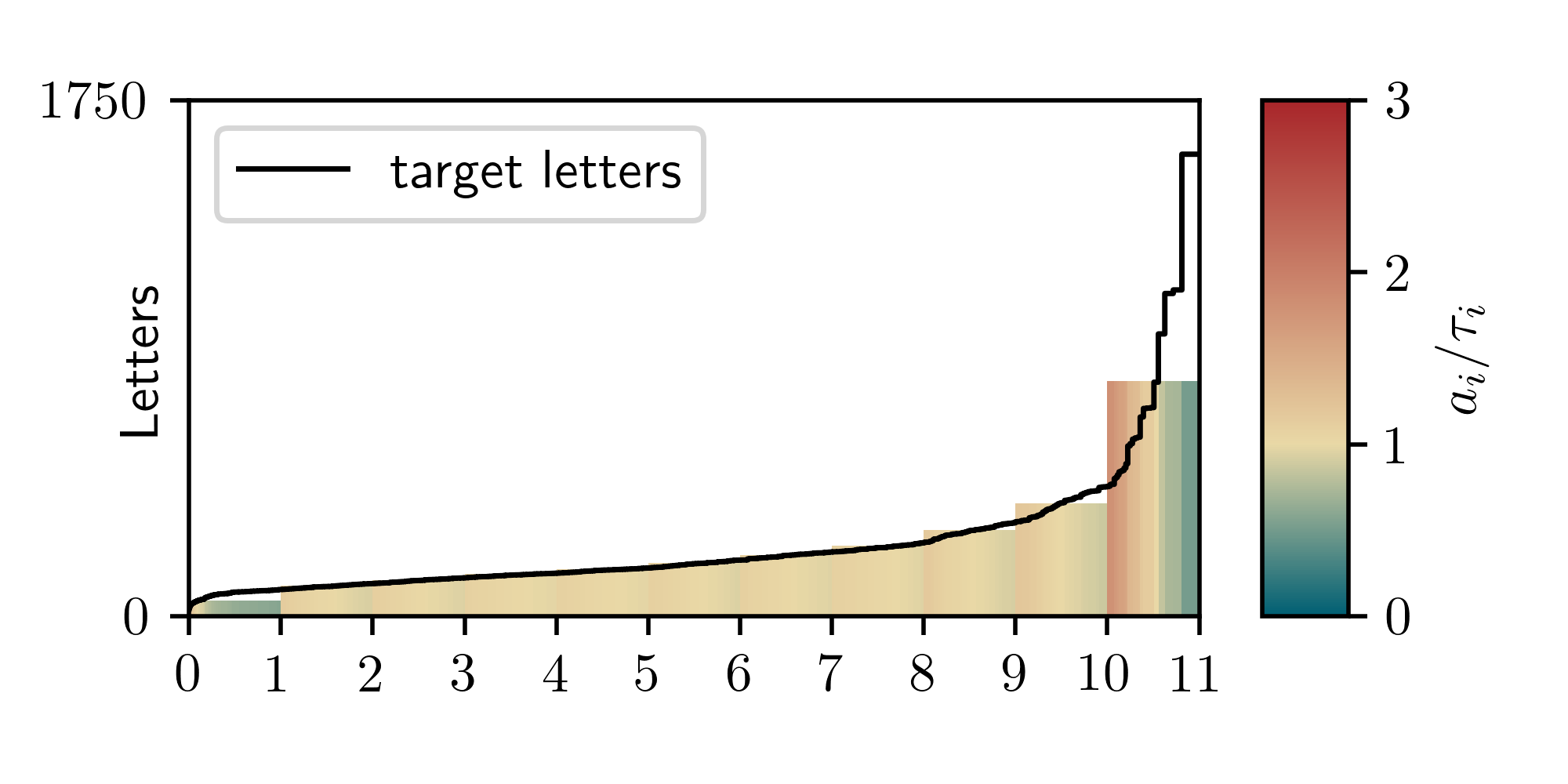}
        \caption{\buckets ($t = 11$)}
        \label{fig:results_Baden-Württemberg_All_greedy_bucket_fill}
    \end{subfigure}
    \caption{All cities of Baden-Württemberg ($\ell = 2674$)}
    \label{fig:results_Baden-Württemberg_All}
\end{figure}

\end{document}